\newlength{\xxxscale}
\newtheorem{observation}[theorem]{Observation}
\newcommand{\YX}{\mathcal{Y}\otimes\mathcal{X}}
\newcommand{\XY}{\mathcal{X}\otimes\mathcal{Y}}
\newcommand{\K}{{\mathbb K}}  
\newcommand{\x}{\mathbf{x}}
\newcommand{\y}{\mathbf{y}}
\newcommand{\R}{{\mathbb R}}  
\newcommand{\C}{{\mathbb C}}  
\newcommand{\hilb}[1]{\ensuremath{\mathcal{#1}}} 
\newtheoremstyle{No.}
{}
{}
{}
{}
{\bfseries}
{}
{ }
{\thmname{#1}\thmnumber{ #2}\thmnote{ (#3)}}
{
\theoremstyle{No.}
\newtheorem{illexample}[theorem]{Illustration}}
\newtheorem{corollary}[theorem]{Corollary}
\newtheorem{myexercise}[theorem]{Exercise}
\newenvironment{myfullpage}
    {\smallskip\noindent\begin{minipage}
    {\textwidth+\marginparwidth+\marginparsep}\smallskip\smallskip}
    {\end{minipage}\vspace{.1in}}
\DeclareMathOperator{\Span}{span}
\newcommand{\Lx}[1]{{L(\mathcal{#1})}}
\def\1#1{{\bf #1}}
\def\2#1{{\cal #1}}
\def\7#1{{\mathbb #1}}
\newcommand{\I}{\mathbbm{1}} 
\newcommand{\bydef}{\stackrel{\mathrm{def}}{=}}
\newcommand{\dket}[1]{\mbox{$\left|\left.#1\right\rangle\!\right\rangle$}}
\newcommand{\dbra}[1]{\mbox{$\left\langle\!\left\langle #1\right.\right|$}}
\newcommand{\dketdbra}[2]{\mbox{$|#1\rangle\!\rangle\!\!\langle\!\langle #2|$}}
\newtheorem{proposition}[theorem]{Proposition}
\def\3#1{{\sl #1}}
\newcommand{\Tx}[1]{{T(\mathcal{#1})}}
\newcommand{\Cx}[1]{{C(\mathcal{#1})}}
\newcommand{\XX}{\mathcal{X}\otimes\mathcal{X}}
\newcommand{\YY}{\mathcal{Y}\otimes\mathcal{Y}}
\newcommand{\LXY}{L(\mathcal{X}\otimes\mathcal{Y})}
\newcommand{\XZ}{\mathcal{X}\otimes\mathcal{Z}}
\newcommand{\gate}[1]{\ensuremath{\text{\sf #1}}}
\newcommand{\COPY}[0]{\gate{COPY}}
\newcommand{\swap}{\gate{SWAP}}
\newcommand{\SWAP}{\gate{SWAP}}
\newcommand{\W}{\gate{W}}
\newcommand{\CNOT}{\gate{CNOT}}
\newcommand{\NOT}{\gate{NOT}}
\newcommand{\CN}{\gate{CN}}
\newcommand{\XOR}{\gate{XOR}}
\newcommand{\OR}{\gate{OR}}
\newcommand{\AND}{\gate{AND}}
\newcommand{\NAND}{\gate{NAND}}
\newcommand{\XNOR}{\gate{XNOR}}
\newcommand{\NOR}{\gate{NOR}}
\newcommand{\ANF}{\gate{ANF}}
\newcommand{\PPRM}{\gate{PPRM}}
\renewcommand{\H}{\gate{H}}
\newcommand{\GHZ}{\gate{GHZ}}
\newcommand{\inprod}[2]{\ensuremath{\left\langle #1, #2 \right\rangle}}
\newcommand{\isom}{\cong}  
\newcommand{\bv}{e}     
\newcommand{\spidx}[3]{^{(#1)}{}\indices{^{#2}_{#3}}}
\newcommand{\Eqref}[1]{Eq.~\eqref{#1}}
\newcommand{\be}{\begin{equation}}
\newcommand{\ee}{\end{equation}}
\title{\begin{center}{{\Huge{\bf{Lectures on Quantum Tensor Networks}}}}\\
{\itshape a pathway to modern diagrammatic reasoning}
\end{center}}
\author{\Large ~Jacob Biamonte{\footnote{{Version current as of \today.}\\~\\I would be grateful if you email me regarding any typos, errors or omissions you discover.\\~\\ Current version always available at \href{https://www.overleaf.com/read/jkccbhcdqwnh}{https://www.overleaf.com/read/jkccbhcdqwnh}}}
}
\affiliation{ 
Deep Quantum Laboratory\\
Skolkovo Institute of Science and Technology\\
3 Nobel Street\\
Moscow Russia 143026\\

e-mail: \href{j.biamonte@skoltech.ru}{j.biamonte@skoltech.ru}\\
web: \href{http://quantum.skoltech.ru}{http://quantum.skoltech.ru}

\vspace{15 mm}

\begin{center}
  \includegraphics[width=0.5\textwidth]{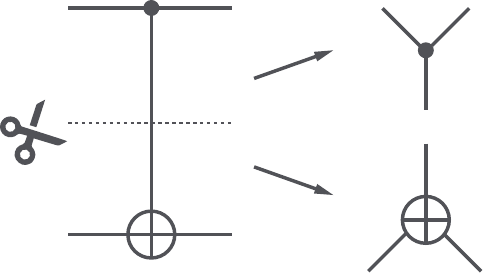}
\end{center}
}
\begin{document}
	\maketitle
	\newpage
	\pagestyle{fancynotes}
	\part*{Foreword}
Tensor network methods represent a collection of techniques to understand and reason about multi-linear maps which have found particular use in applications to quantum information processing.  These methods form the backbone of tensor network contraction algorithms to model physical systems and are used in the abstract graphical languages to represent channels, maps, states and processes appearing across quantum information science.  \\ 
	
In these chapters---which were complied based on years of teaching---we outline the  building blocks needed to understand the salient properties of tensor networks, the associated mathematical techniques and the diagrammatic reasoning language.   \\ 
	
The topic of tensor networks touches on a number of subjects yet the vast majority of writing is much more specific and often limited to be accessible by a narrow community. This book attempts to broadly cover the foundations of tensor network theory as it applies generally across quantum information.  \\ 
	
The aesthetically appealing development of tensor networks as a unifying language across quantum information science has long been close to my own research interests.  I have conducted research on quantum and classical circuits, as it applies to quantum computing.  This research included developing methods to embed logic functions into spin Hamiltonian ground states and the realization of quantum algorithms by quantum circuits.  The settings of both classical and quantum circuits comes with a well known graphical language.  \\

This provided a base to spend time merging ideas from (i) modern tensor networks as they appear in condensed matter physics; (ii) quantum circuits and their graphical language; (iii) aspects of categorical quantum mechanics as well as (iv) the graphical language of digital circuits to create a common notation and to develop and use rewrite rules that intersect these topics.  This book is intended to be self-contained, and accessible to graduate students.  It is hoped that advanced readers will let this book serve as a research reference.   \\

	Sincerely and happy reading, \\ 

~~~\includegraphics[width=0.3\textwidth]{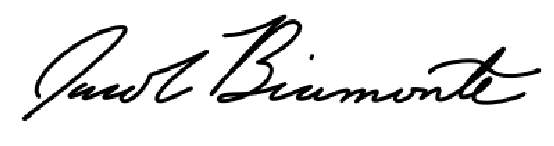} \\  

Jacob Biamonte---Moscow \hfill \today 

\newpage 

\part*{Hall of Fame}
I humbly tip my hat to the following readers.  These brave souls found and reported typos, errors or omissions, improving the book for all future readers. \\ 

\noindent Konstantin Antipin \\
Nick Decroos \\ 
Sergey Filippov \\ 
Aly Nasrallah \\ 
Miles Stoudenmire \\ 
Alireza Yazdi

\pagestyle{plain}

\part{From Tensors to Networks}
\label{sec:tensor}

Tensors are a mathematical concept that encapsulates and generalizes the idea of
multilinear maps, i.e.~functions of multiple parameters that are linear with respect to every parameter.
A tensor network is simply a countable collection of tensors connected by contractions.
{\bf Tensor network methods} is the term sometimes given to the entire collection of associated tools, which are regularly employed in modern quantum information science, condensed matter physics, mathematics and computer science.  

Tensor networks come with an intuitive graphical language that can be used in formal reasoning and in proofs. 
This diagrammatic language found applications in physics at least as early as the 1970s by Roger Penrose~\cite{Penrose}.
Tensor network theory has recently seen many advancements and adaptations to different domains of physics, mathematics and computer science.
An important milestone was David Deutsch's use of the diagrammatic notation in quantum computing,
developing the \emph{quantum circuit} (a.k.a.~quantum computational networks as Deutsch would call them)
model~\cite{Deutsch73}.
Quantum circuits are a special class of tensor networks, in which the arrangement of the tensors and their types are restricted.
A related diagrammatic language slightly before that is due to Richard Feynman~\cite{Fey86}.
The quantum circuit model---now well over two decades old---is widely used
to describe quantum algorithms and their experimental implementations,
to quantify the resources they use (by e.g.~counting the quantum gates required),
to classify the entangling properties and computational power of specific gate families, and more.

There is growing excitement concerning numerical algorithms that 
preform tensor contractions.  These algorithms are important in condensed matter physics and beyond.  There are many reviews and surveys devoted to this important direction---see~\cite{2014AnPhy.349..117O, Vidal2010, MPSreview08, TNSreview09, 2011AnPhy.326...96S, 2010arXiv1006.0675S, Schollw, 2014EPJB...87..280O,2013arXiv1308.3318E,2011JSP...145..891E, 2016arXiv160303039B, MAL-059, Pervishko_2019, MAL-067,2017arXiv170809213R}, as well as, {\it Tensor Networks in a Nutshell}, which I wrote with Ville Bergholm \cite{biamonte2017tensor}. 
Some of the best known applications of tensor networks are 1D Matrix Product States (MPS),  Tensor Trains (TT)~\cite{Oseledets2011}, Tree Tensor Networks (TTN),
the Multi-scale Entanglement Renormalization Ansatz (MERA), Projected Entangled Pair States (PEPS)---which generalize matrix product states to higher dimensions---and various other renormalization methods~\cite{Vidal2010, MPSreview08, TNSreview09, 2011AnPhy.326...96S, 2009PhRvL.102e7202H, 2013arXiv1308.3318E, MAL-059}.  
The excitement is based on the fact that certain
classes of quantum systems can now be simulated more efficiently, studied in greater detail,
and this has opened new avenues for a greater understanding of certain physical systems.  The concept is to factor a quantum state ~$\psi$ into various network structures, as follows.  

\begin{center}
    \includegraphics{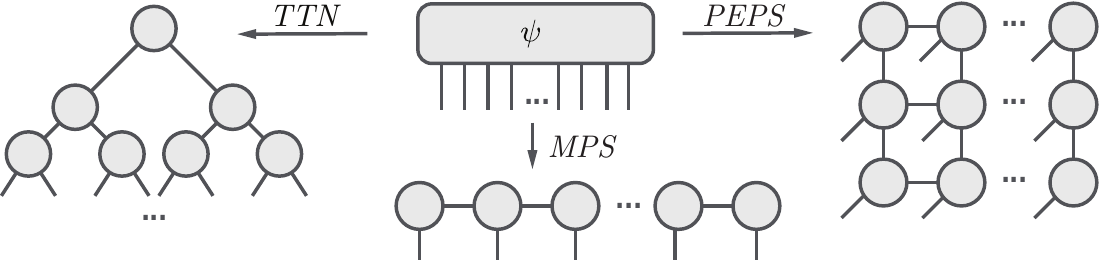} 
\end{center}

These methods approximate a complicated quantum state using a tensor network with a simplistic, regular structure---essentially applying lossy data compression that preserves the most salient properties of the quantum state.  The method is known to efficiently work for certain classes of ground and thermal states. 

We assume that most readers will have a basic understanding of some quantum theory, linear algebra and tensors.

\section{Penrose Graphical Notation for Tensor Networks} 

We will present a variant of the graphical notation used by Penrose \cite{Penrose, Penrose67, road}.  This book presents the modern incarnation of these ideas, building on four ingredients: (i) modern tensor networks as they appear in condensed matter physics; (ii) quantum circuits and their graphical language \cite{NC}; (iii) aspects of categorical quantum mechanics  \cite{Coecke2017} as they describe quantum circuits \cite{CD, redgreen} as well as (iv) the graphical language of digital circuits.  

The output of this merger is an increasingly popular collection of ideas related to the application of tensor networks to quantum information and quantum computation following largely \cite{CTNS, BB11}.  The notation matches quantum circuit notation and the presentation should hopefully be approachable for a wide audience of modern quantum information scientists.  Indeed, the techniques do differ from any of the respective ingredients we have mixed together; so any of the above listed communities---(i), (ii), (iii), (iv)---should go away after reading these notes with new techniques.  

\begin{paragraph}{Tensor Network}
Tensors
\begin{marginfigure}
\centering
\includegraphics[width=5\xxxscale]{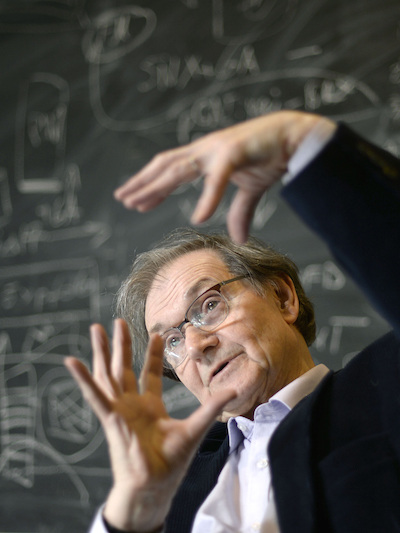}
\caption{``\textit{It now ceases to be important to maintain a distinction 
between upper and lower indices.}''
{-- Roger Penrose, 1971 \cite{Penrose}.}}
\end{marginfigure}
can be thought of as indexed multi-dimensional arrays of complex numbers with respect to a fixed standard basis.

\begin{definition}
 Let $\2 X, \2 Y, \2 Z$ be finite-dimensional complex Hilbert spaces,  ${\mathcal L}(\2X,\2Y)$ is the space of bounded linear operators $A: \2 X\rightarrow \2 Y$ with ${\mathcal L}(\2X)\equiv {\mathcal L}(\2X,\2X)$.
\end{definition}
For example, consider the Hilbert space $\2 X\cong \C^d$, where as is typical in modern quantum theory, we choose our standard basis to be the \emph{computational basis}
$$ 
\{\ket{i}\,:\, i=0,...,d-1\}.
$$ 
Then in Dirac notation a vector $\ket{v}\in \2 X$ is a 1st-order tensor which can be expressed in terms of its tensor components $v_i:=\braket{i}{v}$ with respect to the standard basis as $\ket{v}=\sum_{i=0}^{d-1} v_i\ket{i}$. Similarly one can represent linear operators on this Hilbert space, $A\in {\mathcal L}(\2 X)$,  as 2nd-order tensors with components $A_{ij}:= \bra{i}A\ket{j}$ as $A = \sum_{i,j=0}^{d-1} A_{ij} \ketbra{i}{j}$.

Hence, in Dirac notation the number of indices of a tensors components more define what we called a tensors \emph{order}. Vectors $\ket{v}\in \2 X$ refer to tensors which only have \emph{ket} ``$\ket{i}$'' basis elements, vectors in the dual vector space $\bra{u}\in \2 X^\dagger$ (or more typically denoted $\2X^\star$ or $\2X^*$) refer to those with only \emph{bras} ``$\bra{i}$'', and linear operators on $A\in \2 {\mathcal L}(\2 X)$ refer to tensors with a mixture of kets and bras in their component decomposition.

\begin{remark}
Like Penrose, we use the word \emph{valence} or \emph{order} instead of \emph{rank} when referring to the number of indices on a tensor, since rank is used elsewhere.  A tensor with $n$ indices up and $m$ down is called a valence-$(n,m)$ tensor and sometimes a valence-$k$ tensor for $k=n+m$. 
\end{remark}

\begin{remark}
The concurrent evaluation of all indices returns a complex number.  This is called total or full contraction.
\end{remark}

\begin{remark}
The idea of representing quantum states, operators and maps (etc.) diagrammatically is credited to works by Penrose and is sometimes referred to as \textit{Penrose graphical notation} or \textit{string diagrams}.  Though Penrose unquestionably pioneered many of the applications and uses of the language and deserves credit, Arthur Cayley developed much earlier variants of graphical languages. 
\end{remark}

We mostly adhere to Penrose's notation of representing states (vectors) and effects (dual-vectors) as triangles, linear operators as boxes, and scalars as diamonds, as found in Illustration~\ref{fig:ketbramat}. Here each index corresponds to an open wire on the diagram and so we may define higher order tensors with increasingly more wires. The number of wires is then the order of the tensor, with each wire acting on a separate vector space $\2 X_j$.

\begin{myfullpage}
\begin{illexample}[Graphical depiction of elementary tensors]
\label{sfig}
 Non-zero scalars (d) are also represented as `blank' on the page.  We represent vectors (states) and dual-vectors (effects) as triangles, linear operators as boxes, and scalars as diamonds, with each index of the tensor depicted as an open wire on the diagram. The orientation of the wires determines the type of tensor, in our convention the open end of the wires point to the left for vectors, right for dual-vectors, and both left and right for linear operators.
 \begin{center} 
\begin{tabular}{cccc}
\includegraphics[width=0.11\textwidth]{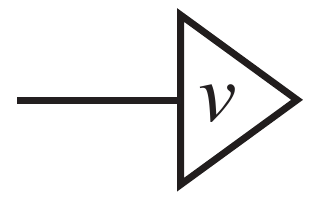} &
\includegraphics[width=0.11\textwidth]{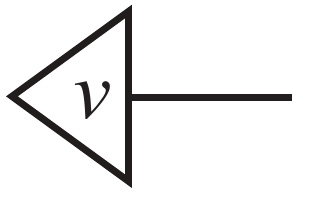} &
\includegraphics[width=0.145\textwidth]{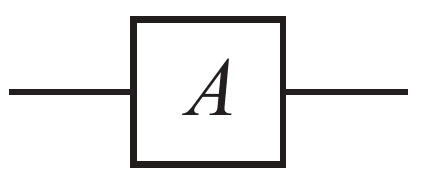} &
\includegraphics[width=0.085\textwidth]{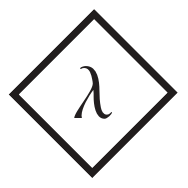}  
\\
\footnotesize{(a) Vector $\ket{v}\in\2 X$} &
\footnotesize{(b) Dual-vector $\bra{v}\in \2 X^\dagger$}&
\footnotesize{(c) Linear Operator $A\in \2 {\mathcal L}(\2 X)$}&
\footnotesize{(d) Scalar $\lambda\in\C$} 
\\~\\
& \includegraphics[width=0.1\textwidth]{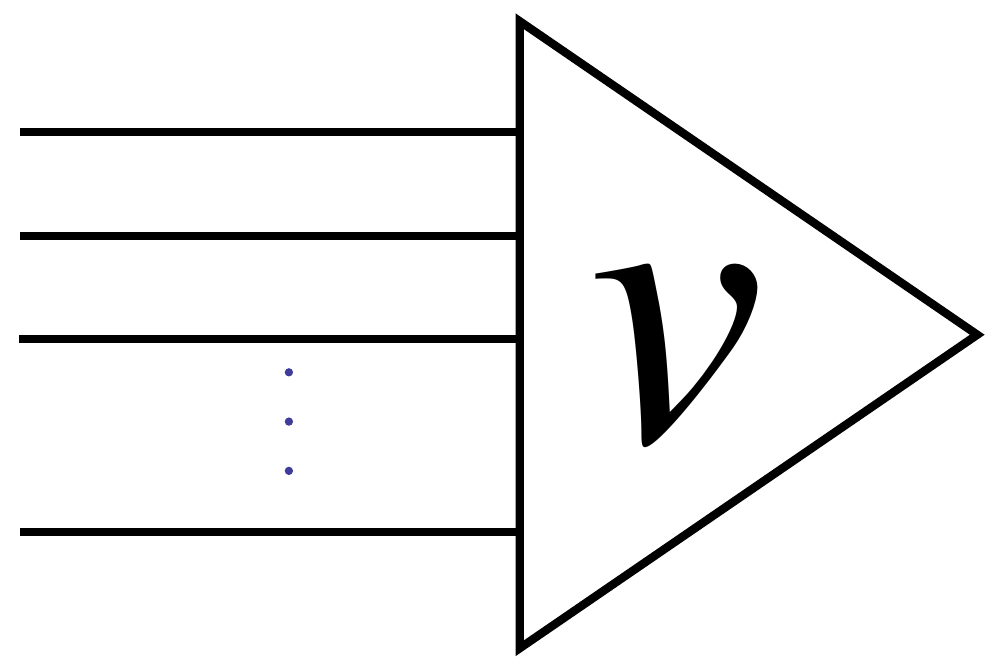} &
\includegraphics[width=0.1\textwidth]{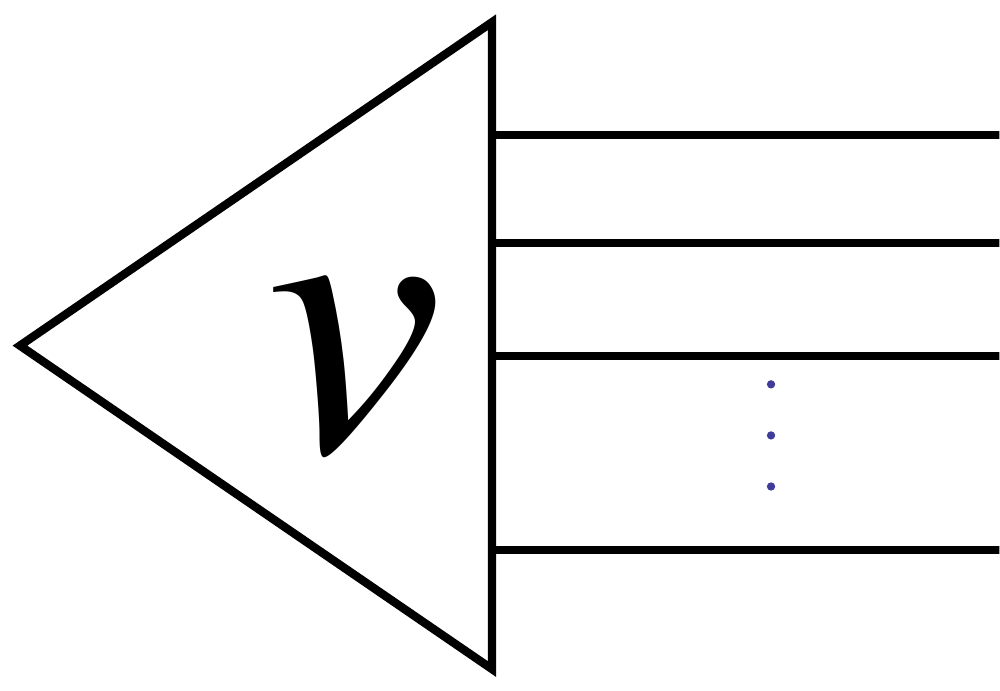}  &
\includegraphics[width=0.145\textwidth]{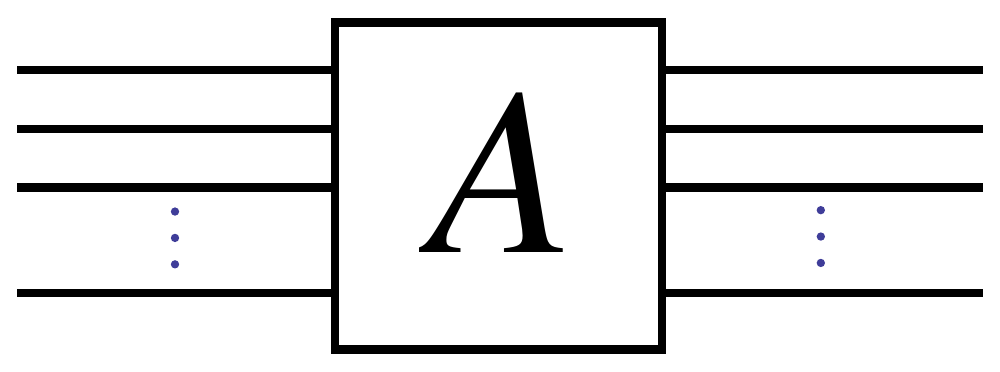}
\\
& \footnotesize{(e) Vector $\ket{v}\in \bigotimes_{i=1}^n \2 X_i$} &
\footnotesize{(f) Dual-vector $\bra{v}\in \bigotimes_{i=1}^n \2 X^\dagger_i$} &
\footnotesize{(g) Linear operator $A:$}
\\
& & &
\footnotesize{$\2 \bigotimes_{i=1}^n \2 X_i \rightarrow \bigotimes_{j=1}^m \2 X_j$}
\end{tabular}
\end{center} 

  \label{fig:ketbramat}
\end{illexample}
\end{myfullpage}
\end{paragraph}

It is typical in quantum physics to think of a tensor as an indexed multi-array of numbers.\mn{Abstract index notation is a mathematical notation for tensors that uses indices as place holders identifying space(s), rather than their components in a particular basis.}  For instance,  
\begin{center}
 \includegraphics[width=6\xxxscale]{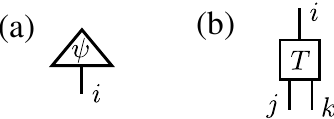}
\end{center}
represent the tensor (a) $\psi_i$ in the space $\2 H_i$ and (b) $T^i_{~jk}$ in the space 
$$
\2H^i\otimes \2H_j \otimes \2 H_k
$$
respectively.  

\begin{remark}[Diagram convention---top to bottom, or right to left]
Open wires pointing towards the top of the page, correspond to upper indices (bras), open wires pointing towards the bottom of a page correspond to a lower indices (kets).  For ease of presentation, we will often rotate this convention $90$ degrees counterclockwise.  
\end{remark}

There are three specific tensors that (essentially) play the role of Kronecker's delta.  These tensors allow for (i) tensor index contraction by diagrammatic connection, (ii) raising and lowering indices, and (iii) they give rise to a duality between maps, states and linear maps in general.  The bijection induced by bending wires, is sometimes called \textit{Penrose Duality}, after its inventor \cite{Penrose}.  As in \cite{Penrose}, these three tensors are given diagrammatically as 
\begin{center}
 \includegraphics[width=12\xxxscale]{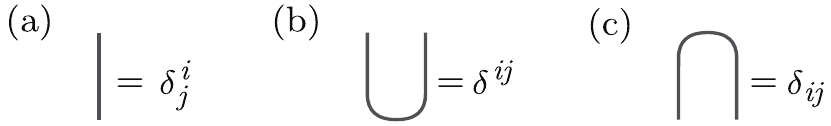}
\end{center}

By thinking of these tensors now in terms of components, e.g.\ $\delta_{ij}=1-(i-j)^2$ for $i,j=0,1$, we note that 
\begin{equation}\label{eqn:delta}
 \I = \sum_{ij}\delta^i_j\ket{i}\bra{j}=\sum_k\ket{k}\bra{k}, 
\end{equation}
\begin{equation}\label{eqn:delta-effect}
 \bra{00}+\bra{11}+\cdots \bra{nn} = \sum_{ij} \delta_{ij}\bra{ij}=\sum_k\bra{kk},
\end{equation}
\begin{equation}\label{eqn:delta-bell}
 \ket{00}+\ket{11}+\cdots \ket{nn} = \sum_{ij} \delta^{ij}\ket{ij}=\sum_k\ket{kk},
\end{equation}
where the identity map (a) corresponds to Equation \eqref{eqn:delta}, the cup (b) to \eqref{eqn:delta-effect} and the cap (c) to \eqref{eqn:delta-bell}.  The relation between these three equations is given by bending wires.  In a basis, bending a wire corresponds to changing a bra to a ket, and vise versa.  

The contraction of two tensor indices diagrammatically amounts to joining those indices with a single wire.  Given tensors 
$T^i_{~jk}$, $A^l_{~n}$ and $B^q_{~m}$ we form a contraction by multiplying by $\delta^j_l\delta^k_q$ resulting in the tensor
\be 
T^i_{~jk}A^j_{~n}B^k_{~m} \bydef \Gamma^i_{~nm},
\ee 
where the tensor $\Gamma^i_{~nm}$ is introduced per definition to simplify notation.  As a linear map, in quantum physics notation, this is typically expressed in equational form as 
\be 
\Gamma = \sum_{inm}\Gamma^i_{~nm}\ket{nm}\bra{i} = \sum_{ijknm} T^i_{~jk}A^j_{~n}B^k_{~m}\ket{nm}\bra{i}. 
\ee 

\begin{remark}
Cups and caps appear on page 231 of {\it Applications of Negative Dimensional Tensors} by Roger Penrose \cite{Penrose}. The composition of cups and caps is the identity; known as the snake equation and also found on page 231 of \cite{Penrose}. 
\end{remark}

\subsubsection*{Connection to quantum computing notation}
As mentioned, tensors are multilinear maps.
They can be expanded in any given basis, and expressed in
terms of their components.
In quantum information science one often introduces a \emph{computational basis}
$\{\ket{k}\}_k$
for each Hilbert space
and expands the tensors in it, using kets ($\ket{~}$) for vectors and bras ($\bra{~}$) for dual vectors:
\be 
T = \sum_{ijk}T\indices{^i_{jk}}\ket{i}\bra{jk}.
\ee 
Here $T\indices{^i_{jk}}$ is understood
not as abstract index notation but as the actual components of the
tensor in the computational basis.
In practice there is little room for confusion.
The Einstein summation convention is rarely used in quantum information science,
hence we write the sum sign explicitly.

So far we have explained how tensors are represented in tensor diagrams,
and what happens when wires are connected.
The ideas are concluded by four
examples; we urge the reader to work through the examples and check the results for themselves.

The first example introduces a familiar structure from linear algebra in tensor form.
The next two examples come from quantum entanglement theory---see connecting
tensor networks with invariants~\cite{BBL11, 2014SIGMA..10..095C}.
The fourth one showcases quantum circuits, a subclass of tensor networks
widely used in the field of quantum information.
The examples are chosen to illustrate properties of tensor networks and should be self-contained.

\begin{remark}
We occasionally will work with equality up to a scalar when manipulating tensor diagrams by hand. This is common and typically amounts to loss of (unit) normalization.  In quantum theory, a global phase is undetectable.  Hence it is common to consider an equivalency class where $\ket{\psi}$ and $e^{\imath \phi}\ket{\psi}$ are equivalent. This is called working in the {\it unit gauge}: in tensor networks we sometimes work in the {\it scalar gauge}, $\mathbb{C}/\{0\}$. This amounts to mapping numbers picked up during calculation as 
$$
\mathbb{C}/\{0\}\rightarrow 1
$$ 
and representing the unit $1$ as a blank on the page. 
\end{remark}

\begin{example}[The $\epsilon$ tensor]\label{ex:epsilon}
A tensor is said to be fully antisymmetric if swapping any pair of
indices will change its sign: $A_{ij} = -A_{ji}$.
The $\epsilon$ tensor is used to represent the fully antisymmetric Levi-Civita
symbol, which in two dimensions can be expressed as
\begin{equation}
 \epsilon_{00}=\epsilon_{11}=0, \qquad
 \epsilon_{01}=-\epsilon_{10}=1.
\end{equation}

The $\epsilon$ tensor can be used to compute the determinant of a matrix. In two dimensions we have
\be
\det(S) = \epsilon_{ij} S\indices{^i_0} S\indices{^j_1}.
\ee
Using this we obtain
\begin{center}
 \includegraphics{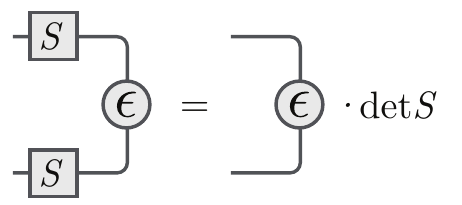}
\end{center}
as can be seen by labeling the wires in the diagram.
In equational form this is
\begin{equation}
\epsilon_{ij} S\indices{^i_m} S\indices{^j_n} = \det(S) \: \epsilon_{mn}.
\end{equation}

In terms of quantum mechanics, $\epsilon$ corresponds to the two-qubit
singlet state:
\be\label{eqn:singlet}
\frac{1}{\sqrt{2}} \ket{\epsilon} = \frac{1}{\sqrt{2}}(\ket{01}-\ket{10}).
\ee
This quantum state is invariant under any transformation of the form $U \otimes U$, where $U$~is a $2\times 2$ unitary,
as it only gains an unphysical global phase factor~$\det(U)$.

\end{example}

\begin{example}[Concurrence and entanglement]\label{ex:concurrence}
Given a two-qubit pure quantum state $\ket{\psi}$,
its \emph{concurrence}
$C(\psi) = |C'(\psi)|$
is the absolute value of the following tensor network expression~\cite{concurrence}:
\begin{center} \label{fig:con}
\includegraphics[width=0.36\textwidth]{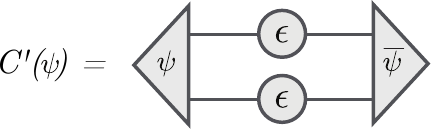}
\end{center}
Here $\overline{\psi}$ is the complex conjugate of~$\psi$ in the computational basis.
The concurrence is an entanglement monotone, a function from states to
nonnegative real numbers that measures how entangled the state is.
$\ket{\psi}$ is entangled if and only if
the concurrence is greater than zero.

Consider now what happens when we act on $\ket{\psi}$ by an arbitrary local unitary operation, i.e.
$$
\ket{\psi'} = (U_1 \otimes U_2) \ket{\psi}.
$$
Using the result of Example~\ref{ex:epsilon} we obtain
\begin{equation}
    C\left((U_1\otimes U_2) \ket{\psi}\right) = C(\psi) |\det(U_1) \det(U_2)|.
\end{equation}
Due to the unitarity 
$$
|\det U_1| = |\det U_2| = 1,
$$
which means that the value of the concurrence
is \emph{invariant} (i.e.~does not change) under local unitary transformations.
This is to be expected, as local unitaries cannot change the amount of entanglement in a quantum state.


More complicated invariants can also be expressed as tensor networks~\cite{BBL11}.
We will leave it to the reader to write the following network as an algebraic expression:
\begin{center}
 \includegraphics[width=0.6\textwidth]{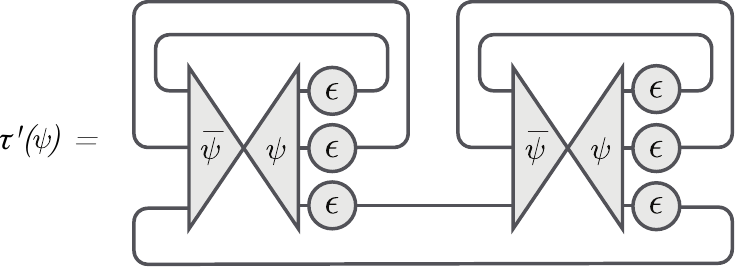}
\end{center}
If $\ket{\psi}$ is a 3-qubit quantum state,
$\tau(\psi) = 2 |\tau'(\psi)|$ represents the entanglement invariant known as the 3-tangle~\cite{2000PhRvA..61e2306C}.
It is possible to form invariants also without using the epsilon tensor.  For example,
the following expression represents the 3-qubit entanglement invariant known as the Kempe invariant~\cite{kempe}:
\begin{equation}\label{eqn:kempe}
K(\psi) = \psi^{ijk}~\overline{\psi}_{ilm}~\psi^{nlo}~\overline{\psi}_{pjo}~\psi^{pqm}~\overline{\psi}_{nqk}.
\end{equation}
The studious reader would draw the equivalent tensor network.
\end{example}

\begin{example}[Quantum circuits] \label{ex:circuits-1}\marginnote{{\it ``I learned very early the difference between knowing the name of something and knowing something.''} --- Richard P.~Feynman, co-discover (with Norman Margolus) of the \CNOT- a.k.a.~Feynman-gate.}

Quantum circuits are a restricted subclass of tensor networks that is widely used in the field of quantum information and is the subject of \S~\ref{sec:btn}. In a quantum circuit diagram each horizontal wire represents the Hilbert space associated with a quantum subsystem,
typically a single qubit.

The tensors attached to the wires represent unitary propagators acting on those subsystems, and are called \emph{quantum gates}.
Additional symbols may be used to denote measurements.
The standard notation is described in~\cite{PhysRevA.52.3457}.  The graphical language of quantum circuits will be explored in detail in \S~\ref{sec:btn}. 

Here we will consider a simple and common quantum circuit, used to  generate entangled Bell states.
It consists of two tensors, a Hadamard gate
(\H{})
and a controlled \NOT{} gate
(\CNOT, denoted by the symbol inside the dashed region):
\begin{center}
\label{eq:bell-circuit}
 \includegraphics{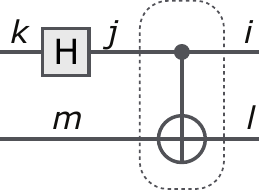}
\end{center}
The \CNOT{} and Hadamard gates are defined as
\begin{align}
\CNOT &= \sum_{ab} \ketbra{a, a \oplus b}{a,b}, \quad \\
\label{eqn:hadamard}
\H &= \frac{1}{\sqrt{2}} \sum_{ab} (-1)^{ab} \ketbra{a}{b},
\end{align}
where the addition in the \CNOT{} is modulo~2.\marginnote{Addition modulo 2: $1\oplus 1 = 0\oplus 0 = 0$, \; $1\oplus 0 = 0\oplus 1 = 1$ (see Appendix \ref{sec:Boolean}).} The reader should verify that acting on the quantum state~$\ket{00}$ the above circuit yields the Bell state
$\frac{1}{\sqrt{2}}(\ket{00}+\ket{11})$, and acting on~$\ket{11}$ it yields the singlet state
$\frac{1}{\sqrt{2}}(\ket{01}-\ket{10})$.
\end{example}

\begin{example} [\COPY{} and \XOR{} tensors: cover art] \label{ex:circuits-2}
One can view the \CNOT{} gate itself as a contraction of two order-three tensors (see \S~\ref{sec:btn} for complete details):
\begin{center}
\includegraphics{cnot-tensors}
\end{center}
The top tensor ($\bullet$ with three legs) is called the \COPY{} tensor.
It equals unity when all the indices are assigned the same value ($0$ or $1$), and vanishes otherwise:

\begin{center}
\includegraphics{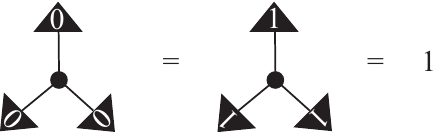}
\end{center}
Hence, \COPY{} acts to copy the binary inputs $0$ and~$1$:
\begin{subequations}
\begin{align}
        \COPY{} \ket{0} &= \ket{0}\ket{0},\\
        \COPY{} \ket{1} &= \ket{1}\ket{1}.
\end{align}
\end{subequations}

The bottom tensor ($\oplus$ with three legs) is called the parity or \XOR{} tensor.
It equals unity when the index assignment contains an even number of $1$s, and vanishes otherwise:

\begin{center}
\includegraphics{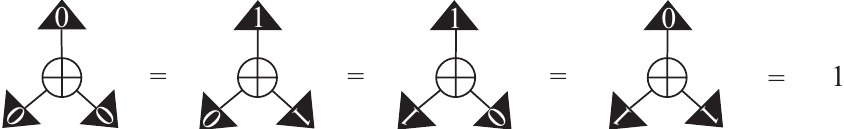}
\end{center}

The \XOR{} and \COPY{} tensors are related via the Hadamard gate \cite{CD, redgreen} as
\be
\label{eq:copy-vs-xor}
\includegraphics{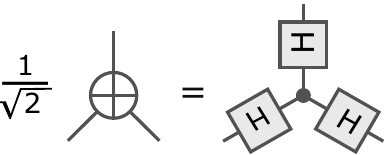}
\ee
where the scalars are often omitted when working in the so called, scalar gauge.  Thus one can think of \XOR{} as being a (scaled) copy operation in another basis:
\begin{subequations}
\begin{align}
        \frac{1}{\sqrt{2}}\XOR{} \ket{+} &= \ket{+}\ket{+},\label{eqn:xcopy}\\
        \frac{1}{\sqrt{2}}\XOR{} \ket{-} &= \ket{-}\ket{-},\label{eqn:xcopy2}
\end{align}
\end{subequations}
where $\ket{+} := \H\ket{0}$ and $\ket{-} := \H\ket{1}$. In terms of components,
\begin{subequations}
\begin{align}
\COPY{}\indices{^{ij}_k}
&= (1-i)(1-j)(1-k)+ijk,\\
\XOR{}\indices{^{qr}_s} &= 1-(q+r+s)+2(qr+qs+sr)-4qrs.
\end{align}
\end{subequations}
 
The CNOT gate is now obtained as the tensor contraction
\begin{equation}\label{eqn:cnot}
\sum_m \COPY{}\indices{^{qm}_{i}} \: \XOR{}\indices{^r_{mj}} = \CNOT{}\indices{^{qr}_{ij}}.
\end{equation}

The \COPY{} and \XOR{} tensors will be explored further in later examples and have many convenient properties~\cite{redgreen, BB11,DBCJ11,B17} which will be explained in \S~\ref{sec:btn}.
\end{example}

\begin{example}[Quantum circuits for cups and epsilon states]\label{ex:state-prep}
The quantum circuit from Example~\ref{ex:circuits-1} is typically used to generate entangled qubit pairs.  
For instance, acting on the state~$\ket{00}$ yields the familiar Bell state---as a tensor network, this is equal to a normalized cup.
Here we also show the mathematical relationship the \XOR{} and \COPY{} tensors have with the cup
($\ket{+}:=\H\ket{0} = \frac{1}{\sqrt{2}}(\ket{0}+\ket{1})$):
\be \label{cnot-bell-states}
\includegraphics{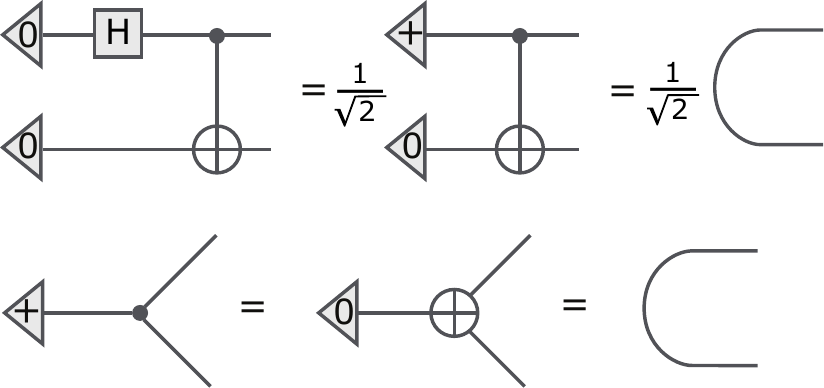}
\ee
Similarly, one can use the circuit~\eqref{eq:bell-circuit} to
generate the epsilon state.
Let us denote the Pauli matrices by $X:=\ket{0}\bra{1}+\ket{1}\bra{0}$, $Y:=-i\ket{0}\bra{1}+i\ket{1}\bra{0}$ and $Z:=\ket{0}\bra{0}-\ket{1}\bra{1}$. 
The $Z$~gate commutes with the \COPY{} tensor, and the $X$ or NOT gate commutes with \XOR{}.
Commuting those tensors to the right hand side,
allows us to apply~\eqref{cnot-bell-states}.  Making use of the Pauli algebra identity $ZX=iY$, one recovers the epsilon state:
\be
\includegraphics{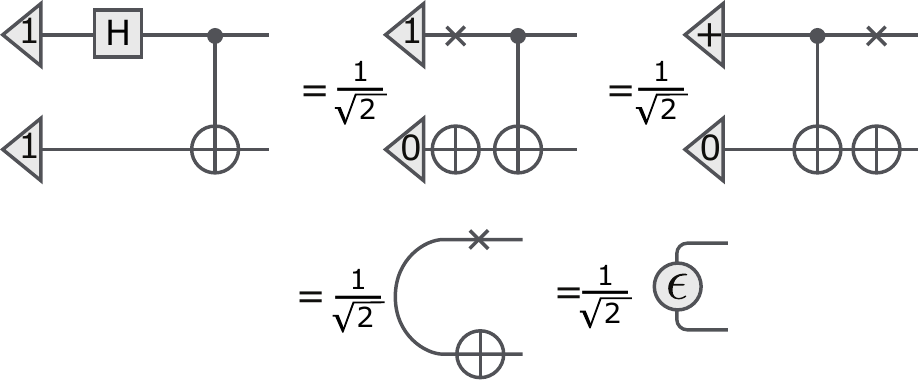}
\ee
\end{example}

\begin{remark}[Graphical tensor calculus \cite{Penrose}]
 While many of the examples we have considered so far are simplistic, in practice tensor networks contain an increasing number of tensors, making it difficult to form expressions using (inherently one-dimensional) equations.  The two-dimensional diagrammatic depiction of tensor networks can simplify such expressions, reduce calculations and often depict internal structure that can lend insight into physical phenomena.  
\end{remark}

Equational identities will also be cast into diagrammatic form.  For example, if $\Gamma$ is totally
symmetric in any arm or leg exchange, then we could adopt the convention to draw it as a circle (b).  The tensor in (c) illustrates the equation $A^{b}_{~dcg}B^{ac}_{~~f}$.  
\begin{center}
 \includegraphics[width=14\xxxscale]{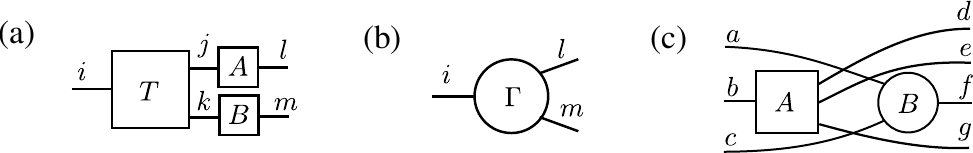}
\end{center}

\begin{remark}[Graphical Trace]
 The trace in the graphical calculus is given by connecting wires to close loops \cite{Penrose}.
Diagram (a) below represents the trace~$A\indices{^i_i}$.
Diagram (b) represents the trace~$B\indices{^{iq}_{iq}}$.
\begin{center}
\includegraphics{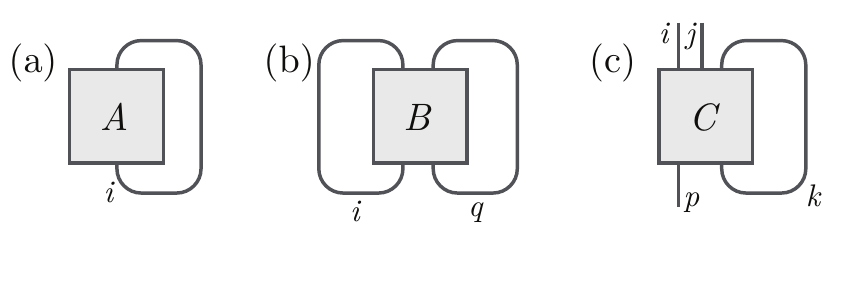}
\end{center}
Partial trace means contracting only some of the outputs with their
corresponding inputs, such as with the tensor $C\indices{^{ijk}_{pk}}$ shown in diagram (c).
\end{remark}


\begin{example}[Partial trace]\label{ex:partial-trace}
The following is an early rewrite representing entangled pairs due to Penrose \cite{Penrose67}.
\begin{center}
\includegraphics{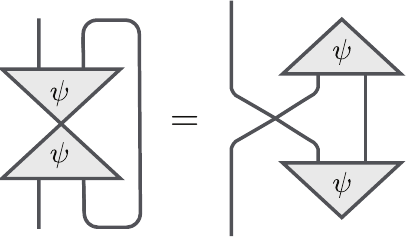}
\end{center}
The diagram on the left represents the partial trace of $\ketbra{\psi}{\psi}$ over the second subsystem.
Readers can prove that this equality follows by interpreting the bent wires as cups and caps, and the crossing wires as \swap{}s.
\end{example}

\begin{example}[Partial trace of Bell states]
 Continuing on from Example \ref{ex:partial-trace}, if we choose $\ket{\psi} = \ket{\cup}$, i.e.~$\ket{\psi}$ is an unnormalized Bell state, we obtain the following. (Compare the following to the diagrams used to model quantum information in the presence of closed timelike curves \cite{Lloyd_2011}.)
\begin{center}
\includegraphics{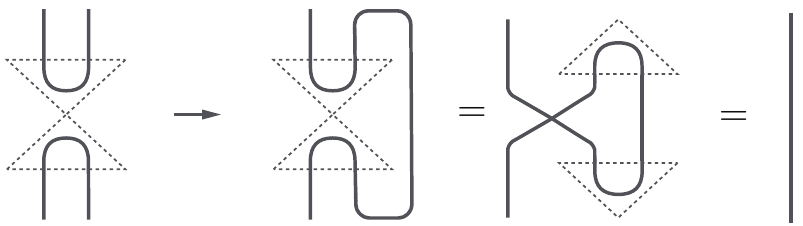}
\end{center}
\end{example}

In general, a quantum state on $n$-spins is a single tensor (such as a box or triangle) with $n$ protruding legs.  Several methods exist to factor such states into elementary building blocks.  For example, it has been shown that the tensor network for each quantum state in the Boolean class of states arises in turn from the classical decomposition of $f$ into fundamental gates.  This is made precise by Theorem \ref{theorem:btns}.  

\begin{remark}[Notation]
We use $\7B$ to denote the Boolean field (this often appears in the literature as $\7F_2$ or as $\7Z_2$), given by an element of the set $\{0,1\}$.  Numbers in $\7B^n$ are then $n$-long Boolean bit strings.  Here and elsewhere, bold font $\textbf{x}$ is shorthand for bit strings $x_1, x_2, \ldots, x_n$.  
\end{remark}

\marginnote{{\it A tensor network representing a Boolean quantum state is determined from the classical network description of the corresponding function}---Theorem \ref{theorem:btns} in \S~\ref{part:count}.}

\begin{definition}[The class of Boolean quantum states --- covered in detail in \S~\ref{sec:btn}]
 Let 
 \be 
 f:\7 B^n \rightarrow \7B ,
 \ee 
 be any switching function.  Then 
 \be 
 \psi_{\7 B} = \sum_{\x}f(\x) \ket{\x},
 \ee 
 is an arbitrary representative in the class of Boolean states.  In this fashion, every Boolean function save the constant zero function, gives rise to a quantum state.  Conversely, every quantum state written in a local basis with amplitude coefficients taking binary values in $\{0,1\}$ gives rise to a Boolean function.  This defines the so called, class of Boolean quantum states (explored in detail in \S~\ref{sec:btn} and \ref{part:count}.
\end{definition}

We will establish the following as Theorem \ref{theorem:btns} in \S~\ref{part:count}: \\

\noindent {\it A tensor network representing a Boolean quantum state is determined from the classical network description of the corresponding function. } \\

\noindent This can be proven by letting each classical gate act on a linear space and from changing the composition of functions, to the contraction of tensors.

\begin{remark}[Quantum Lego blocks: Boolean Tensor Networks]
An example of a Boolean tensor is the \AND-tensor studied in \cite{CTNS}.  This tensor stores the truth table for the local \AND{} function as a superposition.  
\be
\wedge_{ijk} = \sum_{x,y=0,1}\ket{x,y}\ket{x\wedge y} = (\ket{00}+\ket{01}+\ket{10})\ket{0} + \ket{11}\ket{1}.
\ee 
Under Penrose wire-duality, if we bend a wire to raise the index labeled $k$ we arrive at  
 \begin{equation}\label{eqn:AND-map}
 \wedge_{ij}^{~~k}=(\ket{00}+\ket{01}+\ket{10})\bra{0} + \ket{11}\bra{1}.
 \end{equation}
\end{remark}

\begin{remark}[Tensor juxtaposition]
When two or more disconnected tensors appear in the same diagram they are multiplied
together using the tensor product.
In quantum physics notation, they would have a tensor
product sign~$\otimes$ between them.
In the abstract index notation the tensor product sign is omitted.

Tensors can be freely moved past each other.  This is sometimes called planar deformation or rubber sheet topology.
\begin{center}
 \includegraphics{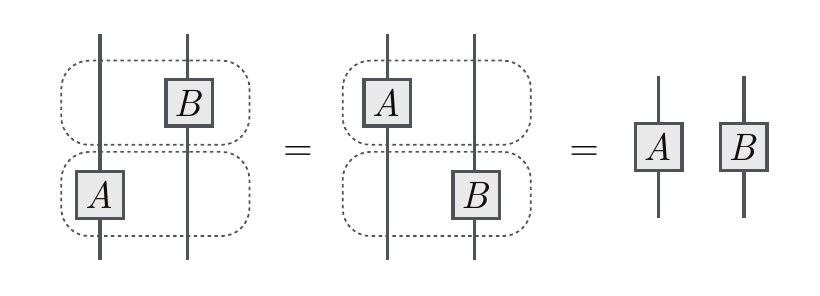}
\end{center}
From the diagram above, using equations we have 
\begin{equation}
 (\I\otimes B)(A\otimes \I) = A\otimes B = (A\otimes \I)(\I\otimes B),
\end{equation}
where we make use of the wire also playing the role of the identity tensor~$\I$---detailed in \S~\ref{sec:Bwires}.
As we shall soon see, wires
are allowed to cross tensor symbols and other wires,
as long as the wire endpoints are not changed.
This is one reason why tensor diagrams are often
simpler to deal with
than their algebraic counterparts.
\begin{center}
   \includegraphics{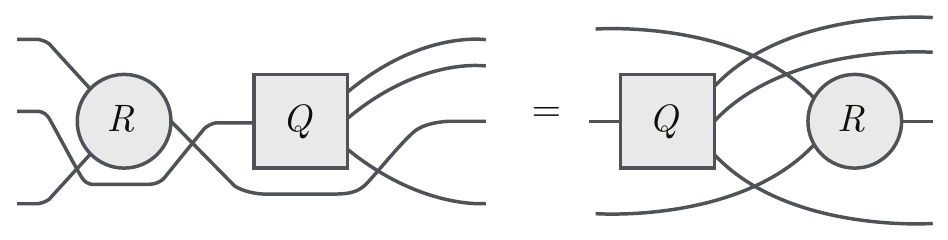}  
\end{center}

In the diagram above we did not label the wires, since
it is an arbitrary assignment. If we did, we could for example denote it as
$Q\indices{^{deg}_{b}}R\indices{^{f}_{ac}}$.
\end{remark}

\section{Penrose Wire Bending Duality}\label{sec:Bwires}
\paragraph{Bending and exchanging wires.}  Let us consider $n$ Hilbert spaces $\2H^{\otimes n}$.  These spaces are essentially equivalent to each other.  We will then consider the \swap{} operator which exchanges the position of two Hilbert spaces in a composite system, moreover it will exchange the $i$th space with the $i+1$th space, for $1\leq i \leq n-1$.  This generates the permutation group with order that divides $n!$, with the generators given diagrammatically as in (a) 
\begin{center}
 \includegraphics[width=8\xxxscale]{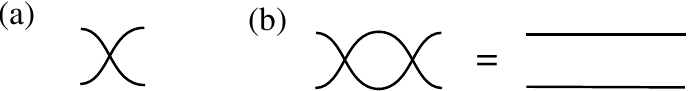}
\end{center}
where (b) shows that (a) is self inverse. The operator is unitary, and (a) may be written as the tensor
$$
\swap^{ij}_{~~kl}=\delta^i_{~l}\delta^j_{~k},
$$ 
or expanded in the computational basis as 
$$
\swap = \sum_{ij} \ketbra{ij}{ji}.
$$
It also has a well-known implementation in terms of three CNOT gates as  
\begin{center}
    \includegraphics{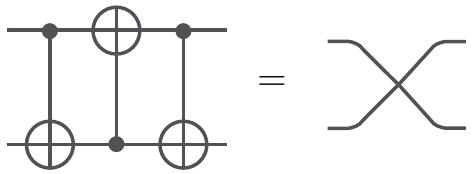}
\end{center}

\begin{example}[\swap{} on two Hilbert spaces]
Let $\2 
X$ and $\2 Y$ be complex Hilbert spaces of dimensions $d_1$ and $d_2$ respectively, then the SWAP operation is the map
\be 
\begin{aligned}
\swap{}: \XY &\rightarrow \YX\\
\swap{}: \ket{x}\otimes\ket{y}&\mapsto \ket{y}\otimes\ket{x},
\end{aligned}
\ee
for all $ \ket{x}\in \2 X,\ket{y}\in\2 Y$.

Given any two orthonormal basis 
$$
\{\ket{x_i}: i=0,\hdots, d_1-1\}
$$
and 
$$
\{\ket{y_j}: j=0,\hdots, d_2-1 \}
$$
for $\2 X$ and $\2 Y$ respectively, we can give an explicit construction for the \swap{} operation as\marginnote{{\it Repeated indices to be summed can share the same color in wire diagrams \cite{2011arXiv1111.6950W}.}}
\begin{equation}
\swap{} =  \sum_{i_1=0}^{d_1-1}\sum_{j_2=0}^{d_2-1}\ketbra{y_j}{x_i}\otimes\ketbra{x_i}{y_j}.
\label{eqn:swap}
\end{equation}

The \swap{} operation is represented graphically by two crossing wires as shown:
\begin{center}
\includegraphics[width=0.35\textwidth]{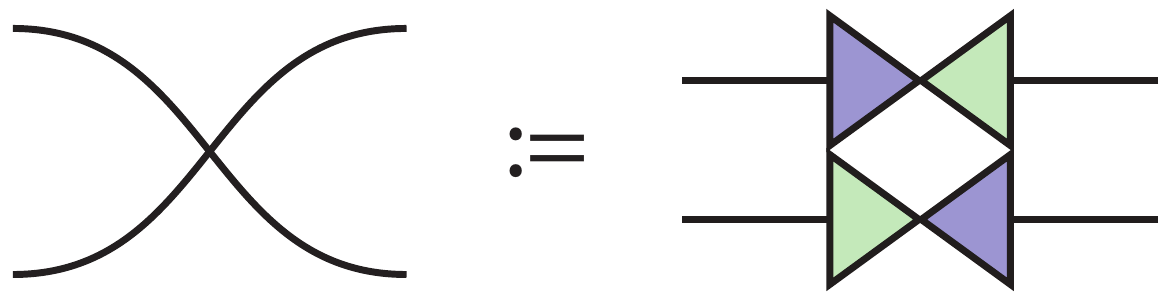} 
\label{fig:swap}
\end{center}
The basis decomposition in~\eqref{eqn:swap} is then an application of the resolution of the identity to each wire.
\end{example}

We will now consider the transformation of raising or lowering an index.  One can raise an index and then lower this index or vice versa, which amounts essentially to the net effect of doing nothing at all.  This is captured diagrammatically by the so called, \textit{snake} or \textit{zig-zag equation}, as
\begin{center}
 \includegraphics[width=.65\textwidth]{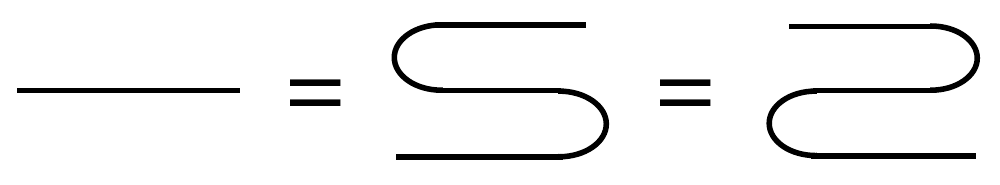}
\end{center}
together with its vertical reflection across the page.  The snake or zig-zag equation in diagrammatic form dates back at least to Penrose \cite{Penrose}.  Given a basis makes a duality between flipping a bra to a ket, that is, raising or lowering an index, precise.  In tensor index notation, it is given simply by $\delta_{ji}\delta^{ik} = \delta_j^{~k}$.

The mathematical rules of tensor network theory assert that the wires of tensors may be manipulated, with each manipulation corresponding to a specific contraction or transformation.

\begin{myfullpage}  
\begin{definition}
 Transposition of 1st-order vectors and dual-vectors, and 2nd-order linear operators is represented by a \emph{bending} of a tensors wires as follows:
\be
\begin{tabular}{c|c|c}
\includegraphics[height=3.5em]{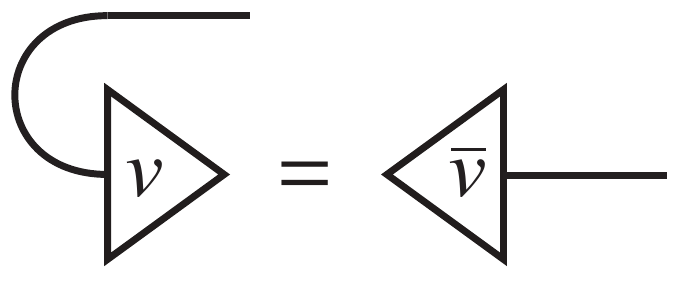} \quad\quad&\quad\quad
\includegraphics[height=3.5em]{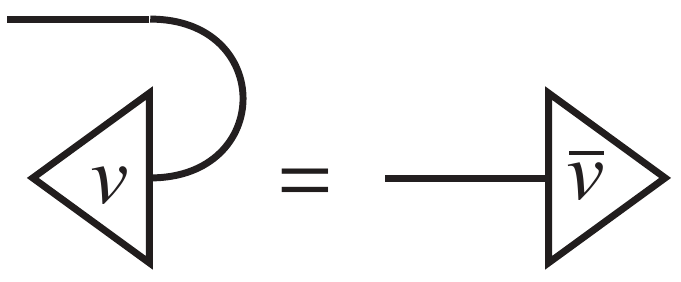} \quad\quad&\quad\quad
\includegraphics[height=3.5em]{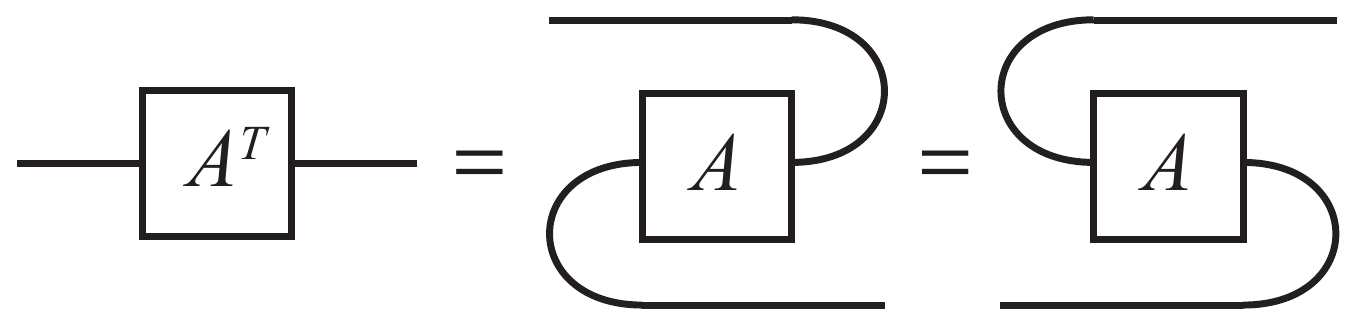} \\~\\
\footnotesize{(a) Vector transposition:} \quad&\quad
\footnotesize{(b) Dual-vector transposition:} \quad&\quad
\footnotesize{(c) Linear operator transposition}\\~\\
\footnotesize{$\ket{v}^T = \bra{\overline{v}}$} \quad&\quad
\footnotesize{$\bra{v}^T=\ket{\overline{v}}$} \quad&\quad
\end{tabular}
\label{fig:transpose}
\ee

Complex conjugation of a tensor's coefficients however is depicted by a bar over the tensor label in the diagram: 
 \be
\begin{tabular}{c|c|c}
\includegraphics[height=3em]{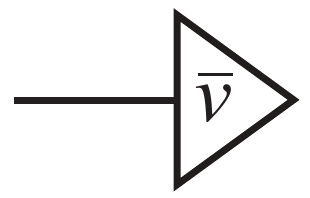} \quad&\quad
\includegraphics[height=3em]{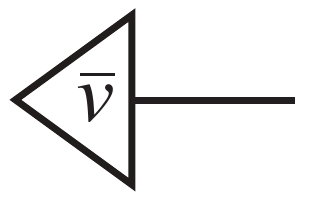} \quad&\quad
\includegraphics[height=3em]{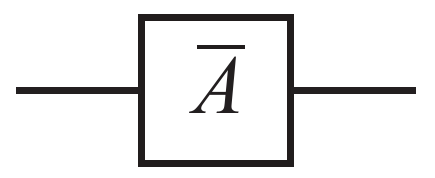} \\
\footnotesize{(a) Complex conjugation of $\ket{v}$} \quad&\quad
\footnotesize{(b) Complex conjugation of $\bra{v}$} \quad&\quad
\footnotesize{(c) Complex conjugation of $A$}
\end{tabular}
\label{fig:conj}
\ee

We stress that under this convention a vector $\ket{v}=\sum_i v_i \ket{i}$ and its hermitian conjugate dual-vector $\bra{v}=\sum_i \overline{v}_i\bra{i}$ are represented as shown in Figure~\ref{sfig}a and~\ref{sfig}b respectively.
\end{definition}
\end{myfullpage}


 \noindent For explicit examples of writing down the equational form of a tensor diagram refer to the proofs in \S~\ref{app:tensor1}.
\paragraph{Equivalence class.}  In further detail, we will consider the class of operations formed from bending tensor wires forwards or backwards using cups and caps, as well as exchanging wires using \swap{}.  We can conceptualize this class of transforms acting on a tensor as, amounting essentially, to matrix reshapes. From the snake equation, action with a cup or cap is invertible and \swap{} is self inverse.  This means that, all possible configurations of a tensors legs using these operations are equivalent, when the equivalence is taken up to Penrose duality.    

\begin{lemma}[Cardinality of index manipulations]
 Given a tensor $T^{ij}$ with fixed labels $i,j$ we can use cups and caps to arrive at 
\be 
 T^{ij}, ~ T^{i}_{~j} , ~ T_{ij}, ~ T_{i}^{~j},
\ee 
the \swap{} operation reorders $i$ and $j$ and then the cups and caps yield 
\be 
 T^{ji}, ~  T^{j}_{~i} , ~  T_{ji}, ~ T_{j}^{~i}.
\ee 
In general, for a tensor with a total of $n$ indices, each index can be up or down, yielding $2^n$ possibilities.  The symmetry group formed by \swap{} is of order $n!$ and acts to arrange the $n$ legs of a tensor, yielding 
\be
n!\cdot 2^n
\ee 
different ways to reorder the indices of a tensor, provided we distinguish $T_{i}^{~j}$ and $T^{j}_{~i}$ etc.  
\end{lemma}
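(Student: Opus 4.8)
The plan is to recognise the admissible operations --- bending a single wire forwards or backwards with a cup or a cap, and exchanging two neighbouring wires with \swap{} --- as generators of a group acting on the configurations of a tensor's legs, and then to count that group; the number of reachable configurations equals its order precisely because the statement asks us to distinguish all placements. First I would separate the two kinds of generator. The adjacent \swap{}s are exactly the Coxeter generators of the symmetric group $S_n$ --- this is the permutation group already discussed around the \swap{} figure above --- so the orderings they produce form a copy of $S_n$, contributing the factor $n!$, and distinct permutations give visibly distinct left-to-right arrangements of the $n$ legs.

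Next I would handle the raising/lowering moves. By the snake (zig-zag) equation, on any fixed leg the operation ``raise this index'' and the operation ``lower this index'' are mutually inverse, and every leg is at all times either up or down; hence each of the $n$ legs can be brought, independently of the others, into exactly one of two heights. This contributes the factor $2^n$, with the bookkeeping given by the map sending a subset $S \subseteq \{1,\dots,n\}$ to ``the legs indexed by $S$ are down and the rest up.''

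Then I would check how the two families interact. A \swap{} sends an up (resp.\ down) index to an up (resp.\ down) index in the adjacent slot, so it merely permutes the coordinates of the height-pattern set $(\7Z_2)^n$, while a bend changes one leg's height without altering the ordering. The generated group is therefore the wreath product $\7Z_2 \wr S_n$, the hyperoctahedral group, of order $2^n\cdot n!$, and for a tensor with no accidental index symmetries its leg configuration is carried to $2^n\cdot n!$ pairwise distinct configurations --- which is what the proviso ``provided we distinguish $T_i^{~j}$ and $T^j_{~i}$, etc.''\ records. The $n=2$ list in the statement is just the instance of this: the four height patterns $T^{ij}, T^i_{~j}, T_{ij}, T_i^{~j}$ for the identity ordering together with their \swap{}-images $T^{ji}, T^j_{~i}, T_{ji}, T_j^{~i}$, i.e.\ $2^2\cdot 2! = 8$.

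\textbf{Main obstacle.} The only delicate point is exactness of the count rather than a mere upper bound: one must argue that distinct (ordering, height pattern) pairs really yield distinct labelled configurations, i.e.\ that the hyperoctahedral group acts freely on fully labelled leg configurations --- which is exactly the content of the ``distinguish'' proviso, so there is nothing deeper to prove there. Everything else is routine bookkeeping: the snake equation supplies invertibility of bends and the two-valuedness of each leg, and the Coxeter presentation of $S_n$ already recalled for \swap{} supplies the $n!$.
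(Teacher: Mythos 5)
Your proposal is correct and takes essentially the same route as the paper, which presents the count directly as the product of the $2^n$ up/down choices per index with the $n!$ orderings generated by \swap{}, the invertibility of the bending moves being supplied by the snake equation in the surrounding discussion. Your packaging of this bookkeeping as the hyperoctahedral group $\7Z_2\wr S_n$ acting simply transitively on labelled leg configurations is a harmless formalization of the same argument, and your reading of the proviso (that $T_i^{~j}$ and $T^j_{~i}$ are counted as distinct here, even though the paper's subsequent remark identifies them diagrammatically) matches the paper's intent.
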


\begin{remark}[Ordering operators by numbers of inputs and outputs]\label{remark:ordering-index}
In the previous remark, we considered $T_{i}^{~j}$ (b) and $T^{j}_{~i}$ (a) etc., as distinct.  For all practical purposes, they are not however.  This is shown as follows.  
 \begin{center}
 \includegraphics[width=7\xxxscale]{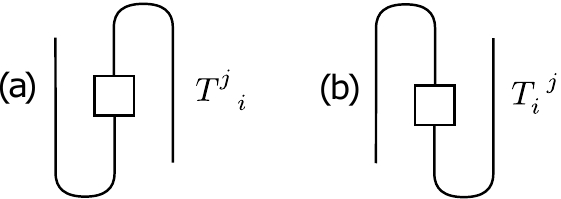}
\end{center}
This shows an awkward property of standard Dirac notation.  Both (a) and (b) represent the same map, but when we write this in a basis, one of them will require us to write $\bra{i}\otimes\ket{j}$.  

With this in mind, we note that the tensor $T_{ij}$ which was considered in the last section actually has 6 unique reshapes, as two of the reshapes are diagrammatically equivalent.  
 \begin{center}
 \includegraphics[width=10\xxxscale]{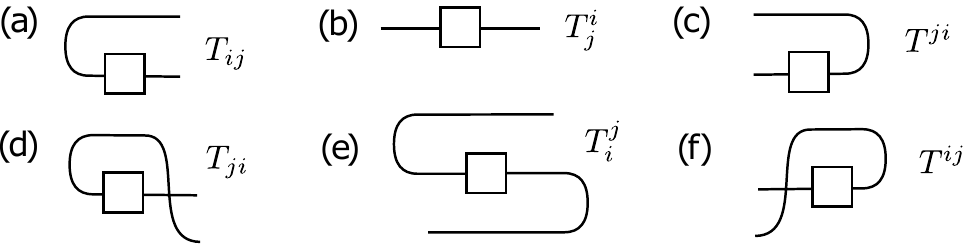}
\end{center}
\end{remark}

The duality is well know, but we have not seen mention of the order.  We call this the \emph{natural tensor symmetry class}.  In Theorem \ref{theorem:symmetry-class} are going to count (i) the number of possible ways a tensor can have its wires bent, either forward or backwards using the cups and caps, in conjunction with (ii) the number of ways a tensor can have its arms and/or legs exchanged.    
\begin{theorem}[Natural tensor symmetry class]
\label{theorem:symmetry-class}
The arms and legs of a tensor $\Gamma^{ij\cdots k}_{qr\cdots s}$ with $m$ input arms $(ij\cdots k)$ and $n$ output legs $(qr\cdots s)$, can be rearranged in  
\be 
(n+m+1)! 
\ee
different ways.
\begin{proof}
Exercise.  
\end{proof}
\end{theorem}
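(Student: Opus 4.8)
\emph{Proof proposal.} Write $k=n+m$ for the total valence of $\Gamma$. The plan is to reduce the claim to a one-line combinatorial count, the subtle point being the identification recorded in Remark~\ref{remark:ordering-index} (which separates the honest answer $(n+m+1)!$ from the naive count $k!\cdot 2^{k}$).

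First I would fix the bookkeeping for ``an arrangement of the arms and legs.'' By the snake (zig-zag) equation, bending a wire forward or backward with a cup or cap is invertible, and the \swap{} generators realise the full symmetric group on the $k$ wires; so every configuration reachable from $\Gamma$ by bending and exchanging wires is one in which the $k$ wires have been split into a left-pointing (input) block and a right-pointing (output) block, each block carrying a left-to-right order on the page. Two such configurations denote the same diagram exactly when they have the same two blocks with the same internal orders --- this is precisely Remark~\ref{remark:ordering-index}: the diagram remembers the order among wires on the \emph{same} side, but not the relative order of a left wire and a right wire (so $\Gamma_i{}^{j}$ and $\Gamma^{j}{}_{i}$, say, are one and the same rearrangement). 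Counting rearrangements of $\Gamma$ thus means counting these (ordered input block, ordered output block) pairs.

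Next I would carry out the count by exhibiting a bijection: such a pair is the same data as a linear ordering of all $k$ wires (list the input block, then the output block) together with the gap at which the list is cut into its input part and its output part. There are $k!$ orderings and $k+1$ gaps (the two extreme gaps giving ``all outputs'' and ``all inputs''), and the assignment (ordering, gap)$\;\mapsto\;$(ordered input block, ordered output block) is a bijection onto the set of rearrangements. Hence the total is $(k+1)\cdot k! = (k+1)! = (n+m+1)!$. Equivalently, grouping by the number $j$ of wires sent to the input side,
\[
\sum_{j=0}^{k}\binom{k}{j}\,j!\,(k-j)! \;=\; \sum_{j=0}^{k} k! \;=\; (k+1)!,
\]
the $\binom{k}{j}$ choosing the input wires, the $j!$ ordering them and the $(k-j)!$ ordering the rest.

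The genuine obstacle is the middle step, not the arithmetic: one must argue that the identification of Remark~\ref{remark:ordering-index} is the \emph{only} coincidence among the $k!\cdot 2^{k}$ formally written reshapes --- i.e.\ that two configurations with different input/output blocks, or with the same blocks but a different left-to-right order within some block, are genuinely inequivalent diagrams. This is because the direction of each wire (ket-leg versus bra-leg) and the on-page order of the wires within each side are invariants that no planar deformation, no cup/cap bend composed with its inverse, and no \swap{} can change without being recorded. Once this is spelled out, the count above completes the proof.
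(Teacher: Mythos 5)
The paper offers no proof of this theorem---its ``proof'' is literally the word \emph{Exercise}---so there is nothing to compare your argument against; judged on its own, your count is correct and is exactly the intended reading of the statement. Your key move, taking Remark~\ref{remark:ordering-index} as the defining identification (order matters \emph{within} the input side and within the output side, but the relative interleaving of an up-wire and a down-wire is not diagrammatic data), is the right one: it reproduces the paper's worked $2$-index example ($6=3!$ reshapes rather than the naive $2!\cdot 2^2=8$), and it is also consistent with the naive $k!\cdot 2^k$ count of the preceding Lemma, since the fibre of formal reshapes over a diagram with $j$ wires bent up has size $\binom{k}{j}$, and $\sum_j \binom{k}{j}\,k! = 2^k\,k!$ while the number of diagrams is $\sum_{j}\binom{k}{j}\,j!\,(k-j)! = (k+1)!$. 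The bijection (ordered input block, ordered output block) $\leftrightarrow$ (linear order on all $k$ wires, cut position) is clean and closes the arithmetic.

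Your closing caveat is also well placed: the only genuinely non-trivial point is that the identification of Remark~\ref{remark:ordering-index} is the \emph{only} coincidence, i.e.\ that distinct (ordered block, ordered block) pairs really are inequivalent diagrams. Your invariant-based sketch (wire orientation and within-side order are preserved by planar deformation, by a cup/cap bend composed with its inverse, and by \swap{}) is adequate at the level of rigour of this text; a reader wanting a fully formal justification can appeal to the coherence theorem for the graphical language quoted in the Further Reading section (two morphism terms are equal iff their diagrams are isomorphic), which underwrites precisely the claim that the diagrammatic data you enumerate is a complete invariant.
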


\section{Bending Density Operator Wires}

We will now apply the natural tensor symmetry class counted in Theorem \ref{theorem:symmetry-class} to the analysis of the quantum states arising from bending wires on density operators.  These states are found by bending all the wires of a tensor representing a density operator to the same side, as follows.  
\begin{center}
 \includegraphics[width=10\xxxscale]{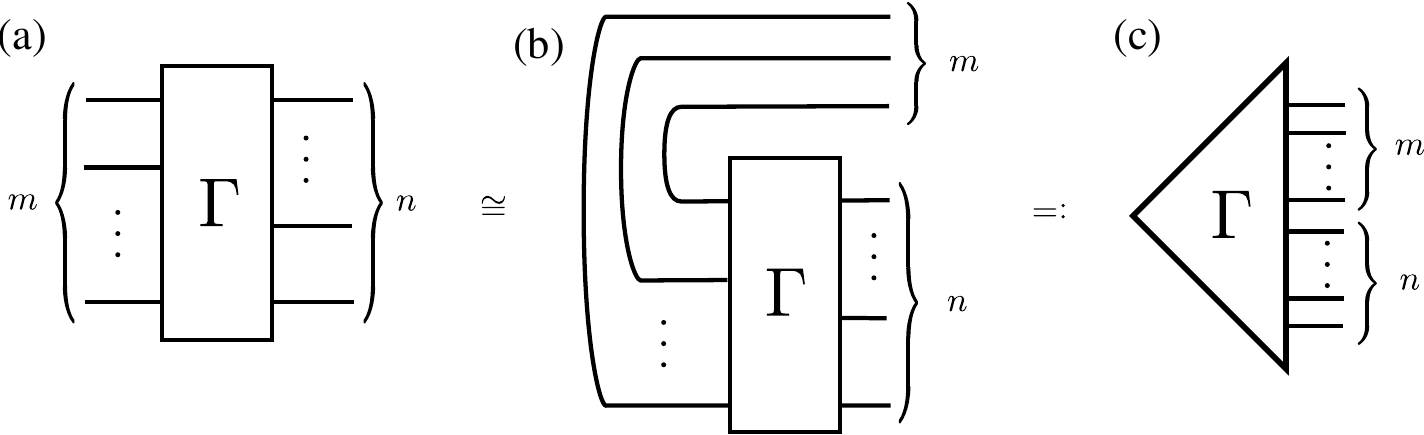}
\end{center} 
For the case of a density map, $m=n$.  

We will first compare the idea of a matrix basis with that of a vector space basis both for an inner product space.  We will use these concepts to study symmetries of 
density operators.  In fact, 
\begin{remark}[Injection from Density Operators to States --- Theorem \ref{theorem:injection}]
 We will soon prove that the existence of an injective map sending each n-party density matrix $\rho$ to a quantum state.  The map is found by bending wires.  The resulting state is naturally equivalent under SWAP to $2n!$ states, by Theorem \ref{theorem:symmetry-class}.    
\end{remark}
The process is invertible.  However, every quantum state does not always give rise to a $\rho$ under wire duality.  The purpose of the present section is to make these statements precise.  

\subsubsection*{Matrix basis}
We expand $d$-dimensional operators using a matrix basis $\{A_i\}$,
which is orthonormal with respect to the \emph{Hilbert-Schmidt} inner product, defined as
\be 
\langle A_i,A_j\rangle :=
\frac{1}{d}\Tr(A_i^\dagger A_j)=\delta_{ij}.
\ee 
The product $A_iA_j$ is given in (a)
\begin{center}
 \includegraphics[width=5\xxxscale]{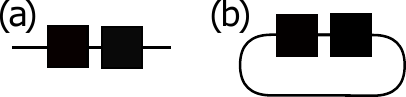}
\end{center} 
and the trace inner product in (b).
Here and elsewhere in this work, all scale factors in the diagrams are omitted graphically, but we note that care must be taken when one is summing over diagrams.

\begin{example}[Pauli matrix basis] 
A nascent example of a matrix basis with the described properties is the Pauli matrices on qubits.  Any operator of type $\C^2\rightarrow\C^2$ can be written in terms of the Pauli matrices as 
\be 
a \I + \vec{p} \cdot \vec{\sigma} = a \I + bX+cY+dZ,
\ee
for $a,b,c,d\in \C$.  Note that this provides a decomposition into a symmetric subspace 
\be
\1S= \text{span}\{\I, X, Z\},
\ee 
and an antisymmetric subspace $\1A=\text{span}\{Y\}$.  This symmetry is exhibited by $\rho = \rho^\top\in\1S$ and antisymmetry as $\rho = -\rho^\top\in\1A$.  Diagrammatically, transposition is done by twisting a map (see (a) or (b) in Remark \ref{remark:ordering-index}) \cite{Kassel}.  
\end{example}

\subsubsection*{Vector basis} For a vector space basis $\{a_i\}$ in $\C^2\otimes \C^2$ we use the typical inner product where
\be
\langle a_i,a_j\rangle = \delta_{ij},
\ee 
which is given diagrammatically as in (b).  (a) Follows from the diagrammatic SVD, and the black square is intended to depict a not necessarily unitary, or for that matter invertible, map.   
\begin{center}
 \includegraphics[width=10\xxxscale]{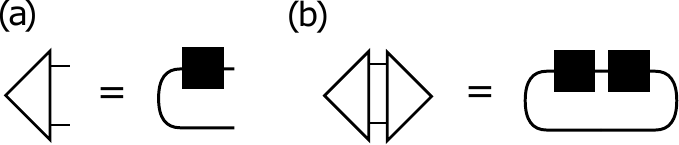}
\end{center} 

\begin{example}[Bell vector basis] 
An example of a vector basis is the Bell basis on qubits.  Any vector
in $\C^2\otimes\C^2$ can be written in terms of this basis as (see
Table~\ref{table:pauli-vs-bell})
\be 
\psi(a,b,c,d) = a \Phi^+ + b \Phi^- + c \Psi^+ + d \Psi^-,
\ee
for $a,b,c,d\in \C$.  This partitions the space into a symmetric subspace 
\be 
\1s=\Span\{ \Phi^+, \Phi^- ,\Psi^+ \},
\ee 
and an antisymmetric subspace $\1a=\Span\{\Psi^-\}$.  This symmetry is illustrated by $\psi(i,j) = \psi(j,i)\in\1s$ and antisymmetry as $\psi(i,j) = -\psi(j,i)\in\1a$.  Diagrammatically, this amounts to letting the swap gate act on both output wires, which serves to exchange them.  
\end{example}

\subsubsection*{Comparison} 
We note that in the diagrammatic language, the Bell vector basis and the Pauli matrix basis become essentially equivalent as they are related by bending wires.  
In particular, we note that they have identical form as the operator basis norm and the inner product of vectors.  These are both, up to a scale factor, identical in the graphical language.  To illustrate our point, we present the following table.  

\begin{table}[h!]
\centering
\begin{tabular}{ccc}
\hline
$\C^2\rightarrow\C^2$ &  & $\C^2\otimes\C^2$\\
\hline 
$\I$ & $\cong$ & $\Phi^+ = \ket{00}+\ket{11}$\\
$\sigma^x$ & $\cong$ & $\Psi^+ = \ket{01}+\ket{10}$\\
$\sigma^y$ & $\cong$ & $\Psi^- = i(\ket{01}-\ket{10})$\\
$\sigma^z$ & $\cong$ & $\Phi^- = \ket{00}-\ket{11}$\\
\hline
\end{tabular}
\caption{Mapping between Pauli matrices and Bell states.
\label{table:pauli-vs-bell}
}
\end{table}
Table~\ref{table:pauli-vs-bell} illustrates the specific mapping between Pauli matrices and Bell states.  For example, the second row represents 
\begin{center}
 \includegraphics[width=10\xxxscale]{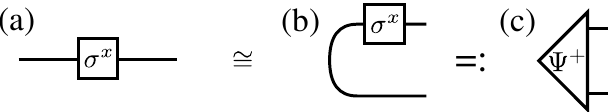}
\end{center} 
where (a) is the $\sigma^x$ matrix.  By bending a wire, we arrive at (b) which is identically equal to the bell state $\Psi^+$ in (c).  

\subsubsection*{Density operators vs pure states.}  We are in a position to carry out an analysis of the class of states found by bending the wires of a density operator all to the same side.  
We first consider the expansion of a density operator into the so called Hilbert-Schmidt basis of Pauli operators.  
We let 
\be 
P_n = \{\I, X, Y, Z\}^{\otimes n},
\ee 
be the set of all $n$ letter words, formed from the alphabet of Pauli
matrices, with $\otimes$ as the concatenation operator.  The span of
$P_n$ forms a Hermitian operator basis as each element is invariant
under the $\dagger$.  We can expand any density operator $\rho$ in
terms of this basis as 
\be 
\rho = \sum a^{ijk\cdots l}\sigma_i \sigma_j\sigma_k\cdots \sigma_l,
\ee 
for an $n$-long index $ijk\cdots l$, where each $i,j,k$ etc indexes an
operator in $P_1$.  We can now study the natural embedding of this
operator $\rho$ on ${\C^2}^{\otimes n}\rightarrow {\C^2}^{\otimes n}$ into
$\psi_\rho \in {\C^2}^{\otimes 2n}$.

\begin{theorem}[Injection from density operators to states]
\label{theorem:injection}
Every density operator $\rho$ on $\2H^{\otimes n} \rightarrow
\2H^{\otimes n}$ gives rise to a state $\psi_\rho$ in $\2H^{\otimes  2n}$.
This state has $2n!$ natural symmetries induced by swap.

\begin{proof}
Each density operator is dual to a state by bending wires.  We will express this starting with the density operator
\be 
\rho = \sum a^{ijk\cdots l}\sigma_i \sigma_j\sigma_k\cdots \sigma_l,
\ee
and writing it as 
\be 
\psi_\rho = \sum a^{ijk\cdots l}\Psi_i\otimes \Psi_j\otimes \Psi_k\otimes \cdots \otimes \Psi_l,
\ee
where $\Psi_m \cong \sigma_m$ per bending a wire on each $\sigma_m$, and arriving at the state $\psi_\rho$.  This results in the quantum state $\psi_\rho$.  Note that 
there are a number of choices of ordering when bending the wires.  In addition, these wires can be arbitrarily ordered after bending 
to still form an essentially equivalent state.  In fact, one can act with the symmetry group on the open wires, to arrive at the natural symmetry class with the same order as the permutation group.  This scenario is depicted below.   
\end{proof}
\end{theorem}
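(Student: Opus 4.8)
The plan is to exhibit the map $\rho \mapsto \psi_\rho$ explicitly, check that it is well defined and injective, and then read off the size of its \swap{}-orbit from Theorem~\ref{theorem:symmetry-class}.

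First I would fix the Pauli-word basis $P_n = \{\I, X, Y, Z\}^{\otimes n}$ and use uniqueness of the Hilbert--Schmidt expansion $\rho = \sum a^{ijk\cdots l}\,\sigma_i\sigma_j\sigma_k\cdots\sigma_l$ with $a^{ijk\cdots l} = \tfrac{1}{2^n}\Tr(\rho\,\sigma_i\sigma_j\cdots\sigma_l)$; Hermiticity and unit trace of $\rho$ constrain the coefficients, but only their uniqueness is needed here. Applying Penrose wire-bending duality --- the cap tensor of \eqref{eqn:delta-bell} --- to the lower index of each single-site factor sends $\sigma_m \in \mathcal{L}(\mathcal{H})$ to a two-leg state $\Psi_m \cong \sigma_m$, i.e.\ the correspondence of Table~\ref{table:pauli-vs-bell} ($\I \mapsto \Phi^+$, $X \mapsto \Psi^+$, $Y \mapsto \Psi^-$, $Z \mapsto \Phi^-$). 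Performing this on every tensor factor while keeping the coefficients fixed yields
\[
\psi_\rho = \sum a^{ijk\cdots l}\,\Psi_i \otimes \Psi_j \otimes \cdots \otimes \Psi_l \in \mathcal{H}^{\otimes 2n},
\]
which is the first assertion (normalization understood up to the scalar gauge, since $\|\psi_\rho\|^2 = \Tr\rho^2$).

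Next I would argue injectivity from invertibility of wire bending: by the snake/zig-zag equation each cap is undone by the matching cup, so from $\psi_\rho$ one recovers each $\Psi_m$ as $\sigma_m$ and hence $\rho$; equivalently the family $\{\Psi_i\otimes\cdots\otimes\Psi_l\}$ inherits linear independence from the orthonormal Pauli words, so the coefficients --- and therefore $\rho$ --- are determined by $\psi_\rho$. It is worth noting in passing that the map is not onto $\mathcal{H}^{\otimes 2n}$: only vectors whose inverse-bent operator is positive with unit trace arise, which is exactly why the statement claims an \emph{injection} and why ``the process is invertible'' must be read as invertibility on the image.

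Finally, for the symmetry count I would invoke Theorem~\ref{theorem:symmetry-class}. The state $\psi_\rho$ has all $2n$ wires on one side, and \swap{}s acting on those legs generate a permutation group; the natural symmetry class then has the order predicted by Theorem~\ref{theorem:symmetry-class} for this configuration, namely $2n!$ as recorded in the remark preceding the theorem. I expect this to be the main obstacle: one must pin down precisely which group of that order acts and verify that distinct elements give distinct tensors, since the symmetric group on all $2n$ letters is a priori larger. The resolution is that the data actually being permuted is the splitting of the $2n$ legs into the $n$ bent Pauli sites, together with the transposition freedom inside each pair; reconciling this with Theorem~\ref{theorem:symmetry-class}, which also counts the forward/backward bending freedom, is careful bookkeeping rather than a new idea.
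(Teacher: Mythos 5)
Your proposal follows essentially the same route as the paper's proof: expand $\rho$ in the Pauli-word (Hilbert--Schmidt) basis, bend a wire on each single-site factor so that $\sigma_m \mapsto \Psi_m$ per Table~\ref{table:pauli-vs-bell}, obtain $\psi_\rho = \sum a^{ijk\cdots l}\,\Psi_i\otimes\cdots\otimes\Psi_l$, and read the swap-symmetry count off the permutation action on the open wires via Theorem~\ref{theorem:symmetry-class}. Your added remarks on injectivity via the snake equation and on non-surjectivity simply make explicit what the paper asserts in the surrounding remarks, so the argument is correct and matches the paper's approach.
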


\begin{example}[Injection from density operators to states]
We can illustrate the idea behind Theorem \ref{theorem:injection} in the following figure.  (a) depicts density operator $\rho^i_{~j}$ and (b) $\rho_{ij}$ its state dual. 
 \begin{center}
 \includegraphics[width=8\xxxscale]{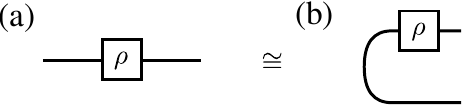}
\end{center}
\end{example}

\begin{remark}[From states to operators by Penrose wire duality]
While every density operator gives rise to a
quantum state, the converse is not necessarily true.
The condition $\rho = \rho^\dagger$ corresponds to $a^{ijk\cdots l} \in \R$
which limits the possible states.
For the case of qubits, a quantum state has
$2(2^n-1)$ real degrees of freedom in general, whereas the states
arising under Theorem \ref{theorem:injection} have $2^n-1$ real degrees of freedom.
\end{remark}

\begin{remark}[In general wire bending is not purification] 
 Let 
\be 
\rho = \frac{1}{2}\I + \sigma.A = \frac{1}{2}\I + a X + b Y + c Z,
\ee 
be the arbitrary state of a qubit.  Then by Theorem \ref{theorem:injection}, 
\be
\psi_\rho = (1+c)\ket{00} + (a+bi)\ket{01} + (a-bi)\ket{10} + (1-c)\ket{11}, 
\ee
is the natural state-dual to $\rho$ with norm $\braket{\psi_\rho}{\psi_\rho} = 2(a^2+b^2+c^2)$. However, we note that 
Theorem \ref{theorem:injection} does not in general provide a purification of $\rho$.  
\end{remark}

\begin{remark}[Symmetric density operators vs symmetric states]
A symmetric single qubit density operator necessarily has $bY=0$ and so gives rise to a symmetric two qubit state with real valued coefficients parametrized by two real degrees of freedom, $\psi_\rho(a, 0, c)$.  Diagrammatically, a symmetric two-party state is invariant under exchange of its legs (a) whereas a symmetric density operator is invariant under exchange of its arms and legs (e.g.\ transpose), (b).  
\begin{center}
 \includegraphics[width=10\xxxscale]{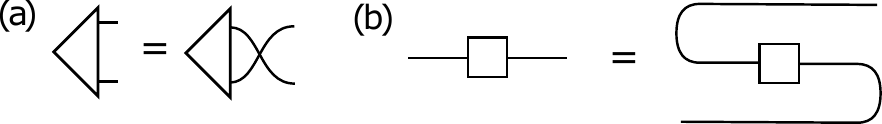}
\end{center}
\end{remark}



\subsection{Bipartite Matrix Operations}
\label{sec:bipartite}

Bipartite matrices are used in several representations of CP-maps, and manipulations of these matrices will be important in the following discussion.
\begin{definition}
Consider two complex Hilbert spaces $\2 X$, and $\2 Y$ with dimensions $d_x$ and $d_y$ respectively. The bipartite matrices we are interested in are then $d^2_x\times d^2_y$ matrices $M\in {\mathcal L}(\2X\otimes\2Y)$ which we can represent as 4th-order tensors with tensor components
\begin{equation}
M_{m\mu,n\nu}:=\bra{m, \mu}M\ket{n,\nu}
\end{equation} 
where $m,n \in \{0,...,d_x-1\}$, $\mu,\nu \in \{0,...,d_y-1\}$ and $\ket{n,\nu}:=\ket{n}\otimes\ket{\nu}\in \2X\otimes\2Y$ is the tensor product of the standard bases for $\2 X$ and $\2 Y$.
\end{definition}

 Graphically this is given by

\begin{center}
  \includegraphics[height=5em]{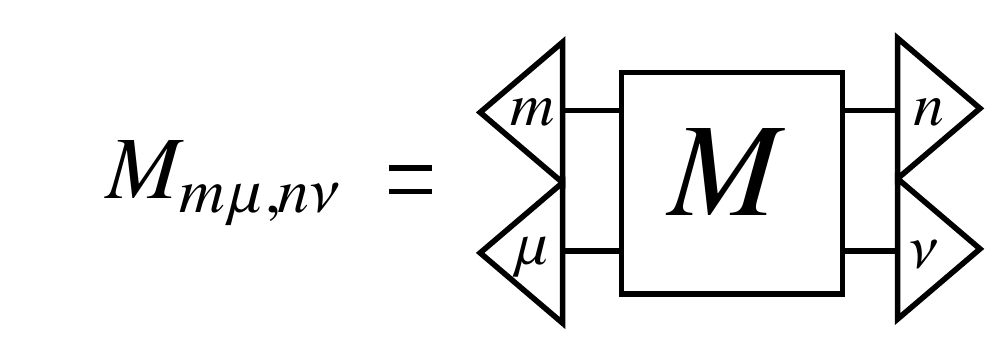}
 \label{fig:bipartite-m-4}  
\end{center}

We can also express the matrix $M$ as a 2nd-order tensor in terms of the standard basis $\{\ket{\alpha} : \alpha = 0,\hdots,D-1\}$ for $\2X\otimes\2Y$ where $D=d_x d_y$. In this case $M$ has tensor components
\begin{equation}
M_{\alpha\beta}=\bra{\alpha}M\ket{\beta}
\end{equation}
This is represented graphically as
\begin{center}
\includegraphics[height=5em]{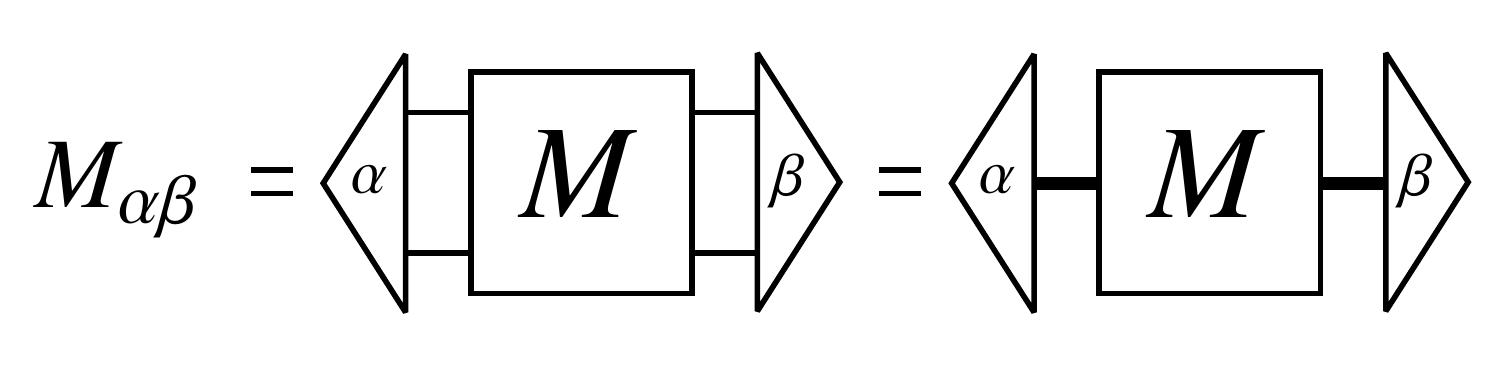}
 \label{fig:bipartite-m-2}
\end{center}

We can specify the equivalence between the tensor components $M_{\alpha\beta}$ and $M_{m\mu,n\nu}$ by making the assignment
\begin{eqnarray}
\alpha &=& d_y m + \mu\\
\beta &=& d_y n +\nu,
\end{eqnarray} 
where $d_y$ is the dimension of the Hilbert space $\2 Y$.

\begin{fullpage}
The bipartite matrix operations which are the most relevant for open quantum systems (see Fig.~\ref{fig:cpreps}) are the \emph{partial trace over $\2 X$}  ($\Tr_{\2 X}$) (and $\Tr_{\2 Y}$ over $\2 Y$), \emph{transposition} ($T$), \emph{bipartite}-\SWAP{} ($S$), \emph{col-reshuffling} ($R_c$), and \emph{row-reshuffling} ($R_r$). The corresponding graphical manipulations are:
\begin{center}
 \begin{tabular}{cccccc}
\includegraphics[width=0.1\textwidth]{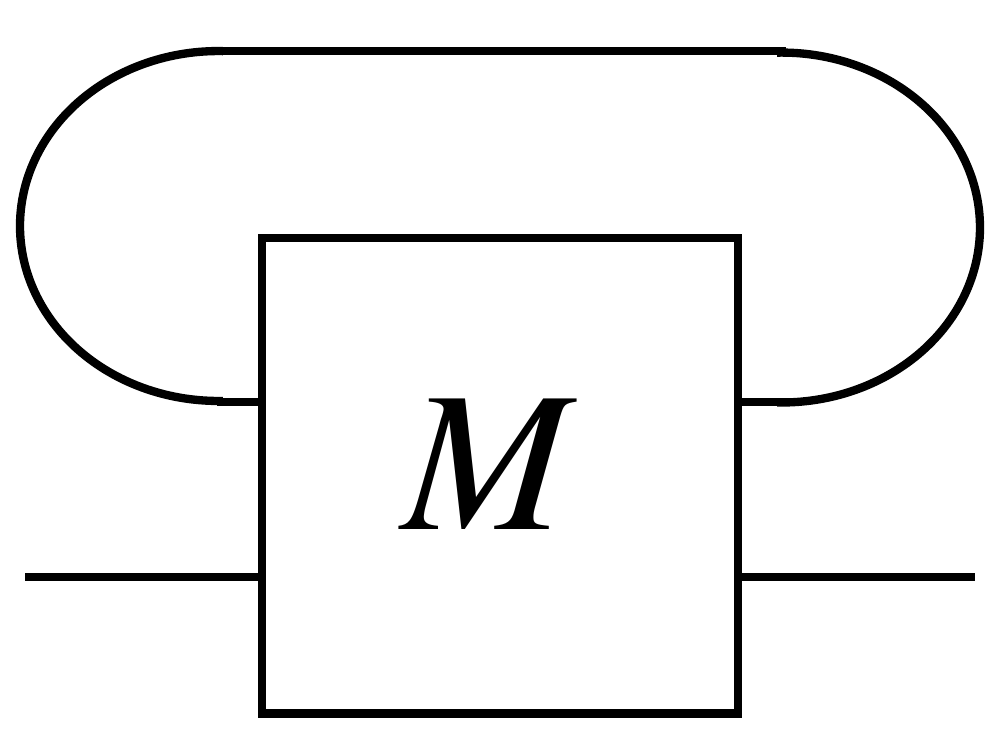}	\quad&\quad
\includegraphics[width=0.1\textwidth]{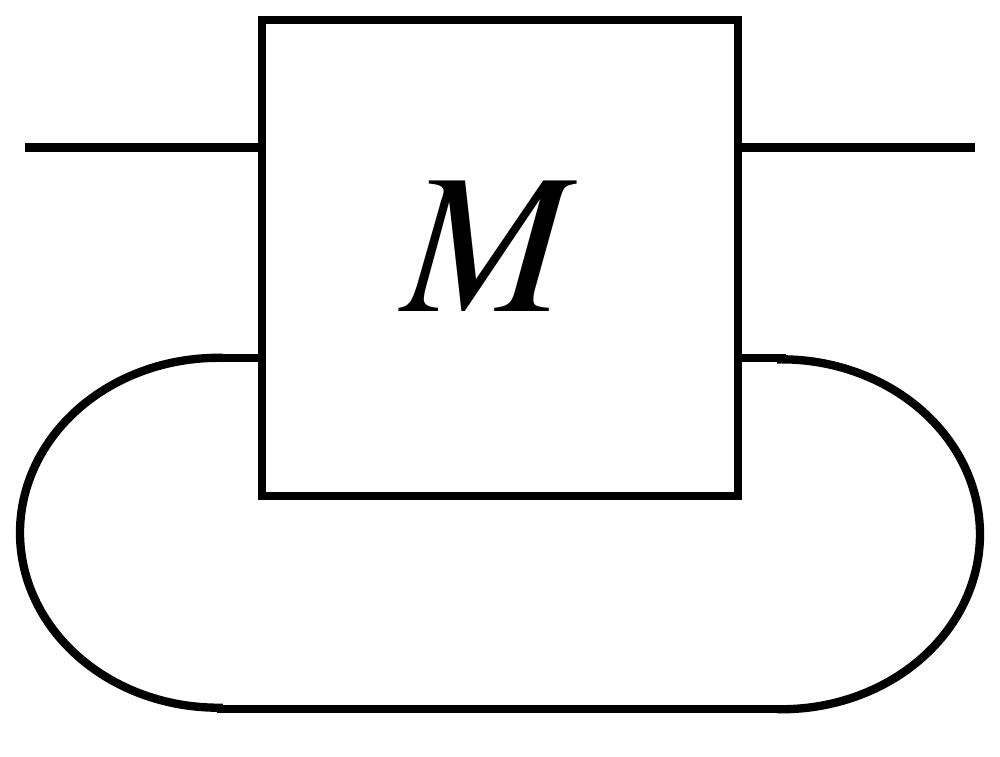}	\quad&\quad
\includegraphics[width=0.1\textwidth]{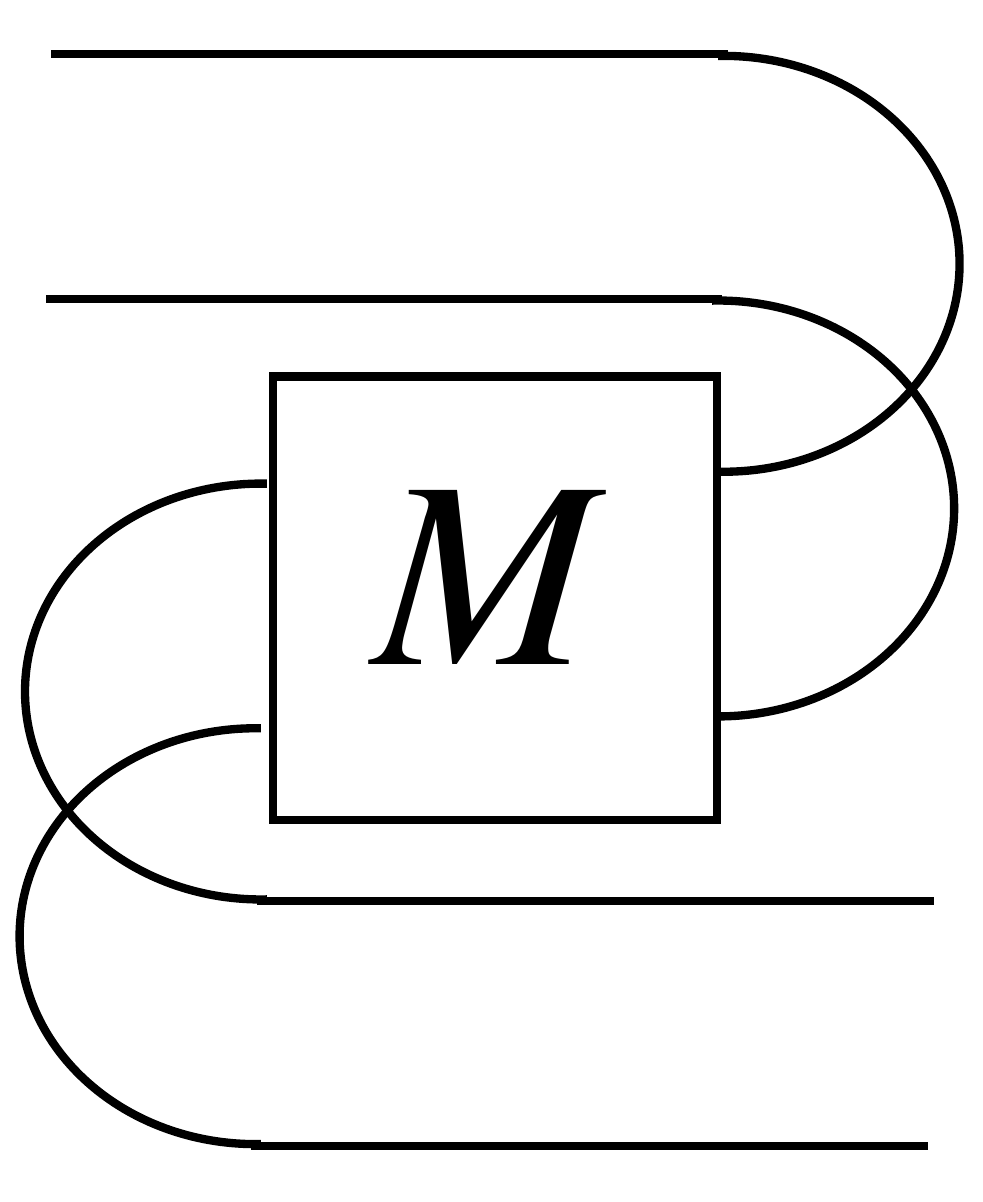}	\quad&\quad
\includegraphics[width=0.13\textwidth,trim=1.5cm 1.5cm 1.5cm 1.5cm,clip]{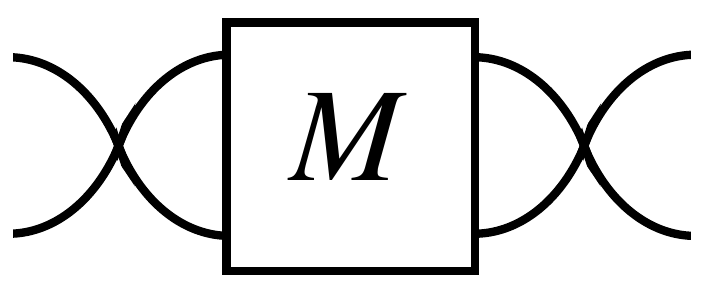}	\quad&\quad
\includegraphics[width=0.1\textwidth,trim=1.5cm 1.5cm 1.5cm 1.5cm,clip]{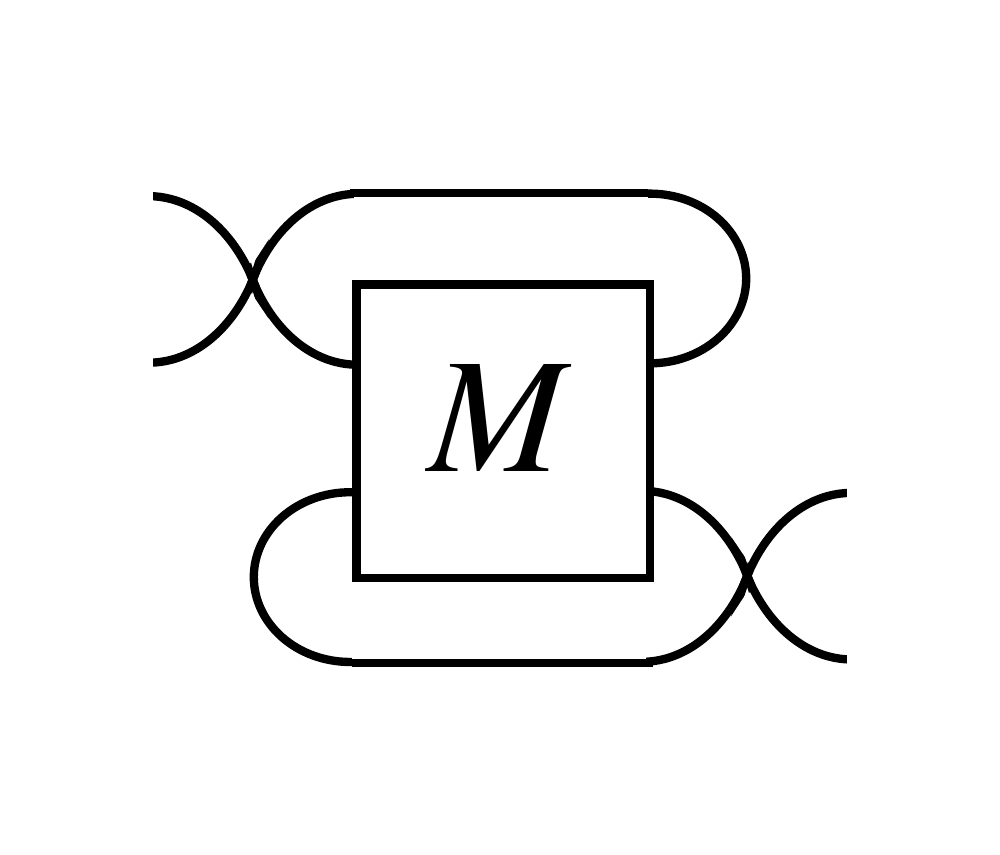}	\quad&\quad
\includegraphics[width=0.1\textwidth,trim=1.5cm 1.5cm 1.5cm 1.5cm,clip]{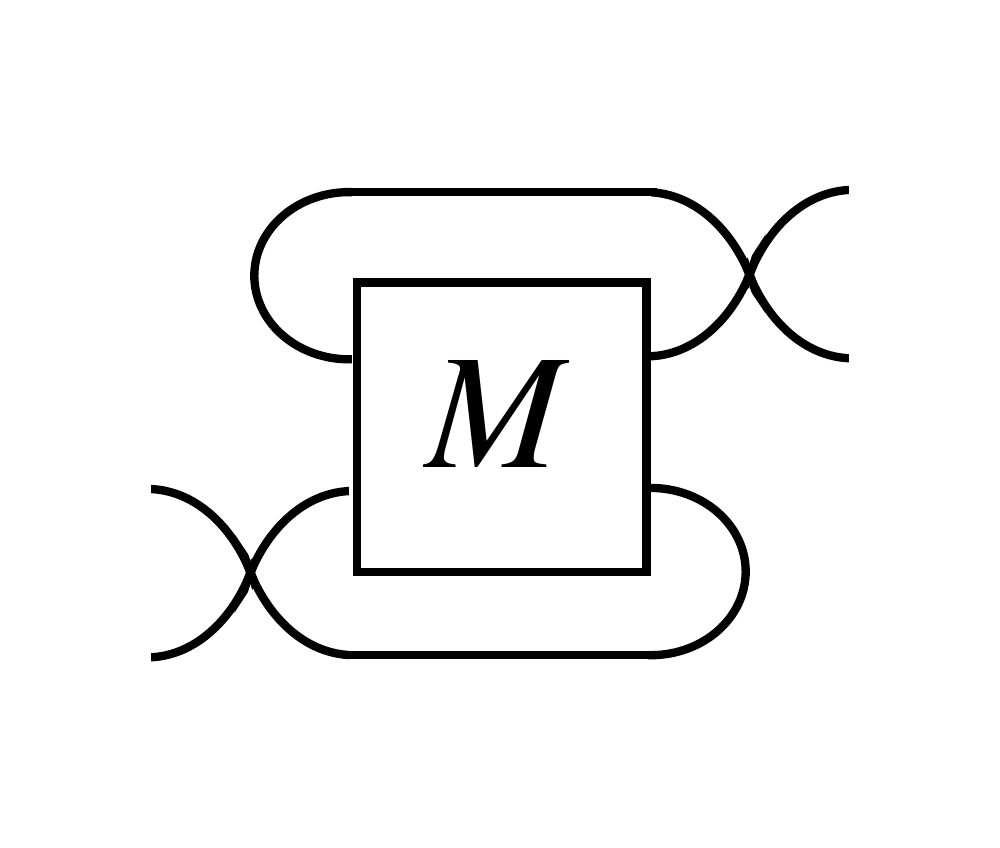}	\\
\footnotesize{(a) Partial Trace} \quad&\quad
\footnotesize{(b) Partial Trace} \quad&\quad
\footnotesize{(c) Transpose} \quad&\quad
\footnotesize{(d) Bipartite-Swap} \quad&\quad
\footnotesize{(e) Row-Reshuffle} \quad&\quad
\footnotesize{(f) Col-Reshuffle} 
\\
\footnotesize{$\Tr_{\2 X}[M]$} \quad&\quad
\footnotesize{$\Tr_{\2 Y}[M]$} \quad&\quad
\footnotesize{$M^T$} \quad&\quad
\footnotesize{$M^S$} \quad&\quad
\footnotesize{$M^{R_r}$} \quad&\quad
\footnotesize{$M^{R_c}$} 
 \end{tabular}
 \label{fig:bipartite}
\end{center}
 In terms of the tensor components of $M$ these operations are respectively given by:
\begin{center}
\begin{tabular}{lll}
Partial trace over $\2X$
\quad&\quad	$\Tr_{\2 X}: {\mathcal L}(\2X\otimes\2Y)\rightarrow {\mathcal L}(\2Y),$ 
\quad&\quad	$M_{m\mu,n\nu} \mapsto \sum_{m} M_{m\mu,m\nu}$
\\
Partial trace over $\2Y$
\quad&\quad	$\Tr_{\2 Y}: {\mathcal L}(\2X\otimes\2Y)\rightarrow {\mathcal L}(\2X)$
\quad&\quad	$M_{m\mu,n\nu} \mapsto \sum_{\mu} M_{m\mu,n\mu}$
\\
Tranpose
\quad&\quad	$T: {\mathcal L}(\2X\otimes\2Y)\rightarrow {\mathcal L}(\2X\otimes\2Y),$ 
\quad&\quad	$M_{m\mu,n\nu} \mapsto M_{n\nu,m\mu}$
\\
Bipartite-\SWAP{}
\quad&\quad	$S: {\mathcal L}(\2X\otimes\2Y)\rightarrow {\mathcal L}(\2Y\otimes\2X),$ 
\quad&\quad	$M_{m\mu,n\nu} \mapsto M_{\mu m, \nu n}$
\\
Row-reshuffling
\quad&\quad	$R_r: {\mathcal L}(\2X\otimes\2Y)\rightarrow {\mathcal L}(\2Y\otimes\2Y,\2X\otimes\2X),$ 
\quad&\quad	$M_{m\mu,n\nu} \mapsto M_{m, n , \mu, \nu} $
\\
Col-reshuffling
\quad&\quad	$R_c: {\mathcal L}(\2X\otimes\2Y)\rightarrow {\mathcal L}(\2X\otimes\2X,\2Y\otimes\2Y),$ 
\quad&\quad	$M_{m\mu,n\nu} \mapsto M_{\nu \mu, n m}$
\end{tabular}
\end{center}
Note that we will generally use reshuffling $R$ to refer to col-reshuffling $R_c$. Similarly we can represent the partial transpose operation by only transposing the wires for $\2X$ (or $\2Y$), and the partial-\SWAP{} operations by only swapping the left (or right) wires of $M$.\\
\end{fullpage}

\subsection{Vectorization of Matrices}
\label{sec:vec}
We now recall the concept of \emph{vectorization} which is a reshaping operation, transforming a $(m\times n)$-matrix into a $(1\times mn)$-vector~\cite{Horn1985}. This is necessary for the description of open quantum systems in the superoperator formalism, which we will consider in \S~\ref{sec:sop}.
\begin{definition}
Vectorization can be done with using one of two standard conventions: \emph{column-stacking} (col-vec) or \emph{row-stacking} (row-vec). Consider two complex Hilbert spaces $\2 X\cong \C^m, \2 Y\cong \C^n$, and linear operators  $A\in {\mathcal L}(\2X,\2Y)$ from $\2 X$ to $\2 Y$. Column and row vectorization are the mappings
\begin{eqnarray}
\mbox{col-vec: } {\mathcal L}(\2X,\2Y)&\rightarrow& \XY: \,\, A\mapsto\dket{A}_c\\
\mbox{row-vec: } {\mathcal L}(\2X,\2Y)&\rightarrow& \YX: \,\,A\mapsto\dket{A}_r
\end{eqnarray} 
respectively, where the operation col(row)-vec when applied to a matrix, outputs a vector with the columns (rows) of the matrix stacked on top of each other. 
\end{definition}

\begin{myfullpage}

\begin{illexample}

Graphical representations for the row-vec and col-vec operations are found from bending a wire to the left either clockwise or counterclockwise respectively:
\begin{center}
\begin{tabular}{c|c}
\includegraphics[height=3.5em]{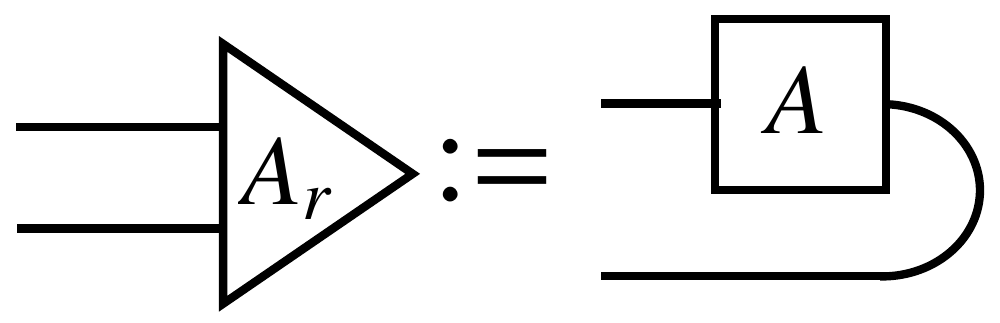} 
\quad\quad&\quad\quad
\includegraphics[height=3.5em]{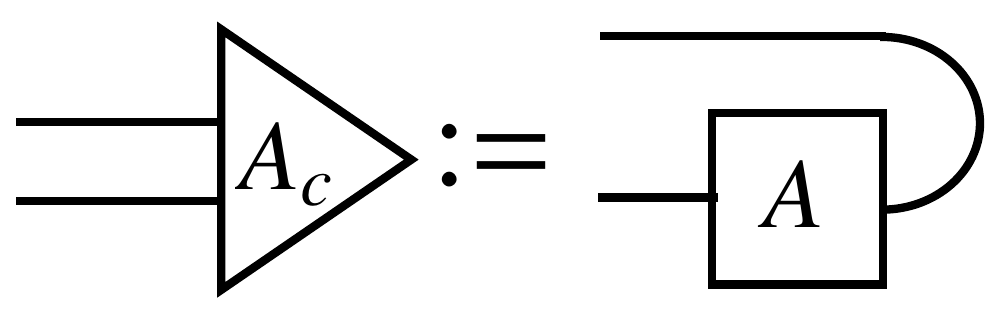} 
\\
\footnotesize{(a) Row-vec} &
\footnotesize{(b) Col-vec}
\end{tabular}
 \label{fig:vectorization}
\end{center}
Vectorized matrices in the col-vec and row-vec conventions are naturally equivalent under wire exchange (the \SWAP{} operation)
\begin{center}
\includegraphics[width=.55\textwidth]{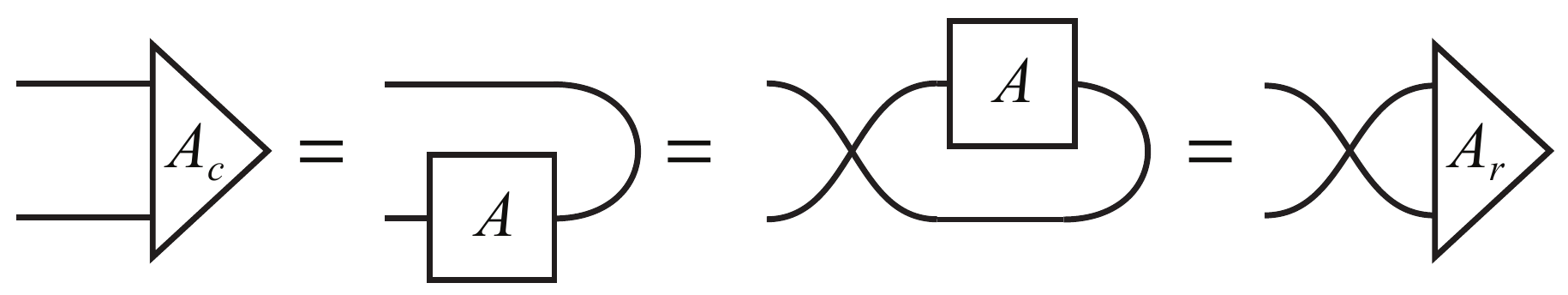} \label{fig:col-to-row}    
\end{center}

In particular we can see that the unnormalized Bell-state $\ket{\Phi^+}\in\2X\otimes\2X$ is in fact the vectorized identity operator $\I\in {\mathcal L}(\2X)$
\begin{equation}
\ket{\Phi^+} = \dket{\I}_r = \dket{\I}_c.
\end{equation}
\end{illexample}
\end{myfullpage}
\begin{definition}
We may also define a vectorization operation with respect to an arbitrary operator basis for ${\mathcal L}(\2X,\2Y)$. Let $\2 X\cong\C^{d_x}, \2 Y\cong \C^{d_y}$, and $\2Z \cong \C^D$ where $D=d_x d_y$. Vectorization with respect to an orthonormal operator basis $\{\sigma_\alpha:\alpha=0,...,D-1\}$ for ${\mathcal L}(\2X,\2Y)$ is given by
\begin{equation}
	\sigma\mbox{-vec: }{\mathcal L}(\2X,\2Y)\rightarrow \2Z :\,\,A\mapsto \dket{A}_\sigma.
\end{equation}
This operation extracts the coefficients of the basis elements returning the vector  
\begin{equation}
	\label{eqn:trace-rep}
	\dket{A}_\sigma := \sum_{\alpha=0}^{D-1} \Tr[\sigma^\dagger_\alpha A] \ket{\alpha}
\end{equation}
where $\{\ket{\alpha}: \alpha=0,...,D-1\}$ is the standard basis for $\2Z\cong\C^D$. This is depicted in our graphical calculus as 
\begin{center}
\includegraphics[width=.65\textwidth]{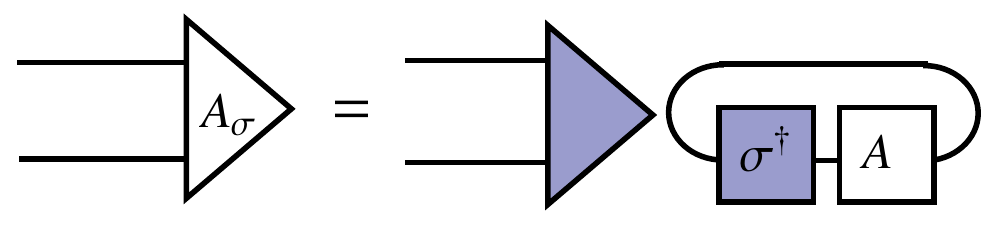} \label{fig:basis-vectorization}  
\end{center}
\begin{remark}
  To distinguish between these different conventions we use the notation $\dket{A}_x$ to denote the vectorization of a matrix $A$, were the subscript $x=c,r,\sigma$ labels which convention we use; either $c$ for col-vec, $r$ for row-vec, or $\sigma$ for an arbitrary operator basis.
\end{remark}
 
For the case $\2 X\cong\2 Y\cong \C^d$, we can define row-vec and col-vec in terms terms of~\eqref{eqn:trace-rep} by taking our basis to be the elementary matrix basis $\{ E_{i,j}=\ketbra{i}{j} : i,j=0,...,d^2-1\}$, and making the assignment $\alpha=di+j$ and $\alpha=i+dj$ respectively. Hence we have
\begin{eqnarray}
\dket{A}_r &:=& \sum_{i,j=0}^{d-1} A_{ij}\, \ket{i}\otimes\ket{j}	\label{eqn:rowvec}\\ 
\dket{A}_c &:=& \sum_{i,j=0}^{d-1} A_{ij}\, \ket{j}\otimes\ket{i}	\label{eqn:colvec}.
\end{eqnarray} 
\end{definition}

\begin{example}
Using the definition of the unnormalized Bell-state $\ket{\Phi^+}$ and summing over $i$ and $j$ one can rewrite~\eqref{eqn:rowvec} and \eqref{eqn:colvec} as
\begin{eqnarray}
\dket{A}_r &=& (A \otimes \I) \ket{\Phi^+}\label{eqn:row-vec}\\
\dket{A}_c &=& (\I \otimes A) \ket{\Phi^+}\label{eqn:col-vec}
\end{eqnarray}
which are the equational versions of our graphical definition of row and col vectorization shown in \eqref{fig:vectorization}.
 \end{example}
 
When working in the superoperator formalism for open quantum systems, it is sometimes convenient to transform between vectorization conventions in different bases. Given two orthonormal operator bases $\{\sigma_\alpha\}$ and $\{\omega_\alpha\}$ for ${\mathcal L}(\2X,\2Y)$, the basis transformation operator 
\begin{equation}
T_{\sigma\rightarrow\omega}: \2Z\rightarrow\2Z: \dket{A}_\sigma\mapsto \dket{A}_\omega
\end{equation} 
transforms vectorized operators in the $\sigma$-vec convention to the $\omega$-vec convention. Graphically this is given by

\begin{center}
\includegraphics[width=0.5\textwidth]{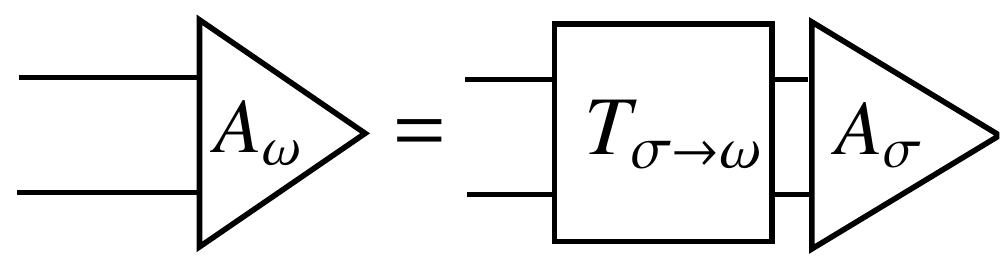} \label{fig:vec-basis-change}
 \end{center}
The basis transformation operator $T_{\sigma\rightarrow\omega}$ is given by the equivalent expressions
\begin{equation}
T_{\sigma\rightarrow\omega} 
	= \sum_\alpha \ket{\alpha}\dbra{\omega_\alpha}_\sigma 	
	= \sum_\alpha \dket{\sigma_\alpha}_\omega\bra{\alpha} 	\label{eqn:vec-change},
\end{equation}
and the corresponding graphical representations are: 
\begin{center}
\includegraphics[width=0.8\textwidth]{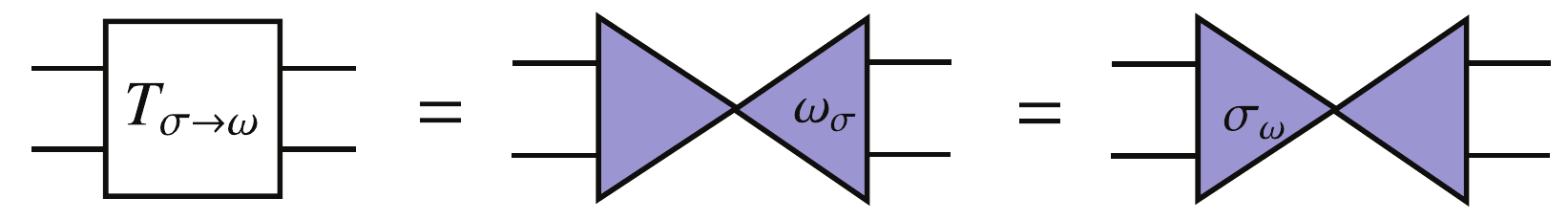}\label{fig:basis-change-op}
\end{center}

As in \cite{2011arXiv1111.6950W}, we tend to use the col-vec convention by default, and drop the vectorization label subscripts unless referring to a general $\sigma$-basis. The main transformation we will be interested in is then from col-vec to another arbitrary orthononormal operator basis $\{\sigma_\alpha\}$. Tensor networks for the change of basis $T_{c\rightarrow\sigma}$ and its inverse $T_{\sigma\rightarrow c}$ are
\begin{center}
 \begin{tabular}{c|c}
\includegraphics[width=0.45\textwidth]{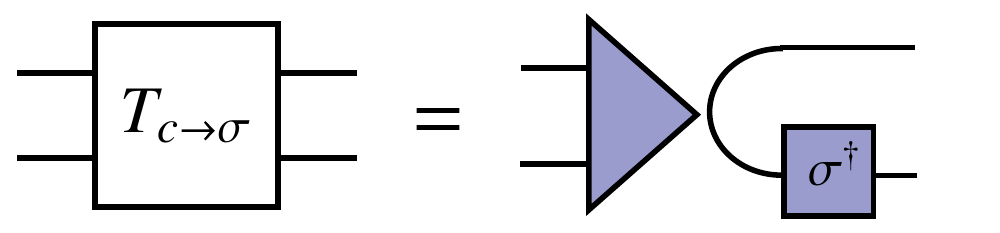}	
\quad\quad&\quad\quad
\includegraphics[width=0.45\textwidth]{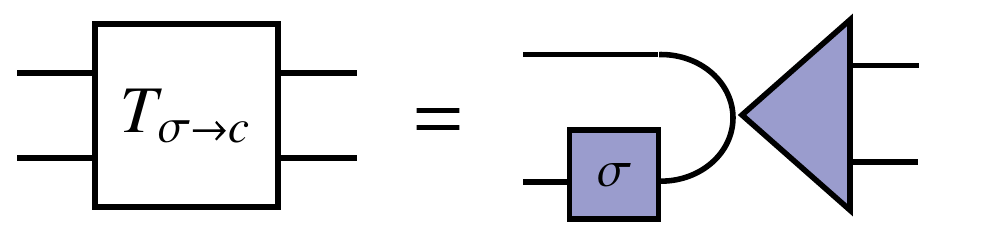}	
\\
\footnotesize{(a) Col-vec to $\sigma$-basis} &
\footnotesize{(b) Row-vec to $\sigma$-basis} 
\end{tabular}
 \label{fig:vec-basis-col}
\end{center}
In the case where one wants to convert to row-vec convention, as previously shown the transformation is given by
\begin{equation}
T_{c\rightarrow r}=T_{r\rightarrow c}= \mbox{SWAP}.
\end{equation}

One final important result that often arises when dealing with vectorized matrices is Roth's Lemma for the vectorization of the matrix product $ABC$~\cite{Horn1985}.  Given matrices $A,B,C\in {\mathcal L}(\2X)$ we have
\begin{eqnarray}
\dket{ABC} &=& (C^T\otimes A) \dket{B}
\label{eqn:roth}
\end{eqnarray}
The graphical tensor network proof of this lemma is as follows:\mn{The theory of tensor networks leverages one to study the
mathematical structure formed by the composition of processes and states on the same footing.} 
\begin{center}
\includegraphics[width=0.45\textwidth]{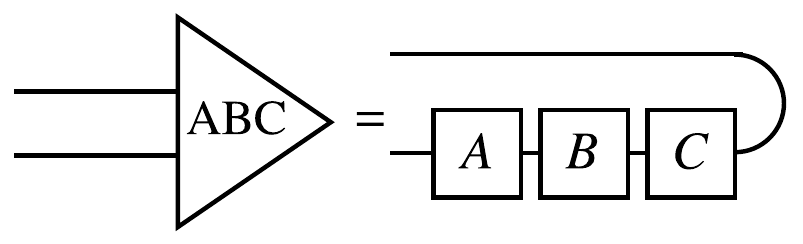}
\includegraphics[width=0.45\textwidth]{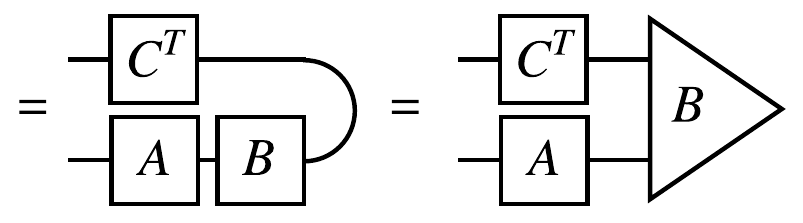}
\label{fig:vecabc}
\end{center}

\section{Worked Examples}
\label{app:tensor1}

We will now prove the consistency of several of the basic tensor networks introduced in Part~\ref{sec:tensor}, and in doing so illustrate how one may use our graphical calculus for diagrammatic reasoning.  

The color summation convention we have presented represents diagrammatic summation over a tensor index by coloring the appropriate tensors in the diagram. In this convention summation over a Kronecker delta, $\sum_{i,j}\delta_{ij}=\sum_{i,j} \braket{i}{j}$, is as shown:
\be
\includegraphics[width=0.3\textwidth]{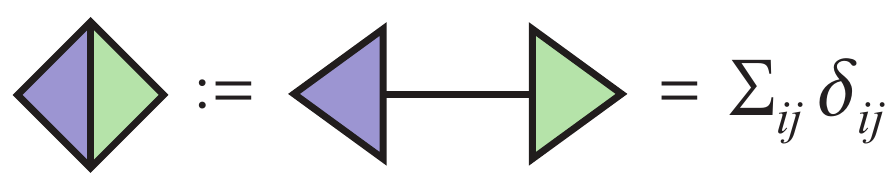}
\label{fig:delta-sum}
\ee
This expression is used in several of the following proofs.

We begin with the proof of the trace of an operator $A$:
\be
\includegraphics[width=0.3\textwidth]{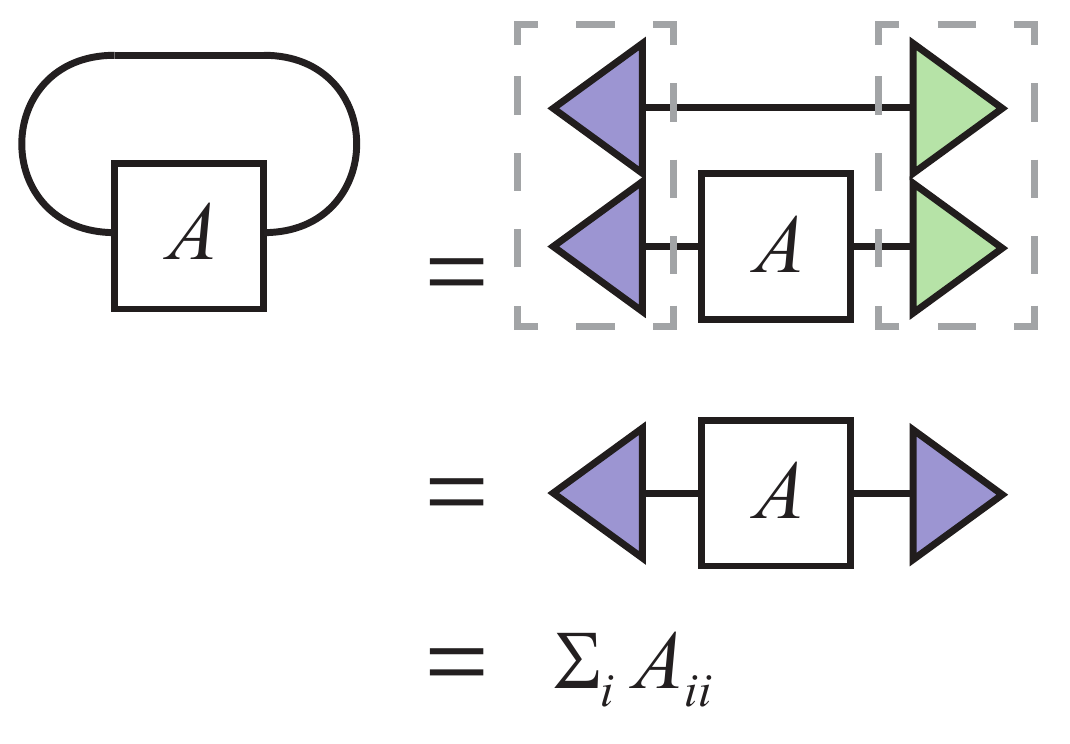}
\label{fig:trace-proof}
\ee
 
For illustrative purposes, to prove this algebraically we note that the tensor networks for trace correspond to the algebraic expressions $\bra{\Phi^+}A\otimes\I\ket{\Phi^+}$ and $\bra{\Phi^+}\I\otimes A\ket{\Phi^+}$, and that
\begin{eqnarray}
\bra{\Phi^+}\I\otimes A\ket{\Phi^+}
&=& \sum_{i,j}\braket{i}{j}\bra{i}A\ket{j}
= \sum_{i,j} \delta_{ij} A_{ij}\nonumber\\
&=& \sum_i A_{ii}\\
&=& \Tr[A].\nonumber
\end{eqnarray}
Similarly we get $\bra{\Phi^+}A\otimes\I\ket{\Phi^+}=\Tr[A]$.

To prove the snake equation we must first make the following equivalence for tensor products of the elements $\ket{i}$ and $\bra{j}$:
\begin{equation}
\bra{j}\otimes\ket{i}\equiv\ket{i}\otimes\bra{j}\equiv \ketbra{i}{j}
\label{eqn:tensor-product-equiv}
\end{equation}
This is illustrated diagrammatically as
\be
\includegraphics[width=0.4\textwidth]{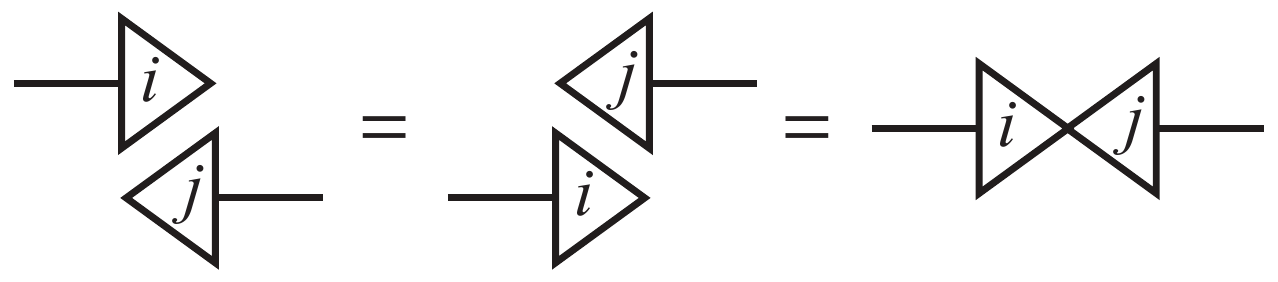}
\label{fig:tensor-product-equiv}
 \ee

With this equivalence made, the proof of the snake-equation for the ``S''  bend is given by
\be
\includegraphics[width=0.3\textwidth]{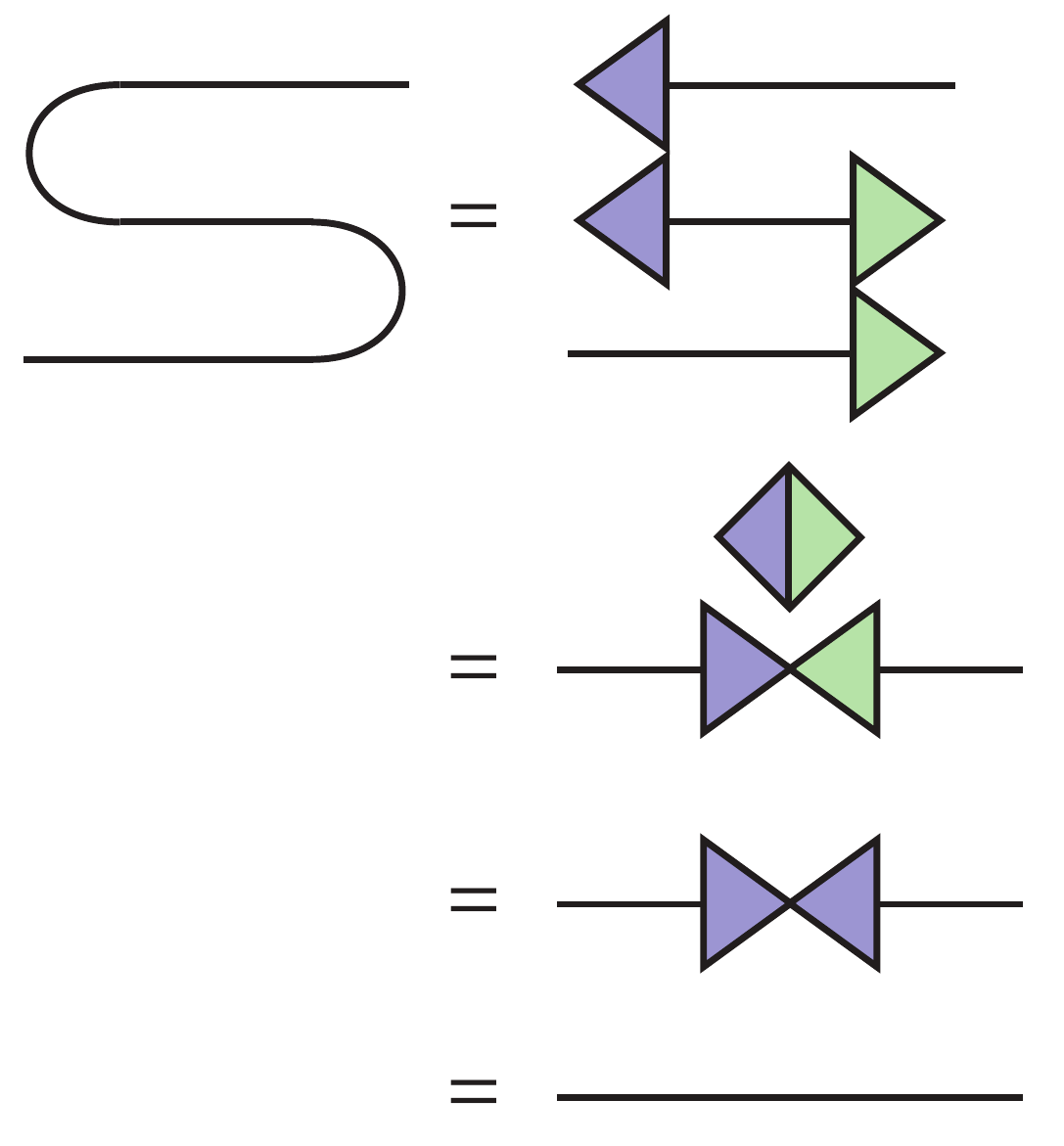}
\label{fig:snake-proof}
\ee
The proof for the reflected ``S'' snake-equation follows naturally from the equivalence defined in \eqref{fig:tensor-product-equiv}.

The proof of our tensor network for the transposition of a linear operator $A$ is as follows:
\be
\includegraphics[width=0.35\textwidth]{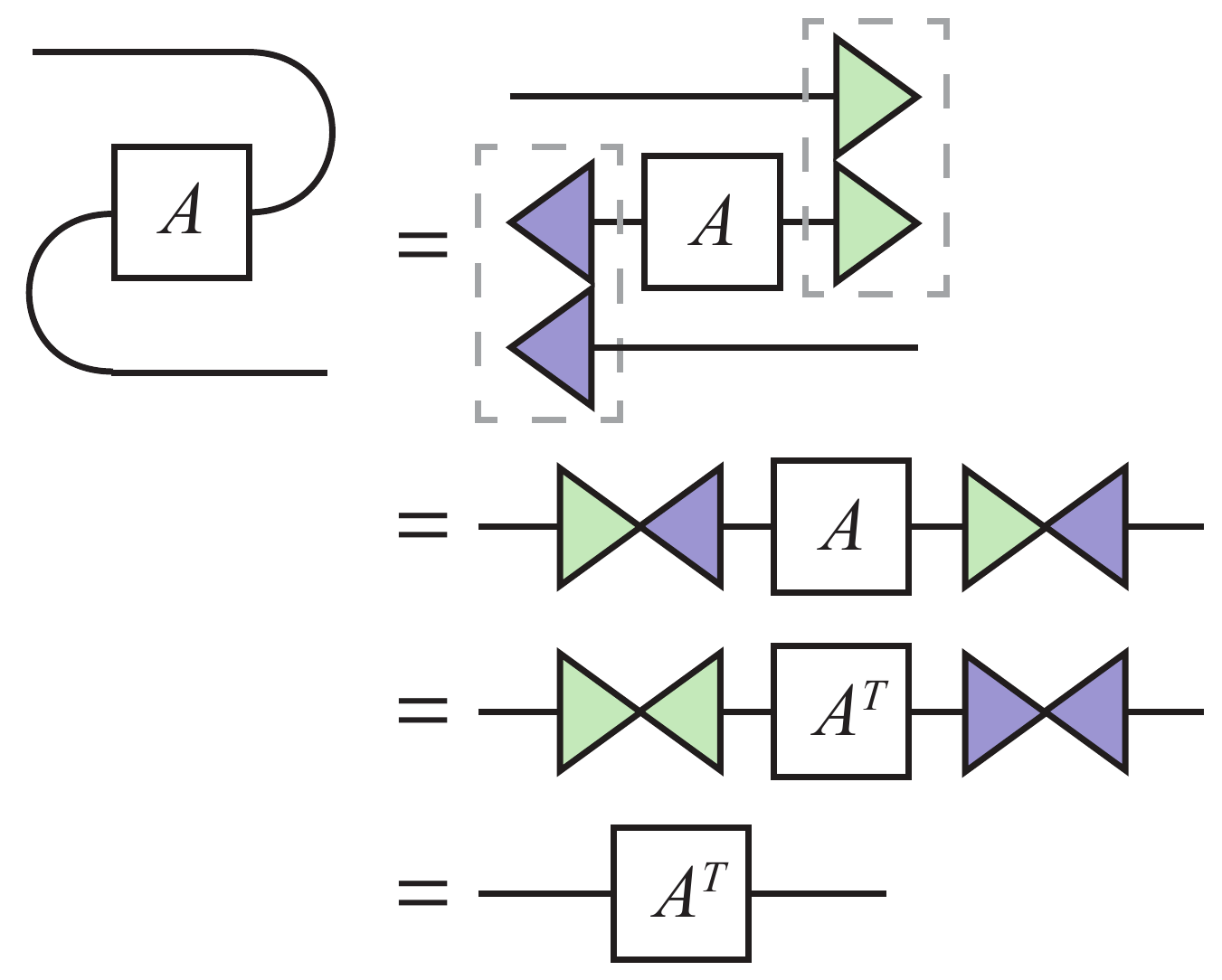}
\label{fig:tensor-transpose-proof-1}
\ee
To prove this algebraically we note that the corresponding algebraic equation for the transposition tensor network is
\begin{eqnarray}
\includegraphics[width=0.09\textwidth]{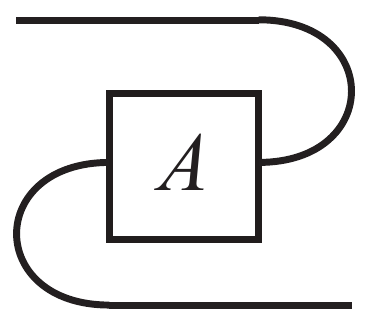}&=&\I\otimes\bra{\Phi^+}(\I\otimes A\otimes\I)\ket{\Phi^+}\otimes\I\\
&=& \sum_{i,j} \bra{j}A\ket{i}\ \ket{i}\otimes\bra{j}\\
&=& \sum_{i,j} \bra{j}A\ket{i} \ketbra{i}{j}\\
&=& \sum_{i,j} \bra{i}A^T\ket{j} \ketbra{i}{j}\\
&=& \sum_{i,j} \ketbra{i}{i}A^T\ketbra{j}{j}\\
&=& A^T.
\end{eqnarray}
The proof for transposition by counter-clockwise wire bending follows from the equivalence relation in \eqref{eqn:tensor-product-equiv} and \eqref{fig:tensor-product-equiv}. 

With the tensor network for transposition of an operator proven, the proof of transposition by contracting through a Bell-state $\ket{\Phi^+}$ is then an application of the snake equation as shown:
\be
\includegraphics[width=0.3\textwidth]{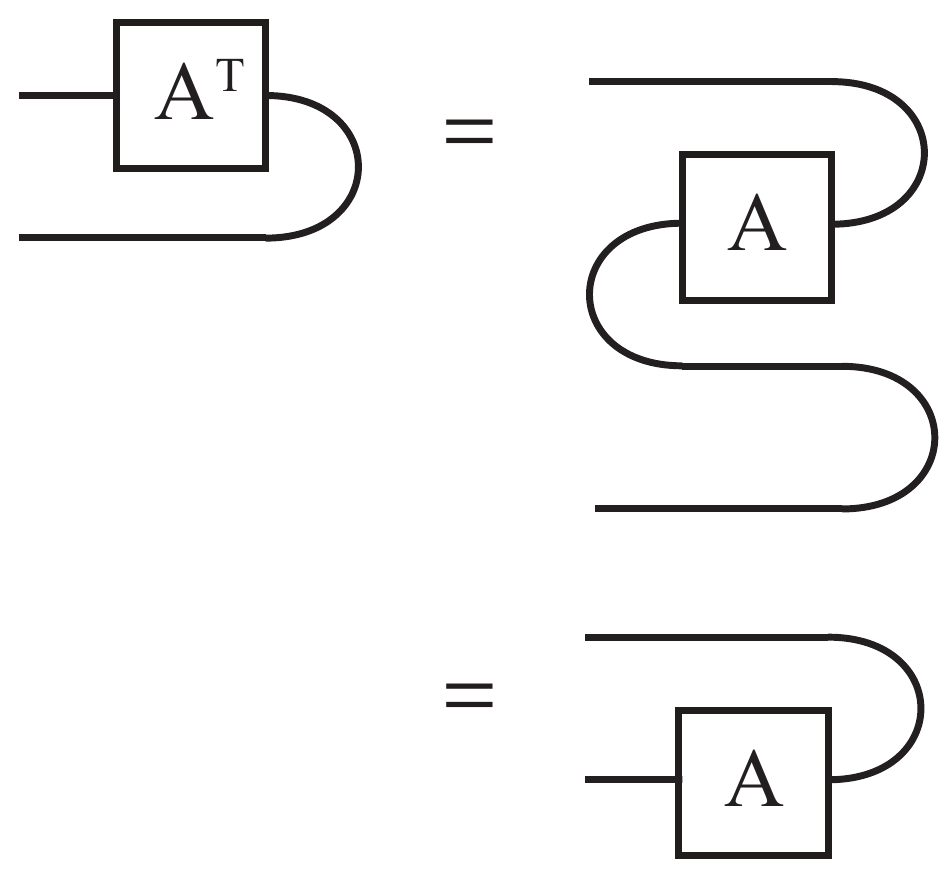}
\label{fig:bell-tr-proof}
\ee  

\section{Problems} 

\begin{myexercise}(Bell basis). 
Using the right hand side equations, 
\begin{eqnarray*}
\sigma_0 = \sum \ketbra{a}{a},\qquad \sigma_1 = \sum \ketbra{1-a}{a},
\\
\sigma_2=i\sum (-1)^{a+1} \ketbra{a}{1-a},\qquad \sigma_3 =\sum (-1)^a\ketbra{a}{a}.
\end{eqnarray*}
for $a\in \{0,1\}$. 
Show that 
\begin{equation}
\frac{1}{\sqrt{2}}\left(\sum_{l\in\{0,1\}} \bra{l} \otimes \bra{l} \right) \sigma_i \otimes \I = \frac{1}{\sqrt{2}}\sum_{l\in\{0,1\}} (\bra{l}\sigma_i) \otimes \bra{l},
\end{equation}
defines the Bell effects (which defines an orthonormal basis in $\mathbb{C}^2\otimes \mathbb{C}^2$). Here $\sigma_i$ indexes the Pauli matrices.\mn{Effects are dual to states.  Also called, costates.}
\end{myexercise}

\begin{remark}
Complete the following problems using standard techniques and then compare this to graphical tensor network approach. 
\end{remark}
\begin{myexercise}
Suppose $\psi \in \mathbb{C}_2^{\otimes 2}$ and $E>0$ with $E\in \mathcal{L}(\mathbb{C}_2^{\otimes 2})$. Show that 
\begin{equation}
    \bra{\psi}E\otimes \I \ket{\psi}, 
\end{equation}
takes the same values when $\psi$ is any of the four Bell states.
\end{myexercise}

\begin{myexercise}[Transpose or Ricochet Trick]
Show that 
\begin{equation}
(M^A\otimes \I^B)\ket{\Phi}^{AB} = (\I^A \otimes (M^\top)^B)\ket{\Phi}^{AB},
\end{equation}
for maximally entangled $\ket{\Phi}^{AB}$ and any matrix $M$. 
\end{myexercise}


\begin{myexercise}
Show that the purity $P(\rho^A)$ is equal to 
\begin{equation}
    P(\rho^A) = \Tr\{ (\rho^A\otimes \rho^{A'}) F^{AA'}\},
\end{equation}
where Hilbert space $A$ is isomorphic to $A'$ and $F^{AA'}$ is the swap operator defined on a basis (indexed by $x,y$) as 
\begin{equation}
    F^{AA'}\ket{x}^A\ket{y}^{A'}=\ket{y}^A\ket{x}^{A'}.
\end{equation}
Hint. Establish that 
\begin{equation}
    \Tr\{f(\rho^A)\}= \Tr\{(f(\rho^A)\otimes \I^{A'} )F^{AA'}\},
\end{equation}
for function $f$ of the operators on $A$. 
\end{myexercise}

\section{Further Reading}

Readers should be aware of the high number of quality tutorials covering various aspects of tensor networks available for free download from the arXiv.org preprint server.  Many but not all of these are also published in journals. For those interested in applications to condensed matter, we particularly recommend. \\

\noindent {\bf Hand-waving and Interpretive Dance: An Introductory Course on Tensor Networks} \cite{2016arXiv160303039B}\\
Jacob Bridgeman and Christopher Chubb \\
J.~Phys.~A: Math.~Theor.~50 223001 (2017)
\href{https://arxiv.org/abs/1603.03039}{arXiv:1603.03039}\\

For those interested in the mathematics of string diagrams (category theory), the most accessible introduction covering the foundations of the building blocks presented in this chapter can be found in. \\

\noindent {\bf A Prehistory of n-Categorical Physics} \cite{Baez}\\
John Baez and Aaron Lauda\\
Deep Beauty 13–128, Cambridge University Press (2011) \href{https://arxiv.org/abs/0908.2469}{arXiv:0908.2469} \\

In addition to this work on the categories of tensor networks \cite{Baez}, readers might also find the survey \cite{catQM} of interest.  

Category theory is worth considering as a mathematical framework to describe wire diagrams (including quantum circuits).  Category theory itself is sophisticated enough to present a theorem that essentially proclaims that the diagrams contain {all} relevant information.  Hence one might say that {\it category theory formally rules out the need for category theory}---provided one knows how to manipulate the tensor diagrams. This is formally stated in the following well-known theorem. 

\begin{theorem}[Coherence for categories \cite{Selinger_2010}] A well-formed equation between two morphism terms in the language of categories follows from the axioms of categories if and only if it holds in the graphical language up to isomorphism of diagrams.
\end{theorem}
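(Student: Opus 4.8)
\medskip\noindent\emph{Sketch of proof.} The plan is to read the statement as a combined soundness--completeness theorem for the string-diagram semantics of free categories and to prove the two implications separately. Fix a signature --- a set of object symbols together with a set of generating-morphism symbols, each carrying a specified source and target object --- and let $\mathbf{F}$ be the free category it generates, whose morphisms are the well-formed morphism \emph{terms} modulo the provable equality generated by the category axioms (associativity of $\circ$ and the left/right unit laws for the identities $\mathrm{id}_A$). Let $\mathbf{D}$ be the ``category of diagrams'': objects are object symbols, morphisms are string diagrams up to the declared notion of isomorphism of diagrams, composition is stacking diagrams end to end, and identities are bare labelled wires. There is an evident interpretation $\mathcal{I}$ sending a generating morphism to its one-box diagram, $\mathrm{id}_A$ to the bare $A$-wire, and a composite $g\circ f$ to the diagram of $\mathcal{I}(g)$ stacked after that of $\mathcal{I}(f)$.

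First I would prove \emph{soundness} --- the ``follows from the axioms $\Rightarrow$ holds graphically'' direction --- which amounts to checking that $\mathcal{I}$, defined by recursion on terms, respects the axioms and hence descends to a functor $\mathbf{F}\to\mathbf{D}$. The only point to verify is that the category axioms already hold \emph{on the nose} in $\mathbf{D}$: stacking three diagrams is manifestly associative up to diagram isomorphism, and pre- or post-composing a diagram with a bare wire returns an isomorphic diagram. Since the provable equalities of $\mathbf{F}$ are exactly the congruence (under $\circ$) generated by the axioms, functoriality of $\mathcal{I}$ forces any such equation to become an equality (isomorphism) of diagrams.

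The substance is the converse, \emph{completeness}, i.e.\ faithfulness of $\mathcal{I}$: if $\mathcal{I}(s)\cong\mathcal{I}(t)$ then $s=t$ in $\mathbf{F}$. Here I would introduce a \emph{normal form} for morphism terms. Using associativity to erase all parenthesization and the unit laws to delete every occurrence of an identity --- unless the whole term collapses to a single $\mathrm{id}_A$ --- any term is provably equal either to some $\mathrm{id}_A$ or to a flat composite $g_k\circ\cdots\circ g_1$ of generating morphisms with matching sources and targets. A routine induction on term structure shows this normal form exists, and inspection shows it is recoverable from the term's diagram: the diagram of a normalised term is precisely the left-to-right list of its generator-boxes together with the interleaved object labels on the connecting wires, and conversely that data reconstructs the flat composite uniquely. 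Hence $\mathcal{I}(s)\cong\mathcal{I}(t)$ implies $s$ and $t$ have the same normal form, so each is provably equal to it, so $s=t$.

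The main obstacle is not the algebra but pinning down precisely what ``isomorphism of diagrams'' means and checking that the diagram genuinely determines the normal form --- that no information is lost or gained in passing to the graphical language. For plain categories the diagrams are one-dimensional (progressive) and this reduces to finite combinatorial bookkeeping, so the topology is essentially trivial; the real difficulty in this circle of ideas surfaces one level up, when the same scheme is run for monoidal, braided and symmetric monoidal categories, where ``isomorphism of diagrams'' must be replaced by a suitable planar or spatial isotopy and the coherence statement becomes a genuine theorem about the topology of string diagrams, as carried out in \cite{Selinger_2010}. I would therefore present the plain-category argument in full as above and isolate the isotopy bookkeeping as the only delicate point.
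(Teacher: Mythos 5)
The paper does not prove this theorem at all: it is quoted verbatim from Selinger's survey \cite{Selinger_2010} as a known coherence result, so there is no in-text argument to compare against. Your sketch is the standard proof of that result and is essentially correct: soundness is the observation that stacking of progressive one-dimensional diagrams satisfies the category axioms on the nose (up to diagram isomorphism), and completeness is faithfulness of the interpretation, proved by rewriting every term to the flat normal form $g_k\circ\cdots\circ g_1$ (or $\mathrm{id}_A$) via associativity and the unit laws and observing that the diagram is exactly the ordered list of generator boxes with labelled connecting wires, from which the normal form is uniquely recoverable. The one point you flag as delicate is indeed the only one that needs care: ``isomorphism of diagrams'' must be taken to preserve the source-to-target orientation (equivalently, the linear order of the boxes and the labels of wires and boxes), as it is in Selinger's definition; otherwise a diagram could not distinguish $g\circ f$ from $f\circ g$ when types happen to match, and faithfulness would fail. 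With that convention made explicit, your induction giving existence of the normal form and the bijection between normal forms and isomorphism classes of progressive linear diagrams closes the argument, and your remark that the genuinely topological content only appears one level up (monoidal, braided, symmetric) is the correct way to situate this statement relative to the rest of \cite{Selinger_2010}.
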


Hence, readers should be aware that the formal mathematics of tensor networks finds its roots in {\it tensor category theory} (now often renamed {\it dagger categories}). And that the diagrams themselves are proven to contain the relevant information \cite{Selinger_2010}. The topic of modeling the theory of quantum mechanics using categories was explored intensively in the area the authors from \cite{catQM} named, {\it categorical quantum mechanics}---see the book \cite{Coecke2017} for a survey. 

There are two fundamental types of tensor networks in wide use today.  The most common is similar to quantum circuits (which is the topic of this book).  The second is the braided class of tensor networks, used in topological quantum computing.  In terms of active research, recently a class of tensor networks was discovered by Jaffe, Liu and Wozniakowski---the JLW-model---notably, the wires carry charge excitations \cite{Liu2017, Jaffe2016,Jaffe2018}.  The rules in which network components can be moved, merged and manipulated in a graphical form of reasoning take an elegant form with known applications to quantum protocols~\cite{Jaffe2017}.  For instance the relative locations on wires carries precise meaning and changing the ordering modifies a connected network specifically by a complex number.  The type of isotopy discovered in the topological JLW-model provides an alternative means to reason about quantum information, computation and protocols.  Some open problems related to the JLW-model are given in \cite{JaffeLiu2017}.

\newpage 
\part{Matrix Product States} 

One of the most common uses of tensor networks in quantum information is representing states which belong to small but physically relevant subset of a must larger Hilbert space.  This often includes low-entanglement states. The  backbone of this idea rests on low rank matrix approximations which we will consider in this chapter.  

We will see that if one partitions a network, by cutting it in two, the number of wires that were cut in this process provides an upper bound on the maximum amount of possible entanglement between spins. 
This can be made more precise by considering the (unitarily invariant) entanglement entropy of a bipartite split.  
\begin{equation}\label{eqn:entropy}
E = -\Tr\left(\rho \text{ln} \rho \right) = - \sum_i \lambda_i \text{ln} \lambda_i.
\end{equation}  
Here $\lambda_i$ are the singular values of the reduced density operator of either subsystem.  The quantity is maximized for all $\lambda_i$ equal to the inverse of the dimension of a reduced density matrix. The value of \eqref{eqn:entropy} provides a quantitative measure of correlations.  This will be elaborated as a central concept in what follows.   
\begin{marginfigure}
\centering
\includegraphics[width=6\xxxscale]{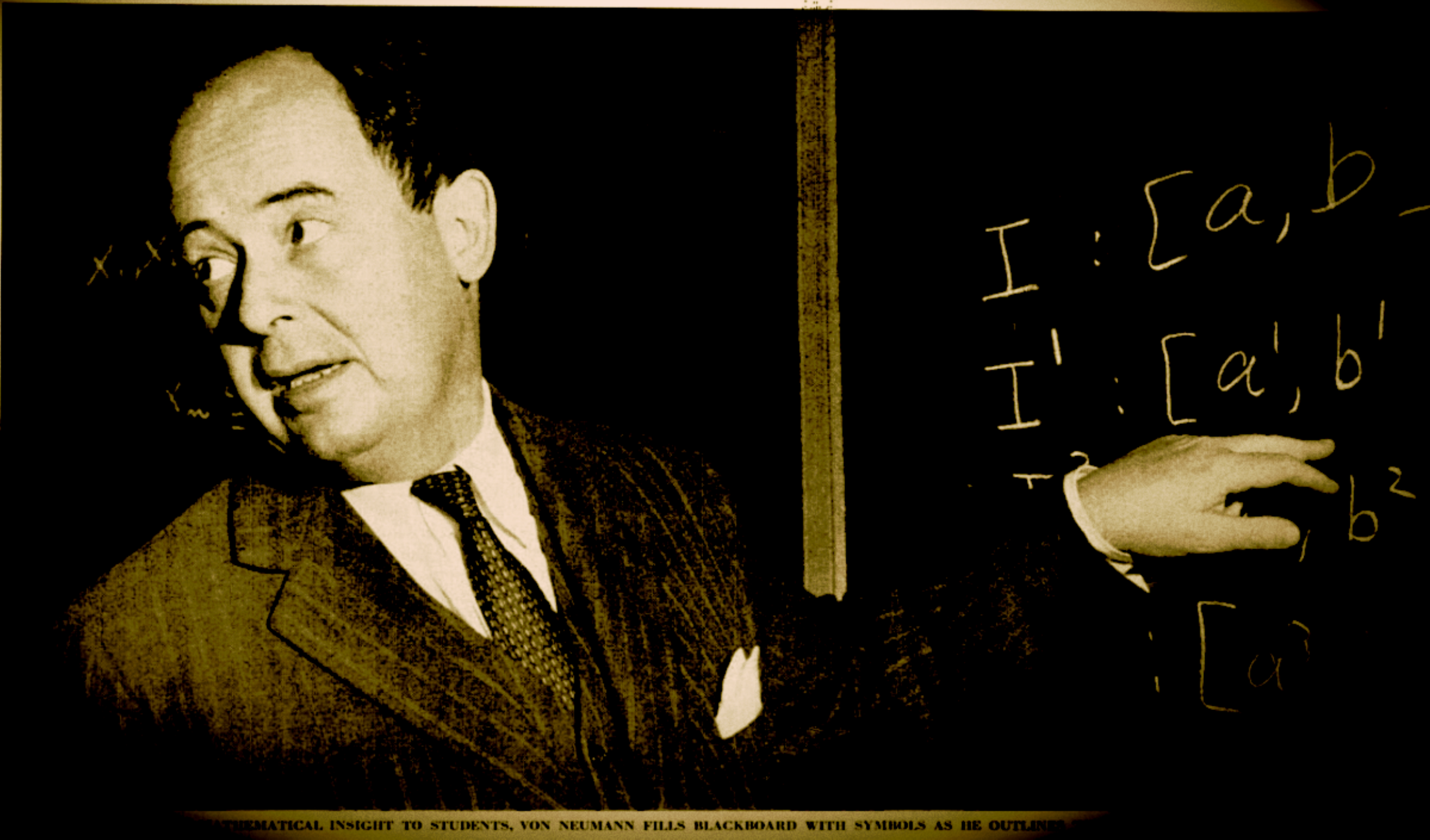}
\caption{``\textit{You should call it entropy, for two reasons. In the first place your uncertainty function has been used in statistical mechanics under that name, so it already has a name. In the second place, and more important, no one really knows what entropy really is, so in a debate you will always have the advantage.}''\\~\\
John von Neumann suggesting to Claude Shannon a name for his new uncertainty function, as quoted in Scientific American {\bf 225}(3) 180, (1971).}
\end{marginfigure}

\section{The Diagrammatic SVD}

In this section, we will introduce a diagrammatic form of the singular
value decomposition (SVD).  It is assumed that the reader has solved Exercise \ref{exercise:svd1} (see also Exercise \ref{exercise:rank1}). 

There are several utilities to our
approach.  The first stems from the fact that the known invariants we
have studied can be simplified by network contraction using the diagrammatic SVD.   
The method factors tensor into well defined building blocks
with simplistic interaction properties: black \COPY-tensors and white
unitary boxes.  We will also consider the iteration of this process, allowing one to arrive at matrix product states (MPS) in terms of our network building blocks. 

An aim of the present work is to consider how graphical depictions of tensors can leverage a better understanding of how certain properties evident in a network are reflected in properties related to a quantum state.  Intuitively one thinks of a connected network as representing a correlated or entangled state.  This chapter will push this idea further.


\COPY{}-tensors have been studied in the setting of the Penrose
tensor calculus, in work dating back at least to Lafont \cite{boolean03} and explored more recently using an alternative notation in the so called ZX-calculus \cite{CD, redgreen} --- see also \cite{BB11, CTNS}.
Here we apply \COPY{}-tensors in the diagrammatic SVD factorization of quantum states.  The key diagrammatic properties are illustrated as follows and explained in detail in \S~\ref{sec:btn}.  

\begin{definition}[Properties of the \COPY-tensor]\label{def:copy-prop}
The copy property is illustrated in (b).  Here a basis state, or copy point $\ket{x}$ is contracted with the \COPY{}-tensor, which then breaks into two copies of $\ket{x}$. These tensors are defined in $d$ dimensions.  If the
tensor was written in the standard basis, it would copy $\ket{0}$ as well as
$\ket{1}$ and for the case of qubits be written as $\ket{0}\bra{00}+\ket{1}\bra{11}$. We use the plus symbol $\ket{+}$ to represent an equal sum over all so called, copy points of a \COPY{}-tensor.  When contracting a \COPY-tensor with such a basis state, the effect is to prune an arm or leg, as shown in (c) and (d).  This is also called a unit.  \COPY-tensors can be composed.  (e) illustrates that appropriate composition of two \COPY-tensors is equal to the identity, as would be expected.
\begin{center}
 \includegraphics[width=15\xxxscale]{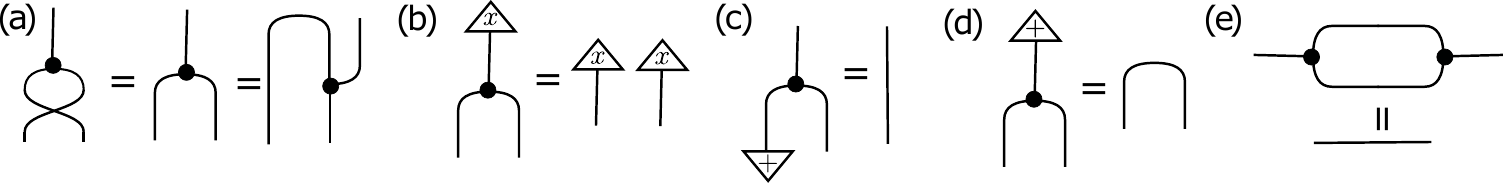}
\end{center}
\end{definition}

\begin{remark}[\COPY-tensors on as generalized delta functions]
 The valence-three \COPY-tensor in terms of components for the case of qubits can be expressed as 
 \be 
 \delta_{ijk} = (1-i)(1-j)(1-k)+ijk, 
 \ee 
 which can be thought of a generalized delta function, where $i,j,k=0,1$.  This can be extended to valence-n \COPY-tensors as 
 \be 
 \delta_{ijk...l} = (1-i)(1-j)(1-k)...(1-l)+ijk...l.
 \ee 
 In figure (e) 
 from Definition \ref{def:copy-prop}, the composition of two \COPY-tensors becomes 
 \be 
 \delta^i_{~jk}\delta^{jk}_{~~m} = \delta^i_{~m}.
 \ee 
We note that the valence-two delta tensor in components is given as 
 \be
 \delta^i_{~j} = 1-(i-j)^2 = (1-i)(1-j)-ij,
 \ee
 where $i,j=0,1$. 
\end{remark}

\begin{myexercise}\label{exercise:rank1}
(Rank-1 projectors). 
Show that a non-trivial operator $P_\star$ is a Schmidt rank-1 projector if and only if it can be written as $\ketbra{\psi}{\psi}$. \\

\end{myexercise}

\begin{theorem}[Diagrammatic SVD]
\label{theorem:diagrammatic-SVD}
Every operator $A: \hilb{H}_1 \to \hilb{H}_2$ can be
factored into a nonnegative order-one tensor $\Sigma$ (unique), an order-three \COPY-tensor
and unitary order-two tensors~$U$ and~$V$:
\begin{center}
 \includegraphics[width=0.7\textwidth]{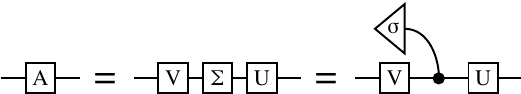}
\end{center}
\begin{proof}
Using the SVD, every valence-two tensor $A^i_l$ can be written as
\be 
A^i_l = U^i_j~\Sigma^j_k~V^k_l,
\ee
where $U$ and $V$ are unitary and $\Sigma$ is diagonal and nonnegative
in the standard bases of $\hilb{H}_1$ and $\hilb{H}_2$.
$\Sigma$ can be written as
\be 
\begin{aligned}
\Sigma =& \sum_{j=0}^{\min(d_1, d_2)-1} \sigma_j \ket{j}_2\bra{j}_1, \\
=& \underbrace{\sum_{i=0}^{\min(d_1, d_2)-1} \ket{i}_2 \bra{i}_1}_{Q_{12}}
\underbrace{\sum_{j} \ket{j}_1\bra{jj}_1}_{\COPY}
\underbrace{\sum_k \sigma_k \ket{k}_1}_{\sigma},
\qquad (\text{where}~~\sigma_k \ge 0).
\end{aligned}
\ee 
We have then expressed the tensor $\Sigma$ as a contraction of an
order-one tensor~$\sigma$ with the \COPY-tensor.
The tensor $Q_{12}$ is only necessary if $\hilb{H}_1$ and $\hilb{H}_2$
have different dimension.
\end{proof}
\end{theorem}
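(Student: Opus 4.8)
The plan is to take the classical singular value decomposition as a black box and simply \emph{read it diagrammatically}, showing that each algebraic factor corresponds to one of the claimed tensor building blocks. First I would invoke the SVD for the valence-two tensor $A^i_{~l}$ representing $A:\hilb{H}_1\to\hilb{H}_2$, writing $A^i_{~l}=U^i_{~j}\,\Sigma^j_{~k}\,V^k_{~l}$ with $U,V$ unitary and $\Sigma$ diagonal with nonnegative entries $\sigma_0\ge\sigma_1\ge\cdots\ge 0$ in the computational bases. The boxes $U$ and $V$ give the two white unitary order-two tensors directly, so the only real content is to exhibit the diagonal nonnegative $\Sigma$ as a contraction of the order-three \COPY-tensor with a nonnegative order-one tensor $\sigma$.

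The key step is the identity $\Sigma=\sum_j \sigma_j\ketbra{j}{j}=\bigl(\sum_i\ketbra{i}{i}\bigr)\bigl(\sum_j \ket{j}\bra{jj}\bigr)\bigl(\sum_k\sigma_k\ket{k}\bigr)$, where I read the middle factor as the valence-$(1,2)$ \COPY-tensor $\delta^j_{~kk}$ (equivalently $\COPY$ with one leg raised) and the rightmost factor as the order-one tensor $\sigma=\sum_k\sigma_k\ket{k}$, which is nonnegative by construction of the SVD. Contracting $\sigma$ into one input leg of $\COPY$ "copies" the index, producing $\sum_k \sigma_k\ket{kk}$, and then contracting one of those legs against the (possibly rectangular) partial-isometry $Q_{12}=\sum_i\ket{i}_2\bra{i}_1$ relabels it into $\hilb{H}_2$, yielding exactly $\sum_k\sigma_k\ket{k}_2\bra{k}_1=\Sigma$. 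The tensor $Q_{12}$ is the identity when $\dim\hilb{H}_1=\dim\hilb{H}_2$ and is only needed to reconcile the two different ambient dimensions, so in the equidimensional case it drops out of the picture. Uniqueness of $\Sigma$ (and hence of the nonnegative tuple $\sigma$, up to reordering by permuting copy points) follows from the standard fact that the singular values of $A$ are uniquely determined, being the nonnegative square roots of the eigenvalues of $A^\dagger A$.

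To finish I would assemble the pieces: substituting the \COPY-plus-$\sigma$ decomposition of $\Sigma$ back into $A=U\Sigma V$ gives precisely the diagram in the statement, with $V$ on the input side, the \COPY-tensor fanning out the $\sigma$ leg, and $U$ on the output side. I expect the only subtlety — not so much an obstacle as a bookkeeping point — to be the careful handling of the rectangular case: one must be explicit that $U$ and $V$ are unitary on $\hilb{H}_2$ and $\hilb{H}_1$ respectively while $\Sigma$ carries a leg in each space, so that the auxiliary relabelling map $Q_{12}$ absorbs the dimension mismatch and the \COPY-tensor is taken in $\dim = \min(d_1,d_2)$; everything else is a direct diagrammatic transcription of a textbook identity.
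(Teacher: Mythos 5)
Your proposal is correct and follows essentially the same route as the paper's own proof: invoke the classical SVD $A^i_{~l}=U^i_{~j}\,\Sigma^j_{~k}\,V^k_{~l}$ and then rewrite the diagonal nonnegative $\Sigma$ as the contraction $Q_{12}\cdot\COPY\cdot\sigma$, with $Q_{12}$ only needed when $\dim\hilb{H}_1\neq\dim\hilb{H}_2$. The extra remarks on uniqueness of the singular values and the bookkeeping for the rectangular case are fine additions but do not change the argument.
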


\begin{remark}[Eckart-Young-Mirsky Theorem]
The rank of a matrix~$A$ is the number of non-zero singular values it has.
To determine its optimal rank-$r$ approximation (with $r < \text{rank}(A)$), we can turn to a classic theorem by Eckart and Young which was generalized by Mirsky.

Given the SVD, $A = U\Sigma V^\dagger$, we will discard $\text{rank}(A)-r$ smallest singular values in~$\Sigma$ by setting them to zero, obtaining~$\Sigma'$.
This process is often called trimming.

This gives rise to $A' = U\Sigma' V^\dagger$, an approximation of~$A$. 
\begin{theorem}[Eckart-Young-Mirsky] For $m\times m$ matrices $A$, $A'$
\begin{equation}
\lVert A -A' \rVert =  \min_{\text{rank}(\hat{A}) \le r} \lVert A -\hat{A}\rVert,
\end{equation}
for any unitarily invariant matrix norm $\lVert \cdot \rVert$ with $r < \text{rank}(A)$.
\end{theorem}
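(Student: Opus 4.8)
The plan is to reduce the statement to coordinatewise comparisons of singular values, using the von Neumann characterization of unitarily invariant norms: a matrix norm $\lVert\cdot\rVert$ on $m\times m$ matrices is unitarily invariant if and only if $\lVert X\rVert = \Phi\big(\sigma_1(X),\dots,\sigma_m(X)\big)$ for some symmetric gauge function $\Phi$ on $\R^m$, where $\sigma_1(X)\ge\cdots\ge\sigma_m(X)\ge 0$ are the singular values of $X$ (with the convention $\sigma_j(X)=0$ for $j>m$). The two properties of $\Phi$ I will use are that it is a norm and that it is monotone: if $0\le s_i\le t_i$ for all $i$, then $\Phi(s)\le\Phi(t)$. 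I would open by recalling (or citing) this characterization, together with the diagrammatic SVD of Theorem~\ref{theorem:diagrammatic-SVD}, which already exhibits $A = U\Sigma V^\dagger$ with $\Sigma$ diagonal and nonnegative and $U,V$ unitary, and writes $\Sigma'$ by zeroing all but the top $r$ entries.

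First I would dispatch the easy inequality. Since $r<\text{rank}(A)$, the trimmed operator $A' = U\Sigma'V^\dagger$ has rank exactly $r$, so it is an admissible competitor and hence $\min_{\text{rank}(\hat A)\le r}\lVert A-\hat A\rVert\le\lVert A-A'\rVert$. Moreover $A-A' = U(\Sigma-\Sigma')V^\dagger$, so by unitary invariance $\lVert A-A'\rVert = \Phi\big(\sigma_{r+1}(A),\sigma_{r+2}(A),\dots,\sigma_m(A),0,\dots,0\big)$, i.e.\ $\Phi$ applied to $\sigma(A)$ with its top $r$ entries deleted and the vector padded back to length $m$ with zeros.

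The substance of the proof is the reverse inequality: $\lVert A-\hat A\rVert\ge\lVert A-A'\rVert$ for every $\hat A$ with $\text{rank}(\hat A)\le r$. Here I would invoke Weyl's subadditivity inequality for singular values, $\sigma_{i+j-1}(X+Y)\le\sigma_i(X)+\sigma_j(Y)$, applied with $X=\hat A$, $Y = A-\hat A$, $i=r+1$, and $j=k$ for $k=1,2,\dots$. Because $\text{rank}(\hat A)\le r$ forces $\sigma_{r+1}(\hat A)=0$, this yields $\sigma_{r+k}(A)\le\sigma_k(A-\hat A)$ for all $k\ge 1$. Thus the sorted singular-value vector of $A-\hat A$ dominates, coordinate by coordinate, that of $A-A'$. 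Monotonicity of $\Phi$ then gives $\lVert A-\hat A\rVert = \Phi\big(\sigma(A-\hat A)\big)\ge\Phi\big(\sigma(A-A')\big) = \lVert A-A'\rVert$, and combining with the previous paragraph yields the claimed equality.

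The main obstacle is not the assembly above but the two genuine inputs: the von Neumann characterization and Weyl's inequality. Weyl's inequality I would derive from the Courant--Fischer min--max description $\sigma_k(X)=\min_{\dim\mathcal S = m-k+1}\ \max_{0\ne v\in\mathcal S}\lVert Xv\rVert/\lVert v\rVert$, by intersecting the optimal subspaces for $\sigma_i(X)$ and $\sigma_j(Y)$ and counting dimensions. If one only wants the classical Eckart--Young special cases this can be bypassed: for the spectral norm the bound $\sigma_{r+1}(A)\le\sigma_1(A-\hat A)$ is Weyl with $j=1$ and needs only the variational characterization of the largest singular value; for the Frobenius norm one uses $\lVert X\rVert_F^2=\sum_k\sigma_k(X)^2$ together with the same coordinatewise domination. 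I would present the monotone-gauge argument as the main line and record these two specializations as remarks.
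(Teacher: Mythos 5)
Your argument is correct, but there is nothing in the paper to measure it against: the Eckart--Young--Mirsky statement appears there inside a remark with no proof at all, the margin note simply directing readers to the original 1936 Eckart--Young and 1960 Mirsky papers. What you have written is essentially Mirsky's own route, and it is complete modulo the two classical inputs you explicitly flag: von Neumann's characterization of unitarily invariant norms as symmetric gauge functions $\Phi$ of the singular values (with the monotonicity of $\Phi$ on nonnegative, coordinatewise-ordered vectors), and Weyl's subadditivity $\sigma_{i+j-1}(X+Y)\le\sigma_i(X)+\sigma_j(Y)$. The assembly is sound: the easy direction correctly notes that $A'=U\Sigma'V^\dagger$ is an admissible competitor with $\lVert A-A'\rVert=\Phi(\sigma_{r+1}(A),\dots,\sigma_m(A),0,\dots,0)$ by unitary invariance; the hard direction correctly uses $\operatorname{rank}(\hat A)\le r\Rightarrow\sigma_{r+1}(\hat A)=0$ to get $\sigma_{r+k}(A)\le\sigma_k(A-\hat A)$ for all $k$, so the sorted singular values of $A-\hat A$ dominate those of $A-A'$ coordinatewise, and monotonicity of the gauge function finishes the proof. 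The payoff of your approach over the textbook special cases is exactly the generality the theorem claims, namely validity for every unitarily invariant norm at once, with the spectral and Frobenius cases recovered as the remarks you indicate; this supplies precisely what the paper leaves to the cited literature.
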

\marginnote{\emph{\textit{Interested readers can try to get their hands on copies of
C. Eckart and G. Young, ``The approximation of one matrix by another of lower rank,'' Psychometrika 1, (1936)
and
L. Mirsky, ``Symmetric gauge functions and unitarily invariant norms,'' The Quarterly Journal of Mathematics 11:1, 50--59 (1960).}}}
Here $\hat{A}$ is any approximation to $A$ of the same or lesser rank as~$A'$.
This implies that truncating or trimming $\Sigma$ in this way yields as good of an approximation as one can expect.
In the following section, we will specifically consider the induced error for such an approximation.
\end{remark}

\begin{corollary}[Diagrammatic Schmidt decomposition]
\label{ex:diagrammatic-Schmidt}
Given a bipartite state $\ket{\psi}$, we use the snake equation to
convert it into a linear map (inside of the dashed region).
Now we apply the SVD as in
Theorem~\ref{theorem:diagrammatic-SVD},
resulting in unitary maps $U$ and $V$, a \COPY{}-tensor and the
order-one tensor~$\sigma$ representing the singular values.
Diagram reorganization leads to the diagrammatic Schmidt decomposition
of~$\ket{\psi}$:
\begin{center}
 \includegraphics[width=12\xxxscale]{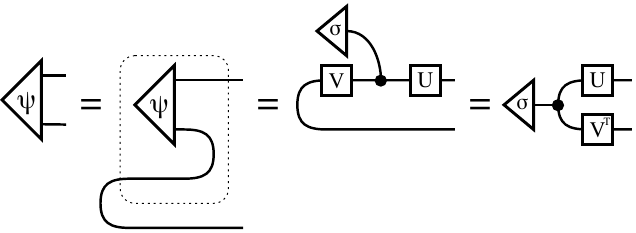}
\end{center}
(Sliding $V$ around the cup takes the transpose but
the resulting map is still unitary).
The singular values $\sigma_0, \ldots, \sigma_{d-1}$ in $\sigma$ correspond to
the Schmidt coefficiets.
\end{corollary}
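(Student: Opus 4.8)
The plan is to execute literally the three moves named in the statement: bend a wire on $\ket{\psi}$, apply the diagrammatic SVD, and then reorganize the diagram, checking that the pieces reassemble into the Schmidt form.

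First I would use the snake equation to turn the bipartite state $\ket{\psi}\in\hilb{H}_1\otimes\hilb{H}_2$ into a linear map. Bending the $\hilb{H}_2$-leg upward reinterprets the second ket-index of $\ket{\psi}=\sum_{ij}\psi_{ij}\,\ket{i}\otimes\ket{j}$ as a bra-index, producing $A=\sum_{ij}\psi_{ij}\,\ketbra{i}{j}\in\mathcal{L}(\hilb{H}_2,\hilb{H}_1)$; equivalently $\ket{\psi}=(A\otimes\I)\ket{\cup}$, where $\ket{\cup}=\sum_k\ket{k}\otimes\ket{k}\in\hilb{H}_2\otimes\hilb{H}_2$. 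This is the step ``inside the dashed region'', and it is invertible by the snake equation, so nothing is lost.

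Next I would apply Theorem~\ref{theorem:diagrammatic-SVD} to $A$, writing $A=U\,\Sigma\,V$ with $U,V$ unitary and $\Sigma$ diagonal and nonnegative, and then using that theorem's factorization $\Sigma=\sum_k\sigma_k\,\ketbra{k}{k}$ as the contraction of the order-one tensor $\sigma=\sum_k\sigma_k\ket{k}$ (with $\sigma_k\ge 0$) against the \COPY-tensor; the correction $Q_{12}$ appears only when $\dim\hilb{H}_1\neq\dim\hilb{H}_2$ and does not affect the argument. Substituting gives $\ket{\psi}=(U\Sigma V\otimes\I)\ket{\cup}$. Then I would slide $V$ around the cup using the transpose/ricochet identity $(V\otimes\I)\ket{\cup}=(\I\otimes V^\top)\ket{\cup}$, so that
\[
\ket{\psi}=(U\Sigma\otimes\I)(V\otimes\I)\ket{\cup}=(U\Sigma\otimes V^\top)\ket{\cup}=\sum_k\sigma_k\,(U\ket{k})\otimes(V^\top\ket{k}).
\]
Since transposition preserves unitarity, $V^\top$ is again unitary; this is precisely the parenthetical remark in the statement. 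Diagrammatically, $V$ has migrated from between $\Sigma$ and the cup onto the second output leg, leaving the picture $\sigma\to\COPY\to$ (first leg through $U$, second leg through $V^\top$), which is the asserted diagrammatic Schmidt decomposition.

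To finish, set $\ket{u_k}=U\ket{k}$ and $\ket{v_k}=V^\top\ket{k}$. Unitarity of $U$ and of $V^\top$ makes $\{\ket{u_k}\}$ and $\{\ket{v_k}\}$ orthonormal, so $\ket{\psi}=\sum_k\sigma_k\,\ket{u_k}\otimes\ket{v_k}$ is a genuine Schmidt decomposition, with the nonnegative entries of the $\sigma$-tensor serving as the Schmidt coefficients. The only real care-point --- and the one the parenthetical flags --- is the wire-bending bookkeeping: tracking which leg is bent, whether the map induced on the first factor is $V$, $V^\dagger$, $V^\top$ or $\overline{V}$ under the orientation conventions of \S~\ref{sec:Bwires}, and verifying at each stage that the object slid around the cup stays unitary. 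Everything beyond that is substitution; the content past Theorem~\ref{theorem:diagrammatic-SVD} reduces to the ricochet identity together with the invariance of unitarity under transpose.
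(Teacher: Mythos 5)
Your proposal is correct and follows essentially the same route as the paper: bend a wire via the snake equation to obtain the coefficient map, apply the diagrammatic SVD of Theorem~\ref{theorem:diagrammatic-SVD}, and slide $V$ around the cup using the transpose identity, noting that $V^\top$ remains unitary so the orthonormality of the two Schmidt bases follows and the entries of $\sigma$ are the Schmidt coefficients. The algebraic bookkeeping you supply (e.g.\ $(V\otimes\I)\ket{\cup}=(\I\otimes V^\top)\ket{\cup}$ and $\ket{\psi}=\sum_k\sigma_k\,U\ket{k}\otimes V^\top\ket{k}$) is exactly the equational content of the paper's diagram reorganization.
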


\begin{example}[Graphical map-state duality]
 In the figure from Corollary \ref{ex:diagrammatic-Schmidt}, we arrive at an example of map-state duality from \cite{Meznaric2014} as 
 \begin{center}
 \includegraphics[width=7\xxxscale]{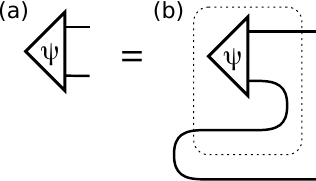}
\end{center}
 In (a) we start with a state $\ket{\psi}$.  We can think of this (vacuously, it would seem) as a state being acted on by the identity operation.  Application of the snake-equation to one of the outgoing wires allows one to transform this into the diagram in (b).  We can think of (b) a Bell state (left) being acted on by a map found from coefficients of the state $\ket{\psi}$.  We have illustrated this map acting on the bell state by a light dashed line around $\ket{\psi}$. 
\end{example}

\begin{definition}[Partition $\chi$]
Given a many-body quantum state, partition the state into two halves and perform the diagrammatic SVD on this system, with respect to this partition.  The term ``$\chi$'' stands for the number of non-zero singular values across a bipartition of a state.  These values are used to compute the entanglement entropy
\eqref{eqn:entropy} of either reduced subsystem.  
\end{definition}

\begin{example}[Entanglement topology]
The most significant topology change occurs when the input state to
the black \COPY-tensor is a copy point --- this causes the diagram to
break into two.  When the input state is a unit for the \COPY-tensor,
the tensor structure is converted to a smooth wire (this is the
maximally entangled case).
\begin{center}
 \includegraphics[width=8\xxxscale]{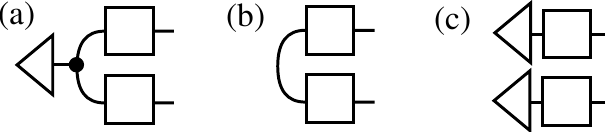}
\end{center}
(a) is the general form of a state $\psi = \sum \alpha^i \ket{\varphi_i}\ket{\phi_i}$ with $\chi>1$ and at least two singular values taking different values, that is  $\exists \lambda_i\neq \lambda_j$ for some $i\neq j$.  The state takes the form in (b) iff the singular values in the triangle from (a) all take the same values, so $\lambda_i=\lambda_j$ $\forall i,j$ with clearly $\chi=d$ .  In this case, the state is LU equivalent to a generalized Bell state, in dimension $d$.  The state takes the form in (c) iff the first singular value equals one, which necessarily implies that the remaining singular values are zero.  In such a case, the state is separable and $\chi=1$.   
\end{example}

\begin{corollary}[Diagrammatic state purification]
The diagrammatic SVD from Theorem \ref{theorem:diagrammatic-SVD} gives rise to a diagrammatic representation of state 
purification.  To begin with consider
\begin{center}
 \includegraphics[width=6\xxxscale]{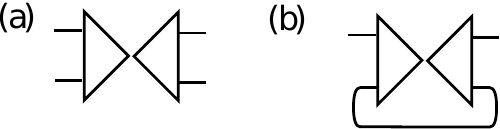}
\end{center}
In (a) we have a two-party pure density state $\ket{\psi}\bra{\psi}$ and in (b) we trace 
out one subsystem $\rho'\bydef\text{Tr}_2(\ket{\psi}\bra{\psi})$.  Now consider 
\begin{center}
 \includegraphics[width=14\xxxscale]{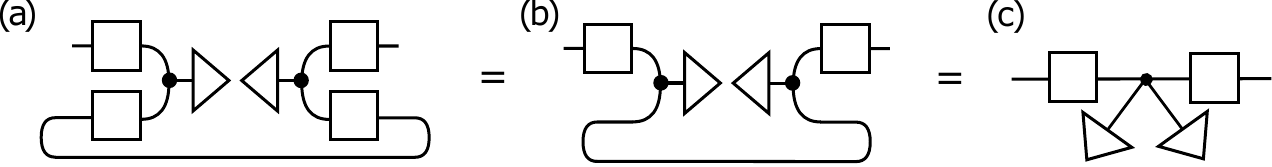}
\end{center}
(a) is found from applying the diagrammatic SVD to the reduced state $\rho'$.  
(b) follows from applications for simple diagrammatic rewrite rules, allowing the bottom unitaries to cancel. 
This follows from pulling both boxes around the bends, which takes the transpose of each map.  We arrive at $U^{\dagger
  \top}U^\top=(UU^{\dagger})^\top=\I$.
In (c) the \COPY-tensors merge, resulting in multiplication of the singular values stored in the valence-one triangular tensors.  

Alternatively, $\ket{\psi}$ can be seen as a purification of $\rho_1$
(the square roots of the singular values multiply (c) resulting again in
$\rho_1$.) These diagrams translate between a purification of a density
operator and the density operator itself.
\end{corollary}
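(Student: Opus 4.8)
The plan is to read the corollary as a construction-plus-verification statement and prove it entirely inside the graphical calculus, with the diagrammatic SVD of Theorem~\ref{theorem:diagrammatic-SVD} carrying the whole argument. First I would work on the pure side. Take a bipartite pure state $\ket{\psi}\in\hilb{H}_1\otimes\hilb{H}_2$ and put it in the diagrammatic Schmidt form of Corollary~\ref{ex:diagrammatic-Schmidt}: a cup feeding a \COPY-tensor, with an order-one tensor of Schmidt coefficients (the square roots of the eigenvalues of the reduced operator) hanging off one \COPY-leg, a unitary box $U$ on the first output wire and a unitary box $V$ (transposed by the slide around the cup) on the second. Forming $\ket{\psi}\bra{\psi}$ stacks this picture with its conjugate-transpose mirror image.

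Next I would carry out $\rho_1=\Tr_2(\ket{\psi}\bra{\psi})$ graphically, joining the two $\hilb{H}_2$ wires into a closed loop. On that loop one meets $V$ from the ket half and $V^{\dagger}$ from the bra half; pulling both boxes around the bends transposes each map, and unitarity gives $V^{\dagger\top}V^{\top}=(VV^{\dagger})^{\top}=\I$, so the traced subsystem collapses to a bare wire --- this is the ``bottom unitaries cancel'' step. The cup and cap left behind then straighten, by the snake equation of \S~\ref{sec:Bwires}, into one wire joining the two \COPY-tensors, which merge by the composition property of Definition~\ref{def:copy-prop}(e); along the merged wire the two copies of the Schmidt-coefficient tensor multiply pointwise --- the \COPY-tensor's diagonal action is exactly what does this --- producing the order-one tensor $\sigma$ of eigenvalues of $\rho_1$. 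The outer $U$ and $U^{\dagger}$ are untouched, so the final diagram is a unitary box $U$, a \COPY-tensor carrying $\sigma$, and $U^{\dagger}$: precisely the diagrammatic SVD of $\rho_1$ furnished by Theorem~\ref{theorem:diagrammatic-SVD}, which for $\rho_1\ge 0$ is its eigendecomposition. Hence the $\ket{\psi}$ we started from is a purification of $\rho_1$, and the derivation producing it is a sequence of legal diagram rewrites.

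For the converse --- and so the full ``translates between a purification and the density operator itself'' claim --- I would simply run the rewrites backwards as a recipe. Given any $\rho_1\ge 0$ on $\hilb{H}^{\otimes n}$, apply Theorem~\ref{theorem:diagrammatic-SVD} to present it as a left unitary box $U$, a \COPY-tensor with the eigenvalue tensor $\sigma$ on one leg, and a right box $U^{\dagger}$ (the two unitaries coincide because $\rho_1$ is Hermitian and positive). Discard the right box, replace $\sigma$ by the order-one tensor of its square roots, and declare the now-free \COPY-leg a fresh ancilla output: the result is exactly the diagrammatic Schmidt form of a pure state $\ket{\psi}$, and tracing the ancilla back out runs the computation of the previous paragraph and returns $\rho_1$. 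So the single family of diagrams is literally the dictionary in both directions.

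The step I expect to be the main obstacle is the bookkeeping of transposition versus complex conjugation as wires bend around the cup and cap: one has to keep to the fixed wire-bending conventions so that the two boxes on the traced subsystem really combine as $V^{\dagger\top}V^{\top}=(VV^{\dagger})^{\top}=\I$ and not as some stray conjugate, and so that the slide of $V$ around the cup on the ket side is matched consistently by the slide on the bra side. A secondary check is that the \COPY-tensor merge genuinely sends the product of the two Schmidt-coefficient tensors to $\sigma$ componentwise --- i.e.\ that the \COPY-tensor is what annihilates the cross terms $\sqrt{\lambda_j\lambda_k}$ with $j\ne k$ that a purely algebraic derivation would have to carry by hand. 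Both are routine once the diagram is drawn carefully, which is exactly the economy the diagrammatic SVD is meant to buy.
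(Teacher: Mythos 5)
Your proposal is correct and follows essentially the same route as the paper: apply the diagrammatic Schmidt/SVD to $\ket{\psi}$, trace out the second subsystem, cancel the unitaries on the traced loop by sliding them around the bends so that $(UU^{\dagger})^{\top}=\I$, merge the two \COPY{}-tensors so the Schmidt-coefficient tensors multiply pointwise, and read off the diagrammatic SVD of $\rho_1$ with $U$, $U^{\dagger}$ and the squared coefficients. Your explicit reverse construction (take square roots of the eigenvalue tensor and free the \COPY{}-leg as an ancilla) is exactly the paper's brief ``square roots of the singular values multiply'' remark, just spelled out.
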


\section{Matrix Product Factorization of States}\label{sec:mps} 

As a key application of the diagrammatic SVD, we will consider Matrix Product States (MPS), an iterative method to factor quantum states into a linear chain of tensors (see \cite{MPSreview08,TNSreview09}).  We will express this factorization in terms of the diagrammatic SVD \ref{theorem:diagrammatic-SVD}, and our focus will be on exposing the degrees of freedom which are invariant under local groups acting on open tensor legs.  A basis to expand any local unitary invariant of an MPS will be explored in \S~\ref{sec:monomial-mps}.

\begin{remark}
The early work in \cite{2007PhRvL..98v0503G, GESP07} cast matrix product states and measurement based quantum computation into the language of tensor networks. 
\end{remark}

\begin{definition}
 Given an $n$-party quantum state $\ket{\psi}$, fully describing this state generally requires an
amount of information (or computer memory) that grows exponentially with~$n$.
If $\ket{\psi}$ represents the state of $n$~qubits,
\be 
\ket{\psi}= \sum_{i j \cdots k}\psi_{i j \cdots k}\ket{i j \cdots k},
\ee 
the number of independent coefficients $\psi_{i j \cdots k}$ in the basis expansion in general would be $2^n$
which quickly grows into a computationally unmanageable number as~$n$ increases.
The goal is to find an alternative representation of~$\ket{\psi}$ which is less data-intensive.
We wish to write $\ket{\psi}$ as 
\begin{equation}\label{eqn:matrix}
\ket{\psi} = \sum_{i j \cdots k} \trace(A_{i}^{[1]} A_{j}^{[2]} \cdots A_{k}^{[n]}) \ket{i j \cdots k},
\end{equation}
where $A_{i}^{[1]}, A_{j}^{[2]}, \ldots, A_{k}^{[n]}$ are indexed sets of matrices and  trace (tr)  closes the boundaries and could be omitted (e.g.~.$A_{i}^{[1]}$ and $A_{k}^{[n]}$ are row and column vectors respectively).
Calculating the components of $\ket{\psi}$ then becomes a matter of calculating the products of matrices,
hence the name \emph{matrix product state}.

If the matrices are bounded in size, the representation becomes efficient
in the sense that the amount of information required to describe them is only linear in~$n$.
The point of the method is to choose these matrices such that they provide a good (and compact) approximation to~$\ket{\psi}$.
For instance, if the matrices are at most $\chi$ by $\chi$, the size of the representation scales as~$n d \chi^2$,
where $d$~is the dimension of each subsystem.
\end{definition}

Without loss of generality, we will apply the MPS method to a four-party state, and explain the procedure in terms of three distinct steps.  Consider a quantum state, expressed as a triangle in the Penrose graphical notation with a label $\textbf{1}$ inside and open legs labeled $i,j,k,m$.  
\begin{center}
 \includegraphics[width=2\xxxscale]{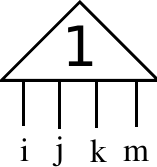}
\end{center}

(Step I). We will now create a partition of the legs of this state, into a first collection containing only leg $i$ and a second collection containing legs $j,k,m$.  We will then apply the diagrammatic SVD across this partition.
The partition is illustrated with the dashed \textit{cut} below in (a).  Figure (b) results from applying the diagrammatic SVD across this partition, factoring the original state with label $\textbf{1}$ in (a) into a valence-two unitary box with label $\textbf{2}$, a valence-one triangle containing the singular values with label $\textbf{3}$, and a 
valence-four triangle with label $\textbf{4}$, all contracted with a \COPY-tensor, as illustrated.  A new internal label (d) for the wire connecting the \COPY-tensor to the valence-four triangle ($\textbf{4}$) was introduced for clarity. (see also Figure \ref{fig:MPS-summary} (a) and (b)). 
\begin{center}
 \includegraphics[width=6\xxxscale]{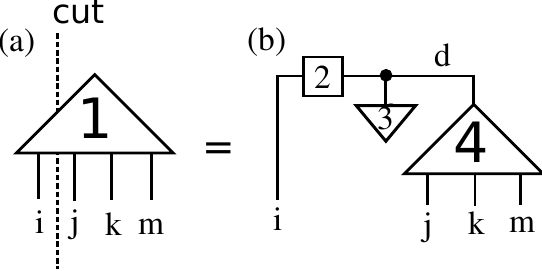}
\end{center}

\begin{remark}[Isometric internal tensors]
 The valence-four triangle tensor in (b) above is actually a unitary map.  The only input leg shown is labeled d.  The other legs are contracted with a fixed basis state $\ket{0}$, from the SVD (a).  We then depict this as the triangle $\textbf{4}$ as in (b).  From the unitarity property, the isometry property follows, as illustrated graphically in (c).  
 \begin{center}
  \includegraphics[width=7\xxxscale]{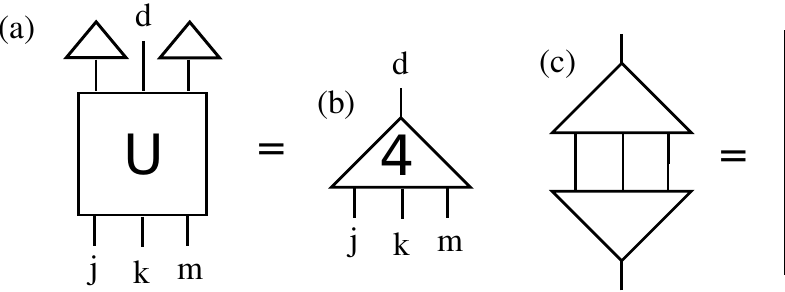}
 \end{center}
\end{remark}

\begin{remark}[Contraction of Unitaries]
In tensor network diagrams, two unitaries compose to form unitaries.  In (a) below, we factor a tensor with three legs into an order-two unitary (white box), a black order three \COPY-tensor contracted with an order-one triangle of singular values and an order-four unitary with label $U$.  In (b) we remove $U$ and act on it with an arbitrary order-two tensor.  These compose to form $\widetilde{U}$ which is still unitary.  In a general MPS, unitaries acting on the open legs do not alter the singular values found in a factorization.
\begin{center}
 \includegraphics[width=12\xxxscale]{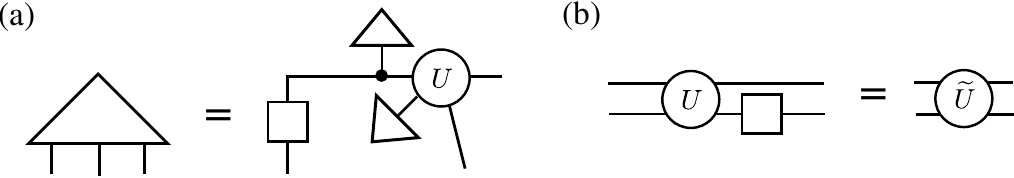}
\end{center}
\end{remark}

(Step II). To illustrate the next step in the factorization, we will remove the tensor labeled $\textbf{4}$ by breaking the wire connecting it to the \COPY-tensor (a).  We will then partition this separate tensor into two halves, one containing wires $d, j$ the other half wires $k,m$.  This partition is illustrated by placing a dashed line (labeled cut) in (a).   We arrive at the the structure in (b), which we have explained in the first step.  (see also Figure \ref{fig:MPS-summary} (b) and (c)).  
\begin{center}
 \includegraphics[width=6\xxxscale]{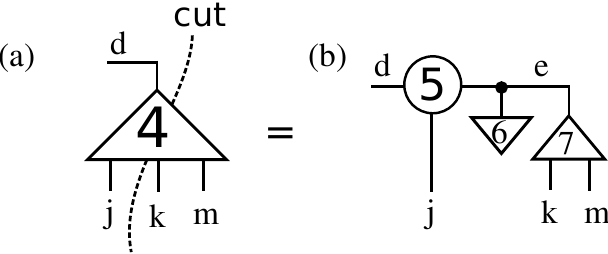}
\end{center}

\begin{remark}[An elementary property of tensor network manipulation]\label{remark:elementary-prop-1}
 It is a fundamental property of tensor network theory that one can remove a portion of a network, alter this removed portion of the network without changing its function, and replace it back into the original network, leaving the function of the original network intact.   
\end{remark}

(Step III).  In the third and final step of the MPS factorization applied to this four-party example, following remark \ref{remark:elementary-prop-1} we first place the tensor we have factored in the second step, back into the original network from the first step, see (a) below.  We then repeat the second step, applied to the triangular isometry tensor, labeled internally with a $\textbf{7}$.  This results in the factorization appearing in (b).  (see also Figure \ref{fig:MPS-summary} (c) and (d)).  
\begin{center}
 \includegraphics[width=10\xxxscale]{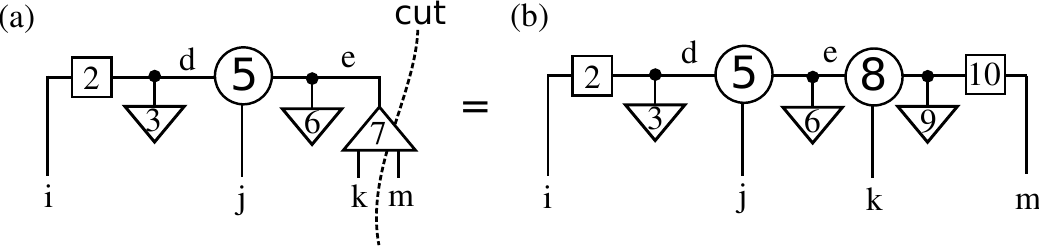}
\end{center}

\begin{remark}[Step $n$]
 The iterative method continues in the same fashion as the first three steps, resulting in a factorization of an $n$-party state.  A summary of the MPS factorisation applied to a four-party state is shown in Figure \ref{fig:MPS-summary}. 
\end{remark}

\begin{figure}[t]
\includegraphics[width=16\xxxscale]{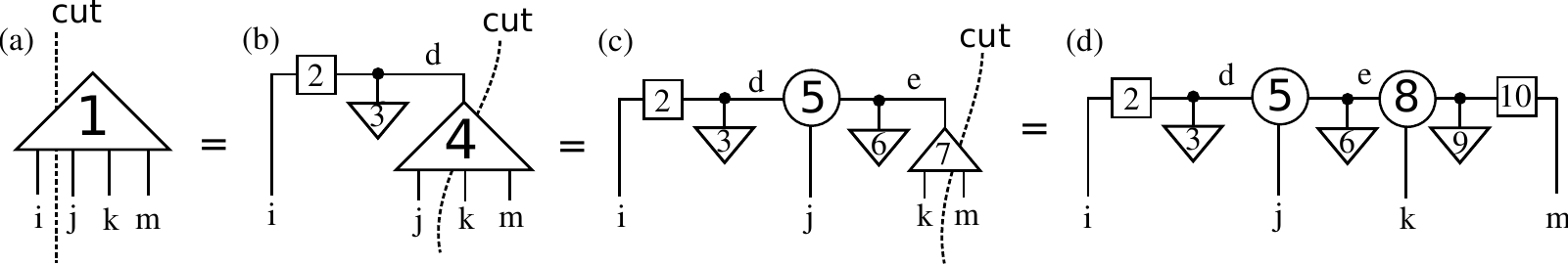}
\caption{(Diagrammatic summary of steps I, II and III). The quantum state (a) is iteratively factored into the 1D Matrix Product State (d). This procedure readily extends to $n$-body states.}
\label{fig:MPS-summary}
\end{figure}

(Summary). We will now consider Figure \ref{fig:MPS-summary}, which summarizes the factorization scheme. In the steps we have outline, we have factored the Figure \ref{fig:MPS-summary} original state (a) into the MPS in Figure \ref{fig:MPS-summary} (d), in terms of the components listed below.  

\begin{itemize}
 \item[(i)] States (labeled $\textbf{3}$, $\textbf{6}$ and $\textbf{9}$; denoted $\phi_3$, $\phi_6$ and $\phi_9$, respectively): $\phi_3=(\lambda_0,\lambda_1)^\top$, $\phi_6=(\lambda_2,\lambda_3,\lambda_4,\lambda_5)^\top$ and $\phi_9=(\lambda_6,\lambda_7)^\top$.  The $\lambda_i$'s are the singular values across each partition.  The number of non-zero singular values ($\chi$) is given by the minimum dimension of the two parties in the cut.  For the case of qubits, the first outside partition has at most two non-zero entries, and the next inside partition has at most $\14$.   One might also consider the singular values as the eigenvalues of either member of the pair of reduced density matrices found from tracing out either half of a partition.  
 \item[(ii)] Unitary gates (labeled $\textbf{2}$ and $\textbf{10}$; denoted $U_2$ and $U_{10}$, respectively).  
 \item[(iii)] Isometries (labeled $\textbf{5}$ and $\textbf{8}$; denoted $I_5$ and $I_8$ respectively).   The isometry condition describes the tensor relation $I_{jq}^d ~\overline{I}^{jq}_r = \delta ^d_{~r}$.  It is a consequence of the fact that tensors $\15$ and $\18$ arise from unitary gates, as explained in Step II.  The isometry condition plays a more relevant role in structures other than 1D tensor chains.   
\end{itemize}

We note that by appropriately combining neighboring tensors as in Figure \ref{fig:MPS-equation} (a), one recovers the familiar matrix product representation of quantum states \ref{fig:MPS-equation} (b).  Matrix product states are written in equational form as  
\begin{equation}\label{eqn:MPS}
 \psi = \sum_{i,j,k,m} A^{[1]}_i A^{[2]}_j A^{[3]}_k A^{[4]}_m\ket{ijkm}.
\end{equation}
Here $A^{[1]}$ becomes a new tensor formed from the contraction of tensors labeled $\12$, $\13$, and $A^{[2]}$ is a contraction of tensors labeled $\15$ and $\16$, etc. The exact grouping has some ambiguity.  

A utility of our approach Figure \ref{fig:MPS-equation} (a) is that the \COPY-tensor is well defined in terms of purely graphical rewrite identities (as seen in Definition \ref{def:copy-prop}).  These graphical relations allow one to gain insights (into e.g.\ polynomial invariants as will be seen), and to contract portions of tensor networks by hand.
The factorization we present however, allows one to perform many diagrammatic manipulations with ease, and exposes more structure inherent in a MPS.  

\begin{remark}[Data compression]
 The compact representation of a MPS is recovered by picking a cutoff value for the singular values across each partition, or a minimum number of allowed singular values.  This allows one to compress data by truncating the Hilbert space and is at the heart of MPS computer algorithms in current use.  
\end{remark}

\begin{figure}[h]
\includegraphics[width=15.5\xxxscale]{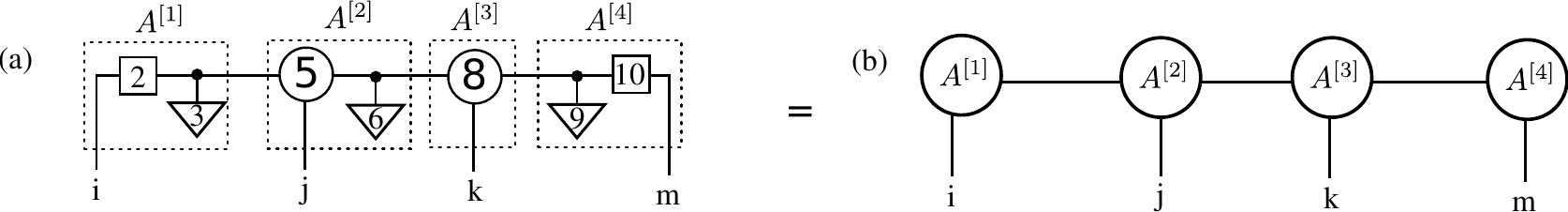}
\caption{Conversion from our notation (a), to conventional MPS notation (b). The factorization methods we have reviewed here allow one to ``zoom in'' and expose internal degree of freedom (a) or ``zoom out'' and expose high-level structure (b). The equational representation of the MPS in (b) is given in \eqref{eqn:MPS}.}
\label{fig:MPS-equation}
\end{figure}

The singular values found from the MPS factorization can be used to form a complete basis to express any quantity related to an MPS that is invariant under local unitary operations.  This includes providing a complete basis to express any entanglement monotone.  

\begin{example}[MPS for the GHZ state]\label{ex:mps-ghz}
The standard MPS representation of the Greenberger-Horne-Zeilinger (GHZ) state is given as \marginnote{Daniel {\bf G}reenberger, Michael {\bf H}orne and Anton {\bf Z}eilinger first studied what is now named the {\sf GHZ}-state in 1989 \cite{GHZ}.}
\be\label{eqn:ghzMPS}
\ket{\text{GHZ}} = \frac{1}{\sqrt{2}}\trace\left( \begin{array}{cc}
\ket{0} & 0 \\
0 & \ket{1} \end{array} \right)^{n}=
\frac{1}{\sqrt{2}}(\ket{00\ldots 0}+\ket{11\ldots 1}).
\ee
Alternatively, we may use a quantum circuit made of \CNOT{} gates to construct
the GHZ state,
and then use the rewrite rules employed in Examples 
\ref{ex:circuits-2} and \ref{ex:state-prep}
to recover the familiar MPS comb-like structure consisting of \COPY{} tensors:
\begin{center}
   \includegraphics{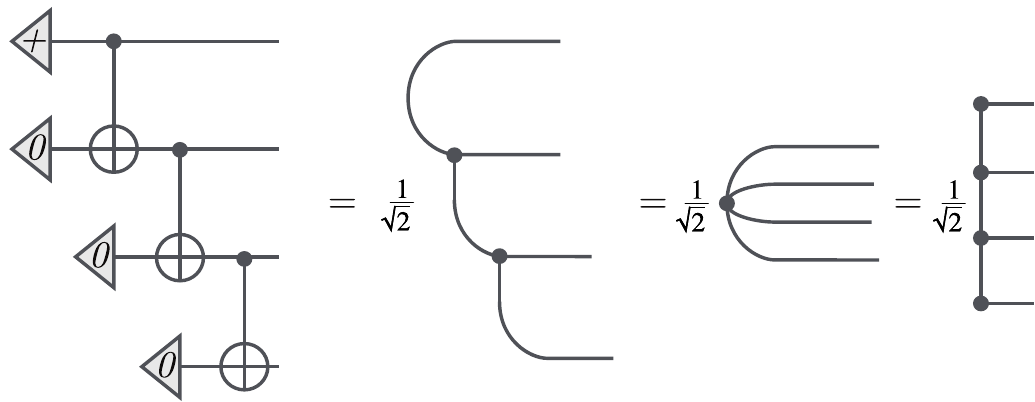}
\end{center}

Diagrammatically, any tensor network formed from connected \COPY{}-tensors reduces to a single dot
with the appropriate number of input and output legs.
Hence one might write the n-party \GHZ-state as 
\be 
\ket{\text{GHZ}} = \frac{1}{\sqrt{2}}\sum_{ijk\ldots l} \COPY^{ijk\ldots l}\ket{ijk\ldots l}.
\ee 
\end{example}

\begin{example}[MPS for the W state]\label{ex:w-state}
Like the GHZ state from Example \ref{ex:mps-ghz}, the $n$-qubit W~state ($n\geq 3$) has the following MPS representation:
\be\label{eqn:wnMPS1}
\begin{aligned}
\ket{W}
=&
\frac{1}{\sqrt{n}}
\begin{pmatrix} \ket{1} & \ket{0} \end{pmatrix}
\left( \begin{array}{cc}
\ket{0} & 0 \\
\ket{1} & \ket{0} \end{array} \right)^{n-2}
\begin{pmatrix} \ket{0}\\ \ket{1} \end{pmatrix},\\
=&
\frac{1}{\sqrt{n}} \left(
\ket{10\ldots 0}+\ket{010\ldots 0}+\ldots +\ket{0\ldots 01}
\right).
\end{aligned}
\ee
\end{example}

\subsection*{Invariant Basis for Matrix Product States}\label{sec:monomial-mps}

We will consider generating a full monomial basis in terms of the singular values found in the factorization of Matrix Product States.  The monomial basis is generated as a matter of convenience and can be used to define a basis for entanglement monotones.   In fact, we will see that the diagrammatic factorization can be used to prove that certain tensor contractions give rise to certain invariants, which allow one to calculate quantities of interest, such as the concurrence or R\'{e}nyi entropy.  These tensor contractions are not dependent on the factorization method used, and are general. The diagrammatic SVD will be used to prove that the contraction of certain tensors, results in an expression that is in terms of the singular values.  

Our objective will be to develop tensor contractions that evaluate to specific quantities of interest.  These quantities of interest will be invariants of the local unitary group.  
Such invariants have expansions in terms of the singular values of reduced density operators.  We will use tensor network contractions to evaluate a full basis that can be used to expand any 
function of the singular values.  This includes quantities of interest such as concurrence and R\'{e}nyi Entropy.  

For the general case, one can, for instance, form a polynomial basis using the elementary symmetric polynomials by combining the $\lambda$ as   
\begin{equation}
 S_1 = \sum_i \lambda_i,
\end{equation}

\begin{equation}
 S_2 = \sum_{i\neq j} \lambda_i\lambda_j,
\end{equation}

\begin{equation}
 S_3 = \sum_{i\neq j\neq k} \lambda_i\lambda_j\lambda_k, 
\end{equation}
and so on.  Such polynomials are used to calculate the d-concurrence, see Definition \ref{def:d-concurrence}.  
Any polynomial in the $S$'s is necessarily a local unitary invariant.  

Another basis of interest, is the basis formed by summed powers of the singular values
\be 
B_n = \sum_i \lambda_i^n,
\ee 
which is of great interest to evaluate R\'{e}nyi's entropy, see Definition \ref{def:R\'{e}nyi}.  
The polynomial $B_2$ is related to the concurrence measure of entanglement in Definition \ref{def:concurrence}.  
For the case of qubits, $B_2=\lambda_0^2 + \lambda_1^2$ which is greater than zero iff the state is entangled.
Increase $n$, pattern continues up to $B_d$.  $B_d$ being greater than
zero implies that $B_n\geq 0$ for all $1\leq n \leq d$.  In addition,
$B_n=0$ implies that $B_k=0$ for all $n\leq k \leq d$.

We will first recall the definition of the concurrence (see \cite{concurrence}) and then the definition of R\'{e}nyi's Entropy (see for instance \cite{Renyi}). 
These quantities are of physical interest.  As will be shown, they can be calculated by contracting specific tensor networks.

\begin{definition}[The concurrence]\label{def:concurrence}
The concurrence of a pure bipartite normalized state $\ket{\psi}$ is defined as 
\be 
C(\ket{\psi}) := \sqrt{\frac{d}{d-1}(1-\text{Tr}\rho^2)},
\ee 
where $\rho$ is obtained by tracing over one subsystem.  The factor $\sqrt{d/(d-1)}$ ensures that $0\leq C(\ket{\psi}) \leq 1$.   
\end{definition}

\begin{remark}[Tensor contractions for the concurrence]
 We will contract tensor networks that evaluate to $B_2=\text{Tr}(\rho^2)$ and these hence can be used to evaluate the concurrence.  
\end{remark}

\begin{definition}[The $d$-concurrence]\label{def:d-concurrence}
 Consider a d x d-dimensional bipartite pure state $\ket{\psi}$ with Schmidt numbers 
 $\lambda:=(\lambda_0, \lambda_1, ..., \lambda_{d-1})$ the d concurrence monotones, $C_k(\ket{\psi})$, $k=1,2,...,d$, of the state $\ket{\psi}$ are defined as follows
 \be 
 C_k(\ket{\psi}) := \left(\frac{S_k(\lambda_0, \lambda_1, ..., \lambda_{d-1})}{S_k(1/d,1/d,...1/d)}\right)^{1/k},
 \ee  
 where $S_k$ is the kth elementary symmetric polynomial.  
\end{definition}

\begin{definition}[The R\'{e}nyi Entropy]\label{def:R\'{e}nyi}
 The R\'{e}nyi entropy of order $\alpha$ is defined to be 
 \be 
S_\alpha := \frac{1}{1-\alpha} \ln\sum \lambda_i^\alpha,  
 \ee  
and in the limit $\alpha \rightarrow 1$ 
\be 
E=\lim_{\alpha \rightarrow 1}S_\alpha = -\sum_i \lambda_i \ln \lambda_i.
\ee 
\end{definition}

\begin{remark}[Lower bounds on Entropy]
Consider a tensor network representing a quantum spin state.  
Partition this state into a block of $k$-spins and a block of $m$-spins. Let the number of wires connecting the two blocks be given as $l$.  Then 
\be 
E\leq \min\{k,m,l\}.
\ee 
\end{remark}

In the remaining sections, we will consider tensor networks that enable the evaluation of the quantities of interest we have mentioned here. 
At the heart of the MPS method, is the factorization of a state into two halves.  Our diagrammatic SVD allows one to prove that certain networks contract to 
quantities that can readily be related to the concurrence or R\'{e}nyi's entropy.  In what follows, we build examples for biaprtite states.  These generally apply to Matrix Product States 
by considering partitions.  In fact, in the diagrammatic language, it is often useful to group $k$-wires into one wire, for the purpose of manipulation \cite{BB11}.

\section{Numerical Tensor Network Algorithms and Packages} 

This book is focused on graphical reasoning and the applications of tensor networks to quantum information science.  A primary driving force behind tensor network applications is numerical algorithms for condensed matter physics application.  This is not our main focus, however. 

However, no book would be complete without touching on these important numerical applications.  Indeed, we have described several tensor contractions which can be utilized to solve counting problems (see \S~\ref{part:count}).  And to simulate quantum systems using matrix product states (Part \ref{sec:mps}). 

Those reading this text that want to explore the numerical application of these ideas should be aware of both open source software packages and papers that describe in detail the most effective tensor contraction algorithms.  That is precisely the objective of this appendix.  

\subsection{Online Resources Describing Tensor Network Software Implementations}
Several webpages are devoted to listing papers and tensor contraction algorithms.  Here we provide a short listing of those papers which go into more detail and are of a primary software centric focus.  We have done our best to be as inclusive as possible, however this list is merely an editors pick and is not designed to be comprehensive. 

\begin{itemize}
  \item[1.] {\sf \bf Resource}. \hfill {\it TensorNetwork.ORG}  \\ 
    {\sf \bf Brief description}. An open-source `living' article containing many tensor network resources, applications, and software.\\ 
{\sf \bf Link}. \url{https://tensornetwork.org} \\
{\sf \bf List of Software Packages}. \url{http://tensornetwork.org/software}

\item[2.] {\sf \bf Resource}. \hfill {\it Tensors.NET} \\ 
{\sf \bf Brief description}. A collection of resources, including links to software, tutorials and code in several languages. \\
{\sf \bf Link}. \url{https://www.tensors.net}
\end{itemize}

\subsection{Open Source Tensor Network Software Packages} 
Software packages and programs to do various tensor contractions and related tasks is increasingly available in a variety of languages.  This includes TEBD programs as well as others.  Here we have done our best to include an active and up to date listing of the main packages.  It was last updated \today.  As software packages can appear and also become inactive at any time, a more updated listing of available packages can be currently be found at \url{http://tensornetwork.org/software}. 

We have attempted to list some of the most common packages below. See also {\bf QUIMB}, which is fully featured and aimed at applications in physics; {\bf TenPy} is known to be a small library but with very good MPS codes; {\bf Cyclops}, {\bf TT-Toolbox}, {\bf Tensor Toolbox} are more focused on the {\it tensor train} formulation of MPS~\cite{Oseledets2011} and on decompositions such as CP ({\it tensor rank decomposition} or {\it canonical polyadic decomposition}). 

\begin{itemize}
\item[1.] {\sf \bf Title of package}.\hfill {\it ITensor} \\
{\sf \bf Language}. C++\\ 
{\sf \bf Brief description}. ITensor or Intelligent Tensor is well featured tensor network library implementing a bit of everything out of the box.\\
{\sf \bf Status}. Active (version 3.1.1) \\
{\sf \bf Link}. \url{http://itensor.org} 

    \item[2.] {\sf \bf Title of package}.\hfill {\it TNT} \\
{\sf \bf Language}. Various\\ 
{\sf \bf Brief description}. The TNT library contains highly optimised routines for manipulating tensors and routines that can be used to build the most common tensor network algorithms \cite{2017JSMTE..09.3102A}.\\
{\sf \bf Status}. Active \\
{\sf \bf Link}. \url{http://www.tensornetworktheory.org} 

\item[3.] {\sf \bf Title of package}.\hfill {\it TensorNetwork} \\
{\sf \bf Language}. Python\\ 
{\sf \bf Brief description}. A Google/X package for tensor networks described in \cite{2019arXiv190501330R} and related to Google's popular tensor flow package (\url{https://www.tensorflow.org}). \\ 
{\sf \bf Status}. Active \\
{\sf \bf Github}. \url{https://github.com/google/tensornetwork} 

\item[4.] {\sf \bf Title of package}.\hfill {\it Quantomatic} \\
{\sf \bf Language}. Python\\ 
{\sf \bf Brief description}. A diagrammatic proof assistant supporting reasoning with diagrammatic languages with applications to tensor networks \cite{Dixon2010, Kissinger2014}. \\ 
{\sf \bf Status}. Active \\
{\sf \bf Github}. \url{https://quantomatic.github.io}

\end{itemize}

\section{Problems} 

\begin{myexercise}[Tensor products, Density operators, Singular values and purification]\label{exercise:svd1}
Let $\psi = \ket{001}+\ket{010}+\ket{100}$ be a state in $\hilb H_1\otimes\hilb H_2$ with $\dim(\hilb H_1)=2$ and $\dim(\hilb H_2)=4$, and let  
\begin{equation}
 V^\dagger = \left(
\begin{array}{cccc}
 0 & 1 & 0 & 0 \\
 \frac{1}{\sqrt{2}} & 0 & 0 & -\frac{1}{\sqrt{2}} \\
 \frac{1}{\sqrt{2}} & 0 & 0 & \frac{1}{\sqrt{2}} \\
 0 & 0 & 1 & 0
\end{array}
\right).
\end{equation}
\begin{itemize}
 \item[(i)] By writing $\phi = \sum_{ij}C_{ij}\ket{i}\ket{j}$, where $\ket{i}\in\hilb H_1$, $\ket{j}\in\hilb H_2$ are both orthonormal basis, state the values of $i\in\{0,1\},j\in\{0,1,2,3\}$ that correspond to non-zero coefficients of $C_{ij}$ and hence express $\psi$ in the basis $\ket{i},\ket{j}$.    
  \item[(ii)]  Write the $2\times 4$ matrix ${\bf C}=(C_{ij})_{ij}$ and show that ${\bf C}{\bf C}^\dagger$ and ${\bf C}^\dagger {\bf C}$ are (non-normalized) density operators, equivalent to $\rho_1$, $\rho_2$ found by tracing over the systems $\hilb H_1$ and $\hilb H_2$ respectively.  (Here the adjoint $\dagger$ means matrix conjugate transpose.)
 \item [(iii)] From the singular value decomposition, one can write ${\bf C}=U\Sigma V^\dagger$.  For $V$ given above, and $U$ the $2\times 2$ identity matrix, find the $2\times 4$ matrix of singular values $\Sigma$.  
 \item[(iv)] Find purifications for $\rho_1$ and $\rho_2$ (other than $\psi$).  
\end{itemize}
\end{myexercise} 

\begin{myexercise}(Lie product formula---a.k.a.~Trotter-Suzuki decomposition). Show that 
\begin{equation}
e^{A + B}=\lim_{k\rightarrow \infty} \left(e^{A/k}e^{B/k} \right)^k
\end{equation}
for Hermitian matrices $A$ and $B$. 
\end{myexercise}

\begin{myexercise}
The Schmidt number of $\ket{\psi}^{AB}$---denoted Sch($\psi$)---is the rank of the reduced density matrix $\rho_A = \Tr_B(\ketbra{\psi}{\psi})$ where rank of a Hermitian operator is defined as the dimension of its support. 
\end{myexercise}

\begin{myexercise}
Let $\ket{\psi} = \alpha \ket{\phi} + \beta \ket{\gamma}$ and prove that 
\begin{equation}
\text{Sch}(\psi)\geq \|\text{Sch}(\phi) - \text{Sch}(\gamma)\|_1
\end{equation} 
\end{myexercise}

\begin{definition}
A {\it multiset} is a modification of a set allowing multiple instances for each element.
\end{definition}

\begin{definition}
 Denote by $\text{Spec}\{A\}$ the multisetset of eigenvalues of square matrix $A$. 
\end{definition}

\begin{myexercise}[Jacobson's Lemma]\label{prob:spec}
Let non-negative $A,B \in \mathcal{L}(\mathbb{C}_n)$.  Show equality of the non-zero elements in the multisets as $\text{Spec}\{AB\}=\text{Spec}\{BA\}$. 
\end{myexercise}

\begin{myexercise}
A matrix $A\in \mathcal{L}(\mathbb{C}_n)$ is non-negative ($A\geq 0$) if
\begin{equation}
    \bra{\psi}A\ket{\psi}\geq 0
\end{equation}
$\forall \psi \in \mathbb{C}_n$. Show that $A\geq 0$ implies the existence of a unique $B$ such that $B^2=A$.
\end{myexercise}

\begin{myexercise}
Let Hermitian $A,B \in \mathcal{L}(\mathbb{C}_n)$. Now suppose that $A^2 = A$ and $B^2=B$ and further that $\{A, B\}=0$.  Let $\|\psi\|_2=1$ and show that 
\begin{equation}
    \bra{\psi}A\ket{\psi}^2 + \bra{\psi}B\ket{\psi}^2\leq 1
\end{equation}
\end{myexercise}

\begin{myexercise}
Consider 
\begin{equation}
\rho = \frac{1}{4}(\I + X\otimes X -Y\otimes Y)
\end{equation} 
and use Lagrange multipliers to find $\|\rho \|_\infty$. 
\end{myexercise}

\begin{myexercise}
Let density operator $\rho^{AB}$ represent the state of a qubit pair (A, B) and define the {\it spin-flipped} density matrix as 
\begin{equation}
    \tilde{\rho}^{AB} = (Y\otimes Y) \bar{\rho}^{AB}(Y\otimes Y)
\end{equation}
where $\bar{\rho}$ is complex conjugate of $\rho$ in the standard basis.  As $\rho^{AB}$, $\tilde{\rho}^{AB}\geq 0$, so is their non-Hermitian product (Problem \eqref{prob:spec}). Consider $\psi$ as the two-qubit Bell state and $\rho = \ketbra{\psi}{\psi}$.  Find the eigenvalues of $\rho \tilde{\rho}$. 
\end{myexercise}

 \part{Boolean Tensor Networks}\label{sec:btn}
\section{Introduction}

Now we will explain part of a tool set and framework largely following \cite{CTNS, BB11}.  \marginnote{{\it ``It is not of the essence of mathematics to be conversant with the ideas of number and quantity.''} --- George Boole} Hence, we will approach tensor networks by focusing on familiar components, namely Boolean logic gates (and
multi-valued logic gates in the case of qudits), applied to the tensor network context. See Appendix \ref{sec:Boolean} for background on Boolean algebra.  The concept of Boolean linearity and non-linearity was used to form a dichotomy between fundamental Boolean tensor building blocks in \cite{CTNS}. 

This subject has an increasingly long history.  The first categorical model of Boolean circuits (as well as some progress on the quantum case) can be found in the seminal work by Lafont~\cite{boolean03}.  In the condensed matter community, Boolean tensors, such as \COPY{}-tensors, are very commonly used.  Closely related to Boolean tensors are stabilizer tensors (\S~\ref{sec:stabtt}), sometimes called Clifford tensor networks.  The so called ZX-calculus \cite{CD, Coecke2017} is also built using many Boolean tensors.

\section{Overview of the Chapter}\label{sec:overview}

\subsection*{Tensor network representations of quantum states}\label{sec:newrep}

A qudit is a $d$-level generalization of a qubit.\mn{A q-dit generalization of the quantum circuits and the ZX-calculus appears in \cite{BB11}.  The ZX-calculus is equivalent to Clifford circuits plus cups and caps to bend wires and compose states. It was first proposed in terms of interacting quantum observables in \cite{redgreen}.}  As has been seen in the last section, a quantum state of $n$-qudits has an exact representation as a order-$n$ tensor with each of the open legs corresponding to a physical degree of freedom, such
as a spin with $(d-1)/2$ energy levels. Such a representation, shown in
Figure~\ref{fig:tensor-networks}(a) is manifestly inefficient since it will have a
number of complex components which grows exponentially with $n$. The purpose of
tensor network states is to decompose this type of structureless order-$n$ tensor into
a network of tensors whose order is bounded. \mn{Lafont appears to be the first to work towards a categorical model of quantum circuits~\cite{boolean03}. Many advancements have subsequently been made \cite{CD, redgreen, CTNS, BB11}.}

There are now a number of ways to
describe strongly-correlated quantum lattice systems as 
tensor-networks. These include
\begin{itemize}
 \item[(i)] Matrix Product States, MPS~\cite{ostromm1995,fannesnachtwern1992,2010NJPh...12b5005C}
 \item[(ii)] Projected Entangled Pair States, PEPS~\cite{verstraete08,TNSreview09}
 \item[(iii)] Multiscale Entanglement Renormalisation Ansatz, MERA~\cite{vidal2007, 2009arXiv0912.1651V}
 \item[(iv)] Tree Tensor Networks, TNN~\cite{PhysRevA.74.022320,2009PhRvB..80w5127T}
 \item[(v)] Boolean Tensor Network States, BTNS \cite{CTNS, BB11}. 
\end{itemize}

 The central problem faced by all types of tensor networks is that the
resulting tensor network for the quantity $\bra{\psi}(\2 O\ket{\psi}$, where $\2O$
is some product operator, needs to be efficiently contractible (efficient is taken to mean polynomial in the problem size) if any physically meaningful calculations, e.g., expectation values, correlations or probabilities, are to be
computed. For MPS and TTN efficient contractibility follows from the 1D chain or tree-like geometry, while for MERA it follows from its interesting causal cone structure~\cite{2009arXiv0912.1651V}. For PEPS and BTNS, however, exact contraction is not proven to be efficient in general, but can often be rendered efficient if approximations are made~\cite{verstraete08,TNSreview09}.

For MPS and PEPS, shown in Figures~\ref{fig:tensor-networks}(b) and (c), the resulting network of tensors follows the geometry of the underlying physical system, e.g., a 1D chain and 2D grid, respectively. 

Alternatively a Tensor Tree Network (TTN) can be
employed which has a hierarchical structure where only the bottom layer has open physical legs, as shown in Figure~\ref{fig:tensor-networks}(d) 
for a 1D system and Figure~\ref{fig:tensor-networks}(e) for a 2D one.

For MERA the network is similar
to a TTN, as seen in Figure~\ref{fig:tensor-networks}(f) for 1D, but is instead comprised of alternating layers of order-four unitary and order-three isometric
tensors. 

A Boolean tensor network state (BTNS) contains some algebraically constrained 
tensors obeying some clearly defined diagrammatic laws, along with possible generic
tensors. Indeed, when recast, certain widely used classes of tensor network states can be readily exposed
as examples of BTNS \cite{CTNS}.
Specifically, variants of PEPS have been proposed called
string-bond states~\cite{PhysRevLett.100.040501}.  Although these string-bond states,
like PEPS in general, are not
efficiently contractible, they are efficient to sample.

\begin{myfullpage}
\begin{illexample}
\label{fig:tensor-networks}

\begin{center}
     \includegraphics[width=.95\textwidth]{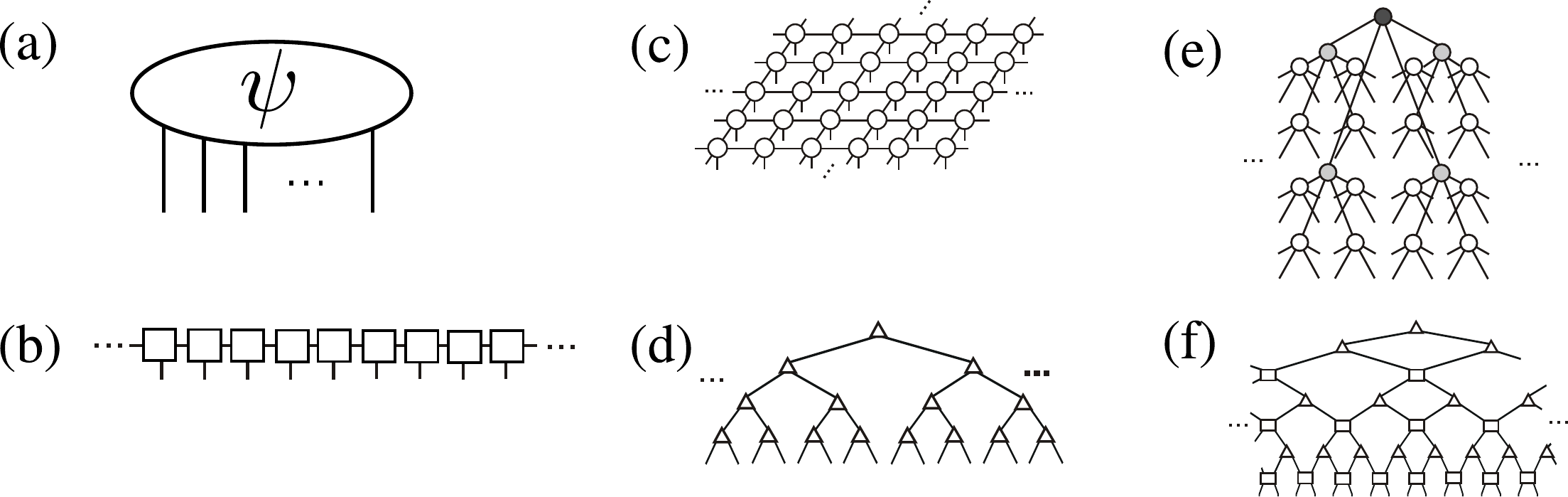}
\end{center}
(a) A generic quantum state $\ket{\psi}$ for $n$ degrees of freedom represented as a tensor with $n$ open legs. (b) A comb-like MPS tensor network for a 1D chain system~\cite{ostromm1995,fannesnachtwern1992}. (c) A grid-like PEPS tensor network for a 2D lattice system~\cite{verstraete08,TNSreview09}. (d) A TTN for a 1D
chain system where only the bottom layer of tensors possess open physical legs~\cite{PhysRevA.74.022320,2009PhRvB..80w5127T}. (e) A TTN for a 2D lattice system. (f) A hierarchically structured MERA network for a 1D chain system possessing unitaries (order-4 tensors) and isometries (order-3 tensors)~\cite{vidal2007, 2009arXiv0912.1651V}. This tensor network can also be generalized to a 2D lattice (not shown).
\end{illexample} 
\end{myfullpage}

\begin{remark}[Generalisation of String Bond States]
 When recast, certain widely used classes of tensor network states can be readily exposed
as examples of BTNS. Specifically, variants of PEPS have been proposed called
string-bond states~\cite{PhysRevLett.100.040501}.  Although these string-bond states,
like PEPS in general, are not efficiently contractible, they are efficient to sample. By this we mean that for
these special cases of PEPS, any given amplitude of the resulting state (for a fixed computational basis state) can be extracted exactly and efficiently,
in contrast to generic PEPS. This permits variational quantum Monte-Carlo calculations to be performed on string-bond states 
where the energy of the state is stochastically
minimized~\cite{PhysRevLett.100.040501}. This remarkable property
follows directly from the use of a tensor, 
called the {\sf COPY}-tensor, which forms one of several tensors in the fixed
toolbox considered in great detail later in this lecture. 

As its name suggests, the {\sf COPY}-tensor
duplicates inputs states in the computational basis, and thus with these inputs
breaks up into disconnected components, as depicted in
Figure~\ref{fig:stringbonds}(a). By using the
{\sf COPY}-tensor as the ``glue" for connecting up a TNS, the ability to sample the
state efficiently is guaranteed so long as the individual parts connected are themselves
contractible. The generality and applicability of this trick can be seen by examining
the structure of string-bond states, as well as other types of 
similar states like entangled-plaquette-states~\cite{2009NJPh...11h3026M} and
correlator-product states~\cite{2009PhRvB..80x5116C}, shown in Figure~\ref{fig:stringbonds}(c)-(e).  
A long-term aim of this work is that by presenting our toolbox of
tensors, entirely new classes of BTNS with similarly desirable contractibility
properties can be devised. 
\end{remark}
\begin{myfullpage}
\begin{illexample}
\begin{center}
\includegraphics[width=17\xxxscale]{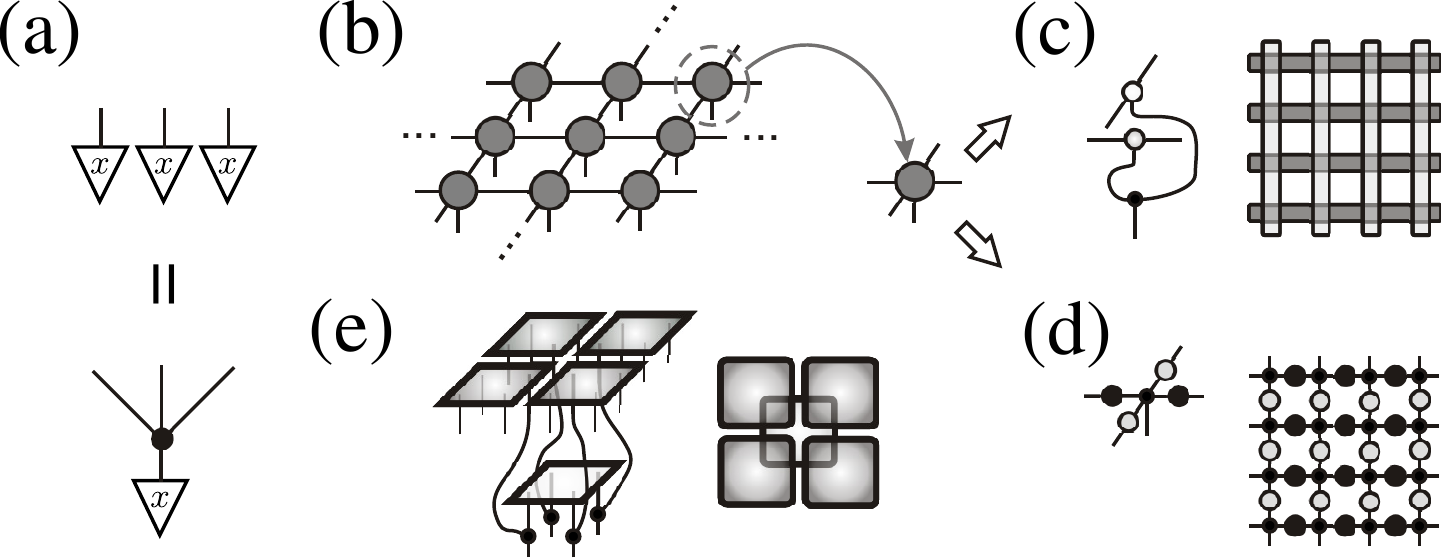}
\end{center}
(a) One of the simplest tensors, the {\sf COPY}-gate or the {\sf COPY}-dot in classical boolean circuits, copies computational basis
states
$\ket{x}$ where $x=0,1$ for qubits and $x=0,1,...,d-1$
for qudits.  The tensor subsequently breaks up into disconnected states.
(b) A generic PEPS in which we expose a single generic order-5 tensor. This
tensor network can neither be contracted nor sampled exactly and efficiently.
However,
if the tensor has internal structure exploiting the {\sf COPY}-tensor, then efficient
sampling
becomes possible. (c) The tensor breaks up into a vertical and a horizontal order-3
tensor joined by the {\sf COPY}-tensor. Upon sampling computational basis states the
resulting contraction reduces to many isolated MPS, each of which are exactly
contractible, for
each row and column of the lattice. This type of state is known as a string-bond
state and can be readily generalized~\cite{PhysRevLett.100.040501}. (d) An even
simpler case is to break the tensor
up into four order-2 tensors joined by a {\sf COPY}-tensor forming a co-called
correlator-product state~\cite{2009PhRvB..80x5116C}. (e) Finally, outside the PEPS
class, there are entangled
plaquette states~\cite{2009NJPh...11h3026M} which join up overlapping tensors (in
this case order-4 ones
describing a $2\times 2$ plaquette) for each plaquette. Efficient sampling is again
possible due to the {\sf COPY}-tensor.\label{fig:stringbonds}
\end{illexample}
\end{myfullpage}

\begin{remark}[From qubits, to qtrits, ..., qdits]
 By invoking known theorems asserting the universality of multi-valued logic~\cite{MVL1} (also called $d$-state switching), our methods can be readily applied to tensors of any finite dimension. This was considered explicitly in \cite{BB11}, where a higher dimensional graphical calculus was developed. 
\end{remark}

\subsection*{Tensor network components defined by diagrammatic laws}\label{sec:comdef}

Here we will review the collection of tensors that form a universal
tensor tool box.
In mathematical logic and computer science, formal semantics is an important field of study.  Throughout this lecture series, we largely adopt the semantics 
developed in \cite{CTNS, BB11} which offer a natural extension of the graphical language of quantum circuits in wide spread use in quantum physics.  This was done by merging the overlapping concepts in various fields into a common language that deviates as little as possible from the standard language of quantum circuits.

To get an idea of how the tensor calculus will work, consider Figure~\ref{fig:F2-presentation},
which forms a presentation of the linear fragment of the Boolean
calculus~\cite{boolean03}): that is, the calculus of Boolean algebra we represent on quantum states, restricted
to the building blocks that can be used to generate linear Boolean functions---as described in \cite{CTNS}.  This is the fragment exactly considered in what is called the ZX-calculus \cite{CD, redgreen}. 

\begin{remark}[Quantum linear states are non-trivial]
 In the setting of tensors, the linear fragment of the tensor calculus is already non-trivial.  In fact, 
 these are the building blocks that appear in my important quantum information protocols and are the backbone of the widely studied class of stabilizer states.  For instance, in \cite{DBCJ11} the authors construct exactly contractible 2D networks representing topological quantum states.
\end{remark}

To recover the full Boolean-calculus, we must append a non-linear Boolean gate as done in \cite{CTNS}: we
use the \AND{}-gate.  Figure~\ref{fig:F2-presentation} together with
Figure~\ref{fig:extraF2} form a full presentation of the
calculus~\cite{boolean03}. The origin and consequences of these relations will be
considered in full detail in \S~\ref{sec:components}. The presentations in Figure~\ref{fig:F2-presentation} together with
Figure~\ref{fig:extraF2} represent a
complete set of defining equations, see Lafont~\cite{boolean03}.

\begin{myfullpage}
\begin{illexample}
\label{fig:F2-presentation}
A summary of the linear fragment of the Boolean
calculus on tensors (reproduced from Lafont \cite{boolean03} and written to match the common quantum circuit notation as in \cite{CTNS}). The plus ($\oplus$) tensors are \XOR{} and the black ($\bullet$) tensors represent
{\sf COPY}.  The details of (a)-(g) will be given in
Sections~\ref{sec:components}.  For
instance, (d) represents the bialgebra law and (g) the Hopf-law (in
the case of qubits $x\oplus x =0$, in higher dimensions
the units $\bra{+}$ becomes
$\bra{0}+\bra{1}+\cdots+\bra{d-1}$). (Read top to bottom.)
\begin{center}
    \includegraphics{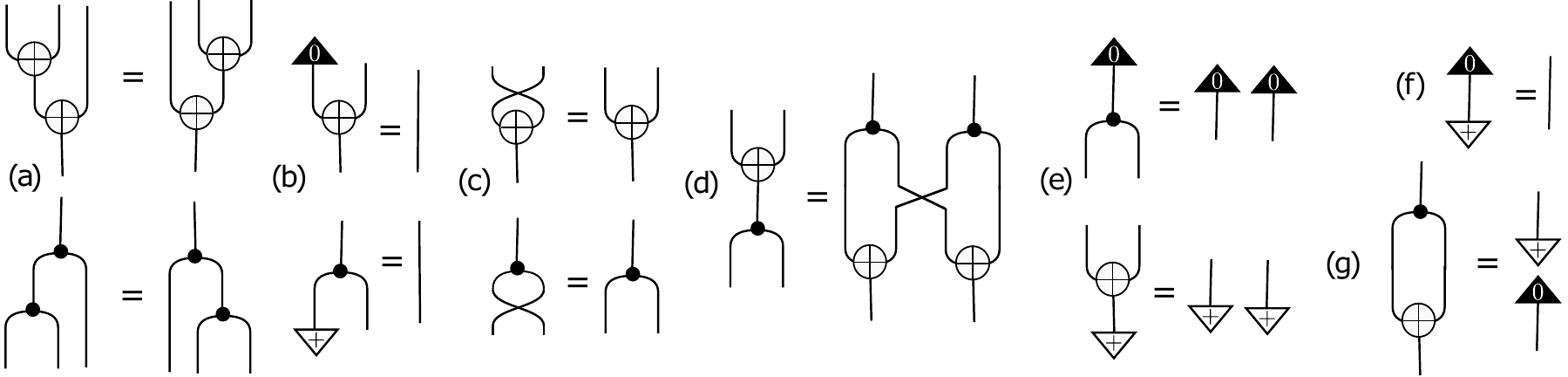}
\end{center}
\end{illexample}
\end{myfullpage}

\begin{remark}[Alternative approaches to the linear fragment]\label{remark:related-work}
The structures in Figure \ref{fig:F2-presentation} which are found by casting classical circuits into tensor networks 
are used to form the building blocks needed to represent \CNOT{}-gates and are related to other approaches \cite{Kissinger09, DP10, Euler09} 
which have been used as a graphical language for measurement based quantum computation and for graph states \cite{DP10,Euler09}.
Our method of arriving at this collection of tensors (Figure \ref{fig:F2-presentation}) affords more general options and our presentation of the linear fragment 
here offers (i) improved semantics and (ii) a better theoretical understanding by pinpointing precisely that these networks correspond to the so called linear fragment 
of the \XOR{} or mod sum algebra carries with it new proof techniques.  These results were found by casting the theory of classical networks into a theory of tensors, 
which carried with it all of the known and desirable graphical rewrite properties from classical networks, and from this and some other methods, in \cite{CTNS} we 
assert that we have subsumed the existing graphical languages present in quantum information science by considering the 
the graphical system appearing in Figure \ref{fig:extraF2} together with the linear fragment from Figure \ref{fig:F2-presentation}.   
\end{remark}

\begin{myfullpage}
\begin{illexample}
\label{fig:extraF2}

\begin{center}
\includegraphics[width=\textwidth]{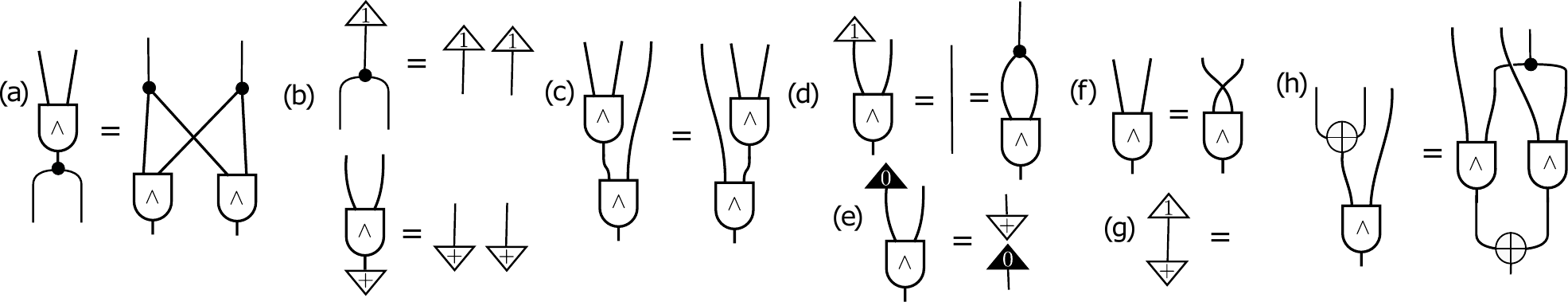}
\end{center}
A summary of the quantum \AND-tensor calculus we apply to quantum information processing and tensor networks (reproduced from Lafont \cite{boolean03} and written to match the common quantum circuit notation as in \cite{CTNS}).  This figure with 
Figure~\ref{fig:F2-presentation} is a summary of the Boolean-calculus. 
The details of (a)-(g) will be given in
Sections~\ref{sec:components}.  For
instance, (h) represents distributivity of \AND{}($\wedge$) over \XOR{} ($\oplus$),
and (d) shows that $x\wedge x = x$. (Diagrams read top to bottom.)
\end{illexample}
\end{myfullpage}

\paragraph{Bending wires.}  Proceeding axiomatically we need to add additional tensors
to represent operators and quantum states. Our network model of quantum states
requires that we are able to bend wires.  As is well known in modern algebra, we 
can hence define transposition graphically (see Figure~\ref{fig:adjoints} (d)). Cups and caps (wire bending) was also used in the categorical model of teleportation \cite{catQM}---see also the early work on graphical representations of atemporal circuits \cite{Atemp06} and the diagramatic model of teleportation therein.  

The way forward is to add what mathematicians refer to as \textit{compact
structures} (see \S~\ref{sec:bends} for further 
details).  These compact structures are given diagrammatically as
\begin{center}
\includegraphics[width=5\xxxscale]{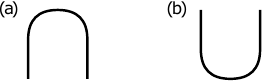}
\end{center}
and as will be explored in \S~\ref{sec:bends} these two structures allow us to
formally bend wires and to define the transpose of a linear
map/state, and provide a formal way to reshape a matrix.  We understand (a) above as
a cup, given as the generalized
Bell-state 
\be 
\sum_{i=0}^{d-1}\ket{ii}= \ket{00}+\ket{11}+...
\ee 
and (b) above as the so-called cap, Bell-costate
\be 
\sum_{i=0}^{d-1}\bra{ii}=\bra{00}+\bra{11}+...
\ee 
or \textit{effect}.

\begin{remark}[Normalization factors omitted]
As we have mentioned before, we will often omit global scale factors (contracted tensor
networks with no open wires are sent to blank on the page).  This is done for ease of presentation.  We note that for Hilbert space $\2 H$ there is a natural isomorphism
\begin{equation*}
\7 C\otimes \2 H \cong \2 H \cong \2 H\otimes \7 C,
\end{equation*}
which allows one to define equality up to a scale factor (called the scalar gauge).  Care must be taken when summing over diagrams where a relative scale factor could exist. 
\end{remark}

As readers will recall from \S~\ref{sec:tensor}, compact structures provide a formal way to bend wires --- indeed, we can now connect a diagram represented with an operator with spectral decomposition 
$$
\sum_i
\beta_i\ket{i}\bra{i},
$$ 
bend all the open wires (or legs) towards the same direction
and it then can be thought of as representing a state 
$$
\sum_i \beta_i\ket{i}\ket{\overline{i}},
$$ 
where overbar is complex conjugation), bend them
the other way and it then can be thought of as
representing a measurement outcome 
$$
\sum_i
\beta_i\bra{\overline{i}}\bra{i}, 
$$
that is an effect.  One
can also connect inputs to outputs, contracting indices and creating larger and
larger networks. With these
ingredients in place, let us now consider the class of Boolean quantum states.

\begin{remark}[Overbar notation]
The isomorphism 
\begin{equation}
    \sum_i \beta_i\bra{\overline{i}}\bra{i}\cong \sum_i
\beta_i\ket{i}\bra{i} \cong \sum_i
\beta_i\ket{i}\ket{\overline{i}},
\end{equation}
for a real valued basis becomes
\begin{equation}
    \sum_i
\beta_i\bra{i}\bra{i}\cong \sum_i
\beta_i\ket{i}\bra{i} \cong \sum_i
\beta_i\ket{i}\ket{i},
\end{equation}
which amounts to flipping a bra to a ket and vise versa.  Here we will always assume a real valued 
basis so will always omit the overbar on kets.  
\end{remark}

\subsection*{Defining the class of Boolean tensor network states}\label{sec:c}
Figure \ref{fig:andtensor} which depicts a simple but
key network building block: the use of the so-called ``quantum \AND-tensor'' which we consider in detail in \S~\ref{sec:AND}.  This is a representation of
the familiar Boolean operation in the bit pattern of a three-qubit quantum state as
\begin{equation*}
\ket{\psi_\AND} \bydef
\sum_{x_1,x_2\in\{0,1\}}\ket{x_1}\otimes\ket{x_2}\otimes\ket{x_1\wedge
x_2}=\ket{000}+\ket{010} +\ket{100} +\ket{111},
\end{equation*}
and hence the truth table of a function is encoded in the bit pattern of the
superposition state. This utilizes a representation
of Boolean gates on quantum states.

\begin{figure}[t]
\centering
\includegraphics[width=12\xxxscale]{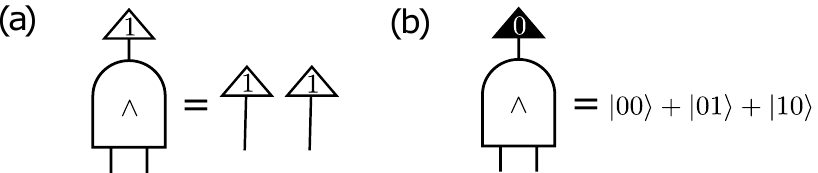}
\caption{Example of the Boolean quantum \AND-state or tensor from \cite{CTNS}.  In (a) the
tensors output is contracted with $\bra{1}$ resulting in the tensor splitting to 
the product state $\ket{11}$.  In (b) the tensors output is contracted with $\bra{0}$ resulting in the entangled state
$\ket{00}+\ket{01}+\ket{10}$.}\label{fig:andtensor}
\end{figure}

We desire to construct tensor networks with components that take binary values $0$ or $1$.  This is done by contracting the output of a switching tensor network with $\ket{1}$.  The function realized in the tensor network is constructed in such a way that 
any time the input qubits states represent a
desired term in a quantum state
(e.g.\ create a function that outputs logical-one on designated inputs
$\ket{00}$, $\ket{01}$ and $\ket{10}$ and zero otherwise as shown in Figure
\ref{fig:andtensor}). We then insert a $\ket{1}$
at the network output.  This procedure
recovers the desired Boolean state as illustrated in Figure \ref{fig:bs}(a) with the
resulting state appearing in~\eqref{eqn:Booleaninput}.  
\be \label{eqn:Booleaninput}
\begin{aligned}
&\sum_{x_1,x_2,...,x_n\in\{0,1\}}\braket{1}{f(x_1
, x_2 ,...,x_n)}\ket{x_1,x_2,...,x_n} =\\ &\sum_{x_1,x_2,...,x_n\in\{0,1\}}f(x_1
, x_2 ,...,x_n)\ket{x_1,x_2,...,x_n}.
\end{aligned}
\ee 
The network representing the circuit is read backwards from output to
input.  Alternatively the full class of Boolean states is defined as:

\begin{definition}[The Class of Boolean Quantum States]
We define the class of Boolean states as those states which can be
expressed up to a global scalar factor in the form \eqref{eqn:Booleanstates}
\begin{equation}\label{eqn:Booleanstates}
\sum_{x_1,x_2,...,x_n\in\{0,1,...,d-1\}}\ket{x_1,x_2,...,x_n}\ket{f(x_1
, x_2 ,...,x_n)},
\end{equation} 
where 
$$
f:\{0,1\}^n\rightarrow \{0,1\},
$$
is a switching function and the sum is
taken over all variables $x_j$ taking 0 and 1 for qubits (see Figure \ref{fig:bs} (a)).  
\end{definition}

\begin{remark}[Better notation]
 In practice it is often simpler to express equations such as 
 \be 
 \sum_{x_1,x_2,...,x_n\in\{0,1\}}f(x_1, x_2,...,x_n)\ket{x_1,x_2,...,x_n},
\ee 
as 
\be 
\sum_{\textbf{x}} f(\textbf{x})\ket{\textbf{x}},
\ee 
where the sum is over all assignments of $\textbf{x}:= x_1,x_2,...,x_n$.  
\end{remark}

\begin{figure}[ht]
\centering
\includegraphics[width=7\xxxscale]{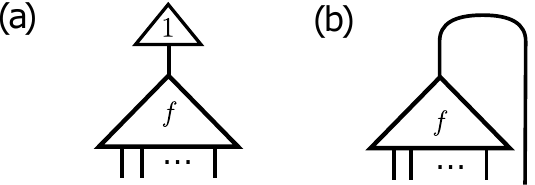}
\caption{A general Boolean quantum state arising from function $f$ can
either be formed as (a) by network contraction with a logical-one at the output of the circuit as
described by~\eqref{eqn:Booleaninput} or (b) by bending the output of the
tensor network around, as in~\eqref{eqn:Booleanstates}.}\label{fig:bs}
\end{figure}

\begin{fullpage}

\begin{example}[{\sf GHZ}-states and \W-states]
 Examples of Boolean states include the familiar {\sf GHZ}-state
$\ket{00\cdots0}+\ket{11\cdots1}$ which on qudits in dimension $d$ becomes 

\begin{equation}
 \ket{{\sf GHZ}_d}= \sum_{i=0}^{d-1} \ket{i}\ket{i}\ket{i} =
\ket{0}\ket{0}\ket{0}+\ket{1}\ket{1}\ket{1}+\cdots+\ket{d-1}\ket{d-1}\ket{d-1}, 
\end{equation}
as well as the \W-state
$\ket{00\cdots1}+\ket{01\cdots0}+\cdots+\ket{10\cdots0}$ which again on qudits
becomes
\be
\begin{aligned}\label{eqn:w-stated}
&\ket{\W_d} := \sum_{i=1}^{d-1} \sum_{j=1}^{3} (X_j)^i\ket{0}\ket{0}\ket{0} =
\ket{0}\ket{0}\ket{1}+\ket{0}\ket{1}\ket{0}+\ket{1}\ket{0}\ket{0}+\ket{0}\ket{0}\ket{
2}+\\
&\ket{0}\ket{2}\ket{0}+
\ket{2}\ket{0}\ket{0}+\cdots \cdots+\ket{0}\ket{0}\ket{d-1}+\ket{0} \ket{d-1}\ket{0} +\ket{d-1}\ket{0}\ket{0}. 
\end{aligned}
\ee

In~\eqref{eqn:w-stated} the operator
$X\ket{m}=\ket{m+1(\text{mod}~d)}$ is
one way to define negation in higher dimensions \cite{BB11}.  The subscript labels the ket
(labeled 1,2 or 3 from left to right) the operator acts $i$ times.
\end{example}
\end{fullpage}
\begin{remark}[Extensions to arbitrary quantum states]
What is clear from this definition is that Boolean states are always composed of equal
superpositions of sets of computational basis states, as the allowed scalars take binary values, 0,1. 
Despite this apparent limitation,
tensor networks composed only of Boolean components can nonetheless describe
any quantum state. To do this we require a minor extension to include superposition
input/output states, e.g. order-1 tensors of the form
$\ket{0}+\beta_1\ket{1}+\cdots+\beta_{d-1}\ket{d-1}$. This
gives a universal class of generalized Boolean tensor networks
which subsumes the important subclass of Boolean states.
This class is then shown to form a nascent example
of the exhaustiveness of BTNS and to give rise to a wide class of
quantum states that we show are exactly and efficiently sampled \cite{CTNS}.   
\end{remark}

 \begin{remark}[Comparison to other approaches]
  The theory of tensor network states has received recent interest fueled by developments that have been made related to 
  the important problem of quantum simulation, using tensor contraction algorithms \cite{MPSreview08,TNSreview09}. There is also an established language of quantum circuits, appearing in most text books on quantum computing and quantum information.  These circuits 
  are effectively tensor networks and efforts have been made to form an extension and unite the two \cite{BB11}.  
 \end{remark}

\section{Quantum Legos: a tensor tool box}\label{sec:components}
A key point to this is that the introduction of Boolean logic gate tensors
into the tensor network context allows the seminal logic gate universality results from classical network
theory to be applied in the setting of tensor network states. 

\begin{remark}[Dual spaces]
Any vector space $\2V$ has a dual $\2V^*$: this is the space of linear functions $f$
from $\2V$ to the ground field $\7C$, that is $f: \2V \rightarrow \7C$. This defines
the dual uniquely.  We must however fix a basis to identify the vector
space $\2V$ with its dual.  Given a basis, any basis vector $\ket{i}$ in $\2V$ gives
rise to a basis vector $\bra{j}$ in $\2V^*$ defined by $\braket{j}{i} = \delta^j_i$ (Kronecker's delta). 
This defines an isomorphism $\2V\rightarrow \2V^*$ sending $\ket{i}$ to $\bra{i}$ and
allowing us to identify $\2V$ with $\2V^*$.  In what follows, we will fix a particular
arbitrarily chosen basis (called the computational basis in quantum information
science).  We will now concentrate on Boolean building blocks that are used
in our construction.  
\end{remark}


\subsection*{Review of Boolean algebra}

Here we have reviewed Boolean tensor building blocks.  These building blocks appear in many applications of tensor network algorithms, including \cite{tommy}. Here we encourage the readers to review 
Boolean algebra to better understand the presented structures.  
We advise readers to quickly review Appendix \ref{sec:XOR} on \XOR-algebra as well as 
Appendix \ref{appendix:kmap} on the method of Karnaugh map equation reduction.  The following sections
will assume these methods are known to the reader.

\subsection{{\sf COPY}-tensors: the ``diagonal''}\label{sec:COPY}

The copy operation arises in digital circuits~\cite{Davio78, Weg87}
and more generally, in the context of category theory and algebra, where it is
called a diagonal \cite{CD, redgreen}. The
operation is readily defined in any finite dimension as
\begin{equation}
\bigtriangleup \bydef \sum_{i=0}^{d-1} \ket{ii}\bra{i}.
\end{equation}
As $\ket{0}$ and
$\ket{1}$ are
eigenstates of $\sigma^z$, we might give $\bigtriangleup$ the alternative name of
{\sf Z}-copy.  In the
case of qubits {\sf COPY} is
succinctly presented by considering the map $\bigtriangleup$ that copies
$\sigma^z$-eigenstates:
$$
\bigtriangleup:\7C^2\rightarrow\7C^2\otimes\7C^2:
\begin{cases}
\ket{0} \mapsto \ket{00}\\
\ket{1} \mapsto \ket{11}
\end{cases}
$$
This map can be written in operator form as $\bigtriangleup:\ket{00}\bra{0} +
\ket{11} \bra{1}$ and
under cup/cap induced duality (on the right bra) this state becomes a {\sf GHZ}-state
as
$\ket{\psi_{\sf GHZ}} = \ket{000} + \ket{111}\cong\ket{00}\bra{0} +
\ket{11} \bra{1}$. The standard properties of {\sf COPY} are
given diagrammatically in Figure~\ref{fig:copygate} and a list of its relevant
mathematical properties are found in Table~\ref{fig:copygatesum}.

\begin{illexample}
\label{fig:copygate}
\begin{center}
\includegraphics[width=.95\textwidth]{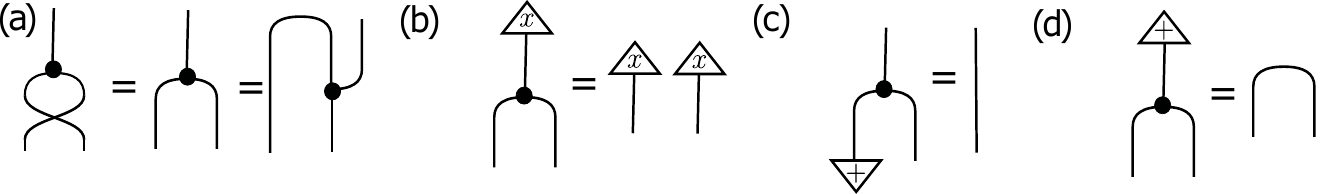}
\end{center}
Some diagrammatic properties of the {\sf COPY}-tensor. (a) Full-symmetry.
(b) Copy points, e.g.\ $\ket{x}\mapsto\ket{xx}$ for $x=0,1$ for qubits.  (c) The unit --- in this case
the unit corresponds to deletion, or a map to the
terminal object which is given as $\bra{+}\bydef\bra{0}+\bra{1}$ for qubits and
$\bra{+}\bydef\bra{0}+\bra{1}+\cdots+\bra{d-1}$ for $d$
dimensional qudits.  (d) Co-interaction with the unit creates a Bell state.
\end{illexample}

\begin{remark}[The {\sf COPY}-gate from \CNOT]
The \CNOT-gate is defined as $\ket{0}\bra{0}_1\otimes \I_2 +
\ket{1}\bra{1}_1\otimes \sigma^x_2$.  We will set the input that the target acts on
to $\ket{0}$ then calculate $\CNOT (\I_1\otimes
\ket{0}_2)=\ket{0}\bra{0}_1\otimes \ket{0}_2 + \ket{1}\bra{1}_1\otimes
\ket{1}_2$.  We have hence defined the desired {\sf COPY} map copying states from the
Hilbert space with label $1$ (subscript) to the joint Hilbert space labeled $1$ and
$2$.   
\end{remark}

\begin{remark}[The types of possible states built from \COPY]
An alternative definition of the \COPY-tensor would be to define the operation by raising or lowering indices
 on $\delta^{ij}_{~~k}$, a Kronecker delta function.   
 In that regard, one might write the n-party \GHZ-state as 
 \be 
 \psi_\GHZ = \sum \delta^{ijk...l}\ket{ijk...l}.
 \ee 
Tensor products of state of this form are precisely the only types of states constructible with the \COPY-tensor alone.  
\end{remark}

\subsection{\XOR-tensors: the ``addition''}\label{sec:XOR}
The classical \XOR-gate implements exclusive disjunction or addition (mod 2 for
qubits) and is denoted by the
symbol $\oplus$~\cite{Cohn62, xor70}. We note that for multi-valued
logic a modulo subtraction gate can also be defined as in~\cite{BB11}.

\begin{remark}[relation to \COPY~\cite{CD, redgreen}]
 As is a well known fact in algebra, the \XOR-gate is simply a Hadamard
transform of the {\sf COPY}-gate, appropriately applied to all of the tensors
legs.  This can be captured diagrammatically in the slightly different form: 
\begin{center}
\includegraphics[width=0.250\textwidth]{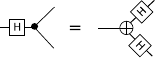}
\end{center}
\end{remark}

\begin{remark}[Symmetry]
 The \XOR-tensor is symmetric under leg exchange.  In components, if we write 
 $X_{ijk}$ then $X_{ijk}=1$ for $i=j=k=0$ (mod 2) and 0 otherwise.  
\end{remark}

To define the gate on the computational basis, we
consider $f(x_1,x_2)=x_1\oplus x_2$ then $f=0$ corresponds to
$(x_1,x_2)\in\{(0,0),(1,1)\}$ and $f=1$ corresponds to $(x_1,x_2)\in\{(1,0),(0,1)\}$,
where the
truth table for $\XOR$ follows. 
\begin{center}
\begin{tabular}{c|c|c}
$~x_1~$ & $~x_2~$ & $f(x_1,x_2)=x_1\oplus x_2$ \\ \hline
0 & 0 & 0 \\
0 & 1 & 1 \\
1 & 0 & 1 \\
1 & 1 & 0
\end{tabular}
\end{center}
Under cap/cap induced duality, the state defined by $\XOR$ is given as
\begin{equation}
\ket{\psi_\oplus}\bydef\sum_{x_1,x_2\in\{0,1\}}\ket{x_1}\ket{x_2}\ket{f(x_1,x_2)}
=\ket { 000 }
+\ket { 110 }
+\ket{011}+\ket{101}, 
\end{equation}
which is in the {\sf GHZ}-class by LOCC equivalence viz.
$\ket{\psi_\oplus}=\H\otimes\H\otimes\H(\ket{000}+\ket{111})$.  The operation of
\XOR~is
summarized in Table~\ref{fig:xorgatesum}.  Since the
\XOR-gate is related to the {\sf COPY}-gate by a change of basis, its diagrammatic
laws
have the same structure as those illustrated in Figure~\ref{fig:copygate}.  The gate
acting backwards (co-\XOR) is
defined on a basis as follows:
$$
\oplus:\7C^2\rightarrow\7C^2\otimes\7C^2:
\begin{cases}
\ket{0} \mapsto \ket{00}+\ket{11}\\
\ket{1} \mapsto \ket{10}+\ket{01}
\end{cases}~~~~\text{or equivalently}~~~~~~~\begin{cases}
\ket{+} \mapsto \ket{++}\\
\ket{-} \mapsto \ket{--}
\end{cases}
$$

\subsection*{Generating the affine class of networks}
Thus far we have presented the {\sf XOR}- and {\sf COPY}- gates.  This system allows
us to create the linear class of Boolean functions.  As explained in the present
subsection, this class can be extended to to the affine class by introducing either a
gate that acts like an inverter, or by appending a constant $\ket{1}$ into our
system.  This constant will allow us to use the {\sf XOR}-gate to create an
inverter.   

\begin{definition}[Complemented vs uncomplemented Boolean variables]
A \textit{complemented Boolean variable} is a
Boolean variable that appears in negated form, that is $\neg x$ or
written equivalently as $\overline{x}$. Negation of a Boolean variable $x$ can be
expressed as the \XOR{} of
the variable with constant $1$ as $\overline{x}=1\oplus x$. Whereas
\textit{Uncomplemented Boolean
variables} are Boolean variables that do not appear in negated form (e.g.\ negation
is not allowed).  Linear Boolean functions contain terms with Uncomplemented Boolean
variables that
appear individually (e.g.\ variable
products are not allowed such as $x_1x_2$ and higher orders etc., see
\S~\ref{sec:Boolean}).   
\end{definition}

\begin{definition}[Linear Boolean functions]
Linear Boolean functions take the general form
\be
f(x_1,x_2,...,x_n)=c_1x_1\oplus c_2x_2\oplus ...\oplus c_nx_n,
\ee 
where the vector $(c_1,c_2,...,c_n)$ uniquely determines the function.   
\end{definition}

\begin{definition}[Affine boolean functions]
The affine
Boolean functions take the same general form as linear functions.  However, functions
in the affine class allows variables to appear in both complemented
and uncomplemented form. Affine Boolean functions take the general form
\begin{equation}\label{eqn:affine1}
f(x_1,x_2,...,x_n)=c_0\oplus c_1x_1\oplus c_2x_2\oplus ...\oplus c_nx_n,
\end{equation}
where $c_0=1$ gives functions outside the linear class.  From the identities,
$1\oplus 1=0$ and $0\oplus x=x$ we require the introduction of only one constant
($c_0$), see Appendix \ref{sec:Boolean}. 
\end{definition}

Together, \XOR{} and {\sf COPY} are not universal for classical circuits. When used
together, \XOR- and {\sf COPY}-gates compose to create networks representing the class of linear
circuits. The affine circuits are generated by considering the constant $\ket{1}$.
The state $\ket{1}$ is
indeed copied by the black tensor. However, our axiomatization
(Figure~\ref{fig:F2-presentation}) proceeds through 
considering the \XOR- and {\sf COPY}-gates together with $\ket{+}$, the unit for
{\sf COPY} and $\ket{0}$ the unit for \XOR.  It is by appending the constant
$\ket{1}$ into the
formal system (Figure~\ref{fig:F2-presentation}) that the affine class of circuits
can be realized.

\begin{remark}[Affine functions correspond to a basis]
Each affine function is labeled by a corresponding bit pattern.  This can be
thought of as labeling the computational basis, as
states of the form $\ket{\{0,1\}^n}$ are in correspondence with polynomials in
algebraic normal form (see Appendix \ref{sec:Boolean}).
\end{remark}

\subsection{Quantum \AND-state tensors: Boolean universality}\label{sec:AND}
The proceeding sections have introduced enough machinery to generate the linear and
affine classes of classical circuits.  These classes are not universal.  To recover
a universal system one will introduce the \AND{} gate as a tensor \cite{CTNS}.  The multiplicative unit for this gate is $\bra{1}$ and
so can be used to elevate the linear fragment to the affine class.  

The \AND{} gate (that is, $\wedge$) implements logical conjunction~\cite{Davio78,
Weg87}.  The \AND-gate relates to the \OR-gate via De
Morgan's law.
This can be captured diagrammatically as
\begin{center}
\includegraphics[width=0.35\textwidth]{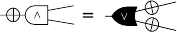}
\end{center}
To define the gate on the computational basis, we
consider $f(x_1,x_2)=x_1\wedge x_2$ which we write in short hand as $x_1 x_2$.  Here
$f=0$
corresponds to $(x_1,x_2)\in\{(0,0),(0,1),(1,0)\}$ and $f=1$ corresponds to
$(x_1,x_2)=(1,1)$.

Under cap/cap induced duality, the state defined by $\AND$ is given as
\begin{equation}\label{eqn:quantumAND}
\ket{\psi_\wedge}\bydef\sum_{x_1,x_2\in\{0,1\}}\ket{x_1}\ket{x_2}\ket{f(x_1,x_2)}
=\ket { 000 }
+\ket { 010 }
+\ket{010}+\ket{111}.
\end{equation}
The key diagrammatic properties of {\sf AND}
are presented  in Figure~\ref{fig:ANDgate} and the gate is 
summarized in Table~\ref{fig:andgate}. 

The gate acting backwards (co-\AND) is
defined on a basis as follows: 

$$
\wedge:\7C^2\rightarrow\7C^2\otimes\7C^2:
\begin{cases}
\ket{0} \mapsto \ket{00}+\ket{01}+\ket{10}\\
\ket{1} \mapsto \ket{11}
\end{cases}~\text{or}~~\begin{cases}
\ket{+} \mapsto \ket{++}\\
\ket{-} \mapsto \ket{00}+\ket{01}+\ket{10} - \ket{11}
\end{cases}
$$

\begin{illexample}
 \label{fig:ANDgate}
Salient diagrammatic properties of the \AND-tensor.
\begin{center}
\includegraphics[width=0.95\textwidth]{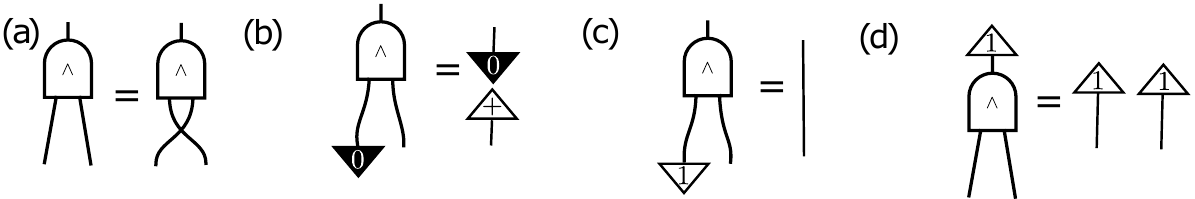}
\end{center}
 (a) Input-symmetry. (b)
Existence of a zero or fixed-point.
(c) The unit $\ket{1}$.  (d) Co-interaction with the unit
creates a product-state.  Note that the gate forms a valid
quantum operation when run backwards as in (d).
 \end{illexample}

\begin{example}[\AND-states from Toffoli-gates]\label{ex:ANDfromtoff}
The \AND-state is readily constructed from the Toffoli gate \cite{CTNS} as illustrated in
Figure~\ref{fig:ANDfromtoff}.  This allows some
interesting states to be created 
experimentally, for instance, post-selection of the output to $\ket{0}$
would yield the state $\ket{00}+\ket{01}+\ket{10}$. 
\end{example}

 \begin{remark}[Quantum universality and universal
states]\label{re:quantumuniversal}
 The problem of determining universal quantum gate families has received significant
 research interest resulting in the simplistic universal gate sets appearing
 in~\cite{A03,Shi:02,RG02} and elsewhere.  It is even known that Toffoli and Hadamard are
universal
 for quantum computation~\cite{A03}. Toffoli can be generated by combining one
 \AND-state and two {\sf COPY}-states (see also Figure \ref{fig:hadamard}).  \mn{In \cite{CTNS} the ZX calculus (Clifford gates plus cups and caps to bend wires) by considering the addition of \AND{}-states which can represent Toffoli gates was shown to be quantum computationally universal \cite{CTNS}.  With the addition of scalars, the \AND+ZX calculus presented in \cite{CTNS} was proven to be approximately universal  for linear maps between qubits.} 
 \end{remark}

\begin{theorem}[Toffoli contracts to give the \AND{}-state \cite{CTNS}]\label{fig:ANDfromtoff}
The following rewrites exhibit the use units to prepare the \AND-state \cite{CTNS}.  Using this state together with single qubit \NOT-gates, one can construct tensor networks which any Boolean qubit state as well as any of the states appearing in Table~\ref{fig:BooleanStates}.  We note that the box around the Toffoli gate (left) is meant to illustrate a
difference between our notation and that of quantum circuits.  In our notation, those dots inside the box would merge which of course is not a valid unitary gate.
\begin{center}
\includegraphics[width=0.95\textwidth]{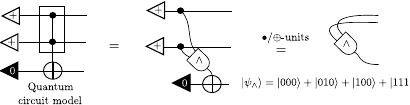}
\end{center}
\end{theorem}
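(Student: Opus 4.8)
The plan is to verify the diagrammatic rewrite by computing the action of the Toffoli gate on the fixed ancilla/unit states appearing on the left-hand side, and checking that the result is the $\AND$-state $\ket{\psi_\wedge}$ defined in \eqref{eqn:quantumAND}. Recall the Toffoli gate acts on three qubits as $\mathsf{TOFF}\ket{x_1,x_2,x_3} = \ket{x_1, x_2, x_3 \oplus (x_1 \wedge x_2)}$; equivalently $\mathsf{TOFF} = \sum_{x_1,x_2,x_3} \ketbra{x_1,x_2,x_3 \oplus x_1 x_2}{x_1,x_2,x_3}$. First I would fix conventions: the left diagram feeds the target wire of the Toffoli with the state $\ket{0}$ and bends the output target leg (or, in the version with explicit units, contracts with $\bra{1}$ and the appropriate $\COPY$-units) so that the two control wires and the target output become the three open legs of the resulting tensor. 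Setting $x_3 = 0$ gives $\mathsf{TOFF}\ket{x_1,x_2,0} = \ket{x_1,x_2, x_1 \wedge x_2}$, so summing over $x_1,x_2 \in \{0,1\}$ yields exactly $\sum_{x_1,x_2}\ket{x_1}\ket{x_2}\ket{x_1 \wedge x_2} = \ket{000}+\ket{010}+\ket{100}+\ket{111} = \ket{\psi_\wedge}$, which is the $\AND$-state. This is the algebraic core of the claim.

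Next I would justify the intermediate diagrammatic steps shown in the figure, rather than collapsing everything into one computation, since the statement is really about the sequence of rewrites. The key ingredients are all available from earlier in the excerpt: the copy property of the $\COPY$-tensor and its unit $\bra{+}$ (Definition~\ref{def:copy-prop} and the $\COPY$ discussion in \S\ref{sec:COPY}), the fact that $\ket{1}$ is a copy point of the black tensor, the action of the $\CNOT$ as the contraction $\sum_m \COPY\indices{^{qm}_i}\XOR\indices{^r_{mj}}$ from \eqref{eqn:cnot}, and Penrose wire-bending duality (\S\ref{sec:Bwires}) used to bend the target output leg down to a ket so that the operator becomes a state. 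I would walk through the figure left to right: (i) inject the unit $\ket{1}$ as the multiplicative unit of the $\AND$-fragment where indicated and the $\COPY$-units $\bra{+}$ where the diagram prunes legs; (ii) apply the $\CNOT$-as-contraction identity to expose the internal $\COPY$ and $\XOR$ structure of the Toffoli's controlled-NOT part; (iii) use the copy property to let the black dots absorb the fixed basis states, and (iv) bend the remaining output wire to obtain $\ket{\psi_\wedge}$. At each stage the equality is an instance of a previously stated rewrite rule, so the chain of diagrams is valid.

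The main obstacle I anticipate is purely bookkeeping: getting the identification of which wire of the Toffoli is the target, which ancillas carry $\ket{0}$ versus the unit $\ket{1}$, and which legs are pruned by $\bra{+}$ exactly right, so that the orientations and the complex-conjugation bars (which are trivial here since all amplitudes are real, cf.\ the overbar remark) line up with the figure; a sign or a swapped control/target would give the wrong three-qubit state. There is also the minor subtlety flagged in the statement itself --- that the box drawn around the Toffoli is \emph{not} a legal unitary circuit, because inside it the dots are allowed to merge as tensors --- so I would be careful to treat the left-hand side as a tensor network contraction, not as a circuit to be read unitarily. Once the conventions are pinned down, the verification reduces to the one-line computation $\mathsf{TOFF}\ket{x_1,x_2,0} = \ket{x_1,x_2,x_1\wedge x_2}$ together with citing the already-established graphical identities, and the consequence for building arbitrary Boolean qubit states follows from Theorem~\ref{theorem:btns} (every switching function decomposes into fundamental gates, and $\AND$ plus $\NOT$ plus $\COPY$ is Boolean-universal).
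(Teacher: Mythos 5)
Your proposal is correct and follows essentially the same route as the paper, whose ``proof'' is precisely the displayed chain of rewrites: the unit laws prune the Toffoli's internal \COPY{}/\XOR{} structure to leave the bare \AND{}-tensor, and your computation $\ket{x_1,x_2,0}\mapsto\ket{x_1,x_2,x_1\wedge x_2}$ summed over the control values (equivalently, feeding $\ket{+}\ket{+}\ket{0}$ into the gate) is exactly the algebraic content of that figure. One caution on bookkeeping: the units invoked in the rewrite are $\ket{+}$ (unit/counit of \COPY{}) on the pruned control legs and $\ket{0}$ (unit of \XOR{}) on the target leg, whereas contracting with the \AND{}-unit $\ket{1}$, which you mention in passing, would collapse the \AND{}-tensor itself by its own unit law and so cannot appear in this derivation.
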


\begin{theorem}
 Hadamard follows from contracting the \AND-state 
together with $\ket{-}\bydef\frac{1}{\sqrt{2}}(\ket{0}-\ket{1})$ \cite{CTNS}.  
\begin{proof}
\begin{center}\label{fig:hadamard}
    \includegraphics[width=9\xxxscale]{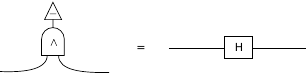}
\end{center}
Follows by direct calculation. 
\end{proof}
\end{theorem}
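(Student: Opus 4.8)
The plan is to plug the state $\ket{-}$ into the output leg of the \AND-tensor --- equivalently, into the co-\AND\ map obtained by bending that leg backwards --- to produce a two-qubit state, and then to bend one of the two remaining legs via Penrose duality, reading off a $2\times2$ operator that we identify with $\H$ (up to a scalar that is irrelevant in the scalar gauge). Everything reduces to a one-line computation, exactly as the statement anticipates.

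First I would write the \AND-state as $\ket{\psi_\wedge}=\ket{000}+\ket{010}+\ket{100}+\ket{111}$ and regroup it as the co-\AND\ map, whose action (tabulated earlier in the chapter) sends $\ket{0}\mapsto\ket{00}+\ket{01}+\ket{10}$ and $\ket{1}\mapsto\ket{11}$, equivalently $\ket{-}\mapsto\ket{00}+\ket{01}+\ket{10}-\ket{11}$. Contracting with the normalised $\ket{-}=\tfrac{1}{\sqrt{2}}(\ket{0}-\ket{1})$ therefore yields
\begin{equation}
\tfrac{1}{\sqrt{2}}\bigl(\ket{00}+\ket{01}+\ket{10}-\ket{11}\bigr)
=\tfrac{1}{\sqrt{2}}\sum_{a,b\in\{0,1\}}(-1)^{ab}\ket{a}\ket{b}.
\end{equation}
Bending the second leg with a cap turns $\ket{b}$ into $\bra{b}$ (no conjugation appears, since the computational basis is real), which converts this two-qubit state into the operator
\begin{equation}
\tfrac{1}{\sqrt{2}}\sum_{a,b\in\{0,1\}}(-1)^{ab}\ketbra{a}{b}=\H,
\end{equation}
matching the Hadamard gate exactly as defined in \Eqref{eqn:hadamard}. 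Translating this chain of moves back into pictures gives the diagram in the statement.

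There is no genuinely hard step here --- it ``follows by direct calculation'' --- so the work is entirely in fixing conventions. Two points deserve care: (i) \emph{which} leg receives $\ket{-}$ --- the two input legs of \AND\ are interchangeable by the input-symmetry in Illustration~\ref{fig:ANDgate}(a), but feeding $\ket{-}$ into an input leg (as an effect) collapses \AND\ to a rank-one map, so one must use the output leg, i.e.\ the co-\AND; and (ii) the normalisation, which above lands on $\H$ on the nose but which in any case is discarded in the scalar gauge. As a sanity check one should verify the co-\AND\ action used is the tabulated one, which is immediate from $\ket{0}\mapsto\ket{00}+\ket{01}+\ket{10}$ and $\ket{1}\mapsto\ket{11}$ by linearity.
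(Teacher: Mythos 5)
Your proposal is correct and is exactly the ``direct calculation'' the paper leaves implicit: contracting $\ket{-}$ into the output leg (the co-\AND) yields the Hadamard state $\ket{00}+\ket{01}+\ket{10}-\ket{11}$ up to normalisation, and bending one wire gives $\H$ as in \Eqref{eqn:hadamard}, with the scalar absorbed in the scalar gauge. Your side remarks on leg choice and normalisation are consistent with the paper's conventions, so nothing further is needed.
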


\begin{definition}
 Hadamard
states are defined as 
$$
\ket{\psi_\H} = \ket{00}+\ket{01}+\ket{10}-\ket{11}. 
$$
\end{definition}

\begin{lemma}
Tensor contractions formed from Hadamard states, {\sf COPY}-and \AND-states are universal for quantum computation \cite{CTNS}. 
\begin{proof}
The proof \cite{CTNS} follows from wire bending duality and the proof that Hadamard and Toffoli are universal
for quantum circuits~\cite{A03}.
\end{proof}
\end{lemma}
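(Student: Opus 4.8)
The plan is to reduce the claim to the known universality of the gate set $\{\H,\ \mathrm{Toffoli}\}$ for quantum computation, established in \cite{A03}. It then suffices to exhibit both the Hadamard gate and the Toffoli gate, regarded as tensors of valence-$(1,1)$ and valence-$(3,3)$ respectively, as tensor contractions built only from Hadamard-states, \COPY-states, and \AND-states, together with the computational-basis ancillae and effects already available as the copy points and units of the \COPY- and \AND-tensors (Definition~\ref{def:copy-prop}, Illustration~\ref{fig:ANDgate}). Having rewritten each generator of an arbitrary $\{\H,\mathrm{Toffoli}\}$-circuit in this way, composing the pieces yields one large network in the three primitive tensors whose contraction equals the operator implemented by the original circuit; hence the class of such contractions realizes every quantum computation.

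First I would treat the Hadamard gate. By Penrose wire-bending duality (\S~\ref{sec:Bwires}, via the cups and caps of \eqref{eqn:delta-effect}--\eqref{eqn:delta-bell}), the Hadamard-state $\ket{\psi_\H}=\ket{00}+\ket{01}+\ket{10}-\ket{11}$ is, up to the scalar gauge, precisely the operator $\H=\frac{1}{\sqrt{2}}\sum_{ab}(-1)^{ab}\ket{a}\bra{b}$ with its input leg bent upward into a ket; bending that leg back down with a cup and applying the snake equation recovers $\H$ as a valence-$(1,1)$ tensor. (This is the converse direction of the earlier derivation of Hadamard from an \AND-state and $\ket{-}$; since here the Hadamard-state is primitive, the step is immediate.)

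Next I would treat the Toffoli gate. Theorem~\ref{fig:ANDfromtoff} exhibits an \AND-state as a Toffoli gate with units attached to two of its legs; read in reverse, and combined with Remark~\ref{re:quantumuniversal} (Toffoli from one \AND-state and two \COPY-states), this realizes the Toffoli gate as a contraction of one \AND-state, two \COPY-states, and a handful of \NOT-gates on the appropriate wires. The \NOT-gate is itself internal to the calculus: $\NOT$ is the \XOR-tensor contracted with $\ket{1}$ on one input leg, and the \XOR-tensor is the \COPY-tensor conjugated by a Hadamard on each leg (\S~\ref{sec:components}), hence available once $\H$ and \COPY{} are on hand; the basis states $\ket{0},\ket{1}$ and effects $\bra{+},\bra{1}$ needed to seed or trim legs are the copy points and units of the \COPY- and \AND-tensors. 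Assembling these, an arbitrary circuit over $\{\H,\mathrm{Toffoli}\}$, and hence by \cite{A03} an arbitrary quantum computation, is rewritten as a contraction of Hadamard-states, \COPY-states, and \AND-states.

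The hard part is not the diagrammatic bookkeeping, which is routine manipulation of cups, caps, and units as developed in \S~\ref{sec:components}, but pinning down the exact sense of ``universal for quantum computation'' so that the statement matches what \cite{A03} proves: since that is a statement about a real-amplitude gate set, ``universal'' must be read as efficient approximate simulation of arbitrary quantum computations (BQP-universality, with the standard encoding of complex phases and ancillae), not as exact synthesis of every $n$-qubit unitary, and I would state this convention explicitly. A secondary point is that the scalar gauge discards global normalisation factors; this is harmless for computational universality, but an exact, non-projective restatement would require tracking those scalars through the contractions.
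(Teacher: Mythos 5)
Your proposal is correct and follows essentially the same route as the paper: the paper's (one-line) proof likewise reduces to the universality of Hadamard and Toffoli from \cite{A03} and invokes wire-bending duality, with the Toffoli-from-one-\AND-state-plus-two-\COPY-states construction of Remark~\ref{re:quantumuniversal} and the Hadamard/\AND-state relation supplying the needed gate realizations. You have simply unpacked that terse argument into its explicit diagrammatic steps (and sensibly flagged the approximate-universality reading of \cite{A03} and the scalar gauge), which is a faithful elaboration rather than a different proof.
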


\subsubsection{Summary of the {\sf XOR}-algebra on tensors}

We will now present the three previously referenced Tables (\ref{fig:copygatesum},
\ref{fig:xorgatesum} and \ref{fig:andgate}) which summarize the quantum logic tensors
we introduced in the previous subsections (\ref{sec:COPY}, \ref{sec:XOR} and
\ref{sec:AND}).  The tables contain entries listing properties that describe how
the introduced network components interact \cite{boolean03, CD, redgreen, CTNS, BB11}.  These interactions
are defined diagrammatically and explained in \S~\ref{sec:components}.  

\begin{center}
\begin{table}[ht]
\begin{tabular}{|c|c|c|c|}\hline
Gate Type & Co-copy point(s) & Unit & Co-unit Interaction\\\hline
{\sf COPY} & $\ket{0}$,$\ket{1}$  & $\ket{+}$  & Bell state: $\ket{00}+\ket{11}$
 \\ \hline\hline
Symmetry & Associative & Commutative & Frobenius Algebra\\\hline
Full  & Yes         & Yes & Yes (Fusion Law) \\ \hline
\end{tabular}
\caption{Summary of the {\sf COPY}-gate from \S~\ref{sec:COPY}.}\label{fig:copygatesum}
\end{table}
\end{center}

\begin{center}
\begin{table}[ht]
\begin{tabular}{|c|c|c|c|}\hline
Gate Type & Co-copy point(s) & Unit & Co-unit Interaction\\\hline
\XOR & $\ket{+}$,$\ket{-}$  & $\ket{0}$  & Bell state: $\ket{00}+\ket{11}$
 \\ \hline\hline
Symmetry & Associative & Commutative & Frobenius Algebra\\\hline
Full  & Yes         & Yes & Yes (Fusion Law) \\ \hline
\end{tabular}
\caption{Summary of the \XOR-gate from \S~\ref{sec:XOR}.}\label{fig:xorgatesum}
\end{table}
\end{center}

\begin{center}
\begin{table}[ht]
\begin{tabular}{|c|c|c|c|}\hline
Gate Type & Co-copy point(s) & Unit & Co-unit Interaction\\\hline
\AND & $\ket{1}$  & $\ket{1}$  & Product state: $\ket{11}$
 \\ \hline\hline
Symmetry   & Associative & Commutative & Bialgebra Law\\\hline
Inputs  & Yes         & Yes         & Yes (with {\sf GHZ}) \\ \hline
\end{tabular}
\caption{Summary of the \AND-gate from \S~\ref{sec:AND}.}\label{fig:andgate}
\end{table}
\end{center}

\subsection*{co-{\sf COPY}: the co-diagonal}\label{sec:co-diagonal}
What is evident from our subsequent discussions on logic gates is that in the context
of tensors, the bending of wires implies that gates can be used both forwards in backwards. We can therefore form tensor networks from Boolean gates
in a very
different way from classical circuits. Indeed, it becomes possible to flip a {\sf
COPY} operation upside down,
that is, instead of having a single leg split into two legs, have two legs merge into
one.  In
terms of tensor networks, co-{\sf COPY} is simply
thought of as being a dual (transpose) to the familiar {\sf COPY} operation. This is
common in
algebra: to consider the dual
notation to algebra, that is co-algebra.  In general, while a product is a joining or pairing (e.g.\ taking two
vectors and producing a third) a co-product is a co-pairing taking a single vector
in the space $\2A$ and producing a vector in the space $\2A\otimes \2 A$.

\begin{remark} [co-algebras~\cite{FA}]
co-algebras are structures that are dual (in the sense of
reversing arrows) to unital associative algebras such as {\sf COPY} and
\AND{} the axioms of which we formulated in terms of picture calculi
(Sections \ref{sec:COPY} and \ref{sec:AND}).  Every co-algebra, by
(vector space) duality, gives rise to an algebra, and in finite dimensions, this duality goes in both directions.  
\end{remark}

Co-{\sf COPY} can be thought of as applying a delta function in the transition from
input to output.  That is, given a copy point $x=0,1,...,d-1$ for qudits on dim $d$.
 Depicting {\sf COPY} as the map
$\bigtriangleup$
\begin{equation}
\bigtriangleup(\ket{x})=\ket{x}\otimes\ket{x},
\end{equation}
we define co-{\sf COPY} by the map $\bigtriangledown$ such that 
\begin{equation}
\bigtriangledown(\ket{i},\ket{j})=\delta_{ij} \ket{i},
\end{equation}
that is, the diagram is mapped to zero (or empty) if the inputs $\ket{i}$, $\ket{j}$
do not agree.  This is succinctly expressed in terms of a delta-function dependent on
inputs $\ket{i}$, $\ket{j}$ where $i,j=0,1,...,d-1$ for qudits of dim $d$.

\begin{example}[Simple co-pairing]
Measurement effects on tripartite quantum systems can be thought of as
co-products.  This is given as a map from one system (measuring the first) into two
systems (the effect this has on the other two). {\sf GHZ}-states are prototypical
examples of co-pairings.  In this case, the measurement outcome of $\ket{0}$
($\ket{1}$) on a single subsystem sends the other qubits to $\ket{00}$ ($\ket{11}$)
and by linearity this sends $\ket{+}$ to $\ket{00}+\ket{11}$.  
\end{example}

\subsection*{The remaining Boolean tensors: \NAND-states etc.}
We have represented a logical system on tensors --- this enables us
to represent any Boolean function as a connected network of tensors and hence any Boolean state.
We chose as our generators,
constant $\ket{1}$, {\sf COPY}, \XOR, \AND{}.  Other generators could have also been
chosen such as \NAND-tensors. Our choice however, was made as a matter of convenience. If we
had considered other generators, we could have ended up considering the
following cases: weak-units~(Definition \ref{def:weakunits}) and fixed point
pairs~(Definition \ref{def:fixedpoints}). 


\begin{definition}[Weak units]\label{def:weakunits}
An algebra (or product see Appendix \ref{sec:newalgebra}) on a tripartite state
$\ket{\psi}$ has a unit (equivalently, one has that the state is unital) if there
exists an effect $\bra{\phi}$ which
the product acts on to produce an invertible map $B$, where $B=\I$ (see
Example~\ref{ex:weakunits}). If no such $\bra{\phi}$ exists to make $B=\I$, and $B$
has an inverse, we call $\bra{\phi}$ a weak unit, and say the state $\ket{\psi}$ is
weak unital and if $B\neq \I$ and $B^2=1$ we call
the algebra on $\ket{\psi}$ unital-involutive.  This scenario is given
diagrammatically
as:
\begin{center}
\includegraphics[width=6.5\xxxscale]{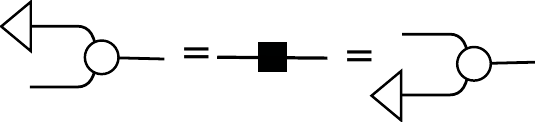}
\end{center}
\end{definition}
\begin{example}[\NAND{} and \NOR{}]\label{ex:weakunits}
\NAND{} and \NOR{} have weak units, respectively given by $\ket{1}$ and $\ket{0}$.
These weak units are unital-involutive.
\begin{equation}
 \ket{\psi_\NAND} = \ket{001}+\ket{011}+\ket{101}+\ket{110},
\end{equation}
\begin{equation}
 \ket{\psi_\NOR} = \ket{001}+\ket{010}+\ket{100}+\ket{110}.
\end{equation}
For $\ket{\psi_\NAND}$ to have a unit, there must exist a $\ket{\phi}$ such that
\begin{equation}
 \braket{\overline{\phi}}{0}\ket{01}+\braket{\overline{\phi}}{0}\ket{11}+\braket{
\overline{\phi}}{1}\ket{01} +\braket{\overline{\phi}}{1}\ket{10}=\ket{00}+\ket{11},
\end{equation}
and hence no choice of $\ket{\phi}$ makes this possible, thereby confirming the
claim.
\end{example}

\begin{definition}[Fixed Point Pair]\label{def:fixedpoints}
An algebra (see Appendix \ref{sec:newalgebra}) on a tripartite state
$\ket{\psi}$ has a \textit{fixed point} if
there exists an effect $\bra{\phi}$ (the fixed point) which
the product acts on to produce a constant output, independent of the other input value. 
For instance, in Figure \ref{fig:fixedpoints}(c) on the left hand side the effect
$\bra{1}$ induces a map (read bottom to top) that sends $\ket{+}\mapsto \ket{1}$. 
Up to a scalar, this map expands linearly sending both basis effects $\bra{0}$,
$\bra{1}$ to to the constant state $\ket{1}$. If the resulting output is the same as
the fixed point, we say $\bra{\phi}$ has a zero ($\ket{1}$ is the zero for the
\OR-gate in Figure \ref{fig:fixedpoints}(c)). A
fixed point pair consists of two algebras with fixed points, such that the fixed
point of one algebra is the unit of the other, and vise versa (see
Figure~\ref{fig:fixedpoints}). Diagrammatically this
is given in Figure \ref{fig:fixedpoints2}.  
\end{definition}

\begin{figure}[ht]
\centering
\includegraphics[width=11\xxxscale]{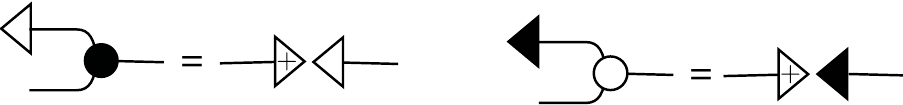}
\caption{Diagrammatic equations satisfied by a fixed point
pair (see Definition \ref{def:fixedpoints}). }\label{fig:fixedpoints2}
\end{figure}

\begin{figure}[ht]
\centering
\includegraphics[width=11\xxxscale]{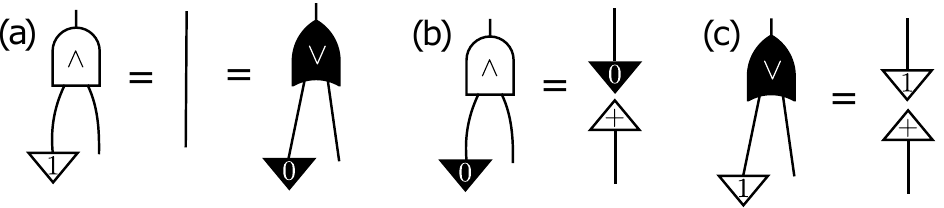}
\caption{\AND{} and \OR{} tensors form a fixed point pair.  The unit for \AND{}
($\ket{1}$
see a) is the zero for \OR{} (c) and vise versa: the unit of \OR{} ($\ket{0}$
see a) is the zero for \AND{} (b).}\label{fig:fixedpoints}
\end{figure}

\subsection*{Summarizing: network composition of quantum logic tensors}
We have considered sets of universal classical structures in our tensor
network model.  In classical computer science, a universal set of gates is able to
express any $n$-bit Boolean function
\begin{equation}
  f:\7B^n\rightarrow\7B:(x_1,...,x_n)\mapsto f(x_1,...,x_n),
\end{equation}
where we note that $\7Z_2\cong\7B$ allowing us to use the alternative notation for
$f$ as $f:\7Z_d^n\rightarrow\7Z_d$ with $d=2$ for the binary case. Universal sets
include
\begin{itemize}
    \item[1.] \{{\sf COPY}, \NAND\},
    \item[2.] \{{\sf COPY}, \AND,~\NOT\},
    \item[3.] \{{\sf COPY}, \AND,~\XOR, $\ket{1}$\},
    \item[4.] \{\OR,~\XNOR, $\ket{1}$\} and others.  
\end{itemize}
One can also consider the states $\ket{\psi}$ formed by the bit patterns of these functions $f(x_1,x_2)$ as
\be
\ket{\psi_f} = \sum_{x_1,x_2\in\{0,1\}}\ket{x_1}\ket{x_2}\ket{f(x_1,x_2)}.
\ee
This allows a wide class of states to be constructed effectively.  In the
following
Table (\ref{fig:BooleanStates}) we illustrate the quantum states representing the classical
function of two-inputs.

The bit pattern of the following quantum states represents a Boolean function (given by the subscript) such that the right most bit is the Boolean functions output, and the two left bits are the functions inputs, and the non-linear Boolean functions are
on the left side of the table and the linear functions on the right.  Consider the state $\ket{\psi_\AND}$, and Boolean variables $x_1$ and $x_2$, then the superposition $\ket{\psi_\AND}$ encodes the function $\ket{x_1,x_2,x_1\wedge x_2}$ in each term in the superposition, and
$$ 
\ket{\psi_\AND}=\sum_{x_1,x_2\in\{0,1\}}\ket{x_1,x_2,x_1\wedge
x_2}.
$$ 

As outlined in the text, cup/cap induced-duality allows us (for instance) to express this state as the operator
$$ 
\ket{0}\bra{00}+\ket{0}\bra{01}+\ket{0}\bra{01}+\ket{1}\bra{11}:\ket{x_1,x_2} \mapsto\ket{x_1\wedge x_2}
$$
which projects qubit states to the $\AND$ of their bit value.

\begin{center}
\begin{table}[ht]
\begin{tabular}{|c|c|}\hline
non-linear & linear (Frobenius Algebras) \\\hline
$\ket{\psi_\AND}=\ket{000}+\ket{010}+\ket{100}+\ket{111}$ &  \\
~$\ket{\psi_\OR}=\ket{001}+\ket{011}+\ket{101}+\ket{111}$ &
~$\ket{\psi_\XOR}=\ket{000}+\ket{011}+\ket{101}+\ket{110}$\\
$\ket{\psi_\NAND}=\ket{001}+\ket{011}+\ket{101}+\ket{110}$ &
$\ket{\psi_\XNOR}=\ket{001}+\ket{010}+\ket{100}+\ket{111}$\\
~$\ket{\psi_\NOR}=\ket{001}+\ket{010}+\ket{100}+\ket{110}$ & \\\hline
\end{tabular}
\end{table}\label{fig:BooleanStates}
\end{center}

\subsubsection{Merging {\sf COPY}-tensors by node equivalence}
{\sf COPY}-tensors are readily generalized to an arbitrary number of input and
output legs. As one would rightly suspect,
a {\sf COPY}-tensor with $n$~inputs and $m$~outputs corresponds to an
$n+m$-partite {\sf GHZ}-state.  Neighboring tensors of the same type can be merged into
a single tensor: this is called node equivalence in digital circuits. {\sf
COPY}-tensors represent Frobenius algebras\mn{In the work \cite{COECKE_2012}, the \COPY{}-tensor, through its properties as a Frobenius algebra \cite{DP13}, was shown to be equivalently characterised by an orthogonal basis for a finite-dimensional Hilbert space.}~\cite{Carboni-Walters,FA}.

\begin{theorem}[Node equivalence or fusion law]\label{theorem:spider} Given a
connected graph with $m$~inputs and $n$~outputs comprised solely of \COPY-tensors of
equal dimension, this map
can be equivalently expressed as a single $m$-to-$n$ tensor, as shown as 
\begin{center}
    \includegraphics[width=0.95\textwidth]{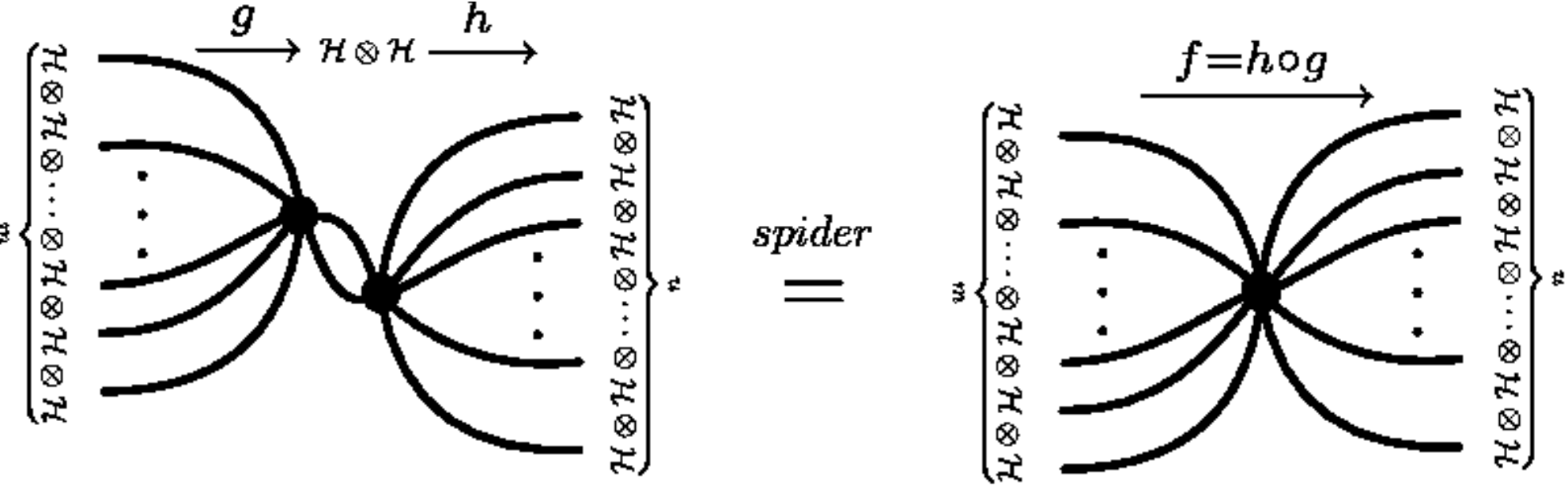}
\end{center}
Node equivalence or fusion law.  Connected black-tensors ($\bullet$) as well as connected plus-tensors
($\oplus$) can be merged and also split apart at will.  The intuition for digital or qudit circuits follows by connecting a state $\ket{\phi}$ to one of the legs and iterating over a complete basis $\ket{0}$, $\ket{1}$,...,$\ket{d-1}$.\\ 

This rule goes by many different names, depending on the community.  For example, in classical circuits this is called node equivalence whereas researchers in categorical quantum mechanics, credit node equivalence as their own {\it spider law} \cite{redgreen}. 
\label{fig:Spider}
\end{theorem}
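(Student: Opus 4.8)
The plan is to prove the fusion law by reducing everything to the single computation that a product of ``all legs equal'' delta functions percolates across a connected graph. First I would recall the component description of the valence-$k$ \COPY-tensor: writing $\delta_{a_1 a_2 \cdots a_k}$ for its components in the computational basis, one has $\delta_{a_1 \cdots a_k} = 1$ if $a_1 = a_2 = \cdots = a_k$ and $0$ otherwise, and the $(m,n)$-valence \COPY-tensor is exactly this tensor with $k=m+n$, carrying $m$ legs in one orientation and $n$ in the other (its content being orientation-independent once wires may be bent, by Penrose duality). In particular the elementary merge $\sum_{m}\delta^{i}_{~jk}\,\delta^{jk}_{~~m}=\delta^{i}_{~m}$ noted earlier generalizes to a lemma: contracting \emph{any} set of wires between two \COPY-tensors yields a single \COPY-tensor on the union of the remaining legs, since for a fixed value of those legs the summand is nonzero precisely when a single common value is forced on every contracted index.

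Given that lemma, I would argue the theorem by computing all components of the network at once. Model it as a connected multigraph $G=(V,E)$ in which each vertex $v$ carries a \COPY-tensor whose legs are the edge-ends incident to $v$ together with the open legs at $v$; let $\mathbf{x}$ be the tuple of values on the $m+n$ open legs. The $\mathbf{x}$-component of the network is $\sum_{\mathbf{y}}\prod_{v\in V}\delta(\text{all legs at }v\text{ equal})$, the sum over assignments $\mathbf{y}$ to the internal edges. A summand is nonzero only if, at every vertex, all incident legs carry one common value; since the two ends of each edge share a variable and $G$ is connected, this common value is the \emph{same} $c\in\{0,\dots,d-1\}$ at every vertex, forcing every open leg to equal $c$. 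Hence the component vanishes unless $\mathbf{x}$ is constant, and when $\mathbf{x}=(c,\dots,c)$ the unique surviving assignment ($\mathbf{y}$ constant $=c$, forced once some vertex carries an open leg, i.e.\ when $m+n\geq1$) contributes $1$. Thus the network has the same components as the single $(m,n)$-valence \COPY-tensor, which is the claim; diagrammatically this is the promised single dot with $m$ input and $n$ output legs, and by the \GHZ{}-correspondence it is the $(m+n)$-partite \GHZ{}-tensor.

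Equivalently, and closer to the picture, one can induct on $|V|$: for $|V|=1$ there is nothing to prove, and for $|V|\geq2$ connectedness gives an edge between two distinct vertices, which the lemma lets us contract to merge those two \COPY-tensors into one, leaving a connected all-\COPY{} network with one fewer vertex (any resulting self-loops or parallel edges are absorbed by the same lemma), so the induction hypothesis applies. The main point to be careful about is exactly this bookkeeping in the inductive step --- that cycles in $G$ turn into self-loops which the \COPY-tensor simply swallows (no spurious scalar appears, or, if one prefers, any scalar is suppressed in the scalar gauge), and that the degenerate closed network with $m=n=0$ evaluates to the scalar $d$ and is handled only up to the scalar gauge. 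Neither point is deep; the entire content of the theorem is the single observation that the ``equal-value'' constraint imposed at each \COPY{} node propagates through the connected graph.
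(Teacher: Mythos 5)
Your proposal is correct, and it supplies considerably more than the paper does: the text states the fusion law with a figure and offers only the one-line intuition that one may ``connect a state $\ket{\phi}$ to one of the legs and iterate over a complete basis,'' leaving the actual verification to the reader. Your component-level argument is the rigorous form of exactly that intuition --- fix values on the open legs, observe that each \COPY{} node imposes an ``all incident legs equal'' constraint, and let connectivity percolate a single common value $c$ through the graph, so the network's components coincide with those of one $(m+n)$-leg \COPY-tensor --- while your second route (induction on the number of vertices, contracting one edge at a time via the pairwise merge $\delta^{i}_{~jk}\,\delta^{jk}_{~~m}=\delta^{i}_{~m}$, which the paper records only for valence three) is closer to how the rule is actually used diagrammatically, since it shows fusion can be performed locally, two dots at a time. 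You are also right to flag the two points the paper's sketch glosses over: cycles contribute no spurious factor of $d$ because the internal assignment is uniquely forced once any open leg (hence $m+n\geq 1$) pins down the value $c$, and the only genuinely degenerate case is the closed diagram $m=n=0$, which evaluates to the scalar $d$ and must either be excluded or absorbed in the scalar gauge; your observation that connectivity is preserved under merging, so no closed component can appear mid-induction when an open leg exists, closes the bookkeeping cleanly.
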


\subsection*{Associativity, distributivity and commutativity}

The products we have considered are all associative and commutative.  As algebras,
\AND{}, \XOR{} and {\sf COPY}{} are associative, unital commutative algebras.  This
was
already expressed diagrammatically in Figures~\ref{fig:F2-presentation}(a) and
Figure \ref{fig:extraF2}(c).
The diagrammatic laws relevant for this subsection represent the following
Equations
\begin{equation}
(x_1\wedge x_2)\wedge x_3= x_1\wedge (x_2\wedge x_3),
\end{equation}
\begin{equation}
(x_1\oplus x_2)\oplus x_3= x_1\oplus (x_2\oplus x_3).
\end{equation}
Distributivity of \AND{} over \XOR{} then becomes (see (h) in
Figure~\ref{fig:extraF2})
\begin{equation}
(x_1\oplus x_2)\wedge x_3= (x_1\wedge x_2)\oplus (x_1\wedge x_2).
\end{equation}
We have commutativity for any product symmetric in its inputs: this is
the case for \AND{} and \XOR.

\subsection*{Bialgebras on tensors}\label{sec:bialgebra}
There is a powerful type of algebra that arises in our
setting: a bialgebra defined graphically on tensors in Figure
\ref{fig:BialgebraAxioms} (see Kassel, Chapter III~\cite{Kassel}, \cite{FA} and \cite{redgreen}).  

 Such an algebra is simultaneously a unital associative algebra and co-algebra (for
the associativity condition see (b) in Figure~\ref{fig:BialgebraAxioms}).
Specifically, we consider the following two ingredients:
\begin{description}\addtolength{\itemsep}{-0.5\baselineskip}
\item[(i)] A product (black tensor) with a unit (black triangle) see the right hand
side of Figure~\ref{fig:BialgebraAxioms}(a).
\item[(ii)] A co-product (white tensor) with a co-unit
(white triangle) see the left hand side of Figure~\ref{fig:BialgebraAxioms}(a). 
\end{description}
To form a bialgebra, these two ingredients above must be characterized by
the following four compatibility conditions:
\begin{description}\addtolength{\itemsep}{-0.5\baselineskip}
\item[(i)] The unit of the black tensor is a copy-point of the white tensor as in (e)
from Figure~\ref{fig:BialgebraAxioms}.
\item[(ii)] The (co)unit of the white tensor is a copy-point of the black tensor as in (d)
from Figure~\ref{fig:BialgebraAxioms}.
\item[(iii)] The bialgebra-law is satisfied given in (c) from
Figure~\ref{fig:BialgebraAxioms}.
\item[(iv)] The inner product of the unit (black triangle) and the co-unit (white
triangle) is non-zero (not shown in Figure~\ref{fig:BialgebraAxioms}).
\end{description}

\begin{figure}[ht]
\centering
\includegraphics[width=0.95\textwidth]{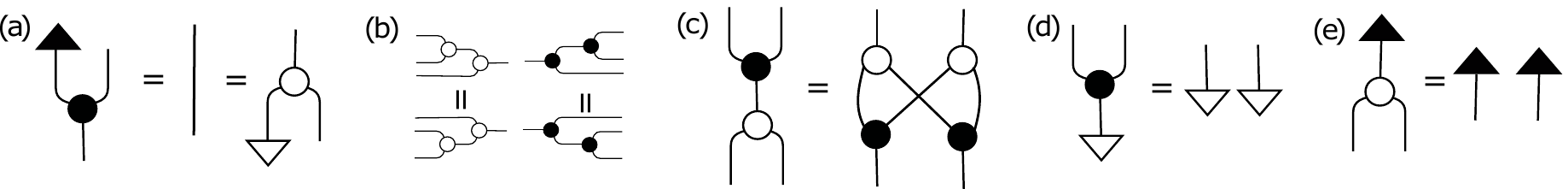}
\caption{Bialgebra axioms~\cite{FA} (scalars are omitted).  (a) unit laws (these are
of course
left and right units); (b) associativity; (c) bialgebra; (d,e) co-{\sf COPY}
points.}\label{fig:BialgebraAxioms}
\end{figure}

\begin{example}[{\sf GHZ}, \AND{} form a bialgebra \cite{CTNS}]
We are in a position to study the interaction of {\sf GHZ}-\AND.   This interaction
satisfies the equations in Figure \ref{fig:BialgebraAxioms}: (a) the bialgebra law;
(b) the co-copy point of \AND{} is $\ket{1}$; and (c) the co-interaction with the
unit for {\sf GHZ} creates a compact structure.  In addition, (a) and (b) show the
copy points for the black {\sf GHZ}-tensor; in (c) we have the unit and fixed point
laws.
\end{example}

Even if a given product and co-product do not satisfy all of the compatibility
conditions (given in (a), (b), (c), (d), (e) in Figure \ref{fig:BialgebraAxioms}),
and hence do not form bialgebras, they can still satisfy the bialgebra law which is
given in Figure \ref{fig:BialgebraAxioms}(c).  Examples of states
that satisfy the bialgebra law in Figure \ref{fig:BialgebraAxioms}(c), but are not
bialgebras are given in Definition~\ref{def:bialgebra}. Notice that bialgebra
provides a highly constraining characterization of the tensors involved and is
tantamount to defining a commutation relation between them.

\begin{definition}[Bialgebra Law~\cite{FA}]\label{def:bialgebra}
A pair of quantum states (black, white tensors) satisfy the bialgebra law if (c)
in Figure \ref{fig:BialgebraAxioms} holds.  The Boolean states, \AND, \OR, \XOR,
\XNOR, \NAND, \NOR{} all satisfy the bialgebra law with {\sf COPY}.  
\end{definition}

\subsubsection{Algebras on valence-3 tensors}\label{sec:hopf}
A particularly important class of bialgebras are known as Hopf-algebras~\cite{FA}.
This is characterized by the way in which algebras and co-algebras can interact.  This
is captured by the Hopf-law, where the linear map $A$ is known as the antipode.

\begin{definition}[Hopf-Law~\cite{FA}]
A pair of quantum states satisfy the Hopf-Law if an $A$ can be found such that the
following equations hold:
\begin{center}
\includegraphics[width=12\xxxscale]{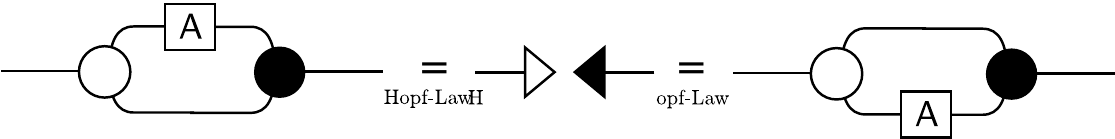}
\end{center}
\end{definition}

\begin{example}[\XOR{} and {\sf COPY} are Hopf-algebras on Boolean States
\cite{boolean03}]
It is well known (see e.g.\ \cite{boolean03}) that the Boolean state \XOR, satisfies
the Hopf-algebra law with trivial antipode ($A=\I$) with {\sf COPY}.  Recall Figure
\ref{fig:F2-presentation}(g). 
\end{example}

\subsection*{Bending wires: compact structures}\label{sec:bends}
As mentioned in the preliminary section (\ref{sec:overview}), we make use of
what's called a \textit{compact structure} in category theory which amounts to
introducing cups and caps, to provide a formal way to bend wires and define
transposition. See Figures
\ref{fig:cupsetc} and \ref{fig:adjoints}.  

A \textit{compact structure} on an object $\2H$ consists of another object $\2H^*$
together with a pair of morphisms (note that we use the equation $\2H^*=\2H$ in
Hilbert space making objects self dual which simplifies what follows).
\begin{equation*}
  \eta_{\2H} : \7C \longrightarrow {\2H} \otimes {\2H}, ~~~~~~~~~~~~~~\epsilon_{\2H}
: {\2H} \otimes {\2H} \longrightarrow \7C,
\end{equation*}
where the standard representation in Hilbert space with dimension $d$ and basis
$\{\ket{i}\}$ is given by
\begin{equation*}
\eta_{\2H} = \sum_{i=0}^{d-1}\ket{i}\otimes\ket{i}, ~~~~~~~~~~~~~~ \epsilon_{\2H} =
\sum_{i=0}^{d-1}\bra{i}\otimes\bra{i},
\end{equation*}
and in string diagrams (read from the top to the bottom of the page) as
\begin{center}
  \includegraphics[width=0.25\textwidth]{cup_and_cap}
\end{center}
These cups and caps give rise to cup/cap-induced
duality: this amounts to being able to create a linear map that ``flips'' a bra to a
ket (and vise versa) and at the same time taking an (anti-linear) complex conjugate.
 In other words, the cap $\sum_{i=0}^1\bra{ii}$ sends  quantum state
$\ket{\psi}=\alpha\ket{0} + \beta\ket{1}$ to $\alpha\bra{0} + \beta\bra{1}$ which is
equal to the complex conjugate of $\ket{\psi}^\dagger =
\bra{\psi}=\overline{\alpha}\bra{0} + \overline{\beta}\bra{1}$.  Diagrammatically,
the dagger is given by mirroring operators across the page, whereas transposition is
given by bending wire(s). Clearly, $\bra{\overline{\psi}}=\alpha\bra{0} +
\beta\bra{1}$.  

In the case of relating the Bell-states and effects to the identity operator, under
cup/cap-induced duality, we flip the second ket on $\eta_{\2H}$ and the first bra on
$\epsilon_{\2H}$.  This relates these maps and the identity $\I_{\2H}$ of the
Hilbert space: that is, we can fix a basis and construct
invertible maps sending $\eta_{\2H}~\cong~\I_{\2H}~\cong~\epsilon_{\2H}$.  More
generally, the maps $\eta_{\2H}$ and $\epsilon_{\2H}$ satisfy the
equations given in Figure \ref{fig:cupsetc} and their duals
under the dagger. 

A second way to introduce cups and caps is to consider a \textit{Frobenius
form}~\cite{FA} on
either of the structures in the linear fragment from Figure \ref{fig:F2-presentation}
({\sf COPY} and \XOR).  This is simply a functional that turns a
product/co-product into a cup/cap.  This allows one to recover the above compact
structures (that is, the cups and caps given above) as
\begin{center}
\includegraphics[width=7\xxxscale]{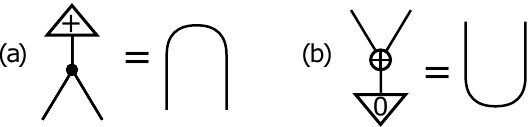}
\end{center}
Again, we will use these cups and caps as a formal way to bend wires in tensor
networks: this can be thought of simply as a reshape of a matrix.

\begin{figure}[ht]
\centering
  \includegraphics[width=0.7\textwidth]{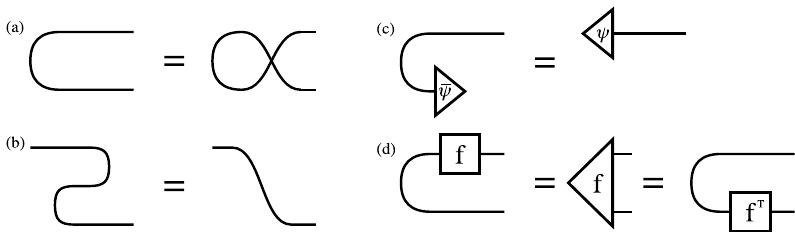}
\caption{Cup identities. (a) Symmetry. (b) Conjugate state.
(c) the \textit{snake equation}.
(d) Sliding an operator around a cup transposes it.}\label{fig:cupsetc}
\end{figure}

\begin{figure}[ht]
\centering
  \includegraphics[width=0.7\textwidth]{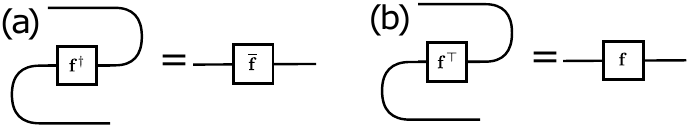}
\caption{Diagrammatic adjoints.  Cups and caps allow
us to take the transpose of a
linear map.  Note that care must be taken, as flipping
a ket $\ket{\psi}$ to a bra $\bra{\psi}$ is conjugate
transpose, and bending a wire is simply transposition, so the conjugate must be
taken: e.g.\ acting on $\ket{\psi}$ with a cap given as $\sum_i \bra{ii}$ results in
$\bra{\overline \psi}$.  }\label{fig:adjoints}
\end{figure}

\section{Examples of Boolean Tensor Network States}\label{sec:CTN}

\subsection*{Constructing Boolean states}\label{sec:W}
Since the fixed building blocks of our tensor networks are the logic tensors \AND,
\OR,
\XOR\, and {\sf COPY}, along
with ancilla bits, we can immediately apply the universality of these elements for classical circuit construction
to guarantee that any Boolean state has a tensor network decomposition.
However our construction goes beyond this because as we have seen,
Boolean tensor networks can be deformed and rewired in ways which are not
ordinarily permitted in the standard acyclic-temporal definition of classical
circuits. The \W-state will be shown to provide a non-trivial example of this.

\begin{example}[Functions on \W- and {\sf GHZ}-states]\label{ex:WandGHZ}
We consider the function $f_\W$ which outputs logical-one given input bit string
$001$, $010$ and $100$ and logical-zero otherwise.  Likewise the function
$f_{{\sf GHZ}}$ is defined to output logical-one on input bit strings $000$ and $111$
and
logical-zero otherwise.  See Examples \ref{ex:WandGHZBoolean} and
\ref{ex:mlWandGHZ} which
consider representation of these functions as polynomials.  We will continue to work
with a linear representation of functions on quantum states; here
bit string
$000\mapsto \ket{000}$ (etc.).
\end{example}

\begin{example}[MPS form for \W-state]
Like the {\sf GHZ} state, the \W-state has a
simple MPS representation
\be\label{eqn:wnMPS}
\ket{\W_n} = \bra{0}\left( \begin{array}{cc}
\ket{0} & 0 \\
\ket{1} & \ket{0} \end{array} \right)^{n}\ket{1}=
\ket{10...0}+\ket{01...0}+...+\ket{00...1}.
\ee
This description \eqref{eqn:wnMPS} is succinct.   All MPS-states have essentially this 
same topological or network structure. In contrast, our
categorical construction
described below breaks this network up further.
\end{example}

\begin{remark}[Exact-value functions]
The function $f_\W$ takes value logical-one on input vectors with $k$ ones for a
fixed integer $k$.
Such functions are known in the literature as Exact-value symmetric Boolean
functions.  When cast into our framework, exact-value functions give
rise to tensor networks which represent what are known as Dicke states
\cite{2010NJPh...12g3025A}. 
\end{remark}

\begin{example}[Function realization of $f_\W$ and
$f_{{\sf GHZ}}$: the Boolean case]\label{ex:WandGHZBoolean}
One can express (using $\overline x$ to mean Boolean variable negation)
\begin{equation}
f_\W(x_1,x_2,x_3)=\overline x_1\overline x_2x_3\oplus
x_1\overline x_2\overline x_3\oplus \overline x_1x_2\overline x_3
\end{equation}
by noting that each term in the disjunctive normal form of $f_\W$ are disjoint, and
hence \OR~maps to \XOR~as $\vee\mapsto \oplus$.  The algebraic normal form (see
Appendix~\ref{sec:Boolean}) becomes
\begin{equation}\label{eqn:fWBoolean}
f_\W(x_1,x_2,x_3)=x_1 \oplus x_2\oplus x_3\oplus x_1 x_2 x_3
\end{equation}
\begin{equation}
f_{{\sf GHZ}}(x_1,x_2,x_3) =1\oplus x_1\oplus x_2\oplus x_3\oplus x_1 x_2\oplus
x_1x_3\oplus x_2x_3
\end{equation}
\end{example}

\begin{example}[Function realization of $f_\W$ and $f_{{\sf GHZ}}$: the set
function case] \label{ex:mlWandGHZ}
Set functions are mappings from the family of subsets of a finite ground set (e.g.\
Booleans) to the real or complex numbers.  In the circuit theory literature,
functions from the Booleans to the reals are known as pseudo-Boolean functions and
more commonly as
multi-linear polynomials or forms (see~\cite{JDB08} where these functions are used to
embed a co-algebraic theory of logic gates in the ground state energy configuration of
spin models). There exists an algebraic normal form and hence a unique multi-linear
polynomial representation for each pseudo-Boolean
function (see Appendix~\ref{sec:Boolean}).  This is found by mapping the negated
Boolean variable as
$\overline{x}\mapsto
(1-x)$.  For the {\sf GHZ}- and \W-functions defined in Example~\ref{ex:WandGHZ} we
arrive
at the unique polynomials \eqref{eqn:mlGHZ} and \eqref{eqn:mlW}.
\begin{equation}\label{eqn:mlGHZ}
f_{{\sf GHZ}}(x_1,x_2,x_3)=1 - x_1 - x_2 + x_1 x_2 - x_3 + x_1 x_3 + x_2 x_3
\end{equation}
\begin{equation}\label{eqn:mlW}
f_{\W}(x_1,x_2,x_3) = x_1 + x_2 + x_3 - 2 x_1 x_2 - 2 x_1 x_3 - 2 x_2 x_3 + 3 x_1
x_2 x_3
\end{equation}
These polynomials \eqref{eqn:mlGHZ} and \eqref{eqn:mlW} are readily translated into
Boolean tensor networks.
\end{example}

\begin{example}[Network realisation of \W- and {\sf GHZ}-states]
A network realization of \W- and {\sf GHZ}-states in our framework then follows by
post-selecting the relevant network to $\ket{1}$ on the output bit --- leaving the
input qubits to
represent a \W- or {\sf GHZ}-state respectively. An example of this is shown in
Figure~\ref{fig:Wfunction}.
\end{example}

\begin{figure}[ht]
\centering
\includegraphics[width=14\xxxscale]{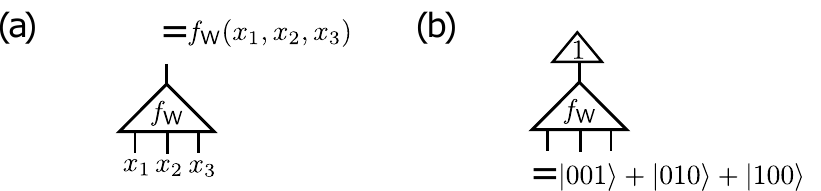}
\caption{Left (a) the circuit realization (internal to the triangle) of the function
$f_\W$ of e.g.\ \eqref{eqn:fWBoolean} which outputs logical-one given input $\ket{x_1
x_2 x_3}=$ $\ket{001}$, $\ket{010}$ and $\ket{100}$ and logical-zero
otherwise. Right (b) reversing time and setting the output to $\ket{1}$ (e.g.\
post-selection) gives a network representing the \W-state.  The na\"{i}ve
realization of $f_\W$ is given in Figure~\ref{fig:CategoricalWStates}
with an optimized co-algebraic construction shown in Figure
\ref{fig:CategoricalWStates}.}\label{fig:Wfunction}
\end{figure}

\begin{figure}[ht]
\centering
\includegraphics[width=7\xxxscale]{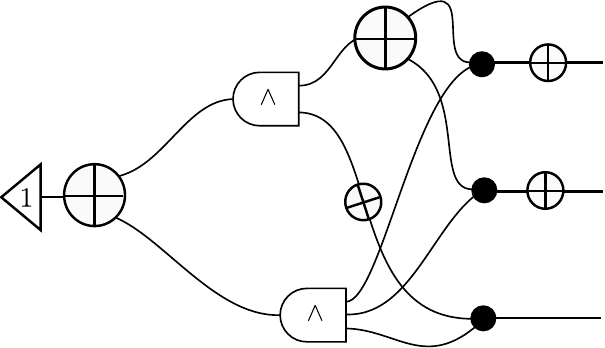}
\caption{Na\"{i}ve BTNS realization of the familiar \W-state
$\ket{001}+\ket{010}+\ket{100}$.  A standard (temporal) acyclic classical circuit
decomposition in terms of the \XOR-algebra realizes the function $f_\W$ of three
bits.  This function is given a representation on tensors.  As illustrated, the
networks input
is post selected to $\ket{1}$ to realize the desired
\W-state.}\label{fig:temporalWStates}
\end{figure}

\begin{figure}[ht]
\includegraphics[width=15\xxxscale]{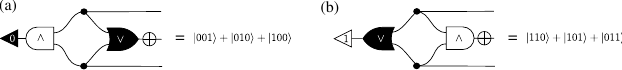}
\caption{\W-class states in the Boolean tensor network state formalism.  (a)
is the standard \W-state.  (b) is found from applying De Morgan's law (see
\S~\ref{sec:AND}) to (a) and rearranging after inserting inverters on the output
legs. Notice the atemporal nature of the circuits, as
one gate is used forwards, and the other backwards.}\label{fig:CategoricalWStates}
\end{figure}

Two different Boolean tensor constructions for the building blocks of the \W-state are shown
in Figure~\ref{fig:temporalWStates} and  Figure~\ref{fig:CategoricalWStates}. Notice that in Figure~\ref{fig:CategoricalWStates} the
resulting tensor network forms an atemporal classical circuit and is much more efficient than
the na\"{i}ve construction in Figure~\ref{fig:temporalWStates}. Moreover by
appropriately daisy-chaining the networks in
Figure~\ref{fig:CategoricalWStates} we construct a Boolean tensor network for
an $n$-party \W-state
as shown in Figure~\ref{fig:nPartyCategoricalWStates}. Contrast this with other factorizations appearing in the literature~\cite{Kissinger09}. The resulting form of this tensor network is entirely equivalent
(up to regauging) to the MPS description given earlier, but now reveals internal structure of the state in terms of BTNS building blocks.

\begin{figure}[ht]
\begin{center}
  \includegraphics[width=14\xxxscale]{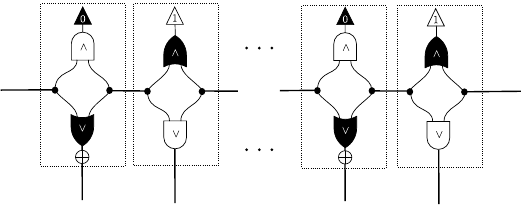}  
\end{center}
\caption{\W-state ($n$-party) in the Boolean tensor network state formalism. The comb-shaped feature of efficient network contraction remains, with the internal structure of the network components exposed in terms of well understood algebraic structures.}\label{fig:nPartyCategoricalWStates}
\end{figure}

\section{Discussion} 

We have introduced a class of quantum states, known as Boolean quantum states.  This class is of interest, since it allows one to study quantum states using the well understood Boolean algebra.  In addition, states in this class have an evident tensor network.  

\begin{theorem}[The Class of Boolean Quantum States \cite{CTNS}]
 Every switching function $f(\x)$ gives rise to a quantum state with binary coefficients in $\{0,1\}$.  Moreover, a tensor network 
 representing this state is determined from the classical network description of $f(\x)$.  
\end{theorem}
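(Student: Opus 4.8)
The first assertion is essentially definitional: given a switching function $f:\7B^n\rightarrow\7B$, put
\be
\psi_f \bydef \sum_{\x\in\7B^n} f(\x)\,\ket{\x},
\ee
whose amplitudes all lie in $\{0,1\}$ because each $f(\x)$ does, and which is a nonzero vector (hence a state after renormalisation) as soon as $f$ is not the constant zero function. The real content is the second assertion, so the plan is to manufacture a tensor network for $\psi_f$ directly out of a classical circuit computing $f$.

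First I would invoke the universality of Boolean logic to fix a classical circuit for $f$ over one of the universal gate sets listed earlier (say $\{\COPY,\AND,\NOT\}$, or $\{\COPY,\AND,\XOR,\ket{1}\}$). Such a circuit is a directed acyclic graph with $n$ input nodes, one output node, internal gate nodes, and wire-splittings for fan-out. I then translate each piece into a tensor: every wire-splitting becomes a \COPY-tensor (Theorem~\ref{theorem:spider}), every $k$-input gate $g$ becomes the associated Boolean tensor $\ket{\psi_g}=\sum_{\mathbf z\in\7B^k}\ket{\mathbf z}\ket{g(\mathbf z)}$ --- for instance the \AND-tensor of \S~\ref{sec:AND} --- and ancillas become the constant $\ket{1}$. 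Each input variable $x_i$ is attached to a \COPY-tensor one of whose legs is left open; this open leg will be the $i$-th leg of the final state.

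The crux is the lemma that \emph{function composition corresponds to tensor contraction}: when the output wire of a subcircuit computing $g_1$ feeds an input port of a gate $g_2$, joining those two legs by a single wire --- i.e.\ contracting with the Kronecker-delta tensor and summing over the shared index --- produces exactly the tensor whose bit pattern stores the composite truth table, since the delta forces the internal value to equal $g_1(\cdots)$ and the summation keeps precisely the consistent assignments. Proceeding by induction on a topological ordering of the gates, after all internal wires are contracted the network collapses to a single tensor of the form $\sum_{\x}\ket{x_1,\dots,x_n}\otimes\ket{f(\x)}$, with the $n$ input legs surviving as open wires and one output leg carrying the value $f(\x)$.

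To finish, contract the output leg with $\bra{1}$; since $f(\x)\in\{0,1\}$ one has $\braket{1}{f(\x)}=f(\x)$, so the network evaluates to $\sum_{\x} f(\x)\ket{\x}=\psi_f$, which is precisely the post-selection construction of \eqref{eqn:Booleaninput} and Figure~\ref{fig:bs}(a); bending the output wire around instead lands in the form \eqref{eqn:Booleanstates} of Figure~\ref{fig:bs}(b). Everything is understood up to a global scalar, which absorbs the normalisation factors picked up from \COPY{} and \XOR{} tensors in the scalar gauge. The main obstacle is not any single deep step but the careful wire bookkeeping: one must verify that fan-out is faithfully implemented so each $x_i$ appears exactly once as an open leg, that no input is accidentally consumed, and that every internal contraction is precisely the restriction to consistent bit assignments --- this is exactly where the \COPY-tensor identities (Definition~\ref{def:copy-prop}, Theorem~\ref{theorem:spider}) and the delta-tensor relations do the work, and it becomes routine once the dictionary \emph{gate}~$\mapsto$~\emph{tensor} is pinned down.
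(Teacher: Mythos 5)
Your proposal is correct and follows essentially the same route as the paper: the paper's own justification is the one-line observation (the ``From composition to contraction'' remark and the construction of \S~\ref{sec:btn}) that each classical gate of a network for $f$ is promoted to its Boolean tensor, composition of functions becomes contraction of tensors, and the state is recovered by post-selecting the output on $\ket{1}$ as in \eqref{eqn:Booleaninput} or by bending the output wire as in \eqref{eqn:Booleanstates}. Your version merely spells out the bookkeeping (universal gate set, \COPY{} for fan-out, induction over a topological ordering) that the paper leaves implicit.
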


\begin{remark}[From composition to contraction]
The quantum tensor network is found by letting each classical gate act on a linear space and from changing the composition of functions, to the contraction of tensors.
\end{remark}

\begin{remark}[Shannon and Davio decomposition]
 There are several elegant methods that allow one to factor a classical function into networks of different structures.  We should only have time to briefly mention these.  Those interested can consult wikipedia for Shannon and Davio decompositions.  
\end{remark}

We have examined in some detail a Boolean tensor tool box.  This is perhaps the most accessible part of tensor network states, due to its strong relation to Boolean circuits and widespread awareness of techniques to manipulate standard Boolean networks.  This tool box forms {\it the glue} or building blocks behind the more complicated applications.  

\section{Problems}

\begin{myexercise}[LOCC equivalence]
 Show that the \AND-tensor is locally bit-flip equivalent to \NAND-, \NOR-, and \OR-tensors.  
\end{myexercise}

\begin{example}[Two-site reduced density operator of $n$-party {\sf GHZ}-states]\label{fig:Spider-comb}
{\sf GHZ}-states on $n$-parties have a well known matrix product expression given as
\be
\begin{aligned}
\ket{{\sf GHZ}_n} &= \text{Tr}\left[ \left( \begin{array}{ccccc}
\ket{0} & 0 & \cdots & 0 \\
0 & \ket{1} & \cdots & 0 \\
\vdots & \cdots        & \ddots & \vdots \\
0 &  \cdots & 0 & \ket{d-1} 
\end{array} \right)^{n}\right]\\
&=
\ket{0}\ket{0}\ket{0}+\ket{1}\ket{1}\ket{1}+\cdots+\ket{d-1}\ket{d-1}\ket{d-1}.
\end{aligned}
\ee
Such MPS networks are
known to be efficiently contactable.  We note that the networks in
Figure~\ref{fig:Spider} do not appear \textit{a priori} to be contractible due to the
number of open legs. What makes them contractible (in their present from) is that
the tensors obey the fusion 
law allowing them to be deformed into a contractible MPS network.  The reduced density matrix of an n-party
{\sf GHZ}-state then becomes (a) in
Figure~\ref{fig:ReducedGHZ} and the expectation value of an observable is shown in
(b) where we included the normalisation constant.
\begin{center}
    \includegraphics[width=0.75\textwidth]{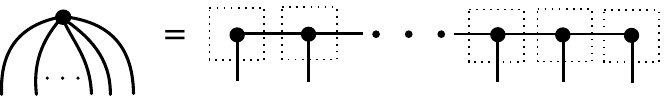}
\end{center}
The {\sf GHZ}-state tensor is simply a order-$n$ {\sf COPY}-tensor.  Node
equivalence
implies that this tensor can be deformed into any network
geometry including a MPS comb-like structure (right).
\begin{center}
    \includegraphics[width=\textwidth]{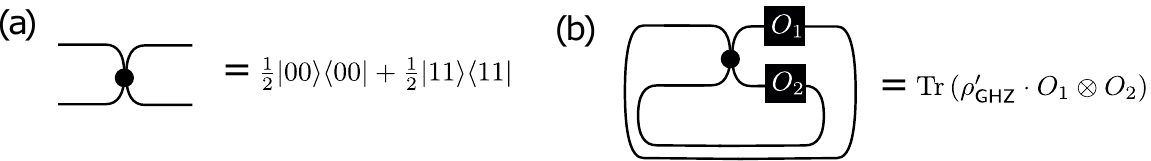}
    \label{fig:ReducedGHZ}
\end{center}
Reduced density operator.  Left (a) reduced density operator $\rho_{\sf GHZ}'$ found from applying the fusion law to a n-qubit {\sf GHZ}-state.  Right (b) the expectation value of observable $O_1\otimes O_2$ found from connecting the
observable and connecting the open legs (i.e.\ taking the trace).
\end{example}

\begin{myexercise}
Write down the matrix product state for the n-party GHZ  and n-party W-states. Let us define the two-point correlation as \begin{equation}
    C_{i,j}(\psi, A_i, A_j) =  \bra{\psi}A_i A_j \ket{\psi} - \bra{\psi}A_i  \ket{\psi}\bra{\psi}A_j \ket{\psi}
\end{equation}
Find $C(j\leq n):=C_{1,j}(\psi, X_1, X_j)$ for both GHZ- and W- where $X$ is the familiar Pauli matrix. Using Mathematica, make a publication quality plot for some fixed $n$ (e.g.~label everything and create a caption explaining the plot).  
\end{myexercise}

\begin{myexercise}[Basic properties of Boolean quantum states]
This exercise considers some elementary properties of Boolean quantum states.  
\begin{itemize}
 \item[(i)] Count the number of Boolean quantum states on n-qubits. Write truth tables for all two-input boolean functions and label appropriately those columns corresponding to \NOT, \AND, \XOR, \XNOR, \NAND, \NOR, and \OR.   
 \item[(ii)] Let $ \psi^2_{\7B} = c_0 \ket{00} + c_1\ket{01} + c_2 \ket{10} + c_3 \ket{11}$ represent a boolean quantum state, and hence $\forall i, c_i=0,1$.  Also let the output of all possible two-bit functions be 
 given by a truth vector $\textbf{c} = (c_0, c_1, c_2, c_3)$ and classify entangled vs non-entangled boolean states on two qubits.  
 \item[(iii)] Hence, using the result in (ii) or otherwise, show that the possible values of entanglement (using the quantity $K_1$ or $J_2$) is course grained and give the possible values.
\end{itemize}
\end{myexercise}

\begin{example}[valence-(0,3) tensor
factorization]\label{theorem:factorisation}
Every tensor in $\2A\otimes\2A\otimes\2A$ has a factorization as: (i) two valence-(1,0) tensors in $\2A$ (e.g.~the triangles); (ii) two valence-(2,1) \COPY-tensors; (iii) two unitary
morphisms of type $\2A\rightarrow\2A$ (white boxes).  This factorization is
given
diagrammatically as follows. 
\begin{center}
\includegraphics[width=8\xxxscale]{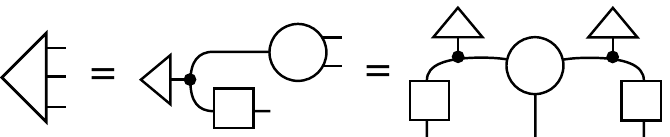}
\end{center} 
\end{example}

\begin{myexercise}[MPS factorization of the \AND-state]
 Recall the MPS factorization of quantum states covered in lecture I.
Present an MPS factorization of the \AND-state
\be 
\psi_\wedge = \ket{000}+\ket{010}+\ket{100}+\ket{111}
\ee 
into five elementary tensors, as illustrated in Example
\ref{theorem:factorisation}.   Write this as a matrix product.  (Note: it
is acceptable to use a computer to calculate the SVD. Another method follows
from considering the eigenvalues of reduced density states.)
\end{myexercise}

\begin{myexercise}[The class of linear quantum states]
 We define the linear class of quantum states as quantum states of the form 
 \be 
 \psi_{\oplus L} = \sum c_1x_1\oplus c_2x_2\oplus ...\oplus c_nx_n \ket{x_1, x_2, ..., x_n}.
 \ee
 Here we will consider a 1D system.  What is the maximum possible $\chi$ (as defined in lecture I) in a partition in one of these states?  
\end{myexercise}

\begin{myexercise}[Correlations in polarity states (Optional)]
Here let us consider a state defined as 
\be 
\psi:=\sum (-1)^{f(\textbf x)}\ket{\textbf x},
\ee 
where the sum is over all $\textbf x$.  These states are often considered in quantum algorithm theory.  
Prove that $\psi$ is separable iff the function implements a linear function.  
\end{myexercise}

\begin{definition}[Boolean Density Operators]
 In general, a Boolean density matrix takes the following form.  
 \begin{equation} 
\begin{split}
\rho_{\7 B} & = \sum f(\x)f(\y)\ket{\x}\bra{\y} =\\
 & = \sum f(x_1, ..., x_k, ..., x_n)f(y_1, ..., y_k, ...., y_n) \ket{x_1, ..., x_k, ..., x_n}\bra{y_1, ..., y_k, ..., y_n}
\end{split}
\end{equation}
\end{definition}
\begin{myexercise}[The Boolean trace theorem]
Show that performing the partial trace over the kth subsystem of a Boolean density state results in 
 \begin{equation} 
\begin{split}
\text{Tr}_k \rho_{\7 B}  & = \delta_{x_k, y_k} \rho_{\7 B}= \\
 & =  \sum f(x_1, ...,x_k, ..., x_n)f(y_1, ..., x_k, ..., y_n) \ket{x_1, ..., x_n}\bra{y_1, ..., y_n}
\end{split}
\end{equation}
\end{myexercise}

\begin{myexercise}[Two-qubit entanglement]
To prepare for the next problem, here we will continue our study of two-qubit entanglement, by considering again the results in Lecture I.  
 \begin{itemize}
 \item[(i)] By considering a general quantum state
 \be 
 \alpha \ket{00} + \beta\ket{01}+\gamma\ket{10}+\delta\ket{11},
 \ee  
and by using wire bending duality, construct the induced matrix $M$. 
  \item[(ii)]  Show that $\text{Det}(M)$ vanishes identically for $\alpha = ac$, $\beta = ad$, $\gamma = bc$ and $\delta = bd$ with $a,b,c,d\in \7C$ and factor states of this type into a product of local states $\phi_1(a,b)\phi_2(c,d)$.  
  \item[(ii)] Using this result from (ii) or otherwise, show that $\text{Det}(M)$ vanishes iff the the state giving rise to $M$ under wire duality takes the form $\phi_1(a,b)\phi_2(c,d)$.
 \end{itemize}
\end{myexercise}

\begin{myexercise}[Quantum \AND-tensors] 
With the contractions in Figure \ref{fig:andtensor} (a) and (b) in mind, consider instead the contraction with the state 
\be 
\alpha\ket{0} + \beta\ket{1},
\ee 
and hence, form an order-two tensor $T$ with components take values in $\alpha$ and $\beta$.  
\begin{itemize}
 \item[(i)] Write down the quantum state resulting from this contraction.  By bending a wire, write down the resulting two by two matrix, and label this $M(\alpha, \beta)$.  
 \item[(ii)] By considering the matrix determinant or otherwise, find the values 
 that $\alpha$ and $\beta$ must take for the state to be separable. 
 \item[(iii)] Consider now products of $M(\alpha, \beta)$.  Give a one sentence proof or disproof of the following statements:  The products of $M$ for (a) a magma (or groupoid); (b) a semigroup; (c) a monoid.  
 \item[(iv)] Are there values of $\alpha$ and $\beta$ that make a state local unitary equivalent to a bell-state? Are there values of $\alpha$ and $\beta$ that can break leg exchange symmetry in the state? 
\end{itemize}
\end{myexercise}

\begin{myexercise}
~
\begin{itemize}
 \item[(i)] Write down the functions for a two-party bell state, a GHZ and a W state.  Use the trace theorem to provide analytical closed formula for the reduced density states found 
 from tracing over the last bit.  Expand this as a matrix and compare to the result obtained using standard methods.  
 \item[(ii)] Using standard properties of Boolean algebra, show that $\rho_{\7 B}^2 = \rho_{\7 B}$ and hence that $\rho_{\7 B}$ is a projector.
 \item[(iii)] Show that $U=\I -2\rho_{\7 B}$ is self-adjoint and unitary.  Give values for the matrix trace and determinant of $U$. 
 \item[(iv)] Show that $(1-\rho_{\7 B})$ and $(\rho_{\7 B})$ project onto eigenspaces of $U$ and relate the dimension of the subspaces to properties of the boolean function.   
\end{itemize}
\end{myexercise}

\begin{myexercise}
Define the Werner states acting on $\mathbb{C}_2\otimes \mathbb{C}_2$ as 
\begin{equation}
    \rho_r = r\ketbra{\phi^+}{\phi^+}+ \frac{1-r}{4}\I
\end{equation}
where $\sqrt{2}\ket{\phi^+}=\ket{00}+\ket{11}$ is the standard Bell state and $r$ takes only values in the real interval $[0,1]$. Find (i) the matrix trace of $\rho_r$ and (ii) the eigenvalues of $\rho_r$.  The concurrence is 
\begin{equation}
    C(\rho) = \text{max}\{\lambda_1 - \lambda_2 - \lambda_3 - \lambda_4, 0\}
\end{equation}
where $\lambda_1 \geq \lambda_2 \geq \lambda_3 \geq \lambda_4$
Using part (ii) or otherwise, find $C(\rho_r)$. 
\end{myexercise}

\begin{myexercise}[The Shannon effect --- Research]
For many fundamental functions one can design efficient circuits.
Is it possible to compute each function by an efficient circuit? In 
1949 Shannon proved that almost all functions are hard functions, optimal
circuits for almost all functions have exponential size and linear depth.
This was proved quite easily using a counting argument \cite{Weg87}. The number of circuits with small circuit size or
small depth grows much slower than the number of different Boolean
functions implying that almost all functions are hard. This means that
a random Boolean function is hard with very large probability.  
\begin{itemize}
 \item[(i)] The circuits we consider here don't have the same temporal structure enforced by classical gates.  In that regard, we have seen that the function for the \W-state can be realized using two \COPY-gates, one \AND, and one \OR.  
 \item[(ii)] With this observation in mind, what can we say about the Shannon effect in the present case?  
\end{itemize}
\end{myexercise}
 \part{Symmetries and Stabilizer Tensor Theory}

\section{Introduction}
In quantum theory, one predicts experiments by performing calculations involving the processes that transform the mathematical representative of a quantum system, such as operators on a state space or an observable.  This mathematical structure has proven very rich, and consequently well studied.  It is often the case that mathematical structures, such as symmetries or other algebraic properties, allow us to understand much more about a situation at hand.  

The theory of Penrose tensor networks leverages one to study the mathematical structure formed by the composition of processes themselves. Tensor networks are not just a new conceptual tool, but in fact are becoming applicable across a range of disciplines.  Primarily they have practical applications in the context of condensed matter and many-body physics.  There are still fundamental questions related to these tools, left almost entirely unexplored.  

In this lecture, we will examine some basic aspects of symmetry in tensor network states.\mn{When describing known results in quantum information, tensor network practitioners should hope to find simple and elegant rewrites between structurally intuitive and aesthetically pleasing representations.}

\section{Symmetries in Tensor Network States}

Let $\psi^{ijk}$ be a tensor with valence-(3,0).  In the graphical tensor notation of Penrose \cite{Penrose} (who also often used triangles to depict quantum states) we represent this as  
\begin{center}
 \includegraphics[width=0.07\textwidth]{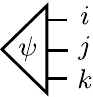}
\end{center}
When we refer to tensor symmetry, we typically mean under index exchange.  Formally, 

\begin{definition}[Exchange or index symmetric]
\label{def:ex-symmetry}
 A valence-($n$,$m$) tensor is {\it wire exchange symmetric} if one can exchange any of the arms amongst themselves, or any of the legs amongst themselves, with the effect leaving the tensor unchanged.  
\end{definition}

\begin{definition}[Full symmetry]
\label{def:full-symmetry}
A valence-($n$,$m$) tensor is {\it fully symmetric} if in addition to being exchange symmetric per Definition \ref{def:ex-symmetry}, the tensor is left invariant under arm and leg exchange. 
\end{definition}

\begin{remark}[The trivial representation]
 An exchange symmetric tensor of valence-($n$,$m$) carries the trivial representation of the symmetric group of order $n!$ (arm exchange symmetric) and order $m!$ leg exchange symmetric.  
\end{remark}

The diagrammatic generator of the symmetry group is expressed in (a).  
\begin{center}
 \includegraphics[width=0.5\textwidth]{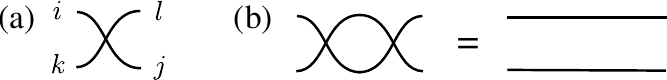}
\end{center}
Here (a) represents the tensor $\delta^i_j\delta^k_l$ and (b) represents the contraction 
\begin{equation}
  (\delta^i_j\delta^k_l)(\delta^i_j\delta^k_l)=\delta^i_i\delta^k_k.   
\end{equation}
In diagrammatic form, this 
exchange operator is used to generate the group of operators $S_1$ in (a), $S_2$ in (b) and $S_3$ of order $3!$ in (c).  
\begin{center}
 \includegraphics[width=.8\textwidth]{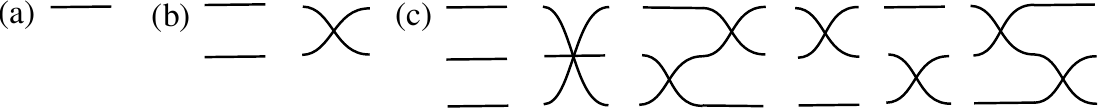}
\end{center}

Returning now to $\psi^{ijk}$.  The state $\psi^{ijk}$ is index symmetric provided it is left invariant when acted on with any operator from (c) above.  

\begin{remark}[Symmetry breaking]
 Contraction of symmetric tensors breaks symmetry.  Given two tensors $\Gamma_1$ and $\Gamma_2$ satisfying Definition \ref{def:full-symmetry} the contraction of $\Gamma_1$ and $\Gamma_2$
 over one or more indices is not necessarily symmetric.  Examples abound.  The sum of tensors (defined for tensors of the same valence) is symmetric.
\end{remark}

\paragraph{Symmetrizers.}
Let us consider methods to perform the symmetrization of tensors.  In the definition we avoid clutter by letting the reader 
figure out how the group should act on the arms and legs of the tensor.  

\begin{definition}[Symmetrizer]
 Let us define the operator, acting on appropriate types 
 \be 
 R_G:\Gamma \mapsto \frac{1}{|G|}\sum_{g\in G} g\{\Gamma\}
 \ee 
 then $R_G\{\Gamma\}$ is necessarily symmetric under the group $G$.  
\end{definition}

\begin{example}[Example of symmetrization]
Consider $\phi^{ijk}$ which is not necessarily symmetric.  We will symmetrize the first two indices labeled 
$i$ and $j$ under $S_2$.  In this case we find 
\be 
R_G\{\phi^{ijk}\}= \frac{1}{2}\phi^{ijk} + \frac{1}{2}\phi^{jik}. 
\ee
The shorthand notation for this procedure and expansion is to write round brackets over the indices we are symmetrizing as $\phi^{(ij)k}$.  
\end{example}

\begin{remark}[Symmetrization can vanish]
 It is possible to have a tensor vanish under $R_G$.  
 Examples include the antisymmetric state $\psi_\epsilon = \ket{01}-\ket{10}$.  
 Here 
 \be
 R_G\{\psi_\epsilon^{~ij}\} = \frac{1}{2} \psi_\epsilon^{~ij} +\frac{1}{2} \psi_\epsilon^{~ji} = \frac{1}{2} \psi_\epsilon^{~ij} - \frac{1}{2} \psi_\epsilon^{~ij}. 
 \ee 
\end{remark}

\begin{example}[Symmetric over other groups]
 We have defined $R_G$ in a way where we can pick groups $G$ other than the symmetry group acting on the tensor indices.  Consider say 
 $G=\{H\otimes H\otimes H, \I\}$ where $H^2=\I$ is the standard Hadamard transformation.  We consider the valence-(1,2) \COPY-tensor $\delta^i_{~jk}$ and let $R_G$ act as 
 \be 
   R_G\{\delta^i_{~ij}\} = \frac{1}{2}\delta^i_{~jk} + \frac{1}{2}\oplus^i_{~jk} 
 \ee 
here $\oplus^i_{~jk}$ and from now on, $\oplus^i_{~jk}$ denotes the \XOR-tensor.  The resulting tensor still has three indices.  This operation is given diagrammatically as 
\begin{center}
 \includegraphics[width=.75\textwidth]{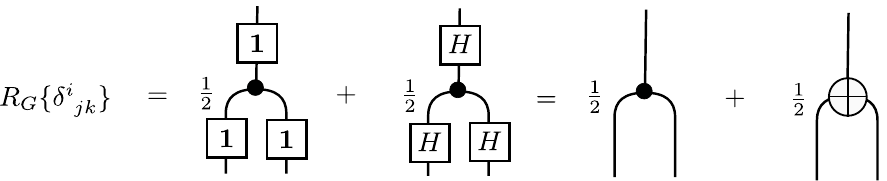}
\end{center}
\end{example}

\paragraph{Antisymmetrizers.}  One can also consider antisymmetrizers. These are used regularly in the study of fermionic particles for instance.  

\begin{definition}[Antisymmetrizer]
 We define the operator of antisymmetrization as 
 \be 
 \2 A = \frac{1}{N!}\sum_{g\in S_n}(-1)^\pi g 
 \ee 
 where $\pi = 0$ for $g$ even and $\pi = 1$ for $g$ odd.  
\end{definition}

\begin{example}[Antisymmetrization]
 Consider the tensor $\Gamma^{ij}$ of valence-(2,0).  Then 
 \be 
 \2A\{\Gamma^{ij}\} =\frac{1}{2}\Gamma^{ij} - \frac{1}{2}\Gamma^{ji}
 \ee 
 and provided the tensor $\2A\{\Gamma^{ij}\}$ is non-zero, it is said to carry the sign representation of $S_2$.  It is standard to write 
 antisymmetrization over indicies, by including them in square brackets.  The example here would then be $\Gamma^{[ij]} = \2A\{\Gamma^{ij}\}$. 
\end{example}

\begin{remark}[Antisymmetrization has a non-trivial kernal]
 Examples abound of tensors which map to zero under $\2A$.  For example, there is no fully antisymmetric tensor on $(\7C^{2})^{\otimes 3}$.  
\end{remark}

\begin{example}[Considering other groups]
 We will let 
 \be 
 T^i_{~jk} = \delta^i_{~jk} - \oplus^i_{~jk}
 \ee 
 then if we consider action under
 \begin{equation}
     G=\{H\otimes H\otimes H, \I\}, 
 \end{equation}
we have that the action
 \begin{equation}
     H\otimes H\otimes H\{T^i_{~jk}\}=-T^i_{~jk}
 \end{equation}
and 
\begin{equation}
    \I\{T^i_{~jk}\} = T^i_{~jk}. 
\end{equation}
\end{example}

Let us now consider some of the symmetries present in the tensors we have defined in the first two lectures.  We will start with the \COPY-tensor.  
\begin{center}
 \includegraphics[width=.4\textwidth]{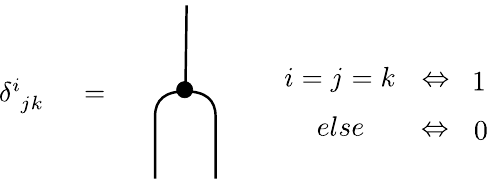}
\end{center}
This is written using standard quantum theory notation as 
\be 
\sum_{ijk}\delta^i_{~jk}\ket{jk}\bra{i}. 
\ee 
As we recall, copy is fully symmetric. In addition, 
\begin{remark}[The types of possible states built from \COPY]
A concise definition of the \COPY-tensor is given as any combination of raised or lowered indices on 
 $\delta^{ijk}$, a Kronecker delta function on three indices.   
 In step, one might write the n-party \GHZ-state as 
 \be 
 \psi_\GHZ = \sum \delta^{ijk...l}\ket{ijk...l}. 
 \ee 
Tensor products of states of this form are precisely the only types of states constructible with the \COPY-tensor alone.  
\end{remark}
\begin{figure}[h]
\includegraphics[width=15\xxxscale]{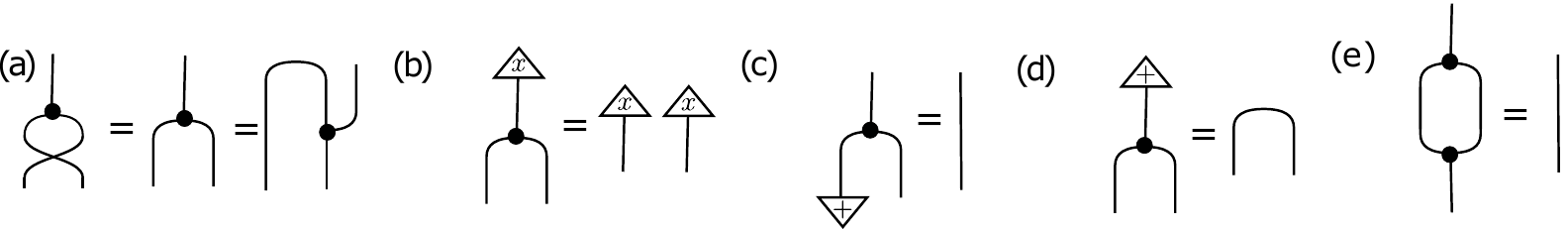}
\caption{Diagrammatic properties of the \COPY-tensor \cite{boolean03}. (a) Full-symmetry.
(b) Copy points, e.g.\ $\ket{x}\mapsto\ket{xx}$ for $x=0,1$ for qubits.  (c) The unit 
given as $\bra{+}\bydef\bra{0}+\bra{1}$.  (d) Applying the unit to the top leg creates a cap.  (e) Copy then delete is the same as identity.}\label{fig:copygate1}
\end{figure}

We are going to study these diagrammatic equations from Figure \ref{fig:copygate1} under local unitary change of basis.  
We will first define the rotation operator 
\begin{definition}[Tensor rotation operator]
 Let us define a rotation operator which acts on a valence-(n,m) tensor as follows.  On the n input arms, $U$ is applied, 
 on the m output legs, $U^\dagger$ is applied.  We will define this map as $U_G$ which is given diagrammatically as follows 
\begin{center}
 \includegraphics[width=.4\textwidth]{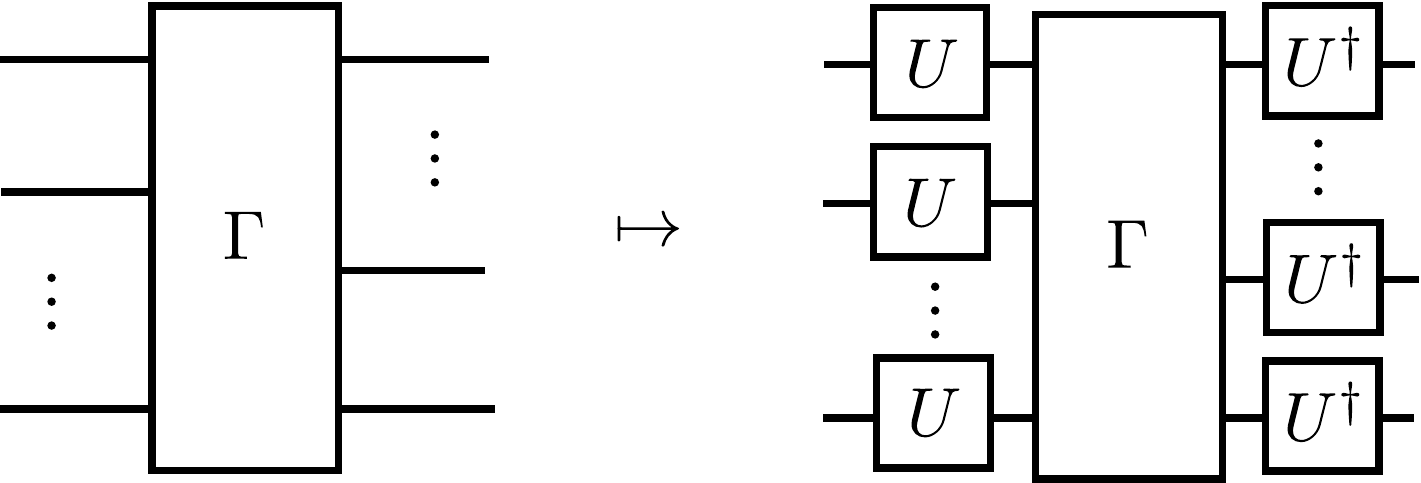}
\end{center}
\end{definition}

\paragraph{Rotations of \COPY-tensors.}\label{sec:generalcopy}
It is of course desirable to extend the definitions of the valence-(1,2) \COPY-tensors to act on arbitrary bases.  We would then change to a basis 
$B$ as 
 \begin{center}
\includegraphics[width=0.60\textwidth]{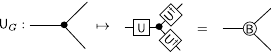}
\end{center}

\begin{example}[Engineering a map to copy a particular basis]
Say you have an orthonormal basis $\{\ket{\psi},\ket{\psi^\bot}\}$, which we will call $\sf B$ and you wish to define a \COPY-tensor on this basis as $\ket{\psi}\mapsto \ket{\psi,\psi}$ and $\ket{\psi^\bot}\mapsto \ket{\psi^\bot,\psi^\bot}$.  To represent such a tensor diagrammatically we draw 
 \begin{center}
\includegraphics[width=0.40\textwidth]{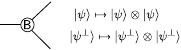}
\end{center}
To transform the \COPY-tensor to copy this new basis, define the unitary 
\be
U = \sum_i\ket{i}\bra{\phi_i}
\ee
where $\{\phi_i\}_i$ and $\{i\}_i$ are bases for the same space such that $\braket{\phi_i}{\phi_j}=\delta_{ij}$ and $\braket{i}{j}=\delta_{ij}$.  The map $U$ is unitary as 
\be
U^\dagger U = (\sum_i\ket{\phi_i}\bra{i})(\sum_j\ket{j}\bra{\phi_j})=\sum_j\ket{\phi_i}\delta_{ij}\bra{\phi_j}=\I. 
\ee
\end{example}

\begin{remark}
There are important cases were we only consider transforming \COPY-tensors using self adjoint unitary maps.  Hadamard is an example with
\begin{equation}
    H=\frac{1}{\sqrt{2}}(X+Z)=H^\dagger. 
\end{equation}
Every projector gives rise to such a map as $U= \I -2P$ is unitary when $P^2=P$.  The map is invertible, and hence every self adjoint unitary gives rise to a projector, illustrating the 
bijection between self adjoint unitary maps and projects.  
\end{remark}

\begin{example}[Transforming \COPY-tensors into \XOR-tensors]
 It is well known in the theory of algebra that Hadamard transforms (the Fourier transform over $\7Z_2$) relate \COPY-tensors and \XOR-tensors.  This is also known in quantum circuits.  For the higher dimensional case see \cite{BB11}.  
 The transformation of a \COPY-tensor into an \XOR-tensor is given diagrammatically as follows.  
  \begin{center}
\includegraphics[width=0.20\textwidth]{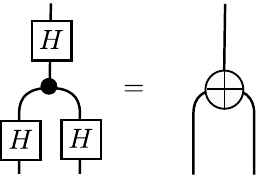}
\end{center}
\end{example}

\begin{remark}[Arbitrary rotations]
 We will now mention exactly how the general form of the diagrammatic laws can transform properly under rotation.  
The general form is straightforward.  States rotate under $U^\dagger$ and tensors rotate as we have already defined.  
\begin{center}
\includegraphics[width=0.60\textwidth]{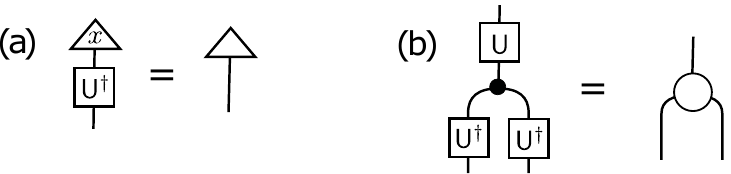}
\end{center}
\end{remark}

\begin{myexercise}[General rotation] 
 Prove that if the \COPY-tensor is rotated as by $U$ above, then the diagrammatic identities hold.  
\end{myexercise}

\section{The Interaction of Networks Comprised of \texorpdfstring{$\delta^{ijk}$}{} and \texorpdfstring{$\oplus^{ijk}$}{}}

We will now consider networks comprised of \COPY-tensors and \XOR-tensors.  Let us first recalls DeMorgan's law.  

\begin{definition}[DeMorgan's Law]
DeMorgan's law is a relationship between logical \AND, ($\wedge$) and logical \OR, ($\vee$).  The relationship 
is found by considering what in quantum circuit theory would be called bit flips ($X$), 
induced by logical negation ($\neg$).  The equation reads 
\be 
\neg (a \wedge b) = (\neg a)\vee (\neg b) 
\ee 
In diagrammatic form, this equation becomes 
\begin{center}
\includegraphics[width=0.45\textwidth]{demorgan}
\end{center}
we note that $X^2 = \I$. 
\end{definition}

\begin{remark}[Relating \COPY{} and \XOR-tensors]
A similar structure to the DeMorgan's law relating \AND{} and \OR-tensors also holds 
between \COPY- and \XOR-tensors.  In this case, instead of using negation, we use the Hadamard gate $H$.  
In diagrammatic form, this becomes 
\begin{center}
\includegraphics[width=0.40\textwidth]{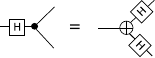}
\end{center}
and we note that $H^2=\I$.  
\end{remark}

To continue, let us recall the definitions of the tensor at hand.  In constraint equation form, together with graphical representations, these read as 
\begin{center}
\includegraphics[width=0.910\textwidth]{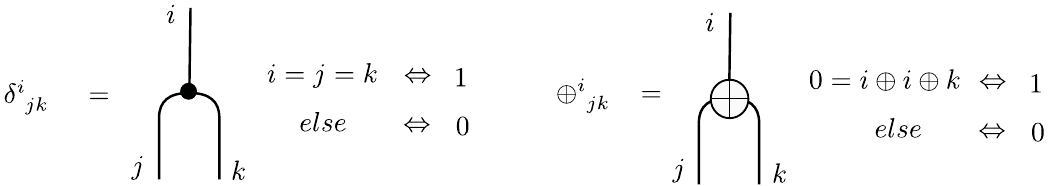}
\end{center}
from the constraint equations, we can readily construct, a sort of ``truth table'', which gives the value of the tensor contraction, provided we contract the wires (labeled $i$, $j$ and $k$) with states $\ket{0}$ or $\ket{1}$.  We illustrated these possibilities in the contraction table below.  

\begin{center}
\begin{tabular}{ccc||c||c}
i         & j         & k         & $\oplus_{ijk}$ & $\delta_{ijk}$\\\hline
$\ket{0}$ & $\ket{0}$ & $\ket{0}$ & 1 & 1\\
$\ket{0}$ & $\ket{0}$ & $\ket{1}$ & 0 & 0\\
$\ket{0}$ & $\ket{1}$ & $\ket{0}$ & 0 & 0\\
$\ket{0}$ & $\ket{1}$ & $\ket{1}$ & 1 & 0\\
$\ket{1}$ & $\ket{0}$ & $\ket{0}$ & 0 & 0\\
$\ket{1}$ & $\ket{0}$ & $\ket{1}$ & 1 & 0\\
$\ket{1}$ & $\ket{1}$ & $\ket{0}$ & 1 & 0\\
$\ket{1}$ & $\ket{1}$ & $\ket{1}$ & 0 & 1
\end{tabular}
\end{center}

\begin{myexercise}[Linearity of tensor contraction]
 Tensor contraction is linear. Given the table above, use linearity to determine the contraction found from other states, such as say $\ket{+}$, $\ket{-}$, etc. 
\end{myexercise}

To consider further properties relating the interaction of \COPY-tensors and \XOR-tensors we will recall Lafont's presentation of the Boolean calculus \cite{boolean03, CTNS}, given in Figure \ref{fig:F2-presentation1}.

\begin{myfullpage}
\begin{illexample}
\label{fig:F2-presentation1}
\begin{center}
\includegraphics[width=20\xxxscale]{F2-presentation1}
\end{center}
Lafont's presentation of the linear fragment of the Boolean calculus \cite{boolean03}.  
(a) associativity; (b) unit laws; (c) input symmetry (as mentioned, these tensors are in fact fully symmetric 
by Definition \ref{def:full-symmetry}); (d) bialgebra law; (e) illustrates that the unit for \XOR{} is a copy point for the \COPY-tensor and vice versa; (f) the inner product $\braket{0}{+}\in\7C$.  Global scale factors are represented as blank space on the page; (g) is the Hopf law.
\end{illexample}
\end{myfullpage}

These rules, and also the algebraic properties of \XOR-algebra result in the following class of functions (the only type possible to construct using the operations at hand.  

\begin{definition}[Linear and affine boolean functions]
Linear Boolean functions take the general form
\be
f(x_1,x_2,...,x_n)=c_1x_1\oplus c_2x_2\oplus ...\oplus c_nx_n
\ee 
where the vector $(c_1,c_2,...,c_n)$ uniquely determines the function.  The affine
Boolean functions take the same general form as linear functions.  However, for functions
in the affine class, variables can appear in both complemented
and uncomplemented form. Affine Boolean functions take the general form
\begin{equation}\label{eqn:affine}
f(x_1,x_2,...,x_n)=c_0\oplus c_1x_1\oplus c_2x_2\oplus ...\oplus c_nx_n
\end{equation}
where $c_0=1$ gives functions outside the linear class.  From the identities,
$1\oplus 1=0$ and $0\oplus x=x$ we require the introduction of only one constant
($c_0$). 
\end{definition}

When contracted, \XOR- and {\sf COPY}-tensors compose to create tensor networks representing the class of linear
quantum networks. This gives rise to the following class of quantum states.  

\begin{definition}[The class of linear quantum states]
 We define the linear class of quantum states as quantum states of the form 
 \be 
 \psi_{\oplus L} = \sum c_1x_1\oplus c_2x_2\oplus ...\oplus c_nx_n \ket{x_1, x_2, ..., x_n}
 \ee
\end{definition}

\begin{myexercise}
  Here we will consider a 1D system.  What is the maximum possible $\chi$ (as defined in lecture I) in a bi-partition of a linear quantum state?   
\end{myexercise}

\begin{myexercise}[Correlations in polarity states (Optional)]
Here let us consider a state defined as 
\be 
\psi:=\sum (-1)^{f(\textbf x)}\ket{\textbf x}
\ee 
where the sum is over all $\textbf x$.  These states are often considered in quantum algorithm theory.  
Prove that $\psi$ is separable iff the function implements a linear function.  
\end{myexercise}

\begin{example}[Examples of linear quantum states]

\be 
\Phi^+ =\sum a\oplus b \ket{a,b} = \ket{01}+\ket{10} 
\ee 

\be 
\Psi^+ = \sum a\oplus \neg b \ket{a,b} = \ket{00} + \ket{11} 
\ee 

\be 
H\otimes H\otimes H \ket{\GHZ} = \sum a\oplus b \oplus c \ket{a,b,c}= \ket{010}+\ket{100}+\ket{001}+\ket{111} 
\ee 
\end{example}

\paragraph{The connection to quantum circuits.}
Lafont's 2003 paper \cite{boolean03} considered an algebraic theory of classical switching networks, but also considered certain aspects of quantum networks.  We also mention the results in \cite{BB11}. These ideas are readily applied to quantum circuits, and combined with the known gate identities appearing in text books on quantum information science. The development was in part, influenced by quantum circuits \cite{NC}, developments in the use of tensor networks in condensed matter, development of the ZX-calculus \cite{redgreen} as well as Lafont's influential work \cite{boolean03}. 

The structures in Figure \ref{fig:F2-presentation1} are also related to other approaches \cite{Kissinger09, DP10, Euler09, Edwards10}.
which have been used as a graphical language for measurement based quantum computation and for graph states \cite{Duncan2010, DP10, Euler09}.
Our method of arriving at this collection of tensors (Figure \ref{fig:F2-presentation1}) affords more general options and our presentation of the linear fragment 
here offers (i) improved semantics and (ii) a better theoretical understanding by pinpointing precisely that these networks correspond to the so called linear fragment 
of the \XOR-algebra \cite{CTNS}.

\begin{remark}[From \CNOT{} to \COPY-tensors or \XOR-tensors]
The following figure illustrates the contractions needed to transform a \CNOT{}-gate into either a \COPY-tensor (top) or an \XOR-tensor (bottom).  
 \begin{center}
\includegraphics[width=0.5\textwidth]{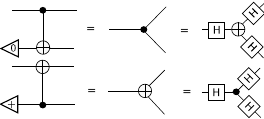}
\end{center}
These types of identities are common when considering interacting quantum observables \cite{CD, redgreen} as well as quantum circuits \cite{CD, redgreen,BB11}. 
\end{remark}

\begin{example}[\GHZ-class circuits]
We can realize the \GHZ-sate by the following circuit: 
 \begin{center}
\includegraphics[width=.95\textwidth]{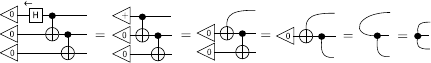}
\end{center}
The simplification from left to right is a sequence of contractions \cite{BB11}, which recover 
the familiar form of the \COPY-tensor.  On the other hand, one could also realize \GHZ{} by bending a wire as follows. 
 \begin{center}
\includegraphics[width=0.80\textwidth]{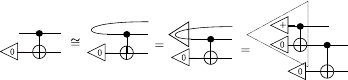}
\end{center}
These circuits scale to create $n$-qubit \GHZ-states in the evident way. 
\end{example}

\begin{example}[Transformations of the controlled Z-gate \cite{CD}]
The following illustrates (graphically) the well known identity that there is a symmetry between the control and target on a controlled Z-gate.  This sequence of rewrites is a law relating \COPY- and \XOR-tensors.  
  \begin{center}
\includegraphics[width=0.75\textwidth]{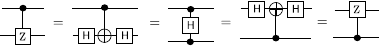}
\end{center}
Such a transformation is common in applications of graphical rewrite systems to measurement based quantum computation \cite{DP10} and appeared in the early work \cite{CD}. 
\end{example}

Now we will recall Lafont's diagrammatic presentation of the bialgebra law, between contractions of \COPY- and \XOR-tensors.  This relationship also appeared in early work on the ZX-calculus in \cite{CD, redgreen}. 

\begin{definition}[\COPY{} and \XOR-tensors form a (scaled) bialgebra \cite{boolean03}]
As mentioned, we often consider {\it equality} as being up to a scalar.  One can think of the bialgebra law as a type of commutation relationship amongst tensor pairs \cite{boolean03}---see early applications to quantum computing in \cite{CD,redgreen} which defined the {\it scaled} bialgebra law.  Turning it side ways we have (ignoring relative scale factors)
\begin{center}
\includegraphics[width=0.4\textwidth]{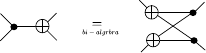}
\end{center}
\end{definition}

\begin{myexercise}[\COPY{} and \XOR-tensors form a bialgebra \cite{boolean03}]
Verify that \COPY{} and \XOR-tensors form a bialgebra \cite{boolean03}, up to a global scale factor, which should be determined.   
\end{myexercise}

We will now relate this abstract definition to well known quantum circuit identities.  Let us first recall the well known factorization of the \swap{} gate into a triple product of \CNOT-gates.  

\begin{example}[Factorization of the \swap-gate]
As can be shown using the \XOR-algebra (Appendix \ref{sec:XOR}), the \CN-gate together with its horizontally mirrored pair allows one to construct the \swap-gate as follows.  
\begin{center}
 \includegraphics[width=0.40\textwidth]{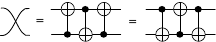}
\end{center}
A fully graphical proof of this relationship first appeared in the 2008 work \cite{CD} which introduced what is now called the ZX-calculus--see also \cite{redgreen}. 
\end{example}

\begin{theorem}[Relating \swap{} and the scaled bialgebra law] 
In the past, we have provided a diagrammatic proof that the square of the controlled not gate is equal to the identity.  This in turn allows one to relate the \swap-gate (using its factorization above) and the bialgebra law.  
\begin{center}
\includegraphics[width=0.36\textwidth]{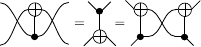}
\end{center}
So the factorization of \swap{} together with \CN$^2=\I$ is enough to have the bialgebra law on tensors.  
\end{theorem}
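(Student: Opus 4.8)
The statement is diagrammatic, so the plan is to run the known derivation of $\SWAP = \CNOT\cdot\overline{\CNOT}\cdot\CNOT$ essentially backwards: this time the factorization of $\SWAP$ and the identity $\CNOT^2 = \I$ are taken as the hypotheses, and the bialgebra law between \COPY{} and \XOR{} is to be extracted as the conclusion. Three ingredients, all already available above, are needed: (i) the \CNOT{} gate written as the contraction of a valence-$(2,1)$ \COPY-tensor with a valence-$(1,2)$ \XOR-tensor, Eq.~\eqref{eqn:cnot}; (ii) the factorization $\SWAP = \CNOT\cdot\overline{\CNOT}\cdot\CNOT$ from the preceding example (here $\overline{\CNOT}$ is the horizontally mirrored gate); and (iii) $\CNOT^2 = \I$, established diagrammatically earlier --- note that, unpacked in \COPY/\XOR{} terms, this identity already carries the relation $x\oplus x = 0$, i.e.\ the Hopf interaction of \COPY{} and \XOR{}.

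First I would take the right-hand side of the bialgebra law, $(\XOR\otimes\XOR)\circ(\I\otimes\SWAP\otimes\I)\circ(\COPY\otimes\COPY)$, and replace the internal $\SWAP$ by its three-\CNOT{} factorization, then rewrite each \CNOT{} via (i). This turns the right-hand side into a single planar network built only from \COPY- and \XOR-tensors: the two outer \COPY's, the $3$ \COPY's and $3$ \XOR's produced by the $\SWAP$, and the two outer \XOR's. Next I would apply the fusion (spider) law of Theorem~\ref{theorem:spider}: the two outer \COPY-tensors merge with the control-leg \COPY-tensor of the adjacent \CNOT{}, and the two outer \XOR-tensors merge with the target-leg \XOR-tensor of the adjacent \CNOT{}. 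After these merges the residual $\CNOT\cdot\overline{\CNOT}$ core sits between a \COPY{} fan-out and an \XOR{} fan-in; at this point I would invoke $\CNOT^2 = \I$ --- equivalently, the \COPY--\XOR{} loop formed by the doubled internal wire collapses because $x\oplus x = 0$ --- to delete that core. What survives is exactly a valence-$(2,1)$ \XOR{} followed by a valence-$(1,2)$ \COPY{}, which is the left-hand side of the bialgebra law.

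A more economical route, which I suspect is the one drawn in the displayed figure, is to first derive the auxiliary identity $\CNOT\cdot\SWAP = \overline{\CNOT}\cdot\CNOT$ by composing $\SWAP = \CNOT\cdot\overline{\CNOT}\cdot\CNOT$ with one further \CNOT{} and cancelling $\CNOT^2 = \I$, and then to cap the spare control-output wires with the \COPY-counit $\bra{+}$. Using ``copy-then-delete is the identity'' (Figure~\ref{fig:copygate1}(e)) the leftover \COPY-tensors on those legs become plain wires, and the projected identity is precisely the bialgebra relation between \XOR{} and \COPY{}.

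The main obstacle is not any individual rewrite but the topological bookkeeping: one has to track carefully which leg of each \COPY{} or \XOR{} tensor connects to which, and keep the input/output orientations straight, as tensors are slid past each other and fused. This is exactly where the full arm-and-leg symmetry of the \COPY- and \XOR-tensors and the freedom to bend wires with cups and caps are essential; without them the planar rearrangements above would not be licensed. As a consistency check (not a replacement for the diagrammatic argument) one may verify the endpoint directly: both sides of the bialgebra law are explicit valence-$(2,2)$ tensors with entries in $\{0,1\}$, and they agree term by term once the global scalar is fixed.
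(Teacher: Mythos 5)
Your second, ``economical'' route begins with exactly the right move---and it is evidently the move behind the paper's figure: composing $\SWAP=\CNOT\cdot\overline{\CNOT}\cdot\CNOT$ with one further \CNOT{} and cancelling $\CNOT^2=\I$ gives $\CNOT\cdot\SWAP=\overline{\CNOT}\cdot\CNOT$, and the whole content of the theorem is that this two-gate identity, once each \CNOT{} is expanded into its \COPY/\XOR{} constituents via Eq.~\eqref{eqn:cnot}, \emph{is} the (scaled, sideways-drawn) bialgebra law: the left side is a single black dot joined to a single plus dot with the outgoing wires crossed, while the right side is two black and two plus dots wired as a four-cycle ($K_{2,2}$). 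Matching Lafont's vertical orientation is then a matter of sliding dots and bending external legs by Penrose duality, nothing more. Your finishing step, however, breaks this: a $2$-in/$2$-out identity has no ``spare'' wires to cap, and capping a leg with the counit $\bra{+}$ is deletion, not transposition---it erases precisely the \COPY-tensors that must survive into the bialgebra relation, leaving only a unit/copy-point--type identity rather than the bialgebra law.

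The first route has a harder failure. After substituting the three-\CNOT{} factorization into the bialgebra RHS, the block is \emph{alternating}, $\CNOT\cdot\overline{\CNOT}\cdot\CNOT$, so $\CNOT^2=\I$ never applies inside it; and the ``residual $\CNOT\cdot\overline{\CNOT}$ core'' you propose to delete is not the identity (as operators $\overline{\CNOT}\cdot\CNOT:\ket{a,b}\mapsto\ket{b,a\oplus b}$, an element of order three), so that deletion is an invalid rewrite. Nor does the fusion you perform produce a doubled black--plus edge, so the Hopf collapse $x\oplus x=0$ is not available either; finishing that reduction with only fusion and $\CNOT^2=\I$ would in effect require the very bialgebra law you are trying to prove. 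Finally, the component-wise ``consistency check'' verifies the bialgebra identity itself, which is a fine sanity check but not a derivation from the stated hypotheses.
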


We will now introduce the gate-copy rewrite rule. Gate-copy allows one to
\textit{pull} controls and targets through each other.  When this happens, they are
copied, along with the attaching wires, leaving the attaching tensor intact.  

\begin{myfullpage}
\begin{theorem}[Gate-copy] 
\label{theorem:gatecopy}
The following graphical rewrites in (a) and (b) hold.    
 \begin{center}
 \includegraphics[width=1\textwidth]{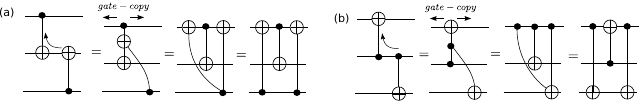} 
\end{center}
\begin{proof}
The proof of gate-copy follows from application of the bialgebra and fusion laws, as follows.     
 \begin{center}
 \includegraphics[width=0.65\textwidth]{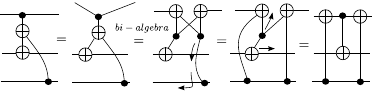} 
\end{center}
\end{proof}
\end{theorem}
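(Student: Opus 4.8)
The plan is to derive gate-copy from the two structural laws already in hand: the bialgebra law relating \COPY{} and \XOR{} (Figure~\ref{fig:F2-presentation1}(d)) and the fusion law (Theorem~\ref{theorem:spider}). First I would expand each \CNOT{} appearing in the statement using the contraction \eqref{eqn:cnot}, so that the control line carries a \COPY-tensor and the target line carries an \XOR-tensor, the two being joined by a single internal wire. In this language the object being ``pulled through'' the gate in part~(a) is itself a \COPY-tensor sitting on the control wire, while in part~(b) it is an \XOR-tensor on the target wire; the two attaching wires mentioned in the statement are the legs that will survive the rewrite.

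Next I would push that external \COPY-tensor into the gate. Where it meets the \COPY-component of the \CNOT{} along a shared wire, the fusion law of Theorem~\ref{theorem:spider} merges the two black dots into a single \COPY-tensor carrying the union of their legs. The diagram now consists of one \COPY-tensor connected through a single edge to the gate's \XOR-tensor; applying the bialgebra law to this \COPY{}--\XOR{} pair replaces that single edge by the complete-bipartite configuration of two \COPY-tensors and two \XOR-tensors. A final pass of the fusion law --- re-splitting the black dots along the wire that is to remain attached, and re-merging the two \XOR-tensors --- regroups everything into exactly two copies of the \CNOT{} gate with the attaching tensor left intact between them, which is the asserted rewrite. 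Scale factors are carried along in the scalar gauge, so the identity holds up to the (explicitly determined) global constant.

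Part~(b) then follows either by repeating the argument with the roles of \COPY{} and \XOR{} interchanged, or, more economically, by conjugating the whole of part~(a) by Hadamard gates on every wire: by \eqref{eq:copy-vs-xor} the gate \H{} intertwines \COPY{} and \XOR{}, and since $\H^{2}=\I$ the conjugating Hadamards cancel, leaving the target-version of gate-copy. One should note the same reasoning also produces the \CN-to-\COPY{} and \CN-to-\XOR{} identities used as intermediate steps, so there is no circularity.

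The step I expect to be the main obstacle is the bookkeeping around the fusion moves: one must keep precise track of which legs of the merged \COPY-tensor are external and which are internal after each merge, so that the bialgebra law is applied to exactly the right pair of wires and the final regrouping genuinely yields two undistorted copies of the gate rather than some deformed network. A secondary subtlety is checking that the orientation of the bialgebra law --- which tensor plays the role of product and which of coproduct --- is used consistently across (a) and (b); drawing the intermediate diagram explicitly, as in the accompanying figure, settles this and makes the proof a short chain of the two laws.
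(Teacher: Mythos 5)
Your proposal is correct and follows essentially the same route as the paper: the paper's proof is precisely a diagrammatic application of the bialgebra law together with the fusion (node-equivalence) law after expanding each \CNOT{} into its \COPY{}--\XOR{} contraction, which is exactly your chain of fuse--bialgebra--refuse moves. Your remark on obtaining part~(b) by Hadamard conjugation (or by exchanging the roles of \COPY{} and \XOR{}) is a fine, consistent way to cover the second rewrite.
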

\end{myfullpage}

 \begin{fullpage}
 \begin{example}[Circuit simplification using gate-copy]\label{fig:ZZZsim}
Here we apply gate-copy to simplify the circuit from~\cite{NC} designed to
simulate time evolution under the $\sigma^z\sigma^z\sigma^z$ Hamiltonian.
\end{example}
 \begin{center}
   \includegraphics[width=.95\textwidth]{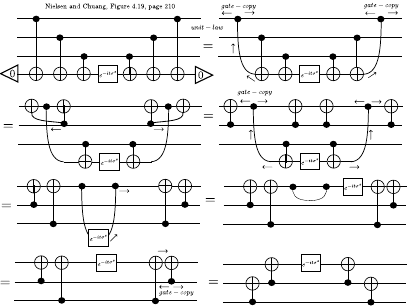} %
 \end{center}
Starting from the circuit from Figure 4.19 on page 210 of~\cite{NC}, we
apply a sequence of transformations including the Gate-copy reduction rule introduced in
Theorem~\ref{theorem:gatecopy}.  The network resulting from the simplification appears in the bottom
right. \\ 
\end{fullpage}

\paragraph{Hopf law.}  We recall Lafont's diagrammatic form of the Hopf-law \cite{boolean03}.  

\begin{theorem}[\COPY- and \XOR-tensors satisfy the Hopf-law \cite{boolean03}] 
The following diagrammatic equations, reproduced from \cite{boolean03}, depict the Hopf-law, which is satisfied by \COPY- and \XOR.  
\begin{center}
 \includegraphics[width=0.12\textwidth]{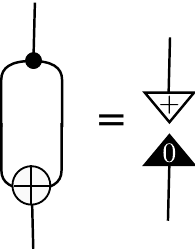} 
\end{center}
Exploration of the consequences of this identity in terms of quantum circuit manipulation can be found in the ZX-calculus \cite{redgreen} as well as other works on categorical models of quantum circuits \cite{BB11}. 
\end{theorem}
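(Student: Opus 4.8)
Since the Hopf equation already appears as relation~(g) of Lafont's presentation in Figure~\ref{fig:F2-presentation1}, the content of the statement is a \emph{soundness} check: the concrete \COPY- and \XOR-tensors of Sections~\ref{sec:COPY}--\ref{sec:XOR} really do satisfy it. The plan is to verify this directly by unfolding the diagram into standard-basis components. Read the left-hand side of the Hopf diagram as the linear map $L := \nabla_{\XOR}\circ\Delta_{\COPY}$, where $\Delta_{\COPY}\colon\ket{k}\mapsto\ket{kk}$ is the \COPY-coproduct and $\nabla_{\XOR}\colon\ket{ab}\mapsto\ket{a\oplus b}$ is the \XOR-product. First I would compute $L$ on the computational basis:
\[
L\ket{k} \;=\; \nabla_{\XOR}\,\Delta_{\COPY}\ket{k} \;=\; \nabla_{\XOR}\ket{kk} \;=\; \ket{k\oplus k} \;=\; \ket{0},\qquad k=0,1,
\]
so that $L=\ket{0}\bigl(\bra{0}+\bra{1}\bigr)=\ket{0}\bra{+}$. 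Since $\ket{0}$ is the unit of the \XOR-algebra and $\bra{+}=\bra{0}+\bra{1}$ is the counit (deletion) of the \COPY-coalgebra (Tables~\ref{fig:copygatesum} and~\ref{fig:xorgatesum}), this reads $L=\eta_{\XOR}\circ\epsilon_{\COPY}$, i.e.\ the right-hand side of the Hopf equation, with the antipode collapsing to the identity because $x\oplus x=0$ for qubits. That settles one of the two diagrams.

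The mirror equation --- \COPY\ playing the role of algebra and \XOR\ the coalgebra --- I would obtain by Hadamard-conjugating every leg: since $\H^{\otimes\bullet}$ turns \COPY\ into a (scaled) \XOR\ and conversely by Eq.~\eqref{eq:copy-vs-xor}, and since $\H\ket{0}=\ket{+}$ and $\H\ket{+}=\ket{0}$, the computation above transports verbatim to give $\nabla_{\COPY}\circ\Delta_{\XOR}=\ket{+}\bra{0}=\eta_{\COPY}\circ\epsilon_{\XOR}$. For completeness I would record the qudit generalization: there $\nabla_{\XOR}$ is addition $\bmod d$, so $\nabla_{\XOR}\Delta_{\COPY}\ket{k}=\ket{2k\bmod d}$ is no longer constant, and one must insert the antipode $A\colon\ket{k}\mapsto\ket{-k\bmod d}$, after which $\nabla_{\XOR}\circ(A\otimes\I)\circ\Delta_{\COPY}\ket{k}=\ket{0}$ again; for $d=2$ one has $A=\I$, recovering Lafont's scalar form of the law.

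An alternative, entirely diagrammatic route --- closer in spirit to the rest of the chapter --- is to derive the Hopf equation from the other relations in Figure~\ref{fig:F2-presentation1}: apply the bialgebra law~(d) to the two-wire loop, use the copy-point relation~(e) (the \XOR-unit is a copy point of \COPY\ and conversely) to disconnect the resulting node, and close with the unit laws~(b) and the fusion/node-equivalence law (Theorem~\ref{theorem:spider}). The main obstacle in either approach is bookkeeping rather than conceptual: in the diagrammatic route one must fix the order in which the bialgebra and fusion laws are applied so that no spurious loop is introduced, and in both routes one must track the accumulated scalar (a power of $2$, resp.\ of $d$) and state the equality in the scalar gauge $\mathbb{C}/\{0\}\to 1$, exactly as flagged in the remark on omitted normalization factors. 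With that caveat the proof is short; its substance is the one-line basis computation displayed above.
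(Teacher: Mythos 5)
Your verification is correct, but note that the paper itself offers no proof at all for this statement: it simply reproduces the diagrams from Lafont and, in an earlier example, records that \XOR{} and \COPY{} satisfy the Hopf law with trivial antipode. Your route is therefore genuinely different in that it actually supplies the missing check: reading the left diagram as $\nabla_{\XOR}\circ\Delta_{\COPY}$ and evaluating on the computational basis gives $\ket{k}\mapsto\ket{k\oplus k}=\ket{0}$, hence $\ket{0}\bra{+}$ with $\bra{+}=\bra{0}+\bra{1}$, which is exactly the unit-after-counit composite; the mirror computation $\nabla_{\COPY}\circ\Delta_{\XOR}=\ket{+}\bra{0}$ (which you obtain by Hadamard conjugation, at the cost of tracked $\sqrt{2}$ factors, but which also follows directly since $\delta_{ab}$ kills the cross terms of $\Delta_{\XOR}\ket{1}$) settles the second diagram, and in fact both equalities hold on the nose here, so the scalar-gauge caveat is harmless. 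Your qudit remark, inserting the antipode $\ket{k}\mapsto\ket{-k\bmod d}$ because $2k\not\equiv 0$, is a useful sharpening of the paper's parenthetical comment about higher dimensions. The one point to temper is your ``entirely diagrammatic'' alternative: the Hopf law is not a formal consequence of the bialgebra law, unit laws and copy-point relations alone --- for a general bialgebra $\nabla\circ\Delta$ is unconstrained --- and the known diagrammatic derivation (strong complementarity implies complementarity, in the sense of Coecke--Duncan) additionally uses the Frobenius/spider structure of both tensors, including cups and caps, to build and cancel the antipode. So that sketch should be presented as relying on the fusion law and Frobenius forms, not just on the bialgebra and copy-point axioms; as written it could not be completed in the order you indicate. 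Since your primary argument is the one-line basis computation, which stands on its own, the statement is fully proved.
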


Using the Hopf law, we can justify a well known gate identity.  This identity was generalized to arbitrary finite dimensions in \cite{BB11}.  
\begin{center}
 \includegraphics[width=0.5\textwidth]{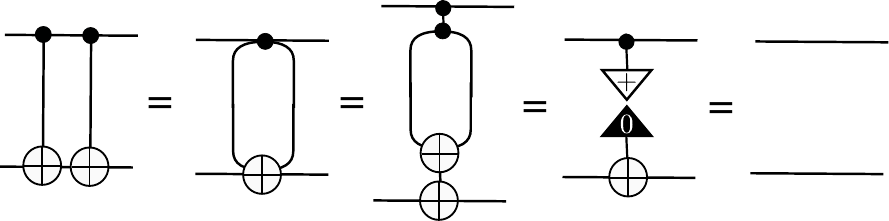} 
\end{center}

We have considered the key properties and defining equations of the \XOR- and \COPY-tensors.  We will use these results as 
building blocks for the sections that follow on from here. 

\section{Stabilizer Tensor Theory} \label{sec:stabtt}
Let us recall the notation of a stabilizer.  \marginnote{{\it ``We may always depend on it that algebra, which cannot be translated into good English and sound common sense, is bad algebra.'' --- William Kingdon Clifford}}

\begin{definition}[General stabilizer]
A unitary map $U$ stabilizes a quantum state $\psi$ iff 
\begin{equation}
U\psi = (+1)\psi. 
\end{equation} 
Stabilizers of $\psi$ from a group represented trivially by $\psi$.  
\end{definition}

We are concerned here with a subclass of the above definition.  We are concerned with what are commonly known as ``stabilizer states''.  

\begin{definition}[Stabilizer state]
 An $n$-qubit stabilizer state $\psi$ is defined by $n$ commuting and different operators $\1 S$ formed from the Pauli algebra with presentation
 \begin{equation}
     \{\I, X, Y, Z, \pm i, \cdot , \otimes\}. 
 \end{equation}
These $n$ operators generate the stabilizer group of order $2^n$ for $\psi$.   Clearly we have that $s\psi = \psi, \forall s\in \1 S$.    
\end{definition}

\begin{example}[Single qubit stabilizer states]
Here are the single qubit states stabilized by the Pauli-group
\begin{itemize}\addtolength{\itemsep}{-0.25\baselineskip}
 \item[(i)] \textbf{Pauli-X}: $\sigma^x$ stabilizes $\ket{+}=\ket{0}+\ket{1}$
and $-\sigma^x$ stabilizes $\ket{-}=\ket{0}-\ket{1}$
 \item[(ii)] \textbf{Pauli-Y}: $\sigma^y$ stabilizes
$\ket{y_+}=\ket{0}+i\ket{1}$ and $-\sigma^y$ stabilizes
$\ket{y_-}=\ket{0}-i\ket{1}$
 \item[(iii)] \textbf{Pauli-Z}: $\sigma^z$ stabilizes $\ket{0}$ and $-\sigma^z$
stabilizes $\ket{1}$
\end{itemize}
\end{example}

The following is simply a reminder of the essential properties of the Pauli operators.  These are useful for general knowledge and for working through the details of later calculations, but could be skimmed on a first read.

\begin{definition}
 Recall from angular momentum theory, the familiar Pauli Matrices.  We let $\sigma_1\equiv\sigma_x$, $\sigma_2\equiv\sigma^y$ and $\sigma_3\equiv\sigma^z$ which satisfy 
 \begin{itemize}
  \item[(i)] $[\sigma_i,\sigma_j] = 2i\epsilon^{ijk}\sigma_k$
  \item[(ii)] complex conjugation is generated by $\sigma_j$ $\forall w\in \{i,j,k\}$ as $\sigma_j\sigma_{w}^*\sigma_j=-\sigma_w$ for $w\neq j$
  \item[(iii)] $\text{Tr}(\sigma_i\sigma_j)=2\delta_{ij}$
 \end{itemize}
 Note that it is common to change subscripts to superscripts $\sigma_x\equiv\sigma^x$ to distinguish powers of operators $\sigma_x^2$ from operators acting on specific indices $\sigma^x_3$ --- that is, $\sigma_x$ acting on the third qubit and not $\sigma_x$ cubed.  This should be evident from context.
\end{definition}

These familiar operators from quantum mechanics form what is called a Geometric Algebra (a.k.a. Clifford Algebra). Consider $\{\sigma^l_i\}$ as a basis for a real left and right distributive vector space, such that
\begin{equation}\label{eqn:geometric1}
\sigma_x^2=\sigma_y^2=\sigma_z^2=\I=-i\sigma_x\sigma_y\sigma_z.
\end{equation}
Then consider the product
\begin{equation}\label{eqn:geoprod}
 \sigma_i\sigma_j = \delta_{ij}\I + i\epsilon^{ijk}\sigma_k~~~\text{(geometric product of vectors)}.
\end{equation}
It is clear that for $i\neq j$
\be \label{eqn:vanish}
\{\sigma_i^l,\sigma_j^l\}=0
\ee
meaning that the vectors anti-commute viz.,
\be \label{eqn:anti}
\sigma_i^l\sigma_j^l=-\sigma_j^l\sigma_i^l.
\ee
We also note that 
\be \label{eqn:eyepauli}
\{\sigma^i,\sigma^j\}=2\delta_{ij}\I.
\ee

\begin{myexercise}
 Verify Equations~\eqref{eqn:geoprod}, \eqref{eqn:vanish}, \eqref{eqn:anti} and \eqref{eqn:eyepauli}. 
\end{myexercise}

Single qubit density operators can be expanded in terms of a dot product of a polarisation vector $\underline{P} :=(p_1,p_2,p_3)$, $\forall i, p_i \in \7 R$, and a sigma vector $\underline{\sigma}:= (\sigma_1, \sigma_2,\sigma_3)$ as
\begin{equation}
\rho = \frac{1}{2}\I + \underline{\sigma}.\underline{P} =\frac{1}{2}\I +  p_i\sigma^i = p_1\sigma^1 + p_2\sigma^2 + p_3\sigma^3. 
\end{equation}  

Clearly the vectors $\underline{P}$ are in the vector space $\7 R^3$.  We can elevate this vector space to a Hilbert space by defining an inner product between vectors --- that is a map $(-,-)$ taking two elements from the vector space (in this case Hamiltonians on $\7C^2$) and producing a scalar in the underlining field $\7C$.
\begin{equation*}
 (-,-):(\7C^2\rightarrow\7C^2)\times(\7C^2\rightarrow\7C^2)\rightarrow \7C
\end{equation*}
\be 
(\underline{\sigma}.\underline{A}, \underline{\sigma}.\underline{B})\mapsto Tr[(\underline{\sigma}.\underline{A}).(\underline{\sigma}.\underline{B})] = \underline{A}.\underline{B}
\ee

\begin{myexercise}[Product of vectors] 
 Verify that straight forward calculation also yields
\be
(\underline{\sigma}.\underline{A},\underline{\sigma}.\underline{B})=A_iB_i + i\epsilon_{ijk}\sigma^k = \underline{A}.\underline{B} + i\underline{\sigma}(\underline{A}\wedge \underline{B}).
\ee
\end{myexercise}

\begin{myexercise}[Eigenvalues] 
 Verify, by using the fact that for $\underline{A}=\underline{B}$ the cross product vanishes or otherwise, that the eigenvalues of $(\underline{A}.\underline{\sigma})$ are $\pm|\underline{A}|$, where both roots necessarily appear as $\underline{A}.\underline{\sigma}$ is traceless. 
\end{myexercise}

We can relate symmetry in density operators and symmetry in states as follows.  

Before continuing on to define the Clifford group of quantum circuits, we will consider an example of a stabilizer state.  

\begin{example}[The Bell state is a stabilizer state]
 The bell state 
 \be 
 \Phi^+ = \sum a \oplus \neg b \ket{a, b} = \ket{00}+\ket{11}
 \ee 
 is a stabilizer state with stabilizer group 
 \be 
 \1S= \{\I, XX, -YY, ZZ\}
 \ee 
Note that this matches the following claims in the definition (i) the stabilizer group is of order $2^n$, here $n=2$, with generators given by e.g. $XX$, $ZZ$, etc. and (ii) the group is abelian.  
\end{example}

\begin{myexercise}
 Verify that $XX$, $-YY$ and $ZZ$ are in fact stabilizers of the Bell state $\Phi^+$ graphically.  
 Using the rules of the Pauli algebra, verify that these operators commute.  
\end{myexercise}

\begin{myexercise}
 Find the stabilizers for the following states.  
 \be 
 \Phi^- = \ket{00}-\ket{11}
 \ee 
 \be 
 \Psi^- = \ket{01}-\ket{10}
 \ee 
 Hint.  One method would be to use the Pauli algebra and the stabilizes for $\Psi^+$.  First cancel the local operator separating the given states from 
 $\Psi^-$, apply the known stabilizer and then reapply the operator to return to the starting state.  Other methods exist. 
\end{myexercise}

\section{The Clifford Group}

\marginnote{{\it 
The following are equivalent. 
\begin{itemize}
    \item[1.] The ZX-calculus \cite{Coecke2017}.
    \item[2.] Clifford gates plus cups and caps. 
    \item[3.] Ability to bend wires and compose stabilizer states.
\end{itemize}
}}

There is another useful way of generating stabilizer states.  We will consider a quantum circuit acting on the state $\ket{0}^{\otimes n}$.  The circuit is comprised solely of Clifford gates.

\begin{definition}[Clifford gates]
The collection of Clifford gates is as follows. (a) \CNOT{}; (b) the Hadamard gate $H = \frac{1}{\sqrt{2}}(X+Z)$; (c) The phase gate $P = \ket{0}\bra{0} + i \ket{1}\bra{1}$, and (d) the Pauli gates generated by the Pauli algebra on a single qubits, with 16 elements (the group generated by $\{X, Y, Z\}$).    
 \begin{center}
 \includegraphics[width=0.65\textwidth]{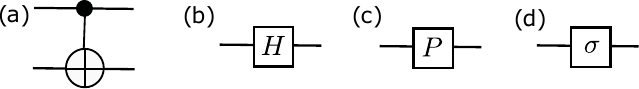} 
\end{center}
\end{definition}

\begin{remark}[Single qubit Clifford group]
The standard properties of single qubit gates follow. 
\begin{itemize}
 \item[(i)] $HXH=Z$; and $HZH = X$
 \item[(ii)] $PXP^\dagger = Y$; and $PYP^\dagger = Z = P^2$
\end{itemize}
These gates above generate the single qubit Clifford group.   
\end{remark}

With the definitions of the the Clifford group in place, one can state the alternative definition of stabilizer states.  

\begin{definition}[Stabilizer states]
If $\psi$ can be produced from the all-$\ket{0}$ state by Clifford gates, then $\psi$ is stabilized by $2^n$ tensor products of Pauli matrices or their sign opposites (where $n$ is the number of qubits).  This means that the stabilizer group is generated by $\log(2^n)=n$ such tensor products.  The state $\psi$ is then the \textit{stabilizer state} uniquely determined by these generators.
\end{definition}

\begin{remark}[Properties of Clifford circuits]
We now list elementary properties of Clifford circuits.  
 \begin{itemize}
  \item[(i)] Clifford circuits generate the Clifford group.
  \item[(ii)] Let $P_n$ be the collection of $4^n$ $n$-letter words with $\otimes$ as concatenation generated from the alphabet
  \begin{equation}
      \{\pm\I, \pm X,\pm Y,\pm Z, \pm i\I, \pm iX,\pm iY,\pm iZ\}. 
  \end{equation}
All operators $g$ in the Clifford group acts as an involution when $P_n$ is conjugated by $g$, that is $g P_n g^\dagger = P_n$.    
 \item[(iii)] Note, alternative notation to write $P_1$ could be 
 \be 
 P_1 = \{\pm 1, \pm i\}\{\I, X, Y, Z\}. 
 \ee
 \item[(iv)] Defining properties of the Pauli matrices:
 \begin{equation}
     X^2=Y^2=Z^2=\I = -iXYZ.
 \end{equation}
  \end{itemize}
\end{remark}

\begin{myexercise}
 Show that conjugation by $H$ lifts to an involution on $P_1$ by considering (iv) with $H= \frac{1}{\sqrt{2}}(Z+X)$.
\end{myexercise}

\section{Stabiliser Tensor Theory}

Let us cast the stabilizer theory into a theory of tensors, in the Penrose graphical calculus.  We will first define the notation of an abstract stabilizer for a tensor.  

\begin{definition}[Abstract stabilizer]
 Let $\Gamma$ be a valence-($n$,$m$) tensor.  A stabilizer for $\Gamma$ is given by $m+n$ local invertible operators, satisfying 
  \begin{center}
 \includegraphics[width=0.5\textwidth]{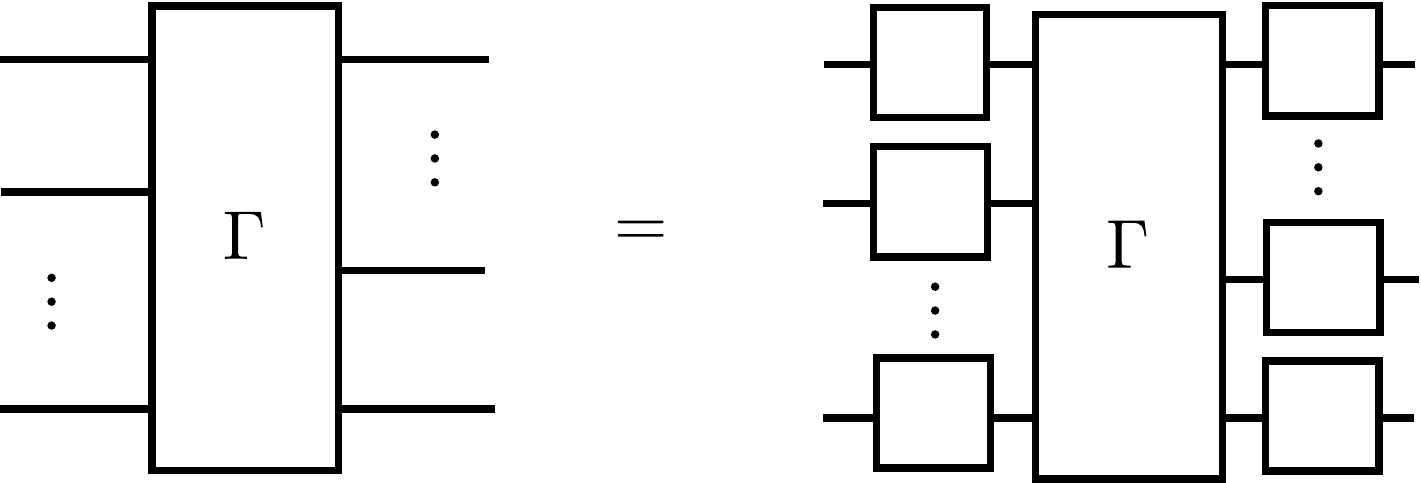} 
\end{center}
\end{definition}

We then turn to a definition of tensors, which as we will soon show, generate the Clifford circuits as a subclass.  

\begin{theorem}[Stabilizer tensors]
Contraction of tensors taken from the following are sufficient to generate the stabilizer group.  
\begin{center}
 \includegraphics[width=0.5\textwidth]{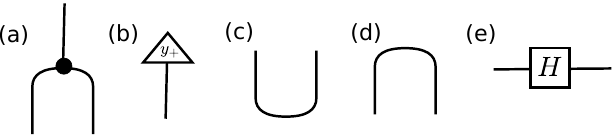} 
\end{center}
Where (a) is the \COPY-tensor, (b) is the y-plus-state $\ket{y_+}\bydef\ket{0}+i\ket{1}$, the cup (c) and cap (d) are to bend wires and hence reshape maps and take transposes, (e) is the Hadamard gate.  
\begin{proof}[Stabilizer tensors]
 Let us first consider generating the states $\ket{+}$, $\ket{-}$, $\ket{y_-}$. These are found from the following tensor contractions.
\begin{center}
 \includegraphics[width=0.5\textwidth]{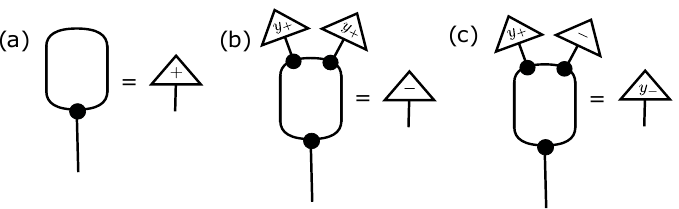} 
\end{center}
From these states, the following contractions generate our elementary gates.
\begin{center}
 \includegraphics[width=0.5\textwidth]{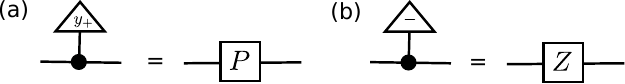} 
\end{center}
We then note that 
\be 
HZH = X
\ee 
\be 
P^\dagger = P^2P 
\ee 
and that 
\be 
PXP^\dagger = PXP^3 = Y
\ee 
We recover the \COPY-tensor by contracting with Hadamard gates and \CNOT{} is found from reshapes of the following contraction
\begin{center}
 \includegraphics[width=0.16\textwidth]{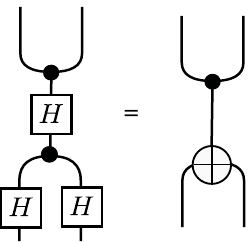} 
\end{center}
We then recover the generators of the Clifford group.   
\end{proof}
\end{theorem}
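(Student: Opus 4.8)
The plan is to show that the listed tensors---the \COPY-tensor, the state $\ket{y_+}$, the cup, the cap, and the Hadamard gate---suffice to build every generator of the Clifford group, since by the earlier definition a stabilizer state is anything producible from $\ket{0}^{\otimes n}$ by Clifford gates, and the Clifford group is generated by \CNOT{}, $H$, $P$, together with the single-qubit Pauli gates. So the task reduces to producing $\ket{0}$, $\ket{1}$, $\ket{+}$, $\ket{-}$, $\ket{y_\pm}$, the Paulis $X,Y,Z$, the phase gate $P$, and \CNOT{} from contractions of the five generators, and I would organize the proof along exactly those lines.

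First I would build the remaining stabilizer \emph{states}. The cup and cap let us bend wires, and the \COPY-tensor with a leg capped off gives the unit $\bra{+}$ (and dually $\ket{+}$); applying the Hadamard gate to $\ket{+}$ yields $\ket{0}$, and then the \COPY-tensor copy-points give $\ket{0}\mapsto\ket{00}$ etc., so we can also extract $\ket{1}$ by an appropriate contraction (e.g.\ $H\ket{-}=\ket{1}$, where $\ket{-}$ itself comes from a phase-adjusted construction). The $\ket{y_+}$ state is given outright, and $\ket{y_-}$ is obtained from it by a single-qubit contraction as indicated in the figure \texttt{state-generators1}. Next I would build the \emph{gates}: the figure \texttt{gate-generators} shows that contracting these states through a \COPY-tensor (which acts as a controlled-phase-type glue) produces the elementary single-qubit gates; concretely $P$ arises from gluing $\ket{y_+}$ against a \COPY-vertex, and then one uses the algebraic identities $HZH=X$, $P^\dagger = P^3 = P^2\cdot P$, and $PXP^\dagger = PXP^3 = Y$ (valid since $P^4=\I$) to recover all of $X$, $Y$, $Z$ from $H$ and $P$. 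Finally, \CNOT{} is recovered: by Example~\ref{ex:circuits-2}, \CNOT{} is the contraction $\sum_m \COPY{}\indices{^{qm}_i}\,\XOR{}\indices{^r_{mj}}$, and the \XOR-tensor is just a Hadamard-conjugated \COPY-tensor (Eq.~\eqref{eq:copy-vs-xor}), so \CNOT{} is a reshape of a network built purely from \COPY-tensors and Hadamards, as shown in \texttt{cnot-contraction1}.

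Once all the Clifford generators are in hand, the conclusion is immediate: any $n$-qubit stabilizer state is, by definition, $C\ket{0}^{\otimes n}$ for some Clifford circuit $C$, and we have just exhibited $\ket{0}$ (hence $\ket{0}^{\otimes n}$ via juxtaposition) and every gate in $C$ as tensor contractions of the five generators; composing these contractions gives a tensor network for $\psi$. Conversely any such contraction lies in the Clifford-generated class, so the two notions coincide.

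The main obstacle I expect is bookkeeping of scalar and phase factors: working in the scalar gauge, the identities like $P^\dagger = P^3$ and the copy-vs-xor relation carry normalization factors ($1/\sqrt{2}$'s) that must be tracked carefully if one wants genuine equality rather than equality up to a nonzero scalar, and one must check that gluing $\ket{y_+}$ against a \COPY-vertex really yields $P$ and not $P$ times a phase that would spoil the group structure. The honest way to handle this is to state up front that equalities are taken in the scalar (or at worst unit) gauge, verify the key single-qubit contractions by direct component computation once, and then invoke linearity and the already-established fusion and bialgebra laws for everything else; with that convention fixed, each individual step is a short routine check and the argument goes through.
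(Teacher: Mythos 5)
Your proposal follows essentially the same route as the paper's proof: generate the auxiliary states, obtain the phase gate by gluing $\ket{y_+}$ into a \COPY-vertex, invoke the identities $HZH=X$, $P^\dagger=P^3$, $PXP^\dagger=Y$, and recover \CNOT{} as a reshape of a \COPY/Hadamard contraction (the \XOR-tensor being the Hadamard-conjugate of \COPY), thereby producing all Clifford generators. Your added care about scalar-gauge bookkeeping is a reasonable refinement but does not change the argument.
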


Now we will consider stabilizers of the \COPY-tensor.  

\begin{example}[Stabilizers of \COPY] The \COPY-tensor has stabilizer
generators $\sigma^x_1\otimes\sigma^x_2\otimes\sigma^x_3$ and
$\sigma^z_i\otimes\sigma^z_j$ which uniquely determine
$\psi_{GHZ}=\ket{000}+\ket{111}$ and result in the following stabilizer group of order $2^3=8$. 
\be
\{\sigma^x\sigma^x\sigma^x,-\sigma^x\sigma^y\sigma^y,
-\sigma^y\sigma^x\sigma^y,-\sigma^y\sigma^y\sigma^x,
\I\otimes \sigma^z\otimes\sigma^z,\sigma^z\otimes\I\otimes \sigma^z,
\sigma^z\otimes\sigma^z\otimes\I, \I\}
\ee 
Diagrammatically these relations are given in Figure~\ref{fig:stabGHZ}.   

\end{example}

\begin{figure}[h]
\centering
\includegraphics[width=0.7\textwidth]{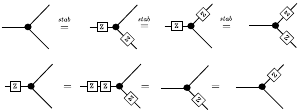}\\
\includegraphics[width=0.750\textwidth]{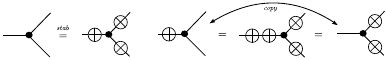}\\
\includegraphics[width=0.60\textwidth]{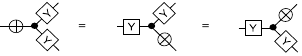}
\caption{(Top) Diagrammatic depiction of the stabilizer equation
$\sigma^z_i\sigma^z_j(\ket{000}+\ket{111})=\ket{000}+\ket{111}$. (Bottom)
Uses the stabilizer identity together with $\sigma_z^2=\I$ to show that
the $\sigma^z$ commutes with the \COPY-tensor. (Middle) Diagrammatic depiction of the stabilizer
equation
$\sigma^x_1\otimes\sigma^x_2\otimes\sigma^x_3(\ket{000}+\ket{111})=\ket{000
}+\ket{111}$.  (Bottom) Diagrammatic depiction (up to a sign) of the
stabilizer equation
$-\sigma^x_i\otimes\sigma^y_j\otimes\sigma^y_k(\ket{000}+\ket{111})=\ket{
000}+\ket{111}$.} 
\label{fig:stabGHZ}
\end{figure}

\begin{myexercise}[Gottesman-Knill Theorem] 
A graphical rewrite proof (by bounding the number of rewrites) of the
Gottesman-Knill theorem follows by considering the action of the \COPY- and \XOR-tensors on $\sigma^z$ and $\sigma^x$.  Derive the gate identities in paper \cite{GN98} and 
prove the main theorem using the methods from this lecture.  
\end{myexercise}

\begin{myexercise}[Stabilizers for $\delta_{ijkl}$]
Write down the $2^4$ stabilizers of the state 
\be 
 \sum_{ijkl\in\{0,1\}^4} \delta^{ijkl}\ket{ijkl}
\ee 
Now consider stabilizers in (a).  We can expand this arriving at (b).  What are the conditions on $C$ and $D$ such that $A$, $B$, $E$, $F$ are stabilizers?  Compare this with the 
stabilizers given for \COPY, prior to contraction.  
\begin{center}
 \includegraphics[width=0.65\textwidth]{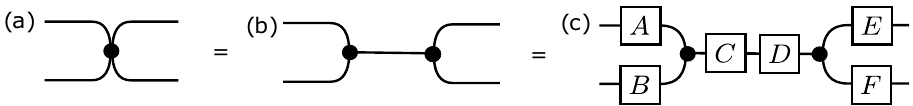} 
\end{center}
\end{myexercise}

We can now consider how stabilizers of a tensor transform, when the tensor undergoes a local change of basis.  

\begin{theorem}[Transformation properties of stabilizers]
Let $\Gamma$ be a tensor with stabilizer $A\otimes B\otimes \cdots \otimes C$.   Then if we rotate $\Gamma$ as 
$\Gamma' = U_G(\Gamma)$, then 
\be
U_G(A\otimes B\otimes \cdots \otimes C) = U_G(A)\otimes U_G(B)\otimes \cdots \otimes U_G(C)
\ee
is a stabilizer for $\Gamma'$.  
\end{theorem}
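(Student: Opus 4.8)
The plan is to reduce the statement to a single structural fact: the rotation map $U_G$ is compatible with contraction, i.e.\ applying $U_G$ to a network obtained by joining wires is the same as applying the appropriate local rotation to each constituent tensor and then re-joining the (rotated) wires. Granting this, the proof is a one-line ``insert the identity $\I = U U^\dagger$ on every wire'' argument, entirely analogous to the diagrammatic calculations in the worked-examples section.

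First I would fix notation. Let $\Gamma$ be a valence-$(n,m)$ tensor and write its stabilizer as a list $S_1,\dots,S_{n+m}$ of invertible local operators, one per wire, characterised by the diagrammatic equation in the definition of the abstract stabilizer: attaching $S_k$ to the $k$-th wire of $\Gamma$ simultaneously for all $k$ returns $\Gamma$. On a single wire the rotation acts by $U$ (on an input arm) or $U^\dagger$ (on an output leg), so $\Gamma' = U_G(\Gamma)$ is $\Gamma$ with $U$ pre-composed on each arm and $U^\dagger$ post-composed on each leg; in particular, for a valence-$(1,1)$ operator $A$ one has $U_G(A) = U^\dagger A U$, so $U_G(A)\otimes U_G(B)\otimes\cdots\otimes U_G(C)$ is again a list of \emph{invertible} local operators (conjugates of invertible operators by the invertible $U$), hence a legitimate candidate stabilizer. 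Now run the cancellation: apply $U_G$ to both sides of the stabilizer equation for $\Gamma$; the right-hand side becomes $\Gamma'$ by definition, while on the left-hand side, on each wire $k$, insert $\I = U U^\dagger$ between $\Gamma$ and the attached $S_k$ (using unitarity of $U$). Re-associating, one factor merges with $\Gamma$ to supply exactly the rotation turning $\Gamma$ into $\Gamma'$, and the remaining bracket $U^\dagger(\cdot)U$ conjugates $S_k$ into $U_G(S_k)$. Thus the left-hand side equals $\Gamma'$ with $U_G(S_k)$ attached to each wire $k$; since it still equals $\Gamma'$, the operators $U_G(S_k)$ stabilize $\Gamma'$, which is the claim.

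The only real obstacle is bookkeeping: keeping the orientations of $U$ versus $U^\dagger$ straight so that the inserted pairs cancel on the correct side, and making ``$U_G$ distributes over contraction'' precise rather than merely pictorial. I would discharge the latter by observing that contraction is nothing but composition and tensor product of linear maps, and that conjugating each factor by a fixed unitary is manifestly multiplicative; alternatively one simply draws the three-panel diagram (stabilizer equation $\to$ insert $UU^\dagger$ on each leg $\to$ reassociate) in the same style as the proofs in Section~\ref{app:tensor1}. It is also worth remarking explicitly that unitarity of $U$ enters only through $U^{-1} = U^\dagger$; repeating the argument with $U^{-1}$ in place of $U^\dagger$ proves the analogous transformation law for an arbitrary invertible local change of basis, which is the generality in which ``rotation of $\Gamma$'' is really being used.
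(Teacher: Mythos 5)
Your argument is correct: the paper states this theorem without giving a proof, and your ``insert $\I = UU^\dagger$ on every wire and reassociate'' conjugation argument is exactly the standard one the statement calls for, with unitarity entering only through $U^{-1}=U^\dagger$ as you note. The orientation bookkeeping you flag does work out uniformly, since on each wire the cancelling pair always lands between the rotation factor carried by $\Gamma'$ and the attached stabilizer operator, so the same conjugate $U_G(S_k)$ serves for arms and legs alike.
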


\section{Boolean Stabilizer States}

Now one can consider the intersection found from the class of Boolean states which are in addition to being Boolean states, also stabilizer states.  We will construct them explicitly for the case of a single qubit, and leave the two qubit case 
as homework.  

\section*{Single qubit Boolean Stabilizer States}

There is one stabilizer state for each single qubit boolean state.  This correspondence is a special case 
of the dim $=2$ state space and does not hold in any higher dim.  \marginnote{{\it ``To know all about anything is to know how to deal with it under all circumstances.'' --- 
William Kingdon Clifford} }

\subsection*{Single bit functions} 
Consider 
\be 
f:\7 B^2 \rightarrow \7 B 
\ee 
then a general function $f$ is expanded as a sum-of-products (LHS) and as a pseudo Boolean form (RHS)  
\be 
c_0 \overline{x}\vee c_1 x = c_0 + (c_1 - c_0)x 
\ee 
for $c_0, c_1, x \in \7 B$.  There are exactly $2^{2^n}$ boolean functions.  Evaluated at $n=1$ gives four possible single qubit Boolean states.  One of these however, corresponds to 
$\psi =0$ and so is trivial.  The others are $\ket{0}$, $\ket{1}$ and $\ket{0}+\ket{1}$.  

\subsection*{Single qubit stabilizer operators} 

We will consider $\pm Z$ and $X$.  These are the only possible stabilizer operators that stabilize single qubit boolean states.  
\be 
X(c_0\ket{0} + c_1\ket{1}) = c_1 \ket{1} + c_0 \ket{0} 
\ee 
and hence, $X\psi = +1 \psi$ iff $c_1 = c_0$.  Over $\7B$ this has non-trivial solutions (that is, $\|\psi\| > 0$) for 
$c_0=c_1=1$ and hence, we recover the only boolean state stabilized by $X$ as $\ket{0}+\ket{1}$.  We will then consider 
\be 
\pm Z \psi = \pm c_0 \ket{0} \mp c_1 \ket{1} 
\ee 
and $\pm Z \psi = +1 \psi$ iff $\pm c_0 = \mp c_1$.  It follows that 
\be 
+ Z \Rightarrow c_0 = 1, c_1 = 0 
\ee 
\be 
-Z \Rightarrow c_0 = 0, c_1 = 1
\ee 

\subsection*{Correspondence between stabilizer states and boolean states} 

In this section we consider the correspondence between single qubit stabilizer states and single 
qubit boolean states.  Let us introduce two boolean variables, $b_0$ and $b_1$.  We then write 
\be 
(-1)^{b_1}(1-b_0)Z + b_0 X
\ee 
We find that $b_0$ $\Rightarrow$ $c_0=c_1=1$.  The case that $b_0=0$ implies that $b_1$ decides $c_0, c_1$ as 
\be 
b_1=1 \Rightarrow c_1=1, c_0=0
\ee 
\be
b_1=0 \Rightarrow c_1=0, c_0=1
\ee 
We will then parameterize the single qubit boolean state in terms of $b_0$ and $b_1$ as 
\be 
\psi_{\7 B} = (b_0 \vee\overline b_0 \wedge b_1)\ket{0} + (b_0\vee \overline b_0 \wedge \overline b_1)\ket{1} = (b_0 + b_1-b_0b_1)\ket{0} + (1+b_0 - b_1 + b_0b_1)\ket{1}
\ee 
and hence the choice $c_0 = b_0 + b_1-b_0b_1$ and $c_1 = 1+b_0 - b_1 + b_0b_1$ provides the correspondence.  

\begin{remark}
 In or around circa.~2010, I discussed development of classification scheme for Boolean stabilizer states with Oscar Dahlsten.  I believed Oscar solved that problem, specifying exactly the class of Boolean stabilizer states. 
\end{remark}

\begin{myexercise}[Two qubit Boolean stabilizer states]
 For two qubits, there are $2^{2^2}-1 = 15$ boolean states.  Let $q$ and $r$ take boolean values.  Then we arrive at the following truth table.  

\begin{center}
\begin{tabular}{l||l|l||l|l|l|l|l|l|l|l|l|l|l|l|l|l}
$q$ $r$ & $f_0$  & $f_1$  &   &   &   &   &   &   &   &   &   &   &   &   &   & $f_\wedge$ \\\hline 
$0$ $0$ & 0 & 1 & 1 & 1 & 1 & 0 & 1 & 1 & 0 & 0 & 1 & 0 & 1 & 0 & 0 & 0 \\
$0$ $1$ & 0 & 1 & 1 & 1 & 0 & 1 & 1 & 0 & 1 & 0 & 0 & 1 & 0 & 1 & 0 & 0 \\
$1$ $0$ & 0 & 1 & 1 & 0 & 1 & 1 & 0 & 1 & 1 & 1 & 0 & 0 & 0 & 0 & 1 & 0 \\
$1$ $1$ & 0 & 1 & 0 & 1 & 1 & 1 & 0 & 0 & 0 & 1 & 1 & 1 & 0 & 0 & 0 & 1
\end{tabular}
\end{center}

\begin{itemize}
 \item[(i)] Label \OR, \NOR, \NAND, \XOR{} in the table above.  
 \item[(ii)] Using the definition of $K_1$, determine which states in this table are separable, and classify them based on the resulting values of $K_1$. 
 \item[(iii)] Which states in the table are not stabilizer states?  
\end{itemize}
\end{myexercise}

\begin{myexercise}[Invariants of symmetric Boolean states]
~
\begin{itemize}
 \item[(i)] Write the general form of a symmetric three qubit boolean state.  How many possible symmetric boolean states are there? 
 \item[(ii)] Consider the discriminant of the cubic.
\be 
\Delta = a_0^2 a_3^2 - 6 a_0 a_1 a_2 a_3 + 4 a_0 a_2^3 - 3a_1^2 a_2^2 + 4 a_1^3 a_3
\ee 
and find maximum and minimum values when $\forall i, a_i\in \{0,1\}$.  
\end{itemize}
\end{myexercise}

\section{Problems} 

\begin{theorem}[Sufficient expression stabilizer states]
Let 
\be 
f, g, k:\7B^n\rightarrow \7B
\ee 
then the quantum state 
\be 
\psi_{\7B} = \sum (-1)^{f(\1 x)}(i)^{g(\1 x)}k(\1 x)\ket{\1 x}
\ee 
is sufficient to express any stabilizer state.  
\begin{proof}[Normal forms on stabilizer states]
 Each stabilizer state is an equally weighted superposition with coefficients $\pm1, \pm i, 0$.  The functions $f, g, k$ determine these for each basis vector 
 $\ket{\1 x}$.  
\end{proof}
\end{theorem}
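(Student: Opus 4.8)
The plan is to prove that the stated normal form $\psi_{\7B} = \sum_{\1x} (-1)^{f(\1x)}(i)^{g(\1x)}k(\1x)\ket{\1x}$ is expressive enough to capture every stabilizer state, by appealing to the known structure theory of stabilizer states rather than re-deriving it. First I would recall the standard fact (e.g.\ from the Gottesman--Knill circle of ideas, which the excerpt invokes) that every $n$-qubit stabilizer state $\ket{\psi}$ can be brought, by a Clifford circuit, into a canonical form supported on an affine subspace of $\7B^n$: there is an affine subspace $S = \{\1x : A\1x = \1c\} \subseteq \7B^n$ and a quadratic form $q$ together with a linear form $\ell$ over $\7B$ such that
\be
\ket{\psi} = \frac{1}{\sqrt{|S|}} \sum_{\1x \in S} (-1)^{q(\1x)} (i)^{\ell(\1x)} \ket{\1x},
\ee
up to an irrelevant global phase. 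This is exactly the shape we want: the indicator function of the affine subspace $S$ is itself a Boolean function $k:\7B^n\to\7B$ (it is a product of affine-linear factors, hence lies in the Boolean function class discussed in \S\ref{sec:btn}), the sign is governed by a Boolean-valued function $f = q$, and the powers of $i$ are governed by a Boolean-valued function $g = \ell$. Since we are working in the scalar gauge, the normalization $1/\sqrt{|S|}$ is absorbed.

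Concretely, the steps I would carry out are: (i) state and cite the affine-subspace normal form for stabilizer states; (ii) observe that the characteristic function $\chi_S$ of any affine subspace $S\subseteq \7B^n$ is a switching function $k:\7B^n\to\7B$ --- indeed $\chi_S(\1x) = \prod_{j}\bigl(1 \oplus \ell_j(\1x) \oplus c_j\bigr)$ over the defining affine constraints $\ell_j(\1x)=c_j$, which is manifestly Boolean-valued; (iii) identify the quadratic phase exponent $q(\1x)$ with a function $f:\7B^n\to\7B$ (it takes values in $\{0,1\}$ by construction, so no reduction mod anything is needed) and likewise the linear exponent with $g:\7B^n\to\7B$; (iv) conclude that $\psi_{\7B} = \sum_{\1x}(-1)^{f(\1x)}(i)^{g(\1x)} k(\1x)\ket{\1x}$ reproduces $\ket{\psi}$ up to a scalar, which is all that is claimed. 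One subtlety worth a sentence: if one prefers the exponent of $i$ to be genuinely $\{0,1\}$-valued rather than $\{0,1,2,3\}$-valued, one can always split any $\pm i$ into a sign times an $i^{0\text{ or }1}$, so that $g$ can be taken Boolean; the paper's own proof sketch already treats $f,g,k$ as separate Boolean functions, so this matches.

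The main obstacle is not any hard computation but rather making sure the invoked normal form is stated in a form compatible with the paper's conventions --- in particular, that the quadratic form over $\7B$ appearing in the stabilizer canonical form really does produce only the phases $\pm1$ and $\pm i$ (and not, say, eighth roots of unity), and that the affine support is captured by a single Boolean function $k$. I would resolve this by referencing the well-known result that stabilizer-state amplitudes lie in $\{0,\pm1,\pm i\}$ (equivalently, in $\{0\}\cup \mu_4 \cdot (\text{positive real})$, collapsing to $\{0,\pm1,\pm i\}$ after normalization), which pins down that a degree-$\le 2$ $\7B$-valued phase polynomial for the sign plus a degree-$\le 1$ one for the factor of $i$ suffice. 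Given that, the argument is essentially a bookkeeping identification of ``affine subspace indicator'' with $k$, ``quadratic sign polynomial'' with $f$, and ``linear $i$-polynomial'' with $g$, exactly as the paper's one-line proof asserts, and nothing further is required beyond citing the stabilizer normal form and the amplitude-discreteness fact.
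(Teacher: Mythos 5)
Your proposal is correct and takes essentially the same route as the paper: both arguments rest on the single fact that a stabilizer state is an equally weighted superposition whose amplitudes (up to normalization, absorbed in the scalar gauge) lie in $\{0,\pm 1,\pm i\}$, with $k$ serving as the support indicator and $(-1)^{f}(i)^{g}$ realizing the fourth root of unity at each basis vector. The additional structure you invoke --- affine support, quadratic $f$, linear $g$ from the stabilizer canonical form --- is stronger than what the theorem requires, since $f$, $g$, $k$ are allowed to be arbitrary Boolean functions, but it does no harm and makes the bookkeeping explicit.
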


\begin{remark}[Normal forms]
 Using a PPRM from lecture II, we can expand $f, g, k$ to a normal form.  The functions then become uniquely determined by a coefficient vector.  
\end{remark}

\begin{myexercise}[Stabilizer states as pseudo Boolean forms]
 Find $f$, $g$, and $k$ to express the following states (note, the second is not a stabilizer state). 
 \begin{itemize}
  \item[(i)] $\psi_1 = \ket{000} + i \ket{111}$
  \item[(ii)] $\psi_2 = \ket{001} + i \ket{010} - i\ket{100}$  
 \end{itemize}
\end{myexercise}

\begin{myexercise}[Stabilizer generators]
\marginnote{{\it ``An expert is someone who knows some of the worst mistakes that can be made in their subject and how to avoid them.''}--- Werner Heisenberg}
 Consider $U$ as an arbitrary Clifford circuit.  Then, 
 \be 
 \psi = U\ket{0}^{\otimes n}
 \ee 
 is an arbitrary stabilizer state.  Show that evolution of $Z_i$ under $U$ in the Heisenberg picture
 \be 
 UZ_iU^\dagger
 \ee 
 is necessarily a stabilizer for $\psi$.  
\end{myexercise}

\begin{myexercise}[Further Exercises on Pauli Matricies] 
~
\begin{itemize}
 \item[(i)] Let $\underline{P}.\underline{\sigma} := p_0 X + p_1 Y + p_2 Z$ where $|P|=1$. Show that $\exp(-i\frac{\theta}{2}\underline{P}.\underline{\sigma})=\I \cos(\theta/2)-i(\underline{P}.\underline{\sigma})\sin(\theta/2)$ and find the values of $\theta$, $\underline{P}$ to recover the Hadamard gate, up to a phase factor.  

\item[(ii)] Find the time of the evolution of the Hamiltonian $\ket{11}\bra{11}$ to create a {\sf CZ}-gate, then write down a quantum circuit in terms of {\sf H}  and {\sf CZ} to create a {\sf CNOT}-gate.  What are the input states needed to use the {\sf CNOT}-gate to prepare the singlet state $\ket{\Psi^-}=\ket{01}-\ket{10}$?  

\item[(iii)] Show that the \swap{} operator $\frac{1}{2}(\I + \underline{\sigma}_A\cdot\underline{\sigma}_B)$ permutes the values of bits $A$ and $B$ as 
  \begin{equation}
   \text{\swap} \ket{i_A}\ket{i_B} = \ket{i_B}\ket{i_A}
  \end{equation}
  where the notation $ \underline{\sigma}_A\cdot\underline{\sigma}_B$ is typically said to stand for the scalar and tensor product:  
  $ \underline{\sigma}_A\cdot\underline{\sigma}_B = \sum_{i = 1}^3 \sigma_i^A \otimes \sigma_i^B.$ 

\item[(iv)] In the computational basis, express the general form of a two-qubit symmetric eigenstate of the \swap{} operator and count the real degrees of freedom.  Repeat this for anti-symmetric eigenstates (e.g. $\swap\ket{\psi}=-\ket{\psi}$).  
\item[(v)] Using the notation from (iii) above, find a value for $q$ to show that the two-site quantum Heisenberg model $J\underline{\sigma}_1\cdot \underline{\sigma}_2$ can be written as $\frac{J}{2}\left((\underline{\sigma}_1+\underline{\sigma}_2)^2-q\I\right)$ and show that $\ket{\Psi^\pm}=\ket{01}\pm\ket{10}$ are energy eigenstates.  
\end{itemize}
\end{myexercise}

 \part{Tensor Networks and Entanglement Invariants} 
\label{sec:qubit-invariants}

It is typical to consider a symmetry of a density operator $\rho$ or state $\psi$ as an operator satisfying 
\be 
\rho = V\rho V^\dagger
\ee 
\be 
\psi = M \psi
\ee
and so one could say that $\rho$ is invariant under $V$ and $\psi$ is
invariant under $M$ --- see Example \ref{ex:group-sym}.
If the relations holds for all elements of a matrix group, then we say
that $\rho$ or $\psi$ are invariant under the left multiplicative action of the group.
There are more subtle symmetries however.  
These are formed by considering polynomials in the coefficients of $\rho$ or $\psi$ that are left invariant under the action of a group.  The study of such polynomials is known as 
Invariant Theory \cite{hilbert}.  David Hilbert made notable progress on the topic of invariant theory, which he perused throughout his life.  There has been past work on considering these invariants in the context of quantum information science.  Some of our personal favorites include \cite{entinv2, entinv3, entinv1} as well as some work more closely related to this chapter \cite{BBL11}. Algebraic geometry has also been applied to tensor networks in \cite{2014SIGMA..10..095C}. 

\begin{example}[Group Symmetry of $\rho$]\label{ex:group-sym}
 We will consider a general density operator $\rho$ and look for $V\in U(d)$ such that 
 \be 
 \rho = V\rho V^\dagger
 \ee 
 this implies that $[\rho, V]=0$ and we arrive at a basis for $G$ by noting the unitary operators that commute with $\rho$.  
 That is, $\{\ket{\lambda_i}\}_i$ such that $\rho = \sum_i p_i\ket{\lambda_i}\bra{\lambda_i}$.  It then follows from $VV^\dagger = \I$ that 
 every $V\in G$ can be written as 
 \be 
 V= \sum_i e^{i\theta_i}\ket{\lambda_i}\bra{\lambda_i}
 \ee 
\end{example}


\subsection*{Introduction to polynomial invariants}
\label{sec:comdef1}

One can form polynomials out of the coefficients of a state or an
operator.  These algebraic invariants, are called polynomial.  For example, given a state with coefficients
$\alpha^{ij}$,
\be 
\psi = \sum_{ij} \alpha_{ij}\ket{ij},
\ee 
we could form a real valued polynomial function out of these variables $\alpha^{ij}$ and their complex conjugates $\overline \alpha_{ij}$ 
\be 
f(\alpha_{00},\alpha_{01},\alpha_{10},\alpha_{11}, \overline \alpha_{00},\overline \alpha_{01},\overline \alpha_{10},\overline \alpha_{11})
\ee 
Acting on the state $\psi$ with some linear transformation induces in turn an action 
of this linear transformation on the polynomial $f$.  We will be concerned with the case that the linear transformation can be any element of a group~$G$.  If the polynomial $f$ in the coefficients of the state remains unchanged 
under the induced action of all $g\in G$, then the polynomial is said to be a polynomial invariant under~$G$.

For example, the polynomial $J_1$ \eqref{eqn:J1} corresponds to the norm of the state, and is invariant under unitary transformations of $\psi$. 

\begin{equation}\label{eqn:J1}  
J_1 :=  \sum_{ij} \alpha^{ij}\overline{\alpha}_{ij} 
\end{equation} 

\begin{remark}[Basis independence]
To form a polynomial out of the coefficients of a state, one first chooses a basis to express the state in.  The coefficients of the state will change depending on the basis chosen.  A polynomial invariant that is invariant under any group that contains the local unitary group as a subgroup is however inherently (local) basis independent.  The basis chosen to express the polynomial depends on the local unitary group, however the polynomial is invariant under the local unitary group, by construction. 
\end{remark}

\begin{remark}[Polynomial invariants map states to scalars]\label{re:invariance}  
Polynomial invariants, are state independent.  In other words, an invariant is a function of a quantum state and (if proven to be an invariant) is invariant still for any quantum state.  A polynomial invariant is invariant for all states acted on by some group, but the numerical value can differ from state to state.  The numerical value of the invariant illuminates important properties about the specific state in question.  
\end{remark}

\begin{example}[Example of $\7Z_2$ invariance]
\label{ex:Z2-invariance}
Here we will illustrate properties of forming polynomial invariants out of the coefficients of a state by considering a toy example.  Let 
\be 
\psi = \sum_{\textbf{x}}c_{\textbf{x}}\ket{\textbf{x}}
\ee 
be a quantum state of two qubits.  For the purpose of this example, we will explore what a polynomial invariant is by considering a state-specific example (in contract to remark \ref{re:invariance}).  That is, we will pick specific values of the $c_{\textbf{x}}$'s to illustrate our point.  

We must pick a matrix group $G$ acting on states in $\C^2\otimes \C^2$.  Each $g\in G$ in turn induces an action on polynomials in $c_{\textbf{x}}$ as follows 
\begin{equation}\label{eqn:induced-action}
c_{\textbf{x}}\mapsto \langle\textbf{x},g\psi\rangle, ~g\in G
\end{equation} 
As an illustrative example, we pick $G=\{\sigma_x\sigma_x, \I\}$ as the group $G$ and  
\be 
\psi_2=c_{00}\ket{00}+c_{11}\ket{11}
\ee 
for the state.  We then consider polynomials 
\be
f(c_{00}, c_{11}, \overline c_{00}, \overline c_{11}) 
\ee 
that are invariant under $G$.  

As stated, action of the group $G$ on the state, induces an action on any polynomial in the coefficients of the state via \eqref{eqn:induced-action}.  
For $g\in G$ it is standard to write this action using the notation of putting $g\in G$ in the superscript \eqref{eqn:group-notation}.  
\begin{equation}\label{eqn:group-notation} 
f^g(c_{00}, c_{11}, \overline c_{00}, \overline c_{11}):=f((c_{00}, c_{11}, \overline c_{00}, \overline c_{11})g^\top)
\end{equation}  
then $f^{\I}=f$ and 
\be
f^{\sigma_x\sigma_x}(c_{00}, c_{11}, \overline c_{00}, \overline c_{11})=f(c_{11}, c_{00}, \overline c_{11}, \overline c_{00})
\ee
It becomes clear that one can write certain polynomials $f$ that are invariant, e.g. 
\be 
f= (c_{00} + \overline c_{00})(c_{11} + \overline c_{11})
\ee 
under this group and others that are not, e.g. 
\be 
f = (c_{00} + \overline c_{11})(c_{11} + \overline c_{00})
\ee
This example served to illustrate several key aspects of polynomial invariants, but as this toy example is state specific and basis dependent,  it does not 
capture the philosophy and key aspects present in the more general setting.  
\end{example}

\subsection*{A Graphical Language for Polynomial Entanglement Invariants} 
The invariants we have studied have all taken a remarkably simplistic form when cast into our framework.  The key to the simplification we have found was though our introduction of the diagrammatic SVD in Theorem \ref{theorem:diagrammatic-SVD}.  In fact, the study of invariants was a motivating factor which lead us to introduce this factorization, which we soon found to have other applications.  Through this factorization and through other methods we are currently exploring, we have found that Penrose graphical tensor notation and the invariant theory of Hilbert et al. connect very well together.  We are aware of a more cumbersome graphical language appearing in the classic literature and reviewed in the book by Oliver \cite{oliver}, which he credits to Cayley.  We feel the approach taken here, facilitated by the diagrammatic SVD and developed particularly to study entanglement invariants is significantly more natural.

\subsection*{Invariants of pure qubit states}

\subsubsection*{Tensor contractions for LOCC
\sn{LOCC: Local Operations and Classical Communication.}
invariants}
We have seen in Example~\ref{ex:Z2-invariance} that when we pick a
finite group and a basis for our state, one can form polynomials that
are invariant under the group action.
The standard case, however, is to consider invariance under the local
action of a continuous group, and to consider this action on arbitrary states, of appropriate dimension.

\begin{definition}[LOCC Equivalence]
Two states are LOCC equivalent iff they are equivalent
under local unitary transformations, that is, action of the group 
\be 
G_{\text{LOCC}} := U(1) \times SU(d_1)\times SU(d_2)\times \ldots  \times SU(d_n) 
\ee 
Equivalence under LOCC yields a partitioning of 
states.  Two states are in the same equivalence class iff they are related by a local unitary transformation. 
LOCC equivalence gives rise to a partitioning of the set of entanglement values. 
An entanglement measure should remain constant on the equivalence classes.
\end{definition} 

\begin{remark}[Single qubit invariants]
 When considering the local unitary group acting on a single pure qubit, 
the only polynomial invariant is the norm of the state 
\be 
J_1 = \sum \psi^i\overline{\psi}_i
\ee 
invariant under $U(1)\times SU(2)$. This is because the local orbit moves through the angles on the Bloch sphere, and so the norm is the only 
possible ambiguity.  By fixing the norm, we fix the only invariant.  
\end{remark}

We have made adaptations to the diagrammatic language the utility of which we will first illustrate by considering 
the case of two-qubits.  These adaptations allow for the confluent graphical contraction to evaluate 
invariants to equate them in terms of singular values.  This is illustrated by the following Theorem. 

\begin{lemma}[Graphical contraction of the norm $J_1$]
The graphical language enables a sequence of tensor contractions to relate the polynomial invariant $J_1$ to the singular values of the state.    

\begin{proof}
 For pure states of two qubits,
starting from Equation \eqref{eqn:J1} 
\begin{equation}  
J_1 :=  \sum_{ij} \alpha^{ij}\overline{\alpha}_{ij} 
\end{equation} 
we arrive at the contracted
network in (a) below.  To show that this contracts to a value determined by the singular values of the state, 
we apply the diagrammatic SVD which exposes the
internal network building blocks shown in (b).  
\begin{center}
 \includegraphics[width=14\xxxscale]{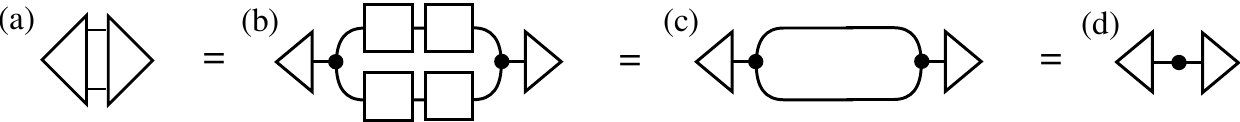}
\end{center}
The white boxes (valence-two tensors) represent unitary maps.  They therefore cancel 
resulting in the network illustrated in (c).  The \COPY-tensor
contracted as in (c) results in the identity wire (recall Figure (e) in Definition \ref{def:copy-prop}).  We then find the
inner product of the triangular tensors containing singular values,
\be 
\Rightarrow J_1 = \sum_i \lambda_i^2 = \lambda_0^2 +\lambda_1^2~ (= 1~\text{for a normalized state}) 
\ee 
Which is the desired result.  
\end{proof}
\end{lemma}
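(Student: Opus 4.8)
The plan is to translate the algebraic expression $J_1 = \sum_{ij} \alpha^{ij}\overline{\alpha}_{ij}$ into its tensor network form and then simplify by successive contractions, invoking the diagrammatic SVD (Theorem \ref{theorem:diagrammatic-SVD}) as the central tool. First I would note that $J_1$ is the complete contraction of the state tensor $\alpha^{ij}$ against its complex conjugate $\overline{\alpha}_{ij}$, joining both pairs of corresponding indices; diagrammatically this is the closed network obtained by placing $\psi$ and $\overline{\psi}$ back to back and connecting their legs. This is the picture labeled (a).

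Next I would apply the diagrammatic SVD to the state $\psi$ (viewed, via the snake equation, as a valence-two map as in Corollary \ref{ex:diagrammatic-Schmidt}), writing it as $U$ composed with a \COPY-tensor, the order-one tensor $\sigma$ of singular values, and $V$; the conjugate $\overline{\psi}$ then carries the conjugated building blocks $\overline{U}$, \COPY, $\overline{\sigma}=\sigma$ (the singular values are real and nonnegative), and $\overline{V}$. This exposes the internal structure shown in (b). Since the network is fully contracted, the unitary boxes meet their adjoints along closed wires: $U^\dagger U = \I$ and $V V^\dagger = \I$ (being careful, as the excerpt's transpose/conjugate conventions require, that bending wires around $U$ and $\overline U$ produces exactly the adjoint pairing that cancels). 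After these cancellations we are left with the two \COPY-tensors contracted together along two wires plus the two $\sigma$ triangles, which is picture (c).

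Then I would invoke the \COPY-tensor fusion/identity property — specifically that two \COPY-tensors contracted along the appropriate legs reduce to a single identity wire (Figure (e) of Definition \ref{def:copy-prop}, equivalently $\delta^i_{~jk}\delta^{jk}_{~~m} = \delta^i_{~m}$) — so the network collapses to the inner product of the triangular tensor $\sigma$ with itself. That inner product is $\sum_i \lambda_i^2 = \lambda_0^2 + \lambda_1^2$, which for a normalized state equals $1$, giving the claimed identification of $J_1$ with a function of the singular values.

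The main obstacle I anticipate is bookkeeping of the conjugation and transposition conventions: the diagram for $\overline{\psi}$ must be produced by bending wires (transposition) versus taking the dagger (conjugate transpose), and one has to verify that the resulting orientations make $U^\dagger U$ and $V V^\dagger$ — rather than some non-cancelling combination — appear along the closed loops. Everything else (the fusion of \COPY-tensors, the final inner product) is routine once the picture is set up correctly, so I would spend the care there and treat the rest as direct reading-off of the diagram.
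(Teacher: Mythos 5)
Your proposal is correct and follows essentially the same route as the paper's proof: build the closed network for $J_1$, apply the diagrammatic SVD, cancel the unitary boxes, fuse the \COPY-tensors into an identity wire, and read off the inner product of the singular-value triangles as $\sum_i \lambda_i^2$. Your extra care about the conjugation/transposition conventions is a sensible refinement of a step the paper treats implicitly, but it does not change the argument.
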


In addition to the norm $J_1$, for pure two-qubit states 
one finds another independent invariant given as\sn{A way to understand why we can expect only two invariants for two pure qubits is by considering the reduced density operator of a single qubit.  This can be done because there is only one possible partition for a two-qubit state.  This density operator has two singular values and we should be able to find (algebraically independent) invariants that separate the orbit, based on these two singular values. } 
\begin{equation}\label{eqn:J2} 
J_2 =  \sum \alpha^{i j}\alpha^{k l}\overline{\alpha}_{i l}\overline{\alpha}_{kj}
\end{equation} 
The invariant is at quartic order and its tensor diagram is given by contracting two copies of the state, and two copies of its dual under the dagger.
\begin{center}
 \includegraphics[width=2\xxxscale]{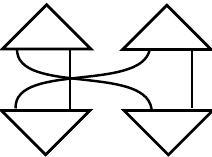}
\end{center}

\begin{lemma}[Graphical contraction of $J_2$]
The graphical language enables a sequence of tensor contractions to relate the polynomial invariant $J_1$ to the singular values of the state.

\begin{proof}
From the graphical expression of $J_2$ (a) we arrive at (b) which is found from applying the diagrammatic SVD to (a).  
The two pairs of unitary order-two tensors cancel (c) and after contraction we are left with
a product of four valence-one triangles (d).  The center portion, attached to the four triangles (c) contracts to a single valence-four \COPY-tensor as\mn{This elementary property of contraction, can also be seen as the \textit{special} case of a more general normal form, see page 64 of \cite{FA}.  It can also be seen as a tensor version of node equivalence in digital circuits \cite{CTNS}.} 
\be 
\delta^i_{~jk} \delta^q_{~rs}\delta^{js}_{~~l}\delta^{rk}_{~~~m} = \delta^{iq}_{~~lm}
\ee 
$J_2$ thus becomes a product of singular values and hence (b) provides a
graphical variant of \eqref{eqn:J2} given as 
\be
J_2 = \sum_i \lambda_i^4 =  \lambda_0^4 + \lambda_1^4 = 1 - 2\lambda_1^2 + 2 \lambda_1^4
\ee 
where the last step follows from the constraint on the states norm.  
\begin{center}
 \includegraphics[width=14\xxxscale]{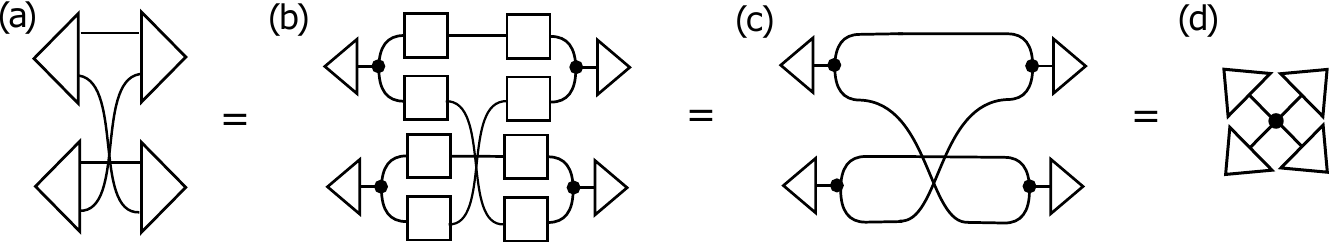}
\end{center} 
\end{proof}
\end{lemma}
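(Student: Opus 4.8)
The plan is to mimic the proof just given for the graphical contraction of $J_1$, but now carrying four copies of the state data through the same three moves: translate to a diagram, insert the diagrammatic SVD, cancel unitaries, and fuse \COPY-tensors. First I would translate Equation~\eqref{eqn:J2} into its tensor diagram (a): two copies of the triangle for $\psi$ and two copies of the conjugate triangle $\overline{\psi}$, wired so that the index pattern $\alpha^{ij}\alpha^{kl}\overline{\alpha}_{il}\overline{\alpha}_{kj}$ is reproduced --- each $\psi$ shares exactly one leg with each $\overline{\psi}$, producing the characteristic closed "square" of contractions that appears in the displayed diagram.

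Next I would apply the diagrammatic SVD of Theorem~\ref{theorem:diagrammatic-SVD} to every one of the four tensors, replacing each $\alpha$ by the contraction of a unitary box $U$, a valence-three \COPY-tensor, the valence-one triangle $\sigma$ carrying the singular values, and a unitary box $V$ (and each $\overline{\alpha}$ by the complex-conjugated factors). This yields diagram (b). The decisive step is then unitarity: because the pairing of $\psi$ with $\overline{\psi}$ along each shared leg places a $U$ next to a $U^{\dagger}$ (and a $V$ next to a $V^{\dagger}$) on every contracted wire, all eight unitary boxes cancel in the two pairs $UU^{\dagger}=\I$ and $VV^{\dagger}=\I$, leaving only the central network built from the four \COPY-tensors attached to the four $\sigma$-triangles, which is diagram (c).

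The third move is a pure \COPY-tensor identity. The four valence-three \COPY-tensors, contracted in the residual square pattern, fuse by node equivalence (Theorem~\ref{theorem:spider}) into a single valence-four \COPY-tensor, which in components is precisely
\be
\delta^i_{~jk}\,\delta^q_{~rs}\,\delta^{js}_{~~l}\,\delta^{rk}_{~~~m} = \delta^{iq}_{~~lm}.
\ee
Contracting this valence-four \COPY-tensor against the four singular-value triangles (diagram (d)) forces all four external indices to agree (recall the generalized-delta behaviour of \COPY{} from Definition~\ref{def:copy-prop}), so the diagram collapses to the single sum $\sum_i \lambda_i^4$. For two qubits this reads
\be
J_2 = \sum_i \lambda_i^4 = \lambda_0^4 + \lambda_1^4,
\ee
and substituting the norm constraint $\lambda_0^2 + \lambda_1^2 = 1$ of a normalized state gives $J_2 = 1 - 2\lambda_1^2 + 2\lambda_1^4$, the claimed expression.

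I expect the only real obstacle to be bookkeeping in step one: one must draw the diagram for \eqref{eqn:J2} so that after the SVD is inserted every $U$ genuinely meets a $U^{\dagger}$ and every $V$ a $V^{\dagger}$ --- a mislabelling of the shared legs would leave stray unitaries and spoil the clean \COPY-fusion. Once the square of contractions is set up correctly, both the unitarity cancellation and the fusion to a valence-four \COPY-tensor are routine applications of results already in hand (Theorem~\ref{theorem:diagrammatic-SVD}, Theorem~\ref{theorem:spider}, and Definition~\ref{def:copy-prop}), so the argument is essentially the $J_1$ proof with one more layer of copies.
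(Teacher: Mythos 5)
Your proposal is correct and follows essentially the same route as the paper: draw the four-copy diagram for \eqref{eqn:J2}, insert the diagrammatic SVD into each copy, cancel the unitaries pairwise by unitarity, fuse the four \COPY-tensors via $\delta^i_{~jk}\,\delta^q_{~rs}\,\delta^{js}_{~~l}\,\delta^{rk}_{~~~m} = \delta^{iq}_{~~lm}$, and contract against the four singular-value triangles to get $J_2=\sum_i\lambda_i^4$, finishing with the norm constraint. No substantive differences from the paper's argument.
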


These graphical contractions inspired the following derivations.  

\begin{lemma}[General formulas equating state coefficients to singular values]
 Given a general two-qubit state
\begin{equation}
|\Psi\rangle = \alpha \ket{00} + \beta \ket{01} + \gamma \ket{10} + \delta \ket{11}
\end{equation} 
it can be shown that the Schmidt coefficients are given as 
\begin{equation}
\lambda_{1,2}^2=\frac{1}{2}\left( 1 \pm \sqrt{1-4|\beta\gamma - \alpha\delta|^2} \right)
\end{equation} 
where $\alpha$, $\beta$, $\gamma$, and $\delta$ are complex numbers.
\end{lemma}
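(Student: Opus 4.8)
The plan is to reduce the statement to a direct computation of the eigenvalues of the reduced density operator. First I would form the coefficient matrix
\begin{equation}
C = \begin{pmatrix} \alpha & \beta \\ \gamma & \delta \end{pmatrix},
\end{equation}
so that $|\Psi\rangle = \sum_{ij} C_{ij}\ket{ij}$, exactly as in the setup of Exercise~\ref{exercise:svd1}. The Schmidt coefficients $\lambda_1,\lambda_2$ are by definition the singular values of $C$, equivalently $\lambda_{1,2}^2$ are the eigenvalues of the positive operator $\rho_1 = C C^\dagger$ (the reduced density operator obtained by tracing out the second qubit). The core of the argument is then to compute the characteristic polynomial of the $2\times 2$ Hermitian matrix $CC^\dagger$ and solve the resulting quadratic.

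The key steps, in order: (i) Write $CC^\dagger$ explicitly; its trace is $\mathrm{Tr}(CC^\dagger) = |\alpha|^2+|\beta|^2+|\gamma|^2+|\delta|^2 = 1$ by normalization, and its determinant is $\det(CC^\dagger) = |\det C|^2 = |\alpha\delta - \beta\gamma|^2$. (ii) Hence $\lambda_{1,2}^2$ are the roots of $x^2 - x + |\alpha\delta-\beta\gamma|^2 = 0$, giving
\begin{equation}
\lambda_{1,2}^2 = \frac{1}{2}\left(1 \pm \sqrt{1 - 4|\alpha\delta - \beta\gamma|^2}\right),
\end{equation}
which matches the claimed formula (the sign inside the modulus is irrelevant since $|\beta\gamma - \alpha\delta| = |\alpha\delta - \beta\gamma|$). (iii) For completeness I would remark that the discriminant is nonnegative: writing $\lambda_1^2\lambda_2^2 = |\det C|^2$ and $\lambda_1^2+\lambda_2^2 = 1$, the AM--GM-type inequality $(\lambda_1^2-\lambda_2^2)^2 = (\lambda_1^2+\lambda_2^2)^2 - 4\lambda_1^2\lambda_2^2 = 1 - 4|\det C|^2 \ge 0$ shows $4|\alpha\delta-\beta\gamma|^2 \le 1$, so the square root is real — consistent with $\lambda_{1,2}^2$ being genuine (real, nonnegative) eigenvalues. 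Alternatively one can invoke the diagrammatic SVD of Theorem~\ref{theorem:diagrammatic-SVD} applied to the map obtained by bending a wire of $|\Psi\rangle$, and read off that $\mathrm{Tr}\,\rho_1 = \sum_i\lambda_i^2$ and $\mathrm{Tr}\,\rho_1^2 = \sum_i\lambda_i^4$, then combine these two symmetric functions with the identity $\mathrm{Tr}\,\rho_1^2 = (\mathrm{Tr}\,\rho_1)^2 - 2\det\rho_1$; this connects the statement to the invariant $J_2$ of the preceding lemma.

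There is essentially no hard obstacle here — the result is a one-line computation once the coefficient matrix is introduced. The only thing requiring a word of care is the branch/reality issue in step (iii): one should confirm that the quantity under the radical is in $[0,1]$ so that the two values $\lambda_{1,2}^2$ lie in $[0,1]$ and sum to $1$, as Schmidt coefficients must. I would present the determinant/trace computation as the main body and relegate the reality check to a short remark.
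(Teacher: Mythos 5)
Your proof is correct, and it in fact supplies an argument the paper never writes down: the lemma is stated with ``it can be shown that'' and no proof environment follows, the formula being justified only implicitly by the surrounding development (the graphical contractions of $J_1$ and $J_2$ and the subsequent corollary, which rewrites the same result as $\lambda_{1,2}^2=\tfrac12\bigl(1\pm\sqrt{2J_2-1}\bigr)$ using $J_2=1-2|\beta\gamma-\alpha\delta|^2$). Your route --- form the coefficient matrix $C$, note that $\lambda_{1,2}^2$ are the eigenvalues of $\rho_1=CC^\dagger$, and solve the quadratic $x^2-x+|\det C|^2=0$ obtained from $\Tr(CC^\dagger)=1$ and $\det(CC^\dagger)=|\det C|^2$ --- is the direct linear-algebra version of what the paper's invariant-theoretic route encodes, since $J_2=\Tr\rho_1^2=(\Tr\rho_1)^2-2\det\rho_1=1-2|\alpha\delta-\beta\gamma|^2$; you correctly observe this equivalence yourself in your closing remark. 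What your approach buys is self-containedness (no appeal to the $J_2$ contraction or the diagrammatic SVD is needed) and an explicit reality check on the discriminant, which the paper omits; what the paper's route buys is that the formula drops out of quantities ($J_1$, $J_2$) already computed graphically, making the invariance of the Schmidt coefficients under local unitaries manifest. One small point to make explicit in your write-up: the lemma tacitly assumes $\ket{\Psi}$ is normalized (otherwise $\Tr(CC^\dagger)\neq 1$ and the stated formula fails), and your step (i) uses this, so state the hypothesis rather than folding it silently into ``by normalization.''
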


\begin{corollary}
 The invariance of $J_2$ implies the invariance of the Schmidt coefficients and vise versa.
\end{corollary}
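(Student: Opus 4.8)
The plan is to lean entirely on the two explicit identities established in the preceding lemmas, namely $J_1=\lambda_0^2+\lambda_1^2$ and $J_2=\lambda_0^4+\lambda_1^4$, together with the (automatic) invariance of the squared norm $J_1$ under the local unitary group. First I would record that the elementary symmetric functions of the pair $(\lambda_0^2,\lambda_1^2)$ are recoverable from $J_1$ and $J_2$ alone: the first symmetric function is $J_1$ itself, and from $(\lambda_0^2+\lambda_1^2)^2=\lambda_0^4+\lambda_1^4+2\lambda_0^2\lambda_1^2$ one gets the second, $\lambda_0^2\lambda_1^2=\frac{1}{2}(J_1^2-J_2)$. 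Hence $\lambda_0^2$ and $\lambda_1^2$ are exactly the two roots of the quadratic $t^2-J_1\,t+\frac{1}{2}(J_1^2-J_2)=0$.

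For the forward implication, I would assume $J_2$ is invariant under the group in question (which contains the local unitaries, so $J_1$ is invariant regardless). Then both coefficients of the displayed quadratic are invariant, so its unordered root multiset $\{\lambda_0^2,\lambda_1^2\}$ is invariant; since each $\lambda_i\ge 0$, taking nonnegative square roots shows the multiset of Schmidt coefficients $\{\lambda_0,\lambda_1\}$ is invariant, and fixing the order $\lambda_0\ge\lambda_1$ makes this canonical. No reality issue arises because the discriminant $2J_2-J_1^2=(\lambda_0^2-\lambda_1^2)^2\ge 0$ is automatically nonnegative.

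For the converse, $J_2=\lambda_0^4+\lambda_1^4$ is a symmetric polynomial in the Schmidt coefficients, so if those are invariant then any function of them is invariant, in particular $J_2$.

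The only delicate point — the main obstacle, such as it is — is the bookkeeping about what ``the Schmidt coefficients'' denotes: they form an unordered multiset, so one must note that the map $\{\lambda_i\}\mapsto\{\lambda_i^2\}$ is a bijection on finite multisets of nonnegative reals, and that $J_1$ is always available as a companion invariant so that two equations genuinely pin down the two unknowns $\lambda_0^2,\lambda_1^2$. Once that is in place the rest is just the quadratic formula, and the corollary follows.
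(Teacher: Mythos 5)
Your proof is correct, and it reaches the same destination as the paper by a slightly different road. The paper invokes its preceding lemma on the coefficient matrix — writing $J_2 = 1-2\,\lvert\beta\gamma-\alpha\delta\rvert^2$ for a normalized state and combining this with the formula $\lambda_{1,2}^2 = \tfrac{1}{2}\bigl(1\pm\sqrt{1-4\lvert\beta\gamma-\alpha\delta\rvert^2}\bigr)$ to obtain $\lambda_{1,2}^2 = \tfrac{1}{2}\bigl(1\pm\sqrt{2J_2-1}\bigr)$ — so its argument passes through the determinant of the state's coefficient matrix and implicitly assumes $J_1=1$. You instead work only with the power sums $J_1=\lambda_0^2+\lambda_1^2$ and $J_2=\lambda_0^4+\lambda_1^4$ already established by the graphical contractions, recover the elementary symmetric functions via $\lambda_0^2\lambda_1^2=\tfrac{1}{2}(J_1^2-J_2)$, and read off $\{\lambda_0^2,\lambda_1^2\}$ as the roots of $t^2-J_1 t+\tfrac{1}{2}(J_1^2-J_2)$; setting $J_1=1$ reproduces the paper's formula exactly, with discriminant $2J_2-J_1^2=(\lambda_0^2-\lambda_1^2)^2\ge 0$. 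What your route buys is that it needs neither normalization nor the determinant identity, it makes the converse direction explicit (the paper leaves it implicit in $J_2=\lambda_0^4+\lambda_1^4$), and it spells out the unordered-multiset bookkeeping; what the paper's route buys is the link between $J_2$ and the quantity $\lvert\beta\gamma-\alpha\delta\rvert$, i.e.\ the determinant of the coefficient matrix, which dovetails with the later discussion of the SLOCC invariant $K_1$ and concurrence. Both are sound; no gap.
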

\begin{proof}
It can be shown that the invariant $J_2$ can be expressed as 
\begin{equation} 
J_2=1-2 |\beta\gamma - \alpha\delta|^2  
\end{equation} 
which implies that 
\be 
|\beta\gamma - \alpha\delta|^2=\frac{1-J_2}{2}
\ee 
and therefore
\begin{equation}
\lambda_{1,2}^2=\frac{1}{2}\left( 1 \pm \sqrt{2J_2-1} \right)
\end{equation}  
\end{proof}

\begin{remark}[Relating $J_2$ to entanglement]
 Parameterizing $\lambda_0:= \cos \theta$ and $\lambda_1:= \sin
\theta$, with $0\leq \theta \leq \pi/4$ (which gives $\lambda_0 \geq
\lambda_1$) the invariant $J_2$ becomes
\be 
J_2(\theta) = \cos^4 \theta + \sin^4 \theta= \frac{1}{4} (3 + \cos(4 \theta))
\ee 
where the last line again follows from the norm.  The angle $\theta$
now becomes a characterization of the entanglement.  $\theta = 0$ iff
the state is separable and $\theta = \pi /4$ iff the state is locally
equivalent to a maximally entangled Bell state.  We can see this from
the following plot of $J_2(\theta)$.  For small angles, the value of the plot is $\approx1$ as the angle increases, 
the entanglement increases and the value of $j_2$ goes to its maximum value of one half (at $\theta/\pi = 1/4$). 
\begin{center}
 \includegraphics[width=0.65\textwidth]{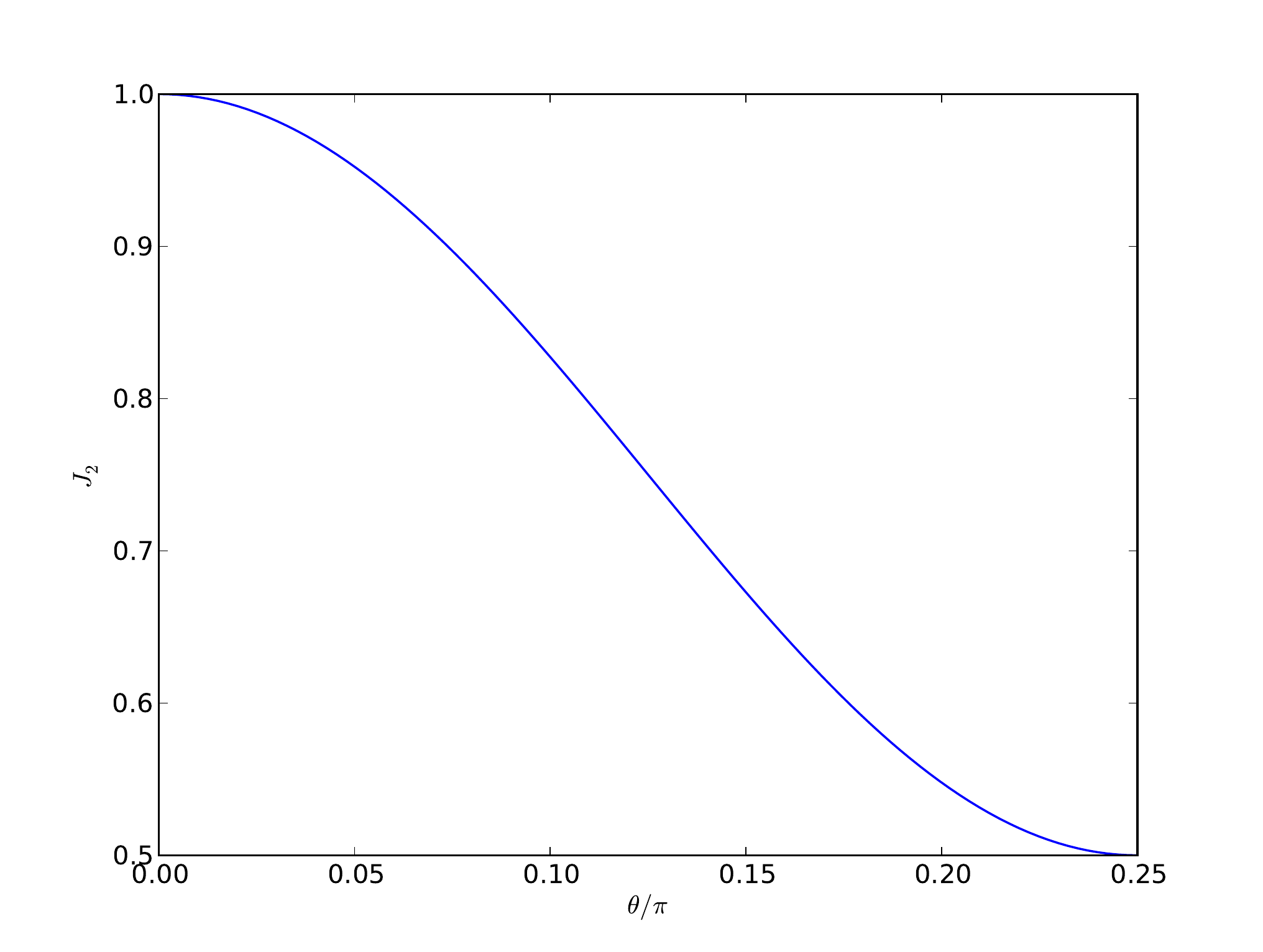}
\end{center}
\end{remark}

\begin{remark}[Algebraic independence of $J_1$ and $J_2$: the fundamental ring]
$J_1$ and $J_2$ are the only algebraically independent invariants under local action of the unitary group.
Any polynomial function of such invariants is also a polynomial
invariant.  In this fashion, it is a remarkable feature that functions
of $J_1$ and $J_2$ are all that is needed to express any local unitary
invariant of two-qubit pure states.  This elementary result follows from a
much more powerful and general result in classical invariant theory.
That is a proof by Hilbert that the ring of polynomial invariants is
finitely generated \cite{hilbert}.  This corresponds to freely
generated linear sums and products of $J_1$, $J_2$ (the ring \begin{equation}
    \{J_1, J_2, (\R, +, \cdot)\}. 
\end{equation}
Any minimal complete set of invariants that
can freely generate the full ring are called \emph{fundamental invariants}.
\end{remark}


\paragraph{$n$-qubit LOCC invariants.}

The method we have described in detail for two qubits is readily extended to $n$-qubits.  
Consider the general expression for an $n$-qubit pure state 
\be 
\psi = \sum \psi_{ijk...}\ket{ijk...}
\ee 
A general polynomial of the state coefficients together with their complex conjugates is expressed as
\begin{equation}\label{eqn:n-qubit-invariant}
\sum c^{ijk...}_{qrs...lmn}\overline{\psi}^{qrs...}\overline{\psi}^{lmn...}\cdots\psi_{ijk...}\cdots
\end{equation} 

If the polynomial \eqref{eqn:n-qubit-invariant} has an equal numbers of $\psi$’s 
and $\overline{\psi}$’s and all the indices of the $\psi$ are contracted using the
invariant tensor $\delta$ with those of the $\overline{\psi}$, each index being
contracted with an index corresponding to the same party then the polynomial is 
manifestly invariant under LOCC transformations.

\begin{remark}[Connection to the diagrammatic language] 
Invariant polynomials \eqref{eqn:n-qubit-invariant} can be written in terms of permutations
on the indices, and given diagrammatically as contraction with a permutation operator.  
\end{remark}

\begin{remark}[Fundamental ring of invariants]
Although we can generate a full basis of invariants \eqref{eqn:n-qubit-invariant}, except in rare cases, a minimal set of invariants is not known. 
\end{remark} 

\subsubsection*{Tensor contraction for SLOCC\sn{SLOCC: Stochastic Local Operations and Classical Communication.} invariants}   
SLOCC equivalence amounts to equivalence under local invertible 
transformations. This results in a coarser partitioning of the set of states 
as compared to LOCC equivalence, since SLOCC shows which states are
accessible to different parties with non-zero probability.  The following result was proven in \cite{3qubits}.  

\begin{theorem}[SLOCC]\label{theorem:SLOCC}
Two states are SLOCC-equivalent iff there exists
an invertible local operator relating them given by the action of the local general linear group \cite{3qubits} 
\be 
GL(d_1, \C)\times GL(d_2, \C)\times \ldots  \times GL(d_n, \C). 
\ee 
\end{theorem}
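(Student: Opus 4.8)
The plan is to prove the two implications separately; the substance is essentially that of \cite{3qubits}, but the tensor-network picture makes the ``only if'' direction transparent. Recall that an SLOCC protocol converting $|\psi\rangle$ into $|\phi\rangle$ is a finite LOCC protocol succeeding with strictly positive probability, and that SLOCC equivalence means such a protocol exists in both directions. The easy direction $(\Leftarrow)$: suppose $|\phi\rangle \propto (A_1\otimes A_2\otimes\cdots\otimes A_n)|\psi\rangle$ with each $A_i\in GL(d_i,\C)$. For party $i$, rescale $A_i$ so that $A_i^\dagger A_i \le \|A_i\|^2\,\I$ in operator norm, and complete $\{A_i/\|A_i\|\}$ to a two-outcome local POVM by adjoining the Kraus operator $\sqrt{\I - A_i^\dagger A_i/\|A_i\|^2}$. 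If every party performs this measurement, the branch in which each party registers its first outcome occurs with probability $\|(A_1\otimes\cdots\otimes A_n)|\psi\rangle\|^2/\prod_i\|A_i\|^2>0$ and sends $|\psi\rangle$ to the normalized $|\phi\rangle$. Since each $A_i$ is invertible we also have $|\psi\rangle \propto (A_1^{-1}\otimes\cdots\otimes A_n^{-1})|\phi\rangle$, so running the same construction on the $A_i^{-1}$ realizes the reverse conversion; hence the states are SLOCC equivalent.

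For the direction $(\Rightarrow)$ I would decompose a successful LOCC protocol into rounds of local measurement followed by broadcast. Conditioned on the entire transcript of classical messages along one fixed successful branch, party $i$ has applied some fixed product of its own Kraus operators, i.e.\ a single operator $A_i$ acting only on $\mathcal H_i$; classical communication merely correlates which branch each party sits in and never creates an operator that fails to factor across the parties. Diagrammatically this is immediate: on that branch the only modification of party $i$'s wire is the insertion of one box $A_i$. Hence $|\phi\rangle \propto (A_1\otimes\cdots\otimes A_n)|\psi\rangle$, and by applying the reverse protocol $|\psi\rangle \propto (B_1\otimes\cdots\otimes B_n)|\phi\rangle$ as well.

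It then remains to upgrade the $A_i$ to genuine elements of $GL(d_i,\C)$. Here I would use that $|\psi\rangle$ lies in $\bigotimes_i \mathrm{supp}\,\rho_i(\psi)$, where $\rho_i(\psi)$ is the $i$-th reduced operator, and track the local reduced operators through the one-sided maps to conclude that $B_iA_i$ acts as a nonzero scalar on $\mathrm{supp}\,\rho_i(\psi)$. Consequently $A_i$ restricts to an injection of $\mathrm{supp}\,\rho_i(\psi)$ into $\mathcal H_i$ whose image is $\mathrm{supp}\,\rho_i(\phi)$, a subspace of the same dimension (the return map $B_i$ provides the inverse). Any isomorphism between two equidimensional subspaces of $\mathcal H_i$ extends to an automorphism of $\mathcal H_i$, so choose such an extension $A_i'\in GL(d_i,\C)$; since $A_i'$ agrees with $A_i$ on the local support of $|\psi\rangle$, we still have $|\phi\rangle\propto(A_1'\otimes\cdots\otimes A_n')|\psi\rangle$, which is the assertion.

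The main obstacle is precisely the local-support bookkeeping in the last paragraph: one must verify carefully that composing $A_i$ with the return operator $B_i$ leaves $\mathrm{supp}\,\rho_i(\psi)$ invariant and acts there by a scalar, which amounts to commuting partial traces past the one-sided operators $A_j$ with $j\neq i$ and using irreducibility of the support. Everything else --- the POVM completion, the round-by-round factorization into a product operator on each success branch, and the extension of a subspace isomorphism to $GL(d_i,\C)$ --- is routine.
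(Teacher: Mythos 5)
The paper does not actually prove this theorem: it states it and defers entirely to D\"ur--Vidal--Cirac \cite{3qubits}, so there is no in-text proof to compare yours against. On its own terms, your proposal follows the standard route of that reference: the $(\Leftarrow)$ direction, completing each rescaled $A_i$ to a two-outcome local measurement and noting the all-success branch has nonzero probability, is correct, and so is the observation that along any fixed successful transcript of a finite LOCC protocol the net action is a product operator $A_1\otimes\cdots\otimes A_n$, giving $\ket{\phi}\propto(A_1\otimes\cdots\otimes A_n)\ket{\psi}$ and $\ket{\psi}\propto(B_1\otimes\cdots\otimes B_n)\ket{\phi}$.

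The genuine gap is your claim that $B_iA_i$ acts as a nonzero scalar on $\mathrm{supp}\,\rho_i(\psi)$. That is false in general: for $\ket{\psi}=\ket{00}+\ket{11}$ take $A_1=\mathrm{diag}(2,1)$, $A_2=\I$, $B_1=\I$, $B_2=\mathrm{diag}(1/2,1)$; then $(B_1\otimes B_2)(A_1\otimes A_2)\ket{\psi}=\ket{\psi}$, yet $B_1A_1=\mathrm{diag}(2,1)$ is not a scalar on $\mathrm{supp}\,\rho_1(\psi)=\C^2$. The relation $(C_1\otimes\cdots\otimes C_n)\ket{\psi}=\lambda\ket{\psi}$ constrains only the product of the local factors, not each factor separately, so "irreducibility of the support" cannot deliver scalarity. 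What is true, and all your conclusion needs, is weaker: one-sided local operators cannot increase local ranks, so interconvertibility forces $\mathrm{rank}\,\rho_i(\phi)=\mathrm{rank}\,\rho_i(\psi)$ for every $i$; writing $\rho_i(\phi)\propto A_i\tau_iA_i^\dagger$ with $\mathrm{supp}\,\tau_i\subseteq\mathrm{supp}\,\rho_i(\psi)$, the rank equality then forces $A_i$ to be injective on $\mathrm{supp}\,\rho_i(\psi)$ with image exactly $\mathrm{supp}\,\rho_i(\phi)$ (equivalently, $B_iA_i$ preserves that support and is invertible --- not scalar --- on it). With that substitution, your extension of $A_i$ to $A_i'\in GL(d_i,\C)$ and the final identity $(A_1'\otimes\cdots\otimes A_n')\ket{\psi}\propto\ket{\phi}$ go through unchanged, recovering the cited result.
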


\begin{remark}[SLOCC under the special linear group]\label{remark:SL}
Up to a scale factor, and without loss of generality, we will instead consider action of the special linear group, $SL(2,\C)$ on qubits.  
\be
SL(2, \C) \times SL(2, \C) \times \ldots  \times SL(2, \C) 
\ee 
The precise relation between $GL$ and $SL$ invariants is mentioned in Remark \ref{remark:GL-vs-SL}.  We then consider transforming an $n$-party state $\psi$ to $S\ket{\psi}$ with $S:= \bigotimes_{i=0}^n S_i$ where each $S_i$ has unit determinant.  Action of either group is known to correspond precisely to quantum operations that preserve true entanglement under a general quantum operation.  
 
\end{remark}

\paragraph*{Graphical properties of the Levi-Civita symbol.} In order to study 
SLOCC invariants, and their tensor networks, we will need to recall 
the Levi-Civita symbol and consider a tensor network representation.  

\begin{definition}[Levi-Civita symbol]
The order-$n$ Levi-Civita symbol $\varepsilon^{ij \cdots k}$
is the fully antisymmetric tensor with coefficients in $\{-1,0,1\}$.
All its legs have the same dimension $d = n$.
\be
\varepsilon^{ij \cdots k} = \left\{
\begin{array}{rl}
1  & \text{when} \: (i,j,\ldots, k) \: \text{is an even permutation of the index values,}\\
-1 & \text{when} \: (i,j,\ldots, k) \: \text{is an odd permutation, and}\\
0  & \text{otherwise}.
\end{array}
\right.
\ee




\end{definition}

Before proceeding to an analysis of the SLOCC invariant tensor network structure, we must explore some properties of the epsilon tensor.  

\begin{remark}[Matrix determinant]
The determinant for an $n$-by-$n$ matrix
$A$ can be expressed in terms of the Levi-Civita symbol as follows:
\be
\label{eq:epsilon-det}
\det (A) = \varepsilon^{i_0 \cdots i_{n-1}} A\indices{^{0}_{i_0}} \cdots A\indices{^{n-1}_{i_{n-1}}}.
\ee
Diagrammatically,
the determinant takes the form
\begin{center}
 \includegraphics[width=2.5\xxxscale]{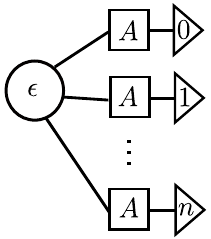}
\end{center}
or equivalently, the determinant is directly proportional to the expected value of the operator 
\begin{equation}
   \bra{\epsilon} \bigotimes_{i=1}^n A_i \ket{\epsilon} 
\end{equation}
with respect to the epsilon state in that dimension $n$. 
\end{remark}

\begin{lemma}[Invariance of epsilon under $SL(n, \C)$]
\label{lemma:epsilon-invariance}
We will consider the representation $L$ of the group $SL(n, \C)$ on
${\C^n}^{\otimes n}$, defined as
\be
L(S) \ket{\psi} := S^{\otimes n} \ket{\psi},
\ee
where $S \in SL(n, \C)$.
The order-$n$ epsilon state is invariant under this representation:
\begin{align*}
L(S) \ket{\varepsilon}
&= S\indices{^{i'}_i} S\indices{^{j'}_j} \cdots S\indices{^{k'}_k}
\varepsilon^{ij\cdots k} \ket{i' j' \cdots k'}
= \varepsilon^{i'j' \cdots k'} \ket{i' j' \cdots k'} = \ket{\varepsilon}.
\end{align*}
\end{lemma}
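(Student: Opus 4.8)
The plan is to reduce the whole statement to one algebraic identity, namely
\[
S\indices{^{i'}_i} S\indices{^{j'}_j} \cdots S\indices{^{k'}_k}\, \varepsilon^{ij\cdots k} = \det(S)\, \varepsilon^{i'j'\cdots k'},
\]
valid for every $n\times n$ matrix $S$, and then to specialise to $S\in SL(n,\C)$ where $\det(S)=1$. Everything else is bookkeeping, so the real content is establishing this displayed identity.

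First I would prove the identity. Regarded as a function of the free upper indices $(i',j',\ldots,k')$, the left-hand side inherits total antisymmetry from $\varepsilon^{ij\cdots k}$: transposing two primed indices is the same as relabelling two of the rows $S\indices{^{\bullet}_\bullet}$ and then transposing the matching pair of contracted indices of $\varepsilon$, which flips the sign. Since the space of totally antisymmetric order-$n$ tensors on $\C^n$ is one-dimensional and spanned by $\varepsilon^{i'j'\cdots k'}$, the left-hand side must equal $c\,\varepsilon^{i'j'\cdots k'}$ for a scalar $c = c(S)$. To pin down $c$ I would evaluate both sides at $(i',j',\ldots,k')=(0,1,\ldots,n-1)$ and compare with the Leibniz expansion of the determinant already recorded in~\eqref{eq:epsilon-det}; this gives $c=\det(S)$ immediately.

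Second I would assemble the conclusion. Using $L(S)\ket{\varepsilon}=S^{\otimes n}\ket{\varepsilon}$ and expanding in the computational basis,
\[
L(S)\ket{\varepsilon} = S\indices{^{i'}_i} S\indices{^{j'}_j}\cdots S\indices{^{k'}_k}\,\varepsilon^{ij\cdots k}\,\ket{i'j'\cdots k'} = \det(S)\,\varepsilon^{i'j'\cdots k'}\ket{i'j'\cdots k'} = \det(S)\,\ket{\varepsilon},
\]
and $\det(S)=1$ for $S\in SL(n,\C)$ finishes it. Diagrammatically this is just the assertion that plugging $n$ copies of the same box $S$ into the $n$ legs of the antisymmetric $\varepsilon$-node reproduces the picture for $\det(S)$ from the remark on the matrix determinant, i.e.\ a scalar (a blank on the page in the scalar gauge) times $\ket{\varepsilon}$.

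The hard part here is only conceptual, not technical: one has to be comfortable that antisymmetry forces proportionality to $\varepsilon$ and that the proportionality constant is fixed by a single component evaluation. No genuine obstacle arises — the lemma is essentially the coordinate-free restatement of the Leibniz/multiplicativity property of determinants, specialised to unit determinant. If a fully diagrammatic proof is wanted instead, I would run the same reduction by sliding each copy of $S$ through the totally antisymmetric node using Example~\ref{ex:epsilon} (the $d=2$ relation $\epsilon_{ij}S\indices{^i_m}S\indices{^j_n}=\det(S)\,\epsilon_{mn}$) together with its evident order-$n$ analogue, picking up one factor of $\det(S)$ overall.
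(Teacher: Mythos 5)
Your proposal is correct and follows essentially the same route as the paper's own proof: both exploit total antisymmetry of the contracted expression in the free (primed) indices together with the determinant identity~\eqref{eq:epsilon-det} to fix the overall factor as $\det(S)=1$. Your phrasing via the one-dimensionality of the space of fully antisymmetric order-$n$ tensors is just a compact repackaging of the paper's explicit walk through the identity, even, and odd permutation components.
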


\begin{proof}
If an index value is repeated in the set~$\{i', j', \ldots, k'\}$,
the antisymmetry of epsilon dictates that the expression must
vanish. Hence the only nonvanishing index combinations are
permutations of the $n$ allowed index values.
Using~\eqref{eq:epsilon-det}, we can see that the permutation
$$
(i', j', \ldots, k') = (0, 1, \ldots, n-1)
$$
corresponds to
$\det(S)$, which equals~$1$ since $S \in SL(n, \C)$.
Invoking the antisymmetry of epsilon again we conclude that all even
permutations must also yield~$1$, and all odd permutations~$-1$, thus
recovering the definition of epsilon.

\begin{center}
 \includegraphics[width=4\xxxscale]{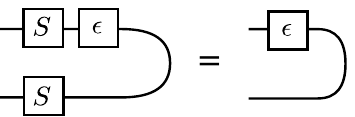}
\end{center}

However, we will then note that under wire duality we arrive at the relation 
\begin{center}
 \includegraphics[width=8\xxxscale]{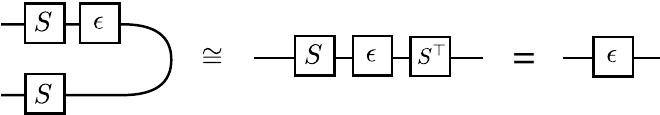}
\end{center} 
which says that $S\varepsilon S^\top = \varepsilon$.   So under wire duality, we find that $(S\otimes S) \psi_\varepsilon \cong S\varepsilon S^\top$.  
\end{proof}

%
%
%
%
%

\begin{remark}[Defining equations relating $\delta$ and $\epsilon$]
 It is a classical result of invariant theory that every identical relation satisfied by contractions of $\delta$'s and $\epsilon$'s can be built up from the following relations (see e.g.\ \cite{Penrose67}).  
 \be 
 \epsilon_{AB}=-\epsilon_{BA}; ~~~~~~~~~~ \epsilon^{AB} = -\epsilon^{BA}
 \ee
 \be 
 \delta^B_A\delta^C_B=\delta^C_A; ~~~~~~~~~ \delta^B_A\epsilon^{AC}; ~~~~~~~~ \delta^B_A\epsilon^{AC}=\epsilon^{BC}; ~~~~~~~~ \epsilon_{AB}\epsilon^{AC} = \delta^C_B
 \ee
 \be 
 \delta^A_A = 2 = \epsilon_{AB}\epsilon^{AB}
 \ee 
 Also, the identity, 
 \be 
 \epsilon_{AB}\epsilon_{CD} + \epsilon_{AD}\epsilon_{BC} + \epsilon_{AC}\epsilon_{DB} = 0
 \ee 
 holds, together with the equivalent identities obtained by raising indices of the above, with Levi-Civita symbols, e.g. 
 \be 
 \delta^B_A\delta^D_C - \delta^D_A\delta^B_C= \epsilon_{AC}\epsilon^{BD}
 \ee  
 We also note that when we assign values to $A, B=0,1$ we arrive at the following relations, between the indices and the tensor components as 
 \be 
 \epsilon_{AB}= A-B
 \ee 
 \be 
 \delta^A_B = 1 - (A-B)^2 
 \ee 
\end{remark}

\begin{remark}[Matrix determinant in dim $=2$]
The case of the determinant in two dimensions becomes
\be 
\sum \varepsilon_{ij}A_{i0}A_{j1} = \varepsilon_{01}A_{00}A_{11} +  \varepsilon_{10}A_{10}A_{01}
\ee 
where $\varepsilon_{10} = -1$ and $\varepsilon_{01} = 1$.  
In two dimensions, when all $i,j,m,n$ are in $\{0,1\}$, the following identities hold.  
\be 
\varepsilon_{ij} \varepsilon^{mn} = \delta^m_i \delta^n_j - \delta^n_i \delta^m_j
\ee 
\be 
\varepsilon_{ij} \varepsilon^{in} = \delta^n_j
\ee 
\be 
\varepsilon_{ij} \varepsilon^{ij} = 2
\ee   
\end{remark}

\begin{remark}[Epsilon symbol in two dimensions]
In two dimensions the Levi-Civita symbol is an element of the Pauli group:
$\varepsilon = i Y = -XZ = ZX$, that is,\footnote{Note that the diagram is not written using the convention 
from quantum circuits, where the order of composition of operations in an equation is reversed in the circuit.} 
\begin{center}
 \includegraphics[width=8\xxxscale]{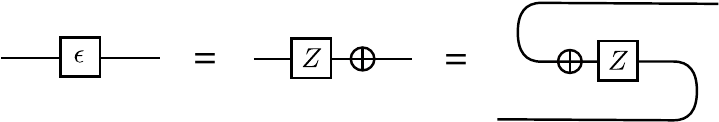}
\end{center}
where the last step of bending output to inputs and inputs to outputs
takes the transpose of a linear map.
\end{remark}

\paragraph*{SLOCC invariants.}
For pure two-qubit states, the single $SL$ invariant is the determinant
of the coefficient matrix $\alpha$ as 
\begin{equation}\label{eqn:K1}
K_1 = \sum \varepsilon_{i j}\varepsilon_{k l} \alpha^{i k} \alpha^{j l}  = 2\det (\psi^i_{~j}) = 2(\alpha^{00}\alpha^{11} - \alpha^{01}\alpha^{10})
\end{equation}
here $\varepsilon$ is the fully antisymmetric tensor on two indices defined as $\varepsilon_{00} = \varepsilon_{11} = 0$ and $\varepsilon_{01} = -\varepsilon_{10}$.  

\begin{remark}
 The norm is not a SLOCC invariant.  
\end{remark}

\begin{remark}[Algebraic independence of $K_1$: the fundamental ring]
 Any other polynomial invariant of the induced action of the special linear group is necessarily a polynomial in $K_1$. Since there is only one invariant in this special case, each member of this class could be considered a fundamental invariant.  Finding the fundamental invariants becomes complicated for higher dimensional systems.  
\end{remark}

With the epsilon identities in place, we are now able to consider the tensor structure of~$K_1$:
\begin{equation}\label{eqn:K_1} 
K_1 = \sum \varepsilon_{i j}\varepsilon_{k l} \alpha^{i k} \alpha^{j l} 
\end{equation}

\begin{lemma}[Diagrammatic contraction of $K_1$]
 The diagrammatic language contracts the tensor network for $K_1$ \eqref{eqn:K_1} to be twice the determinate of the matrix of state coefficients $(\bra{\Phi^+}\otimes \I)\ket{\psi}$ found from bending a wire on $\psi$. 
 \end{lemma}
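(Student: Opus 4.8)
The plan is to reduce the four-tensor contraction in \eqref{eqn:K_1} --- two copies of $\varepsilon$ and two copies of $\psi$ --- to a single scalar. First I would use map--state duality (the snake equation, exactly as in Corollary~\ref{ex:diagrammatic-Schmidt} and the graphical map--state duality discussion) to replace each triangle $\ket{\psi}$ in the diagram by the square box $M \bydef (\bra{\Phi^+}\otimes\I)\ket{\psi}$, the linear map whose matrix elements are the coefficients of $\psi$: bending the first leg of $\ket{\psi}=\sum_{ik}\alpha^{ik}\ket{ik}$ gives $M\indices{^i_k}=\alpha^{ik}$. After this the network reads $K_1 = \sum_{ijkl}\varepsilon_{ij}\,\varepsilon_{kl}\,M\indices{^i_k}M\indices{^j_l}$, which is the same expression as \eqref{eqn:K_1} with the two $\psi$-triangles redrawn as $M$-boxes.

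Next comes the key diagrammatic move: contract the lower $\varepsilon$ (indices $k,l$) against the two $M$-boxes. By the $\varepsilon$--determinant identity of Example~\ref{ex:epsilon} (equivalently the two-index specialization of \eqref{eq:epsilon-det}, using $\det M=\det M^\top$), one has $\sum_{kl}\varepsilon_{kl}\,M\indices{^i_k}M\indices{^j_l}=\det(M)\,\varepsilon_{ij}$. Diagrammatically the $\varepsilon$ slides through the pair of $\psi$-boxes, emits the number $\det(\psi)$, and re-emerges as an $\varepsilon$ on the other side. What remains is the loop of the two surviving epsilons, and by the two-dimensional identity $\sum_{ij}\varepsilon_{ij}\varepsilon^{ij}=2$ (immediate since $\varepsilon_{01}^2+\varepsilon_{10}^2=2$) this closed loop evaluates to the scalar $2$. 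Hence $K_1 = 2\det(M) = 2\det\!\big((\bra{\Phi^+}\otimes\I)\ket{\psi}\big) = 2(\alpha^{00}\alpha^{11}-\alpha^{01}\alpha^{10})$, matching \eqref{eqn:K1}.

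I expect the main obstacle to be bookkeeping rather than conceptual. One must pin down the wire-bending convention so that ``the coefficient matrix'' is unambiguous (which leg is bent, and whether a transpose or a complex conjugate is picked up in the process); this ambiguity is harmless because $\det M = \det M^\top$ and, crucially, no conjugation enters here since both tensors in $K_1$ are $\psi$ rather than $\overline{\psi}$, so the relevant box is genuinely $M$ and not $M^\dagger$. A secondary point to state cleanly is the justification of the ``slide $\varepsilon$ through two boxes'' step as a legitimate diagram rewrite: it is Lemma~\ref{lemma:epsilon-invariance} with the $SL$-restriction dropped, i.e.\ the weaker identity $\varepsilon\,(M\otimes M)=\det(M)\,\varepsilon$, obtained by running that lemma's index computation with a general $2\times2$ matrix $M$ in place of $S\in SL(2,\C)$. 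Assembling these three ingredients --- map--state duality, the $\varepsilon$--determinant rewrite, and the $\varepsilon$-loop value $2$ --- gives the claimed contraction.
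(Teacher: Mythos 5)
Your proof is correct, but it follows a genuinely different route from the paper's. You work directly at the level of the coefficient matrix: bend a wire to turn each copy of $\psi$ into the box $M=(\bra{\Phi^+}\otimes\I)\ket{\psi}$, apply the $\varepsilon$--determinant rewrite of Example~\ref{ex:epsilon} (which, as you correctly note, holds for an arbitrary $2\times 2$ matrix, not just $S\in SL(2,\C)$, and is insensitive to the transpose ambiguity since $\det M=\det M^\top$), and close the remaining $\varepsilon$-loop using $\sum_{ij}\varepsilon_{ij}\varepsilon_{ij}=2$, giving $K_1=2\det M$ in three moves. The paper instead first factors $\psi$ with the diagrammatic SVD of Theorem~\ref{theorem:diagrammatic-SVD}, uses the $SL$-invariance of $\varepsilon$ (Lemma~\ref{lemma:epsilon-invariance}) to absorb the unitaries after sliding them around the bends, then writes $\varepsilon=iY=ZX$ in the Pauli algebra and invokes the stabilizer identities of the \COPY-tensor ($X\otimes X\otimes X$ and pairwise $Z$'s, with $X^2=Z^2=\I$) to cancel all internal order-two tensors, leaving a single bit-flip acting on the vector of singular values and hence the inner product $(\lambda_0,\lambda_1)\cdot(\lambda_1,\lambda_0)=2\lambda_0\lambda_1=2\det(\psi)$. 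Your argument is shorter and more elementary, needing only the $\varepsilon$ identities; the paper's heavier machinery buys an explicit expression of the invariant in terms of the singular values exposed by the SVD factorization, which is the organizing theme of that section (every invariant re-expressed in the singular-value basis), and it exercises the \COPY-stabilizer rewrite rules that are reused elsewhere. Your two side remarks --- that no complex conjugation enters because both tensors in \eqref{eqn:K_1} are $\psi$ rather than $\overline{\psi}$, and that the ``slide $\varepsilon$ through two boxes'' step is the general-matrix version of Lemma~\ref{lemma:epsilon-invariance} --- are exactly the right points to pin down, so there is no gap.
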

 
 \begin{fullpage}
 \begin{proof}[Diagrammatic contraction of $K_1$]
In the following figure, (a) represents $K_1$ which simplifies to (d) using our previous results.  
\begin{center}
 \includegraphics[width=15\xxxscale]{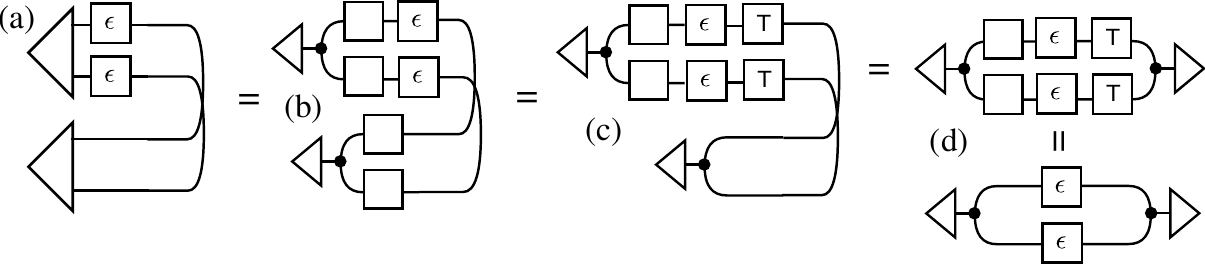}
\end{center} 
In (c) we have added $\top$ symbols in the unitary boxes, to denote transpose as found from sliding a box around a bent wire.  Figure (d) shows the $SU(2,\C)$ invariant of $\varepsilon$ as in Lemma \ref{lemma:epsilon-invariance}.   

To continue the analysis of this network, we will need to introduce a few more identities that we have proven in detail elsewhere \cite{CTNS, BB11}.  We need to then consider the stabilizer group of the \COPY-tensor.  This is an eight element group 
\be 
\{\I, X\otimes X\otimes X, -X\otimes Y\otimes Y, -Y\otimes X\otimes Y, -Y\otimes Y\otimes X, Z\otimes Z\otimes \I, Z\otimes \I\otimes  Z, \I\otimes  Z\otimes Z\}
\ee 
and then we consider the diagrammatic laws on a choice of generators 
\begin{center}
 \includegraphics[width=17\xxxscale]{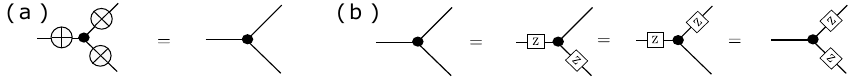}
\end{center} 
From these identities we then return to our equation for the determinant.  
\begin{center}
 \includegraphics[width=16\xxxscale]{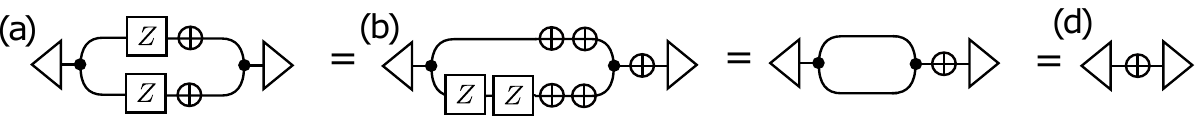}
\end{center} 
In (a) we express epsilon in using the Pauli algebra.  In (b) we apply the fact that the \COPY-tensor is stabilized
by the tensor product of three bit flip operators $X\otimes X\otimes
X$, as well as pair products of $Z$'s.  We combine this with the
fact that $X^2=\I = Z^2$ to show that the internal order-two tensors
all cancel out in (b).  This results in the network (d).  What remains
is one bit flip operator.  It can be contracted with an valence-one
triangular tensor, performing the map $\lambda_i \mapsto \lambda_j$
for $i\neq j$.  This leads to the inner product
\be 
(\lambda_0, \lambda_1).(\lambda_1, \lambda_0) = 2\lambda_1\lambda_0 = 2\det (\psi) 
\ee 
which matches precisely what we expect.  
\end{proof}
\end{fullpage}

 In Theorem \ref{theorem:SLOCC} we recalled the definition of SLOCC invariance, with respect to $GL$, the general linear group.  We then stated that there is a connection to the special linear group in Remark \ref{remark:SL}.  We make this connection precise in the following remark.  
 
\begin{remark}[Relating GL and SL in SLOCC]\label{remark:GL-vs-SL}
Let $G\in GL$.  We have that $\det(G)= k$, for some constant $k$ and
that $S\in SL$ has $\det(S)=1$.  Let $S = w G$, then letting 
\be 
w = \frac{1}{\det(G)^{\frac{1}{n}}}, ~ \Rightarrow ~ \det(S)=\det{wG}=1
\ee 
From this it follows that 
\be 
G \ket{\psi} = \det(G)^{\frac{1}{n}}S\ket{\psi}
\ee 
and so the transformations are related by constant factors (global scale factor).  Each polynomial SLOCC invariant is a homogeneous function with respect to this global scale factor.  The invariant changes only by a factor that does not depend on the state, but only on the transformation $G,S$.  
\end{remark}

\begin{fullpage}
\begin{example}[Calculated Values of the Invariants for Example States]
Here we calculate the values of each invariant for several common states of interest. The reduced \AND-state is found from recalling Equation \eqref{eqn:AND-map}.  
\begin{equation}
 \wedge_{ij}^{~~k}=(\ket{00}+\ket{01}+\ket{10})\bra{0} + \ket{11}\bra{1}
\end{equation}
We evaluate this Boolean tensor at $k=0$ as 
\begin{equation}
 \wedge_{ij}^{~~0}=\left(\ket{00}+\ket{01}+\ket{10})\bra{0} + \ket{11}\bra{1}\right)\ket{0} = \ket{00}+\ket{01}+\ket{10}
\end{equation}

\begin{center}
\begin{tabular}{r||c|c|c|c}
Type & State & $J_2$ & $K_1$ & $(\lambda_0, \lambda_1)$ \\\hline \hline
Product  & $\ket{0}\ket{0}$ & 1  & 0 & $(1,0)$\\\hline
Bell  & $\frac{1}{\sqrt 2}(\ket{00}+\ket{11})$ & $\frac{1}{4}$ & $\frac{1}{2}$ & $\frac{1}{\sqrt 2}(1,1)$ \\\hline
 \AND-type & $\frac{1}{\sqrt 3}(\ket{01}+\ket{10}+\ket{00})$  & $\frac{7}{9}$ & $\frac{1}{3}$ & $\lambda_\pm=\sqrt{\frac{1}{6} \left(3\pm\sqrt{5}\right)}$\\\hline 
General & $\sqrt{\lambda}\ket{00}+\sqrt{1-\lambda}\ket{11}$ & $2\lambda(\lambda-1)+1$ & $2\sqrt{\lambda(1-\lambda)} $ & $(\sqrt{\lambda}, \sqrt{1-\lambda})$ 
\end{tabular}
\end{center}
\end{example}
\end{fullpage}

\subsection*{Invariant composition law} 

Consider two initially non-interacting quantum wavefunctions $\ket{\psi}$ and $\ket{\phi}$.  If these states interact by a coupling unitary propagator $U$, the polynomial invariants of each state individually are not necessarily polynomial invariants, of the join system $U\ket{\phi}\otimes\ket{\psi}$.  

It is desirable to devise methods to compose invariants.  By this we mean that if we know the polynomial invariants of $\ket{\psi}$, $\ket{\phi}$ and $U$, together with a tensor network defining their composition, from this information alone, how can we determine the resulting invariants of the joint system?  In the present section, we solve this problem for the case of bipartite systems of equal dimension.  

\paragraph{Composition of the invariant $K_1$}
We will consider a quantum state $\psi\in \7H\otimes \7H$.  We will act on this state locally on each leg, with invertible linear maps.  A possible scenario is depicted as follows.  On the left, we have a state $\in \7H\otimes \7H$ factored using the diagrammatic SVD.  
We then act on this state with invertible maps $S_1$ and $S_2$.  
\begin{center}
 \includegraphics[width=2.5\xxxscale]{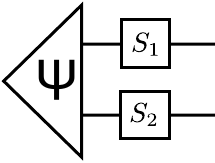}
\end{center}

The invariant $K_1$ is equal to twice the determinate of the coefficient matrix, found from bending either wire back, on the tensor representing the state.  The invariant $K_1$ is equal to zero iff $\det(\psi)=0$.  This is true iff $\psi$ is separable, thereby partitioning the SLOCC class into a disjoint union: entangled vs. not. We can calculate $K_1$ for not only the state, but also for the maps $S_1$ and $S_2$.  Given the quantities 
\be 
K_1(\psi),~K_1(S_1), ~ K_1(S_2)
\ee 
we can deduce from the standard properties of the determinant, the value of the invariant found from acting on $\psi$ with the map $S_1\otimes S_2$ as 
\be 
K_1(S_1\otimes S_2 \psi)= K_1(\psi)K_1(S_1)K_1(S_2)
\ee
Such a scenario extends to the case of qudits, and the value of the invariants evaluates to the product of singular values of all parties in the contracted network.

\section{Generating Invariants for General Density Operators}
A method to systematically generate polynomial invariants of density
operators acted on by the local unitary group exists and was shown to
me by Markus Grassl \cite{entinv2}.  Here we will cast this method into the Penrose tensor calculus.  The method generates a complete basis of monomials that are necessarily
invariants of the local unitary group acting on a density matrix.  A polynomial invariant 
could be expressed in this complete basis.  Although we can generate a complete basis of invariants, except in rare cases, finding the minimal collection of polynomial invariants is computationally difficult.  A utility of generating this basis, comes from the fact that the tensor networks we consider can be used to calculate any properties that are invariant under the action of the local unitary group.  This includes R\'{e}nyi entropies and the entanglement spectrum.  

\paragraph{The invariant basis.}  Consider a density 
operator $\rho\in \7H\rightarrow \7H$.  We can write 
this in a basis and consider acting on $\rho$ by a given a group $G$.  
A polynomial in 
\be 
f(x_0, x_1, x_2, x_3, \overline{x}_0, \overline{x}_1, \overline{x}_2, \overline{x}_3)=:f(\1x) 
\ee 
is invariant under $G$ iff $\forall g\in G$ 
\be 
f^g(\textbf x) = f(\textbf x) 
\ee 
where the notation $f^g(\textbf x)$ means 
\be 
f\small{\left(\left[g(\textbf x)^\top \right]^\top\right)} = f\left[(\textbf x)g^\top \right]
\ee 
so we let $g$ act on $\textbf x$ and then let $f$ act --- $f$ is constant under $G$ iff $f$ is invariant.  

The question we are considering is how to generate 
a complete monomial basis where each basis element is invariant under $G$.  This would then imply that any holomorphic function, can be written in this basis.  In fact, such a function would (i) necessarily be invariant under $G$ and (ii) be a function of the density operator $\rho$.    


First consider the network 
\begin{center}
 \includegraphics[width=2.5\xxxscale]{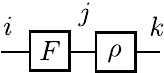}
\end{center}
representing the equation $F^i_{~j}\rho^j_{~k}$ then Tr$(F\rho)=\delta_{ik}F^i_{~j}\rho^j_{~k}$ that is 
\begin{center}
 \includegraphics[width=4\xxxscale]{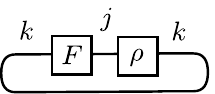}
\end{center}
$= F^i_{~j}\rho^j_{~i}$ If we were considering 
\be 
\rho = x_0\ket{0}\bra{0} + x_1\ket{0}\bra{1} + x_2 \ket{1}\bra{0} + x_3 \ket{1}\bra{1}
\ee 
then this trace would evaluate to 
\be 
F_{00}x_0 +F_{01}x_1 + F_{10}x_2 + F_{11}x_3
\ee 
and we can pick an valence-four $F$ such to expand the second order monomials as Tr$(F\rho^{\otimes 2}) = F^{il}_{jk}\rho^j_i\rho^l_k$ which translates to the tensor network as 
\begin{center}
 \includegraphics[width=4\xxxscale]{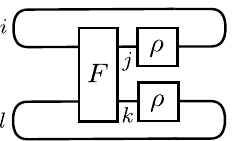}
\end{center}
This procedure carries on in this fashion, arriving at a monomial of order $n$ as 
\be 
\Tr(F\rho^{\otimes n}) = F^{il\cdots q}_{jk\cdots r}\rho^j_i\rho^l_k\cdots \rho^q_r
\ee 
The question then changes.  We have generated a complete basis in the coefficients of the density operator, with coefficients in $F^{il\cdots q}_{jk\cdots r}$.  We then will act on 
the density operators by elements $g\in G$ 
\be 
\rho' = g \rho g^{-1}
\ee 
and search for $F^{il\cdots q}_{jk\cdots r}$ that satisfy 
\begin{equation}\label{eqn:density-invariant} 
\Tr(F g^{\otimes n} \rho^{\otimes n}g^{\otimes n -1}) = \Tr(F \rho^{\otimes n})
\end{equation}  
Equation \eqref{eqn:density-invariant} is satisfied by monomial invariants.  From this it follows that 
\begin{equation} 
\Tr(F g^{\otimes n} \rho^{\otimes n}g^{-1\otimes n }) = \Tr(\rho^{-1\otimes n }F g^{\otimes n} g^{\otimes n})
\end{equation}  
$\forall \rho$ and so 
\be 
[F, g^{\otimes n}]=0, ~\forall g\in G
\ee 
We are then faced with finding matrices $F$ that commute with $g^{\otimes n}$ for each $g\in G$. This problem was solved in a different setting around 1937, and the solution is roughly stated in the following theorem.  

\begin{theorem}[R. Brauer, 1937]
 The algebra of matrices that commute with each $U^{\otimes n}$ for $U \in  U(n)$ is
generated by a certain representation of the permutation group.
\end{theorem}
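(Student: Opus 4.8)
The plan is to prove this as the commutant half of Schur--Weyl duality, carried out inside $\mathrm{End}\!\left((\C^d)^{\otimes n}\right)\cong M_d(\C)^{\otimes n}$, where $\C^d$ is the single-party Hilbert space. Set $A=\Span\{U^{\otimes n}:U\in U(d)\}$, which is already an algebra closed under adjoints since $(U^{\otimes n})(V^{\otimes n})=(UV)^{\otimes n}$ and $(U^{\otimes n})^\dagger=(U^{-1})^{\otimes n}$; set $B=\Span\{P_\sigma:\sigma\in S_n\}$, the image of the group algebra of $S_n$ acting by permutation of tensor factors, $P_\sigma(v_1\otimes\cdots\otimes v_n)=v_{\sigma^{-1}(1)}\otimes\cdots\otimes v_{\sigma^{-1}(n)}$. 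The theorem is precisely the claim $A'=B$, so I would prove $A=B'$ and then invoke the double commutant theorem.

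First I would upgrade $A$ to a symmetric power. A polarization identity shows $\{M^{\otimes n}:M\in M_d(\C)\}$ spans $\mathrm{Sym}^n\!\left(M_d(\C)\right)$, the subspace of $M_d(\C)^{\otimes n}$ fixed by all factor permutations: one has $\sum_{\emptyset\neq T\subseteq\{1,\dots,n\}}(-1)^{n-|T|}\big(\sum_{i\in T}M_i\big)^{\otimes n}=n!\,M_1\odot\cdots\odot M_n$, and such symmetrized products span the symmetric power by definition. Next I would check that restricting $M$ to unitaries does not shrink the span: for the subspace $W=\Span\{U^{\otimes n}:U\in U(d)\}$, the set $\{M\in M_d(\C):M^{\otimes n}\in W\}$ is Zariski-closed (it is cut out by the vanishing of the finitely many linear functionals in $W^\perp$ composed with the polynomial map $M\mapsto M^{\otimes n}$) and it contains $U(d)$, which is Zariski-dense in $M_d(\C)$ by the unitarian trick. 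Hence $A=\mathrm{Sym}^n\!\left(M_d(\C)\right)$.

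Second, I would identify $\mathrm{Sym}^n\!\left(M_d(\C)\right)$ with the commutant of $B$. For a simple tensor $X_1\otimes\cdots\otimes X_n\in M_d(\C)^{\otimes n}=\mathrm{End}\!\left((\C^d)^{\otimes n}\right)$ one computes $P_\sigma(X_1\otimes\cdots\otimes X_n)P_\sigma^{-1}=X_{\sigma^{-1}(1)}\otimes\cdots\otimes X_{\sigma^{-1}(n)}$, so an element of $M_d(\C)^{\otimes n}$ commutes with every $P_\sigma$ exactly when it is invariant under permuting factors, i.e.\ lies in $\mathrm{Sym}^n\!\left(M_d(\C)\right)$; thus $A=B'$. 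Finally, since $B$ is a unital, $\dagger$-closed subalgebra of the full matrix algebra $\mathrm{End}\!\left((\C^d)^{\otimes n}\right)$ (because $P_\sigma^\dagger=P_{\sigma^{-1}}$), the finite-dimensional double commutant theorem gives $B=B''=A'$. This says that the algebra of matrices commuting with all $U^{\otimes n}$, $U\in U(d)$, is generated by the permutation representation $\sigma\mapsto P_\sigma$ of $S_n$, which is the assertion.

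The step I expect to be the main obstacle is the passage from unitaries to all of $M_d(\C)$ in the second paragraph — the Zariski-density (unitarian trick) argument. Everything else is routine: the polarization identity is a standard combinatorial computation, the commutant identification is a one-line index chase, and the bicommutant theorem is classical. One could instead sidestep the density argument with a dimension count, comparing $\dim A$ with $\dim\mathrm{Sym}^n\!\left(M_d(\C)\right)$ via integration over $U(d)$, but the Zariski route is the most transparent and is the only part that genuinely requires care.
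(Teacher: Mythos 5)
Your proof is correct, and there is nothing in the paper to compare it against: the paper states this result without proof, quoting it as a classical fact (attributed to Brauer 1937, though the commutant statement for $U^{\otimes n}$ is really the Schur--Weyl commutant theorem; Brauer's 1937 work concerns the orthogonal and symplectic analogues, which lead to the Brauer algebra rather than $\C[S_n]$). Your argument is the standard self-contained route and each step checks out: the polarization identity does show that $\{M^{\otimes n}:M\in M_d(\C)\}$ spans the permutation-invariant subspace of $M_d(\C)^{\otimes n}$; the passage from arbitrary $M$ to unitary $U$ is correctly justified, since $\{M:M^{\otimes n}\in W\}$ is cut out by polynomials and a holomorphic polynomial vanishing on $U(d)$ vanishes on $GL_d(\C)$ and hence on its Euclidean closure $M_d(\C)$ (the unitarian trick you invoke); the index computation $P_\sigma(X_1\otimes\cdots\otimes X_n)P_\sigma^{-1}=X_{\sigma^{-1}(1)}\otimes\cdots\otimes X_{\sigma^{-1}(n)}$ identifies that span with $B'$; and since $B=\Span\{P_\sigma\}$ is unital and closed under $\dagger$ (as $P_\sigma^\dagger=P_{\sigma^{-1}}$), the finite-dimensional bicommutant theorem yields $A'=B''=B$, which is the assertion. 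One cosmetic remark: the paper's statement writes $U\in U(n)$ acting on an $n$-fold tensor power, conflating the local dimension with the number of factors; your formulation with $U(d)$ acting on $(\C^d)^{\otimes n}$ is the generality actually needed in the paper's application, where $d$ is the local dimension (e.g.\ $d=2$ for qubit density operators) and is unrelated to the number of copies $n$.
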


The permutation group has a well known and evident diagrammatic form. In (a) we show the elements of the permutation group on one system.  In (b) we show the elements on two systems.  (c) Illustrates the permutation group on three elements, $S_3$ of order $6$.\sn{Of possible related interest, is the diagrammatic presentation of the Temperly-Lieb algebra \cite{TL07}.} 
\begin{center}
 \includegraphics[width=14\xxxscale]{symmetric-group}
\end{center}
We then carry on to evaluate 
\be 
\Tr(S_k g^{\otimes n} \rho^{\otimes n}\rho^{-1\otimes n})
\ee 
to form invariant monomials.  We consider the case where we wish to generate an order one invariant 
in (a) below.  Order two invariants are found below in (b).  We note that the first invariant is simply the square of the invariant 
found in (a).  The second however, is independent.  An example of an invariant of order three is given in (c). The situation continues.  The basis is finitely generated form the Cayley-Hamilton Theorem.  We have that every density operator satisfies its own characteristic polynomial, giving an $n$th order polynomial equation in $\rho$ that vanishes.  
\begin{center}
 \includegraphics[width=10\xxxscale]{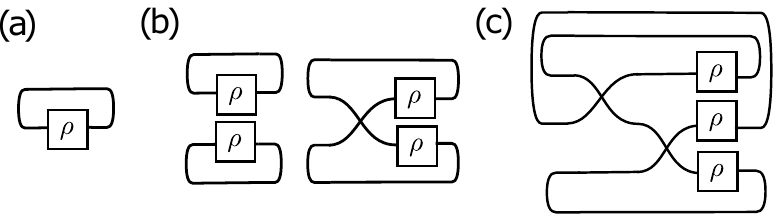}
\end{center}

\paragraph{The general form.}  We consider the monomial invariant generated by tracing over the contraction with the SWAP-operator in (a) below.  Here we have acted on $\rho$ with some unitary operation $U$, and give a diagrammatic proof that the network contracts to a quantity that is invariant, under unitary transformations of $\rho$.  

To see this, we slide $U, U^\dagger$ around the bends, resulting in (b).  The diagram reduces to (c), showing that the invariant evaluates to $\Tr(\rho^2)$.  
\begin{center}
 \includegraphics[width=14\xxxscale]{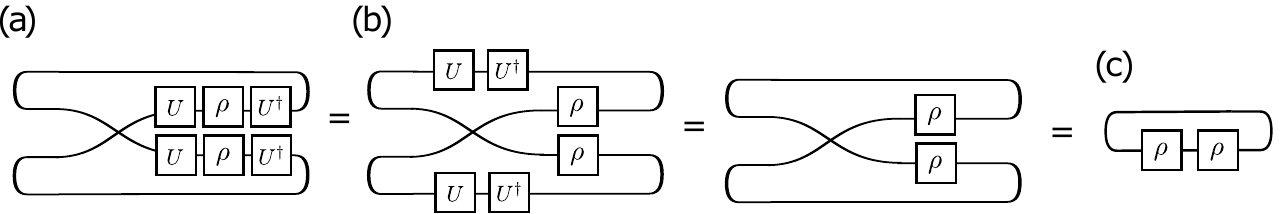}
\end{center}

\begin{remark}[Generating a complete basis of invariants by tensor contraction]
To generate a complete basis, we have to consider operations like SWAP that permute all elements to different elements. These are called complete permutations.   
\end{remark}

\begin{lemma}[Diagrammatic proof of the basis]
 Every complete permutation on $n$ wires contracted with $n$ tensor copies of $\rho$ and traced over, contracts to the elementary diagram giving the trace of $\rho^n$.  
\end{lemma}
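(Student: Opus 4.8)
The plan is to reduce the statement to the standard fact that the trace of $\rho^{\otimes n}$ against a permutation operator factorises over the cycles of the permutation, and then specialise to the single-cycle case. First I would fix notation: for $\sigma\in S_n$ let $W_\sigma$ denote the operator on $\2 H^{\otimes n}$ acting on the computational basis by $W_\sigma\ket{i_1,\dots,i_n}=\ket{i_{\sigma^{-1}(1)},\dots,i_{\sigma^{-1}(n)}}$. Diagrammatically $W_\sigma$ is the tensor built from $\delta$'s that joins the $k$-th input wire to the $\sigma(k)$-th output wire, i.e.\ the $n$-strand crossing picture used for the symmetric group earlier in the text (the \swap{} operator being the case $n=2$). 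A \emph{complete permutation} is read as one whose cycle decomposition is a single $n$-cycle.

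Next I would compute $\Tr(\rho^{\otimes n}W_\sigma)$ in components. Expanding the trace over the computational basis and inserting the explicit action of $W_\sigma$ gives
\[
\Tr(\rho^{\otimes n}W_\sigma)=\sum_{i_1,\dots,i_n}\ \prod_{k=1}^{n}\rho_{\,i_k\,i_{\sigma^{-1}(k)}}=\prod_{c\in\mathrm{cyc}(\sigma)}\Tr\!\big(\rho^{|c|}\big),
\]
where the last equality follows by grouping the summation indices according to the orbits of $\sigma$: along a cycle $c=(k_1\,k_2\,\cdots\,k_\ell)$ the factors chain up as $\rho_{i_{k_1}i_{k_\ell}}\rho_{i_{k_2}i_{k_1}}\cdots\rho_{i_{k_\ell}i_{k_{\ell-1}}}$, and summing the internal indices yields $\Tr(\rho^{\ell})$, distinct cycles contributing independent factors. (The case $n=2$ is exactly the purity identity $\Tr\{(\rho\otimes\rho)F\}=\Tr(\rho^2)$ from the Problems section, which can serve as the base case of an induction if one prefers to peel off one box at a time rather than do the general index bookkeeping at once.) For a complete permutation there is a single cycle of length $n$, so the right-hand side collapses to $\Tr(\rho^n)$.

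It then remains to identify this with the ``elementary diagram''. Here I would give the purely diagrammatic rendering: draw $W_\sigma$ as the crossing diagram, thread one $\rho$-box onto each of the $n$ strands, and close the network by the graphical trace (connecting each output leg back to the corresponding input leg, as in the Graphical Trace remark). Using rubber-sheet topology to slide the boxes along the wires and straighten the crossings, following the single $n$-cycle visits all $n$ boxes exactly once before returning to the start, so the network deforms into one closed loop carrying $\rho\cdot\rho\cdots\rho$ with $n$ factors; by the graphical trace rule this loop is precisely the elementary diagram for $\Tr(\rho^n)$.

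The main obstacle is not a hard computation but keeping conventions straight: one must be careful whether $W_\sigma$ implements $\sigma$ or $\sigma^{-1}$ and whether the boxes get composed in the order $\sigma$ or its inverse — but since all $n$-cycles are conjugate in $S_n$ and $\Tr(\rho^n)$ is cyclically invariant, the final value is independent of these choices, which is exactly why the lemma is so clean. A secondary point worth one sentence is that ``complete permutation'' must genuinely mean a single $n$-cycle and not merely a derangement: a derangement with more than one cycle gives a product $\prod_c\Tr(\rho^{|c|})$ of several disconnected loops rather than the single elementary diagram, so the single-cycle hypothesis is precisely what is needed.
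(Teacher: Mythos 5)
Your proof is correct, and its diagrammatic half is exactly the paper's argument: the paper offers no written computation for this lemma, only the figures in which the $\rho$-boxes are slid along the closed wires of the complete permutation until the network collapses into a single loop, i.e.\ the elementary diagram for $\Tr(\rho^n)$. What you add is the explicit index computation $\Tr(\rho^{\otimes n}W_\sigma)=\prod_{c\in\mathrm{cyc}(\sigma)}\Tr\bigl(\rho^{|c|}\bigr)$, which the paper never writes down. That extra step buys two things: it proves the statement for an arbitrary permutation at once, and it makes precise why the hypothesis must be a single $n$-cycle --- the paper's informal gloss of a ``complete permutation'' as one that permutes all elements to different elements describes a derangement, and a derangement with several cycles contracts to a disconnected product $\prod_c\Tr(\rho^{|c|})$ rather than to the single loop $\Tr(\rho^n)$. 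So your cycle-factorization both subsumes the paper's purely pictorial deformation and flags a sharpening of the hypothesis that the paper leaves implicit in its figure; the paper's version is shorter but relies entirely on the reader extracting these points from the diagram.
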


The unitary invariants of any contraction with a complete permutation operator is readily shown diagrammatically.  Any complete permutation is can be shown to contract to (a) below.  It then possible to understand a complete basis of invariants by evaluating the quantity  as shown in (c) below.  
\begin{center}
 \includegraphics[width=14\xxxscale]{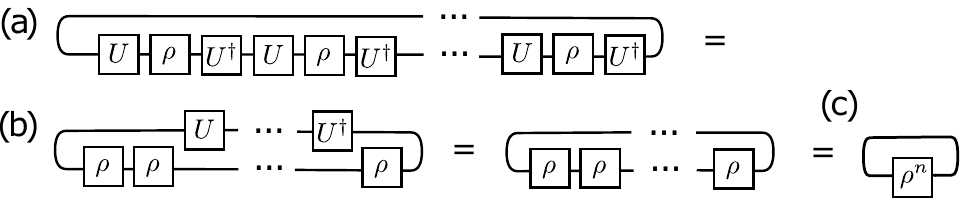}
\end{center}
Note that in (b), we have pulled both of the unitaries around bends.  We should have written $U^\top$ and $U^{\dagger\top}$ but we have omitted the transpose symbol $\top$ for ease of producing the figures.  These unitary maps still contract to identity as is quickly verified.

\begin{example}[Trace invariants for qubit density operators]
We will consider generating the fundamental invariants for a qubit density operator.  
The first invariant follows from tracing over the identity operator (the only permutation group element on a single system) 
\be
I_1 =\Tr(\rho)
\ee 
is the constant norm.  The second is 
\be
I_2 = \Tr(\rho^2)
\ee
which turns out to be precisely what is known in other areas of the literature as the purity.  In terms of the singular values we have 
\be
I_1=\lambda^+ + \lambda^- = 1
\ee
which is the norm and 
\be
I_2=\lambda_+^2 + \lambda_-^2 
\ee
For the purpose of generating polynomial invariants, it is enough to stop with $I_1$ and $I_2$.  In fact, from the Cayley-Hamilton Theorem, we have that there is a polynomial in second order in $\rho$, that vanishes identically.  In other words, constants $a,b,c$ exist such that 
\be 
a \rho^2 + b\rho + c \I = 0 
\ee 
so higher powers of $\rho$ can be expressed in terms of $I_1$ and $I_2$ through this relation.  These invariants are indeed algebraically independent and complete, meaning any other polynomial invariant can be expressed in $\{\R, +, \cdot, I_1, I_2\}$.  For instance, 
\be
\text{Det}(\rho) = \frac{1}{2}\left(\Tr(\rho)^2 - \Tr(\rho^2) \right)= \frac{1}{2}\left(I_1 - I_2\right)= 1 - (c^2+a^2+b^2)= \lambda_0\lambda_1
\ee
\end{example}

\subsection*{Topological equivalence of pure and mixed state invariants}


We have considered pure state invariants as well as mixed state invariants, under 
the action of the local unitary group.  We will consider the relation between these two approaches, when cast into our diagrammatic framework.  \marginnote{{\it ``Any effect, constant, theorem or equation named after Professor X was first discovered by Professor Y, for some value of Y not equal to X.''} ---
John C.~Baez}

Let us return to the expression for $J_2$ 
\be 
J_2 = \sum \alpha^{i j}\alpha^{k l}\overline{\alpha}_{i l}\overline{\alpha}_{kj}
\ee 
We will then write 
\begin{equation}\label{eqn:J2-vs-B} 
\alpha^{i j}\overline{\alpha}_{il} =: B^j_{~l}
\end{equation} 
which is given graphically below.  
\begin{center}
 \includegraphics[width=6\xxxscale]{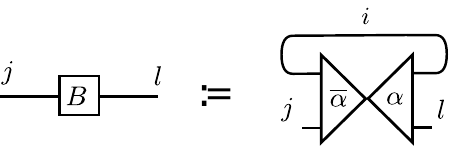}
\end{center} 
We also fix $B^l_{~j} := \alpha^{k l}\overline{\alpha}_{k j}$.  It now follows that 
the expression for $J_2$ becomes 
\be 
J_2 = \sum  B^l_{~j}B^j_{~l} = \Tr(B^2) 
\ee 
In (b) we see that this is found graphically when acting on two reduced states with a SWAP operation, and then tracing over the result.  
\begin{center}
 \includegraphics[width=12\xxxscale]{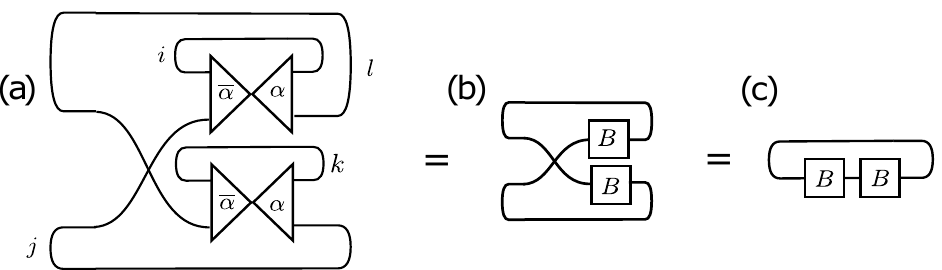}
\end{center}

We will go on to show that this invariant is in fact identical to the pure state invariant.  
From application of the graphical identity (Penrose's graphical representation of a density state is on the right) 
\begin{center}
 \includegraphics[width=4\xxxscale]{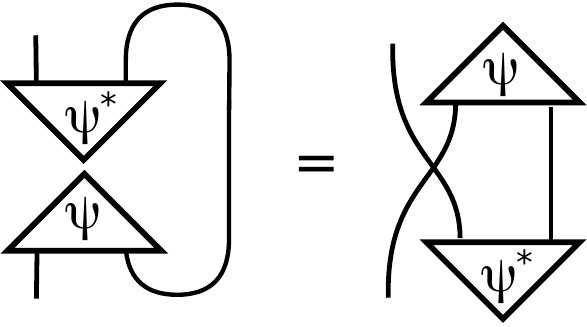}
\end{center} 
it now follows by applying this identity to (a) we arrive at the expression for $J_2$ we considered in Section \ref{sec:qubit-invariants}.  
\begin{center}
 \includegraphics[width=10\xxxscale]{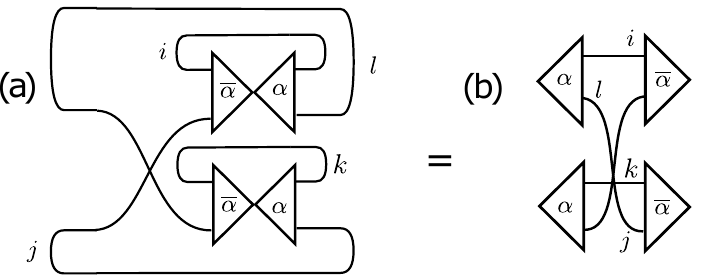}
\end{center}

\section{Some Symmetries of \texorpdfstring{$\rho$}{}} 
We have expressed two Theorems (\ref{theorem:injection} and \ref{theorem:symmetry-class}) which relate density operators to quantum states precisely.  This might be thought of as a type of symmetry.  The other type of symmetry we have considered are invariant polynomials in the coefficients of a density operator.  These symmetries are general, as they are found for arbitrary density operators.    In the present section, we will remind the reader of other types of symmetry that are sometimes considered in quantum theory.  Symmetry plays a key role in the lectures that follow, so here is just an introduction.  In particular, symmetry will be considered in detail in Lecture III.  To begin, let us recall the typical symmetry that is often considered in quantum physics.  

\begin{example}[Group Symmetry of $\rho$]
 We will consider a general density operator $\rho$ and look for $V\in U(d)$ such that 
 \be 
 \rho = V\rho V^\dagger
 \ee 
 this implies that $[\rho, V]=0$ and we arrive at a basis for $G$ by noting the unitary operators that commute with $\rho$.  
 That is, $\{\ket{\lambda_i}\}_i$ such that $\rho = \sum_i p_i\ket{\lambda_i}\bra{\lambda_i}$.  It then follows from $VV^\dagger = \I$ that 
 every $V\in G$ can be written as 
 \be 
 V= \sum_i e^{i\theta_i}\ket{\lambda_i}\bra{\lambda_i}
 \ee 
\end{example}

In quantum computer science another symmetry is often considered.  This is the symmetry found by so called, \emph{stabilizer states}.  
We will discuss this in detail in Lecture III and also cover a bit of it in Lecture IV.  A quantum gate~$U$ which is a tensor product of Pauli operators is said to stabilize the state~$\ket{\psi}$ iff $U \ket{\psi} =
\ket{\psi}$. All the gates stabilizing a given state form a group.  

\begin{example}[Pauli stabilizers]
Here are some standard examples of states stabilized by the Pauli-group
\begin{itemize}\addtolength{\itemsep}{-0.25\baselineskip}
 \item[(i)] \textbf{Pauli-X}: $\sigma^x$ stabilizes $\ket{+}=\ket{0}+\ket{1}$
and $-\sigma^x$ stabilizes $\ket{-}=\ket{0}-\ket{1}$
 \item[(ii)] \textbf{Pauli-Y}: $\sigma^y$ stabilizes
$\ket{y_+}=\ket{0}+i\ket{1}$ and $-\sigma^y$ stabilizes
$\ket{y_-}=\ket{0}-i\ket{1}$
 \item[(iii)] \textbf{Pauli-Z}: $\sigma^z$ stabilizes $\ket{0}$ and $-\sigma^z$
stabilizes $\ket{1}$
\end{itemize}
\end{example}

\begin{remark}[Gottesman-Knill Theorem] A
graphical rewrite proof (by bounding the number of rewrites) of the
Gottesman-Knill theorem follows by considering the action of the black and
plus dots on $\sigma^z$ and $\sigma^x$.  We will set proving this as an exercise in a later Lecture.  
\end{remark}

In addition to these two, one also finds the class of so called, \emph{symmetric states} (or operators for that matter). These are characterized as follows.  

\begin{example}[Invariance under $S_k$]
 The other form of symmetry that is considered, is invariance under the symmetric group.  Diagrammatically this amounts to braiding wires (where here the order of the braids is not relevant).  
\end{example}

We can relate symmetry in density operators and symmetry in states as follows.  
\subsubsection*{Further symmetries of $\rho$.}  Now using these ideas, we can consider other symmetries of $\rho$.  We want to find solutions of 
\be 
\rho = V\rho V^\dagger 
\ee 
We will then consider 
\be 
\rho_\varepsilon' = a\I+ b\varepsilon 
\ee 
We then note that $b\in \7 R$ and that $\varepsilon = i Y$.  So we arrive at the density operator 
\be 
\rho_\varepsilon = a\I+ bY
\ee 
and will search for solutions to the equation 
\be 
\rho_\varepsilon = V\rho_\varepsilon V^\dagger = a V V^\dagger - b i V\varepsilon V^\dagger
\ee 
and note that under wire duality, 
\be
\varepsilon = S\varepsilon S^\top 
\ee 
(see Lemma \ref{lemma:epsilon-invariance}) and so $\rho_\varepsilon$ is invariant for 
\be 
V\in SU(n, \R)
\ee 
as those matrices satisfy $SS^\top=\I$.  We can go further however.  In fact, the following is easy to verify. 

\begin{lemma}[Isomorphism between projectors and self-adjoint unitary maps]
It is readily verified that 
\be 
 U = \I - 2P
 \ee 
 is unitary with $P^2=P$.  The equation is then solved for $P$ in terms of $U$.  
\end{lemma}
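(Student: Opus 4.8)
The plan is to exhibit the two maps explicitly and verify directly that they are mutually inverse. First I would check the forward direction: given an orthogonal projector $P$, i.e.\ $P^2 = P$ and $P = P^\dagger$, put $U = \I - 2P$. Then $U^\dagger = \I - 2P^\dagger = U$, so $U$ is self-adjoint, and
\be
U^\dagger U = U^2 = \I - 4P + 4P^2 = \I - 4P + 4P = \I,
\ee
so $U$ is unitary. Hence $U$ is a self-adjoint unitary (an involution, $U^2 = \I$).

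Next I would solve the relation $U = \I - 2P$ for $P$, obtaining the candidate inverse map $P = \tfrac{1}{2}(\I - U)$, and check the converse: given any self-adjoint unitary $U$ (so $U^\dagger = U$ and $U^2 = \I$), the operator $P := \tfrac12(\I - U)$ satisfies $P^\dagger = \tfrac12(\I - U^\dagger) = P$ and
\be
P^2 = \tfrac14(\I - 2U + U^2) = \tfrac14(2\I - 2U) = \tfrac12(\I - U) = P,
\ee
so $P$ is an orthogonal projector.

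Finally I would confirm that the two assignments compose to the identity in both orders, $P \mapsto \I - 2P \mapsto \tfrac12(\I - (\I - 2P)) = P$ and $U \mapsto \tfrac12(\I - U) \mapsto \I - (\I - U) = U$, which establishes the claimed bijection. Optionally I would remark that the correspondence is the natural one at the level of spectral decompositions: $P$ projects onto the $(-1)$-eigenspace of $U$ and $\I - P$ onto the $(+1)$-eigenspace, which is exactly the statement anticipated earlier (e.g.\ for the Hadamard gate, $H = \I - 2P$ with the appropriate rank-one $P$, and for the $U$ built from a Boolean density operator via $U = \I - 2\rho_{\7B}$).

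As for difficulty, there is essentially no obstacle: every step is a one-line algebraic identity using only $P^2 = P$, $P^\dagger = P$, $U^2 = \I$ and $U^\dagger = U$. The single point that needs a modicum of care is bookkeeping about which classes of operators are intended — orthogonal (self-adjoint) projectors on one side, self-adjoint unitaries on the other — since the bare equation $U = \I - 2P$ with only $P^2 = P$ imposed would also admit non-self-adjoint idempotents, which need not give unitaries. Making that hypothesis explicit is the whole content of the statement.
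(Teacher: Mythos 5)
Your verification is correct and follows the same route the paper intends, which simply declares the algebra ``readily verified'': you check $U^2=\I$ and $U^\dagger=U$ from $P^2=P$, $P^\dagger=P$, invert to $P=\tfrac12(\I-U)$, and confirm the two maps are mutually inverse. Your added observation that the hypothesis must be an \emph{orthogonal} (self-adjoint) idempotent—since a non-Hermitian idempotent yields an involution that need not be unitary—is a sensible sharpening of the paper's looser wording ``unitary with $P^2=P$'', not a deviation from its argument.
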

We then see that each projector gives rise to a symmetry of $\rho_\varepsilon$, as does the real orthogonal subgroup of the special unitary group.

\section{Polynomial Invariants of Symmetric States}

A symmetric multipartite state or fully symmetric tensor carries the trivial representation of the symmetric group and hence is invariant under any permutation of the parties.  Examples are the three qubit
\GHZ-state $\ket{\text{\sf GHZ}} = \ket{000} +
\ket{111}$ and the three qubit W-state $\ket{\text{\sf W}} =
\ket{001} + \ket{010} + \ket{100}$. 

\begin{remark}[Symmetric basis]
A general $n$ qubit quantum system has the $2^n$ 
orthonormal basis vectors $\{ | 00 \dots 00 \rangle , | 00 \dots 01 \rangle , \dots , | 11 \dots 11 \rangle \}$. 
For the subspace of \emph{symmetric} $n$ qubit states, an orthonormal basis is given by 
the $n+1$ \emph{symmetric} basis states $\{ | S_{0} \rangle , | S_{1} \rangle , \dots , | S_{n} \rangle \}$. 
They are defined as
\begin{equation}
 \ket{S_{k}} = {\binom{n}{k}}^{- \frac{1}{2}} \sum_{\text{perm}}
\; \underbrace{ | 0 \rangle | 0 \rangle \cdots | 0 \rangle }_{n-k}
\underbrace{ | 1 \rangle | 1 \rangle \cdots | 1 \rangle }_{k}
\end{equation}
and called Dicke states.  We can therefore write $\ket{\text{\sf W}} = \ket{{S}_{1}}$ and $\ket{\text{\sf GHZ}} =
\frac{1}{\sqrt{2}} ( \ket{{S}_{0}} + \ket{{S}_{3}})$.
\end{remark}

\begin{myexercise}[Boolean symmetric states]
 Provide the number of symmetric Boolean quantum states.  
\end{myexercise}

\paragraph{A polynomial basis.}
Symmetric qubit states and symmetric polynomials in two variables are isomorphic as vector spaces.  The known mapping is accomplished as follows.  
\begin{equation}
 \ket{0} \leftrightarrow x, ~~~~~~~~~~ \ket{1}\leftrightarrow  y
\end{equation}

\begin{remark}[Creation and annihilation operators]
 If we consider a symmetric polynomial basis in $x^n$, we can define two operators
 \be 
 a^+ : = x
 \ee 
 \be 
 a^- := \frac{\partial}{\partial x} := \partial_x
 \ee 
 From this we can define an inner product on our space.  For instance, we 
 can calculate the norm of $x^n$ as 
 \be 
 \braket{x^n}{x^n}= \braket{x^n}{a^+ x^{n-1}} = \braket{a^- x^n}{x^{n-1}} = n! 
 \ee 
 This concept is readily extended to the case of binary forms in $x$ and $y$ when considering creation and annihilation operators for each variable $x$ and $y$ separately.  
\end{remark}

\begin{example}[Polynomials for the \GHZ-class]
The $n$-partite GHZ state $\ket{0}^{\otimes n}+\ket{1}^{\otimes n}$ 
has homogeneous polynomial $f(x,y)=x^n+y^n$.
\end{example}

\begin{example}[Polynomials corresponding to common quantum states]
The three-qubit W state
$\ket{W_3}=\ket{001}+\ket{010}+\ket{100}$ is isomorphic to the
monomial $3x^2 y$. Furthermore, two appropriately braided copies of $\ket{W_3}$ read 
$\ket{W_{3,3}}=(\ket{003}+\ket{030}+\ket{300})+(\ket{012}+\ket{021}+\ket{102}+\ket{120}+\ket{201}+\ket{210})$ which is given diagrammatically as 
\begin{center}
\includegraphics[width=6\xxxscale]{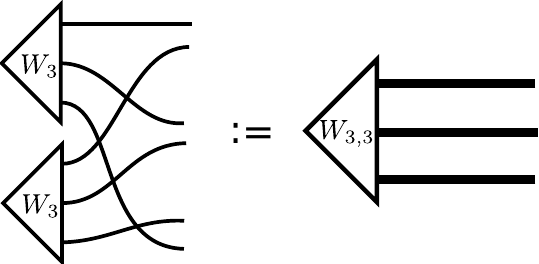}
\end{center}
and becomes a sum of two Dicke states having corresponding homogeneous
polynomials $x_0^2 x_3$ and $x_0x_1x_2$. It is isomorphic to the
homogeneous polynomial $3x_0^2 x_3+6x_0x_1x_2$. 
\end{example}

It turns out that the invariant theory of binary forms is well studied in the literature.  An invariant of a binary form is a polynomial in the coefficients of a binary form in two variables $x$ and $y$ that remains invariant under invertible transformations of the variables $x$ and $y$.

\section*{Invariants of forms} 

\begin{example}
 Consider 
 \be 
 ax^2 + 2bx + c = 0
 \ee 
 where the discriminant is 
 \be 
 \Delta = b^2 - ac
 \ee 
 If $\Delta = 0$ we have a double root, and if $\Delta < 0$ the roots are complex conjugate.  Now if we consider the affine change of variables, where $\alpha \neq 0$ 
 \be 
 x' = \alpha x + \beta 
 \ee 
Upon calculating the discriminant for the polynomial in the new variables we arrive at 
\be 
\Delta' = \frac{1}{\alpha^2} \Delta
\ee 
which is the same as $\Delta$ in the original polynomial, up to a multiplicative factor that depends only on the transformation.  The properties of the roots remain unchanged under such transformations.  
\end{example}

\section{From Binary Forms to Qubit States} 

We will now consider our subject of interest.  Forms in two variables, called quadratic forms, or binary forms.  
For example, the general degree two binary form is given as  
\be 
Q(x,y) = a_2 x^2 + 2 a_1 xy + a_0 y^2
\ee 
which gives rise to the quantum state 
\be 
\psi_Q = a_2\ket{00} + a_1 (\ket{01}+\ket{10}) + a_0 \ket{11}
\ee 

\begin{definition}[General linear invertible transformation]
 A general linear invertible transformation of two variables takes the form 
 \be 
 x' = \alpha x + \beta y
 \ee 
 \be 
 y' = \gamma x + \delta y
 \ee 
 where 
 \be 
 \alpha \delta - \beta \gamma \neq 0
 \ee 
 When a variable is transformed, we get an induced transformation of the coefficients.  It is important to note that this transformation does not change the degree of a given polynomial.  
\end{definition}

\begin{definition}[Invariant of a binary form \cite{oliver}]
 An invariant of a binary form $Q(x,y)$ is a function $I(a) = I(a_0, ..., a_1)$ depending on the coefficients 
 $a= (a_0, ..., a_n)$ of the form, which, up to a determinantal factor, does not change under the general linear 
 transformation 
 \be 
 I(a) = (\alpha \delta - \beta \gamma)^k I(a')
 \ee 
 where $a' = (a_0', ..., a_n')$ are the transformed coefficients.  
\end{definition}

An integer $k$ is called the weight of the invariant and we are concerned with the case that $I(a)$ is a polynomial.  
It is then called, a polynomial invariant.  

\begin{remark}
 Classical invariant theory has provided us elegant solutions to the following questions \cite{hilbert, oliver}.  
 \begin{itemize}
  \item[(i)] How many independent polynomial invariants are there of a given degree?
  \item[(ii)] What do they tell us about the form? 
  \item[(iii)] Can we find a basis for the polynomial invariants? 
 \end{itemize}
\end{remark}

We are missing an important piece to our puzzle.  These are the covariants, which are polynomials.  

\begin{definition}[Covariant \cite{oliver}]
 A covariant of weight $k$ of a binary from $Q$ of degree $n$ is a function 
 \be 
 J(a, x, y)
 \ee 
 depending both on the coefficients $a_i$ and on the independent variables $(x,y)$ which up to a determinantal factor, in unchanged under general linear transformations 
 \be 
 J(a, x, y) =  (\alpha \delta - \beta \gamma)^k J'(a', x', y')
 \ee 
\end{definition}


\section*{Three qubit example} 
The cubic binary form is given as 
\be 
Q(x,y) = a_3 x^3 + 3 a_2 x^2y + 3 a_1 xy^2 + a_0 y^3
\ee 
It is known that there is just one fundamental invariant, the discriminant of the cubic.
\be 
\Delta = a_0^2 a_3^2 - 6 a_0 a_1 a_2 a_3 + 4 a_0 a_2^3 - 3a_1^2 a_2^2 + 4 a_1^3 a_3
\ee 

\begin{remark}
 $\Delta = 0$ iff $Q$ has a double or triple root.  
\end{remark}

\begin{myexercise}
 Consider the discriminant of the cubic.
\be 
\Delta = a_0^2 a_3^2 - 6 a_0 a_1 a_2 a_3 + 4 a_0 a_2^3 - 3a_1^2 a_2^2 + 4 a_1^3 a_3
\ee 
and find maximum and minimum values when $\forall i, a_i\in \{0,1\}$. 
\end{myexercise}

The obvious covariant is the form itself.  Another important covariant is the Hessian 
\be 
H = Q_{xx}Q_{yy} - Q_{xy}^2
\ee 
where we use subscripts to denote partial derivatives.  
For the binary cubic, the Hessian becomes 
\be 
\frac{1}{36} H = (a_1a_3 - a_2)x^2 + (a_0 a_3 - a_1 a_2)xy + (a_0a_2 - a_1^2)y^2 
\ee 

\begin{theorem}[vanishing Hessian \cite{oliver}]
 A binary form $Q(x,y)$ has vanishing Hessian, $H=0$ iff $Q(x,y)= (cx+dy)^n$, that is iff $Q(x,y)$ is the nth power of a linear form.  
\end{theorem}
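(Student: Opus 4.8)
The plan is to split the equivalence into its two directions, the nontrivial one being ``$H\equiv 0 \Rightarrow Q$ is a power of a linear form.'' For the trivial direction, if $Q=(cx+dy)^n$ then a one-line computation gives $Q_{xx}=n(n-1)c^2(cx+dy)^{n-2}$, $Q_{yy}=n(n-1)d^2(cx+dy)^{n-2}$ and $Q_{xy}=n(n-1)cd(cx+dy)^{n-2}$, so $H=Q_{xx}Q_{yy}-Q_{xy}^2=n^2(n-1)^2(c^2d^2-c^2d^2)(cx+dy)^{2n-4}=0$. Since the case $n=1$ is vacuous (all second derivatives vanish and $Q$ is already linear), I assume $n\ge 2$ from now on, and I work over $\C$ (or any field of characteristic zero), so that $n$-th roots exist and ``the derivative of a rational function vanishes iff it is constant.''

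For the converse, the key observation is that the Hessian of $Q$ is precisely the Jacobian determinant of the gradient map $\nabla Q=(Q_x,Q_y)\colon \C^2\to\C^2$: indeed $\partial(Q_x,Q_y)/\partial(x,y)=Q_{xx}Q_{yy}-Q_{xy}Q_{yx}=H$. So $H\equiv 0$ ought to force $Q_x$ and $Q_y$ to be linearly dependent over $\C$. First I would clear away the degenerate case: if $Q_y\equiv 0$ (or symmetrically $Q_x\equiv 0$), then $Q$ is a homogeneous polynomial of degree $n$ in $x$ alone, hence $Q=ax^n=(a^{1/n}x)^n$, which already has the asserted form. So from here I assume $Q_x$ and $Q_y$ are both nonzero.

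The heart of the argument is then a Wronskian computation after dehomogenizing. Writing $\tilde f(t):=Q_x(t,1)$ and $\tilde g(t):=Q_y(t,1)$, one has $Q_x(x,y)=y^{n-1}\tilde f(x/y)$ and $Q_y(x,y)=y^{n-1}\tilde g(x/y)$ off $y=0$, and the chain rule collapses the Hessian to $H(x,y)=(n-1)\,y^{2n-4}\bigl(\tilde f'\tilde g-\tilde f\tilde g'\bigr)\big|_{t=x/y}$. Hence $H\equiv 0$ is equivalent to the vanishing of the Wronskian $\tilde f'\tilde g-\tilde f\tilde g'$, i.e.\ to $(\tilde f/\tilde g)'\equiv 0$; since $Q_y\not\equiv 0$ forces $\tilde g\not\equiv 0$, this says $\tilde f/\tilde g$ is a constant $c\in\C$, so that $Q_x=c\,Q_y$ identically, i.e.\ $(\partial_x-c\,\partial_y)Q=0$.

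The final step integrates this constant-coefficient first-order PDE by a linear change of variables. Putting $X:=x$ and $Y:=cx+y$, one checks $\partial_x-c\,\partial_y=\partial_X$, so the equation becomes $\partial_X Q=0$: thus $Q$ is a polynomial in $Y$ alone, and being homogeneous of degree $n$ it equals $aY^n=a(cx+y)^n$ for a scalar $a$; absorbing $a^{1/n}$ into the linear form exhibits $Q=(c'x+d'y)^n$, the claim. I expect the main obstacle to be the Hessian-to-Wronskian identity together with the careful handling of the boundary cases in which $Q_x$ or $Q_y$ (or its dehomogenization) vanishes; once $Q_x=c\,Q_y$ is in hand the rest is routine. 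A useful sanity check along the way is that $H$ is a relative invariant --- it transforms by $(\alpha\delta-\beta\gamma)^2$ under $x\mapsto\alpha x+\beta y$, $y\mapsto\gamma x+\delta y$ --- which is consistent with the conclusion being preserved under invertible linear changes of the variables.
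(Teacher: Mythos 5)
Your proof is correct. Note that the paper itself gives no argument for this statement --- it is quoted from Olver's book on classical invariant theory --- so there is no in-text proof to compare against; what you have written is a complete, self-contained proof in essentially the classical style. The forward direction is the right one-line computation, and the key identity in the converse checks out: writing $\tilde f(t)=Q_x(t,1)$, $\tilde g(t)=Q_y(t,1)$, one indeed gets
\begin{equation}
H(x,y)=(n-1)\,y^{2n-4}\bigl(\tilde f'\tilde g-\tilde f\tilde g'\bigr)\Big|_{t=x/y},
\end{equation}
provided one computes $H=Q_{xx}Q_{yy}-Q_{xy}Q_{yx}$ using $Q_{xy}=\partial_yQ_x$ and $Q_{yx}=\partial_xQ_y$ so that the $u\tilde f'\tilde g'$ cross terms cancel (a quick sanity check with $Q=x^2y$ gives $-4x^2$ on both sides). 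The reduction $H\equiv 0\Rightarrow Q_x=cQ_y\Rightarrow Q=a(cx+y)^n$ via the change of variables $Y=cx+y$ is also fine, as is your handling of the degenerate cases $Q_x\equiv0$ or $Q_y\equiv0$. The only point to tighten is the field: your parenthetical ``any field of characteristic zero'' is too generous, since the final extraction of $a^{1/n}$ (and hence the statement as literally written, $Q=(cx+dy)^n$) requires an algebraically closed field such as $\C$ --- over $\R$ the form $Q=-x^2$ has vanishing Hessian but is not the square of a real linear form. Since the paper works with complex coefficients throughout, restricting to $\C$, as you in fact do, is exactly right.
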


\begin{remark}[Applications to quantum entanglement]
 It is clear that a linear form 
 \be 
 Q(x,y)= (cx+dy)^n
 \ee 
 gives rise to a factorisable quantum state as 
 \be 
 \psi_Q = (c\ket{0}+ d\ket{1})^n
 \ee 
 So clearly, the Hessian corresponding to a quantum state, determines if the state can be written as a local product of 
 operators.  Other covariants give additional information. 
 
%

\end{remark}

\begin{remark}[Sylvester's tables]
In the mid 1800s Sylvester produced tables\marginnote{{\it ``The object of pure physics is the unfolding of the laws of the intelligible world; the object of pure mathematics that of unfolding the laws of human intelligence.''} --- James Joseph Sylvester.} predicting numbers of invariants and covariants of a given degree.  The table should only be trusted for covariants up to degree 6 and invariants up to degree 8.  
 \begin{center}
\begin{tabular}{c||cccccccccc}
\hline
degree     & 2 & 3 & 4 & 5 & 6 & 7 & 8 & 9 & 10 & 12 \\\hline \hline 
invariants & 1 & 1 & 2 & 4 & 5 & 26 (30) & 9 & 89 & 104 & 109 \\
covariants & 2 & 4 & 5 & 23 & 26 & 124 (130) & 69 & 415 & 475 & 947 \\\hline 
\end{tabular}
\end{center}
\end{remark}

\section{Problems} 

\subsection*{Anti-linear operators} 

\begin{remark}
Tensor contraction is linear.  Tensor networks encapsulate multilinear algebra.  What about antilinear operators or non-linear operators (such as those found in tensor flow software)?  As a prelude, let us first consider some properties of antilinear operators.  Readers should first show that complex conjugate is antilinear. 
\end{remark}

\begin{definition}[Qubit density operator]
Consider a density operator of a single spin (a.k.a.~qubit) as
\begin{equation}
\rho = \frac{1}{2}(\I + a.\sigma)
\end{equation}
where $a=(a^1, a^2, a^3)\in [-1,1]^{\otimes 3}\subset \mathbb{R}^3$, the Pauli vector $\sigma=(\sigma_1, \sigma_2, \sigma_3) = (X, Y, Z)$ and `.' signifies a sum over the entry wise product, viz.~$a.\sigma = \sum_i a^i \sigma_i$.  Here $\Tr \rho = 1$, $\rho \geq 0$, $\rho=\rho^\dagger$, $\sum_{i\in\{1,2,3\}} |a^i|\leq 1$ (the vector $a$ also satisfies $||a||_2\leq 1$). 
\end{definition}
  
\begin{definition}[T-action]\label{def:taction}
For the polarization vector $\sigma$, we define $T$ through the action $T\sigma T^\dagger = - \sigma$. 
\end{definition}

\begin{myexercise}
Prove that $T$ as defined in \ref{def:taction} is not unitary.  (Hint consider preservation of the definition of the Pauli-algebra under unitary group homomorphism). 
\end{myexercise}

\begin{myexercise}
Show that $\prod_{i=1, 2, 3}\Tr \sigma_i \rho = a_1a_2a_3$.  
\end{myexercise}

We will consider the form of $T=UK$ where $U$ is unitary and $K$ is complex conjugation.

\begin{myexercise}
For a single qubit, let $T = e^{-\imath \frac{\pi}{2} Y}K$ and  show  that this definition satisfies the definition of the T-reversal operator from \ref{def:taction}. 
\end{myexercise}

\begin{myexercise} Show that 
$[\rho, T]=0$ for $T = e^{-\imath \frac{\pi}{2} Y}K$ iff $\rho = \frac{1}{2}\I$. 
\end{myexercise}

\begin{myexercise}
Show that $[\rho, K]=0$ iff $a_2=0$, in which case $T\sim Y$.\footnote{$\sim$ denotes equality up to an equivalency class defined by  unit norm complex numbers.}
\end{myexercise}

\begin{remark}[Dagger of qubit maps]
Let us write the general local qubit map, using the following
parameterization which can express any single-qubit map. 
\be 
U(2,\7C) = U(1, \7C)\times SU(2, \7C) 
\ee 
so up to a negligible global phase, every local unitary map on qubits is
effectively given by a special unitary map. Consider then, 
\be 
S = \alpha\ketbra{0}{0} - \overline{\beta}\ketbra{0}{1} + \beta
\ketbra{1}{0} + \overline{\alpha}\ketbra{1}{1}. 
\ee 
It is a straight forward calculation to verify that 
\be 
[\epsilon^\top S \epsilon]^\top = S^\dagger. 
\ee
Which has the diagrammatic form as: 
 \begin{center}
 \includegraphics[width=0.25\textwidth]{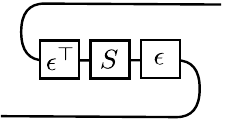} 
\end{center}
\end{remark}

 \part{Tensor Networks for Open Quantum System}

\label{sec:cpmaps}
In this section we recall several common mathematical descriptions for completely-positive trace-preserving maps, and show how several key properties may be captured graphically using the diagrammatic notation. The representations we will consider are the Kraus (or operator-sum) representation, the system-environment (or Stinespring) representation, the Liouville superoperator description based on vectorization of matrices, and the Choi-matrix or dynamical matrix description based on the Choi-Jamio{\l}kowski isomorphism. We will also describe the often used process matrix (or $\chi$-matrix) representation and show how this can be considered as a change of basis of the Choi-matrix. Following this, in \S~\ref{sec:trans} we will show how our framework enables one to freely transform between these representations.  A standard reference for open quantum systems applied to quantum information processing is \cite{BP}. The present chapter on open systems presented in the language of tensor networks follows primarily \cite{2011arXiv1111.6950W}.  Interested readers might consider a formulation of completely positive (for open quantum systems) maps using category theory \cite{SELINGER2007139}.  For historical context, we also mention that the early formulation of {\it atemporal circuits} in \cite{Atemp06} indeed partially considered diagrammatic properties of completely positive maps.  Matrix product state methods applied to open systems and density matrices may also be of interest \cite{Schollw, Chan2016, open1D}.

\begin{center}
\includegraphics[width=13.5\xxxscale]{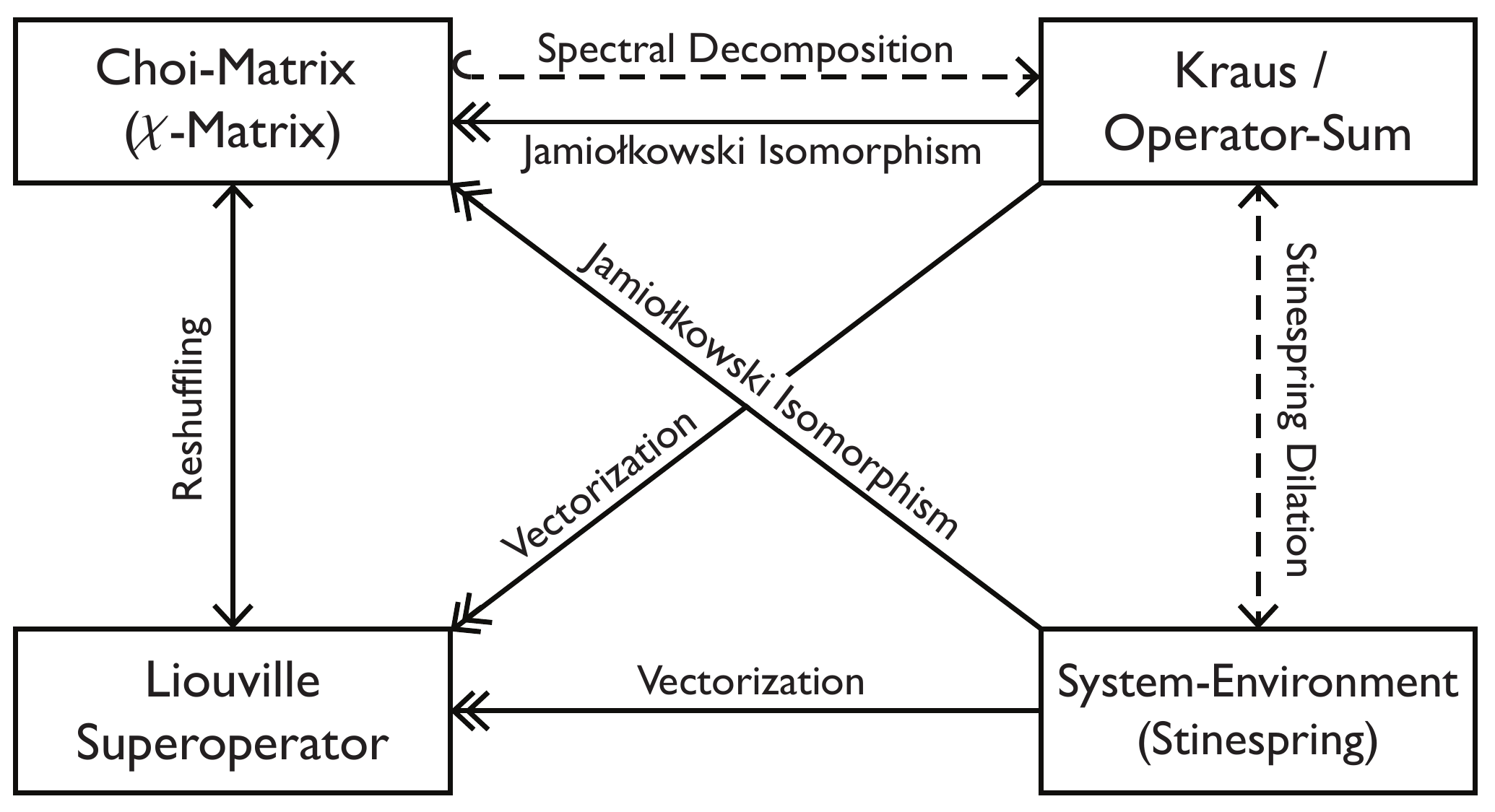}
\label{fig:cpreps}
\end{center}

\begin{remark}
In this chapter we will use the notation that $\2 X, \2 Y, \2 Z$ are finite-dimensional complex Hilbert spaces,  
 
\begin{itemize}
\item[1.] 
${\mathcal L}(\2X,\2Y)$ is the space of bounded linear operators $A: \2 X\rightarrow \2 Y$ (with ${\mathcal L}(\2X)\equiv {\mathcal L}(\2X,\2X)$),
\item[2.]  
$T(\2X,\2Y)$ is the space of operator maps $\2 E:{\mathcal L}(\2X)\rightarrow {\mathcal L}(\2Y)$  (with $T(\2X)\equiv T(\2X,\2X)$),
 \item 
 and $C(\2X,\2Y)$ is the space of operator maps $\2 E$ which are CP.
\end{itemize}

\end{remark} 

\section{Kraus / Operator-Sum Representation}
\label{sec:kraus}

The first representation of CPTP-maps we cast into our framework is the \emph{Kraus}~\cite{Kraus1983} or \emph{operator-sum}~\cite{NC} representation. This representation is particularly useful in phenomenological models of noise in quantum systems.
\begin{theorem}
Kraus's theorem states that a linear map $\2 E\in T(\2X,\2Y)$ is CPTP if and only if it may be written in the form
\begin{equation}
\2 E(\rho)= \sum_{\alpha=1}^D K_\alpha \rho K_\alpha^\dagger
\label{eqn:kraus}
\end{equation}
where the Kraus operators $\{K_\alpha: \alpha=1,...,D\}$, $K_\alpha\in \Lx{X,Y}$, satisfy the completeness relation 
\begin{equation}
\sum_{\alpha=0}^D K_\alpha^\dagger K_\alpha = \I_{\2 X}.
\label{eqn:completeness}
\end{equation} 
\end{theorem}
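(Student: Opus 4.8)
The plan is to prove the two implications separately. The forward direction --- that any map of operator-sum form \eqref{eqn:kraus} obeying the completeness relation \eqref{eqn:completeness} is CPTP --- is a short direct computation. The converse --- that every CPTP map admits such a representation --- is the substantive half, and will proceed through the Choi--Jamio{\l}kowski construction together with the wire-bending and vectorization machinery of \S~\ref{sec:vec}.

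For the ``if'' direction, suppose $\2E(\rho)=\sum_{\alpha}K_\alpha\rho K_\alpha^\dagger$ with $\sum_\alpha K_\alpha^\dagger K_\alpha=\I_{\2X}$. Complete positivity is immediate: for any ancilla $\2Z$ and any $\sigma\geq 0$ on $\2X\otimes\2Z$,
\begin{equation}
(\2E\otimes\mathrm{id}_{\2Z})(\sigma)=\sum_\alpha (K_\alpha\otimes\I_{\2Z})\,\sigma\,(K_\alpha\otimes\I_{\2Z})^\dagger\geq 0,
\end{equation}
because each summand $M\sigma M^\dagger$ is positive semidefinite. Trace preservation follows from cyclicity of the trace,
\begin{equation}
\Tr[\2E(\rho)]=\sum_\alpha\Tr[K_\alpha\rho K_\alpha^\dagger]=\Tr\Big[\Big(\sum_\alpha K_\alpha^\dagger K_\alpha\Big)\rho\Big]=\Tr[\rho],
\end{equation}
and diagrammatically \eqref{eqn:completeness} is precisely the statement that the Kraus boxes, capped together output-to-output, contract to the identity wire on $\2X$.

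For the ``only if'' direction, let $\2E\in T(\2X,\2Y)$ be CPTP. Form its Choi matrix $J(\2E):=(\2E\otimes\mathrm{id}_{\2X})(\ketbra{\Phi^+}{\Phi^+})\in{\mathcal L}(\2Y\otimes\2X)$, with $\ket{\Phi^+}=\sum_i\ket{ii}$ the unnormalized Bell state. Since $\ketbra{\Phi^+}{\Phi^+}\geq 0$ and complete positivity of $\2E$ contains, as the special case of ancilla $\2X$, that $(\2E\otimes\mathrm{id}_{\2X})$ sends positive operators to positive operators, we obtain $J(\2E)\geq 0$ for free. Apply the spectral decomposition --- a special case of the diagrammatic SVD, Theorem~\ref{theorem:diagrammatic-SVD} --- to write $J(\2E)=\sum_\alpha\ketbra{v_\alpha}{v_\alpha}$ with $\ket{v_\alpha}\in\2Y\otimes\2X$, the nonnegative eigenvalues absorbed into the vectors. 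Bending the $\2X$-wire of each $\ket{v_\alpha}$, i.e.\ the vectorization / Penrose wire-duality of \S~\ref{sec:vec}, reshapes it into an operator $K_\alpha\in\Lx{X,Y}$ with $\ket{v_\alpha}=(K_\alpha\otimes\I_{\2X})\ket{\Phi^+}$.

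It remains to verify that these $K_\alpha$ reproduce $\2E$ and satisfy the completeness relation. By construction of the Choi matrix one has the recovery identity $\2E(\rho)=\Tr_{\2X}\!\big[(\I_{\2Y}\otimes\rho^{\top})\,J(\2E)\big]$, which is just linearity plus wire-bending; inserting the spectral decomposition and using $\ketbra{v_\alpha}{v_\alpha}=(K_\alpha\otimes\I)\ketbra{\Phi^+}{\Phi^+}(K_\alpha^\dagger\otimes\I)$ together with Roth's Lemma \eqref{eqn:roth} collapses each term to $K_\alpha\rho K_\alpha^\dagger$, so $\2E(\rho)=\sum_\alpha K_\alpha\rho K_\alpha^\dagger$; this step is cleanest drawn as a tensor diagram in which the Bell states are straightened through the snake equation, leaving $\rho$ sandwiched between $K_\alpha$ and $K_\alpha^\dagger$. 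For completeness, trace preservation of $\2E$ is equivalent to $\Tr_{\2Y}[J(\2E)]=\I_{\2X}$; partial-tracing the diagram for $J(\2E)$ over $\2Y$, with each $K_\alpha$ contracted into $K_\alpha^\dagger$ through the Bell effect, yields $\sum_\alpha K_\alpha^\dagger K_\alpha=\I_{\2X}$. The logical loop then closes, since the ``if'' direction already shows the resulting operator sum is CPTP. I expect the main obstacle to be this last step: pinning down the recovery identity and the per-term use of Roth's Lemma inside the diagrammatic calculus so that the Bell-state ``straightening'' is genuinely rigorous rather than merely suggestive; everything else is routine bookkeeping with machinery already available in the excerpt.
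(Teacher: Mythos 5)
Your proposal is correct, and it follows essentially the route the paper itself relies on: the ``if'' direction by direct computation, and the ``only if'' direction via positivity of the Choi matrix, its spectral decomposition, and wire-bending the eigenvectors into operators $K_\alpha$, with completeness read off from $\Tr_{\2Y}[\Lambda]=\I_{\2X}$ --- exactly the construction the paper carries out graphically in \S~\ref{sec:choi} and \S~\ref{sec:to-kraus} (the theorem itself is stated there with a citation rather than an in-line proof). The only bookkeeping point is that tracing the Choi matrix over $\2Y$ actually yields $\sum_\alpha (K_\alpha^\dagger K_\alpha)^\top=\I_{\2X}$ in your convention, which is equivalent to \eqref{eqn:completeness} since the identity is transpose-invariant.
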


The Kraus representation of $\2E$ in \eqref{eqn:kraus} has the graphical representation 
\begin{center}
\includegraphics[width=0.6\textwidth]{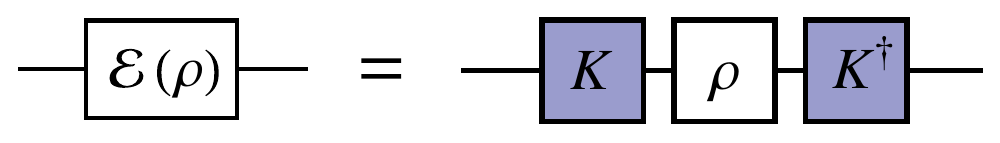}
\label{fig:kraus-evo}
\end{center}

The maximum number of Kraus operators needed for a Kraus description of $\2 E$ is equal to the dimension of ${\mathcal L}(\2X,\2Y)$. For the case where $\2 X\cong\2 Y\cong\C^d$ the maximum number of Kraus operators is $d^2$, and the minimum number case corresponds to unitary evolution where there is only a single Kraus operator.

It is important to note that the Kraus representation of $\2 E$ is not unique as there is unitary freedom in choosing the Kraus operators. We can give preference to a particular representation called the \emph{Canonical Kraus Representation}~\cite{Bengtsson2006} which is the unique set of Kraus operators satisfying the orthogonality relation $\Tr[K^\dagger_\alpha K_\beta]=\lambda_\alpha\delta_{\alpha\beta}$. The canonical Kraus representation will be important when transforming between representations in \S~\ref{sec:trans}.

\section{System-Environment / Stinespring Representation}
\label{sec:se}

The second representation of CPTP-maps we consider is the system-environment model~\cite{NC}, which is typically considered the most physically intuitive description of open system evolution. This representation is closely related to (and sometimes referred to as) the Stinespring representation as it can be thought of as an application of the Stinespring dilation theorem~\cite{Stinespring1955}, which we also describe in this section.
\begin{theorem}
In this model, we consider a system of interest $\2 X$, called the \emph{principle system}, coupled to an additional system $\2 Z$ called the \emph{environment}. The composite system of the principle system and environment is then treated as a closed quantum system which evolves unitarily. We recover the reduced dynamics on the principle system by performing a partial trace over the environment. Suppose the initial state of our composite system is given by $\rho\otimes\tau \in {\mathcal L}(\2X\otimes\2Z)$, where $\tau\in {\mathcal L}(\2Z)$ is the initial state of the environment. The joint evolution is described by a unitary operator $U\in {\mathcal L}(\2X\otimes\2Z)$ and the reduced evolution of the principle system's state $\rho$ is given by
\begin{equation}
\2 E(\rho) = \Tr_{\2 Z}[U(\rho\otimes\tau)U^\dagger]
\label{eqn:sys-env}
\end{equation}
\end{theorem}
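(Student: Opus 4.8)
The plan is to prove the equivalence in both directions, using the Kraus representation of Section~\ref{sec:kraus} as the bridge, and then to record the diagrammatic picture that makes the whole statement transparent.

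First I would check that \emph{every} map of the form \eqref{eqn:sys-env} is CPTP. Fix an orthonormal basis $\{\ket{e_\alpha}\}$ of $\2 Z$ and spectrally decompose the environment state, $\tau = \sum_\mu p_\mu\ketbra{\mu}{\mu}$ with $p_\mu\ge 0$ and $\sum_\mu p_\mu = 1$. Inserting a resolution of the identity on the traced-out environment wire gives
\be
\2 E(\rho) = \sum_{\alpha,\mu} \bra{e_\alpha}U\ket{\mu}\,\rho\,\bra{\mu}U^\dagger\ket{e_\alpha} = \sum_{\alpha,\mu} K_{\alpha\mu}\,\rho\, K_{\alpha\mu}^\dagger,
\ee
where $K_{\alpha\mu}\bydef\sqrt{p_\mu}\,\bra{e_\alpha}U\ket{\mu}\in\Lx{X}$. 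Unitarity of $U$ together with $\sum_\mu p_\mu =1$ yields $\sum_{\alpha,\mu}K_{\alpha\mu}^\dagger K_{\alpha\mu}=\I_{\2 X}$, so by Kraus's theorem the map is CPTP. Diagrammatically this is nothing more than the observation that capping the environment output wire back onto its input and resolving the identity on that wire converts the unitary box into a sum of operators sandwiching $\rho$.

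Next I would prove the converse (Stinespring dilation): every $\2 E\in\Cx{X}$ arises in the form \eqref{eqn:sys-env}. Start from the canonical Kraus representation $\{K_\alpha\}_{\alpha=1}^{D}$ with $\sum_\alpha K_\alpha^\dagger K_\alpha=\I_{\2 X}$ and $D\le d^2$. Choose $\2 Z\cong\C^{D}$ with orthonormal basis $\{\ket{e_\alpha}\}_{\alpha=0}^{D-1}$, take $\tau=\ketbra{e_0}{e_0}$, and define the linear map $V:\2 X\to\2 X\otimes\2 Z$ by $V\ket{\psi}\bydef\sum_\alpha K_\alpha\ket{\psi}\otimes\ket{e_\alpha}$. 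The completeness relation is exactly $V^\dagger V=\I_{\2 X}$, so $V$ is an isometry, and inserting $\sum_\alpha\ketbra{e_\alpha}{e_\alpha}$ on $\2 Z$ gives $\Tr_{\2 Z}[V\rho V^\dagger]=\sum_\alpha K_\alpha\rho K_\alpha^\dagger=\2 E(\rho)$.

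The only delicate step — and the main obstacle — is promoting the isometry $V$ to a genuine unitary $U$ on $\2 X\otimes\2 Z$ acting with a \emph{fixed} environment input, i.e.\ $U(\ket{\psi}\otimes\ket{e_0})=V\ket{\psi}$ for all $\ket{\psi}$. So far $U$ is prescribed only on the $d$-dimensional subspace $\2 X\otimes\ket{e_0}$, where it must equal $V$, mapping onto the $d$-dimensional subspace $V(\2 X)$. The orthogonal complements of these two subspaces inside $\2 X\otimes\2 Z$ both have dimension $dD-d$, so any isometry between them extends $U$ to a unitary on all of $\2 X\otimes\2 Z$; if one wishes $\2 Z$ large enough for a pure $\tau$ to coexist with such a completion one simply enlarges $\2 Z$ first (for qubit-to-qubit channels $\dim\2 Z=d^2$ always suffices). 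With $U$ in hand one has $\2 E(\rho)=\Tr_{\2 Z}[U(\rho\otimes\tau)U^\dagger]$, which is \eqref{eqn:sys-env}. In the diagrammatic language this last move is the familiar one of drawing the isometry $V$ as a unitary box whose environment input wire is clamped to the state $\ket{e_0}$, and the theorem as a whole becomes the system–environment picture with the environment wire bent around into a cap to implement $\Tr_{\2 Z}$.
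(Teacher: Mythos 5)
Your proposal is correct and follows essentially the same route the paper itself takes: the statement is presented as the system--environment model, and the substance of both directions appears later via the partial-trace decomposition into Kraus operators (\S~\ref{sec:to-kraus}) and the Stinespring dilation with unitary completion (\S~\ref{sec:to-se}), which are exactly your two halves with Kraus's theorem as the bridge. The only cosmetic differences are that you allow a mixed environment state $\tau$ by spectral decomposition where the paper takes $\tau$ pure for convenience, and your aside about enlarging $\2 Z$ is unnecessary, since the two orthogonal complements already have equal dimension $dD-d$ and the unitary completion therefore always exists.
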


For convenience we can assume that the environment starts in a pure state $\tau=\ketbra{v_0}{v_0}$, and in practice one only need consider the case where the Hilbert space describing the environment has at most dimension $d^2$ for $\2 X\cong\C^d$~\cite{NC}. The system-environment representation of the CP-map $\2 E$ may then be represented graphically as 
\begin{center}
\includegraphics[width=0.52\textwidth]{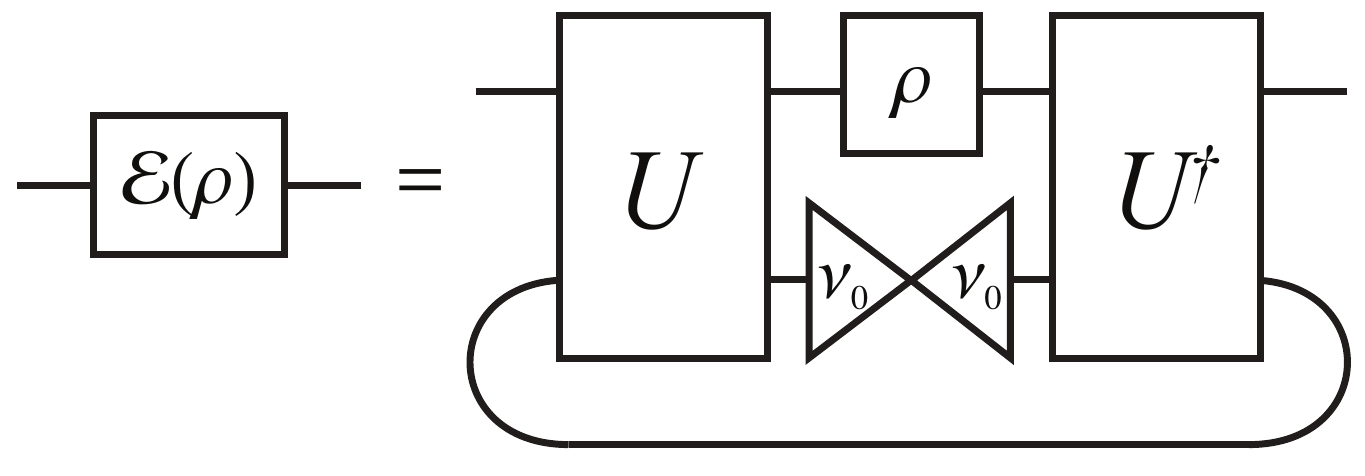}
\label{fig:sys-env}   
\end{center}

The system-environment model is advantageous when modelling the environment as a physical system. However, care must be taken when ascribing physical reality to any particular model as the system-environment description is not unique. This is not surprising as many different physical interactions could give rise to the same reduced dynamics on the principle system. This freedom manifests in an ability to choose the initial state of the environment in the representation and then adjust the unitary operator accordingly. In practice, the system-environment model can be cumbersome for performing many calculations where the explicit dynamics of the environment system are irrelevant. The remaining descriptions, which we cast into diagrammatic form, may be more convenient in these contexts.

Note that the system-environment evolution for the most general case will be an isometry and this is captured in Stinespring's representation~\cite{Stinespring1955}.

\begin{theorem}
Stinespring's dilation theorem states that a CP-map $\2 E\in C(\2X,\2Y)$ can be written in the form
\begin{equation}
\2 E(\rho) = \Tr_Z\left[A\rho A^\dagger\right]
\label{eqn:stinespring}
\end{equation}
where $A\in {\mathcal L}(\2X,\2Y\otimes\2Z)$ and the Hilbert space $\2 Z$ has dimension at most equal to ${\mathcal L}(\2X,\2Y)$. Further, the map $\2 E$ is trace preserving if and only if $A^\dagger A=\I_{\2 X}$ ~\cite{Stinespring1955}.
\end{theorem}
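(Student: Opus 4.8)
The plan is to derive Stinespring's theorem as an immediate corollary of the operator-sum (Kraus) representation already established in \S~\ref{sec:kraus}, by bundling the family of Kraus operators into a single operator that maps into an enlarged output space --- diagrammatically, by bending the summation index that labels the Kraus operators into an extra wire. Concretely, given a CP-map $\2E\in C(\2X,\2Y)$ with Kraus decomposition $\2E(\rho)=\sum_{\alpha=1}^{D}K_\alpha\rho K_\alpha^\dagger$, $K_\alpha\in{\mathcal L}(\2X,\2Y)$, I take $\2Z\cong\C^{D}$ with orthonormal basis $\{\ket{\alpha}\}_{\alpha=1}^{D}$ and set
\be
A:=\sum_{\alpha=1}^{D}K_\alpha\otimes\ket{\alpha}_{\2Z}\ \in\ {\mathcal L}(\2X,\2Y\otimes\2Z).
\ee
The first and only real computation is then $A\rho A^\dagger=\sum_{\alpha,\beta}K_\alpha\rho K_\beta^\dagger\otimes\ketbra{\alpha}{\beta}$, whence $\Tr_{\2Z}[A\rho A^\dagger]=\sum_{\alpha}K_\alpha\rho K_\alpha^\dagger=\2E(\rho)$; the bound $\dim\2Z=D\le\dim{\mathcal L}(\2X,\2Y)$ is inherited verbatim from the Kraus bound. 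This is exactly the move that turns the operator-sum picture into the system-environment picture of \S~\ref{sec:se}, with $A$ playing the role of $U(\I_{\2X}\otimes\ket{v_0})$.

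Next I would handle the trace-preservation clause. Compute $A^\dagger A=\sum_{\alpha,\beta}K_\alpha^\dagger K_\beta\,\braket{\alpha}{\beta}=\sum_{\alpha}K_\alpha^\dagger K_\alpha$, and use cyclicity of the trace to get $\Tr[\2E(\rho)]=\Tr[A\rho A^\dagger]=\Tr[\rho\,A^\dagger A]$ for every $\rho$. Hence $\2E$ preserves trace on all inputs if and only if $\Tr[\rho(A^\dagger A-\I_{\2X})]=0$ for all $\rho$, i.e.\ if and only if $A^\dagger A=\I_{\2X}$ (testing against rank-one projectors forces the self-adjoint operator $A^\dagger A-\I_{\2X}$ to vanish), which is precisely the Kraus completeness relation \eqref{eqn:completeness}. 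For the converse of the main statement one notes that any map written as $\rho\mapsto\Tr_{\2Z}[A\rho A^\dagger]$ is automatically CP --- expanding $A=\sum_\alpha K_\alpha\otimes\ket{\alpha}$ in a basis of $\2Z$ recovers Kraus operators $K_\alpha=(\I_{\2Y}\otimes\bra{\alpha})A$ --- and if in addition $A^\dagger A=\I_{\2X}$ then $\sum_\alpha K_\alpha^\dagger K_\alpha=\I_{\2X}$, so $\2E$ is CPTP.

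The step I expect to require the most care is the dimension count on $\2Z$: to guarantee $D\le\dim{\mathcal L}(\2X,\2Y)$ one should feed the construction the \emph{canonical} (orthogonal) Kraus set from \S~\ref{sec:kraus}, for which $\Tr[K_\alpha^\dagger K_\beta]=\lambda_\alpha\delta_{\alpha\beta}$, rather than an arbitrary, possibly over-complete Kraus family; the orthogonality makes the $K_\alpha$ linearly independent in ${\mathcal L}(\2X,\2Y)$, which caps their number at $\dim{\mathcal L}(\2X,\2Y)$. Everything else is bookkeeping with the extra tensor factor, and can be carried out either in Dirac notation as above or, in keeping with the spirit of this chapter, purely diagrammatically: $A$ is the Kraus tensor with its operator-label leg reshaped into the environment wire $\2Z$, and partial-tracing that wire reproduces the open-system map.
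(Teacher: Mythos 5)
Your construction is exactly the one the paper itself uses: bundling the Kraus operators into $A=\sum_\alpha K_\alpha\otimes\ket{\alpha}$ (the paper's Eq.~\eqref{eqn:kraus2stinespring} in \S~\ref{sec:to-se}), checking $\Tr_{\2Z}[A\rho A^\dagger]=\sum_\alpha K_\alpha\rho K_\alpha^\dagger$, and identifying $A^\dagger A=\I_{\2X}$ with the Kraus completeness relation \eqref{eqn:completeness}, with the dimension bound coming from the canonical (orthogonal) Kraus set. The argument is correct and essentially identical in approach to the paper's treatment, so there is nothing further to add.
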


In the case where $\2 Y\cong\2 X$, the Hilbert space $\XZ$ mapped into by the Stinespring operator $A$ is equivalent to the joint system-environment space in the system-environment representation. Hence one may move from the system-environment description to the Stinespring representation as follows:
\begin{center}
\includegraphics[width=0.55\textwidth]{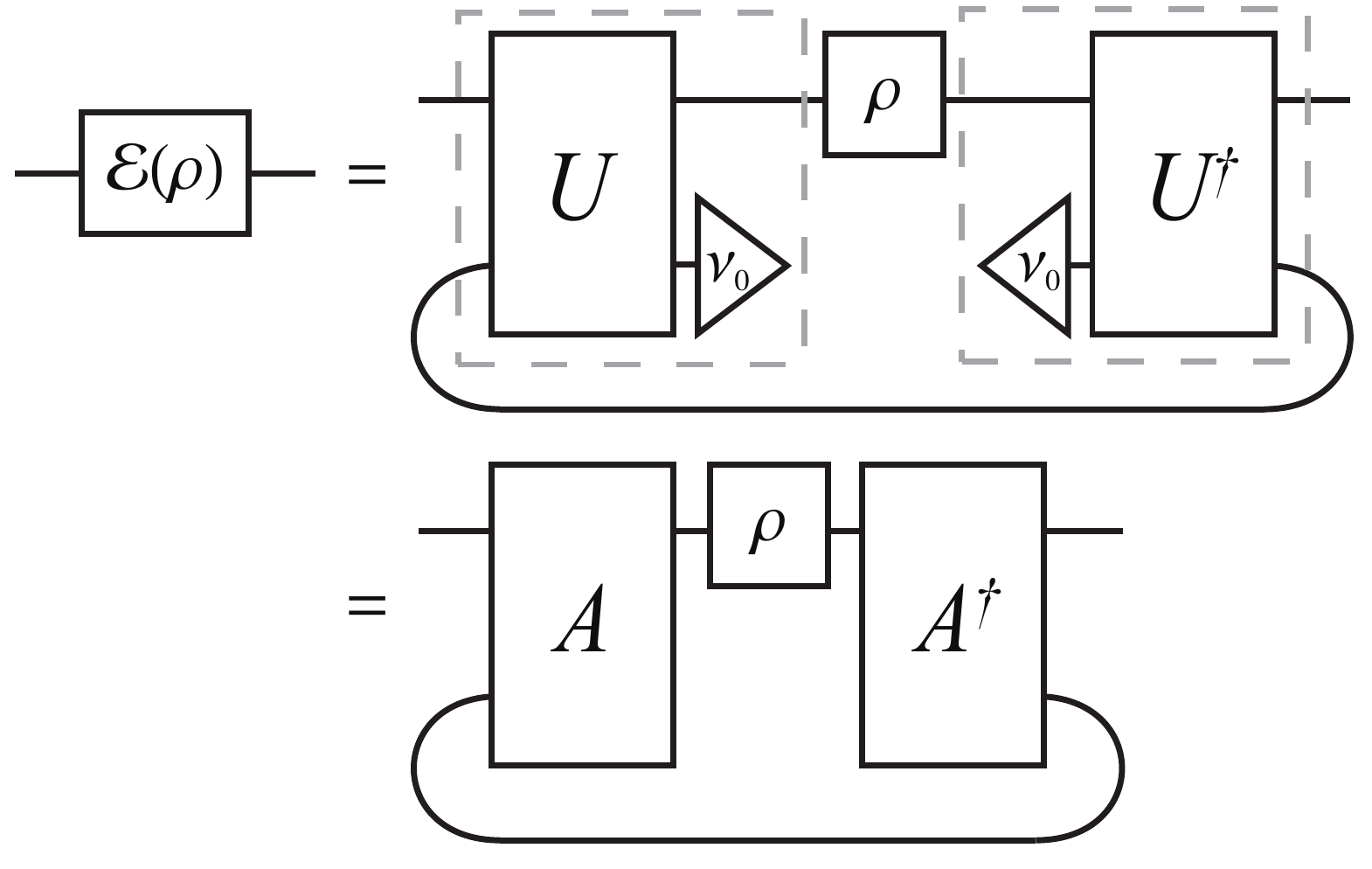}
\label{fig:stinespring}
\end{center}
where $\ket{v_0}\in\2 Z$ is the initial state of the environment, and we have defined the Stinespring operator
\begin{equation}
A = U\cdot(\I_{\2 X}\otimes\ket{v_0}),\label{eqn:stinespring-se}.
\end{equation}

This close relationship is why these two representations are often referred to by the same name, and as we will show in \S~\ref{sec:to-se}, it is straight forward to construct a Stinespring representation from the Kraus representation. However, generating a full description of the joint system-environment unitary operator $U$ from a Stinespring operator $A$ is cumbersome. It involves an algorithmic completion of the matrix elements in the unitary $U$ not contained within the subspace of the initial state of the environment~\cite{Bengtsson2006}. Since it usually suffices to define the action of $U$ when restricted to the initial state of the environment, which by \eqref{eqn:stinespring-se} is the Stinepsring representation, this is often the only transformation one need consider.

\begin{remark}
A further important point is that the evolution of the principle system $\2 E(\rho)$ is guaranteed to be CP if and only if the initial state of the system and environment is separable; $\rho_{\2 X\2 Z}=\rho_{\2X}\otimes\rho_{\3 Z}$. In the case where the physical system is initially correlated with the environment, it is possible to have reduced dynamics which are non-completely positive~\cite{Weinstein2004,Carteret2008}, however such situations are beyond the scope of this chapter.  
\end{remark}

\section{Louiville-Superoperator Representation}
\label{sec:sop}

We now move to the \emph{linear superoperator} or \emph{Liouville} representation of a CP-map $\2 E\in C(\2X,\2Y)$. \begin{definition}
The superoperator representation is based on the vectorization of the density matrix $\rho \mapsto \dket{\rho}_\sigma$ with respect to some orthonormal operator basis $\{\sigma_\alpha : \alpha=0,...,d^2-1\}$ as introduced in \S~\ref{sec:vec}. Once we have chosen a vectorization basis (col-vec in our case) we define the superoperator for a map $\2 E\in T(\2X,\2Y)$ to be the linear map 
\begin{equation}
\2 S:\2X\otimes\2X\rightarrow\2Y\otimes\2Y: \dket{\rho}\mapsto \dket{\2 E(\rho)}	\label{eqn:sop}
\end{equation}

This is depicted graphically as
 \begin{center}
\includegraphics[width=0.4\textwidth]{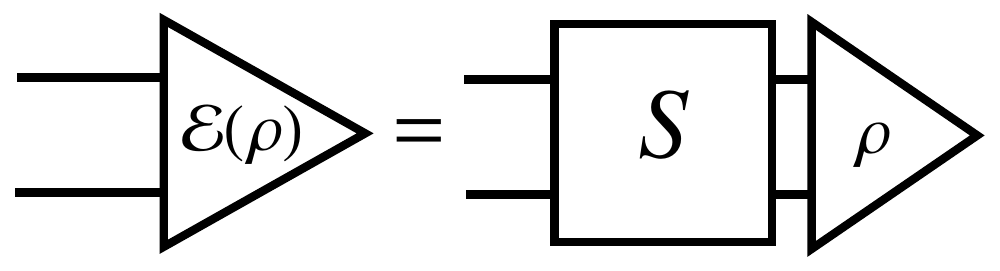}
\label{fig:superop}     
\end{center}

 In the col-vec basis we can express the evolution of a state $\rho$ in terms of tensor components of $\2 S$ as
\begin{eqnarray}
\2 E(\rho)_{mn} &=& \sum_{\mu\nu} \2S_{nm,\nu\mu} \rho_{\mu\nu}.
\end{eqnarray}
\end{definition}
For the case where $\2 E \in T(\2X)$, it is sometimes  useful to change the basis of our superoperators from the col-vec basis to an orthonormal operator basis $\{\sigma_\alpha\}$ for ${\mathcal L}(\2X)$. This is done using the basis transformation operator $T_{c\rightarrow\sigma}$ introduced in \S~\ref{sec:vec}. We have
\begin{eqnarray}
\2 S_\sigma 
	&=& T_{c\rightarrow\sigma}\cdot \2 S \cdot T_{c\rightarrow\sigma}^\dagger \label{eqn:sop-basis-change}\\
	&=& \sum_{\alpha\beta} \2 S_{\alpha\beta} \,\dket{\sigma_\alpha}\dbra{\sigma_\beta}.
\end{eqnarray}
where the subscript $\sigma$ indicates that $\2 S_\sigma$ is the superoperator in the $\sigma$-vec convention. The tensor networks for this transformation is given by
\begin{center}
\includegraphics[width=0.5\textwidth]{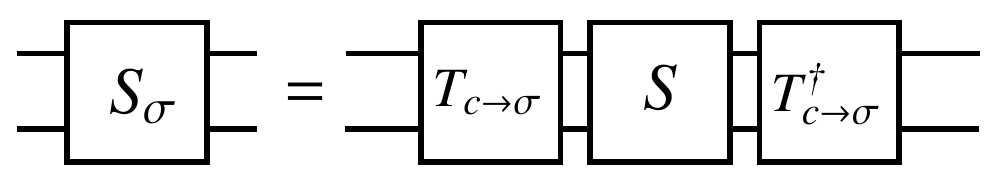}  \label{fig:sop-change-basis}
\end{center}
\begin{remark}
For a general map $\2 E\in T(\2X,\2Y)$ we could do a similar construction but would need different bases for the initial and final Hilbert spaces ${\mathcal L}(\2X)$ and ${\mathcal L}(\2Y)$.
\end{remark}

The structural properties the superoperator $\2 S$ must satisfy for the linear map $\2 E$ to be hermitian-preserving (HP), trace-preserving (TP), and completely positive (CP) are~\cite{Bengtsson2006}:
\begin{eqnarray}
\2 E \mbox{ is HP } 
	&\Longleftrightarrow&  \overline{\2 S}=\2 S^S \label{eqn:sop-hpres}\hspace{10em}\\
	&\Longleftrightarrow&
	\parbox[c]{1em}{\includegraphics[width=0.28\textwidth, trim= 0cm 1cm 0cm 1cm,clip]{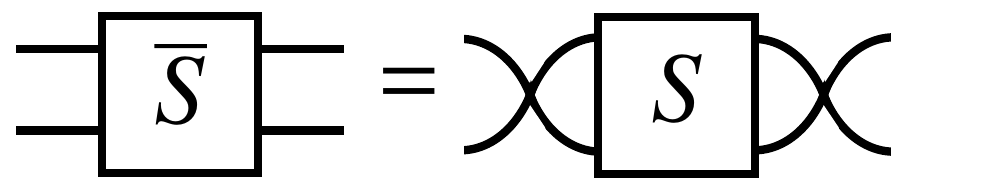}}\\
\2 E \mbox{ is TP } 
	&\Longleftrightarrow&\2 S_{mm,n\nu}=\delta_{n\nu}\\
	&\Longleftrightarrow&
	\parbox[c]{1em}{\includegraphics[width=0.2\textwidth]{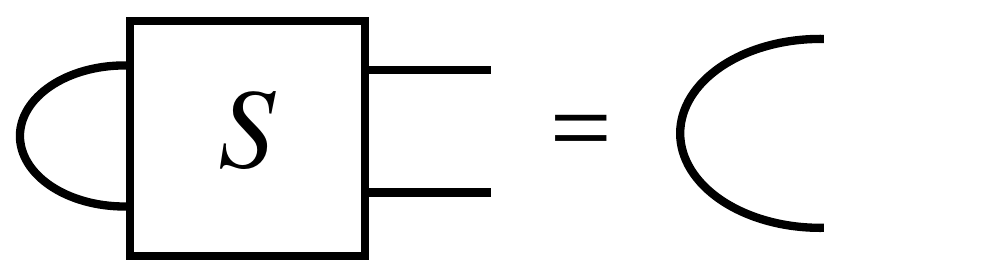}}\\
\2 E \mbox{ is CP } 
	&\Longleftrightarrow&\2 S_{\2 I\otimes \2 E}\dket{\rho_{AB}}\ge 0 ~\forall \rho_{AB}\ge0
\end{eqnarray}
\begin{remark}
There is not a convenient structural criteria on the superoperator $\2 S$ which specifies if $\2 E$ is a CP-map. To test for positivity or complete positivity one generally uses the closely related \emph{Choi-matrix} representation.
\end{remark}

Superoperators are convenient to use for many practical calculations. Unlike the system-environment model the superoperator $S$ is unique with respect to the choice of vectorization basis. Choosing an appropriate basis to express the superoperator in can often expose certain information about a quantum system. For example, if we want to model correlated noise for a mutli-partite system we can vectorize with respect to the mutli-qubit Pauli basis. Correlated noise would then manifest as non-zero entries in the superoperator corresponding to terms such as $\sigma_x\otimes\sigma_x$. 
We discus in more detail how this may be done in \S~\ref{sec:comp-sop}.

\section{Choi-Matrix Representation}
\label{sec:choi}

The final representation shown in Fig.~\ref{fig:cpreps} is the \emph{Choi matrix}~\cite{Choi1975}, or dynamical matrix~\cite{Bengtsson2006}. This is an application of the Choi-Jamio{\l}kowski isomorphism which gives a bijection between linear maps and linear operators~\cite{Jamiolkowski1972}.
\begin{definition}
Similarly to how vectorization mapped linear operators in ${\mathcal L}(\2X,\2Y)$ to vectors in $\XY$ or $\YX$, the Choi-Jamio{\l}kowski isomorphism maps linear operators in $T(\2X,\2Y)$ to linear operators in ${\mathcal L}(\2X\otimes\2Y)$ or ${\mathcal L}(\2Y\otimes\2X)$. 
The two conventions are
\begin{eqnarray}
\mbox{col-}\Lambda&:& 
	T(\2X,\2Y)\rightarrow {\mathcal L}(\2X\otimes\2Y):\,\, \2 E\mapsto \Lambda_c\\
\mbox{row-}\Lambda&:& 
	T(\2X,\2Y)\rightarrow {\mathcal L}(\2Y\otimes\2X):\,\, \2 E\mapsto \Lambda_r.
\end{eqnarray}
For $\2 X\cong\C^d$, the explicit construction of the Choi-matrix is given by
\begin{eqnarray} 
	\label{eqn:col-choi}
	\Lambda_c &=& \sum_{i,j=0}^{d-1}
	 \ketbra{i}{j}\otimes\2 E(\ketbra{i}{j})\\
	\Lambda_r &=& \sum_{i,j=0}^{d-1}
	 \2 E(\ketbra{i}{j})\otimes\ketbra{i}{j}
\end{eqnarray}
where $\{\ket{i}:i=0,\hdots,d-1\}$ is an orthonormal basis for $\2 X$.

\end{definition}

We call the two conventions col-$\Lambda$ and row-$\Lambda$ due to their relationship with the vectorization conventions introduced in \S~\ref{sec:vec}. 
\begin{definition}
The Choi-Jamio{\l}kowski isomorphism can also be thought of as having a map $\2 E\in T(\2X,\2Y)$ act on one half of an unnormalized Bell-state $\ket{\Phi^+}=\sum_i \ket{i}\otimes\ket{i} \in \2X\otimes\2X$, and hence these conventions corresponding to which half of the Bell state it acts on:
\begin{eqnarray} 
	\Lambda_c &=& (\2 I \otimes \2 E)\ketbra{\Phi^+}{\Phi^+}\label{eqn:jamiolkowski}\\
	\Lambda_r &=& (\2 E \otimes \2 I)\ketbra{\Phi^+}{\Phi^+}
\end{eqnarray}
where $\2 I\in T(\2X)$ is the identity map. 
\end{definition}

In what follows we will use the col-$\Lambda$ convention and drop the subscript from $\Lambda_c$. We note that the alternative row-$\Lambda$ Choi-matrix is naturally obtained by applying the bipartite-SWAP operation to $\Lambda_c$. 

As will be considered in \S~\ref{sec:to-choi}, if the evolution of the CP map $\2 E$ is described by a Kraus representation $\{K_i\}$, then the Choi-Jamio{\l}kowski isomorphism states that we construct the Choi-matrix by acting on one half of a bell state with the Kraus map as shown:

\begin{center}
\includegraphics[width=0.7\textwidth]{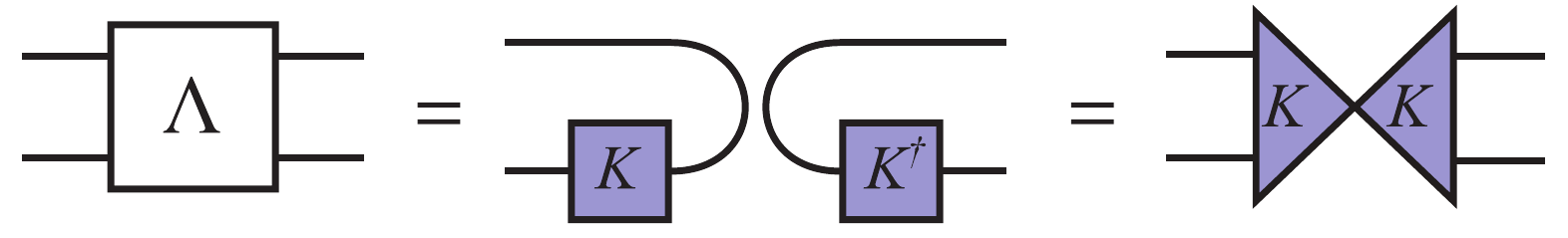}
\label{fig:choi-jamiolkowski}
\end{center}

\begin{remark}
Note that in general any tensor network describing a linear map $\2 E$, not just the Kraus description, may be contracted with one-half of the maximally entangled state $\ketbra{\Phi^+}{\Phi^+}$ to construct the Choi-matrix.
\end{remark}

With the Choi-Jamio{\l}kowski isomorphism defined, the evolution of a quantum state in terms of the Choi-matrix is then given by
\begin{eqnarray}
\2 E(\rho) &=& \Tr_{\2 X} \left[ (\rho^T\otimes \I_{\2 Y} )\Lambda \right]
\label{eqn:choi-evo}
\end{eqnarray}
or in terms of tensor components
\begin{eqnarray}
\2 E(\rho)_{mn}
		&=&\sum_{n,m} \Lambda_{\mu m,\nu n}\rho_{\mu\nu}.
\end{eqnarray}
The tensor network for \eqref{eqn:choi-evo} is given by
\begin{center}
\includegraphics[width=0.6\textwidth]{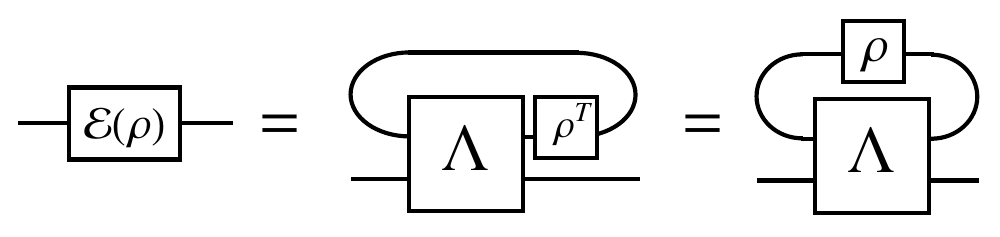}
\label{fig:choi-evo}    
\end{center}
The graphical proof of \eqref{fig:choi-evo} for the case where $\2 E$ is described by a Kraus representation is as follows:
 \begin{center}
\includegraphics[width=0.6\textwidth]{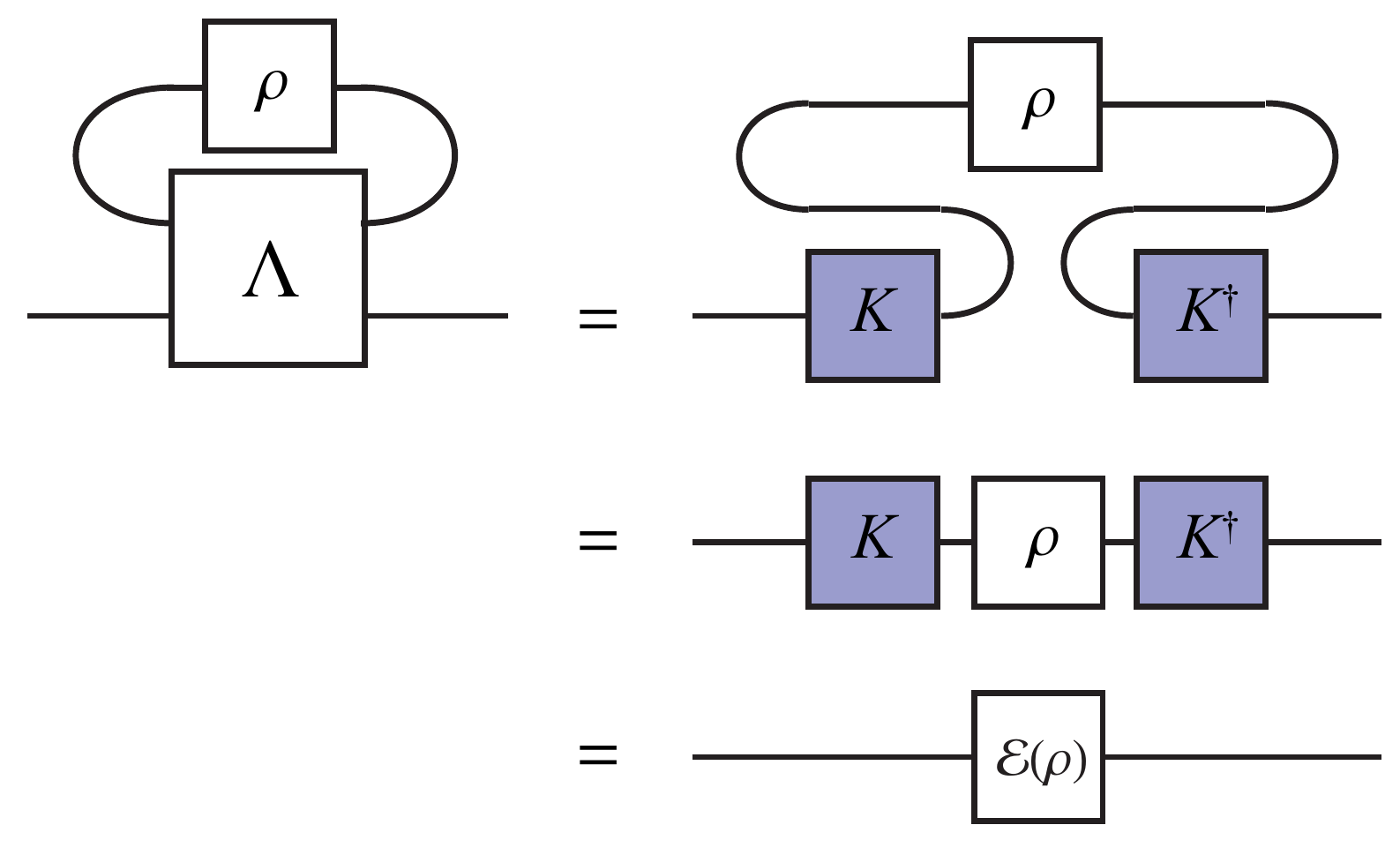}\label{fig:choi-evo-proof}    
 \end{center}
\begin{remark}
The structural properties the Choi-matrix $\Lambda$ must satisfy for the linear map $\2 E$ to be hermitian-preserving (HP), trace-preserving (TP),  and completely positive (CP) are~\cite{Bengtsson2006}:
\begin{eqnarray}
\2 E \mbox{ is HP } 
	&\Longleftrightarrow&  \Lambda^\dagger=\Lambda \label{eqn:choi-HP}\hspace{10em}\\
	&\Longleftrightarrow&
	\parbox[c]{1em}{\includegraphics[width=0.3\textwidth]{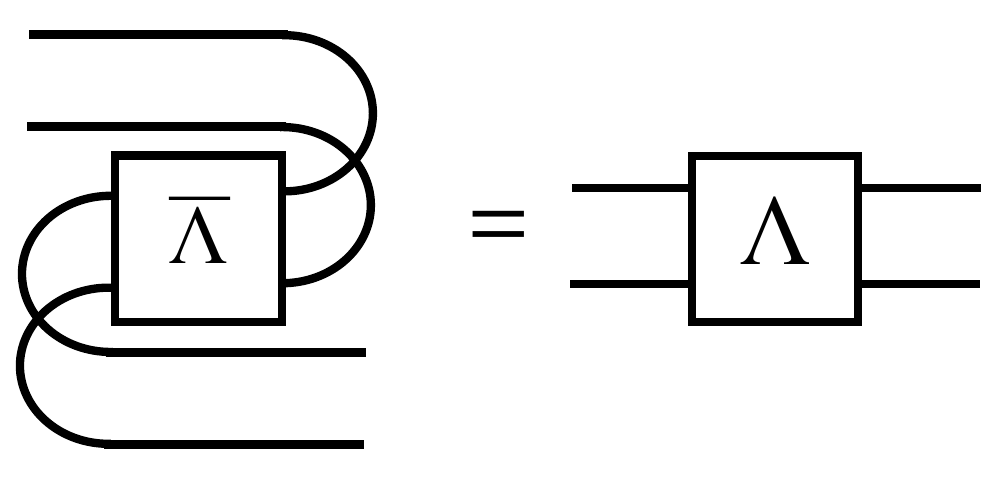}}\\
\2 E \mbox{ is TP } 
	&\Longleftrightarrow& \Tr_{\2 Y}[\Lambda]=\I_{\2 X}\label{eqn:choi-TP}\\
	&\Longleftrightarrow&
	\parbox[c]{1em}{\includegraphics[width=0.3\textwidth]{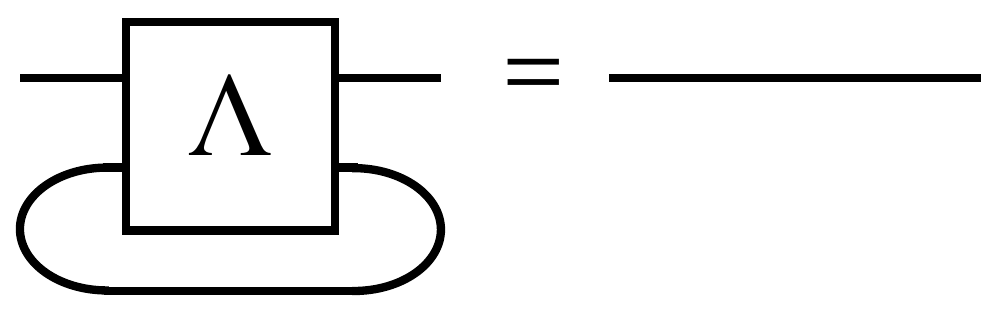}}\label{eqn:choi-tp}\\
\2 E \mbox{ is CP } 
	&\Longleftrightarrow&  \Lambda\ge 0.\label{eqn:choithm}
\end{eqnarray}
\end{remark}

The Choi-matrix for a given map $\2 E$ is unique with respect to the isomorphism convention chosen. We will provide tensor networks to illustrate a close relationship to the superoperator formed with the corresponding vectorization convention in \S~\ref{sec:choi-sop}. The Choi-matrix finds practical utility as one can check the complete-positivity of the map $\2 E$ by computing the eigenvalues of $\Lambda$. It is also necessary to construct the Choi-matrix for a given superoperator to transform to the other representations. 

Due to the similarity of vectorization and the Choi-Jamio{\l}kowski isomorphism, one could then ask what happens if we vectorize in a different basis. This change of basis of the Choi-matrix is more commonly known as the $\chi$-matrix which we will discuss next. However, such a change of basis does not change the eigen-spectrum of a matrix, so the positivity criteria in \eqref{eqn:choithm} holds for any basis.

Another desirable property of Choi matrices is that they can be directly determined for a given system experimentally by \emph{ancilla assisted process tomography (AAPT)}~\cite{DAriano2003,White2003}. This is an experimental realization of the Choi-Jamio{\l}kowski isomorphism which we discuss in detail in \S~\ref{sec:aapt}.

\section{Process Matrix Representation}
\label{sec:chi}

As previously mentioned, one could consider a change of basis of the Choi-matrix analogous to that for the superoperator. The resulting operator is more commonly known as the \emph{$\chi$-matrix} or \emph{process matrix}~\cite{NC}.

\begin{definition}
Consider Hilbert spaces $\2 X\cong\C^{d_x}$, $\2 Y\cong\C^{d_y}$ and let $D=d_x d_y$, and $\2Z \cong \C^D$. If one chooses an orthonormal operator basis  $\{\sigma_\alpha: \alpha=0,...,D-1\}$ for ${\mathcal L}(\2X,\2Y)$, then a CPTP map $\2 E\in C(\2X,\2Y)$ may be expressed in terms of a matrix $\chi\in {\mathcal L}(\2Z)$ as
\begin{eqnarray}
\2 E(\rho)
&=&\sum_{\alpha,\beta=0}^{D-1} \chi_{\alpha\beta} \sigma_\alpha \rho \sigma_\beta^\dagger
\label{eqn:chi-evo}
\end{eqnarray}
where the process matrix $\chi$ is unique with respect to the choice of basis $\{\sigma_\alpha\}$.
\end{definition}

The process matrix with respect to an orthonormal operator basis $\{\sigma_\alpha\}$ is related to the Choi matrix by the change of basis
\begin{eqnarray}
\chi	&=& T_{c\rightarrow\sigma} \cdot \Lambda \cdot T^\dagger_{c\rightarrow\sigma}
\label{eqn:choi-to-chi}\\
\Rightarrow \Lambda&=& \sum_{\alpha,\beta} \chi_{\alpha\beta} \dketdbra{\sigma_\alpha}{\sigma_\beta}
\end{eqnarray}
where $T_{c\rightarrow\sigma}$ is the vectorization change of basis operator introduced in \S~\ref{sec:vec}. Thus evolution in terms of the $\chi$-matrix is analogous to our Choi evolution as shown below:
\begin{center}
\includegraphics[width=0.5\textwidth]{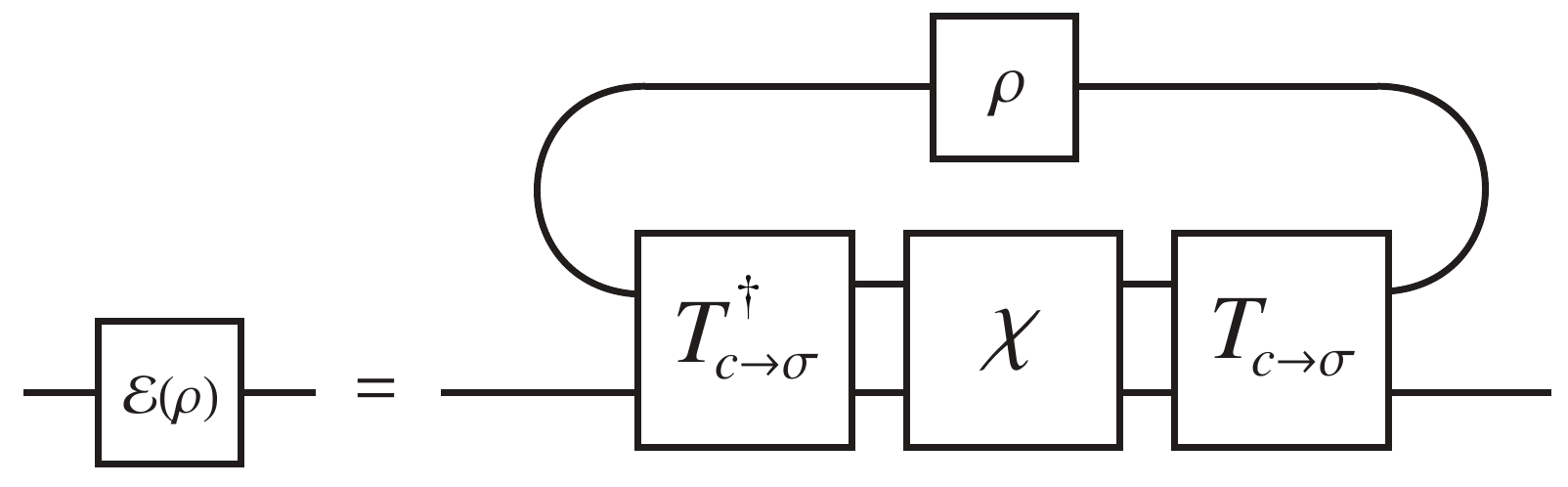}
\label{fig:chi-evo}    
\end{center}

Starting with the expression for process matrix evolution in \eqref{eqn:chi-evo}, the  graphical proof asserting the validity of \eqref{eqn:choi-to-chi} is as follows
\begin{center}
\includegraphics[width=0.55\textwidth]{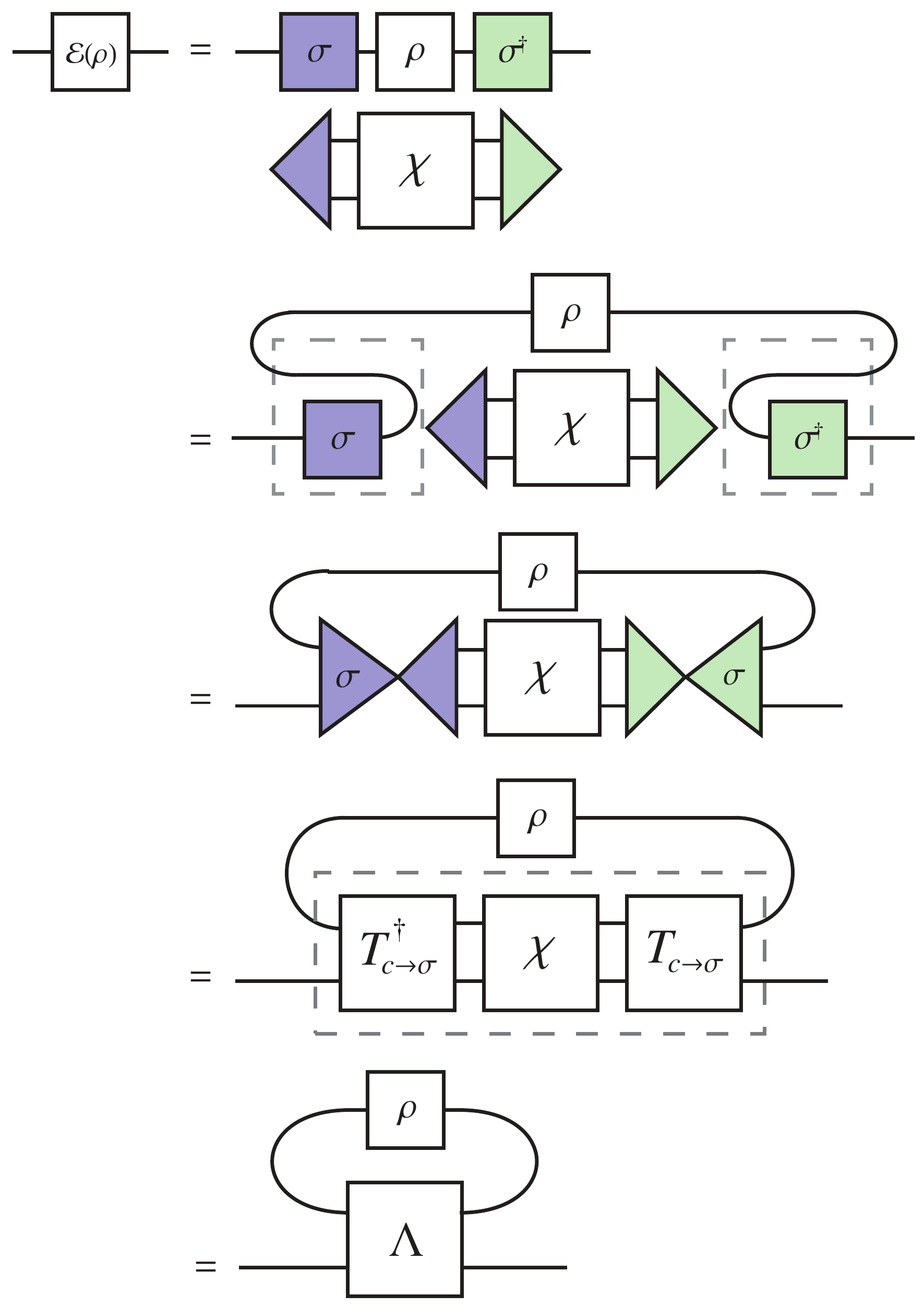} \label{fig:choi-to-chi}   
\end{center}

We also see that if one forms the process matrix with respect to the col-vec basis $\sigma_\alpha = E_{j,i}$ where $\alpha = i+dj$ and $d$ is the dimension of $\2 H$, then we have $\chi=\Lambda$.

\begin{remark}
Since the process matrix is a unitary transformation of the Choi-matrix, it shares the same structural conditions for hermitian preservation and complete-positivity as for the Choi-matrix given in~\eqref{eqn:choi-HP} and \eqref{eqn:choithm} respectively.
The condition for it to be trace preserving may be written in terms of the matrix elements and basis however. These conditions are
\begin{eqnarray}
\2 E \mbox{ is TP} &\Longleftrightarrow&
\Tr_{\2 Y}\left[T^\dagger_{c\rightarrow\sigma}\chi T_{c\rightarrow\sigma}\right]=\I_{\2 X}\\
&\Longleftrightarrow& \sum_{\alpha,\beta} \chi_{\alpha,\beta} \sigma_{\alpha}^T \overline{\sigma}_\beta = \I_{\2 X}\\
\2 E \mbox{ is HP} &\Longleftrightarrow& \chi^\dagger=\chi\\
\2 E \mbox{ is CP} &\Longleftrightarrow& \chi\ge0.
\end{eqnarray}
\end{remark}

To convert a process-matrix $\chi$ in a basis $\{\sigma_\alpha\}$ to another orthonormal operator basis $\{\omega_\alpha\}$, we may use the same change of basis transformation as used for the superoperator change of basis in \S~\ref{sec:sop}. That is
\begin{eqnarray}
\chi^\omega	&=& T_{\sigma\rightarrow\omega} \cdot \chi^\sigma \cdot T^\dagger_{\sigma\rightarrow\omega}\\
		&=& \sum_{\alpha\beta} \chi^\sigma_{\alpha\beta}\,\, \dket{\sigma_\alpha}_\omega\dbra{\sigma_\beta}_\omega
\end{eqnarray}
where the superscripts $\sigma,\omega$ denote the basis of the $\chi$-matries. This is illustrated as
\begin{center}
\includegraphics[width=0.45\textwidth]{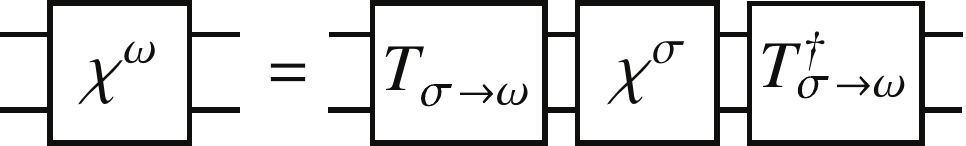}  \label{fig:chi-change-basis}    
\end{center}

\section{Transforming between representations}
\label{sec:trans}

\subsection{Transformations between the Choi-matrix and superoperator representations}
\label{sec:choi-sop}

The Choi-matrix and superoperator are naturally equivalent under the reshuffling wire bending duality introduced in \S~\ref{sec:bipartite}.  In the col (row) convention we may transform between the two by applying the bipartite col (row)-reshuffling operation $R$ introduced in \S~\ref{sec:bipartite}. Let $\Lambda\in \LXY$ be the Choi-matrix, and $\2 S\in \Lx{\XX,\YY}$ be the superoperator, for a map $\2 E\in \Tx{X,Y}$. Then we have
\begin{eqnarray}
\Lambda &=& \2S^{R} \quad\quad\2 S = \Lambda^{R}
\end{eqnarray}
The tensor networks for these transformations using the col convention are

\begin{center}
\begin{tabular}{c|c}
\includegraphics[width=.4\textwidth]{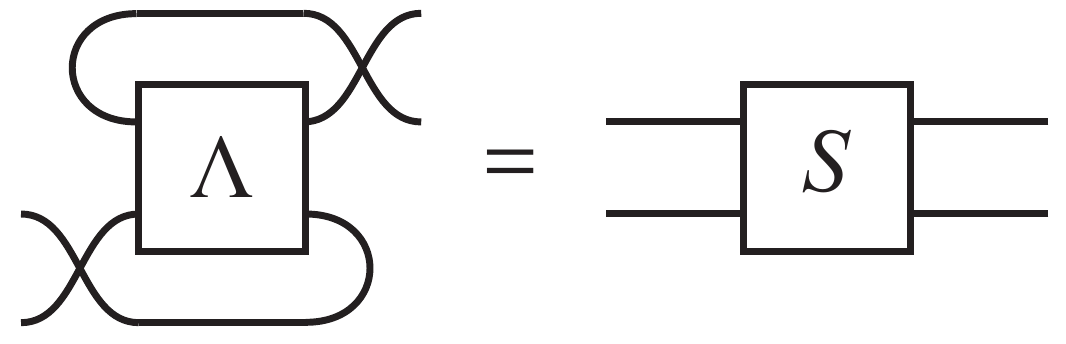}
\hspace{2em} & \hspace{2em}
\includegraphics[width=.4\textwidth]{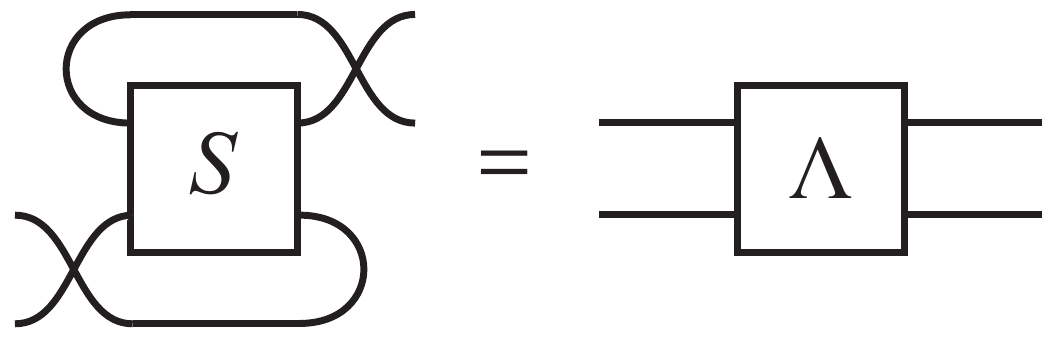}
\end{tabular}
\label{fig:choi-sop}
\end{center}
In terms of tensor components we have
\begin{eqnarray}
\Lambda_{mn,\mu\nu} 	&=& \2 S_{\nu n,\mu m}
\end{eqnarray}
where $m,n$ and $\mu,\nu$ index the standard bases for $\2 X$ and $\2 Y$ respectively. Graphical proofs of the relations $\Lambda^{R_c}=\2 S$ and $\2 S^{R_c}=\Lambda$ are given below

\begin{center}
\begin{tabular}{c|c}
\includegraphics[width=0.4\textwidth]{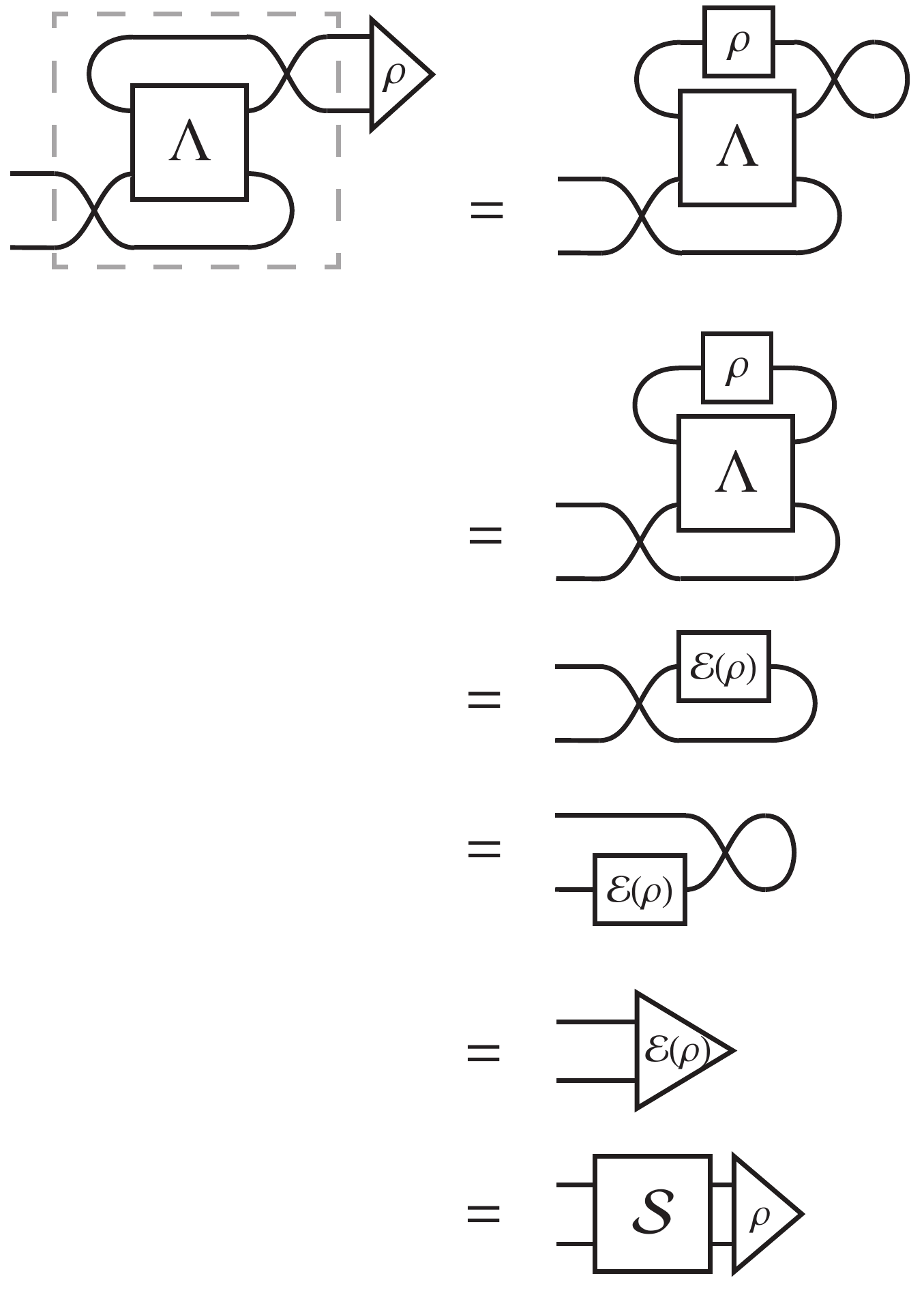} 
\quad\quad&\quad
\includegraphics[width=0.4\textwidth]{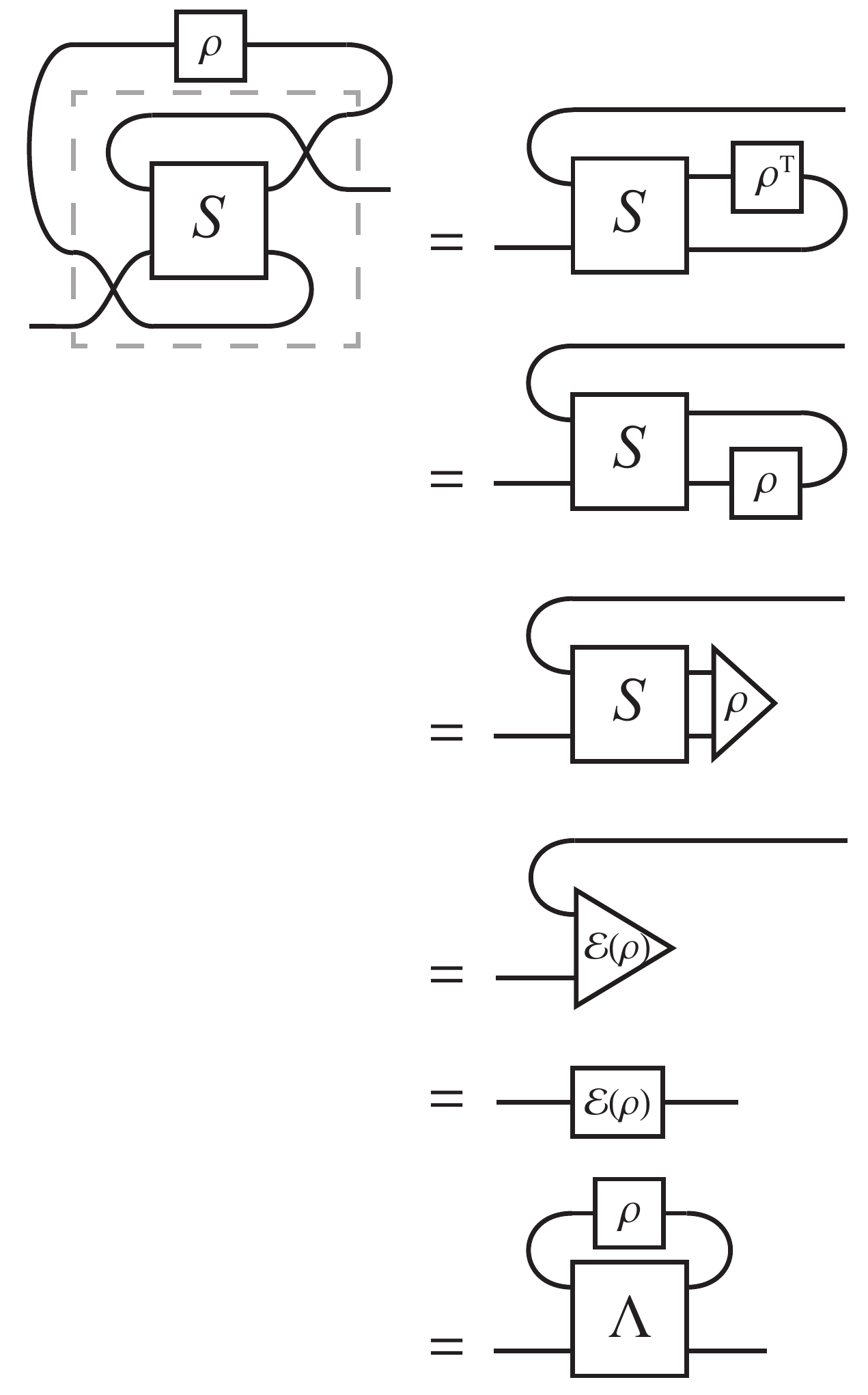}
\end{tabular}
 \label{fig:choi-sop-proof}
\end{center}

\begin{remark}
To transfer between a $\chi$-matrix with respect to an arbitrary operator basis, and a superoperator with respect to an arbitrary vectorization basis, we must first convert both to col-vec (or row-vec) convention and then proceed by reshuffling.
\end{remark}
Note that reshuffling is its own inverse, i.e $(\Lambda^R)^R=\Lambda$, hence the solid bi-directional arrow connecting the Choi-matrix and superoperator representations in Fig.~\ref{fig:cpreps}. This is the only transformation between the representations we consider which is linear, bijective, and self-inverse.


\subsection{Transformations to the superoperator representation}
\label{sec:to-sop}

Transformations to the superoperator from the Kraus and system-environment representations of a CP-map are also accomplished by a wire-bending duality, in this case vectorization. However, unlike the bijective equivalence of the Choi-matrix and superoperator under the reshuffling duality, the vectorization duality is only surjective.

If we start with a Kraus representation for a CPTP map $\2 E\in\Cx{X,Y}$ given by $\{K_\alpha: \alpha=0,...,D-1 \}$, with $K_\alpha\in \Lx{X,Y}$, we can construct the superoperator $\2 S \in \Lx{\XX,\YY}$ by 
\begin{eqnarray}
\2 S &=&\sum_{\alpha=0}^{D-1} \overline{K}_\alpha\otimes K_\alpha.
\label{eqn:kraus-to-choi}
\end{eqnarray}
The corresponding tensor network is  
\begin{center}
\includegraphics[width=0.35\textwidth]{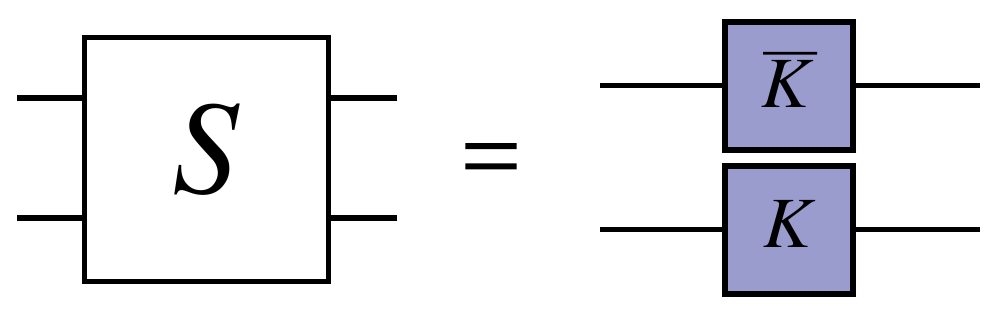}
\label{fig:kraus-sop}
\end{center}
and the graphical proof of this relationship follows directly from Roth's lemma:
\begin{center}
\includegraphics[width=0.9\textwidth]{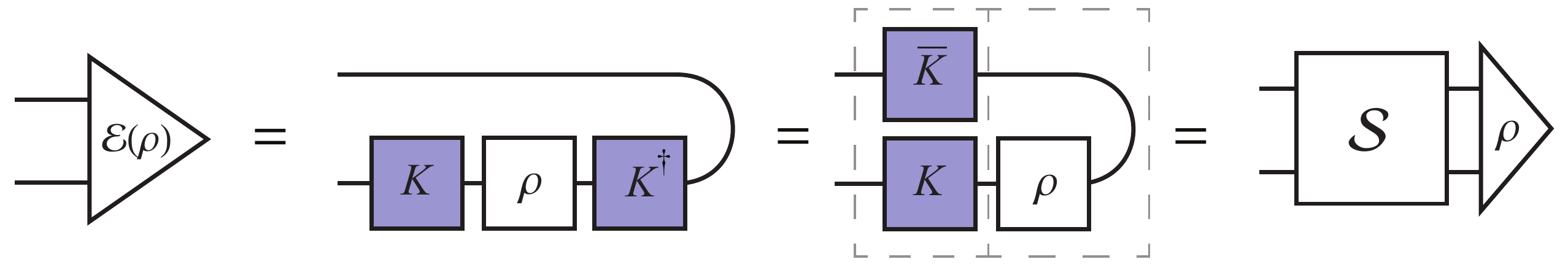}
\label{fig:kraus-sop-proof}
\end{center}

Starting with a system-environment (or Stinespring) representation of a map $\2 E\in\Cx{X,Y}$ with input and output system Hilbert spaces $\2 X\cong \C^{d_x}$ and $\2Y\cong \C^{d_y}$ respectively, and environment Hilbert space $\2 Z\cong \C^D$ with $1\le D \le d_x d_y$, we may construct the superoperator for this map from the joint system-environment unitary $U$ and initial environment state $\ket{v_0}$ by
\begin{eqnarray}
\2 S &=& \sum_{\alpha} \bra{\alpha}\overline{U}\ket{v_0}\otimes \bra{\alpha}U\ket{v_0}, \label{eqn:se-to-sop}
\end{eqnarray}
where $\{\ket{\alpha}:\alpha=0,...,D-1\}$ is a real, orthonormal basis for $\2 Z$. The corresponding tensor network is 
\begin{center}
\includegraphics[width=0.35\textwidth]{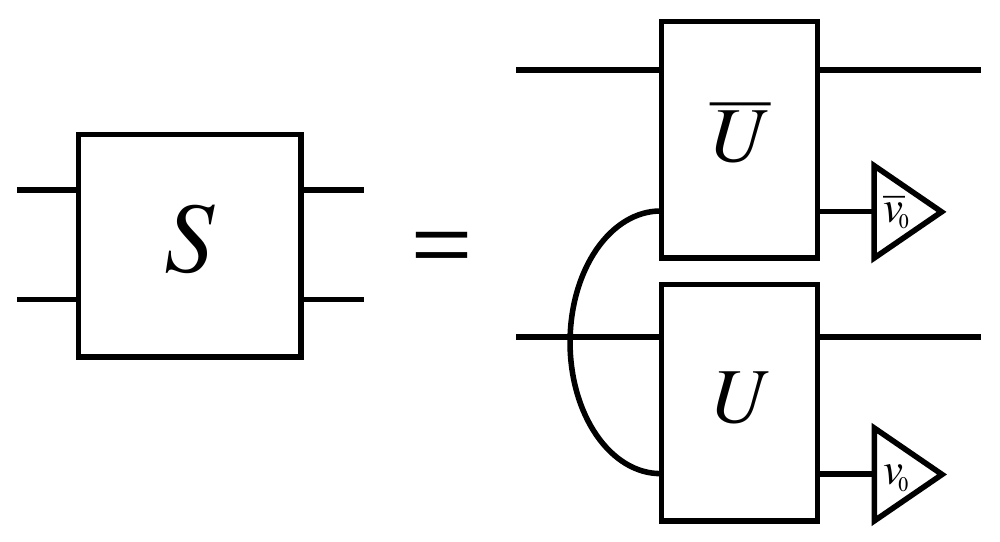}
 \label{fig:se-sop}
\end{center}
 As with the Kraus to superoperator transformation, the proof of \eqref{eqn:se-to-sop} follows from Roth's lemma.

Note that while the vectorization wire bending duality is invertible, these transformations to the superoperator from the Kraus and system-environment representations are single directional. In both cases injectivity fails as the superoperator is unique, while both the Kraus and system-environment representations are not. Hence we have solid single directional arrows in Fig.~\ref{fig:cpreps} connecting both the Kraus and system-environment representations to the superoperator. The inverse transformation from a superoperator to the Kraus or system-environment representation requires a canonical decomposition of the operator $\2 S$ (via first reshuffling to the Choi-matrix), which is detailed in Sections~\ref{sec:to-kraus} and \ref{sec:to-se}.


\subsection{Transformations to the Choi-matrix representation}
\label{sec:to-choi}

Transforming to the Choi-matrix from the Kraus and system-environment representations is accomplished via a wire-bending duality which captures the Choi-Jamio{\l}kowski isomorphism. As with the case of transforming to the superoperator, this duality transformation is surjective but not injective.

Given a set of Kraus matrices $\{ K_\alpha: \alpha=0,...,D-1\}$ where $K_\alpha\in\Lx{X,Y}$ for a CPTP-map $\2 E\in\Cx{X,Y}$, one may form the Choi-Matrix $\Lambda$ as was previously illustrated in \eqref{fig:choi-jamiolkowski} in \S~\ref{sec:choi}. In terms of both Dirac notation and tensor components we have:

\begin{eqnarray}
\Lambda	&=& \sum_{i,j}\left( \ketbra{i}{j}\otimes \sum_\alpha K_\alpha\ketbra{i}{j}K^\dagger_\alpha\right)\\
		&=& \sum_{\alpha} \dketdbra{K_\alpha}{K_\alpha}\\
\Lambda_{mn,\mu\nu}&=&\sum_\alpha (K_\alpha)_{\mu m}(\overline{K}_\alpha)_{\nu n}.
\end{eqnarray}
where $\{\ket{i}\}$ is an orthonormal basis for $\2 X$, $m,n$ index the standard basis for $\2 X$, and $\mu,\nu$ index the standard basis for $\2 Y$.

Given a system-environment representation with joint unitary $U\in \Lx{\XZ}$ and initial environment state $\ket{v_0}\in \2 Z$ we have
\begin{eqnarray}
\Lambda	&=&	\sum_{i,j}\left(\ketbra{i}{j}\otimes 
			\Tr_{\2 Z}\left[U\ketbra{i}{j}\otimes\ketbra{v_0}{v_0}U^\dagger\right]\right)
\end{eqnarray}
Graphically this is given by
\begin{center}
\includegraphics[width=0.4\textwidth]{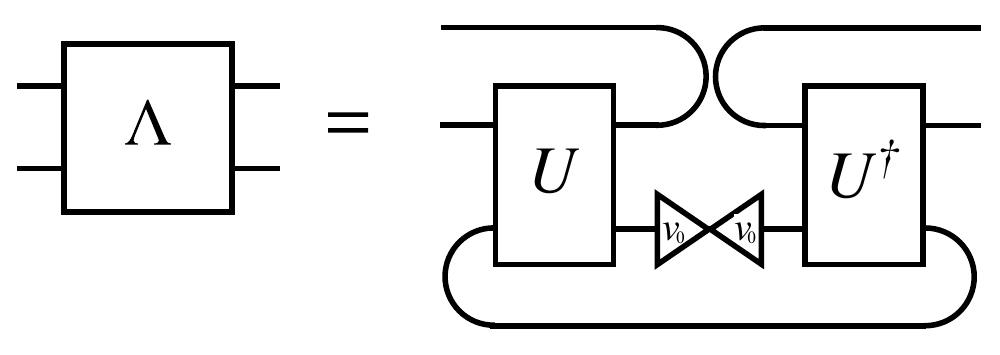}
\label{fig:choi-se}
 \end{center}
The proof of these transformations follow directly from the definition of the Choi-matrix in \eqref{eqn:col-choi}, and the tensor networks for the evolution via the Kraus or system-environment representations given in \eqref{fig:kraus-evo} and \eqref{fig:sys-env} respectively. As with the vectorization transformation to the superoperator discussed in \S~\ref{sec:to-sop}, even though the Choi-Jamio{\l}kowski isomorphism is linear these transformations are single directional as injectivity fails due to the non-uniqueness of both the Kraus and system-environment representations. Hence we have the solid single-directional arrows connecting both the Kraus and system-environment representations to the Choi-matrix in Fig.~\ref{fig:cpreps}. 
 
This completes our description of the linear transformations between the representations of CP-maps in Fig.~\ref{fig:cpreps}. We will now detail the non-linear transformations to the Kraus and system environment representations. 


\subsection{Transformations to the Kraus Representation}
\label{sec:to-kraus}

We may construct a Kraus representations from the Choi matrix or system environment representation by the non-linear operations of spectral-decomposition and partial trace decomposition respectively. To construct a Kraus representation from the Superoperator however, we must first reshuffle to the Choi matrix. 

To construct Kraus matrices from a Choi matrix we first recall the graphical Spectral decomposition we introduced as an example of our color summation convention. If  $\2 E$ is CP, by \eqref{eqn:choithm} we have $\Lambda\ge0$ and hence the spectral decomposition of the Cho -matrix is given by
\begin{equation}
\Lambda=\sum_\alpha \mu_\alpha \ketbra{\phi_\alpha}{\phi_\alpha}, 
\end{equation}
where  $\mu_\alpha \ge 0$ are the eigenvalues, and $\ket{\phi_\alpha}$ the eigenvectors of $\Lambda$. Hence we can define Kraus operators $K_\alpha= \lambda_\alpha A_\alpha$ where $\lambda_\alpha = \sqrt{\mu_\alpha}$ and $A_\alpha$ is the unique operator satisfying $\dket{A_\alpha}=\ket{\phi_\alpha}$ as illustrated: 
\begin{center}
\includegraphics[width=0.35\textwidth]{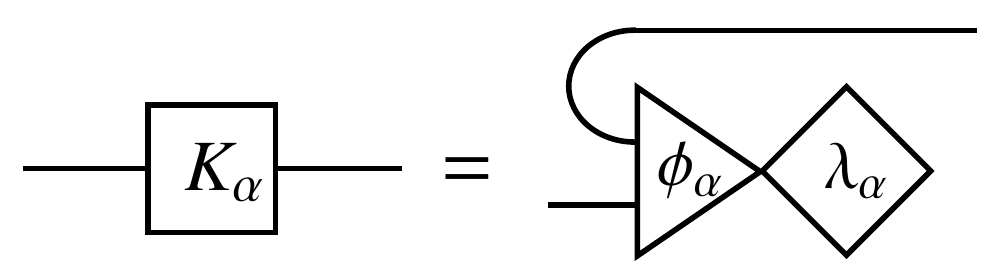}
\label{fig:choi2kraus}
\end{center}
The number of Kraus operators will be equal to the rank $r$ of the Choi matrix, where $1\le r\le \mbox{dim}(\Lx{X,Y})$. The graphical proof of this transformation is as follows:
 \begin{center}
 \includegraphics[width=0.42\textwidth]{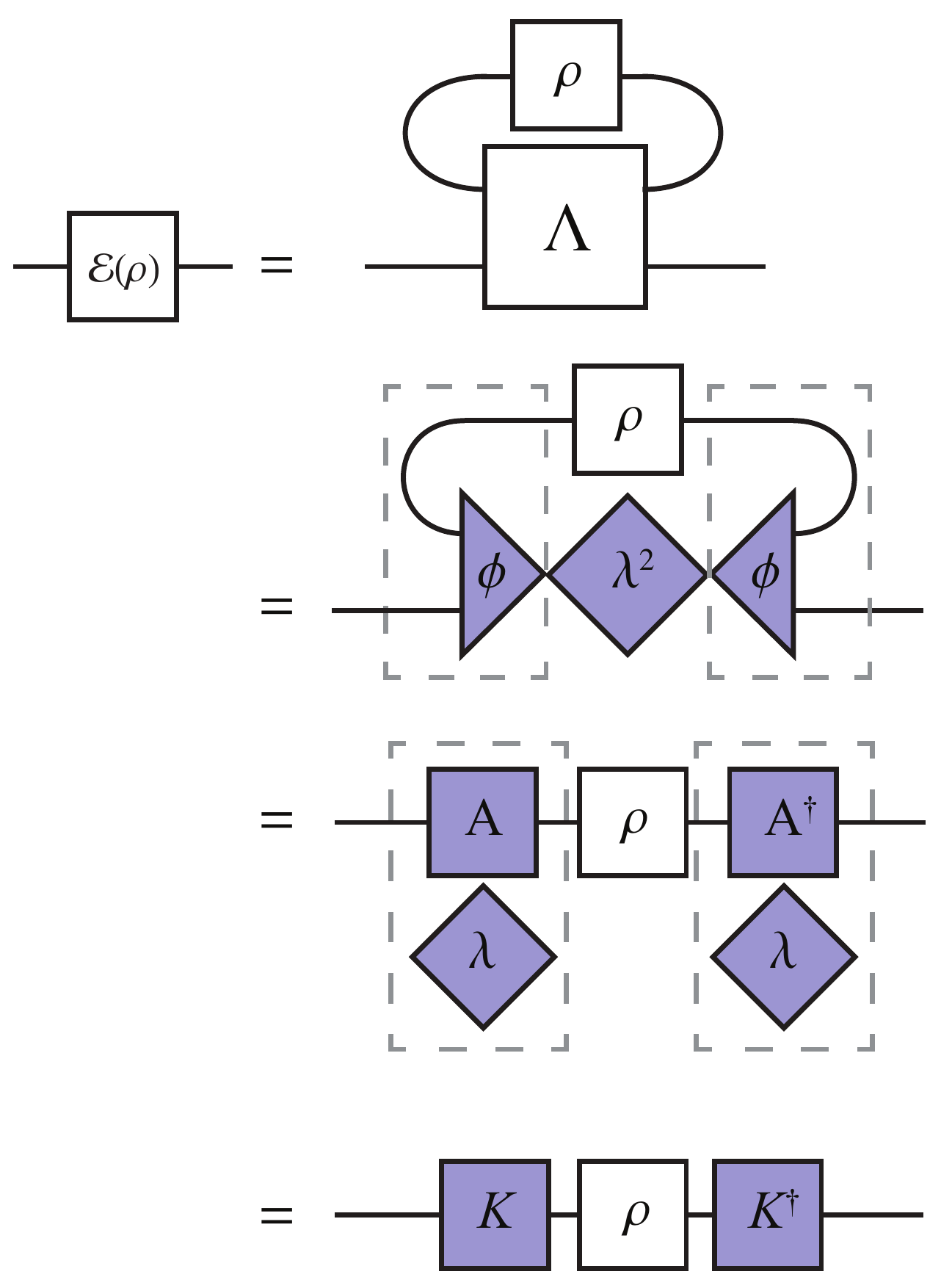} \label{fig:choi-kraus-proof}
 \end{center}
The proof that Kraus operators satisfy the completeness relation follows from the trace preserving property of $\Lambda$ in \eqref{eqn:choi-TP}: 
\begin{center}
\includegraphics[width=0.43\textwidth]{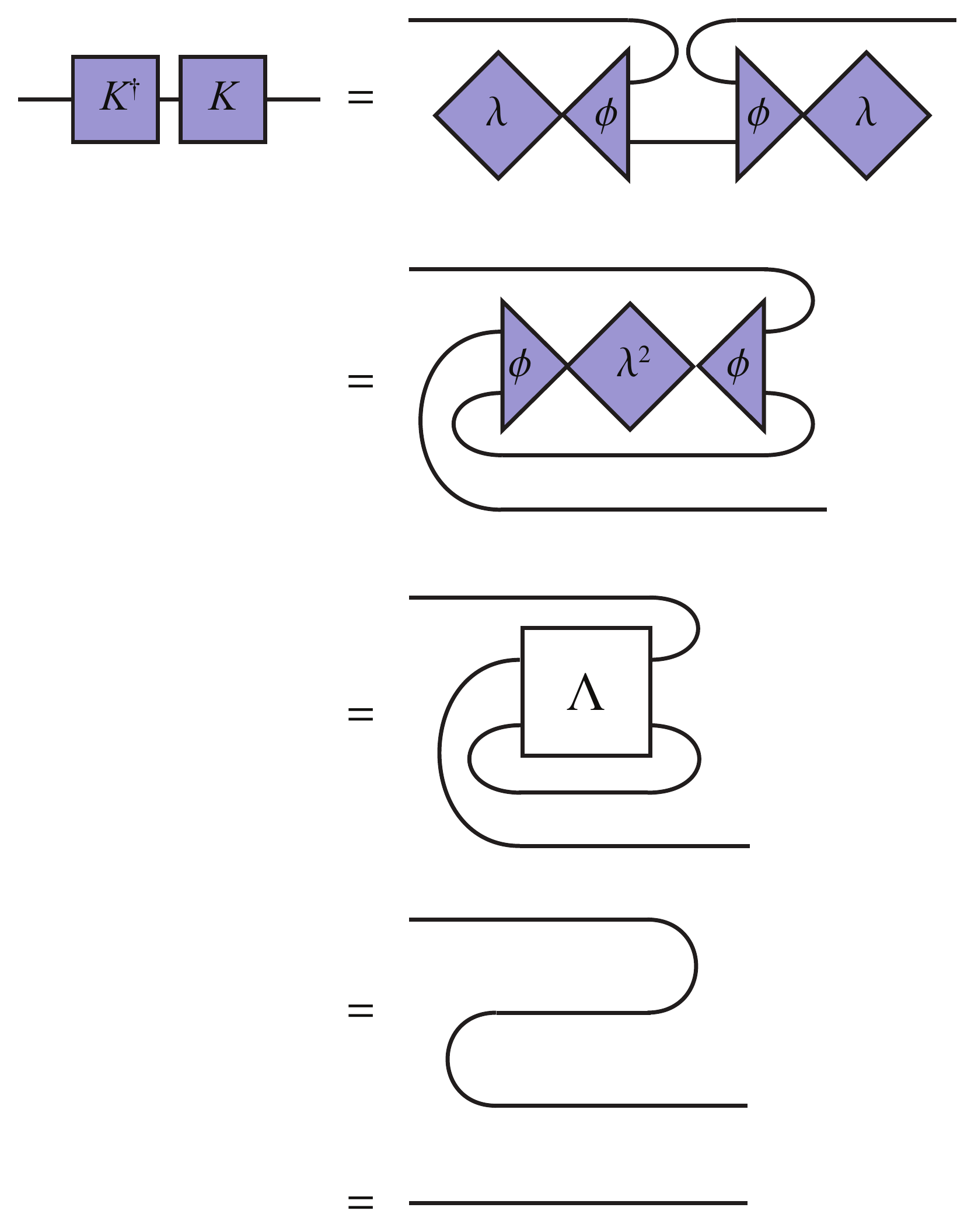}
\label{fig:completeness-proof}
\end{center}
Note that since $\Lambda$, and the $\chi$-matrix are related by a unitary change of basis, the Kraus representations constructed from their respective spectral decompositions will also be related by the same transformation. Each will give a unitarily equivalent Canonical Kraus representation of $\2 E$ since the eigen-vectors are orthogonal. Thus we have described the arrow in Fig.~\ref{fig:cpreps} connecting the Choi matrix to the Kraus representation. It is represented as a dashed arrow as it involves a non-linear decomposition, and is single directional as this representation transformation is injective, but not surjective. Surjectivity fails as we can only construct the canonical Kraus representations for $\2 E$. The reverse transformation is given by the Jamio{\l}kowski isomorphism described in \S~\ref{sec:to-choi}.

Starting with a system-environment representation with joint unitary $U\in \Lx{\XZ}$ and initial environment state $\ket{v_0}\in\2 Z$, we first choose an orthonormal basis $\{\ket{\alpha}:\alpha=0,...,D-1\}$ for $\2 Z$. We then construct the Kraus representation by decomposing the partial trace in this basis as follows
\begin{eqnarray}
\2 E(\rho) 
	&=&\Tr_E\left[U\left(\rho\otimes\ketbra{v}{v}\right)U^\dagger\right] \\
	&=& \sum_{\alpha=0}^{D-1} \bra{\alpha}U\ket{v_0} \rho \bra{v_0}U^\dagger\ket{\alpha}\\
	&=& \sum_{\alpha=0}^{D-1} K_\alpha \rho K_\alpha^\dagger.
\end{eqnarray}
Hence we may define Kraus matrices 
\begin{equation}
K_\alpha = \bra{\alpha}U\ket{v_0}
\label{eqn:kraus-se-proof}
\end{equation}
 leading to the tensor network 
 \begin{center}
\includegraphics[width=0.35\textwidth]{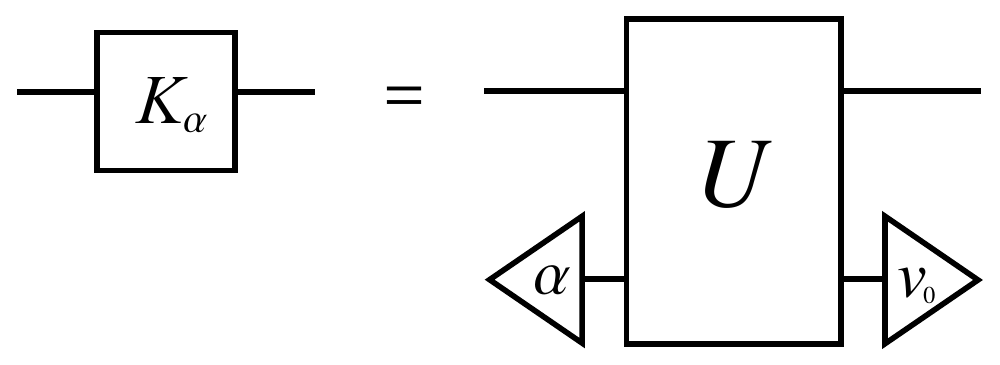} \label{fig:kraus-sys-env}
 \end{center}
The graphical proof of \eqref{eqn:kraus-se-proof} and \eqref{fig:kraus-sys-env} is as follows
\begin{center}
\includegraphics[width=0.45\textwidth]{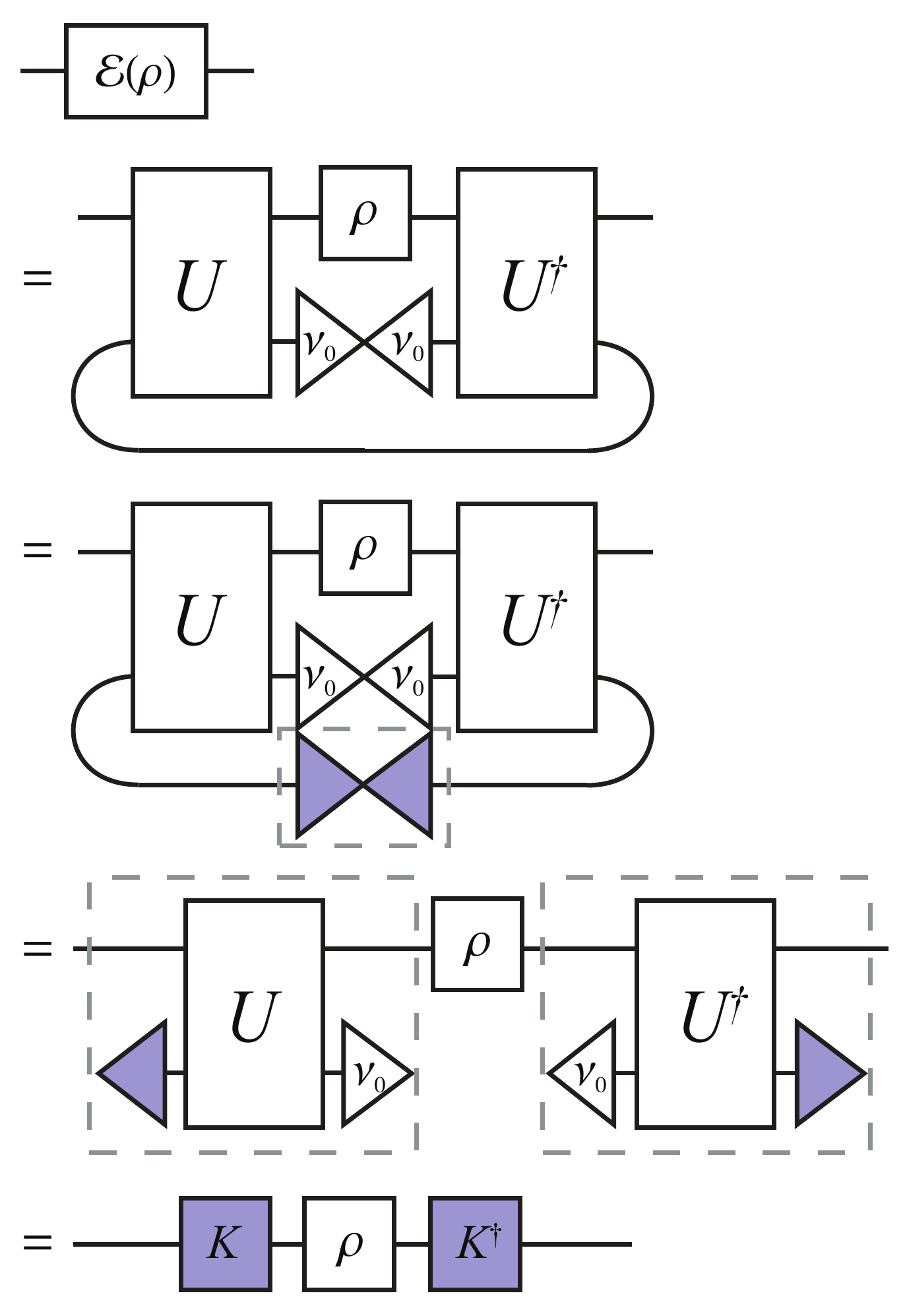} 
\label{fig:kraus-sys-env-proof}
\end{center}
 
Though the Kraus and system-environment representations are both non-unique, for a fixed environment basis this partial trace decomposition is an injective transformation between the Kraus and Stinespring representations (or equivalently between the Kraus and system-environment representations when the joint unitary is restricted to a fixed initial state of the environment). 
To see this  let $\{K_\alpha\}$ and $\{J_\alpha\}$ be two Kraus representations for a CPTP-map $\2 E\in\Cx{X,Y}$, constructed from Stinespring representations $A$ and $B$ respectively. We have that
\begin{eqnarray}
K_\alpha &=& J_\alpha\\
&\Leftrightarrow& (K_\alpha)_{ij}= (J_\alpha)_{ij}\\
&\Leftrightarrow& A_{i\alpha,j} = B_{i\alpha,j}\\
&\Leftrightarrow& A=B.
\end{eqnarray}
Since the Stinespring operators satisfy $A=U\ket{v_0}$ and $B=V\ket{v_0}$ for some joint unitaries $U$ and $V$, we must have that $U_0=V_0$ where $U_0$ and $V_0$ are the joint unitaries restricted to the subspace of the environment spanned by $\ket{v_0}$.

This transformation can be thought of as the reverse application of the Stinespring dilation theorem, and hence for a fixed choice of basis (and initial state of the environment) it is invertible. The inverse transformation is the Stinespring dilation, and as we will show in \S~\ref{sec:to-se}, since the inverse transformation is also injective this transformation is a bijection. However, since the partial trace decomposition involves a choice of basis for the environment it is non-linear --- hence we use a dashed bi-directional arrow to represent the transformation from the system-environment representation to the Kraus representation in Fig.~\ref{fig:cpreps}. 


\subsection{Transformations to the system-environment representation}
\label{sec:to-se}

We now describe the final remaining transformation given in Fig.~\ref{fig:cpreps}, the bijective non-linear transformation from the Kraus representation to the system-environment, or Stinespring, representation. The system-environment representation is the most cumbersome to transform to as it involves the unitary competition of a Stinespring dilation of a Kraus representation. Thus starting from a superoperator one must first reshuffle to the Choi-matrix, then from the Choi-matrix description one must then spectral decompose to the canonical Kraus representation, before finally constructing the system-environment as follows.

Let $\{ K_\alpha:\alpha=0,..., D-1\}$, where $1\le D\le \mbox{dim}(\Lx{X,Y})$, be a Kraus representation for the CP-map $\2 E\in \Tx{X,Y}$. Consider an ancilla Hilbert space $\2 Z\cong \C^D$, this will model the environment. If we choose an orthonormal basis for the environment, $\{\ket{\alpha}: \alpha=0,...,D-1\}$, then by Stinesprings dilation theorem we may construct the Stinespring matrix for the CP map $\2 E$ by 
\begin{equation}
A = \sum_{\alpha=0}^{D-1} K_\alpha \otimes\ket{\alpha}.
\label{eqn:kraus2stinespring}
\end{equation}

Recall from \S~\ref{sec:se} that the Stinespring representation is essentially the system-environment representation when the joint unitary operator is restricted to the subspace spanned by the initial state of the environment. Hence if we let $\ket{v_0}\in \2 Z$ be the initial state of the environment system, then this restricted unitary is given by
\begin{equation}
U_0 = \sum_\alpha K_\alpha\otimes\ketbra{\alpha}{v_0},\label{eqn:kraus-to-U}.
\end{equation} 
The tensor networks for \eqref{eqn:kraus2stinespring} and \eqref{eqn:kraus-to-U} are:
\begin{center}
\begin{tabular}{c|c}
\includegraphics[width=0.42\textwidth]{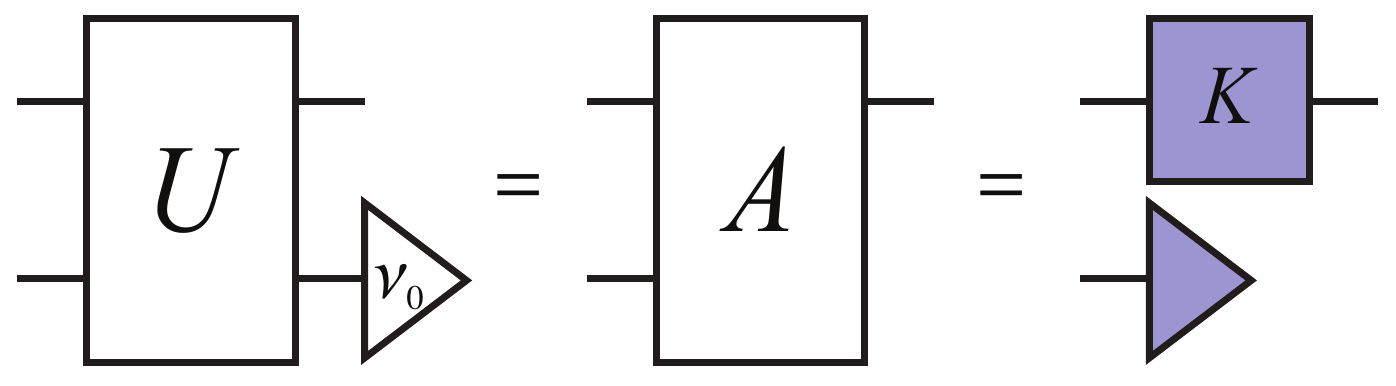}	
\quad\quad&\quad\quad
\includegraphics[width=0.25\textwidth]{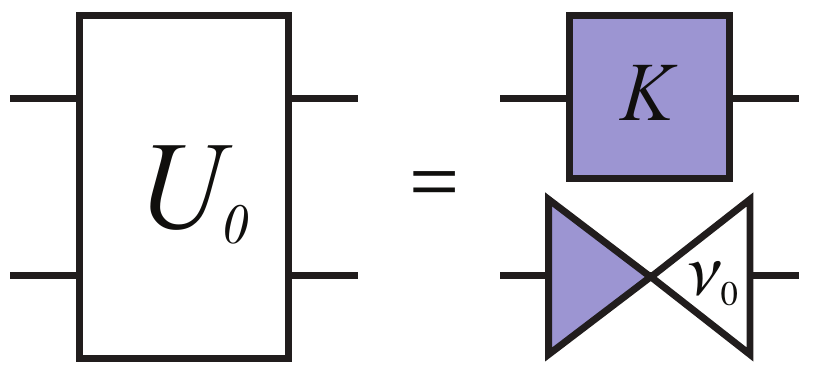}	\\
\footnotesize{(a) Stinespring operator}	&
\footnotesize{(a) Restricted unitary}
\end{tabular}
 \label{fig:sys-env-kraus}
\end{center}
The graphical proof that this construction gives the required evolution of a state $\rho$ is as follows
\begin{center}
\includegraphics[width=0.33\textwidth]{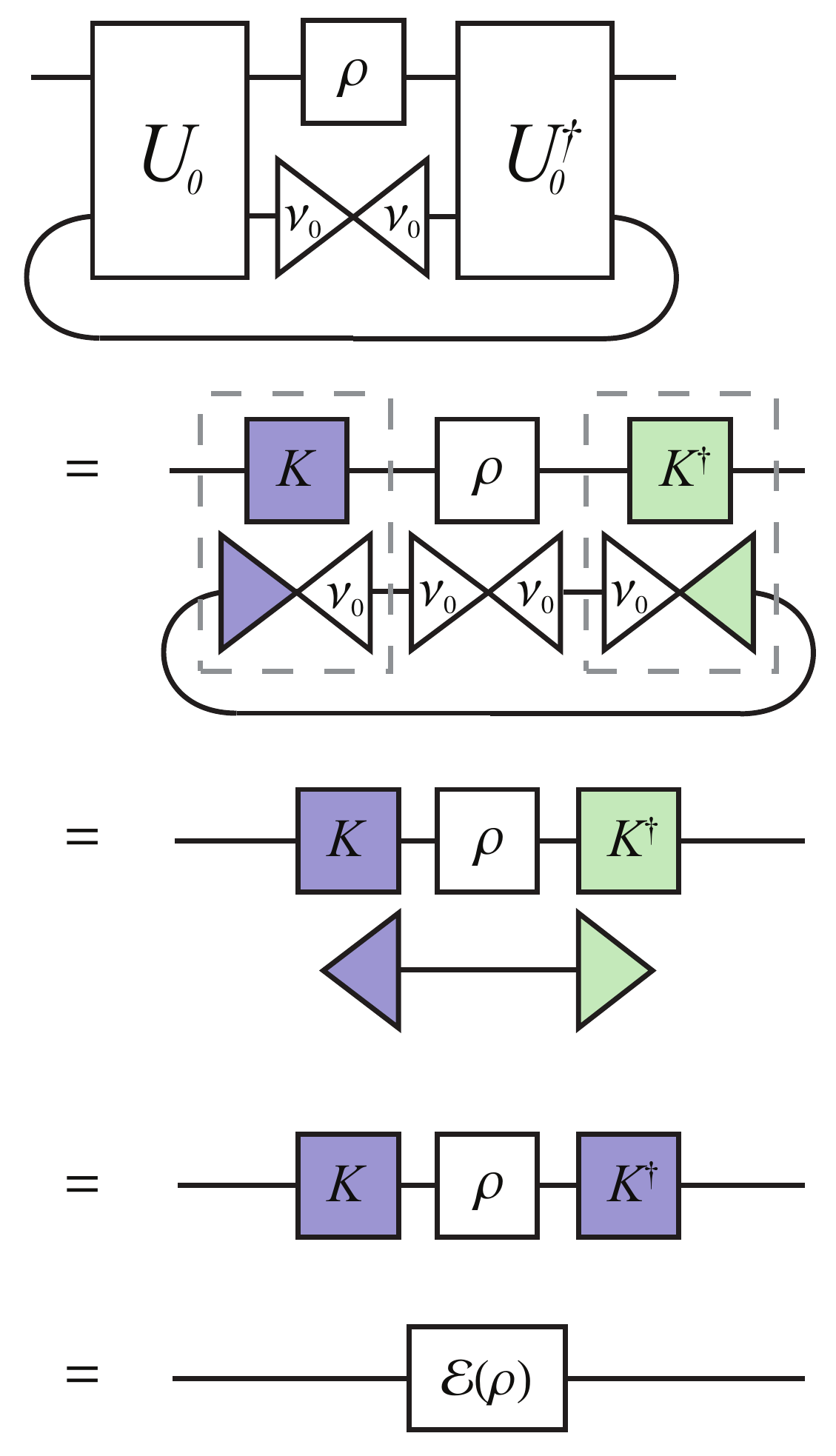}
\label{fig:kraus-se-proof}
\end{center}
In principle, one may complete the remaining entries of this matrix to construct the full matrix description for the unitary $U$, however such a process is cumbersome and is unnecessary to describe the evolution of the CP-map $\2 E$~\cite{Bengtsson2006}.

We have now finished characterizing the final transformations depicted in Fig.~\ref{fig:cpreps} connecting the Kraus representation to the system-environment representation by Stinespring dilation. As previously mentioned in \S~\ref{sec:to-kraus}, for a fixed choice of basis and initial state for the environment, the transformation between Kraus and Stinespring representations is bijective (and hence so is the transformation between Kraus and system-environment representations when restricted to the subspace spanned by the initial state of the environment). Though both these representations are non-unique, by fixing a basis and initial state for the environment we ensure that this transformation is injective. To see this let $U_0$ and $V_0$ be unitaries restricted to the state $\ket{v_0}$ constructed from Kraus representations, $\{K_\alpha\}$ and $\{J_\alpha\}$ respectively, for $\2 E\in \Cx{X,Y}$. Then 
\begin{eqnarray}
U_0=V_0
&\Leftrightarrow&	\sum_\alpha K_\alpha\otimes\ketbra{\alpha}{v_0} = \sum_\alpha J_\alpha \otimes\ketbra{\alpha}{v_0}\\
&\Leftrightarrow&	\sum_\alpha K_\alpha\braket{\beta}{\alpha} = \sum_\alpha J_\alpha \braket{\beta}{\alpha}\\ 
&\Leftrightarrow& K_\beta = J_\beta
\end{eqnarray}
Bijectivity then follows from the injectivity of the inverse transformation --- the previously given construction of a Kraus representation by the partial trace decomposition of a joint unitary operator in \eqref{fig:kraus-sys-env}. 


\section{Applications}
\label{sec:ex}

We have now introduced all the basic elements of our graphical calculus for open quantum systems, and shown how it may be used to graphically depict the various representations of CP-maps, and transformations between representations. In this section we move onto more advanced applications of the graphical calculus. We will demonstrate how to apply vectorization to composite quantum systems, and in particular how to compose multiple superoperators together, and construct effective reduced superoperators from tracing out a subsystem. We also demonstrate the superoperator representation of various linear transformations of matrices. These constructions will be necessary for the remaining examples where we derive a succinct condition for a bipartite state to be used for ancilla assisted process topography, and where we present arguably simpler derivations of the closed form expression for the average gate fidelity and entanglement fidelity of a quantum channel in terms of properties of each of the representations of CP-maps given in \S~\ref{sec:cpmaps}.


\subsection{Vectorization of composite systems}
\label{sec:vec-comp}

We now describe how to deal with vectorization of the general case of composite system of $N$ finite dimensional Hilbert spaces. Let $\2 X_k\cong \C^{d_k}$ be a $d_k$-dimensional complex Hilbert space, and let $\{\ket{i_k}:i_k=0,...,d_k-1\}$ be the standard basis for $\2 X_k$. We are interested in the composite system of $N$ such Hilbert spaces, 
\begin{equation}
\2 X = \2 X_1\otimes...\otimes \2 X_{N}=\bigotimes_{k=1}^{N} \2 X_k
\end{equation}
 which has dimensions $D=\prod_{k=1}^N d_k$. Let $\{\ket{\alpha}: \alpha = 0,..., D-1\}$ be the computational basis for $\2 X$. We can consider vectors in $\2 X$ and the dual space $\2 X^\dagger$ as either 1st-order tensors where their single wire represents an index running over $\alpha$, or as a $N$th-order tensor where each of the $N$ wire corresponds to an individual Hilbert space $\2 X_k$. The correspondence between these two descriptions is made by the concatenation of the composite indices according to the lexicographical order 
 \begin{equation}
 \alpha =\sum_{k=1}^{N} c(k)\, i_k \quad\mbox{where}\quad c(k) := \frac{D}{\prod_{j=1}^k d_j}.
 \end{equation} 
Note that one could also consider the object as any order tensor between 1st and $N$th by the appropriate concatenation of some subset of the the wires.

We define the unnormalized Bell-state on the composite system $\XX$ to be the state formed by the column (or row) vectorization of the identity operator $\I_{\2X} \in {\mathcal L}(\2X)$
\begin{eqnarray}
\dket{\I_{\2 X}} 
	&=& \sum_{\alpha=0}^D \ket{\alpha}\otimes\ket{\alpha}\nonumber\\
	&=& \sum_{i_1=0}^{d_1-1}....\sum_{i_N=0}^{d_N-1} \ket{i_1,..., i_N}\otimes\ket{i_1,...,i_N}.
\label{eqn:comp-bell}
\end{eqnarray}
where $\ket{i_1,...,i_N}:=\ket{i_1}\otimes...\otimes\ket{i_N}$. The tensor network for this state is
\begin{center}
\includegraphics[width=0.25\textwidth]{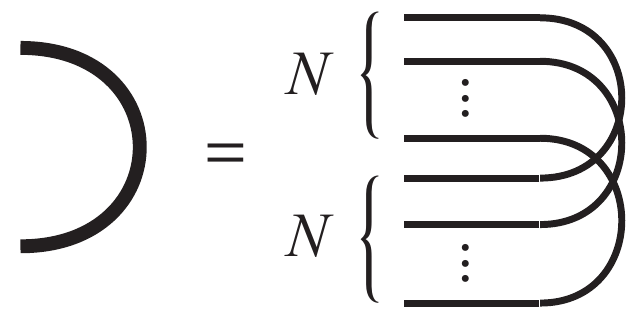}
\label{fig:composite-bell}    
\end{center}

 As with the single system case the column vectorization of a composite linear operator $A \in {\mathcal L}(\2X,\2Y)$, where $\2 Y=\bigotimes_{k=1}^{N} \2 Y_k$,  is given by bending all the system wires upwards, or equivalently by the identity 
 \begin{equation}
 \dket{A}\equiv (\I\otimes A)\dket{\I}.
 \end{equation}
Graphically this is given by
\begin{center}
\includegraphics[width=0.45\textwidth]{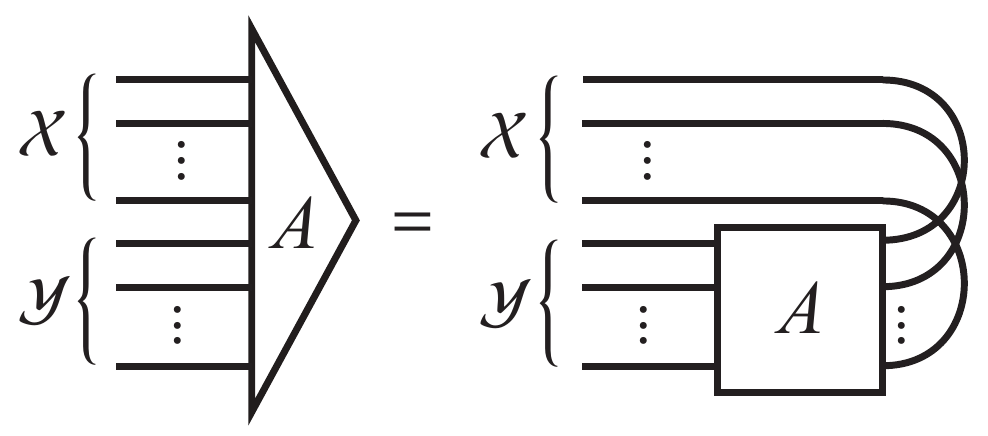}
\label{fig:composite-vec}
\end{center}
 Note that the order of the subsystems for the bent wires is preserved by the vectorization operation.
 
In some situations it may be preferable to consider vectorization of the composite system in terms of vectorization of the individual component systems. Transferring between this component vectorization and the joint-system vectorization can be achieved by an appropriate index permutation of vectorized operators which has a succinct graphical expression when cast in the tensor network framework. 

Suppose the operator $A\in\Lx{X,Y}$, where $\2 X=\bigotimes_{k=1}^N \2 X_k$, $\2 Y=\bigotimes_{k=1}^N \2 Y_k$, is composed of subsystem operators such that 
\begin{equation}
A= A_1\otimes...\otimes A_N
\end{equation} 
where $A_k\in {\mathcal L}(\2 X_k,\2 Y_k)$ for $k=1,...,N$. As previously stated the vectorized composite operator $\dket{A}$ is a vector in the Hilbert space $\XY$. 

We define an operation $\2V_{N}$ called the \emph{unravelling} operation, the action of which unravels a vectorized matrix $\dket{A}= \dket{A_1\otimes\hdots\otimes A_N}$ into the tensor product of vectorized matrices on each individual subsystem $\2 X_k\otimes\2 Y_k$
\begin{equation}
\2V_{N} \dket{A_1\otimes\hdots\otimes A_N}
= \dket{A_1}\otimes\hdots\otimes\dket{A_N} \label{eqn:unravelling}. 
\end{equation}
The inverse operation then undoes the unravelling
\begin{equation}
\2V_{N}^{-1} \big(\dket{A_1}\otimes\hdots\otimes\dket{A_N}\big)
	=\dket{A_1\otimes\hdots\otimes A_n}. 
\end{equation}

More generally the unravelling operation $\2V_{N}$ is given by the map
\begin{eqnarray}
\2V_{N}: \ket{x_{\2 X}}\otimes\ket{y_{\2 Y}}
	&\longmapsto& 
	\bigotimes_{k=1}^{N}\left( \ket{x_k}\otimes\ket{y_{k}}\right)
\end{eqnarray}
where 
$\ket{x_{\2 X}} \equiv\ket{x_1}\otimes\hdots\otimes\ket{x_N},
\ket{y_{\2 Y}}\equiv\ket{y_1}\otimes\hdots\otimes\ket{y_N}$.
Hence we can write $\2 V_N$ in matrix form as
\begin{equation}
\2V_N = \sum_{i_1,\hdots,i_{N}}\sum_{j_1,\hdots,j_N} \ket{i_1,j_{1},\hdots,i_N,j_{N}}\bra{i_{\2 X},j_{\2 Y}}.
\label{eqn:unravelling2}
\end{equation}
where $\ket{i_{\2 X}} \equiv\ket{i_1}\otimes\hdots\otimes\ket{i_N},
\ket{j_{\2 Y}}\equiv\ket{j_1}\otimes\hdots\otimes\ket{j_N}$, and $\ket{i_k},\ket{j_l}$ are the standard bases for $\2 X_k$ and $\2 Y_l$ respectively.

We can also express $\2V_N$ as the composition of SWAP operations between two systems. For the previously considered composite operator $A\in {\mathcal L}(\2X,\2Y)$ we have that $\dket{A}$ has $2N$ subsystems. If we label the SWAP operation between two subsystem Hilbert spaces indexed by $k$ and $l$ by $\mbox{SWAP}_{k:l}$, where $1\le k,l\le2N$, then the unravelling operation can be composed as
\begin{equation}
\2V_{N} = W_{N-1}...W_{1}   
\label{eqn:unravelling3}      
\end{equation}
where
\begin{equation}
W_{k} =\prod_{j=0}^{k-1}\,\mbox{SWAP}_{N-k+2j+1:N-k+2j+2}.
\end{equation}
For example
\begin{eqnarray}
W_1 &=& \mbox{SWAP}_{N:N+1}\\
W_2 &=& \mbox{SWAP}_{N-1:N}\mbox{SWAP}_{N+1:N+2} \nonumber\\
W_{N-1} &=& \mbox{SWAP}_{2:3}\mbox{SWAP}_{4:5}\hdots \mbox{SWAP}_{2N-2:2N-1}. \nonumber
\end{eqnarray}
While this equation looks complicated, it has a more intuitive construction when depicted graphically. The tensor networks for the unravelling operation in the $N=2,3$ and $4$ cases are shown below
\begin{center}
 \begin{tabular}{ccc}
 \includegraphics[width=0.07\textwidth]{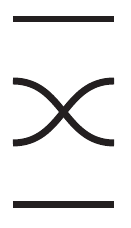}
 \quad\quad&\quad\quad
 \includegraphics[width=0.1\textwidth]{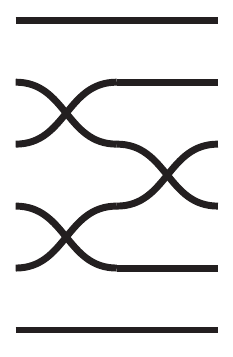}
 \quad\quad&\quad\quad
 \includegraphics[width=0.13\textwidth]{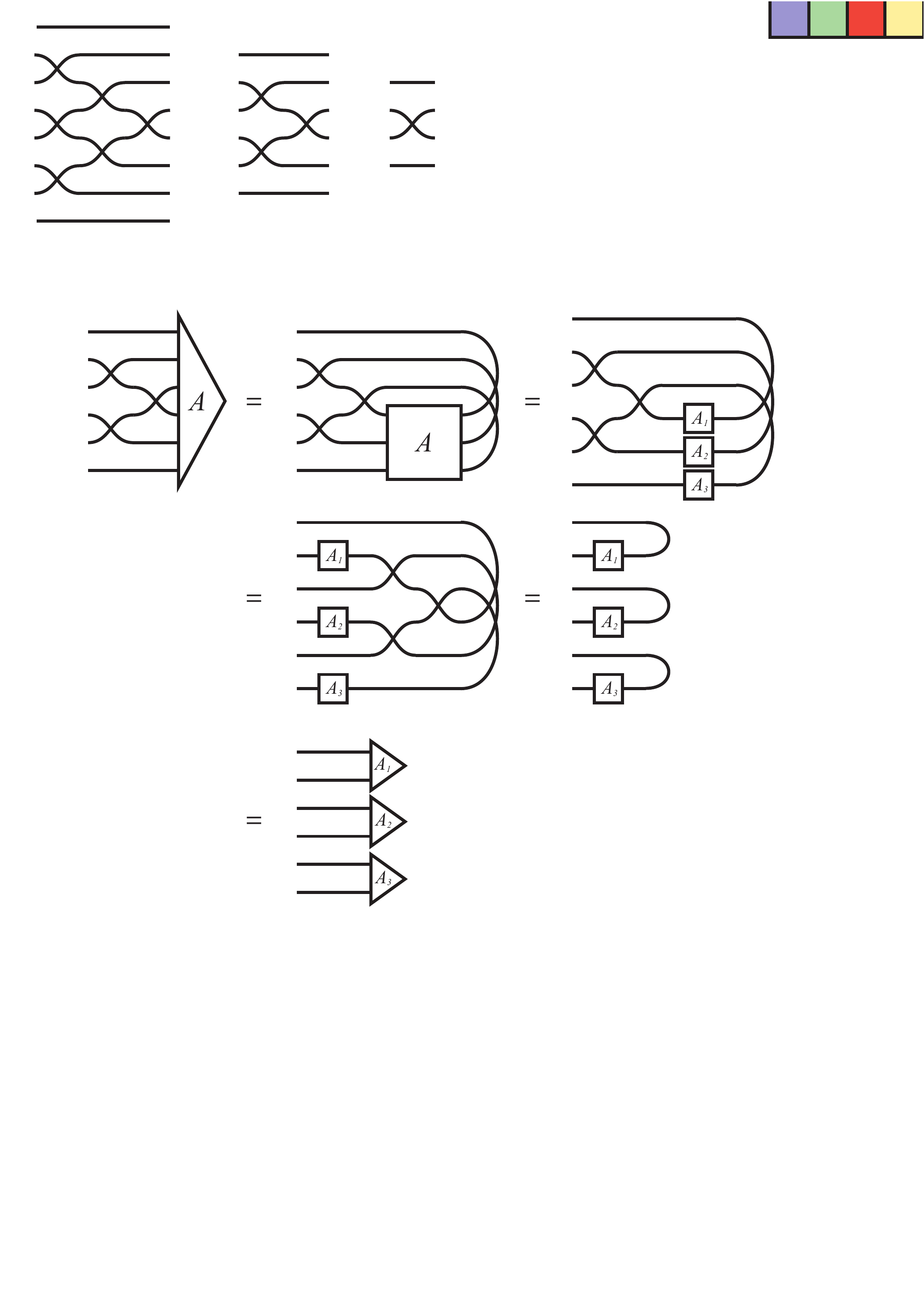}
 \\
\footnotesize{(a) $\2V_2$} 
 \quad\quad&\quad\quad
\footnotesize{(a) $\2V_3$} 
  \quad\quad&\quad\quad
  \footnotesize{(a) $\2V_4$} 
 \end{tabular}
  \label{fig:unravelling}
\end{center}
 We also present a graphical proof of this for the $N=3$ case: 
\begin{center}
\includegraphics[width=0.5\textwidth]{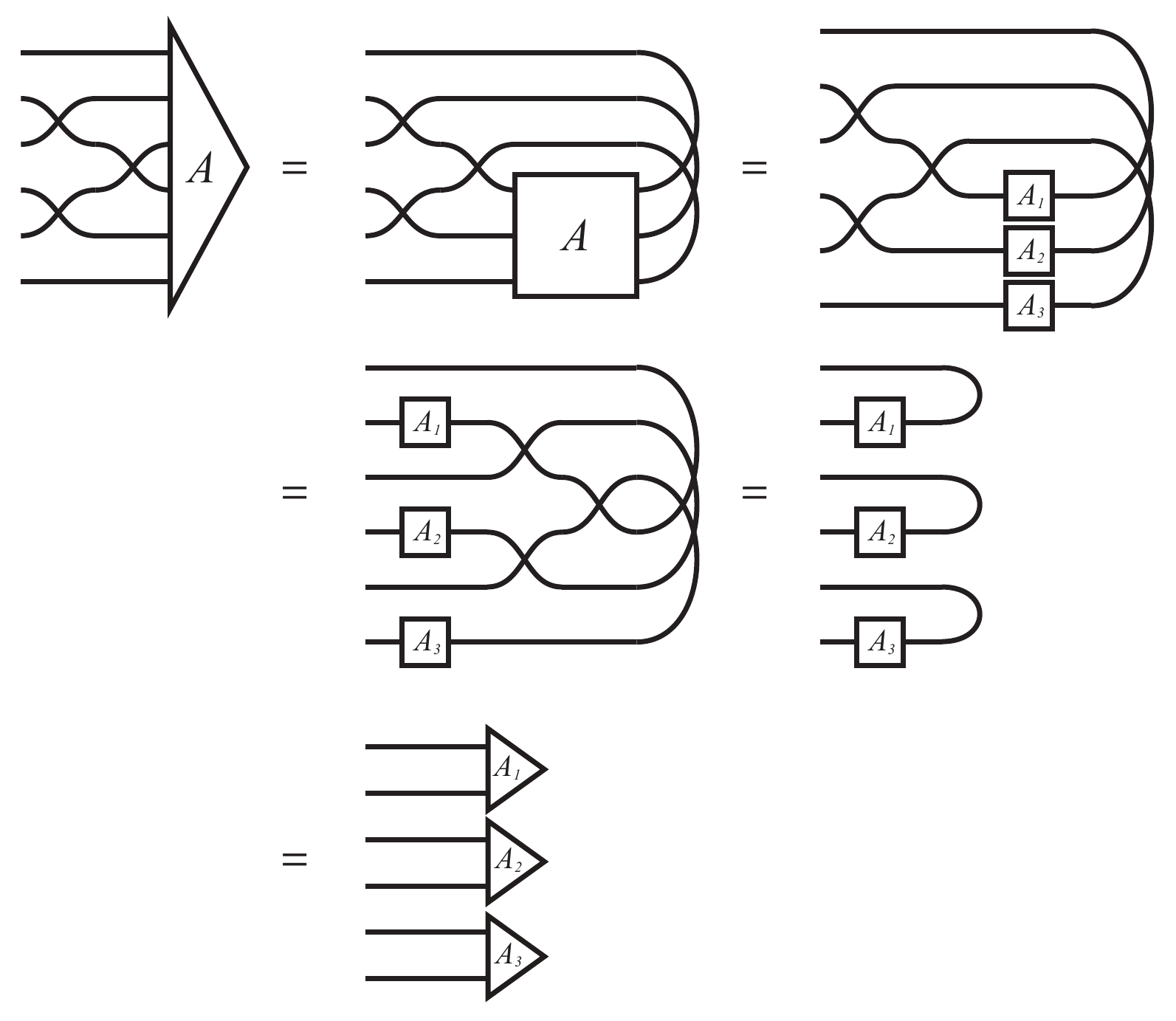}
\label{fig:unravelling-proof}
\end{center}


\subsection{Composing superoperators}
\label{sec:comp-sop}

We now discuss how to compose superoperators on individual subsystems to form the correct superoperator on the composite system, and vice-versa.  Given two superoperators $\2S_1$, and $\2S_2$, if we construct a joint system superoperator via tensor product $(\2S_1\otimes \2S_2)$, this composite operator acts on the tensor product of vectorized inputs $\dket{\rho_1}\otimes\dket{\rho_2}$, rather than the vectorization of the composite input $\dket{\rho_1\otimes\rho_2}$. To construct the correct composite superoperator for input $\dket{\rho_1\otimes\rho_2}$ we may use the unravelling operation $\2V_N$ from \eqref{eqn:unravelling} and its inverse.

If we have a set of superoperators $\{\2S_{k} : k=1,...,N\}$ where $ \2S_k\in {\mathcal L}(\2 X_k\otimes\2X_k,\2Y_k\otimes \2 Y_{k})$, then the joint superoperator $\2 S\in {\mathcal L}(\2X\otimes\2X,\2Y\otimes\2Y)$, where $\2 X=\bigotimes_{k=1}^N \2 X_k$, $\2 Y=\bigotimes_{k=1}^N \2 Y_k$,  is given by
\begin{equation}
\2 S = \2 V_N^{\dagger}\left( \2 S_{1}\otimes\hdots\otimes\2 S_{N}\right)\2 V_N.
\label{eqn:comp-sop}
\end{equation} 
The tensor networks for this transformation in the $N=2$ and $N=3$ cases are shown below
\begin{center}
\begin{tabular}{c|c}
\includegraphics[width=0.3\textwidth]{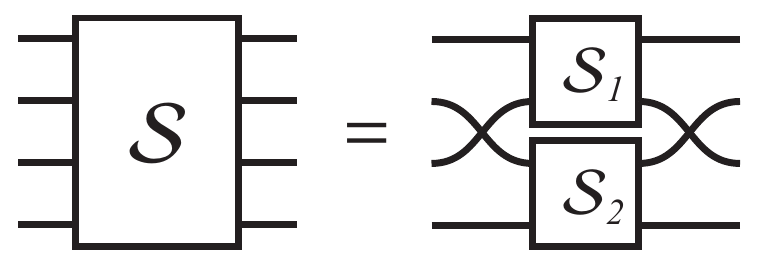}
\quad\quad&\quad\quad
\includegraphics[width=0.38\textwidth]{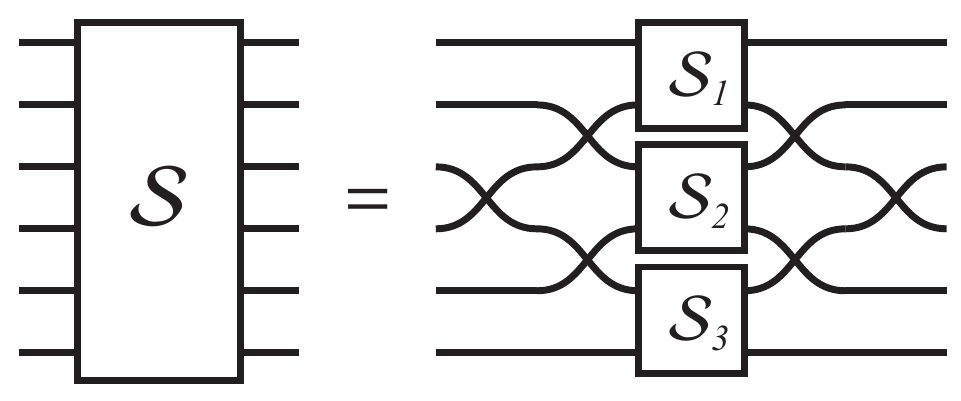}
\\
\footnotesize{$N=2$}
\quad\quad&\quad\quad
\footnotesize{$N=3$}
\end{tabular}
\label{fig:comp-sop}
\end{center}

Composing superoperators from individual subsystem superoperators is useful when performing the same computations for multiple identical systems. For an example we consider vectorization in the Pauli-basis for an $N$-qubit system. While it is generally computationally more efficient to perform vectorization calculations in the col-vec (or row-vec) basis, as these may be implemented using structural operations on arrays, it is often convenient to express the superoperator in the Pauli basis, or the Choi-matrix in the $\chi$-matrix representation, when we are interested in determining the form of correlated errors. However, transforming from the col-vec to the Pauli-basis for multiple (and possibly arbitrary) number of qubits is inconvenient. Using our unravelling operation we can instead compute the single qubit change of basis superoperator $T_{c\rightarrow\sigma}$, where $\sigma=\{\I,X,Y,Z\}/\sqrt{2}$ is the Pauli-basis for a single qubit, and use this to generate the transformation operator for multiple qubits. In the case of $N$-qubits we can construct the basis transformation matrix as
\be
T^{(N)}_{c\rightarrow \sigma} = \2 V_N^{\dagger}\cdot T_{c\rightarrow\sigma}^{\otimes N} \cdot \2 V_N.
\ee
The joint-system superoperator in the Pauli-basis is then given by 
\be
\2 S^\sigma = T^{(N)}_{c\rightarrow \sigma}\cdot\2 S \cdot T^{(N)\dagger}_{c\rightarrow \sigma}
\ee
The same transformation can be used for converting a state $\rho=\rho_1\otimes\hdots\otimes\rho_N$ to the Pauli basis: $\dket{\rho}_\sigma = T^{(N)}_{c\rightarrow \sigma}\dket{\rho}_c$. These unravelling techniques are also useful for applying operations to a limited number of subsystems in a tensor network as used in many tensor network algorithms.


\subsection{Matrix operations as superoperators}
\label{sec:ops-as-sops}

We now show how several common matrix manipulations can be written as superoperators. These expressions are obtained by simply vectorizing the transformed operators. We begin with the trace superoperator $S_{\mbox{\scriptsize{Tr}}}$ which implements the trace of a matrix $S_{\mbox{\scriptsize{Tr}}}\dket{A} := \Tr[A]$ for a square matrix $A\in {\mathcal L}(\2X)$. This operation is simply given by the adjoint of the unnormalized Bell-state:
\be
S_{\mbox{\scriptsize{Tr}}} := \dbra{\I}
\ee
where $\I\in {\mathcal L}(\2X)$ is the identity operator. If $\2X$ is itself a composite system, we simply use the definition of the Bell-state for composite systems from \Eqref{eqn:comp-bell}. This is illustrated in our graphical calculus as
\be
S_{\mbox{\scriptsize{Tr}}}\dket{A} = \parbox[c]{1em}{\includegraphics[width=0.1\textwidth]{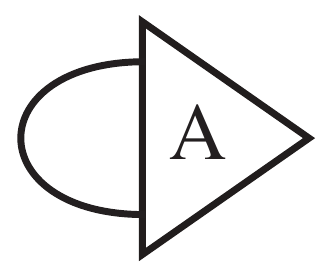}}
\label{fig:vec-tr}
\ee

For a rectangular matrix $B\in {\mathcal L}(\2X,\2Y)$, the transpose superoperator $S_{T}$ which implements the transpose $S_T\dket{B}=\dket{B^T}$ is simply a swap superoperator between $\2X$ and $\2 Y$.
\begin{eqnarray}
S_{T} &=& \mbox{SWAP}	\\
\mbox{SWAP}&:& \XY \mapsto \YX  
\end{eqnarray}
The tensor network for the swap superoperator is 
\be
S_{\mbox{\scriptsize{T}}}\dket{B} = \parbox[c]{0.1\textwidth}{\includegraphics[width=0.1\textwidth]{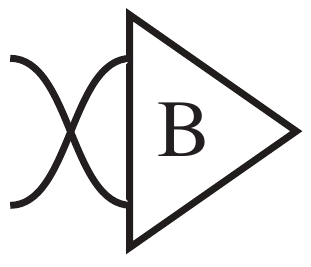}}
\label{fig:vec-trans}
\ee
If $\2X$ and $\2Y$ are composite vector spaces we may split the crossed wires into their respective subsystem wires. 

Next we give the superoperator representations of the bipartite matrix operations in \eqref{fig:bipartite} acting on vectorized square bipartite matrices $M\in {\mathcal L}(\XY)$. These are the partial trace over $\2X$ ($S_{\mbox{\scriptsize{Tr}}_{\2 X}}$) (and $S_{\mbox{\scriptsize{Tr}}_{\2Y}}$ over $\2 Y$), transposition $S_T$, 
and col-reshuffling $(S_{R_c})$. 
\begin{eqnarray}
S_{\mbox{\scriptsize{Tr}}_{\2 X}}
	&:& \XY\otimes\XY \mapsto \2Y\otimes\2Y\\
S_{\mbox{\scriptsize{Tr}}_{\2 Y}}
	&:& \XY\otimes\XY \mapsto \2X\otimes\2X\\
S_{\mbox{\scriptsize{T}}}
	&:&	 \XY\otimes\XY \mapsto \XY\otimes\XY\\
S_{\scriptsize{R_c}}
	&:&	 \XY\otimes\XY \mapsto \XX\otimes\YY
\end{eqnarray}

\begin{fullpage}
The graphical representation of the superoperators for these operations are:
\begin{center}
\begin{tabular}{c|c|c|c}
$S_{\mbox{\scriptsize{Tr}}_{\2 X}}\dket{M}
=$ \parbox[c]{0.11\textwidth}{\includegraphics[width=0.09\textwidth]{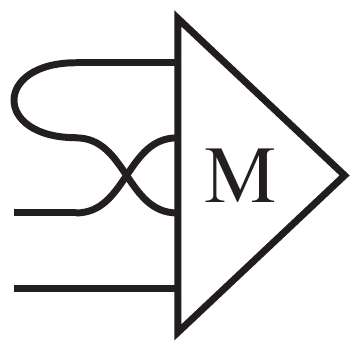}}
&\quad
$S_{\mbox{\scriptsize{Tr}}_{\2 Y}}\dket{M}
=$ \parbox[c]{0.11\textwidth}{\includegraphics[width=0.09\textwidth]{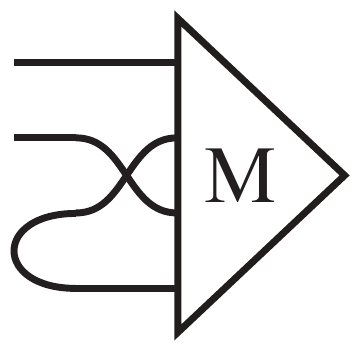}}
&\quad
$S_{\mbox{\scriptsize{T}}}\dket{M}
=$ \parbox[c]{0.11\textwidth}{\includegraphics[width=0.09\textwidth]{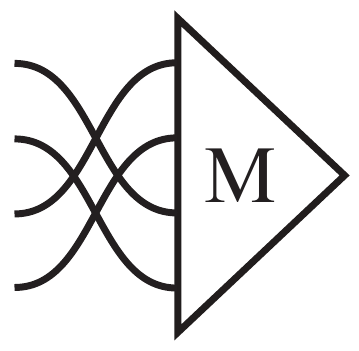}}
&\quad
$S_{\scriptsize{R_c}}\dket{M}
=$ \parbox[c]{0.11\textwidth}{\includegraphics[width=0.09\textwidth]{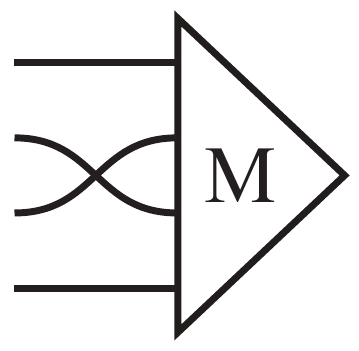}}
\end{tabular}
\label{fig:bipartite-sop}
\end{center}
\end{fullpage} 

Algebraically they are given by
\begin{eqnarray}
S_{\mbox{\scriptsize{Tr}}_{\2 X}}
	&=&	\big[\dbra{\I_{\2 X}}\otimes\I_{\2 Y}\otimes \I_{\2 Y}\big]\2 V_2\\
S_{\mbox{\scriptsize{Tr}}_{\2 Y}}
	&=&	\big[\I_{\2 X}\otimes \I_{\2 X}\otimes\dbra{\I_{\2 Y}}\big]\2 V_2\\
S_{\mbox{\scriptsize{T}}}
	&=&	\mbox{SWAP}_{1:3}\mbox{SWAP}_{2:4}\\
S_{\scriptsize{R_c}}
	&=&	 \2V_2
\end{eqnarray}
where $\2V_2$ is the unravelling operation in \Eqref{eqn:unravelling}. 

In the general multipartite case for a composite matrix $A\in {\mathcal L}(\2X)$ where $\2 X=\bigotimes_{k=1}^N \2 X_k$, we can trace out or transpose a subsystem $j$ by using the unravelling operation in \Eqref{eqn:unravelling} to insert the appropriate superoperator for that subsystem with identity superoperators on the remaining subsystems:
\begin{eqnarray*}
\2S_{O_j} 
&=& 
	\2 V_{N-1}^{-1}\left[
		\left(\bigotimes_{k=1}^{j-1}\2S_{\2 I_{k}}\right)
		\otimes\2S_{O}\otimes
		\left(\bigotimes_{k=j+1}^{N}\2S_{\2 I_{k}}\right)
	\right]\2 V_N	\\
\end{eqnarray*}
where $\2S_O\in T(\2 X_j)$ is the superoperator acting on system $j$ and $\2S_{\2 I_k}\in T(\2 X_k)$ is the identity superoperator for subsystem $ {\mathcal L}(\2 X_k)$. Similarly by inserting the appropriate operators at multiple subsystem locations we can perform the partial trace or partial transpose of any number of subsystems.


\subsection{Reduced superoperators}
\label{sec:reduced-sop}

We now present a simple but useful example of the presented bipartite operations in the superoperator representation to show how to construct an effective reduced superoperator for a a subsystem out of a larger superoperator on a composite system.

Consider states $\rho_{XY}\in {\mathcal L}(\XY)$ which undergo some channel $\2 F\in C(\XY)$ with superoperator representation $\2S$. Suppose system $\2Y$ is an ancilla which we initialize in some state $\tau_{0}\in {\mathcal L}(\2Y)$, and we post-select on the output state of system $\2Y$ being in a state $\tau_{1}$. We may construct the effective reduced map $\2 F^\prime\in T(\2X)$ for this process for arbitrary input and output states of system $\2X$, given by a superoperator $\2S^\prime$, as shown:
\begin{center}
\includegraphics[width=0.35\textwidth]{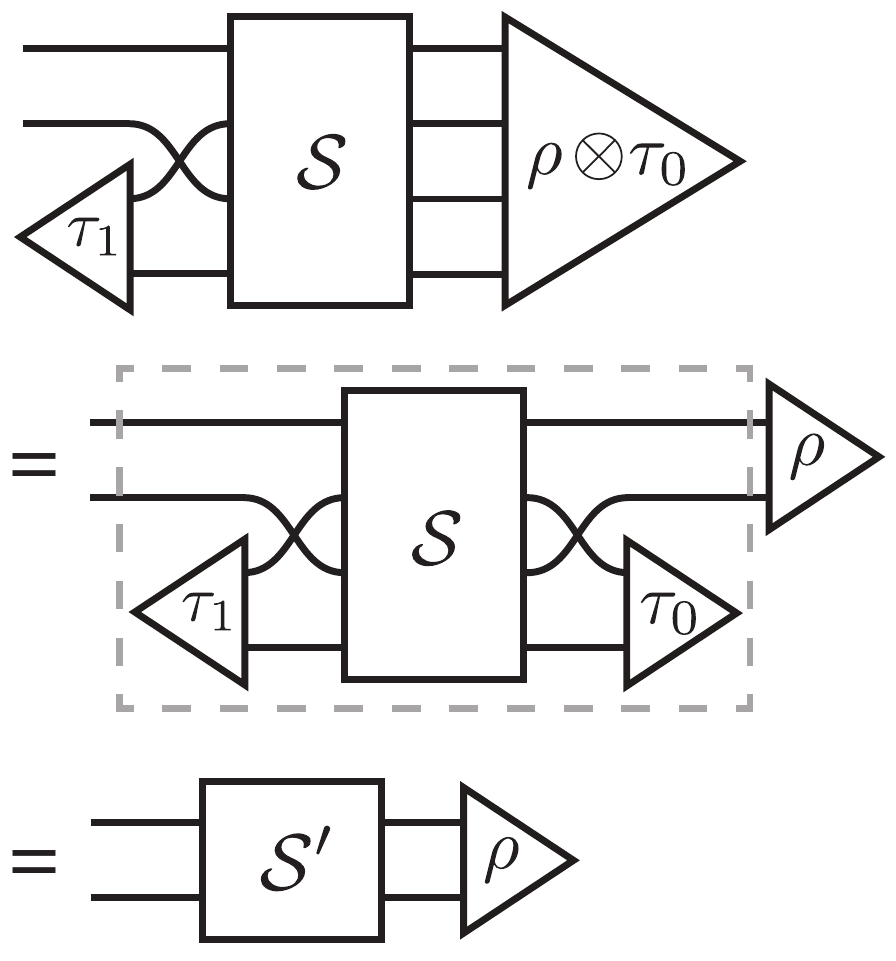}
\label{fig:reduced-sop}
\end{center}

Formally, we are defining the superoperator representation $\2S^\prime$ of the effective channel $\2 F^\prime$ as the map
\begin{eqnarray}
\2 F^\prime(\rho) &=& \Tr_{\2 Y}\left[(\I_{\2X}\otimes\tau_1)\2 F(\rho\otimes\tau_0)\right]	\\
\2 S^\prime &=& \dbra{\tau_1}\2V_2 \2S\2V_2^\dagger\dket{\tau_0}
\end{eqnarray} 
where $\2S$ is the superoperator representation of $\2F$ and $\dket{\tau_j}$ is implicitly assumed to have the identity operation on the vectorization of subsystem $\2X\otimes\2X$ $(\dket{\tau_j}:= \I_{\2X}\otimes\I_{\2X}\otimes \dket{\tau_j})$.


\subsection{Ancilla Assisted Process Tomography}
\label{sec:aapt}

Quantum state tomography is the method of reconstructing an unknown quantum state from the measurement statistics obtained by performing a topographically complete set of measurements on many identical copies of the unknown state~\cite{NC}. Quantum process tomography is an extension of quantum state tomography which reconstructs an unknown quantum channel $\2E\in C(\2X)$ from appropriately generated measurement statistics. One such procedure, known as \emph{standard quantum process tomography}, involves preparing many copies of each of a topographically complete set of input states, subjecting each to the unknown quantum channel, and performing state tomography on the output~\cite{DAriano2001}. 

An alternative approach is to directly measure the Choi-matrix for the channel via a method known as \emph{ancilla assisted process tomography} (AAPT)~\cite{White2003}. The simplest case of AAPT is \emph{entanglement assisted process tomography}(EAPT) which is an experimental realization of the Choi-Jamio{\l}kowski isomorphism. Here an experimenter prepares a a maximally entangled state 
\be
\rho_{\Phi}= \frac{1}{d}\dketdbra{\I}{\I}
\ee 
across the system of interest $\2X$ and an ancilla $\2Z \cong \2X$, and subjects the system to the unknown channel $\2E$, and the ancilla to an identity channel $\2I$. The output of this joint system-ancilla channel is the rescaled Choi-matrix:
\begin{eqnarray}
\rho_{\phi}^\prime = (\2I\otimes\2E)\left(\rho_{\Phi}\right) 
	&=& \frac{\Lambda}{d}.
\end{eqnarray}
which can be measured directly by quantum state tomography. The tensor network for EAPT is
\begin{center}
\includegraphics[width=0.4\textwidth]{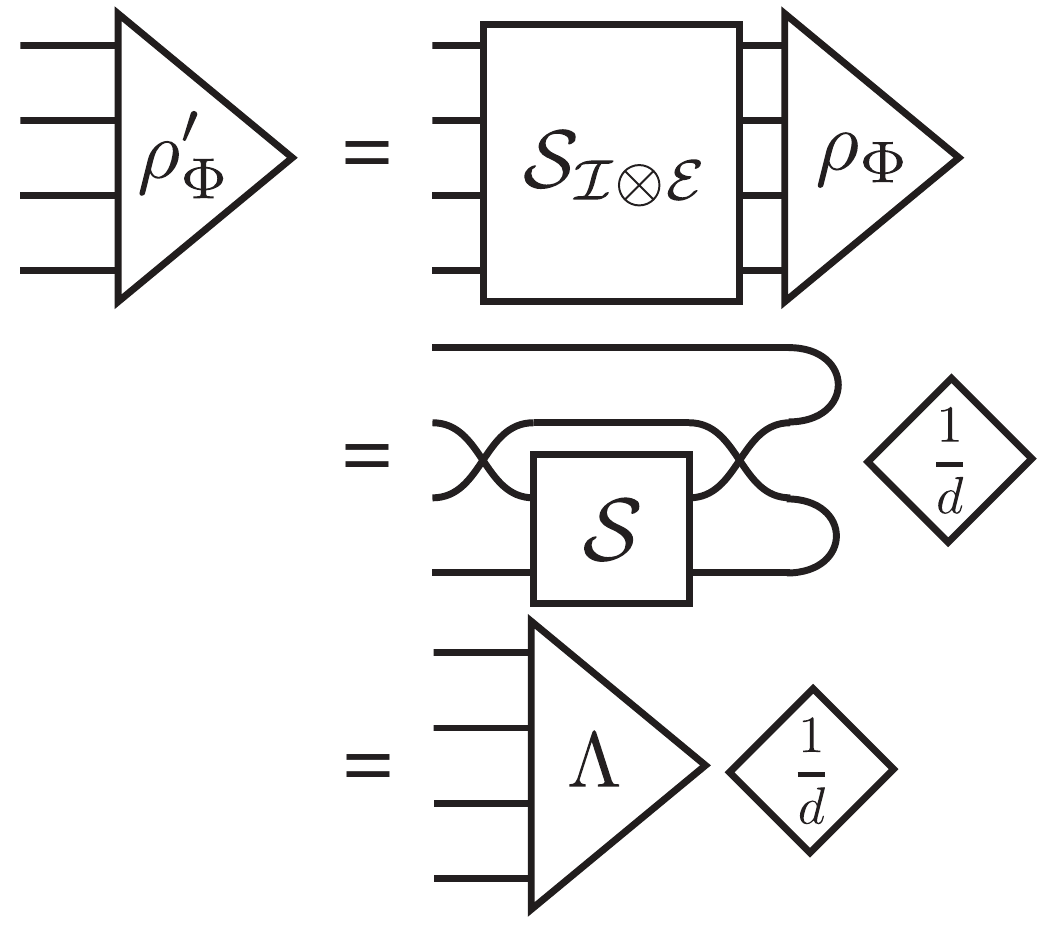}
\label{fig:aapt-choi}
\end{center}

In general AAPT does not require $\rho_{AS}$ to be maximally entangled. It has been demonstrated experimentally that AAPT may be done with a state which does not have any entanglement at all, at the expense of an increase in the estimation error of the reconstructed channel~\cite{White2003}. A necessary and sufficient condition for a general state $\rho_{AS}$ to allow recovery of the Choi-matrix of an unknown channel $\2E$ via AAPT is that it have a \emph{Schmidt number} equal to $d^2$ where $d$ is the dimension of the state space $\2X$~\cite{White2003}. This conditions has previously been called \emph{faithfulness} of the input state, and one can recover the original Choi-matrix for the unknown channel $\2E$ by applying an appropriate inverse map to the output state in post-processing~\cite{DAriano2003}. We provide an arguably simpler derivation of this condition, and the explicit construction of the inverse recovery operator. The essence of this proof is that we can consider the bipartite state $\rho_{AS}$ to be Choi-matrix for an effective channel via the Choi-Jamio{\l}kowski isomorphism (but with trace normalization of 1 instead of $d$) . We can then apply channel transformations to this initial state to convert it into an effective channel acting on the true Choi-matrix, and if this effective channel is invertible we can recover the Choi-matrix for the channel $\2E$ by applying the appropriate inverse channel.

\begin{proposition}
\label{prop:aapt}
$(a)$ A state $\rho_{AS}\in {\mathcal L}(\2X\otimes\2X)$ may be used for AAPT of an unknown channel $\2E\in  C(\2X)$ if and only if the reshuffled density matrix $\2S_{AS} = \rho_{AS}^{R_c}$ is invertible. 

$(b)$ The channel can be reconstructed from the measured output state by $\Lambda_{\2E} = (\2R\otimes \2I)(\rho_{AS}^\prime)$
where $\rho_{AS}^\prime = (\2 I\otimes \2E)(\rho_{AS})$ is the output state reconstructed by quantum state tomography, and $\2R$ is the recovery channel given by superoperator $\2S_{\2R} = (\2S_{AS}^T)^{-1}$.
\end{proposition}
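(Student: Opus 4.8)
The plan is to identify the bipartite input state $\rho_{AS}$ with the Choi matrix of an auxiliary map, exploit the reshuffling duality between Choi matrices and superoperators established in \S~\ref{sec:choi-sop}, and then read off both the faithfulness condition and the recovery channel as a matter of matrix invertibility. First I would set $\2S_{AS} := \rho_{AS}^{R_c}$, the superoperator obtained from $\rho_{AS}$ by col-reshuffling. By the Choi-Jamio{\l}kowski isomorphism (\S~\ref{sec:choi}), $\rho_{AS}$ is the (trace-normalized) Choi matrix of a unique linear map $\2A\in T(\2X)$ whose superoperator is exactly $\2S_{AS}$; concretely $\rho_{AS} = (\2I\otimes\2A)(\tfrac1d\dketdbra{\I}{\I})$ up to the overall normalization. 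This is the key reinterpretation: the ``input state'' is secretly a channel.

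Next I would compute the output state. Feeding $\rho_{AS}$ through $\2I\otimes\2E$ and using that superoperators compose by ordinary matrix multiplication (after vectorization), the reshuffled output $\2S_{AS}^\prime := (\rho_{AS}^\prime)^{R_c}$ is a product involving $\2S_{\2E}$ (equivalently $\Lambda_{\2E}$ via reshuffling) and $\2S_{AS}$; tracking indices carefully one gets that $\rho_{AS}^\prime$ reshuffles to something of the form $\Lambda_{\2E}$ acted on by the map with superoperator built from $\2S_{AS}$ — this is precisely the statement that the output is an ``effective channel applied to the true Choi matrix'', as sketched in the text preceding the proposition. The diagrammatic computation here mirrors the graphical proofs of $\Lambda^{R_c}=\2S$ and of Roth's lemma, so I would draw the tensor network and pull wires around bends rather than manipulate indices by hand.

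For part $(a)$: the map $\rho_{AS}\mapsto\rho_{AS}^\prime$ is invertible on the relevant space exactly when the linear transformation implemented by $\2S_{AS}$ on the Choi matrix is invertible, which (since this transformation is, up to transpose, left multiplication by $\2S_{AS}$ on vectorized operators) holds if and only if $\2S_{AS}$ is an invertible matrix. Invertibility of $\2S_{AS}=\rho_{AS}^{R_c}$ is equivalent to $\rho_{AS}$ having full Schmidt number $d^2$ across the $\2X\otimes\2X$ cut — this recovers the ``faithfulness'' criterion. For part $(b)$: given invertibility, I would simply invert. The recovery channel $\2R$ must undo the action of $\2A$ on the system half; its superoperator is $\2S_{\2R}=(\2S_{AS}^T)^{-1}$, the transpose appearing because the Choi-Jamio{\l}kowski map sends a channel to the reshuffle of its superoperator composed on the correct half of the Bell state (the col- versus row-$\Lambda$ bookkeeping). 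Then $(\2R\otimes\2I)(\rho_{AS}^\prime)$ telescopes: the $(\2S_{AS}^T)^{-1}$ cancels the $\2S_{AS}^T$ factor introduced by $\2A$, leaving $\Lambda_{\2E}$. I would verify this cancellation by an explicit short diagrammatic contraction.

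The main obstacle I anticipate is purely bookkeeping: getting the transposes and the col/row-reshuffling conventions consistent, so that the factor appearing in the output is genuinely $\2S_{AS}^T$ (hence inverted by $\2S_{\2R}=(\2S_{AS}^T)^{-1}$) and not $\2S_{AS}$ or $\overline{\2S_{AS}}$. This is where the diagrammatic calculus earns its keep — the ricochet/transpose identities from \S~\ref{app:tensor1} and the reshuffling proofs of \S~\ref{sec:choi-sop} make the placement of each transpose visually unambiguous, whereas in index notation it is easy to slip. A secondary subtlety is stating precisely in what sense $\rho_{AS}\mapsto\rho_{AS}^\prime$ is ``invertible'': it is invertible as a linear map on all of $\Lx{\XX}$ (equivalently on Choi matrices of all linear maps $T(\2X)$), which is the natural domain once one has made the channel-reinterpretation, and this is exactly what full Schmidt number $d^2$ guarantees.
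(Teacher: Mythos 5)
Your proposal is correct and follows essentially the same route as the paper's proof: reinterpret $\rho_{AS}$ via the Choi-Jamio{\l}kowski isomorphism as the (trace-one-normalized) Choi matrix of an effective map with superoperator $\2S_{AS}=\rho_{AS}^{R_c}$, use reshuffling/composition of superoperators to see that $(\2I\otimes\2E)(\rho_{AS})$ is an effective channel acting on $\Lambda_{\2E}$, and invert it when $\2S_{AS}$ is invertible, with the transpose in $\2S_{\2R}=(\2S_{AS}^T)^{-1}$ arising from the ancilla-side action, exactly as in the paper's graphical argument in Fig.~\ref{fig:aapt-proof}.
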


The graphical proof of Prop.~\ref{prop:aapt}  is illustrated in Fig.~\ref{fig:aapt-proof}. This proof demonstrates several useful features of the presented graphical calculus. In particular it applies the vectorized reshuffling transformation to a bipartite density matrix input state to obtain an effective superoperator representation of a state, and uses the unravelling operation for composition of superoperators.
From this construction we find that if the initial state $\rho_{AS}$ is maximally entangled, then it can be expressed as $\rho_{AS}= \dketdbra{V}{V}$ for some unitary $V$. In this case the reshuffled superoperator of the state corresponds to a unitary channel $\2S_{AS} = \overline{V}\otimes V$, and hence is invertible with $\2S_{AS}^{-1} = \2S_{AS}^\dagger$. If the input state is not maximally entangled, then the closer it is to a singular matrix, the larger the condition number and hence the larger the amplification in error when inverting the matrix.

\begin{figure}[ht]
\centering
\includegraphics[width=0.4\textwidth]{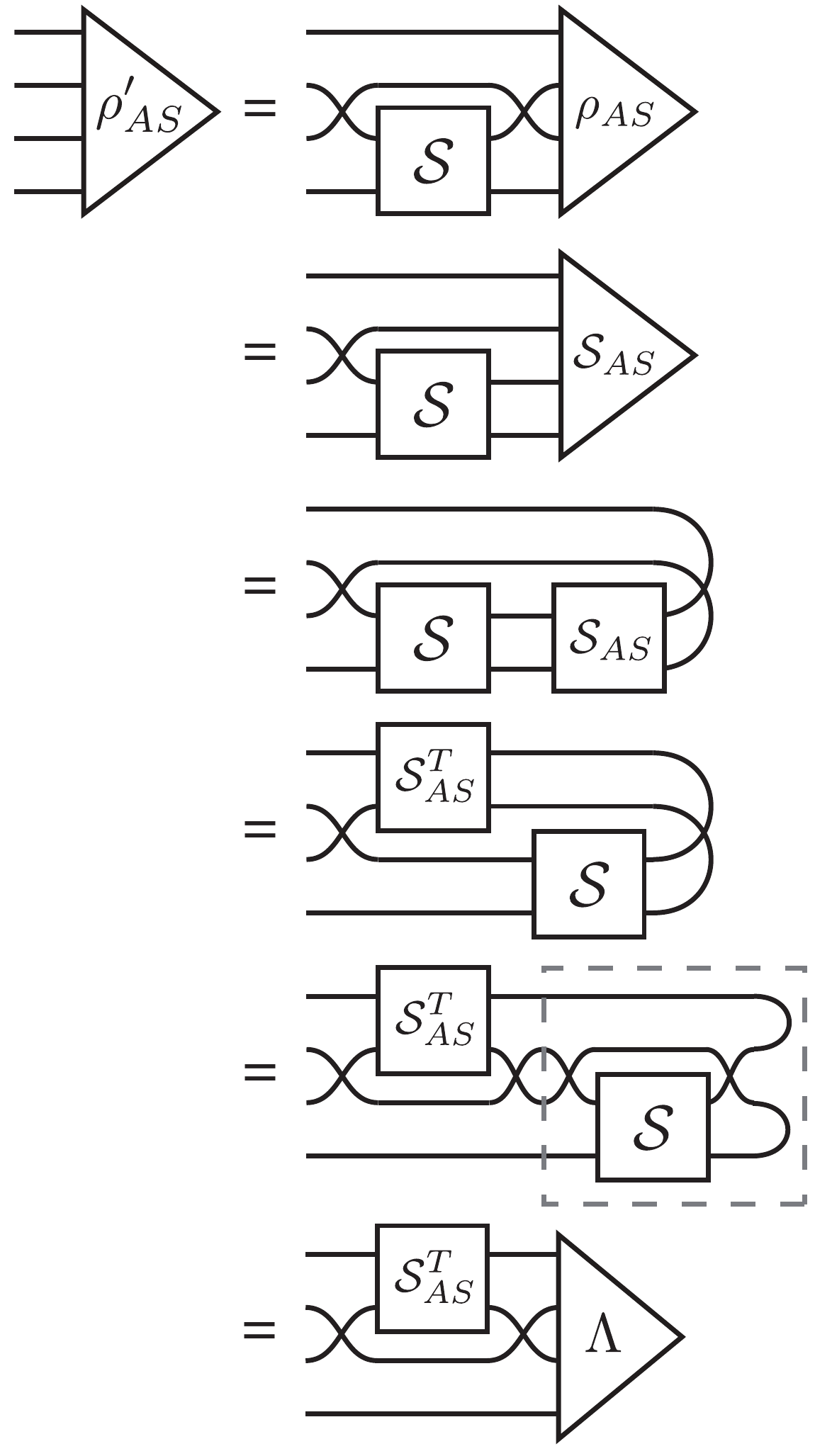}
\caption{Graphical proof of the equivalence of an initial state $\rho_{AS}$ used for performing AAPT of an unknown CPTP map $\2E$ with superoperator representation $\2S$, to a channel $(\2R\otimes\2I)$ acting on the Choi-matrix $\Lambda$ for a channel $\2E$. The Choi-matrix can be recovered if and only if the the superoperator $\2S_{\2R}=\2S_{AS}^T=(\rho_{AS}^{R_c})^T$ is invertible.}
\label{fig:aapt-proof}
\end{figure}


\subsection{Average Gate Fidelity}
\label{sec:gatefid}

When characterizing the performance of a noisy quantum channels a widely used measure of the closeness of a CPTP map $\2E\in C(\2X)$ to a desired quantum channel $\2F\in C(\2X)$ is the \emph{Gate Fidelity}. This is defined to be
\be
F_{\2E,\2F}(\rho) = F(\2F(\rho)\2E(\rho))
\ee
where
\be
F(\rho,\sigma)=\left(\Tr\left[ \sqrt{\sqrt{\rho}\sigma\sqrt{\rho}}\right]\right)^2
\ee
is the fidelity function for quantum states~\cite{NC}.

In general we are interested in comparing a channel $\2E$ to a unitary map $\2U\in C(\2X)$ where $\2U(\rho)=U\rho U^\dagger$. In this case we have
\begin{eqnarray}
F_{\2E,\2U}(\rho) 
	&=& \left[\Tr\sqrt{\sqrt{U\rho U^\dagger}\2E(\rho)\sqrt{U\rho U^\dagger}}\right]^2\\
	&=&  \left[\Tr\sqrt{\sqrt{\rho}U^\dagger\2E(\rho)U\sqrt{\rho}}\right]^2\\
	&=&  \left[\Tr\sqrt{\sqrt{\rho}\,\2U^\dagger(\2E(\rho))\sqrt{\rho}}\right]^2\\
	&=& F_{\2U^\dagger\2E,\2 I}(\rho)
\end{eqnarray}
where $\2I$ is the identity channel and $\2U^\dagger(\rho) = U^\dagger \rho U$, is the adjoint channel of the unitary channel $\2U$. Thus without loss of generality we may consider the gate fidelity $F_{\2E}(\rho)\equiv F_{\2U^\dagger\2F,\2I}(\rho)$ comparing $\2E$ to the identity channel, where we simply define $\2E \equiv \2U^\dagger\2F$ if we wish to compare $\2F$ to a target unitary channel $\2U$.

The most often used quantity derived from the gate fidelity is the \emph{average gate fidelity} taken by averaging $F_{\2E}(\rho)$ over the the Fubini-Study measure. Explicitly the average gate fidelity is defined by
\be
\overline{F}_{\2E} = \int d\,\psi\, \bra{\psi} \2E(\ketbra\psi\psi)\ket{\psi}.
\label{eq:avegf}
\ee
where due to the concavity of quantum states we need only integrate over pure states $F_{\2E}(\ketbra\psi\psi) = \bra\psi \2E(\ketbra\psi\psi)\ket\psi$. 

Average gate fidelity is a widely used figure of merit in part because it is simple to compute. The expression in \eqref{eq:avegf} reduces to explicit expression for $\overline{F}_{\2E}$ in terms of a single parameter of the channel $\2E$ itself. This has previously been given in terms of the Kraus representation~\cite{Horodecki1999,Nielsen2002}, superoperator~\cite{Emerson2005} and Choi-matrix in~\cite{Johnston2011}. We now present an equivalent graphical derivation of the average gate fidelity in terms of the Choi-matrix which we believe is simpler than previous derivations. We start with the tensor network diagram corresponding to \eqref{eq:avegf} and perform graphical manipulations as follows
\begin{center}
\includegraphics[width=0.35\textwidth]{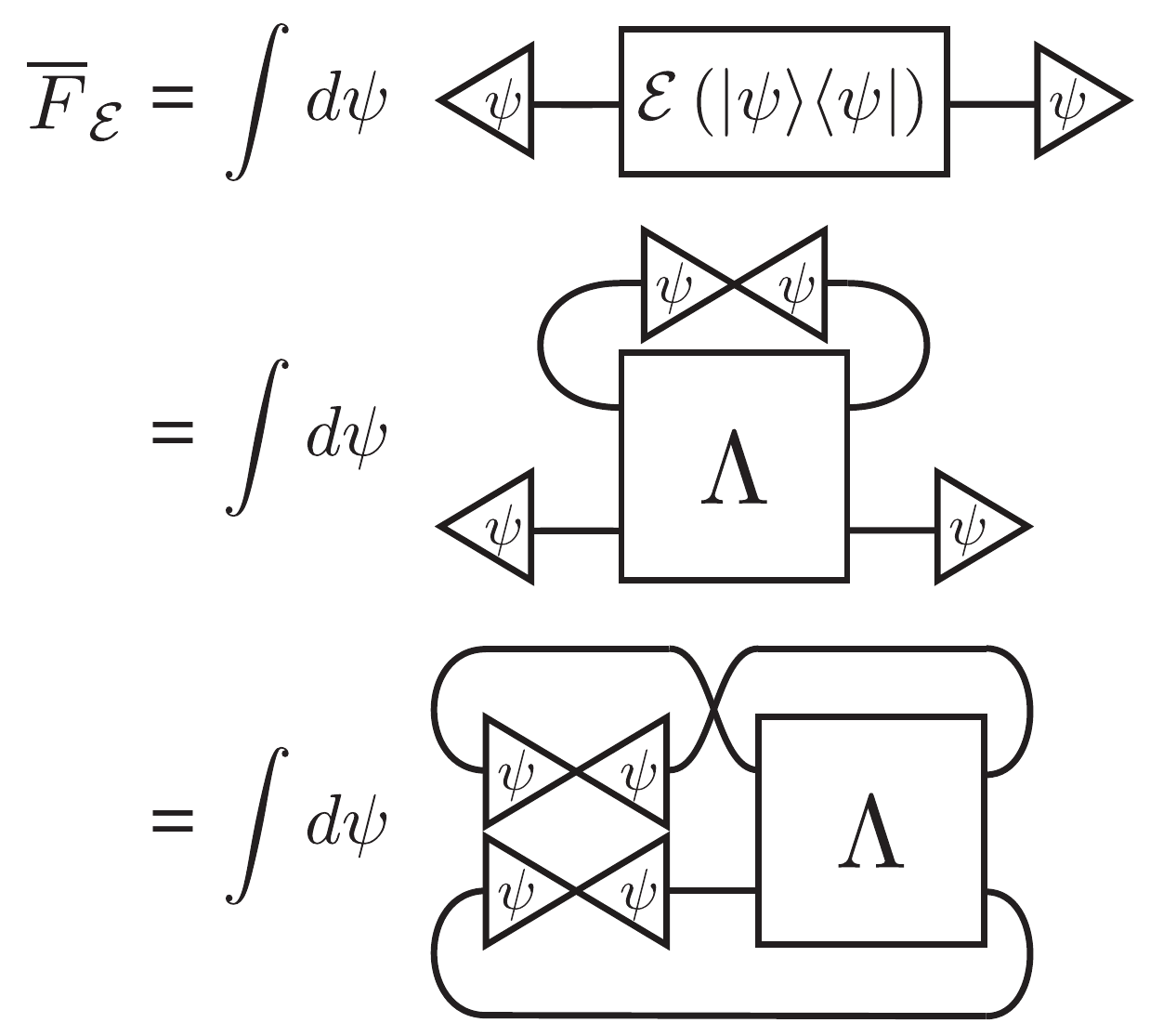}
\label{fig:ave-fid-proof-1}
\end{center}
For the next step of the proof we use the result that the average over $\psi$ of a tensor product of states $\ketbra{\psi}{\psi}^n$ is given by
\be
\int d\psi\, \ketbra\psi\psi^{\otimes n} = \frac{\Pi_{\scriptsize\mbox{sym}}(n,d)}{\Tr[\Pi_{\scriptsize\mbox{sym}}(n,d)]}
\label{eq:symsub}
\ee
where $ \Pi_{\scriptsize\mbox{sym}}(n,d)$ is the projector onto the symmetric subspace of $\2X^{\otimes n}$. This project may be written as~\cite{Magesan2011}
\be
\Pi_{\scriptsize\mbox{sym}}(n,d) = \frac{1}{n!}\sum_\sigma P_{\sigma}
\label{eq:perm-sum}
\ee
where $P_{\sigma}$ are operators for the permutation $\sigma$ of $n$-indices. These permutations may be represented as a swap type operator with $n$ tensor wires. For the case of $n=2$ we have the tensor diagram:
\begin{center}
\includegraphics[width=0.45\textwidth]{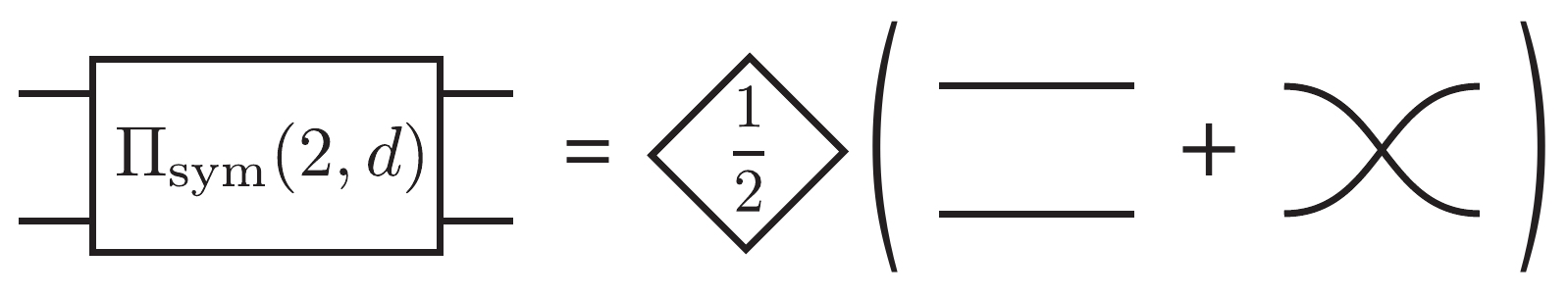}
\label{fig:symsub2}
\end{center}
Here we can see that $\Tr[\Pi_{\scriptsize\mbox{sum}}(2,d)] = (d^2 + d)/2$, and hence we have that
\begin{eqnarray}
\Pi_{\scriptsize\mbox{sym}}(2,d) &=& \frac{1}{2}\left(\I\otimes\I + \mbox{SWAP}\right)	\\
\Tr[\Pi_{\scriptsize\mbox{sym}}(2,d)] &=& \frac{d^2+d}{2}	\\
\Rightarrow
\int d\psi\, \ketbra\psi\psi^2 &=& \frac{\I\otimes\I + \mbox{SWAP}}{d(d+1)}
\end{eqnarray}
where $\2X\cong \C^d$, $\I\in {\mathcal L}(\2X)$ is the identity operator, and SWAP is the SWAP operation on $\XX$. Subsituting \eqref{fig:symsub2} into \eqref{fig:ave-fid-proof-1} completes the proof:
\begin{center}
\includegraphics[width=0.48\textwidth]{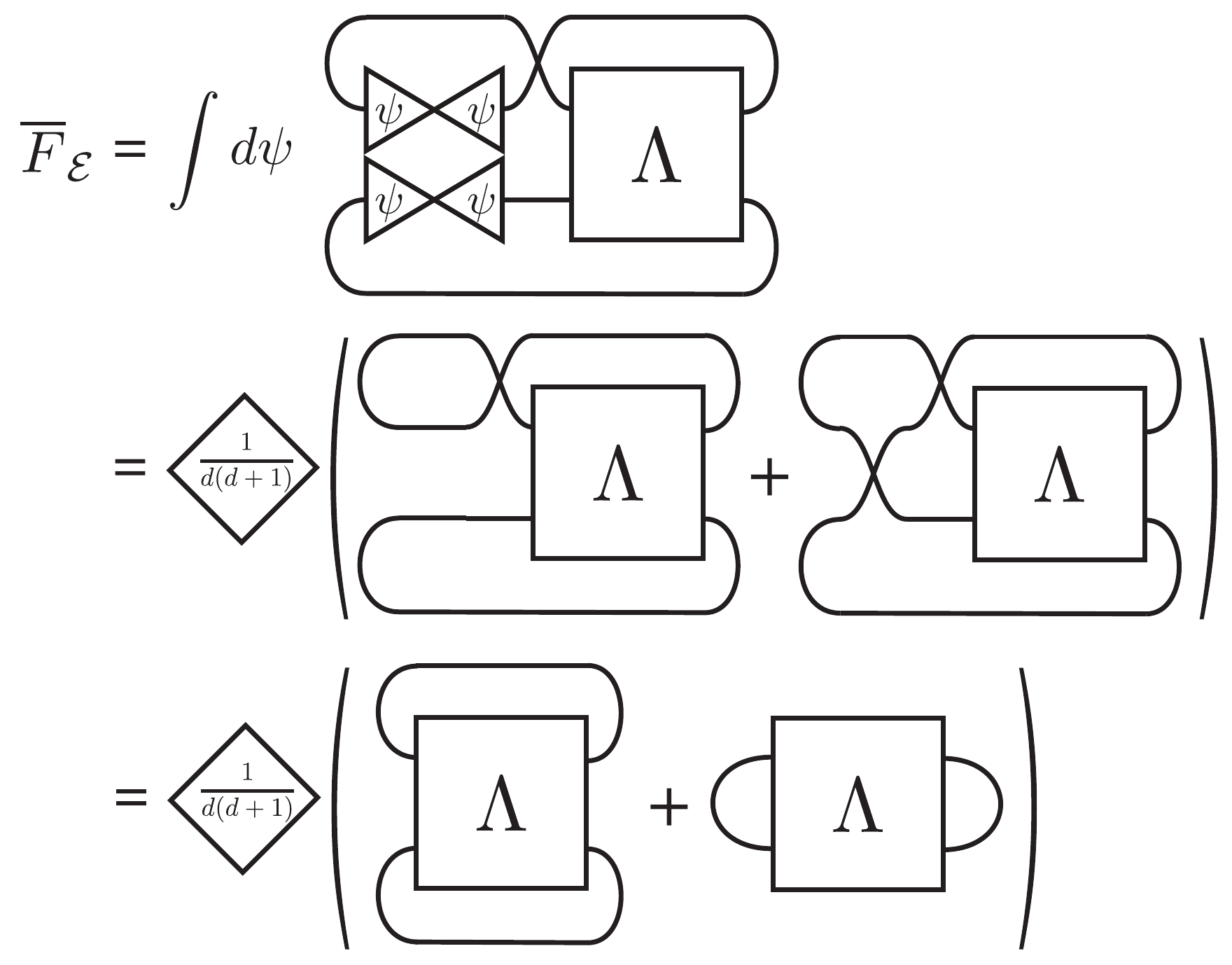}
\label{fig:ave-fid-proof-2}
\end{center}
Hence we have that the average gate fidelity in terms of the Choi-matrix is given by
\be
\overline{F}_{\2E} = \frac{d + \dbra{\I}\Lambda\dket{\I}}{d(d+1)}
\ee
where we have used the fact that the Choi-matrix is normalized such that$\Tr[\Lambda] = d$.
From this proof one may derive expressions for the other representations using the channel transformations in \S~\ref{sec:trans}. The resulting expressions are
\begin{eqnarray}
\overline{F}_{\2E} 
	&=& \frac{d + \Tr[\2S]}{d(d+1)}	\\
	&=& \frac{d + \dbra{\I}\Lambda\dket{\I}}{d(d+1)}	\\
	&=& \frac{d + \sum_j |\Tr[K_j]|^2}{d(d+1)}		\\
	&=& \frac{d + d\chi_{00}}{d(d+1)}		\\
	&=& \frac{d + \Tr_{\2X}[A^\dagger]\cdot\Tr_{\2X}[A]}{d(d+1)}
\end{eqnarray}
where $\2S$, $\Lambda$, $\{K_j\}$, $\chi$, $A$ are the superoperator, Choi-matrix, Kraus, $\chi$-matrix and Strinespring representations for $\2E$ respectively. In the case of the $\chi$-matrix representation, $\chi$ is defined with respect to a basis $\{\sigma_j\}$ satisfying $\Tr[\sigma_j] = \sqrt{d}\delta_{j,0}$.

Similar techniques can be applied for tensor networks that may be graphically manipulated into containing a term $\int d\psi\, \ketbra\psi\psi^{\otimes n} $ for $n>2$. This could prove useful for computing higher order moments of fidelity functions and other quantities defined in terms of averages over quantum states $\ket\psi$. In this case there are $n!$ permutations of the tensor wires for the permutation operator $P_\sigma$ in \eqref{eq:perm-sum}, and these can be decomposed as a series of SWAP gates. For example, in the case of $n=3$ we have
\begin{eqnarray}
\Pi_{\scriptsize\mbox{sym}}(3,d) 
&=& \frac{1}{6}\big(
\I^{\otimes 3} + \mbox{SWAP}_{1:2} 
+ \mbox{SWAP}_{1:3}
+ \mbox{SWAP}_{2:3}\\\nonumber 
&+& \mbox{SWAP}_{1:2}\mbox{SWAP}_{2:3}
+ \mbox{SWAP}_{2:3}\mbox{SWAP}_{1:2}\big) 
\end{eqnarray}

\begin{equation}
\Tr[\Pi_{\scriptsize\mbox{sym}}(3,d)] = \frac{d^3+3d^2+2d}{6}.
\end{equation}


\subsection{Entanglement Fidelity}
\label{sec:entfid}

Another useful fidelity quantity is the \emph{entanglement fidelity} which quantifies how well a channel preserves entanglement with an ancilla~\cite{Schumarcher1996,NC}. For a CPTP map $\2E\in C(\2X)$ and density matrix $\rho\in {\mathcal L}(\2X)$ the entanglement fidelity is given by 
\begin{eqnarray}
F_{\scriptsize\mbox{e}}(\2E,\rho) 
&=& \inf
	\big\{
	F\left(\ketbra{\psi}{\psi}, (\2I_{\2Z}\otimes\2E)(\ketbra{\psi}{\psi})\right):
\nonumber\\&&
	\Tr_{\2Z}[\ketbra{\psi}{\psi}]=\rho \big\}
\label{eq:ent-fid}
\end{eqnarray}
where $\ket\psi\in \2X\otimes\2Z$ is a purification of $\rho$ over an ancilla $\2Z$. Entanglement fidelity turns out to be independent of the choice of purification $\ket\psi$, and a closed form expression has been given in terms of the Kraus representation~\cite{NC} and Choi-matrix~\cite{Fletcher2007}. Here we present a simple equivalent derivation in terms of the Choi-matrix representation of the channel $\2E$ using graphical techniques. Then by applying the channel transformations of \S~\ref{sec:trans} we obtain expressions in terms of the other representations. The resulting expressions for entanglement fidelity are:
\begin{eqnarray}
F_{\scriptsize\mbox{e}}(\2E,\rho) 
	&=& \dbra{\rho}\Lambda\dket{\rho}	\\
	&=& \Tr\left[(\rho^T\otimes\rho)\2S\right]	\\
	&=& \sum_j |\Tr[\rho K_j]|^2 \\
	&=& \sum_{i,j}\chi_{ij} \Tr[\rho\, \sigma_i]\Tr[\rho \sigma^\dagger_j]\\
	&=&\Tr_{\2X}[\rho A^\dagger]\cdot\Tr_{\2X}[A\rho]
\end{eqnarray}
where $\2S$, $\Lambda$, $\{K_j\}$, $\chi$, $A$ are the superoperator, Choi-matrix, Kraus, $\chi$-matrix and Strinespring representations for $\2E$ respectively. In the case of the $\chi$-matrix representation, $\chi$ is defined with respect to a basis $\{\sigma_j\}$ satisfying $\Tr[\sigma_j] = \sqrt{d}\delta_{j,0}$. 

For the graphical proof in terms of the Choi-representation we start with \eqref{eq:ent-fid} and perform the following tensor manipulations
\begin{center}
\includegraphics[width=0.45\textwidth]{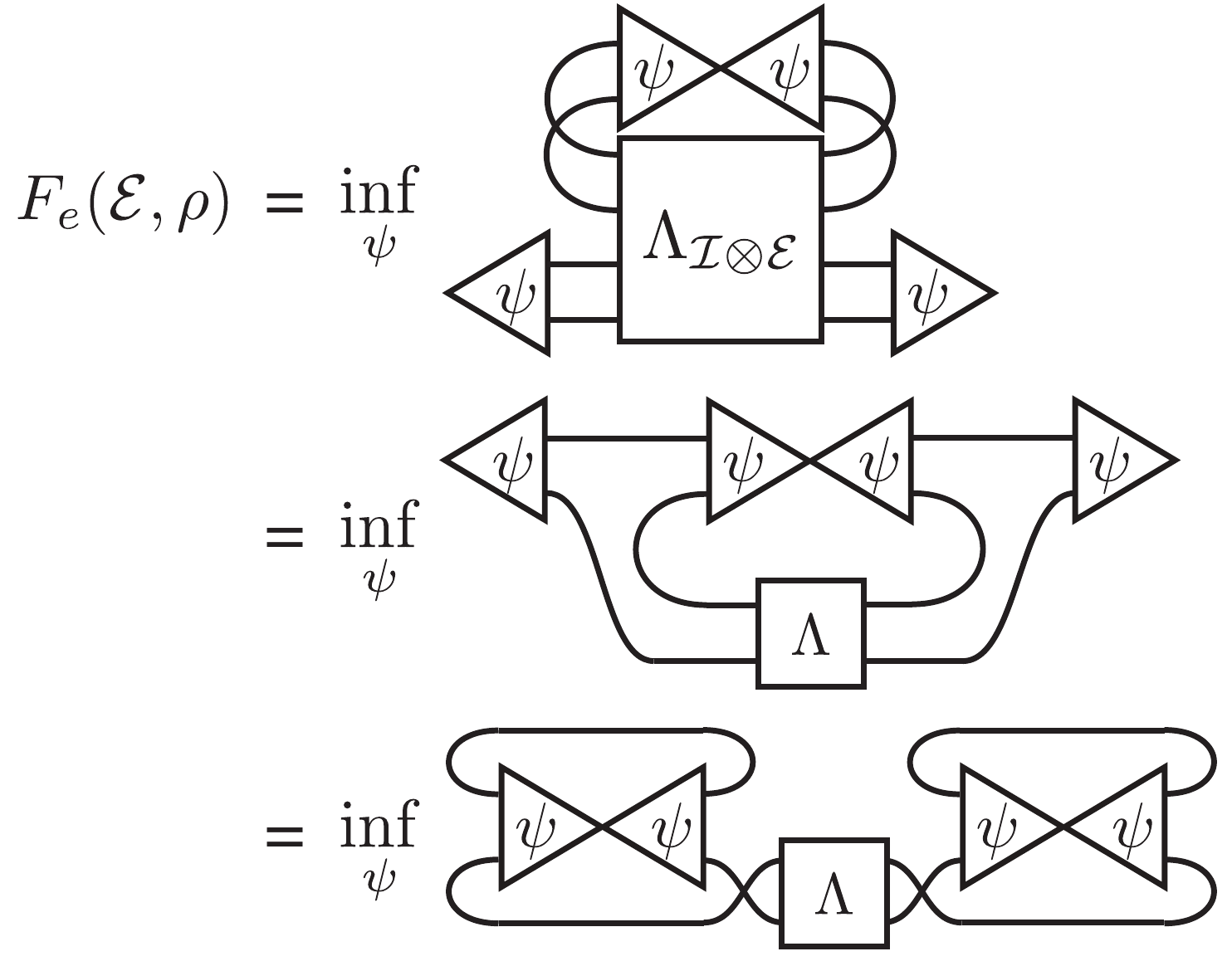}
\label{fig:ent-fid-proof-1}
\end{center}
Now since the infimum is over all $\ket{\psi}\in \2Z\otimes\2X$ satisfying $\Tr_{\2Z}[\ketbra{\psi}{\psi}]=\rho$ the result is independent of the specific purification $\psi$ and we have:
\begin{center}
\includegraphics[width=0.45\textwidth]{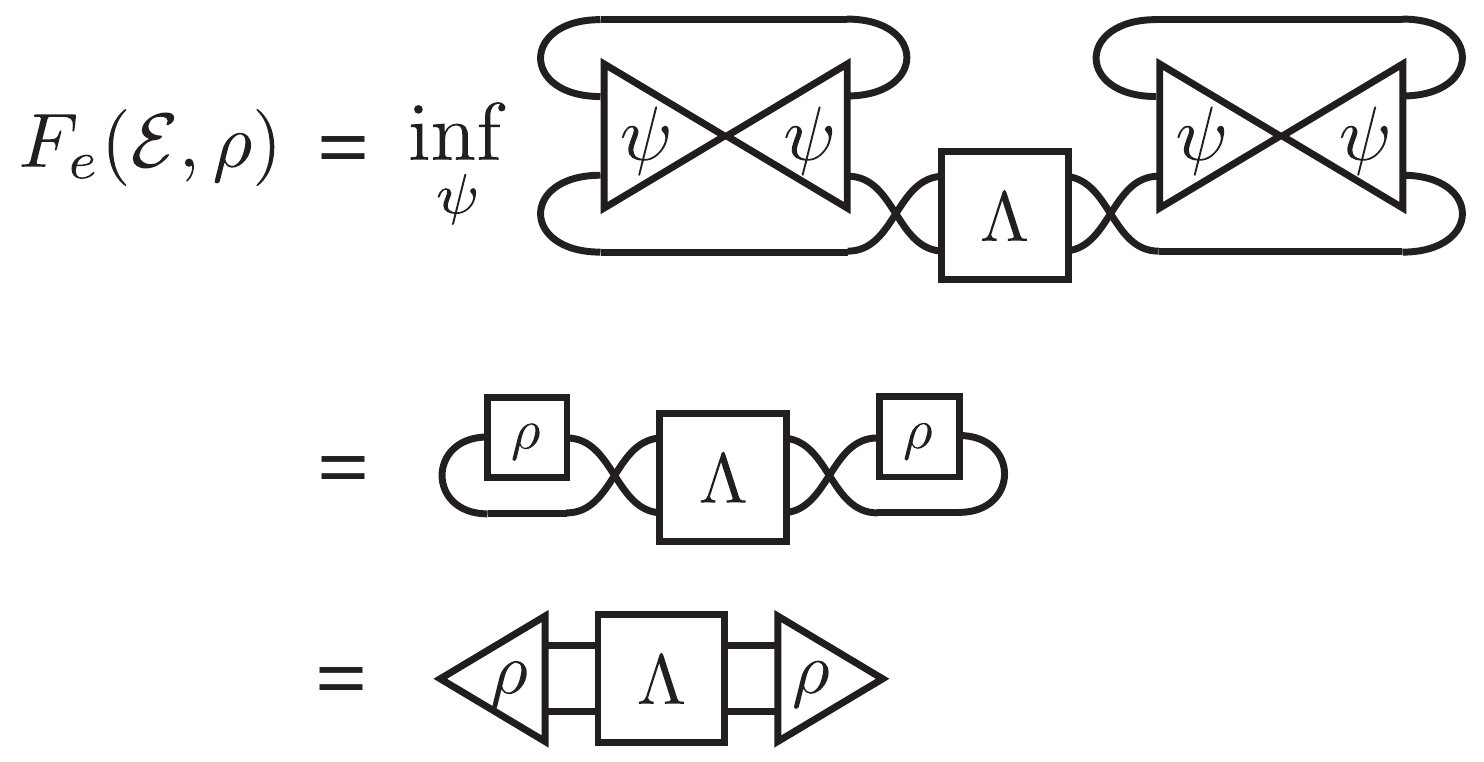}
\label{fig:ent-fid-proof-2}
\end{center}

Entanglement fidelity is equivalent to gate fidelity for pure states and hence average entanglement fidelity is equivalent to average gate fidelity. This can be shown graphically as follows
\begin{center}
\includegraphics[width=0.45\textwidth]{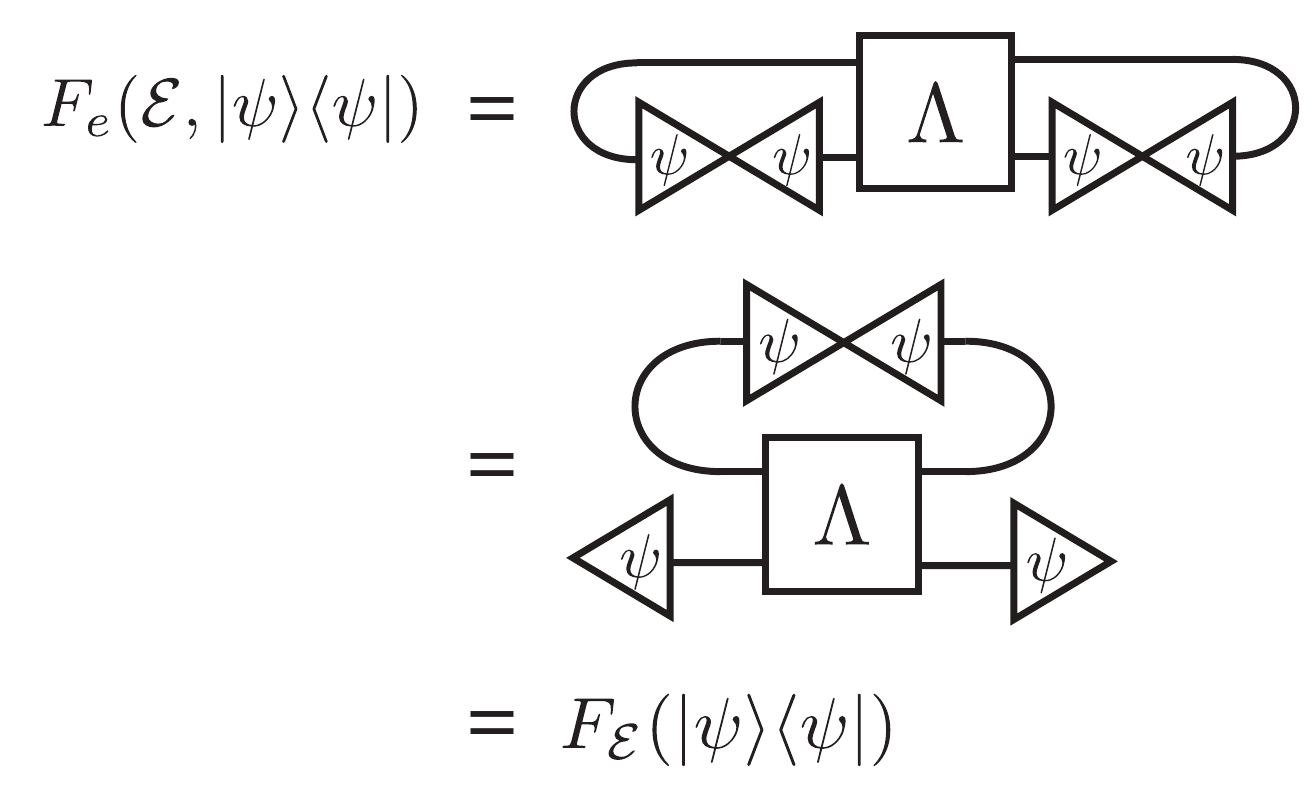}
\label{fig:ent-gate-proof}
\end{center}
Alternatively we can also define the average gate fidelity in terms of the entanglement fidelity with the identity operator
\be
\overline{F}_{\2E} = \frac{d + F_e(\2E, \I)}{d(d+1)}.
\ee

\section{Further Studies}

Open quantum systems represents an active and poorly understood area of research with results appearing frequently.  

Further directions include and are not limited to. 
\begin{itemize}
    \item[1.] 
    Present a tensor network explanation for entanglement breaking channels---see Figure 1 in \cite{Filippov_2013}.  (suggested by Sergey Filippov) 
    \item[2.] 
     Extending open dynamics to  non-Markovianity by considering the process tensor which corresponds to a recent debate in the community \cite{Milz_2017} see also Figure 1 \cite{luchnikov2019machine}. (suggested by Kavan Modi and then Sergey Filippov) 
    
    \item[3.]
     Extend ideas related to \eqref{eqn:choithm}. $\2 E$ is unital iff $\Tr_{\2 X}[\Lambda]=\I_{\2 Y}$.  According to Konstantin Antipin, an interesting tensor network corresponds to that too---similar to \eqref{eqn:choi-TP}. (suggested by Konstantin Antipin and then Sergey Filippov) 
\end{itemize} 

 \part{Counting Solutions by Tensor Contraction}\label{part:count}

This chapter presents methods to count via tensor contractions. Starting first with Boolean tensor contractions, the chapter ends with tensor contractions for edge coloring's of 3-regular planner graphs.  The chapter follows partially \cite{2015JSP...160.1389B} and less so \cite{Penrose} as presented in \cite{biamonte2017tensor}. 

\section{Returning to Boolean Quantum States}

Boolean states were considered in detail in \S~\ref{sec:btn}, while properties of Boolean algebra are reviewed in Appendix \ref{sec:Boolean}.  Here we again recall certain key notations, 
with a succinct presentation tailored towards the use of tensor networks for counting \cite{2015JSP...160.1389B, VTN, Morton_2012, 2012PhRvL.109c0503C,Kourtis2019}.  

\begin{remark}
A quantum state is called {\it Boolean} if and only if it can be written in a local basis with amplitude coefficients taking only binary values $0$ or $1$. We relate such states with Boolean functions, allowing for a host of tools from algebra to be applied to their analysis. The present note derives several relations of these states, related to the contraction of the corresponding tensor networks.   
\end{remark}

\begin{remark}
 Note that {\it quantum Boolean functions} have alternatively been studied \cite{cj10-01} as unitary projectors (for unitary projector $f$,  spec($f$)$\in\{0,1\}$). 
\end{remark}

\begin{remark}[Notation]
We use $\7B$ to denote a Boolean bit, given by an element of the set $\{0,1\}$.  
A number in $\7B^n$ then denotes an $n$-long Boolean bit string.  If $x$ is a bit string, 
then we use $\ket{x}$ as an index for a basis state.  
If $f:\7B^n\rightarrow \7B$ then $\ket{f(x)}$ also indexes a basis state.  
\end{remark}

\begin{definition}[The class of Boolean quantum states \cite{2015JSP...160.1389B}]
 Let 
 \be 
 f:\7 B^n \rightarrow \7B
 \ee 
 be any switching function.  Then 
 \be 
 \psi_{\7 B} = \sum_{\x} \ket{\x}\ket{f(\x)}
 \ee 
 is an arbitrary representative in the class of Boolean states.  In this fashion, every Boolean function gives rise to a quantum state.  Conversely, every quantum state written in a local basis with amplitude coefficients taking binary values in $\{0,1\}$ gives rise to a Boolean function.  This defines the so called, class of Boolean quantum states \cite{CTNS}.
\end{definition}

\begin{theorem}[Boolean tensor network states \cite{CTNS}]\label{theorem:btns}
 A tensor network representing a Boolean quantum state is determined from the classical network description of the corresponding function.  
\end{theorem}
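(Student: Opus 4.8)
The plan is to make precise the slogan ``compose gates, get a tensor network.'' First I would fix the setup: let $f:\7B^n\rightarrow\7B$ be a switching function and let $C$ be a classical circuit (a directed acyclic graph of gates drawn from a universal set, say $\{\COPY,\XOR,\AND,\ket{1}\}$ as discussed in \S~\ref{sec:components}) that computes $f$. The goal is to produce, from $C$, a tensor network whose full contraction against an all-outputs-$\ket{1}$ effect on the designated output wire yields $\psi_{\7B}=\sum_{\x}\ket{\x}\ket{f(\x)}$ up to a scalar, equivalently yields $\sum_{\x}f(\x)\ket{\x}$ after bending the output wire, as in \eqref{eqn:Booleaninput} and Figure~\ref{fig:bs}.

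The key steps, in order. (1) Replace each classical gate $g$ in $C$ by its Boolean tensor: a gate computing $(y_1,\ldots,y_m)=g(x_1,\ldots,x_k)$ becomes the valence-$(m,k)$ tensor $\sum_{\x}\ket{g(\x)}\bra{\x}$; for the generating set these are exactly the \COPY, \XOR, \AND{} tensors and the state $\ket{1}$ introduced in Definition~\ref{def:copy-prop} and \S~\ref{sec:XOR},~\ref{sec:AND}. Fan-out of a wire in $C$ is realized by the \COPY-tensor, which is legitimate since \COPY{} acts as $\ket{x}\mapsto\ket{xx}$ on basis states. (2) Show that wiring these tensors according to the edges of $C$ is the contraction that implements functional composition: if $h=g\circ g'$ on basis states, then contracting the output legs of the tensor for $g'$ into the input legs of the tensor for $g$ gives $\sum_{\x}\ket{g(g'(\x))}\bra{\x}$, by linearity of contraction and the fact that each intermediate index is summed over the computational basis, which resolves the identity on the wire. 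Iterating over a topological order of $C$ yields the single tensor $T_f=\sum_{\x}\ket{f(\x)}\bra{\x}$. (3) Bend the output leg with a cup (Penrose wire duality, \S~\ref{sec:Bwires}), or equivalently postselect the output with $\ket{1}$ and read the network backwards, to obtain $\sum_{\x}f(\x)\ket{\x}$ or $\sum_{\x}\ket{\x}\ket{f(\x)}$; this is precisely the content of \eqref{eqn:Booleaninput} and~\eqref{eqn:Booleanstates}. (4) Note the converse is immediate and was already observed: any state with $\{0,1\}$ amplitudes in a local basis defines $f$ by $f(\x)=$ (amplitude of $\ket{\x}$), so the correspondence is onto the class of Boolean states. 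Universality of the chosen gate set (\S~\ref{sec:AND}, Remark~\ref{re:quantumuniversal}) guarantees that every $f$ admits such a $C$, hence such a network.

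The main obstacle, and the step deserving the most care, is (2): making the ``composition $\to$ contraction'' correspondence rigorous when $C$ has arbitrary fan-out and shared subexpressions rather than being a clean left-to-right composition of single-output maps. The clean way is to argue by induction on the gates of $C$ in topological order: maintain the invariant that after processing the first $t$ gates, the partially contracted network equals $\sum_{\x}\ket{\x}\,\otimes\,\bigotimes_{w}\ket{v_w(\x)}$ where $w$ ranges over the currently ``open'' wires and $v_w(\x)$ is the bit carried by wire $w$ on input $\x$; processing one more gate $g$ (with its \COPY-tensors for any fan-out of its output) preserves the invariant because $g$'s tensor is diagonal-like in the basis and contraction with the open legs just substitutes. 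When the last gate is processed, the only open wires are the $n$ inputs and the single output, giving $T_f$. One should also remark that this argument is basis-dependent (it uses the computational basis throughout), which is harmless since Boolean states are defined relative to that basis, and that scalar factors (from \XOR/\AND{} units, cf.\ the scalar gauge remarks) are discarded. With the invariant spelled out, the remaining manipulations are the routine tensor-diagram identities already established in \S~\ref{sec:tensor}.
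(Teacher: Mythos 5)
Your proposal follows essentially the same route as the paper: replace each classical gate by its Boolean tensor so that composition of functions becomes contraction of tensors, invoke universality of the chosen gate set, and obtain the state by postselecting the output on $\ket{1}$ or bending the output wire, exactly as in \eqref{eqn:Booleaninput}, \eqref{eqn:Booleanstates} and Figure~\ref{fig:bs}. Your topological-order induction in step (2) merely spells out in detail what the paper leaves as the one-line argument ``let each classical gate act on a linear space and change the composition of functions to the contraction of tensors,'' so there is no substantive difference.
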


Theorem \ref{theorem:btns} was developed in \S~\ref{sec:btn}, where the quantum tensor networks are found by letting each classical gate act on a linear space and from changing the composition of functions, to the contraction of tensors.

\subsection*{Contracting networks to solve SAT instances} 

\begin{theorem}[Counting 3-SAT solutions]\label{theorem:3-SAT}
Let $f$ be given to represent a 3-SAT instance. Then the standard two-norm length squared can be made to give the number of satisfying assignments of the instance \cite{2015JSP...160.1389B}.

\begin{proof}
 The quantum state takes the form 
 \be
 \psi_f = \sum_{\x} \ket{\x}\braket{f(\x)}{1} = \sum_{\x} f(\x) \ket{\x}
 \ee 
 We calculate the inner product of this state with itself viz 
 \be 
 ||\psi||^2=\sum_{\x \y} f(\x) f(\y) \langle \x, \y\rangle = \sum_\x f(\x) 
 \ee 
 which gives exactly the number of satisfying inputs.  This follows since $f(\x)f(\y)=\delta_{\x\y}$.   
 We note that for Boolean states, the square of the two-norm in fact equals the one-norm.  
\end{proof}
\end{theorem}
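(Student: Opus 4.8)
The plan is to reduce the statement to the simple algebraic fact that a Boolean function is idempotent on the computational basis, and then interpret the two-norm of the associated quantum state as a sum of function values. First I would fix the encoding: given a 3-SAT instance on $n$ variables, let $f:\7B^n\to\7B$ be the switching function that returns $1$ exactly on the satisfying assignments. By the construction of Boolean tensor network states (Theorem \ref{theorem:btns}), $f$ is realized by contracting a tensor network built from the classical circuit description of the 3-SAT formula (a network of \AND{}, \OR{}, \COPY{} and \NOT{} tensors); post-selecting the output wire onto $\ket{1}$ then yields, up to the usual scalar gauge, the unnormalized state
\be
\psi_f = \sum_{\x\in\7B^n} \braket{1}{f(\x)}\,\ket{\x} = \sum_{\x\in\7B^n} f(\x)\,\ket{\x},
\ee
using $\braket{1}{f(\x)} = f(\x)$ since $f(\x)\in\{0,1\}$.

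The key step is the computation of $\langle\psi_f|\psi_f\rangle$. Expanding in the computational basis and using orthonormality $\langle\x,\y\rangle = \delta_{\x\y}$,
\be
\|\psi_f\|^2 = \sum_{\x,\y} f(\x)f(\y)\,\langle\x,\y\rangle = \sum_{\x} f(\x)^2.
\ee
Now the central observation: for $f(\x)\in\{0,1\}$ we have $f(\x)^2 = f(\x)$ (equivalently, $f(\x)f(\y) = \delta_{\x\y} f(\x)$ restricted to the diagonal), so
\be
\|\psi_f\|^2 = \sum_{\x} f(\x),
\ee
which counts exactly the bit strings $\x$ on which $f$ evaluates to $1$, i.e.\ the number of satisfying assignments of the 3-SAT instance. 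I would also note in passing that this same idempotence shows that for Boolean states the squared two-norm coincides with the one-norm $\sum_\x |f(\x)|$.

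The only genuine content beyond this one-line algebra is that the quantity $\|\psi_f\|^2$ is itself computable \emph{as a tensor contraction}, which is what makes the statement useful: one contracts the network for $\psi_f$ with its own conjugate (gluing the $n$ physical legs through cups/identity wires), producing a closed scalar network, and this scalar is the solution count. This part requires no new ideas --- it is just the statement that inner products of tensor-network states are closed tensor networks --- but I would spell it out diagrammatically so the ``counting by contraction'' claim is explicit. The main obstacle, such as it is, is bookkeeping rather than mathematics: one must be careful that the post-selection onto $\ket{1}$ is applied to the single output wire of the circuit realization of $f$ (not to the variable wires), and that global scalars dropped in the scalar gauge are tracked or renormalized away so that the final contracted number is literally $\sum_\x f(\x)$ and not a rescaling of it. Once the encoding conventions are pinned down, the proof is exactly the short argument given above.
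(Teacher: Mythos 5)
Your proposal is correct and follows essentially the same route as the paper: expand $\psi_f=\sum_\x f(\x)\ket{\x}$, use orthonormality of the computational basis together with $f(\x)^2=f(\x)$ to get $\|\psi_f\|^2=\sum_\x f(\x)$, and note that the squared two-norm equals the one-norm for Boolean states. Your handling of the idempotence step ($f(\x)^2=f(\x)$) is in fact stated a bit more cleanly than the paper's shorthand $f(\x)f(\y)=\delta_{\x\y}$, and the added remarks on the closed-network contraction and scalar gauge are consistent elaborations rather than a different argument.
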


\begin{remark}[Counting 3-SAT solutions]
  We note that solving the counting problem \eqref{theorem:3-SAT} for general formula is known to be {\sf \#P}-complete.
\end{remark}

\begin{corollary}[Solving 3-SAT instances] 
 The condition 
 \be 
 ||\psi_f|| > 0
 \ee 
 implies that the SAT instance corresponding to $f$ has a satisfying assignment.  Determining if this condition holds for general Boolean states is 
 an {\sf NP}-complete decision problem.\sn{Instead of SAT, sometimes the term UNIQUE-SAT or USAT is used to describe the problem of determining whether a formula known to have either zero or one satisfying assignments has zero or has one.}  Note that determining this for 
 \be 
 \psi_{\overline f} = \sum_{\x} \ket{\x}\braket{f(\x)}{0} = \sum_{\x} (1-f(\x)) \ket{\x}
 \ee
 in general is a tautology problem.   
\end{corollary}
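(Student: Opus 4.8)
The plan is to establish three separate claims packaged in this corollary. First, the implication $\lVert\psi_f\rVert > 0 \Rightarrow$ the SAT instance corresponding to $f$ has a satisfying assignment. This is immediate from Theorem~\ref{theorem:3-SAT}: there we computed $\lVert\psi_f\rVert^2 = \sum_{\x} f(\x)$, which counts the satisfying assignments. Since each $f(\x)\in\{0,1\}$, the sum is a nonnegative integer, and it is strictly positive precisely when at least one $\x$ has $f(\x)=1$, i.e.\ precisely when $f$ is satisfiable. So this direction is a one-line consequence of the earlier theorem; I would simply cite it and note the equivalence $\lVert\psi_f\rVert>0 \Leftrightarrow \exists\,\x:\ f(\x)=1$.

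Second, the complexity claim that deciding $\lVert\psi_f\rVert>0$ for general Boolean states is $\mathsf{NP}$-complete. Membership in $\mathsf{NP}$ is witnessed by a satisfying assignment $\x$: given a candidate $\x$, one evaluates $f(\x)$ in polynomial time (the classical circuit / formula for $f$ is the input), and accepts iff $f(\x)=1$; by the first part this certifies $\lVert\psi_f\rVert>0$. Hardness follows because 3-SAT itself is $\mathsf{NP}$-complete (Cook--Levin plus the standard reduction to 3-CNF), and by Theorem~\ref{theorem:btns} the Boolean tensor network for $\psi_f$ is obtained directly from the classical network description of $f$, so the map from a 3-SAT instance $f$ to the decision problem ``$\lVert\psi_f\rVert>0$?'' is a polynomial-time reduction. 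I would also remark, as the footnote in the statement hints, that when $f$ is promised to have at most one satisfying assignment this is the UNIQUE-SAT/USAT promise problem.

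Third, the final sentence: determining the analogous condition for $\psi_{\overline f} = \sum_{\x}\ket{\x}\braket{f(\x)}{0} = \sum_{\x}(1-f(\x))\ket{\x}$ is ``in general a tautology problem.'' Here $\lVert\psi_{\overline f}\rVert^2 = \sum_{\x}(1-f(\x))$ counts the falsifying assignments of $f$; this quantity vanishes exactly when $f(\x)=1$ for all $\x$, i.e.\ when $f$ is a tautology. So the decision ``$\lVert\psi_{\overline f}\rVert = 0$?'' is the TAUTOLOGY problem (equivalently, by complementation, UNSAT), which is $\mathsf{coNP}$-complete. I would phrase this as: asking whether $\psi_{\overline f}$ is the zero vector is equivalent to asking whether $f$ is identically $1$, hence a tautology check.

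\textbf{Main obstacle.} None of the steps involves a deep calculation; the content is almost entirely in correctly invoking Theorem~\ref{theorem:3-SAT} and Theorem~\ref{theorem:btns} and in being careful about which quantity ($\sum f$ vs.\ $\sum(1-f)$) corresponds to which decision problem. The one point requiring a little care is making the polynomial-time reduction explicit enough to justify $\mathsf{NP}$-hardness: one must observe that constructing the tensor network (or equivalently the state $\psi_f$ as a formal object on which the norm-condition is posed) from the 3-CNF formula $f$ is genuinely polynomial in the size of $f$, which is exactly the content of Theorem~\ref{theorem:btns} applied gate-by-gate. That bookkeeping, rather than any mathematical difficulty, is the part I would write out most carefully.
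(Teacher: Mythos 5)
Your proposal is correct and follows essentially the same route the paper intends: the positivity claim is an immediate consequence of Theorem~\ref{theorem:3-SAT}'s computation $\lVert\psi_f\rVert^2=\sum_{\x}f(\x)$, the {\sf NP}-completeness is the standard Cook--Levin fact (which the paper simply asserts, with the poly-time construction of $\psi_f$ from $f$ supplied by Theorem~\ref{theorem:btns} exactly as you say), and the complemented state's norm counting falsifying assignments gives the tautology/{\sf coNP} statement. No gaps; your write-up merely makes explicit the bookkeeping the paper leaves implicit.
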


\subsection*{Graphical depiction and physicality} 

\begin{remark}[Graphical depiction]
 The algorithm is depicted below.  (a) gives a network realization of the function and determining if the network in (b) contracts to a value greater than zero solves a SAT instance.   
\begin{center}
\includegraphics[width=10\xxxscale]{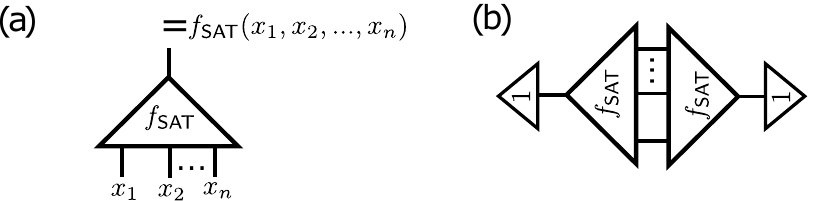}
\end{center}
\end{remark}

As it happens, some time ago Penrose proved a theorem which applies directly to the physicality of Boolean satisfiable states.  
We changed the wording of the theorem only slightly, changing spin network to tensor network.  

\begin{theorem}[Penrose, 1967]
 The norm of a tensor network vanishes iff the physical situation it represents is forbidden by the rules of quantum mechanics \cite{Penrose67}. 
\end{theorem}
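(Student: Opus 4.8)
The plan is to interpret the statement in the precise setting that the preceding sections establish---tensor networks built from cups, caps, $\COPY$-tensors, $\XOR$-tensors, the $\AND$-tensor, and single-qubit states/effects, representing Boolean quantum states---and to show the claimed equivalence by a direct contraction argument rather than anything deep. First I would fix the interpretation of ``physical situation it represents'': a closed tensor network (no open legs) with a Boolean/spin-network labelling contracts to a scalar, and that scalar is (up to the scalar gauge discussed in the excerpt) a non-negative integer counting the admissible labellings. This is exactly the content established for Boolean states: by Theorem \ref{theorem:btns} a Boolean quantum state $\psi_{\7B} = \sum_{\x} f(\x)\ket{\x}$ arises from the classical network for $f$, and by Theorem \ref{theorem:3-SAT} the relevant contraction equals $\sum_{\x} f(\x)$, the number of satisfying assignments. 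The ``rules of quantum mechanics'' forbid the situation precisely when no admissible (satisfying) configuration exists.

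The key steps, in order, would be: (i) Reduce the general closed network to a sum over basis labellings of the internal wires. Using the resolution of the identity $\I = \sum_k \ket{k}\bra{k}$ from Equation \eqref{eqn:delta} on every internal wire, and the fact that the elementary tensors ($\COPY$, $\XOR$, $\AND$, cups, caps) all have components in $\{0,1\}$ (or more generally non-negative entries in the Boolean setting), the full contraction becomes $\sum_{\text{labellings}} \prod_{\text{tensors}} (\text{local weight}) = \sum_{\text{labellings}} \mathbbm{1}[\text{labelling admissible}]$, a sum of non-negative terms. (ii) Observe that this scalar is therefore zero if and only if \emph{every} labelling has at least one local tensor evaluating to zero, i.e.\ there is no globally consistent assignment---which is exactly the statement that the physical configuration is forbidden. (iii) Conversely, if the scalar is positive, at least one admissible labelling exists, exhibiting a valid quantum-mechanical state/process, so the situation is allowed. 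The formal skeleton is the one-line computation already displayed in the proof of Theorem \ref{theorem:3-SAT}, namely $\lVert\psi_f\rVert^2 = \sum_{\x\y} f(\x)f(\y)\langle\x,\y\rangle = \sum_{\x} f(\x)$, combined with the remark there that for Boolean states the square of the two-norm equals the one-norm, so vanishing of the norm is equivalent to $f\equiv 0$.

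The main obstacle is not the computation but pinning down the hypotheses so the ``iff'' is actually true in the stated generality: the positivity/$\{0,1\}$-valuedness of all building-block tensors is essential, since for a network with general complex weights cancellations can make the contraction vanish even when individual admissible configurations exist (the norm of a generic superposition is a poor witness of ``forbidden''). So I would either (a) restrict to the Boolean/spin-network class where all local weights are non-negative---which is the intended scope of Penrose's original statement and the scope in which the excerpt invokes it---or (b) phrase the network as $\bra{\psi}\psi\rangle$ for the state $\psi$ it builds, so the scalar is manifestly $\sum \lvert(\text{amplitude})\rvert^2 \ge 0$ and vanishes iff all amplitudes vanish iff $\psi = 0$ iff the state is not a legitimate (normalizable, nonzero) quantum state. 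Either route reduces the theorem to the elementary non-negativity argument above; the remaining work is purely expository, making explicit that ``forbidden by the rules of quantum mechanics'' means precisely ``corresponds to the zero vector in Hilbert space,'' after which the claim is immediate from $\lVert\psi\rVert = 0 \iff \psi = 0$.
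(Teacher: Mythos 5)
The paper never actually proves this statement: it is imported verbatim (with ``spin network'' changed to ``tensor network'') as Penrose's 1967 result, cited to \cite{Penrose67}, and is supported only by the examples that follow it, namely the vanishing amplitude $\braket{01}{\Phi^+}=0$ and the remark that a vanishing $\braket{\psi}{\psi}$ signals a non-physical state. Your proposal therefore supplies a proof where the paper offers a citation, and it is consistent with how the paper actually deploys the theorem: reading ``forbidden by the rules of quantum mechanics'' as ``the network represents the zero vector,'' the claim reduces to $\lVert\psi\rVert^{2}=\braket{\psi}{\psi}=\sum\lvert\text{amplitude}\rvert^{2}=0$ iff $\psi=0$, which is your route (b); and in the Boolean setting this specializes to the counting computation $\lVert\psi_f\rVert^{2}=\sum_{\x}f(\x)$ that the paper uses for 3-SAT and for the corollary that all physical Boolean states are satisfiable, which is your route (a). Your scope caveat is the substantive contribution and is well taken: for a general closed network with signed or complex weights the contraction can vanish by cancellation even when admissible configurations exist --- the paper's own $K_{3,3}$ edge-colouring example (twelve proper colourings, yet a vanishing contraction) exhibits exactly this failure --- so the ``iff'' is only safe either for contractions of the form $\braket{\psi}{\psi}$ or for the non-negative Boolean/spin-network class, precisely the hedge you make. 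In short, your argument is correct under the clarified hypotheses and fills a gap the paper leaves to the citation; the only thing to flag is that Penrose's original spin-network theorem is a stronger statement about angular-momentum recoupling, so you should present your proof as covering the theorem in the restricted sense in which this book uses it, not as a reconstruction of Penrose's own argument.
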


The above theorem applies to quantum states.  Consider instead a process that involved the impossibility of measuring a state to be in 
a certain state.  To capture when such a process is impossible, we modify Penrose's theorem as follows.  

\begin{corollary}[Forbidden process]
 The contraction of a tensor network vanishes iff the physical situation it represents is forbidden by the rules of quantum mechanics.  
\end{corollary}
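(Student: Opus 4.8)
The plan is to prove the ``Forbidden process'' corollary by reduction to Penrose's 1967 theorem, which has just been stated for quantum states (tensor networks with only open legs, interpreted as vectors). The key observation is that any tensor network representing a process---that is, a network with both input and output legs, or more generally a network representing an amplitude $\langle\phi|E|\psi\rangle$ for some states and a channel---can be transformed into a network with only ``output'' type legs by using Penrose wire-bending duality (the cups and caps of \S~\ref{sec:Bwires}). Bending all the input wires to the same side converts a valence-$(n,m)$ tensor into a valence-$(0, n+m)$ tensor, i.e.\ into something representing a state, and this operation is invertible by the snake equation. So first I would make precise the claim that a ``physical process'' in this setting means a tensor network expression whose contraction yields an amplitude or more generally a completely-contracted scalar, and whose vanishing corresponds to the impossibility of the process (e.g.\ a measurement outcome occurring with zero probability, or an output state being orthogonal to a target).

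Next I would invoke the bending-wires argument: given a process network $N$ representing a scalar (or a state after some contractions), apply Penrose duality to reshape it into a pure-state network $N'$ on the appropriate tensor product Hilbert space, so that the contraction of $N$ equals $0$ if and only if $N'$ is the zero vector. This uses the fact, established in \S~\ref{sec:Bwires} and the snake/zig-zag equation, that bending a wire with a cup or cap is an invertible operation, and that \swap{} is self-inverse; hence all reshapes of a tensor are equivalent up to Penrose duality and in particular a reshape is zero iff the original is zero. Then Penrose's 1967 theorem applied to $N'$ states that the norm of $N'$ vanishes iff the physical situation it represents is forbidden by quantum mechanics. Finally I would translate back: the physical situation represented by $N'$ is forbidden iff the corresponding process (the one $N$ describes) is forbidden, because the reshaping is just a bookkeeping operation that does not change which physical configurations are allowed---it only rearranges which legs are regarded as ``in'' and which as ``out.''

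The main obstacle, and the step requiring the most care, is the precise formalization of the phrase ``the physical situation it represents is forbidden by the rules of quantum mechanics'' for a \emph{process} as opposed to a \emph{state}. For states the meaning is clear (a zero vector is not a valid quantum state), but for a process one must argue that, say, $\langle\phi|E|\psi\rangle = 0$ genuinely corresponds to an impossible physical event---namely, that preparing $\psi$, applying $E$, and then measuring the projector onto $\phi$ yields outcome $\phi$ with probability zero. This requires being careful about normalization (the scalar gauge issues mentioned repeatedly in the text) and about the distinction between ``amplitude zero'' and ``the event cannot happen''; these coincide precisely because the Born rule gives probability $|\langle\phi|E|\psi\rangle|^2$. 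I would also need to handle the case where the process network still has open legs after contraction (representing a non-normalized state rather than a scalar), in which case ``forbidden'' should be read as ``this state cannot arise,'' reducing once more to the state version of the theorem. Once this interpretive bridge is built, the mathematical content is entirely carried by the reshaping invariance and Penrose's original theorem, so no new computation is needed.
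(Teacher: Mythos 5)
A point of reference first: the paper does not actually supply a proof of this corollary. It is stated immediately after Penrose's 1967 theorem as a ``modification'' of that theorem, and is supported only by the interpretive example $\braket{01}{\Phi^+}=0$: the contraction is read as an amplitude, and a vanishing amplitude is read, via the Born rule, as an event of probability zero, i.e.\ a forbidden process. So your closing paragraph --- amplitude zero iff probability zero iff forbidden --- is exactly the bridge the paper has in mind, and your remark that the real work is formalizing ``forbidden'' for a process rather than a state is well placed.

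However, the reduction you place at the center of the argument contains a step that fails as stated. You claim that Penrose duality reshapes the process network $N$ into a state network $N'$ such that the contraction of $N$ vanishes iff $N'$ is the zero vector, on the grounds that ``a reshape is zero iff the original is zero.'' That fact is true for bending the \emph{open} wires of a single tensor, but the corollary concerns a fully contracted network, which has no open wires; to obtain a state you must cut the network open, and cutting does not preserve zero-ness. The paper's own example is a counterexample to your equivalence: $\braket{01}{\Phi^+}=0$, yet the state obtained by bending the bra legs back up, $\ket{01}\otimes\ket{\Phi^+}$, has norm $\sqrt{2}\neq 0$. The repair is to pass not to a reshape but to the projected (post-selected) state, e.g.\ $\ketbra{\phi}{\phi}E\ket{\psi}$, whose norm equals $|\bra{\phi}E\ket{\psi}|$ for normalized $\ket{\phi}$; this is the state-level object whose vanishing is equivalent to the vanishing of the contraction, and Penrose's norm theorem then applies to it. Alternatively, drop the reduction altogether and argue directly from the Born rule, which is what the paper implicitly does; with either substitution the rest of your argument, including the scalar-gauge caveats, goes through.
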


\begin{example}[Examples of Penrose's theorem]
 Consider a Bell state $\Phi^+ = \ket{00}+\ket{11}$.  The amplitude of the first party measuring $\ket{0}$ followed by the second party measuring $\ket{1}$ is zero. This vanishing tensor network contraction is given by $\braket{01}{\Phi^+}$.  A second example is found by considering the norm of a state $\ket{\psi}$ formed by a network of connected tensors, by taking an inner product with a conjugated copy of itself $\braket{\psi}{\psi}$.  If this inner product vanishes, the network necessarily represents a non-physical quantum state, by Penrose's theorem. 
\end{example}

\begin{corollary}
All physical Boolean states are satisfiable.  
\end{corollary}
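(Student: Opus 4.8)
The plan is to unwind the definitions and apply Penrose's theorem (the version stated just above as ``Penrose, 1967'') in the contrapositive direction. First I would make precise what is being asserted: a Boolean state $\psi_{\7B} = \sum_{\x} f(\x)\ket{\x}$ associated to a switching function $f:\7B^n\rightarrow\7B$ is called \emph{physical} exactly when it is not forbidden by the rules of quantum mechanics, which by Penrose's theorem is equivalent to the non-vanishing of its norm, $\|\psi_{\7B}\|\neq 0$. It is called \emph{satisfiable} when the underlying SAT instance encoded by $f$ has a satisfying assignment, i.e.\ there exists $\x\in\7B^n$ with $f(\x)=1$.

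The key computation is the one already carried out in Theorem~\ref{theorem:3-SAT}: using that $f$ takes values in $\{0,1\}$, so that $f(\x)f(\y)=\delta_{\x\y}f(\x)$ and in particular $f(\x)^2=f(\x)$, we have
\be
\|\psi_{\7B}\|^2 = \sum_{\x,\y} f(\x)f(\y)\langle\x,\y\rangle = \sum_{\x} f(\x),
\ee
which is precisely the number of satisfying assignments of $f$. Hence $\|\psi_{\7B}\|>0$ if and only if $\sum_{\x}f(\x)\geq 1$, i.e.\ if and only if $f$ is satisfiable. So I would then argue: suppose $\psi_{\7B}$ is physical. By Penrose's theorem its norm does not vanish, so $\|\psi_{\7B}\|^2 = \sum_{\x}f(\x) > 0$; since this is a sum of values in $\{0,1\}$, at least one term equals $1$, giving an $\x$ with $f(\x)=1$. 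Therefore the SAT instance is satisfiable. (The converse direction, that every satisfiable Boolean state is physical, is immediate from the same identity and reinforces that ``physical'' and ``satisfiable'' coincide for this class, with the lone exception being the constant-zero function, whose associated ``state'' is $0$ and is the unique non-physical, hence unsatisfiable, Boolean state.)

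The only real subtlety -- and the step I expect to need the most care -- is the appeal to Penrose's theorem itself: one must be comfortable identifying ``forbidden by the rules of quantum mechanics'' with ``zero-norm vector'' and, conversely, accepting that any nonzero vector (after normalization) is a legitimate quantum state. This is exactly the content of the quoted 1967 result, so invoking it is legitimate here; I would simply flag that the Boolean state as written is unnormalized, and that normalizing it is possible precisely when the norm is nonzero, which is the physicality hypothesis. No further calculation is required beyond the one-line norm identity inherited from Theorem~\ref{theorem:3-SAT}.
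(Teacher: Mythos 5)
Your argument is correct and is essentially the paper's own reasoning: the corollary is stated there as an immediate consequence of the norm identity $\|\psi_f\|^2=\sum_{\x}f(\x)$ from the counting theorem together with Penrose's 1967 theorem identifying physicality with non-vanishing norm. Your added remarks on the converse and on normalization are consistent with the paper but not needed for the statement.
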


\begin{remark}[read-once]
A function $f$ is called \textit{read-once} iff it can be represented
as a Boolean expression using the operations conjunction, disjunction and negation, in
which every variable appears exactly once. We call such a factored expression a read-once
expression for $f$.  These correspond exactly to fan-out only circuits.  From this structure we conclude directly that  
\end{remark}

\begin{corollary}
All \textit{read-once} formula are satisfiable \cite{2015JSP...160.1389B}.  
\end{corollary}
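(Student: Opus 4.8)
\textbf{Proof proposal for the final statement (``All \textit{read-once} formulae are satisfiable'').}

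The plan is to exhibit an explicit satisfying assignment for any read-once formula by structural induction on its read-once expression, thereby bypassing any tensor-network machinery entirely; the tensor-network corollaries above would then follow as a consequence rather than as a prerequisite. First I would recall that, by the definition of \emph{read-once}, the function $f:\7B^n\to\7B$ admits a Boolean expression built from $\wedge$, $\vee$, and $\neg$ in which each of $x_1,\dots,x_n$ occurs exactly once. I would proceed by induction on the number of connectives in such an expression $E$, with the inductive hypothesis being the stronger claim: every read-once expression is \emph{simultaneously} satisfiable and falsifiable, i.e.\ there exist assignments $\x^{\mathrm{sat}}$ and $\x^{\mathrm{fal}}$ with $f(\x^{\mathrm{sat}})=1$ and $f(\x^{\mathrm{fal}})=0$. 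Strengthening to include falsifiability is the standard trick that makes the $\wedge$ and $\neg$ cases go through cleanly.

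The base case is $E = x_i$ (a single literal), where $x_i=1$ satisfies and $x_i=0$ falsifies. For the inductive step: if $E = \neg E'$, apply the hypothesis to $E'$ and swap the two witnesses. If $E = E_1 \wedge E_2$, the read-once property guarantees that $E_1$ and $E_2$ act on \emph{disjoint} sets of variables, so I can combine $\x^{\mathrm{sat}}_1$ and $\x^{\mathrm{sat}}_2$ into a single assignment on the union of the variables to satisfy $E$; to falsify $E$, take $\x^{\mathrm{fal}}_1$ on the variables of $E_1$ and any assignment on those of $E_2$. The case $E = E_1 \vee E_2$ is dual, using $\x^{\mathrm{fal}}_1$ and $\x^{\mathrm{fal}}_2$ combined for falsification and $\x^{\mathrm{sat}}_1$ extended arbitrarily for satisfaction. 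In every case the disjointness of variable supports — which is exactly the content of ``each variable appears exactly once'' — is what lets me glue partial assignments without conflict.

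I would then close the loop with the language of the chapter: since $f$ is satisfiable, the Boolean state $\psi_f = \sum_{\x} f(\x)\ket{\x}$ from Theorem~\ref{theorem:3-SAT} has $\|\psi_f\| > 0$, so by Penrose's theorem it represents a physically allowed configuration; conversely this is consistent with the Corollary that all physical Boolean states are satisfiable. The main (and essentially only) obstacle is making sure the induction is set up with the right strengthened hypothesis: a naive induction on satisfiability alone fails at the $\wedge$-step because satisfying $E_1\wedge E_2$ needs witnesses for \emph{both} conjuncts at once and the $\neg$-step needs falsifying witnesses — so I must carry both a satisfying and a falsifying assignment through the whole argument. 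Everything else is routine case-checking on the four expression forms.
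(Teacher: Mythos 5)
Your proof is correct and takes essentially the same route as the paper: the paper merely remarks that read-once formulae correspond to fan-out-only circuits and concludes ``directly,'' which is precisely your observation that the subformulae of a read-once expression have disjoint variable supports, so partial witnesses can be glued (or propagated backwards through the tree) without conflict. Your strengthened induction hypothesis---carrying both a satisfying and a falsifying assignment so that the negation and conjunction cases go through---is the natural way to make that ``direct'' conclusion rigorous, and is compatible with the paper's tensor-network reading (a tree of isometries has nonvanishing norm).
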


\begin{remark}[Quantum read-once]
A quantum quantum state is called read-once if it can be represented by a tensor tree containing only isometries.   
\end{remark}

\begin{corollary}
All quantum \textit{read-once} formula are satisfiable, with the evaluation of k-point functions polynomial in the particle number.  
\end{corollary}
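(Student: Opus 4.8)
The plan is to combine the preceding \textit{read-once} corollary with the structural definition of a \emph{quantum read-once} state. First I would recall that a quantum read-once state is, by definition, represented by a tensor tree all of whose internal tensors are isometries; the point of the tree structure is that it is acyclic and each bond is encountered exactly once along the unique path between any two open legs. The key reduction is that such a tree, when interpreted as a Boolean tensor network state via Theorem~\ref{theorem:btns}, corresponds exactly to a read-once Boolean expression: each isometry plays the role of a fan-out/copy or a single logic operation, and since no variable is revisited, the underlying function $f$ has a read-once expression. Then satisfiability follows immediately from the corollary ``All \textit{read-once} formula are satisfiable \cite{2015JSP...160.1389B}.''

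Concretely, the steps in order are: (i) fix the tensor tree $T$ representing $\ket{\psi_{\7B}}$ and orient it from a chosen root; (ii) show that contracting $T$ against $\ket{\x}$ on the input legs and $\bra{1}$ on the output leg evaluates $f(\x)$, and that because every internal node is an isometry, the contraction factorizes along the tree without cancellation, so $\|\psi_f\|^2 = \sum_{\x} f(\x) > 0$ by Theorem~\ref{theorem:3-SAT} unless $f$ is identically zero; (iii) argue that an all-isometry tree cannot realize the constant-zero function --- an isometry $V$ satisfies $V^\dagger V = \I$, hence is injective on its input space and in particular never annihilates the whole space, so propagating a nonzero vector up from the leaves yields a nonzero amplitude for at least one assignment; (iv) conclude $\|\psi_f\|>0$, i.e.\ the state is physical, and invoke the Corollary ``All physical Boolean states are satisfiable'' (equivalently the read-once corollary directly). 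For the ``evaluation of $k$-point functions polynomial in the particle number'' clause, I would observe that computing $\bra{\psi} O_{i_1}\otimes\cdots\otimes O_{i_k}\otimes \I^{\otimes(n-k)}\ket{\psi}$ amounts to contracting two copies of the tree glued along the non-observed legs; since a tree has treewidth one, this doubled network still has bounded treewidth, and standard tree contraction runs in time linear in the number of tensors, hence polynomial (indeed linear) in $n$.

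The main obstacle I anticipate is step (iii): pinning down precisely why an isometric tree cannot produce the zero function (or more subtly, cannot produce a state whose every amplitude with $\bra{1}$ at the output vanishes). One must be careful that ``isometry'' refers to the map from the internal/bond legs to the remaining legs in the appropriate direction, and that the output leg carrying the $\bra{1}$ postselection is treated consistently with the orientation. The clean way to handle this is to note that postselecting $\bra{1}$ on one leg of an isometry still leaves a nonzero (generically full-rank on a subspace) operator, and to induct from the root: the root isometry applied to $\bra{1}$ gives a nonzero covector on its children's bonds, and nonzeroness propagates down to the leaves, so some leaf assignment $\x$ has $f(\x)=1$. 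A secondary subtlety is making the identification ``quantum read-once tensor tree $\Leftrightarrow$ read-once Boolean formula'' rigorous rather than merely suggestive; I would lean on Theorem~\ref{theorem:btns} and the earlier remark that read-once formulas ``correspond exactly to fan-out only circuits'' to bridge the classical and quantum notions, rather than re-deriving the correspondence from scratch.
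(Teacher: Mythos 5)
The paper itself states this corollary without any written proof---it is meant to follow directly from the definition just given (a quantum read-once state is one representable by a tensor tree of isometries), from Penrose's vanishing-norm theorem and the ``physical Boolean states are satisfiable'' corollary, and from the efficient contractibility of trees---so your proposal is supplying an argument the paper leaves implicit. Its core (norm positivity from the isometry property, plus tree contraction for correlation functions) is the intended mechanism, but as written there is a genuine gap, and it sits exactly where you anticipated. Your step (i) reduction---``an isometric tensor tree interpreted via Theorem~\ref{theorem:btns} corresponds exactly to a read-once Boolean expression''---is not justified and not generally true: the quantum read-once definition imposes no Boolean structure whatsoever on the tensors, which may have arbitrary complex entries and need not be fan-out, \AND{}, \OR{} or \NOT{} tensors in any basis. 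Consequently ``satisfiable'' must be read in the Penrose sense (the contraction $\braket{\psi}{\psi}$ does not vanish, so the state is physical), not via a reduction to a classical \SAT{} instance as in Theorem~\ref{theorem:3-SAT}. Worse, the patch you propose for step (iii) is false as stated: postselecting $\bra{1}$ on one output leg of an isometry can give the zero operator (take $V\ket{x}=\ket{0}\otimes\ket{x}$, for which $(\bra{1}\otimes\I)V=0$), so ``nonzeroness propagates down the tree'' does not follow from isometry alone once a postselection is inserted. The fix is to drop the Boolean-function detour entirely: the definition concerns the state itself, with no $\bra{1}$ postselection, and since every tensor satisfies $V^\dagger V=\I$ the contraction $\braket{\psi}{\psi}$ telescopes along the tree to the norm of the root (seed) tensor, which is nonzero; satisfiability then follows from the ``all physical Boolean states are satisfiable'' corollary, i.e.\ from Penrose's theorem.

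For the $k$-point clause, your treewidth remark alone does not deliver the claim: contracting the doubled tree is polynomial only if the leg (bond) dimensions are bounded, which you should state explicitly, since nothing in ``tensor tree of isometries on $n$ particles'' forbids bonds of dimension exponential in $n$. The argument the isometric definition is built for is the causal-cone cancellation familiar from tree tensor networks: in $\bra{\psi}\,O_{i_1}\otimes\cdots\otimes O_{i_k}\otimes\I^{\otimes(n-k)}\,\ket{\psi}$ every tensor outside the joint causal cone of the $k$ observed legs cancels against its conjugate via $V^\dagger V=\I$, leaving only the tensors on the paths from those legs to the root---at most $O(kn)$ of them---so, for bounded tensor order and dimension (the natural quantum analogue of the bounded fan-in, fan-out-only classical setting), the evaluation is polynomial, indeed essentially linear, in the particle number.
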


\section{Returning to Stabilizer Tensor Theory} 

In \S~\ref{sec:stabtt}, we talked in detail about the Clifford group and
stabilizer theory.  In fact, we proved the following theorem, which
connects the theory of stabilizer states to the theory tensor networks
which represent pseudo Boolean forms, which we have developed in our work,
and presented in this book.  

\begin{theorem}[Stabilizer states as pseudo Boolean forms]
Let 
\be 
f, g, k:\7B^n\rightarrow \7B
\ee 
then the quantum state 
\be 
\psi_{\7B} = \sum (-1)^{f(\1 x)}(i)^{g(\1 x)}k(\1 x)\ket{\1 x}
\ee 
is sufficient to express any stabilizer state.  
\end{theorem}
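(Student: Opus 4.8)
The plan is to prove that any $n$-qubit stabilizer state can be written in the claimed form
\[
\psi_{\7B} = \sum_{\1x} (-1)^{f(\1x)}(i)^{g(\1x)}k(\1x)\ket{\1x}
\]
for suitable switching functions $f,g,k:\7B^n\to\7B$. The key structural fact I would use is that stabilizer states have a very restricted amplitude structure: every amplitude of a stabilizer state (in the computational basis) is, up to the overall normalization that we are free to discard in the scalar gauge, an element of $\{0, \pm 1, \pm i\}$, and moreover the support of the state is an affine subspace of $\7B^n$. This is a standard normal-form result (Dehaene--De Moor; see also the normal forms used for Clifford circuits), and I would invoke it as the backbone of the argument. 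First I would recall or sketch why this holds: starting from $\ket{0}^{\otimes n}$, which trivially has this form, I would show the property is preserved under the Clifford generators $H$, $P$, and $\CNOT$. Acting with $\CNOT$ permutes basis labels by an invertible affine map, so it preserves affine support and merely relabels amplitudes; acting with $P$ on qubit $j$ multiplies the amplitude of $\ket{\1x}$ by $i^{x_j}$, which adds a linear term to $g$; acting with $H$ on qubit $j$ performs a Fourier transform over $\7B$ in that coordinate, doubling the affine support dimension and introducing a $(-1)^{x_j y_j}$ phase, which is a quadratic contribution to $f$. Tracking these through an arbitrary Clifford circuit gives the desired form with $f$ at most quadratic (a binary quadratic form), $g$ affine, and $k$ the indicator function of the affine support.

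With that normal form in hand, the actual content of the statement is nearly immediate: the three functions are exactly the data of the normal form. Concretely, $k(\1x)$ is the $\{0,1\}$-valued indicator of the affine support $A\subseteq\7B^n$ (so $k$ is realized as a system of affine constraints, each an $\XOR$ of inputs with possibly a constant $\ket{1}$); $g(\1x)$ collects the coordinates where a factor of $i$ appears, which from the $P$-gate analysis is an affine function of $\1x$ on the support; and $f(\1x)$ records the sign $\pm 1$, which from the $H$-gate analysis is a quadratic Boolean form. Since the excerpt's proof only needs \emph{sufficiency} of expression (not a canonical or minimal choice), I would then simply state that each stabilizer state, written via its normal form, is captured by choosing these $f,g,k$, and conversely any such expression with $k$ an affine indicator, $g$ affine, $f$ quadratic gives back a stabilizer state --- though only the forward direction is required for ``sufficient to express.''

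The structure of the write-up would be: (i) state the amplitude normal form for stabilizer states as a lemma, with a short inductive proof over Clifford circuit depth using the three generators; (ii) read off $f$, $g$, $k$ from the normal form; (iii) remark that using a PPRM (Positive Polarity Reed--Muller form, from Lecture II, as the following remark in the excerpt already anticipates) each of $f,g,k$ can be given a unique algebraic normal form and hence a coefficient-vector parametrization. The main obstacle --- and the only step requiring genuine care --- is the bookkeeping in part (i): showing that the $H$-gate's Fourier transform, when composed repeatedly, never produces amplitudes outside $\{0,\pm1,\pm i\}$ and that the accumulated phase function stays within ``quadratic sign times linear $i$-power.'' This is where one must be careful about cross-terms generated when an $H$ acts on a qubit already entangled by previous gates; the clean way to handle it is to carry the invariant as ``the state equals $\sum_{\1y\in\7B^m} i^{\ell(\1y)}(-1)^{q(\1y)}\ket{E\1y + \1c}$ for an affine map $\1y\mapsto E\1y+\1c$, a linear form $\ell$, and a quadratic form $q$'' and verify each generator preserves exactly this shape. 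Everything else is routine.
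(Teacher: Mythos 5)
Your proposal is correct and rests on exactly the same key fact as the paper's proof: a stabilizer state is an equally weighted superposition whose coefficients lie in $\{0,\pm 1,\pm i\}$ (up to normalization/scalar gauge), from which $f$, $g$, $k$ are simply read off as the sign, the $i$-power, and the support indicator. The paper's proof asserts this amplitude structure in two lines and stops, whereas you additionally prove it by induction over the Clifford generators $H$, $P$, $\CNOT$ (recovering the normal form with $f$ quadratic, $g$ affine, and $k$ the indicator of an affine subspace), which is more than the theorem's ``sufficient to express'' claim requires but makes the argument self-contained.
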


We will now take a step in the other direction.  That is, we wish to
understand what Boolean states are stabilizer states.  Here we will
consider the class of linear quantum states.  We consider the general
theory elsewhere.  

\begin{definition}[The class of linear quantum states]
 We define the linear class of quantum states as quantum states of the
form 
 \be 
 \psi_{\oplus L} = \sum c_0\oplus c_1x_1\oplus c_2x_2\oplus ...\oplus
c_nx_n \ket{x_1, x_2, ..., x_n}
 \ee
 where $\forall i, c_i = 0,1$ selects the linear function uniquely.  (As
we have mentioned, $c_0=1$ results technically in the affine class of
classical circuits, but we still define this full class as, the class of
linear quantum states.) 
\end{definition}

Note that the laws of the algebra enforce the strong constraint, $x\oplus
x = 0$ and $0\oplus y = y$.  So we find immediately that we need only
consider two fully entangled states in this class, as every other state
is found from a direct product of states of this form.  The first is 
\be 
 \psi_1 = \sum x_1\oplus x_2\oplus ...\oplus x_n
\ket{x_1, x_2, ..., x_n}
\ee 
and the second is given by 
\be 
 \psi_2 = \sum 1\oplus x_1\oplus x_2\oplus ...\oplus x_n
\ket{x_1, x_2, ..., x_n}
\ee 
The tensor network differs only by contraction with the constant
$\ket{1}$.  $\psi_1$ is shown in (a) and $\psi_2$ is shown in (b).  
\begin{center}
 \includegraphics[width=0.45\textwidth]{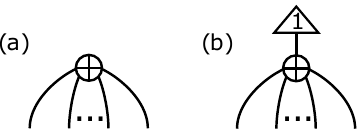} 
\end{center}

We will now consider the stabilizers of each of these cases, (a) and (b).  

\begin{remark}[Stabilizers of case (a)]
 The network in (a) is found from a Hadamard transform on all the legs of
a \COPY-tensor.  The $2^n$ stabilizers of the \COPY-tensor are generated
by the $n$ operators 
\be 
X_1\otimes X_2\otimes ...\otimes X_n
\ee 
\be 
Z_i\otimes Z_j, ~~~ 0 \leq i < j \leq n
\ee 
We have considered in lecture III how stabilizers transform.  Under the
Hadamard transform, the stabilizer generators transform to 
\be 
Z_1\otimes Z_2\otimes ...\otimes Z_n
\ee 
\be 
X_i\otimes X_j, ~~~ 0 \leq i < j \leq n
\ee 
\end{remark}

\section{Elementary Theorems of Tensor Contraction} 

The following can be used to prove graphical identities and represents a conceptual tool to aid 
in the analysis and design of tensor networks as a conceptual framework (as advocated in this lecture series) as well as a numerical tool for the simulation of quantum and classical physics.  

\begin{remark}[Linearity of tensor contraction] 
 Tensor contraction is linear in its arguments.  If $A$ is a tensor in a fully contracted network $\2C\{A\}$, if we let $A\mapsto A'+ B$ and then $A\mapsto kA$ we readily find that the contraction becomes $\2C\{A'\}+\2C\{B\}$ and $k\cdot \2C\{A\}$ respectively.  
\end{remark}

\begin{theorem}[Contraction to sum of products transform]
 Given a tensor $\Gamma^{...ijk}_{~~~~lmn...}$ in a fully contracted network (e.g.\ a network without open legs), the following graphical identity transforms the contraction, to a sum over products.  
\begin{center}
 \includegraphics[width=0.65\textwidth]{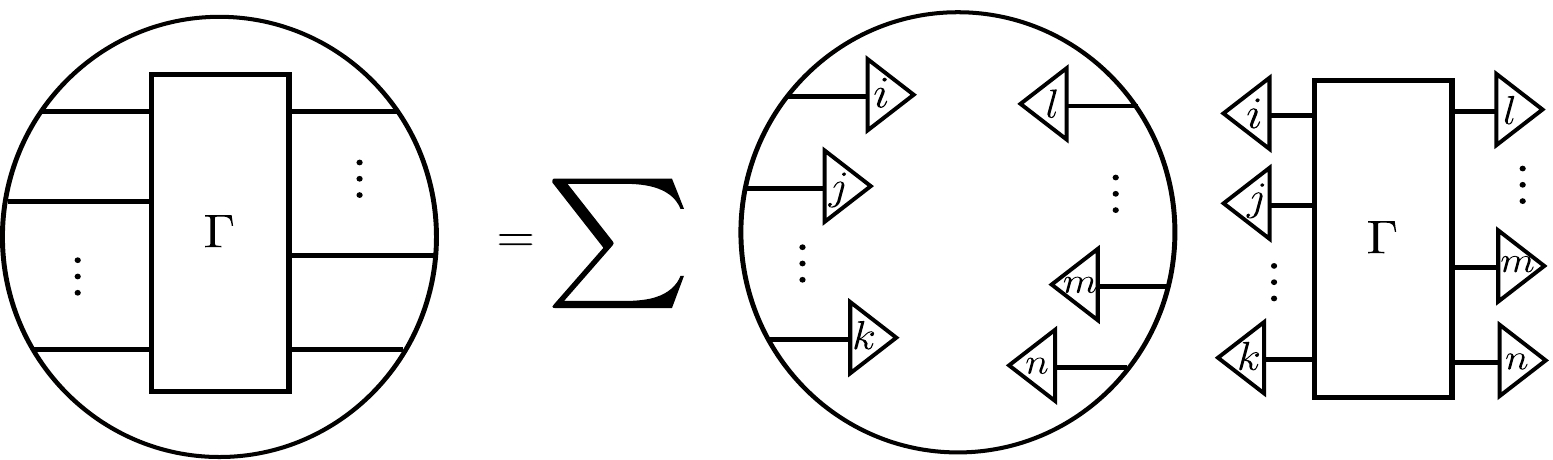} 
\end{center}
The circle is meant as an abstraction depicting a fully contracted but otherwise unknown network.  
\end{theorem}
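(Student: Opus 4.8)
The plan is to make precise the picture that the ``circle'' denotes an arbitrary but fixed fully contracted scalar produced by a tensor network in which the tensor $\Gamma^{\cdots ijk}_{\phantom{\cdots ijk}lmn\cdots}$ appears, and then to exploit the only facts about contraction that are actually needed: bilinearity and the resolution of the identity $\I=\sum_k\ketbra{k}{k}$ (Equation~\eqref{eqn:delta}). Concretely, I would begin by writing the fully contracted network as $\2C\{\Gamma\}$, a complex number obtained by summing the product of all tensor entries over all shared wire labels. Isolating the tensor $\Gamma$, every one of its legs $i,j,k,\dots$ (up) and $l,m,n,\dots$ (down) is contracted against some wire of the rest of the network; call the remainder network, with those wires left open, $R$. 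Then by definition of contraction we have the bilinear pairing
\be
\2C\{\Gamma\} = \sum_{ijk\cdots lmn\cdots} \Gamma^{\cdots ijk}_{\phantom{\cdots ijk}lmn\cdots}\; R\indices{_{\cdots ijk}^{lmn\cdots}},
\ee
which is exactly a ``sum over products'': for each assignment of basis labels to the legs of $\Gamma$, one multiplies the component of $\Gamma$ by the component of the complementary network, and sums.

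The key step is to justify this rewrite \emph{graphically}, since the theorem is stated as a graphical identity. First I would insert a resolution of the identity on each wire emanating from $\Gamma$ — i.e. cut each such wire and place $\sum_{k}\ketbra{k}{k}$ across the cut, which is legitimate because the wire is the identity tensor $\I$ (detailed in \S~\ref{sec:Bwires}) and Equation~\eqref{eqn:delta} says $\I=\sum_k\ketbra{k}{k}$. Diagrammatically this splits the network into two pieces joined only through summed-over basis states on each of $\Gamma$'s legs: the left piece is $\Gamma$ contracted with a collection of bras/kets (that is, the \emph{component} $\Gamma^{\cdots ijk}_{\phantom{\cdots ijk}lmn\cdots}$, a scalar for each fixed choice of labels), and the right piece is the rest of the network $R$ contracted with the complementary kets/bras (the scalar $R\indices{_{\cdots ijk}^{lmn\cdots}}$). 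Linearity of contraction (the preceding Remark) then lets us pull the sums over the inserted labels outside, yielding the displayed sum of products. The ``circle'' in the figure is precisely $R$, drawn as an opaque blob because its internal structure is irrelevant to the identity.

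The main obstacle, such as it is, is bookkeeping rather than mathematics: one must be careful that inserting identities on \emph{all} and \emph{only} the legs of $\Gamma$ genuinely disconnects $\Gamma$ from the remainder, which requires noting that any wire running directly from one leg of $\Gamma$ back to another leg of $\Gamma$ (a self-trace on $\Gamma$) is handled by the same insertion and simply contributes a repeated label shared between $\Gamma$'s own indices — so the formula still holds with the understanding that such indices are identified in the sum. I would also remark that the identity is manifestly symmetric in the roles of $\Gamma$ and $R$, which is the content of calling it a ``transform'': one may read it in either direction, folding a sum-of-products back into a single contracted diagram. No deep input is needed beyond bilinearity and the completeness relation, so the proof is short; the value of the statement is conceptual, as a named rewrite rule for subsequent use in the counting applications of \S~\ref{part:count}.
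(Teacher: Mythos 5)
Your proposal is correct and matches the route the paper itself has in mind: the paper states this theorem without a separate proof, but the surrounding material (the linearity-of-contraction remark and the immediately following ``\COPY{}-tensors as a resolution of identity'' theorem, where a wire is cut by two \COPY{}-tensors joined through the unit $\ket{+}=\sum_k\ket{k}$) supplies exactly the two ingredients you use — cutting each leg of $\Gamma$ with $\I=\sum_k\ketbra{k}{k}$ and pulling the sums out by linearity. Your insertion of the bare resolution of identity is just the basis-level form of the paper's \COPY{}-plus-unit rewrite, so the arguments are essentially the same; your remark on self-contractions of $\Gamma$ is a sensible bookkeeping addition.
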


\begin{theorem}[COPY-tensors as a resolution of identity]
The following sequence of graphical rewrites hold.  
\begin{center}
 \includegraphics[width=0.95\textwidth]{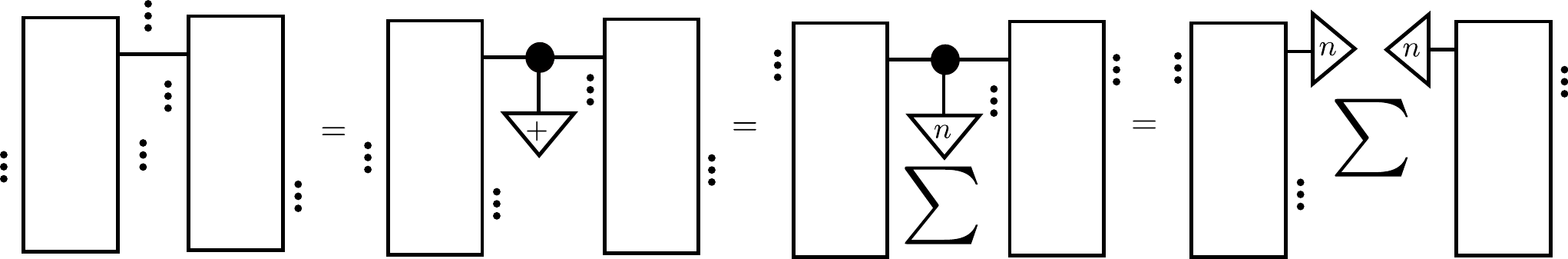} 
\end{center}
\begin{proof}
In the above figure, on the left, we abstractly depict a other wise arbitrary tensor network, by showing only one single wire.  The unit for the COPY-tensor is the plus state $\ket{+}$.  This state is defined as a sum over basis states $\ket{n}$.  The tensor copies these basis states, and splits apart.  
\end{proof}
\end{theorem}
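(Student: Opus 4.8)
The plan is to establish the chain of graphical rewrites by tracking what happens at each individual wire, using nothing more than the defining properties of the \COPY-tensor already collected in Definition \ref{def:copy-prop} and the unit law. First I would fix notation: an arbitrary fully contracted network is abstracted as a blob from which a single wire emerges, and the claim is that inserting a \COPY-tensor on that wire together with its unit $\ket{+}$ on one of the new legs reproduces the original wire, i.e.\ acts as a resolution of the identity. The key algebraic fact is the unit (or ``prune'') law $(\I\otimes\bra{+})\COPY = \I$, depicted in parts (c)-(e) of Definition \ref{def:copy-prop}: contracting a \COPY-tensor with the plus effect on one leg deletes that leg and leaves a smooth wire.

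The steps, in order, are as follows. First I would expand the unit $\ket{+}$ on its own as a sum over the copy points, $\ket{+} = \sum_n \ket{n}$ (for qubits $\ket{0}+\ket{1}$, and in dimension $d$ the analogous sum $\sum_{n=0}^{d-1}\ket{n}$). Second, I would feed each basis state $\ket{n}$ into the \COPY-tensor and invoke the copy-point property (part (b) of Definition \ref{def:copy-prop}): $\COPY\ket{n} = \ket{n}\ket{n}$, so the \COPY-tensor splits the wire into two parallel copies of the same basis state. Third, since tensor contraction is linear (as recalled in the Linearity remark just above the statement), summing over $n$ reassembles $\sum_n \ket{n}\bra{n} = \I$ on the wire that was singled out, while the other leg carries the conjugate index and is exactly the inserted auxiliary structure. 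Reading the rewrites in the displayed figure left to right, this shows the blob-with-one-wire equals the blob-with-\COPY-and-unit, which is the desired identity; reading right to left gives the contraction form.

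I do not expect a serious obstacle here: every ingredient — the unit law, the copy-point law, node equivalence, and linearity of contraction — is already in hand, so the ``proof'' is really an orchestration of these facts in the right order. The only point requiring a little care is bookkeeping of which leg receives the unit versus which leg remains open (and, if one is pedantic about the scalar gauge, that the normalization of $\ket{+}$ is being absorbed into the blank-on-the-page convention discussed in the Normalization remarks); beyond that it is a direct verification. The version written in the excerpt already captures exactly this argument, so I would simply present it cleanly with the sum-over-basis-states step made explicit.
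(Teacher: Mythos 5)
Your argument is correct and is essentially the paper's own proof, just spelled out in more detail: expand the unit $\ket{+}$ as a sum over basis states, let the \COPY-tensor copy each basis state so the wire splits, and use linearity of contraction to reassemble the identity. The extra explicitness about the resolution $\sum_n \ketbra{n}{n}=\I$ and the scalar-gauge bookkeeping is fine but does not change the route.
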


\begin{theorem}[A tensor contraction inequality]
Given a contracted network and a partition into two halves $x$, and $y$.  Writing the contraction as 
$\2C\{x,y\}$ the following inequality holds.  
\be 
\2C\{x,y\}\leq \2C\{x,x\}\cdot \2C\{y,y\}
\ee 
with graphical depiction.  
\begin{center}
 \includegraphics[width=0.67\textwidth]{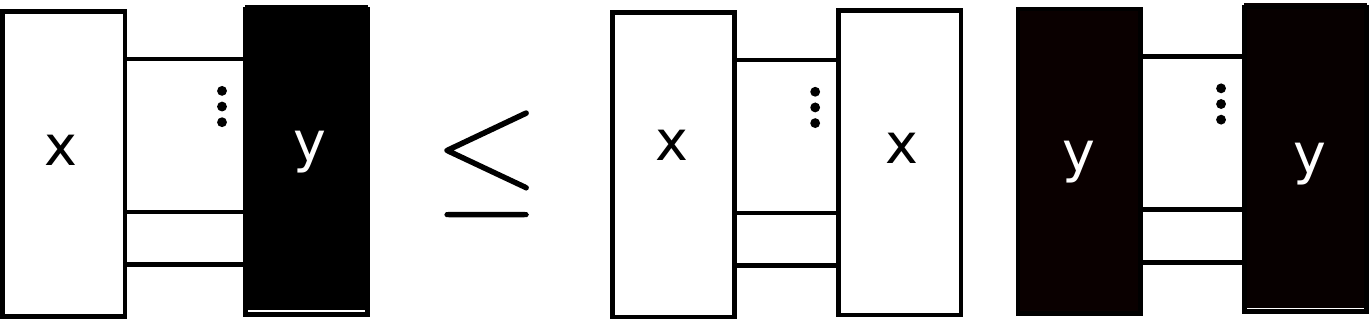} 
\end{center}
\begin{proof}
 By the linearity of tensor contraction, we arrive at an abstract form of the 
 Cauchy-Schwarz inequality, with equality in the contraction iff $x = \alpha \cdot y$.  This leads 
 directly to the concept of an angle between tensors, 
 \be 
 \cos \theta_{xy} = \frac{\2C\{x,y\}}{\2C\{x\}\cdot \2C\{y\}}
 \ee
 where the right side is either real valued, or we take the modulus.  
\end{proof}
\end{theorem}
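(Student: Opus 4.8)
The plan is to interpret the fully contracted network $\2C\{x,y\}$, which is split into two halves $x$ and $y$ joined along some set of wires, as an inner product between two vectors in the composite Hilbert space carried by those connecting wires. First I would make the bookkeeping precise: let $\2H$ be the tensor product of the vector spaces attached to the wires crossing the cut between the two halves, and let $\ket{x}\in\2H$ (respectively $\ket{y}\in\2H$) be the state obtained by contracting every internal wire of the $x$-half (respectively the $y$-half), leaving exactly the cut wires open. Then by the definition of network contraction, $\2C\{x,y\} = \langle \overline{x}, y\rangle$ in the sense that gluing the two halves along the cut wires executes precisely the sum over the shared indices that defines the inner product (here one must be mindful of the conjugation convention from the discussion of caps and wire bending in \S~\ref{sec:Bwires}; working with a real basis, as the paper does elsewhere, lets me suppress the overbar). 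Likewise $\2C\{x,x\} = \langle x, x\rangle = \lVert x\rVert^2$ and $\2C\{y,y\} = \lVert y\rVert^2$.

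With this identification the statement is exactly the Cauchy--Schwarz inequality
\be
|\langle x, y\rangle|^2 \leq \langle x,x\rangle \, \langle y, y\rangle,
\ee
so $\2C\{x,y\}\leq \2C\{x,x\}\cdot \2C\{y,y\}$ follows (after taking the modulus on the left, as the hint in the statement allows), with equality iff $\ket{x}$ and $\ket{y}$ are proportional, i.e.\ $x = \alpha\cdot y$ for some scalar $\alpha$. The key structural input is the linearity of tensor contraction, already recorded in the preceding remark: it is what guarantees that contracting the cut wires really is a bilinear pairing on $\2H$, so that the abstract Cauchy--Schwarz argument applies verbatim. I would then note that this immediately licenses defining an ``angle between tensors'' via
\be
\cos\theta_{xy} = \frac{\2C\{x,y\}}{\2C\{x\}\cdot \2C\{y\}},
\ee
with the right-hand side real-valued (or replaced by its modulus), which is the remark appended to the theorem.

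The main obstacle is not the inequality itself but getting the diagrammatic-to-algebraic dictionary honest: I need to verify that the cut genuinely separates the network into two pieces sharing only the designated wires (no stray wire runs from $x$ back to $x$ bypassing the cut in a way that spoils the inner-product reading), and that the conjugation picked up when a wire is bent to close the $\2C\{x,x\}$ loop is handled consistently so that $\2C\{x,x\}$ is a genuine squared norm rather than an unconjugated quadratic form (which could fail to be nonnegative). Once the real-basis convention is fixed and the bipartition is stated carefully, the rest is the standard one-line Cauchy--Schwarz argument, and the equality condition transcribes directly.
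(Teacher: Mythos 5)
Your proposal is correct and is essentially the paper's own argument: the paper's proof likewise just identifies the bipartitioned contraction, via linearity of contraction, with an inner product between the two halves and invokes Cauchy--Schwarz, with the same equality condition $x=\alpha\cdot y$ and the same resulting ``angle'' formula; you simply make the diagrammatic-to-algebraic dictionary (cut wires as the Hilbert space, conjugation of one half, real-basis convention) explicit. The one caveat---shared with the paper's own statement---is that Cauchy--Schwarz yields $|\2C\{x,y\}|^2\leq \2C\{x,x\}\cdot\2C\{y,y\}$, so the displayed inequality without the square only holds after normalization (or with $\sqrt{\2C\{x,x\}\cdot\2C\{y,y\}}$ on the right), a point your final step glosses over exactly as the paper does.
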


\section{A 3-fold way}  

We will now unify three concepts.  

\begin{remark}[Pseudo Boolean function]
 A function is called pseudo Boolean when it is total with type 
 \be 
 f(x): \7B^n \rightarrow \7C
 \ee 
\end{remark}

\subsection*{Fold I. Quantum States}

A quantum state is a map 
\be 
\psi: \7 C \rightarrow \2 H
\ee 
since $\psi (c) = c\cdot \psi\in \2 H$ for $c\in \7C$ and $\psi (1) = \psi$ uniquely determines $\psi$ by linearity.  We typically fix $\|\psi \|=1$.  Note that 
\be 
\braket{\psi}{i j\cdots k} = c_{i j\cdots k}
\ee 
where 
\be 
\psi = \sum_x c_x \ket{x}
\ee 
given basis $\ket{x}$.  We arrive at the following 
\be 
\exists ! f:\7B^n \rightarrow \7C ~|~ f(x)=c_x 
\ee 
and $f$ is pseudo Boolean and $\psi$ can be expressed as 
\be 
\psi = \sum_x f(x)\ket{x}
\ee

\subsection*{Fold II. Linear Operators in $\2H\rightarrow \2H$} 

\begin{theorem}[Isomorphism between states and diagonal maps]
 There is an isomorphism sending every state $\psi\in \2 H$ to a diagonal map $\2 L(\psi) \in \2 H\rightarrow \2 H$.  Moreover, the eigenvalues of $\2 L(\psi)$ are the amplitudes of $\psi$ expressed in the spin basis. 
 \begin{proof}
 We write 
 \be
 \psi = \sum \alpha_\x \ket{\x}
 \ee 
 and then by constructing a map such that 
 \be 
 \ket{\x} \rightarrow \ket{\x}\ket{\x}
 \ee 
 one can construct $\2 L(\psi)$ as 
 \be 
 \2 L(\psi) =  \sum \alpha_\x \ket{\x}\bra{\x}
\ee
We then let 
\be 
\ket{+} := \ket{0} + \ket{1}+\cdots + \ket{n} 
\ee 
Assume we are considering $n$ qubits, then 
\be
\psi = 
 \2 L(\psi)\ket{+}^{\otimes n} 
 \ee 
 establishes the bijection.  We also note that 
 \be 
 \2 L(\psi)\ket{\1k} = \alpha_{\1k} \ket{\1k}
 \ee 
 satisfying the eigenvalue condition and hence, proving the result.  
\end{proof}
\end{theorem}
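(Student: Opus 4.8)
The plan is to establish the claimed bijection between quantum states $\psi \in \2H$ and diagonal maps $\2L(\psi) \in \2H \rightarrow \2H$ in three pieces: first constructing the map $\2L$, then exhibiting its inverse, and finally verifying the eigenvalue claim. Throughout I will work in a fixed computational basis $\{\ket{\1x}\}$ for $\2H \cong (\C^2)^{\otimes n}$, since the statement concerns amplitudes ``expressed in the spin basis'' and the diagonality of $\2L(\psi)$ is basis-dependent.

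First I would construct $\2L(\psi)$ explicitly. Given $\psi = \sum_{\1x} \alpha_{\1x} \ket{\1x}$, I define $\2L(\psi) = \sum_{\1x} \alpha_{\1x} \ketbra{\1x}{\1x}$. This is manifestly diagonal in the chosen basis. Conceptually, the cleanest way to phrase this is via the \COPY-tensor: composing $\psi$ with the valence-$(1,2)$ \COPY-tensor (i.e., letting \COPY act to split each leg, then contracting appropriately) sends $\ket{\1x} \mapsto \ket{\1x}\bra{\1x}$, which is precisely the generalized-delta / diagonal structure discussed in Definition~\ref{def:copy-prop} and the surrounding remarks. I would note that $\2L$ is linear in $\psi$.

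Next I would verify bijectivity by exhibiting the inverse. Let $\ket{+} := \ket{0} + \ket{1} + \cdots$ on each qubit, so that $\ket{+}^{\otimes n} = \sum_{\1x} \ket{\1x}$. Then
\be
\2L(\psi)\ket{+}^{\otimes n} = \sum_{\1x} \alpha_{\1x} \ketbra{\1x}{\1x}\left(\sum_{\1y}\ket{\1y}\right) = \sum_{\1x} \alpha_{\1x}\ket{\1x} = \psi,
\ee
using $\braket{\1x}{\1y} = \delta_{\1x\1y}$. This shows $\psi \mapsto \2L(\psi)$ is injective (it has a left inverse, namely $D \mapsto D\ket{+}^{\otimes n}$). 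For surjectivity onto the space of diagonal maps: any diagonal $D = \sum_{\1x} \beta_{\1x}\ketbra{\1x}{\1x}$ equals $\2L(\phi)$ where $\phi = \sum_{\1x}\beta_{\1x}\ket{\1x} = D\ket{+}^{\otimes n}$. Hence $\2L$ and $D \mapsto D\ket{+}^{\otimes n}$ are mutually inverse linear bijections. I would remark that the normalization condition $\|\psi\| = 1$ can be dropped for the bijection itself (it is an isomorphism of vector spaces), or imposed to restrict to the relevant physical sub-correspondence.

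Finally, the eigenvalue claim is immediate from the construction: for a basis vector $\ket{\1k}$,
\be
\2L(\psi)\ket{\1k} = \sum_{\1x}\alpha_{\1x}\ketbra{\1x}{\1x}\ket{\1k} = \alpha_{\1k}\ket{\1k},
\ee
so each $\ket{\1k}$ is an eigenvector with eigenvalue $\alpha_{\1k}$, the amplitude of $\psi$ in the spin basis; since these exhaust an orthonormal basis, the spectrum of $\2L(\psi)$ is exactly the multiset of amplitudes. I do not anticipate a genuine obstacle here --- the proof is essentially the one sketched in the excerpt. The only point requiring a little care, which I would state clearly rather than gloss over, is that ``diagonal'' is meant with respect to the fixed computational basis, so that the target of $\2L$ is the $2^n$-dimensional space of basis-diagonal operators rather than all of $\2H \rightarrow \2H$; with that understood, the dimension count ($2^n = 2^n$) already makes the bijection plausible, and the explicit inverse makes it rigorous.
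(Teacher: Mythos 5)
Your proposal is correct and follows essentially the same route as the paper: the same construction $\2L(\psi)=\sum_{\x}\alpha_{\x}\ketbra{\x}{\x}$ via the \COPY-tensor, the same inverse map $D\mapsto D\ket{+}^{\otimes n}$ establishing the bijection, and the same eigenvalue verification $\2L(\psi)\ket{\1k}=\alpha_{\1k}\ket{\1k}$. Your explicit surjectivity check onto basis-diagonal operators and the remark that ``diagonal'' is basis-dependent merely make precise what the paper leaves implicit.
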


\begin{lemma}[Tensor networks equating states and diagonal maps]
The maps relied on in the above theorem can be given in terms of tensor networks.  
$\2 L(\psi)$ is shown in (a).  This map is invertible as shown in (b).  

\begin{center}
\includegraphics[width=12\xxxscale]{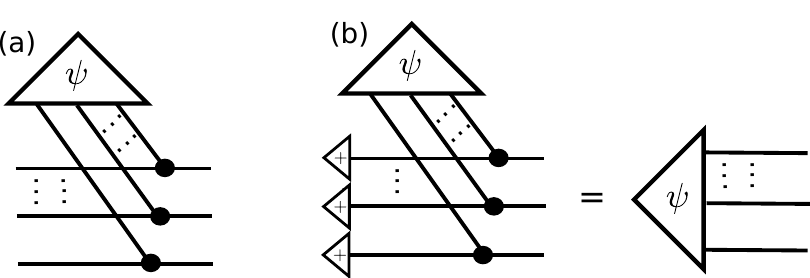}
\end{center}
\end{lemma}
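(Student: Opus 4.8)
The plan is to exhibit the two maps from the preceding theorem explicitly as small tensor networks and then verify, by wire-bending and node-equivalence (fusion) identities established earlier in the excerpt, that their composition is the identity on the relevant space. Recall from the theorem that $\mathcal{L}(\psi)=\sum_\x \alpha_\x\ket{\x}\bra{\x}$ and that $\psi = \mathcal{L}(\psi)\ket{+}^{\otimes n}$. The first step is to draw $\mathcal{L}(\psi)$ in (a): take the triangle for the state $\psi$ with its $n$ legs, and attach a $\COPY$-tensor to each leg, so that each leg is split into an input wire and an output wire. Diagrammatically this is exactly the construction in Remark-style identities around Definition~\ref{def:copy-prop}: contracting $\psi^\x$ into the valence-$(1,2)$ $\COPY$-tensor $\delta^{\x}_{~\x'\x''}$ produces $\sum_\x \alpha_\x\ket{\x}\bra{\x}$, a diagonal operator whose diagonal entries are the amplitudes of $\psi$ in the computational basis.

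Next I would produce the inverse map in (b): since $\ket{+}$ is the unit for the $\COPY$-tensor (see Definition~\ref{def:copy-prop}(c)-(d) and Table~\ref{fig:copygatesum}), contracting each output leg of $\mathcal{L}(\psi)$ against $\ket{+}$ prunes that arm of each $\COPY$-tensor, by the unit law. The key computation is then the chain of rewrites
\[
\mathcal{L}(\psi)\,\ket{+}^{\otimes n}
\;=\;\Bigl(\sum_\x \alpha_\x\ket{\x}\bra{\x}\Bigr)\ket{+}^{\otimes n}
\;=\;\sum_\x \alpha_\x\ket{\x}\braket{\x}{+}^{\otimes n}
\;=\;\sum_\x \alpha_\x\ket{\x}
\;=\;\psi,
\]
where $\braket{\x}{+}^{\otimes n}=1$ for every bit string $\x$ because $\ket{+}=\sum_i\ket{i}$ in each tensor factor. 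Graphically, this is precisely: attach $\ket{+}$ to each open output leg of diagram (a), apply the $\COPY$-unit rule $n$ times to collapse each two-legged $\COPY$ into a bare wire, and read off the triangle $\psi$ again. That establishes that (b) composed with (a) recovers $\psi$, i.e. the map of (a) followed by contraction-with-$\ket{+}$ is invertible on its image, matching the bijection claimed in the theorem.

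Finally, I would remark on the converse direction for completeness: starting from a diagonal operator $D=\sum_\x \beta_\x\ket{\x}\bra{\x}$, one bends the output wires down and fuses the $\COPY$-tensors using the node-equivalence/fusion law (Theorem~\ref{theorem:spider}) to recognize $D$ as $\mathcal{L}(\psi)$ for $\psi=\sum_\x\beta_\x\ket{\x}$; together with the previous paragraph this gives the two-sided inverse and the diagram in (b) is exactly this re-folding. The only genuinely delicate point — and the one I expect to be the main obstacle to a fully rigorous write-up rather than a hand-wave — is bookkeeping the scalar/normalization factors: the unit $\ket{+}$ and the cups/caps carry dimension-dependent constants which the text suppresses under the ``scalar gauge'' convention, so I would state explicitly that the isomorphism is understood up to the usual global scalar, and check that no relative scalars between distinct basis terms are introduced (they are not, since each $\braket{\x}{+}^{\otimes n}$ equals the same constant). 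With that caveat, the proof is just the two unit-law reductions drawn above.
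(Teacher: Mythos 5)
Your proposal is correct and matches the paper's argument in essence: the paper verifies diagram (a) by the same component-wise check (contracting all open wires with basis bras and kets and evaluating the $\COPY$ delta functions, then invoking linearity), and the invertibility in (b) is exactly your unit-law contraction $\mathcal{L}(\psi)\ket{+}^{\otimes n}=\psi$, already noted in the preceding theorem. Your scalar-gauge caveat is harmless but unnecessary here, since with the unnormalized unit $\ket{+}=\sum_i\ket{i}$ every overlap $\braket{\x}{+}^{\otimes n}$ equals exactly $1$.
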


\begin{remark}[Proof strategy]
Here we sketch what we call the argument by linearity.  We contract all wires of an open diagram as follows.  On the left we contract with $\bra{x,y,z}$ and on the right with $\ket{q,p,r}$.  
 \begin{center}
 \includegraphics[width=8\xxxscale]{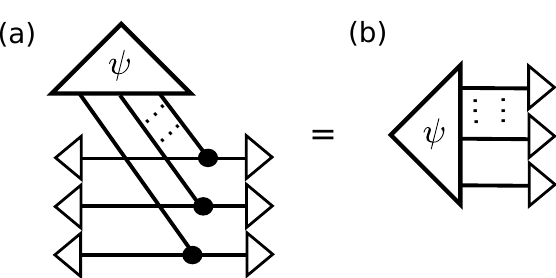}
\end{center}
This is equivalent to evaluating the coefficients of three party delta functions as 
\be 
\delta_{xq}^{~~i}\delta_{yp}^{~~j}\delta_{zr}^{~~k}\psi_{ijk} = c_{ijk} ~ (=c_{xyz}=c_{qpr})
\ee 
by linearity, we recover the map $\2L(\psi)$. 
\end{remark}

\begin{remark}
 Note that from $\psi = \sum f(x) \ket{x}$ we have 
 \be 
 \2L (\psi) = \2 L \left( \sum f(x) \ket{x}\right) = \sum f(x) \2L (\ket{x}) = \sum f(x) \ket{x}\bra{x} 
 \ee 
\end{remark}

\subsection*{Fold III.  Classical Spin Hamiltonians} 
We will consider a generalized Ising spin Hamiltonian.  Let 
\be 
h = \sum h^i Z_i + \sum J^{ij}Z_i Z_j + \cdots + \sum k^{ij...k}Z_i Z_j \cdots Z_k 
\ee 
and let $s_i$ be a spin variable taking values $\pm 1$ and let $x_i$ be a Boolean valued $0,1$.  
Use 
\be 
s_i = 1 - 2 x_i 
\ee 
then 
\be 
Z_i = \I - 2 \ket{1}\bra{1}
\ee 
and so we arrive at 
\begin{equation} 
\begin{split}
h_x & = \sum h^i (1 - 2 x_i) + \sum J^{ij}(1 - 2 x_i)(1 - 2 x_j) + \cdots \\  
 &~~~ + \sum k^{ij...k}(1 - 2 x_i) (1 - 2 x_j) \cdots (1 - 2 x_k) 
\end{split}
\end{equation}
and we arrive at the expression for $\psi$ 
\be 
\psi = \sum h_x(x) \ket{x} 
\ee 

In \cite{JDB08} we developed general methods to reason about spin Hamiltonians.  We explored an embedding reducing k-body interactions (e.g.\ $k^{ij...k}Z_i Z_j \cdots Z_k$) into two-body interactions by adding additional qubits.  

\subsection*{A three fold way}
We have shown that three different concepts, are effectively equivalent by constructing mathematical dualities that relate them precisely.  

\begin{center}
 \includegraphics[width=0.95\textwidth]{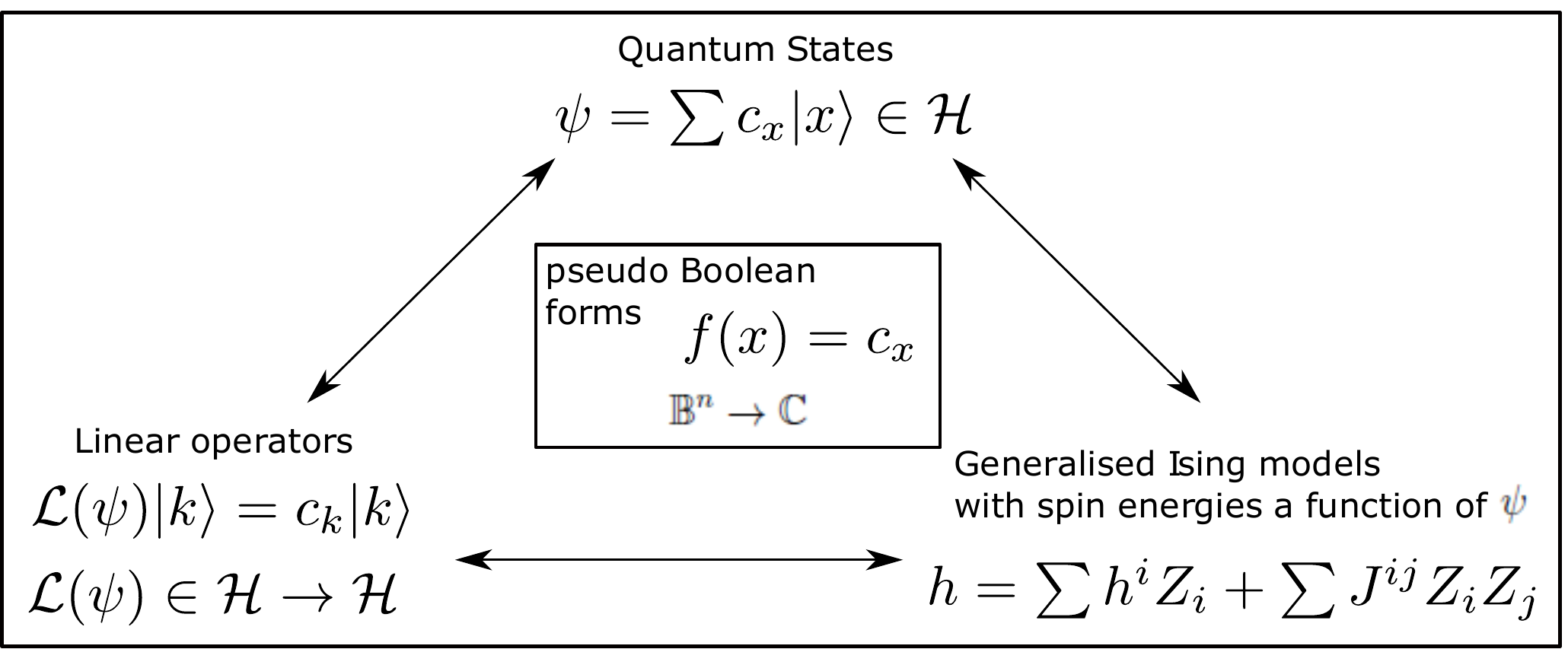} 
\end{center}

\begin{remark}[Factorization of quantum states into Tensor Networks]
 In the lectures, we also presented a universal a factorization of quantum states, into networks comprised of the building blocks (Quantum Legos), not included here. 
 A related factorization appeared in \cite{CTNS}.  
\end{remark}

\subsection*{Counting Graph Colorings}
Given a $3$-regular planar graph\sn{A graph is $k$-regular iff every node has exactly $k$~edges connected to it.},
how many possible edge colorings using three colors exist,
such that all edges connected to each node have distinct colors?
This counting problem can be solved in an interesting (if not computationally efficient) way
using the order-3 $\epsilon$~tensor, which is defined in terms of components as
\begin{align}
\notag
 &\epsilon_{012} = \epsilon_{120} = \epsilon_{201} = 1,\\
 &\epsilon_{021} = \epsilon_{210} = \epsilon_{102} = -1,
\end{align}
otherwise zero. The counting algorithm is stated as
\begin{theorem}[Planar graph $3$-colorings, Penrose 1971 \cite{Penrose}]\label{thm:3-color}
The number~$K$ of proper $3$-edge-colorings of a planar $3$-regular graph
is obtained by replacing each node with an order-3 epsilon tensor,
replacing each edge with a wire,
and then contracting the resulting tensor network.
\end{theorem}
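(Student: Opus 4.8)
\textbf{Proof proposal for Theorem~\ref{thm:3-color}.}

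The plan is to unpack the tensor network contraction at a single vertex and show it computes exactly the local constraint ``all three incident edges have distinct colors,'' then argue that the global contraction sums these local contributions over all color assignments with the right combinatorial bookkeeping. First I would fix the setup: replace each of the (finitely many) vertices of the $3$-regular planar graph $G$ by a copy of the order-$3$ epsilon tensor $\epsilon_{abc}$, replace each edge by a wire carrying an index ranging over the three colors $\{0,1,2\}$, and observe that since $G$ is $3$-regular each vertex-tensor has exactly three legs, so the network is well formed and has no open wires; its full contraction is therefore a scalar $Z$. By the definition of tensor contraction (summing over all internal indices the product of all tensor components), we have
\begin{equation}
Z = \sum_{\text{colorings } c} \ \prod_{v \in V(G)} \epsilon_{c(e_1^v)\, c(e_2^v)\, c(e_3^v)},
\end{equation}
where the sum is over all $3^{|E(G)|}$ assignments $c: E(G) \to \{0,1,2\}$ and $e_1^v, e_2^v, e_3^v$ denote the three edges at $v$ in some fixed local ordering.

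The key local observation is that $\epsilon_{abc} = 0$ unless $(a,b,c)$ is a permutation of $(0,1,2)$, i.e.\ unless all three colors are distinct, and $\epsilon_{abc} = \pm 1$ (the sign of the permutation) otherwise. Hence a coloring $c$ contributes a nonzero term to $Z$ if and only if it is a proper $3$-edge-coloring, and in that case its contribution is $\prod_{v} \mathrm{sgn}(\pi_v) = \pm 1$, where $\pi_v$ is the permutation of $(0,1,2)$ read off the legs of $v$. So $Z = \sum_{c \text{ proper}} \prod_v \mathrm{sgn}(\pi_v)$, and it remains to show this signed sum equals the \emph{unsigned} count $K$. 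The crux is therefore the sign cancellation problem: a priori the signs could partially cancel and give $Z < K$.

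The main obstacle — and the substantive step — is the sign argument, which is exactly where planarity enters (the statement is false for nonplanar $3$-regular graphs). The approach I would take is: (i) fix a proper $3$-coloring $c$; (ii) consider the subgraph $G_{12}$ formed by the edges colored $1$ or $2$ — since at each vertex exactly one edge of each color is present, $G_{12}$ is a disjoint union of cycles covering all vertices (a $2$-factor), and likewise for the other two color pairs; (iii) use the planar embedding to show that each such cycle, being a closed curve in the plane, bounds a region, and the total number of sign flips $-1$ accumulated around the network is controlled by an even quantity (e.g.\ relating $\prod_v \mathrm{sgn}(\pi_v)$ to a product over these cycles and invoking that each cycle in a planar $2$-factor has even length, or more directly using Euler's formula / the Jordan curve theorem to show the product of local signs is $+1$ for every proper coloring). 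Concretely I expect the cleanest route is to show $\prod_v \mathrm{sgn}(\pi_v) = (-1)^{(\text{sum of cycle lengths in some } 2\text{-factor}) + \#\text{cycles}}$ or a similar formula, and then that the parity is always even by planarity; this is the one place where a genuine (if short) combinatorial-topological argument is needed rather than routine bookkeeping. Once $\prod_v \mathrm{sgn}(\pi_v) = 1$ for every proper coloring, we conclude $Z = \#\{\text{proper }3\text{-edge-colorings of }G\} = K$, completing the proof. I would also remark, following the excerpt's framing, that this is why Penrose phrased it with the $\epsilon$ tensor: the antisymmetry simultaneously enforces the distinctness constraint and — thanks to planarity — carries no residual sign.
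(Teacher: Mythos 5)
Your setup is correct and, up to the sign question, matches the paper's own explanation: the closed network's contraction is a sum over all $3^{|E|}$ edge-color assignments, the epsilon tensors annihilate every assignment in which two edges at a vertex share a color, and each proper coloring therefore contributes $\prod_v \mathrm{sgn}(\pi_v)=\pm 1$. You also correctly locate the entire content of the theorem in the claim that this product is $+1$ for every proper coloring of a \emph{planar} cubic graph. But that is exactly where your proposal stops being a proof: you offer a conditional sketch (``I expect the cleanest route is\ldots'') rather than an argument, and the specific mechanism you float cannot work as stated. The evenness of the cycles of the bichromatic $2$-factor $G_{12}$ holds for every proper $3$-edge-coloring of \emph{every} cubic graph, planar or not, simply because the two colors alternate along each cycle; so it cannot be the point at which planarity enters. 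The paper's own $K_{3,3}$ example makes this concrete: it admits twelve proper $3$-edge-colorings, all of whose bichromatic cycles are even, yet the contraction vanishes, so the signs really do cancel there. Any argument resting only on even cycle lengths (or only on the existence of $2$-factors) would therefore ``prove'' the statement for $K_{3,3}$ as well, which is false.

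What is actually needed is an argument that uses the embedding itself. The embedding fixes a cyclic (say counterclockwise) ordering of the three legs at each vertex; with respect to that ordering a proper coloring makes each vertex positive or negative according to whether the colors appear as $(0,1,2)$ or $(0,2,1)$, and the missing lemma is that the number of negative vertices is even. The classical route (essentially Penrose's) is a Jordan-curve, inside/outside counting argument along the cycles of a bichromatic $2$-factor: each such cycle is a closed plane curve, the third edge at each of its vertices points either into or out of the region it bounds, and comparing these with the alternation of the two cycle colors forces the negative vertices on each cycle to come in pairs. That topological step is the one genuine piece of work in the theorem, and it is absent from your write-up. To be fair, the paper itself does not supply it either --- it asserts the $+1$ property and only illustrates the necessity of planarity via $K_{3,3}$ --- so your proposal is at the paper's level of rigor on the first two steps and is actually more explicit about the sign problem; but as a proof it has a genuine gap, and the route you hint at must be replaced by an argument that really invokes the planar embedding.
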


We will first consider the simplest case, a graph with just two nodes.
In this case we obtain
\begin{center} 
\includegraphics{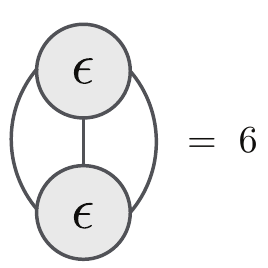}
\end{center} 
There are indeed $6$ distinct edge colorings for this graph, given as
\begin{center}
\includegraphics{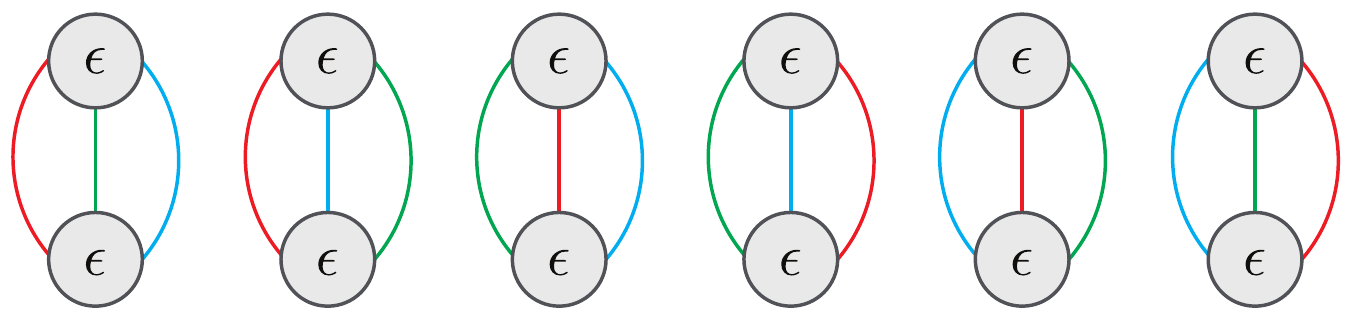} 
\end{center}
To understand Theorem~\ref{thm:3-color},
note first that the contraction~$K$ of the epsilon tensor network is the sum of all possible individual assignments of the index values
to the epsilon tensors comprising the network.
Each of the three possible index values can be understood as a color choice for the corresponding edge.
Whenever the index values for a given epsilon tensor are not all different, the corresponding term in~$K$ is zero.
Hence only allowed color assignments result in nonzero contributions to~$K$,
and for a graph that does not admit a proper $3$-edge-coloring we will have~$K=0$.
For instance, for the non-$3$-colorable Petersen graph we obtain
\begin{center}
\includegraphics{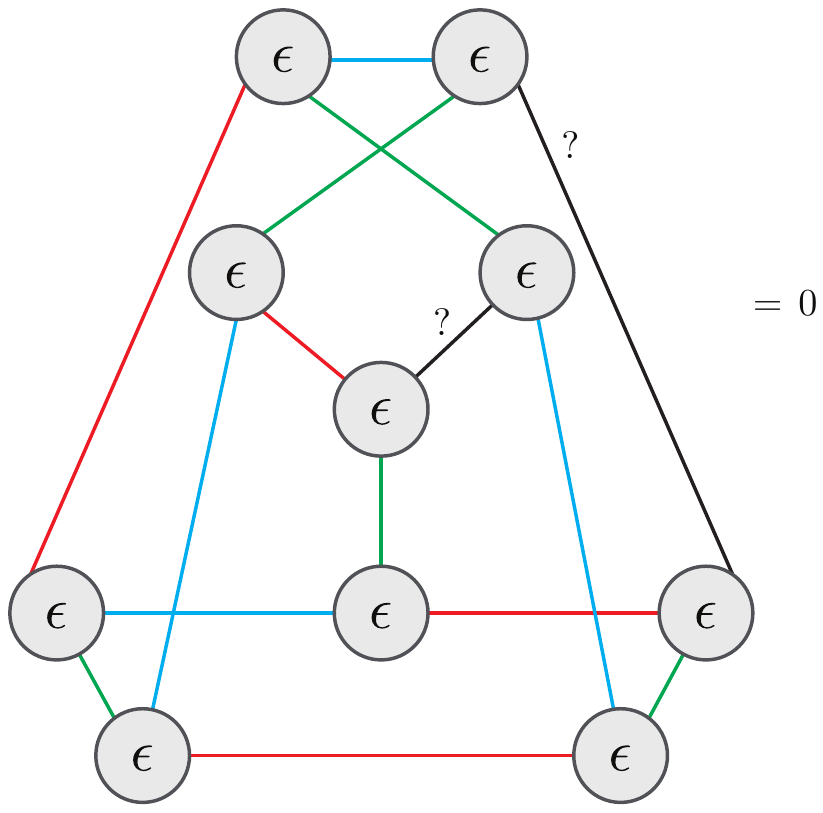}
\end{center}

However, for~$K$ to actually equal the number of allowed colorings, each nonzero term must have the value~$1$ (and not~$-1$).
This is only guaranteed if the graph is planar,
as can be seen by considering the non-planar graph $K_{3,3}$:
\begin{center}
\includegraphics{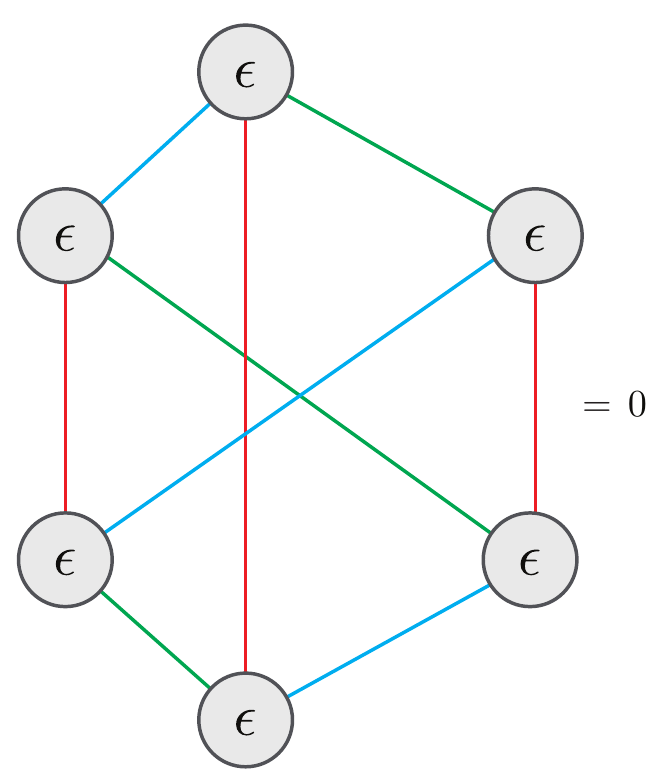}
\end{center}
The edges can be colored with three colors---in 12 different ways---yet the contraction vanishes.

The computational complexity of this problem has been studied in \cite{xia2007computational}. Interesting, by a well known result (Heawood 1897), the 3-colorings as stated above, are one quarter of the ways of coloring the faces of the graph with four colors, so that no 
 two like-colored faces have an edge in common. 
\begin{example}[Physical implementation of $\epsilon_{abc}$ in quantum computing]
In quantum computing, typically one works with qubits (two level quantum systems) but implementations using qutrits exist (three level quantum systems, available in e.g.~nitrogen vacancy centers in diamond---see for instance \cite{2014NatCo...5E3371D}). 
The epsilon tensor $\epsilon_{abc}$ could be realized directly as a locally invariant 3-party state using qutrits, and can also be embedded into a qubit system.  We leave it to the reader to show that by pairing qubits, $\epsilon_{abc}$ can be represented with six qubits, where each leg now represents a qubit pair.  (Note that a basis of 3 states can be isometrically embedded in 4-dimensional space in any number of ways.) Show further that the construction can be done such that the two qubit pairs (together representing one leg) are symmetric under exchange.
\end{example}
 
\begin{myexercise}[Representation of the Epsilon Tensor on Qubit States]  
The epsilon tensor is typically considered 
 in $\7C^3\otimes \7C^3\otimes \7C^3$.  Show that by pairing qubits, this can be represented in the space $\7C^{2\otimes 6}$ of six qubits, where each leg now represents a qubit pair.  Show further that the construction can be done such that the two qubit pairs are symmetric under exchange. 
\end{myexercise}
 

 \appendix
 
\part{Appendix}






\section{Algebra on Quantum States}\label{sec:newalgebra}
We are concerned with a network theory of quantum states.  This on the one hand can
be used as a tool to solve problems about states and operators in quantum theory, but
does have a physical interpretation on the other.  This is not foundational
\textit{per se} but instead largely based on what one 
might call an operational interpretation of quantum states and processes.  A related idea has been used 
to study non-locality in quantum physics \cite{Edwards10}.  This appendix stems from those ideas \cite{Edwards10} which Bill Edwards introduced me to in Oxford around circa 2010.

We call
an algebra a pairing on a vector space, taking two vectors and producing a third
(you might instead call it a monoid if there is a unit, and then a group if the
set of considered vectors is closed under the product).  Let's now examine how every
tripartite quantum state forms an algebra.

Consider a tripartite quantum state (subsystems labeled 1,2 and 3), and then ask the
question: ``how would the state of the third system change after measurement of 
systems one and two?''   Enter Algebras: as stated, an algebra on a vector space, or
on a Hilbert space is formed by a \textit{product} taking two elements from the
vector space to produce a third element in the vector space.  Algebra on states can
then be studied by considering duality of the state, that is considering the
adjunction between the maps of type
\begin{equation}
\I\rightarrow \2 H\otimes \2H \otimes \2H~~~~~~~~\text{and}~~~~~~~~\overline{\2
H}\otimes \overline{\2H} \rightarrow \2H
\end{equation}
This duality is made evident by using the $\dagger$-compact structure of the category
(e.g.\ the cups and caps).  It is given vivid physical meaning by considering the
effect measuring (that is two events) two components of a state has on the third
component.

\begin{remark}[Overbar notation on Spaces] Given a Hilbert space $\2 H$, we can
consider the Hilbert space $\overline{\2H}$ which can be thought of simply as the
Hilbert space $\2 H$ with all basis vectors complex conjugates (overbar).  That is,
$\overline{\2H}$ is a vector space whose elements are in one-to-one
correspondence with the elements of $\2 H$:
\begin{equation}
    \overline{\2H} = \{\overline{v} \mid v \in \2 H\},
\end{equation}
with the following rules for addition and scalar multiplication:
\begin{equation}
    \overline v + \overline{w} =
\overline{\,v+w\,}\quad\text{and}\quad\alpha\,\overline v = \overline{\,\overline
\alpha \,v\,}.
\end{equation}
\end{remark}

\begin{remark}[Definition of Algebra]
We consider an algebra as a vector space $\2A$ endowed with a product, taking a pair
of elements (e.g.\ from $\2A\otimes \2A$) and producing an element in $\2A$.  So the
product is a map $\2A\otimes \2A\rightarrow \2A$, which may not be associative or
have a unit (that is, a multiplicative identity --- see Example \ref{ex:weakunits}
for an
example of an algebra on a quantum state without a unit).
\end{remark}

\begin{observation}[Every tripartite Quantum State Forms an Algebra]
Let $\ket{\psi}\in \2 H\otimes \2H \otimes \2H$ be a quantum state and let
$M_i$, $M_j$ be complete sets of measurement operators.  Then $(\ket{\psi}, M_i,M_j)$
forms
an algebra.
\begin{center}
\includegraphics[width=14cm]{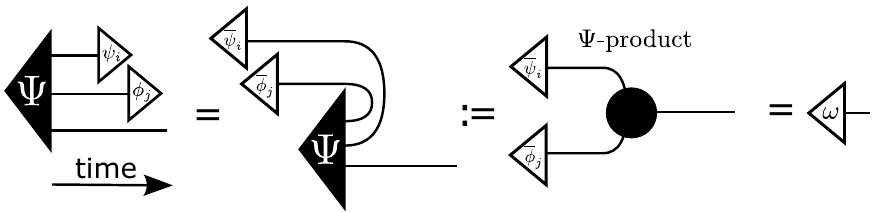}
\end{center}
\end{observation}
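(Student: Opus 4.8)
The plan is to make precise the sense in which a tripartite state $\ket{\psi}\in\2H\otimes\2H\otimes\2H$, together with two complete sets of measurement operators $\{M_i\}$ and $\{M_j\}$ on the first two subsystems, determines a bilinear product $\mu:\2H\otimes\2H\to\2H$, and then observe that such a bilinear map is exactly the data of an algebra on $\2H$ in the sense of the preceding Remark (``Definition of Algebra''). So the content of the observation is really a construction plus a check of bilinearity; associativity and unitality are explicitly not claimed, which is consistent with Example \ref{ex:weakunits}.

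First I would fix notation. Using the $\dagger$-compact (cup/cap) structure from \S\ref{sec:bends}, I bend the first two wires of $\ket{\psi}$ backwards so that $\ket{\psi}$ is reinterpreted as a linear map $\overline{\2H}\otimes\overline{\2H}\to\2H$; concretely, writing $\ket{\psi}=\sum_{abc}\psi_{abc}\ket{a}\ket{b}\ket{c}$ and using the cap $\sum_a\bra{aa}$ on the first slot and similarly on the second, this is the operator $P_\psi:=\sum_{abc}\psi_{abc}\,\ket{c}\bra{\overline a}\bra{\overline b}$. Next, given complete measurement operator sets $\{M_i\}$ and $\{M_j\}$ (I would take these to be an operator basis, or at least a spanning set, for $\2L(\2H)$, since ``complete'' measurement operators resolve the identity and their span is what matters), I define the product of two vectors $\ket{u},\ket{v}\in\2H$ by letting $\ket{u}$ and $\ket{v}$ be the post-measurement/conjugated inputs fed into the first two legs of $\ket{\psi}$: set $\mu(\ket{u}\otimes\ket{v}):=P_\psi(\,\overline{\ket{u}}\otimes\overline{\ket{v}}\,)=\sum_{abc}\psi_{abc}\langle u|a\rangle\langle v|b\rangle\,\ket{c}$ (the diagram ``algebra-states'' is exactly this contraction, with the $M_i,M_j$ playing the role of the effects that can be inserted on legs $1$ and $2$). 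The key step is then the one-line verification that $\mu$ is bilinear: bilinearity is immediate from the linearity of tensor contraction (the Remark ``Linearity of tensor contraction'' in \S on Elementary Theorems of Tensor Contraction), since $\mu$ is a fixed contraction of $\ket{\psi}$ against its two arguments. Hence $(\2H,\mu)$ is an algebra, possibly without unit and possibly non-associative, which is precisely the assertion. I would also remark that different choices of measurement bases $\{M_i\},\{M_j\}$ give products related by a change of basis on the two input slots, so the construction is canonical up to that freedom.

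The main obstacle is not a hard inequality but a definitional one: pinning down exactly what role the ``complete sets of measurement operators'' $M_i$, $M_j$ play, and in what sense the assignment is well defined. The honest statement is that the triple $(\ket{\psi},\{M_i\},\{M_j\})$ determines a product on the \emph{classical outcome-labelled} data, and then projecting/summing back to $\2H$ via the completeness relation yields the bilinear map $\mu$ above; one must be slightly careful that the output genuinely lands in $\2H$ (it does, since the third leg of $\ket{\psi}$ is untouched) and that no normalization subtlety spoils bilinearity (it does not, because we work in the scalar gauge and bilinearity is insensitive to global scalars). I would therefore present the proof as: (i) bend wires to obtain $P_\psi$; (ii) define $\mu$ as the contraction $P_\psi\circ(\overline{(-)}\otimes\overline{(-)})$, drawing the diagram; (iii) invoke linearity of contraction to conclude bilinearity; (iv) note that no associativity or unit is asserted, citing Example \ref{ex:weakunits} as the canonical witness that these can genuinely fail, and conclude that $(\2H,\mu)$ is an algebra in the sense of the Remark defining ``algebra''.
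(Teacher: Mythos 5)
Your proposal is correct and takes essentially the same route as the paper: the paper likewise uses the cup/cap ($\dagger$-compact) duality to reinterpret $\ket{\psi}$ as a map $\overline{\mathcal H}\otimes\overline{\mathcal H}\to\mathcal H$, with the complete measurement sets supplying the effects fed into the first two legs, and the post-measurement state $\ket{\omega}$ it computes is exactly your $\mu$ evaluated on the measurement basis vectors. Your explicit appeal to linearity of contraction for bilinearity, and the caveat that no unit or associativity is claimed (citing the \NAND{}/\NOR{} example), merely make explicit what the paper leaves implicit.
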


The quantum state $\ket{\Psi} = \sum_{ijk}\psi^{ijk}\ket{ijk}$ is drawn as a
triangle, with
the identity operator on each
subsystem acting as time goes to the right on the page (represented as a wire). 
Projective measurements
with respect to $M_i$ and $M_j$ are made.  We define these complete
measurement operators as
\begin{gather}
M_1=\sum_{i=1}^{N}i\ket{\psi_i}\bra{\psi_i}
\end{gather}
\begin{gather}
M_2=\sum_{j=1}^{N}j\ket{\phi_j}\bra{\phi_j}
\end{gather}
such that we recover the identity operator on the $N$-level subsystem viz
\begin{equation}
\sum_{j=1}^{N}\ket{\phi_j}\bra{\phi_j}=\sum_{i=1}^{N}\ket{\psi_i}\bra{\psi_i}=\I_N
\end{equation}
The measurements result in eigenvalues $i,j$ leaving the state of the unmeasured
system in
\begin{equation}
\ket{\omega} = \sum_{xyz}\psi^{xyz} \braket{\overline{\psi}^x}{x}
\braket{\overline{\phi}^y}{y} \ket{z}
\end{equation}
where $\bra{\overline{Q}} \bydef \ket{Q}^\top$ that is, the transpose is factored
into: (i) taking the dagger (diagrammatically this mirrors states across the page)
and (ii) taking the complex conjugate.  Hence,
\begin{equation}
\ket{\overline{Q}}^\dagger = \ket{Q}^\top = \bra{\overline{Q}} =
\overline{\ket{Q}^\dagger}
\end{equation}
and if we pick a real valued basis for $\ket{x},\ket{y},\ket{z}=\ket{0},\ket{1}$ we
recover
\begin{equation}
\ket{\omega} = \sum_{xyz}\psi^{xyz}
\braket{x}{\psi_x}\braket{y}{\phi_y}\ket{z}
\end{equation}

As stated, this physical
interpretation is not our main interest. Even in its absence, we're able to write
down and represent a quantum
state purely in terms of a connected network, where each component is fully defined
in terms of algebraic laws.

\section{\XOR-algebra}\label{sec:Boolean}
Here we review the concept of an algebraic normal form (\ANF) for Boolean
polynomials, commonly known as \PPRM s, (Positive Polarity Reed Muller Forms).  See the reference
book~\cite{Davio78} and the historical references~\cite{Cohn62,
xor70} for further details. 

\begin{definition}
The \XOR{}-algebra forms a commutative ring with presentation
$M=\{\7B,\wedge,\oplus\}$ where the following product is called \XOR
\begin{equation}
\text{---}\oplus\text{---}:\7B\times \7B\mapsto \7B: (a,b)\rightarrow
a+b-ab~\text{mod}~2
\end{equation}
and conjunction is given as
\begin{equation}
\text{---}\wedge\text{---}:\7B\times \7B\mapsto \7B: (a,b)\rightarrow a\cdot b, 
\end{equation}
where $a\cdot b$ is regular multiplication over the reals.  One defines left negation
$\neg(\text{---})$ in terms of $\oplus$ as
$\neg(\text{---})\equiv$
\begin{equation}
  \text{1}\oplus(\text{---}):\7B \mapsto \7B: a\rightarrow 1-a.
\end{equation}
In the \XOR-algebra, 1-5 hold.  (i) $a\oplus 0 = a$, (ii) $a\oplus 1 = \neg a$, (iii)
$a\oplus a = 0$, (iv) $a\oplus \neg a = 1$ and (v) $a\vee b = a\oplus b\oplus
(a\wedge b)$.  Hence, $0$ is the unit of \XOR{} and $1$ is the unit of \AND. The 5th
rule
reduces to $a\vee b = a\oplus b$ whenever $a\wedge b=0$, which is the case for
disjoint ($\text{mod}~2$) sums.  The 
truth table for $\AND$ follows
\begin{center}
\begin{tabular}{c|c|c}
$~x_1~$ & $~x_2~$ & $f(x_1,x_2)=x_1\wedge x_2$ \\ \hline
0 & 0 & 0 \\
0 & 1 & 0 \\
1 & 0 & 0 \\
1 & 1 & 1
\end{tabular}
\end{center}
\end{definition}

\begin{definition}\label{def:FPRM}
Any Boolean equation may be uniquely expanded to the fixed polarity
Reed-Muller form as:\setlength{\arraycolsep}{0.140em}
\begin{eqnarray}\label{eqn:rm_exp}
&&f(x_1,x_2,...,x_k) = c_0\oplus c_1 x_1^{\sigma_1}\oplus c_2 x_2^{\sigma_2}\oplus
\cdots \oplus c_n x_n^{\sigma_n}\oplus\nonumber\\
&&~~~~~~~~c_{n+1}x_1^{\sigma_1} x_n^{\sigma_{n}}\oplus \cdots \oplus
c_{2k-1}x_1^{\sigma_1} x_2^{\sigma_2},...,x_k^{\sigma_k},
\end{eqnarray}
where selection variable $\sigma_i\in \{0,1\}$, literal
$x_i^{\sigma_i}$ represents a variable or its negation and any $c$
term labeled $c_0$ through $c_j$ is a binary constant $0$ or $1$. In
Equation~\eqref{eqn:rm_exp} only fixed polarity variables appear such that
each is in either un-complemented or complemented form.
\end{definition}

Let us now consider derivation of the form from Definition~\ref{def:FPRM}.  Because
of the structure of
the algebra, without loss of generality, one avoids keeping track of
indices in the $N$ node case, by considering the case where $N\equiv 2^n=8$.

\begin{example}

The vector
\begin{equation}
    \underline{c}=(c_0,c_1,c_2,c_3,c_4,c_5,c_6,c_7,)^\intercal
\end{equation}
represents
all possible outputs of any function $f(x_1,x_2,x_3)$ over the
algebra formed from linear extension of $\7Z_2\times \7Z_2\times \7Z_2$.  We wish to
construct a normal form in terms of the vector $\underline{c}$, where each $c_i\in
\{0,1\}$, and therefore $\underline{c}$ is a selection vector
that simply represents the output of the function
\begin{equation}
    f:\7B\times\7B\times\7B\rightarrow
\7B:(x_1,x_2,x_3)\mapsto f(x_1,x_2,x_3). 
\end{equation}
One may expand $f$ as:
\begin{eqnarray}\label{eqn:generic}
f(x_1,x_2,x_3) &=& (c_0\cdot \neg x_1\cdot \neg x_2\cdot \neg
x_3)\vee(c_1\cdot \neg x_1\cdot \neg x_2\cdot x_3)\vee(c_2\cdot
\neg x_1\cdot x_2\cdot \neg x_3)\nonumber\\
&&\vee(c_3\cdot \neg x_1\cdot x_2\cdot x_3)\vee(c_4\cdot x_1\cdot
\neg x_2\cdot \neg x_3)\vee(c_5\cdot x_1\cdot
\neg x_2\cdot x_3)\nonumber\\
&&\vee(c_6\cdot x_1\cdot x_2\cdot \neg x_3)\vee(c_7\cdot x_1\cdot
x_2\cdot x_3)
\end{eqnarray}

Since each disjunctive term is disjoint the logical \OR{} operation may
be replaced with the logical \XOR{} operation.  By making the substitution $ \neg
a=a\oplus 1$ for all variables and rearranging terms one arrives at
the following normal form:\marginnote{For instance, $\neg x_1\cdot
\neg x_2\cdot\neg x_3=(1\oplus x_1)\cdot(1\oplus x_2)\cdot(1\oplus
x_3)=(1\oplus x_1\oplus x_2\oplus x_2\cdot x_3)\cdot(1\oplus x_3)=
1\oplus x_1\oplus x_2\oplus x_3\oplus x_1\cdot x_3\oplus x_2\cdot
x_3\oplus x_1\cdot x_2\cdot x_3$.}

\begin{eqnarray}\label{eqn:generic2}
f(x_1,x_2,x_3) &=&c_0\oplus(c_0\oplus c_4)\cdot x_1\oplus(c_0\oplus
c_2)\cdot x_2\oplus(c_0\oplus c_1)\cdot x_3\\
&&\oplus(c_0\oplus
c_2\oplus c_4\oplus c_6)\cdot x_1\cdot
x_2\nonumber\\
&& \oplus(c_0\oplus c_1\oplus c_4 \oplus c_5)\cdot x_1\cdot
x_3\oplus(c_0 \oplus c_1 \oplus c_2 \oplus c_3)\cdot x_2\cdot x_3\nonumber\\
&&\oplus (c_0\oplus c_1\oplus c_2\oplus c_3\oplus c_4 \oplus c_5
\oplus c_6\oplus c_7)\cdot x_1\cdot x_2\cdot x_3
\end{eqnarray}

The set of linearly independent vectors, $\{x_1,x_2,x_3,x_1\cdot
x_2,x_1\cdot x_3,x_2\cdot x_3,x_1\cdot x_2\cdot x_3\}$ combined with
a set of scalars from Equation~\ref{eqn:generic2} spans the eight
dimensional space of the Hypercube representing the Algebra.
A similar form holds for arbitrary $N$.

\begin{eqnarray}\label{eqn:generic3}
f(x_1,x_2,x_3) &=& (a_1)\cdot x_1\oplus(a_2)\cdot
x_2\oplus(x_3)\cdot x_3\oplus(a_1\oplus a_2\oplus a_1\oplus
c_2)\cdot x_1\cdot
x_2\nonumber\\
&& \oplus(a_1\oplus a_3\oplus a_1\oplus c_3)\cdot x_1\cdot
x_3\oplus(a_2\oplus a_3\oplus a_2\oplus
c_3)\cdot x_2\cdot x_3\nonumber\\
&&\oplus (a_1\oplus a_2\oplus a_3\oplus a_1\oplus a_2\oplus
a_3)\cdot x_1\cdot x_2\cdot x_3
\end{eqnarray}
\end{example}

\section{The Minimization Method of Karnaugh}\label{appendix:kmap}

The \emph{Karnaugh map} is a tool to facilitate the algebraic reduction of Boolean functions.  Many excellent texts and online tutorials cover the use of Karnaugh maps and should be consulted for more detail.\footnote{This includes the wikipedia entry (\href{http://en.wikipedia.org/wiki/Karnaugh_map}{\tt http://en.wikipedia.org}) and the articles linked to therein as well as the straight forward reference~\cite{Ros99}.}  This Appendix briefly introduces these maps to make the lecture notes self contained.

\begin{table}
\begin{tabular}{c c c}
\textbf{(a)}\begin{karnaugh-map}[4][2][1][\Large$x_1x_2$][\Large$z_*$] 
\terms{0}{0}
\terms{1}{1}
\terms{2}{2}
\terms{3}{3}
\terms{4}{4}
\terms{5}{5}
\terms{6}{6}
\terms{7}{7}
\implicant{3}{7}
\implicant{5}{7}
\implicant{7}{6}
\end{karnaugh-map}
&\quad
\textbf{(b)}\begin{karnaugh-map}[4][2][1][\Large$x_1x_2$][\Large$z_*$]
\terms{0}{0}
\terms{1}{1}
\terms{2}{2}
\terms{3}{3}
\terms{4}{4}
\terms{5}{5}
\terms{6}{6}
\terms{7}{7}
\implicant{4}{6}
\implicant{1}{7}
\implicant{3}{6}
\end{karnaugh-map}\\
\textbf{(c)}\begin{karnaugh-map}[4][2][1][\Large$x_1x_2$][\Large$z_*$] 
\terms{0}{0}
\terms{1}{0}
\terms{2}{0}
\terms{3}{.}
\terms{4}{.}
\terms{5}{.}
\terms{6}{.}
\terms{7}{0}
\implicant{3}{7}
\implicant{5}{7}
\implicant{7}{6}
\implicant{4}{6}
\end{karnaugh-map}
\end{tabular}
\caption{Karnaugh maps: (a) 2-local (positive polarity) variable couplings. (b) Linear (positive polarity) terms.  (c) A Karnaugh map illustrating (with ovals) the linear and quadratic terms needed to an example function.}\label{fig:quad}
\end{table}


Karnaugh maps (see Table~\ref{fig:quad} for three examples), or more compactly K-maps, are organized so that the truth table of a given equation, such as a Boolean equation ($f:\mathbb{B}^n\rightarrow \mathbb{B}$) or multi-linear form ($f:\mathbb{B}^n\rightarrow \mathbb{R}$), is arranged in a grid form and between any two adjacent boxes only one domain variable can change value.

This ordering results as the rows and columns are ordered according to Gray code --- a binary numeral system where two successive values differ in only one digit.  For example, the 4-bit Gray code is given as:
\begin{eqnarray*}
&&\{0000, 0001, 0011, 0010, 0110, 0111, 0101, 0100, 1100,\\
&&1101, 1111, 1110, 1010, 1011, 1001, 1000\}.
\end{eqnarray*}
By arranging the truth table of a given function in this way, a K-map can be used to derive a minimized function.

To use a K-map to minimize a Boolean function one \emph{covers} the 1s on the map by rectangular \emph{coverings} containing a number of boxes equal to a power of 2.  For example, one could circle a map of size $2^n$ for any constant function $f=1$.  Table~\ref{fig:quad} (a) and (b) contain three circles each --- all of 2 and 4 boxes respectively. After the 1s are covered, a term in a \emph{sum of products expression}~\cite{Weg87} is produced by finding the variables that do not change throughout the entire covering, and taking a 1 to mean that variable ($x_i$) and a 0 as its negation ($\overline{x_i}$). Doing this for every covering yields a function which \emph{matches} the truth table.

For instance consider Table~\ref{fig:quad} (a) and (b).  Here the boxes contain simply labels representing the decimal value of the corresponding Gray code ordering.  The circling in Table~\ref{fig:quad} (a) would correspond to the truth vector (ordered $z_\star, x_1$ then $x_2$)
\begin{equation}\label{eqn:tt1}
\left(0,0,0,1,0,1,1,1\right)^T.
\end{equation}
The cubes 3 and 7 circled in Table~\ref{fig:quad} correspond to the sum of products term $x_1x_2$.  Likewise (5,7) corresponds to $z_\star x_2$ and finally (7,6) corresponds to $z_\star x_1$. The sum of products representation of (\ref{eqn:tt1}) is simply
\begin{equation*}
f(z_\star,x_1,x_2)=x_1x_2\vee z_\star x_2\vee z_\star x_1.
\end{equation*}
Let us repeat the same procedure for Table~\ref{fig:quad} b.) by again assuming the circled cubes correspond to 1s in the functions truth table.  In this case one finds $z_\star$ for the circling of cubes ladled (4,5,7,6), $x_2$ for (1,3,5,7) and $x_1$ for (3,2,7,6) resulting in the function
\begin{equation*}
f(z_\star,x_1,x_2)=x_1\vee z_\star \vee x_2.
\end{equation*}

\begin{definition}
 (Davio Expansion) The Davio expansion is a decomposition of a boolean function.  
 For a boolean function $f(x_1,...,x_n)$ we set with respect to $x_i$:
    \begin{align} f_{x_i}(x) & = f(x_1,...,x_{i-1},1,x_{i+1},...,x_n) \\[3pt] f_{\overline{x_i}}(x)& = f(x_1,...,x_{i-1},0,x_{i+1},...,x_n) \\[3pt] \frac{\partial f}{\partial x_i} & = f_{x_i}(x) \oplus f_{\overline{x_i}}(x)\, \end{align}
as the positive and negative cofactors of $f$, and the boolean derivation of $f$.  Then we have for the Reed-Muller or positive Davio expansion:
\be
    f = f_{\overline{x_i}} \oplus x_i \frac{\partial f}{\partial x_i}
\ee
\end{definition}

\section{Tensors and Tensor Products}
\label{app:tensor}

The definition of a tensor starts with the \emph{tensor product}~$\otimes$.
There are many equivalent ways to define it, but perhaps the simplest one is
through basis vectors.
Let $V$ and~$W$ be finite-dimensional vector spaces over the same field of scalars~$\K$.
In physics-related applications~$\K$ is typically either the real numbers~$\R$
or the complex numbers~$\C$.
Now $V \otimes W$ is also a vector space over~$\K$.
If $V$ and~$W$ have the bases
$\{e_j\}_j$ and $\{f_k\}_k$, respectively,
the symbols $\{e_j \otimes f_k\}_{jk}$ form a basis for~$V \otimes W$.
Thus, for finite-dimensional spaces
$\dim(V \otimes W) = \dim V \, \dim W$.

The tensor product of two individual vectors $v \in V$ and $w \in W$
is denoted as $v \otimes w$.
For vectors the tensor product is a bilinear map
$V \times W \to V \otimes W$,
i.e.~one that is linear in both input variables.
For finite-dimensional spaces
one can obtain the standard basis coordinates of the tensor product of two vectors
as the \emph{Kronecker product} of the standard basis coordinates of the individual vectors:
\be
(v \otimes w)^{jk} = v^j w^k.
\ee
It is important to notice that due to the bilinearity $\otimes$ maps
many different pairs of vectors~$(v,w)$ to the same product vector:
$v \otimes (s w) = (s v) \otimes w = s (v \otimes w)$, where $s \in \K$.
For inner product spaces (such as the Hilbert spaces encountered in
quantum mechanics) the tensor product space inherits the inner product
from its constituent spaces:
\be
\inprod{v_1 \otimes w_1}{v_2 \otimes w_2}_{V \otimes W} = \inprod{v_1}{v_2}_V \inprod{w_1}{w_2}_W.
\ee

A \emph{tensor}~$T$ is an element of the tensor product of a finite
number of vector spaces over a common field of scalars~$\K$.
The dual space~$V^*$ of a vector space~$V$ is defined as the space of
linear maps from~$V$ to~$\K$. It is not hard to show that $V^*$~is a
vector space over~$\K$ on its own. This leads us to define
the concept of an order-$(p,q)$ tensor, an element of the tensor product
of $p$~primal spaces and $q$~dual spaces:
\be
T \in W_1 \otimes W_2 \otimes \ldots \otimes W_p \otimes V_1^* \otimes V_2^* \otimes \ldots \otimes V_q^*.
\ee
Given a basis $\{\bv\spidx{i}{}{k}\}_k$ for each vector
space~$W_i$ and a dual basis $\{\eta\spidx{i}{k}{}\}_k$ for each dual space~$V_i^*$,
we may expand~T in the tensor products of these basis vectors:
\be
\label{eq:tensorexpansion}
T = T\indices{^{i_1 \ldots i_p}_{j_1 \ldots j_q}}
\bv\spidx{1}{}{i_1} \otimes \ldots \otimes \bv\spidx{p}{}{i_p}
\otimes
\eta\spidx{1}{j_1}{} \otimes \ldots \otimes \eta\spidx{q}{j_q}{}.
\ee
$T\indices{^{i_1 \ldots i_p}_{j_1 \ldots j_q}}$ is simply an array of
scalars containing the basis expansion coefficients.
Here we have introduced the \emph{Einstein summation convention}, in which
any index that is repeated exactly twice in a term, once up, once
down, is summed over. This allows us to save a considerable number of
sum signs, without compromising on the readability of the formulas.
Traditionally basis vectors carry a lower (covariant) index and dual basis vectors
an upper (contravariant) index.

A tensor is said to be \emph{simple} if it can be written as the tensor product
of some elements of the underlying vector spaces:
$T = v\spidx{1}{}{} \otimes \ldots \otimes v\spidx{q}{}{} \otimes \varphi\spidx{1}{}{} \otimes \ldots \otimes \varphi\spidx{p}{}{}$.
This is not true for most tensors; indeed, in addition to the
bilinearity, this is one of the properties that separates tensors from mere Cartesian products of vectors.
However, any tensor can be written as a linear combination of simple tensors,
e.g.~as in Eq.~\eqref{eq:tensorexpansion}.

For every vector space~$W$ there is a unique bilinear map
$W \otimes W^* \to \K$,
$w \otimes \phi \mapsto \phi(w)$
called a natural pairing,
where the dual vector maps the primal vector to a scalar.
One can apply this map to any pair of matching primal and dual spaces in a tensor.
It is called a \emph{contraction} of the corresponding upper and lower indices.
For example, if we happen to have $W_1 = V_1$ we may contract the corresponding indices on~$T$:
\begin{align}
\notag
C_{1,1}(T)
&=
T\indices{^{i_1 \ldots i_p}_{j_1 \ldots j_q}}
\eta\spidx{1}{j_1}{}(\bv\spidx{1}{}{i_1})
\:\:
\bv\spidx{2}{}{i_2} \otimes \ldots \otimes \bv\spidx{p}{}{i_p}
\otimes
\eta\spidx{2}{j_2}{} \otimes \ldots \otimes \eta\spidx{q}{j_q}{}\\
&=
T\indices{^{k \, i_2 \ldots i_p}_{k \, i_2 \ldots j_q}}
\bv\spidx{2}{}{i_2} \otimes \ldots \otimes \bv\spidx{p}{}{i_p}
\otimes
\eta\spidx{2}{j_2}{} \otimes \ldots \otimes \eta\spidx{q}{j_q}{},
\end{align}
since the defining property of a dual basis is
$\eta\spidx{1}{j_1}{}(\bv\spidx{1}{}{i_1}) = \delta\indices{^{j_1}_{i_1}}$.
Hence the contraction eliminates the affected indices ($k$~is summed over),
lowering the tensor order by~$(1,1)$.

We can see that an order-$(1,0)$ tensor is simply a vector,
an order-$(0,1)$ tensor is a dual vector,
and can define an order-$(0,0)$ tensor to correspond to a plain scalar.
But what about general, order-$(p,q)$ tensors?
How should they be understood?
Using contraction, they can be immediately
reinterpreted as multilinear maps from vectors to vectors:
\begin{align}
\label{eq:Tmultilin}
\notag
T':\quad& V_1 \otimes \ldots \otimes V_q \to W_1 \otimes \ldots \otimes W_p,\\
T'(v\spidx{1}{}{} \otimes \ldots \otimes v\spidx{q}{}{}) &=
T\indices{^{i_1 \ldots i_p}_{j_1 \ldots j_q}}
\bv\spidx{1}{}{i_1} \otimes \ldots \otimes \bv\spidx{p}{}{i_p}
\times
\eta\spidx{1}{j_1}{}(v\spidx{1}{}{}) \times \ldots \times \eta\spidx{q}{j_q}{}(v\spidx{q}{}{}),
\end{align}
where we tensor-multiply $T$ and the vectors to be mapped together, and then contract the corresponding indices.
However, this is not the only possible interpretation.
We could just as easily see them as mapping dual vectors to dual vectors:
\begin{align}
\label{eq:Tmultilin2}
\notag
T'':\quad& W^*_1 \otimes \ldots \otimes W^*_p \to V^*_1 \otimes \ldots \otimes V^*_q,\\
T''(\varphi\spidx{1}{}{} \otimes \ldots \otimes \varphi\spidx{p}{}{}) &=
T\indices{^{i_1 \ldots i_p}_{j_1 \ldots j_q}}
\varphi\spidx{1}{}{}(\bv\spidx{1}{}{i_1}) \times \ldots \times \varphi\spidx{p}{}{}(\bv\spidx{p}{}{i_p})
\times
\eta\spidx{1}{j_1}{} \otimes \ldots \otimes \eta\spidx{q}{j_q}{}.
\end{align}
Essentially we may move any of the vector spaces to the
other side of the arrow by taking their dual:
\begin{align}
\label{eq:tensor-interp}
&W \otimes V^*
\quad \isom \quad
\K \to W \otimes V^*
\quad \isom \quad
V \to W
\quad \isom \quad
V \otimes W^* \to \K
\quad \isom \quad
W^* \to V^*,
\end{align}
where all the arrows denote linear maps.
Any and all input vectors are mapped to scalars by the corresponding
dual basis vectors in expansion~\eqref{eq:tensorexpansion}, whereas all input dual vectors map the
corresponding primal basis vectors to scalars.

If we expand the input vectors~$v\spidx{k}{}{}$ in
Eq.~\eqref{eq:Tmultilin} using the same bases as when expanding the tensor~T,
we obtain the following equation
for the expansion coefficients:
\begin{align}
T'(v\spidx{1}{}{} \otimes \ldots \otimes v\spidx{q}{}{})\indices{^{i_1 \ldots i_p}}
&=
T\indices{^{i_1 \ldots i_p}_{j_1 \ldots j_q}}
v\spidx{1}{j_1}{}
\cdots
v\spidx{q}{j_q}{}.
\end{align}
This is much less cumbersome than Eq.~\eqref{eq:Tmultilin}, and
contains the same information.
This leads us to adopt the \emph{abstract index notation} for tensors,
in which the indices no longer denote the components of the tensor in a
particular basis, but instead signify the tensor's order.
Tensor products are denoted by simply placing the tensor symbols next to each other.
Within each term, any repeated index symbol must appear once up and once down, and denotes
contraction over those indices.
Hence, $x^a$~denotes a vector (with one contravariant index),
$\omega_a$ a dual vector (with one covariant index),
and $T\indices{^{ab}_c}$ an order-$(2,1)$ tensor with two
contravariant and one covariant indices.
$S\indices{^{ab}_{cde}} x^c y^d P\indices{^e_a}$ denotes the contraction of
an order-$(2,3)$ tensor~$S$,
an order-$(1,1)$ tensor~$P$,
and two vectors, $x$ and $y$, resulting in an order-$(1,0)$ tensor
with one uncontracted index,~$b$.

In many applications, for example in differential geometry, the vector spaces
associated with a tensor are often copies of the same vector space~$V$
or its dual~$V^*$,
which means that any pair of upper and lower indices can be contracted,
and leads to the tensor components transforming
in a very specific way under basis changes.
This specific type of a tensor is called an
order-$(p,q)$ tensor \textit{on the vector space}~$V$.
However, here we adopt a more general definition, allowing $\{V_k\}_k$ and
$\{W_k\}_k$ to be all different vector spaces.

{    
    \bibliographystyle{unsrt}   
   \bibliography{qc.bib}
}



\end{document}